\newtheorem{theorem}{Theorem}[section]
\newtheorem{observation}[theorem]{Observation}
\newtheorem{claim}[theorem]{Claim}
\newtheorem{lemma}[theorem]{Lemma}
\newtheorem{notation}[theorem]{Notation}
\newtheorem{corollary}[theorem]{Corollary}
\newtheorem{proposition}[theorem]{Proposition}
\newtheorem{property}[theorem]{Property}
\newtheorem{problem}{Problem}
\theoremstyle{definition}
\newtheorem{definition}[theorem]{Definition}
\newtheorem{remark}[theorem]{Remark}
\crefname{theorem}{Theorem}{Theorems}
\crefname{observation}{Observation}{Observations}
\crefname{claim}{Claim}{Claims}
\crefname{condition}{Condition}{Conditions}
\crefname{example}{Example}{Examples}
\crefname{fact}{Fact}{Facts}
\crefname{lemma}{Lemma}{Lemmas}
\crefname{corollary}{Corollary}{Corollaries}
\crefname{definition}{Definition}{Definitions}
\crefname{remark}{Remark}{Remarks}
\crefname{proposition}{Proposition}{Propositions}
\crefname{property}{Property}{Properties}
\crefname{problem}{Problem}{Problems}
\crefname{section}{Section}{Sections}
\crefname{algocf}{alg.}{algs.}
\title{A Fixed-Parameter Tractable Algorithm for Counting Markov Equivalence Classes with the same Skeleton \footnote{Accepted to the Proceedings of the 38th Annual AAAI Conference on Artificial Intelligence (AAAI 2024)}} 
\author{}
 \author{Vidya Sagar Sharma\thanks{Tata Institute of Fundamental Research,
     Mumbai. Email: \texttt{vidyasagartifr@gmail.com}.} 
}
\date{}
\newcommand{\canonical}{canonical source vertex}
\newcommand{\cp}[0]{chordless path}
\newcommand{\cps}[0]{chordless paths}
\newcommand{\cc}[0]{chordless cycle}
\newcommand{\ucc}[0]{undirected connected component}
\newcommand{\uccs}[0]{undirected connected components}
\newcommand{\uccc}[0]{undirected connected chordal component}
\newcommand{\skel}[1]{\textup{skeleton}({#1})}
\newcommand{\skeleton}[1]{\text{skeleton}({#1})}
\newcommand{\setofpartialMECs}[1]{\textup{PMEC}(#1)}
\newcommand{\BaseCount}{\textsc{BruteForceCount}}
\newcommand{\setofMECs}[1]{\textup{MEC}(#1)}
\newcommand{\tfp}[0]{\text{triangle-free path}}
\newcommand{\tfps}[0]{\text{triangle-free paths}}
\newcommand{\spe}[0]{\text{strongly protected}}
\newcommand{\shadow}[0]{\textup{shadow}}
\newcommand{\shadowofMEC}[1]{\textup{shadow}(#1)}
\newcommand{\shadowofudgraph}[1]{\textup{shadow}(#1)}
\newcommand{\epfs}[0]{\text{derived path function}}
\newcommand{\EPF}[1]{\text{DPF}(#1)}
\newcounter{casenum}
\newcommand*{\defeq}{\mathrel{\rlap{\raisebox{0.3ex}{$\m@th\cdot$}}\raisebox{-0.3ex}{$\m@th\cdot$}}=}
\newcommand*{\eqdef}{=
  \mathrel{\rlap{\raisebox{0.3ex}{$\m@th\cdot$}}\raisebox{-0.3ex}{$\m@th\cdot$}}}
\renewcommand{\emptyset}[0]{\varnothing}
\newcommand{\undir}{\ensuremath{-}}
\begin{document}

\maketitle
Causal DAGs (also known as Bayesian networks) are a popular tool for encoding
conditional dependencies between random variables.  In a causal DAG, the random
variables are modeled as vertices in the DAG, and it is stipulated that every
random variable is independent of its ancestors conditioned on its parents.  It
is possible, however, for two different causal DAGs on the same set of random
variables to encode exactly the same set of conditional dependencies.  Such
causal DAGs are said to be \emph{Markov equivalent}, and equivalence classes of
Markov equivalent DAGs are known as \emph{Markov Equivalent Classes} (MECs).
Beautiful combinatorial characterizations of MECs have been developed in the
past few decades, and it is known, in particular that all DAGs in the same MEC
must have the same ``skeleton'' (underlying undirected graph) and v-structures (induced subgraph of the form $a\rightarrow b \leftarrow c$).

These combinatorial characterizations also suggest several natural algorithmic
questions.  One of these is: given an undirected graph $G$ as input, how many
distinct Markov equivalence classes have the skeleton $G$?  Much work has been
devoted in the last few years to this and other closely related problems.
However, to the best of our knowledge, a polynomial time algorithm for the
problem remains unknown.

In this paper, we make progress towards this goal by giving a fixed parameter
tractable algorithm for the above problem, with the parameters being the
treewidth and the maximum degree of the input graph $G$.  The main technical
ingredient in our work is a construction we refer to as \emph{shadow},
which lets us create a ``local description'' of long-range constraints imposed
by the combinatorial characterizations of MECs.
 \section{Introduction}
\label{sec:introduction}
A graphical model is used to graphically represent a set of conditional independence relations between random variables. In the literature, both directed and undirected variations have been utilized to model various types of dependency structures. In this paper, our focus is on studying graphical models represented by directed acyclic graphs (DAGs), also called the Bayesian graphical model, which effectively conveys conditional independence relations and causal influences between random variables through the use of directed acyclic graphs~\citep{Pearl2009}. This model is well-studied and has been found extensive applications in many fields, such as material science \citep{ren_embedding_2020}, game theory \citep{kearns2013graphical}, and biology \citep{friedman2004inferring,finegold2011robust}.

In the Bayesian graphical model, a probability distribution over a set of random variables $V$ is said to satisfy a DAG $G$ with vertices $V$ if and only if for every $w \in V$, $w$ is independent of the set of all its non-descendants conditional on the set of its parents (the definitions of ``parent'' and ``non-descendant'' are provided in \Cref{sec:preliminary}). The set of probability distributions that satisfy $G$ is denoted by Markov$(G)$. Further, for disjoint $A, B, S \subseteq V$, $G$ is said to \emph{entail} that $A$ is independent of $B$ given $S$ (written $G\models A\perp B | S$) if and only if $A$ is independent of $B$ given $S$ in every probability distribution in Markov$(G)$.  Two DAGs are said to be \emph{Markov equivalent} if both entail the same set of such conditional independence relations.  \citet{verma1990equivalence} gave an elegant graphical characterization of this equivalence: two DAGs are Markov equivalent if and only if they have the same \emph{skeleton} (underlying undirected graph) and the same set of v-structures (induced subgraphs of the form $a\rightarrow b \leftarrow c$).  DAGs that are Markov equivalent to each other are said to belong to the same \emph{Markov equivalence class} (MEC) (see \Cref{fig:MEC-examples}).

\begin{figure}[ht]
    \centering
    \begin{tikzpicture}
    \node[](a1){$A$};
    \node[](b1)[below left=0.5 and 0.2 of a1] {$B$};
    \node[](c1)[below right=0.5 and 0.2 of a1] {$C$};    
    \draw[->](a1)--(b1);
    \draw[->](a1)--(c1);
    \node[][below = 0.8 of a1]{$D_2$};

    \node[](a2)[left = 1.5 of a1]{$A$};
    \node[](b2)[below left=0.5 and 0.2 of a2] {$B$};
    \node[](c2)[below right=0.5 and 0.2 of a2] {$C$};    
    \draw[<-](a2)--(b2);
    \draw[->](a2)--(c2);
    \node[][below = 0.8 of a2]{$D_1$};

    \node[](a3)[right = 1.5 of a1]{$A$};
    \node[](b3)[below left=0.5 and 0.2 of a3] {$B$};
    \node[](c3)[below right=0.5 and 0.2 of a3] {$C$};    
    \draw[->](a3)--(b3);
    \draw[<-](a3)--(c3);
    \node[][below = 0.8 of a3]{$D_3$};

    \node[](a4)[right = 1.5 of a3]{$A$};
    \node[](b4)[below left=0.5 and 0.2 of a4] {$B$};
    \node[](c4)[below right=0.5 and 0.2 of a4] {$C$};    
    \draw[<-](a4)--(b4);
    \draw[<-](a4)--(c4);
    \node[][below = 0.8 of a4]{$D_4$};

    \node[](a0)[below = 1.5 of a2]{$A$};
    \node[](b0)[below left=0.5 and 0.2 of a0] {$B$};
    \node[](c0)[below right=0.5 and 0.2 of a0] {$C$};    
    \draw[-](a0)--(b0);
    \draw[-](a0)--(c0);
    \node[][below = 0.9 of a0]{undirected graph $G$};

    \node[](a5)[below right = 1.5 and 0.7 of a1]{$A$};
    \node[](b5)[below left=0.5 and 0.2 of a5] {$B$};
    \node[](c5)[below right=0.5 and 0.2 of a5] {$C$};    
    \draw[-](a5)--(b5);
    \draw[-](a5)--(c5);
    \node[][below = 0.9 of a5]{$M_1 = \{D_1, D_2, D_3\}$};

    \node[](a6)[below = 1.5 of a4]{$A$};
    \node[](b6)[below left=0.5 and 0.2 of a6] {$B$};
    \node[](c6)[below right=0.5 and 0.2 of a6] {$C$};    
    \draw[<-](a6)--(b6);
    \draw[<-](a6)--(c6);
    \node[][below = 0.9of a6]{$M_2 = \{D_4\}$};

    \end{tikzpicture}
    \caption{Markov equivalent DAGs and MECs with the same skeleton. $D_1, D_2$, and $D_3$ are Markov equivalent, while $D_4$ is not equivalent to $D_1, D_2$, and $D_3$.  The MEC $M_1 = \{D_1, D_2, D_3\}$ contains $D_1, D_2$ and $D_3$, and its graphical representation is the union of $D_1, D_2$, and $D_3$. The MEC $M_2$ contains only $D_4$, and its graphical representation matches $D_4$.  The MECs $M_1$ and $M_2$ share $G$ as their skeleton, and in fact are the only MECs with skeleton $G$.  Both $M_1$ and $M_2$ entail a conditional independence relation of the form $B \perp C \mid S$, where in $M_1$, $S = \{C\}$, and in $M_2$, $S = \emptyset$.}
\label{fig:MEC-examples}
\end{figure}
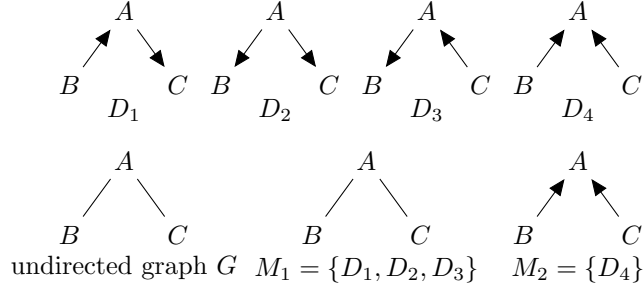

As an MEC consists of Markov equivalent DAGs, it uniquely represents the set of conditional independence relations represented by the DAGs it contains and is graphically represented by a partially directed graph which is the graphical union of the DAGs it contains.
We treat an MEC and its graphical representation as the same.
Since all the DAGs that belong to an MEC have the same skeleton and the set of v-structures, an MEC is uniquely determined by its skeleton and v-structures.
\citet{andersson1997characterization} gave a necessary and sufficient condition for a partially directed graph to be an MEC (see Theorem \ref{thm:nes-and-suf-cond-for-chordal-graph-to-be-an-MEC} below). For convenience, the graph representing an MEC is often considered synonymous with the MEC itself, and both are referred to as an ``MEC''.

\citet{meek1995causal} formulated rules to construct an MEC based on knowledge about the conditional independence relations among random variables. For a set of conditional independence relations involving a set of random variables $V$ and represented by an MEC $M$, \citet[p.~3]{meek1995causal} also showed that two random variables $A$ and $B$ are non-adjacent in the skeleton of $M$ if and only if there exists $S\subseteq V\setminus{\{A, B\}}$ such that $M$ entails that $A \perp B \mid S$. 
This implies that two MECs $M_1$ and $M_2$ sharing the same skeleton entail sets of conditional independence relations $\mathcal{M}_1$ and $\mathcal{M}_2$, respectively, which have the following important relation: For two random variables $A$ and $B$, $\mathcal{M}_1$ contains a conditional independent relation indicating that $A$ is independent of $B$ given some $S_1\subseteq V\setminus{\{A, B\}}$ if and only if $\mathcal{M}_2$ also contains a conditional independence relation signifying that $A$ is independent of $B$ given some $S_2\subseteq V\setminus{\{A, B\}}$ ($S_1$ are $S_2$ may not be same).  In other words, MECs with the same skeleton can be related by an equivalence relation that has not just a graphical representation (i.e., that they have the same underlying undirected graph), but also a natural statistical one (the one given above).

This connection between MECs with the same skeleton motivates the problem of understanding this class.   In particular, a natural question to ask is: how many MECs are in this class?

\paragraph{Our Contributions}
We now formalize the problem outlined above. Our input consists of a connected undirected graph $G$ with $n$ nodes. Our objective is to determine the count of MECs that have $G$ as their skeleton.
The primary contribution of this paper is the introduction of a fixed-parameter tractable (FPT) algorithm. This algorithm, when given an undirected graph $G$, computes the number of MECs whose skeletons match $G$. The algorithm's parameters are the degree and the treewidth of the input undirected graph.  (An algorithm is said to be FPT with respect to a parameter if there is a constant $c$ and a computable function $f$ such that the algorithm's runtime on instances of size $n$ for which the value of the parameter is $k$ is bounded above by $f(k)\cdot n^c$: the crucial point here is that the degree of the polynomial in $n$ \emph{does not} depend upon the parameter $k$ \citep{cygan_parameterized_2015}.)
Our main result presents an algorithm capable of counting the MECs associated with an input undirected graph $G$ with $n$ nodes and having a degree of $\delta$ and a treewidth of $k$. This counting can be achieved in $O(n(2^{O(k^4 \delta^4)} + n^2))$ time.
Importantly, the runtime of our algorithm remains polynomially bounded when the parameters $\delta$ and $k$ are both bounded above by constants. As an illustrative example, our algorithm demonstrates polynomial runtime for tree graphs with bounded degrees.

As of now, we do not know the precise computational complexity of this problem, and we consider our result as a first step towards a complete resolution of this question.
This mirrors the situation of the problem of counting DAGs of an MEC, where initially an algorithm that was exponential in the degree of the graph was given by \citet{ghassami2019counting} and then improved by \citet{talvitie2019counting} who gave a fixed parameter tractable algorithm for the problem.  Finally, a polynomial algorithm was provided by \citet{wienobst2020polynomial}.  We hope that the techniques introduced in our paper will be useful in the further study for the problem.

\paragraph{Related Work}
The problem of counting MECs with a given number of nodes (instead of a skeleton) has received extensive attention in the literature.  \citet{gillispie2013enumerating} developed a computer program for computing the number of MECs with $n$ nodes. \citet{gillispie2002size} created a computer program to enumerate Markov equivalence classes, studying class size distributions and the number of edges for graphs up to 10 vertices. They also observed that the ratio of DAGs to the number of MECs seems to asymptotically converge to around 3.7. \citet{steinsky2003enumeration} presented a recursive formula for counting Markov equivalence classes of size 1. \citet{gillispie2006formulas} provided a recursive algorithm for counting MECs of any size. 
\citet{he2013reversible} analyze the set of MECs  with $n$ nodes by constructing a Markov chain on the space of MECs and show that most edges of an MECs are directed. 
More recently, \citet{schmid2022number} show that the expected ratio of the number of DAGs and the number of MECs approaches a positive constant when the number of nodes goes to infinity.  

 All of the previous results focused on the class of MECs with a given number of nodes, and not with a given skeleton.
\citet{radhakrishnan2016counting} focused on counting MECs with the same skeleton. They classified MECs based on the number of v-structures present and derived a generating function for counting MECs. They experimentally demonstrated that the generating function varies for graphs with the same number of vertices. In subsequent work, \citet{radhakrishnan2018counting} delved further into the problem of counting MECs with the same skeleton. They explored generating functions for specific graph structures (e.g., path graphs, cycle graphs, star graphs, and bi-star graphs) and provided tight lower and upper bounds for the number of MECs in any tree.

We are not aware of any progress on the problem of counting MECs for general graphs, and to the best of our knowledge, this paper is the first to introduce a fixed-parameter tractable algorithm for this problem.

\paragraph{Technical Overview:}
For an undirected graph \(G\), we address the problem of counting Markov Equivalence Classes (MECs) of \(G\), i.e., counting MECs that have the skeleton \(G\). A trivial brute force approach is to iterate through each possible directed acyclic graph (DAG) having the skeleton \(G\). Since each DAG with the skeleton \(G\) is a member of a unique MEC of \(G\), and each MEC of \(G\) is a nonempty set of DAGs with the skeleton \(G\), we count the different MECs to which these DAGs belong. This approach yields the count of MECs with the skeleton \(G\). However, due to the fact that the number of DAGs with the skeleton \(G\) is \(2^{|E_G|}\) ($E_G$ is the set of edges of $G$), the runtime of this approach becomes exponential in the size of \(G\).

In this paper, we provide a fixed-parameter tractable (FPT) algorithm to count MECs of \(G\) such that the run time of our algorithm is exponential in terms of the parameters, degree, and treewidth of the input graph, but is polynomial in the size of the input graph, where the degree of the polynomial does not depend on the parameters.

We coin a new term ``shadow" (\cref{def:shadow}). A shadow of an MEC $M$ on $X \subseteq V_M$ is a triple $(O, P_1, P_2)$, where $O$ is the induced subgraph of $M$ on $X$, $P_1: E_O \times E_O \rightarrow \{0, 1\}$ is a function that for any $(u,v), (x,y) \in E_O$,  answers whether there exists a \tfp{} (\cref{def:tfp-path}) of length three or more in $M$ from $(u,v)$ to $(x,y)$, and $P_2: E_O \times V_O \rightarrow \{0,1\}$ ($V_O$ is the vertex set of $O$) is a function that for any  $((u,v), w) \in E_O \times V_O$,  answers whether there exists a \tfp{} of length three or more in $M$ from $(u,v)$ to $w$. A \tfp{} of a graph is a path of the graph such that for any three consecutive nodes $a, b$ and $c$ of the path, there is no edge in the graph with endpoints $a$ and $c$. A \tfp{} starts from an edge $(u,v)$ if the first two nodes of the path are $u$ and $v$. A \tfp{} ends on an edge $(x,y)$ if the last two nodes of the path are $x$ and $y$. A \tfp{} ends on a vertex $w$ if the last node of the path is $w$.

With a slight abuse of notation, for an undirected graph $G$, we define a triple $(O, P_1, P_2)$ as a shadow of $G$ if $O$ is a partial MEC of $G$ (a graph with skeleton $G$ and obeys \cref{item-1-theorem-nec-suf-cond-for-MEC,item-2-theorem-nec-suf-cond-for-MEC,item-3-theorem-nec-suf-cond-for-MEC} of \cref{thm:nes-and-suf-cond-for-chordal-graph-to-be-an-MEC}, \cref{def:partial-MEC}), and $P_1: E_O \times E_O \rightarrow \{0,1\}$ and $P_2: E_O \times V_O \rightarrow \{0,1\}$ are two functions. We show that for any undirected graph $G$, for any $X\subseteq V_G$, summation of $|\setofMECs{G, O, P_1, P_2}|$, the size of the set of MECs of $G$ having shadow $(O, P_1, P_2)$, over all shadows of $G[X]$ equals to the number of MECs of $G$ (\cref{lem:partition-of-MECs-of-H}). This reduces the problem of counting MECs of an undirected graph $G$ into counting MECs of $G$ with shadow $(O, P_1, P_2)$ for each possible shadow $(O, P_1, P_2)$ of $G[X]$ for some $X\subseteq V_G$.  It is the idea of ``shadow'', a nontrivial concise representation of an MEC, which is the key for the construction of the fixed-parameter tractable algorithm (\cref{alg:counting-MEC}).

We provide a recursive algorithm (\cref{alg:counting-MEC-of-general-graph}) to solve the reduced problem.
For this, we find a relation between the MECs of an undirected graph \(G\) and the MECs of an induced subgraph \(G'\) of \(G\). We note that for an MEC $M$ of $G$, $M[V_{G'}]$ may not be an MEC of $G'$. But, we found that there exists a unique MEC $M'$ of $G$ that has the same set of v-structure as $M[V_{G'}]$ (\Cref{lem:projection-of-an-MEC-is-unique}). We call the MEC $M'$ as the projection of $M$ on $V_{G'}$ (\cref{def:projection}).  We find a graphical resemblance between an MEC \(M\) of \(G\) and its projection (\cref{lem:directed-edge-is-same-in-projected-MEC}) that if an edge is directed in the projection then it is also directed in the MEC. 

For the recursion, we divide the graph using its tree decomposition.
Suppose $T$ is a tree decomposition of $G$ and $S_1$ is the root node of $T$.
We pick a neighbor $S_2$ of $T$.
We cut the edge $S_1-S_2$ of $T$ such that we get two induced subtrees $T_1$ and $T_2$ with roots $S_1$ and $S_2$, respectively, and representing the induced subgraphs $G_1$ and $G_2$ of $G$, respectively. 
From the tree decomposition properties, we have (a) $G$ is the union of $G_1$ and $G_2$, and (b) $I = V_{G_1} \cap V_{G_2}$ is a vertex separator of \(G\).
We start with finding a graphical relation between an MEC $M$ of $G$ and its projections $M_1$ and $M_2$ on $V_{G_1}$ and $V_{G_2}$ respectively.
Let $(O, P_1, P_2)$ be the shadow of $M$ on $S_1\cup S_2 \cup N(S_1\cup S_2, G)$ ($N(X, G)$ is the set of neighbors of the vertices in $X$ in $G$), $(O_1, P_{11}, P_{12})$ be the shadow of $M_1$ on $S_1 \cup N(S_1, G_1)$, and $(O_2, P_{21}, P_{22})$ be the shadow of $M_2$ on $S_2 \cup N(S_2, G_2)$. We find a graphical resemblance between the shadow of $M$, $M_1$ and $M_2$. We find necessary conditions that are obeyed by the shadows (\cref{obs1:O-structure-for-existence-of-MEC}). We later show that the conditions are also sufficient (\cref{obs2:O-structure-for-existence-of-MEC}). That means, if we have shadows $(O, P_1, P_2)$ of $G[S_1\cup S_2\cup N(S_1\cup S_2, G)]$, $(O_1, P_{11}, P_{12})$ of $G_1[S_1\cup N(S_1, G_1)]$, and $(O_2, P_{21}, P_{22})$ of $G_2[S_2\cup N(S_2, G_2)]$ that obeys the necessary and sufficient conditions then for each pair of MECs $(M_1, M_2)$ such that $M_1$ is an MEC of $G_1$ with shadow $(O_1, P_{11}, P_{12})$ and $M_2$ is an MEC of $G_2$ with shadow $(O_2, P_{21}, P_{22})$ there exists a unique MEC $M$ of $G$ with shadow $(O, P_1, P_2)$.
This relation shows that if we have the knowledge of $|\setofMECs{G_1, O_1, P_{11}, P_{12}}|$ for each shadow $(O_1, P_{11}, P_{12})$ of $G_1[S_1\cup N(S_1, G_1)]$, and $|\setofMECs{G_2, O_2, P_{21}, P_{22}}|$ for each shadow $(O_2, P_{21}, P_{22})$ of $G_2[S_2\cup N(S_2, G_2)]$ then we can compute $|\setofMECs{G, O, P_1, P_2}|$ for each shadow $(O, P_{1}, P_{2})$ of $G[S_1\cup S_2 \cup N(S_1\cup S_2, G)]$ (\cref{lem:equivalence-between-MECs}).

\Cref{lem:equivalence-between-MECs} provides us an approach to compute $|\setofMECs{G, O, P_1, P_2}|$ for each shadow $(O, P_1, P_2)$ of $G[S_1\cup S_2 \cup N(S_1\cup S_2, G)]$. But, for the recursion, we need to compute $|\setofMECs{G, O, P_1, P_2}|$ for each shadow $(O, P_1, P_2)$ of $G[S_1  \cup N(S_1, G)]$. Since a shadow of $G[S_1\cup S_2 \cup N(S_1\cup S_2, G)]$ contains more information than a shadow of $G[S_1  \cup N(S_1, G)]$, we can trim them to fulfill our purpose. We define the projection of shadow (\cref{def:proj-of-partial-MEC-and-functions}) for this.  A shadow $(O', P_1', P_2')$ of $G[S_1  \cup N(S_1, G)]$ is said to be a projection of a shadow $(O, P_1, P_2)$ of $G[S_1\cup S_2 \cup N(S_1\cup S_2, G)]$ if $O'$ is an induced subgraph of $O$, for all $((x,y), (u,v)) \in E_{O'}\times E_{O'}$, $P_1'((x,y), (u,v)) = P_1((x,y), (u,v))$, and for all $((u,v), w) \in E_{O'}\times V_{O'}$, $P_2'((u,v), w) = P_2((u,v), w)$. If a shadow $(O', P_1', P_2')$ of $G[S_1\cup N(S_1, G)]$ is a projection of a shadow $(O, P_1, P_2)$ of $G[S_1\cup S_2\cup N(S_1\cup S_2, G)]$ then if $M$ is an MEC of $G$ with shadow $(O, P_1, P_2)$ then $M$ has also $(O', P_1', P_2')$ as its shadow. For each shadow $(O', P_1', P_2')$ of $G[S_1\cup N(S_1, G)]$, this leads us to count MECs of $G$ with shadow $(O', P_1', P_2')$ when for each shadow $(O, P_1, P_2)$ of $G[S_1\cup S_2 \cup N(S_1\cup S_2, G)]$, we have the knowledge of $|\setofMECs{G, O, P_1, P_2}|$ (\cref{lem:counting-MECs-corresponding-to-projected-partial-MECs}).

For the recursion, we cut an edge of the tree decomposition of \(G\) to construct induced subgraphs \(G_1\) and \(G_2\). Since the number of edges of the tree decomposition of \(G\) is linear in the size of \(G\), the number of times we recursively call our algorithm is linear in the size of the input graph. The number of possible shadows is exponential in the treewidth and the degree of the input graph. The computation of the number of MECs having a specific shadow also takes time exponential in terms of the treewidth and the degree of the input graph. The overall time complexity (\Cref{thm:time-complexity-of-alg:counting-MEC}) of the algorithm is \(O(n(2^{O(k^4\delta^4}) + n^2))\), where the parameters \(k\) and \(\delta\) are the treewidth and the degree of the input graph, respectively, and \(n\) is the number of nodes of the input graph.

\paragraph{Paper Structure:} In \cref{sec:preliminary}, we define the terminologies used in this paper. In \Cref{subsection:old-results}, we go through basic graph theory terminologies (eg: path, cycle, chordal graphs, chain graphs, the union of graphs, tree decompositions, LBFS ordering, skeleton, v-structure), Markov equivalence class, and related results. In \cref{subsection:new-results}, we define the terminologies that are newly introduced in this paper, eg: Markov union of graphs, partial MECs, \tfp{}, a shadow of an MEC, projection of an MEC, and related results. In \cref{sec:counting-MEcs-of-a-graph}, we formally define the problem, and provide the necessary and sufficient conditions for the shadows (\cref{subsection:necessary-condition-of-shadow,subsection:sufficient-condition}). In \cref{subsection:LBFS}, we provide a modified version of LBFS algorithm. In \cref{subsection:dependence-of-P1-and-P2-on-O-and-others}, we define derived path functions and extension of shadows. In \cref{subsection:finding-size-of-MEC-H}, we provide a formula to compute $|\setofMECs{G, O, P_1, P_2}|$.  In \cref{sec:algorithm-for-counting-MECs}, we provide a fixed parameter tractable algorithm to count MECs of an undirected graph. \Cref{sec:time-complexity} analyzes the time complexity of the algorithms discussed in this paper.  In \Cref{sec:conclusion}, we discuss the open problems.

 \section{Preliminary}
\label{sec:preliminary}

\subsection{Old existing definitions and related results}
\label{subsection:old-results}
The following definitions and observations are well-studied terms and results in graph theory and graphical models.
\paragraph{Basic graph terminologies:}
\label{def:graph-terminologies}
A graph $G$ is a pair $(V, E)$, where $V$ is said to be the set of vertices of $G$, and $E \subseteq V \times V$ is said to be the set of edges of $G$. For $u, v \in V$, if $(u, v), (v, u) \in E$, then we say there is an undirected edge between $u$ and $v$, denoted as $u-v$. For $u, v \in V$, if $(u, v) \in E$ and $(v, u) \notin E$, then we say there is a directed edge from $u$ to $v$, denoted as $u \rightarrow v$. For $u, v \in V$, when we say that $(u, v)$ is an edge in $G$, it means that either there exists an undirected edge $u-v$ in $G$ or there exists a directed edge $u \rightarrow v \in G$.

For a graph $G$, we denote $V_G$ as the set of vertices of $G$, and $E_G$ as the set of edges of $G$. 
A partially directed graph is a graph in which some edges are directed and the remaining edges are undirected. 
If all the edges of a graph are undirected, then the graph is undirected. If all the edges of a graph are directed, then the graph is directed.
$u$ is said to be a neighbor of $v$ in a graph $G$ if either $(u,v) \in E_G$ or $(v,u) \in E_G$. $N(X, G)$ is the set of neighbors of $X$ in the graph $G$, i.e., $N(X, G) = \{v: \exists u \in X \text{ and either } (u, v) \in E_G \text{ or } (v,u) \in E_G$\}.
The degree of a node is the number of neighbors of the node.
The degree of a graph is the maximum degree of its nodes.

\begin{definition}[\textbf{Path, chord of a path, descendant, parent}]
\label{def:path}
For a graph $G$, a sequence $P = (u_1, u_2, \ldots, u_{l-1}, u_l)$ of distinct vertices is said to be a \textbf{\emph{path}} from $u_1$ to $u_{l}$ if for all $1 \leq i < l$, $(u_i, u_{i+1}) \in E_G$ (i.e., either $u_i-u_{i+1} \in E_G$ or $u_i\rightarrow u_{i+1} \in E_G$). The \textbf{\emph{length of the path}} is denoted by $l$.
$P$ is said to be an \textbf{\emph{undirected path}} of $G$ if for all $1 \leq i < l$, $u_i - u_{i+1} \in E_G$. Otherwise, if the path contains directed edges, it is called a \emph{directed path} of $G$.
For an undirected path $P = (u_1, u_2, \ldots, u_{l-1}, u_l)$ of $G$, we denote the \textbf{\emph{reverse of path}} $P$ as $\overline{P}$, i.e., $\overline{P} = (u_l, u_{l-1}, \ldots, u_2, u_1)$.
For a path $P$ of $G$, an edge in $G$ between two non-adjacent nodes of $P$ is said to be a \textbf{\emph{chord}} of $P$.
A path $P$ is said to be a \textbf{\emph{\cp{}}} of a graph $G$ if $G$ does not have any chord of $P$, i.e, between any two non-adjacent nodes of $P$ there is no edge in $G$. In other words, if $P = (u_1, u_2, \ldots, u_{l-1}, u_l)$ is a \cp{} of $G$ then for $1 \leq i < j \leq l$, if $j \neq i+1$, then neither $u_i - u_j \in E_G$, nor $u_i \rightarrow u_j \in E_G$, nor $u_j \rightarrow u_i \in E_G$.
In a directed graph $G$, $v$ is said to be a \textbf{\emph{descendant}} of $u$ if there exists a path from $u$ to $v$.
For a directed edge $y \rightarrow x$,  $y$ is said to be a \textbf{\emph{parent}} of $x$.

For a path $P= (u_1, u_2, \ldots , u_{l-1}, u_l)$, if the first two nodes of $P$ are $u$ and $v$ (i.e., $u_1 =u$ and $u_2 =v$), and the last
  two nodes of $P$ are $x$ and $y$ (i.e., $u_{l-1} = x$ and $u_l = y$), then we say $P$ is a \textbf{path from $(u,v)$ to
  $(x,y)$}
  (note that the boundary cases $(u, v) = (x, y)$ and $u \neq v = x \neq y$ are
  both allowed by this definition).
  Similarly, if the first two nodes of $P$ are $u$ and $v$, and the last node of
  $P$ is $w$ then we say $P$ is a \textbf{path from $(u,v)$ to $w$}
  (again note that the degenerate case $v = w$ is allowed by this definition).
\end{definition}

\begin{definition}[\textbf{Cycle, chord of a cycle, DAG}] 
\label{def:cycle}
For a graph $G$, a \emph{cycle} $C$ is a sequence $(u_1, u_2, \dots, u_{l}, u_{l+1} = u_1)$ of vertices in $G$ such that $l \geq 3$, and both $(u_1, u_2, \dots, u_{l})$ and $(u_l, u_1)$ are paths in $G$. The edge between two consecutive nodes of the sequence is said to be an edge of the cycle $C$.
An edge between two non-adjacent nodes of a cycle is said to be a \emph{chord of the cycle}.
$C = (u_1, u_2, \dots, u_{l}, u_{l+1} = u_1)$ is said to be a \textbf{\emph{chordless cycle}} if it does not have any chord, i.e.,  $l \geq 4$ and both $P_1 = (u_1, u_2, \ldots , u_{l-1})$ and $P_2 = (u_2, \ldots , u_{l-1}, u_{l})$ are chordless paths in $G$.
$C$ is said to be an \textbf{undirected cycle} of $G$ if all the edges of the cycle are undirected, i.e., for all $1 \leq i \leq l$, $u_i - u_{i+1} \in E_G$.
$C$ is said to be a \textbf{directed cycle} if at least one edge of $C$ is directed, i.e., there exists an $i$ such that $1 \leq i \leq l$ and $u_i \rightarrow u_{i+1} \in E_G$. A directed graph is said to be a \textbf{\emph{directed acyclic graph (DAG)}} if the graph does not have a cycle.
\end{definition}

\begin{definition}[\textbf{Chordal Graph}]
    An undirected graph $G$ is said to be a \emph{chordal graph} if there does not exist a chordless cycle in $G$.
\end{definition}

\begin{definition}[\textbf{Chain Graph}]
    A graph $G$ is said to be a \emph{chain graph} if there does not exist a directed cycle in $G$.
\end{definition}

\begin{proposition}
    \label{prop:every-chain-graph-has-a-topological-ordering}
    For every chain graph $G$, there always exists a topological ordering $\tau$ of the vertices of $G$ such that for $u\rightarrow v \in E_G$, $\tau(u) < \tau(v)$.
\end{proposition}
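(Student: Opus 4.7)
The plan is to quotient out the undirected structure of $G$, reduce the problem to topologically sorting a DAG on equivalence classes, and then lift back to a linear extension on all of $V_G$.

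First I would define an equivalence relation on $V_G$ by declaring $u \sim v$ whenever there is a sequence $u = w_0, w_1, \dots, w_m = v$ with every $w_i - w_{i+1} \in E_G$ (the undirected connected components of $G$). Let $C_1, C_2, \dots, C_r$ be the resulting equivalence classes. I would then form the quotient graph $G'$ whose vertex set is $\{C_1, \dots, C_r\}$ and which has a directed edge $C_i \to C_j$ (with $i \neq j$) whenever there exist $u \in C_i$ and $v \in C_j$ with $u \to v \in E_G$.

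The key step is to verify that $G'$ is a DAG. Suppose toward contradiction that $G'$ contains a directed cycle $C_{i_1} \to C_{i_2} \to \cdots \to C_{i_\ell} \to C_{i_1}$. For each edge $C_{i_j} \to C_{i_{j+1}}$ in this cycle, pick witnesses $u_j \in C_{i_j}$ and $v_{j+1} \in C_{i_{j+1}}$ with $u_j \to v_{j+1} \in E_G$. Inside each component $C_{i_j}$, the vertices $v_j$ and $u_j$ are connected by an undirected path $\pi_j$ in $G$. Concatenating $\pi_1, u_1 \to v_2, \pi_2, u_2 \to v_3, \dots, \pi_\ell, u_\ell \to v_1$ yields a closed walk in $G$ that uses at least one directed edge; extracting a simple cycle from this walk (while keeping at least one directed edge, since directed edges only appear between distinct components) produces a directed cycle in $G$, contradicting the hypothesis that $G$ is a chain graph.

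Once $G'$ is known to be a DAG, I would apply the standard existence of a topological ordering on DAGs to obtain a linear order $\sigma$ on $\{C_1, \dots, C_r\}$ with $\sigma(C_i) < \sigma(C_j)$ whenever $C_i \to C_j$ in $G'$. I then define $\tau$ on $V_G$ by listing all vertices in $C_{\sigma^{-1}(1)}$ (in arbitrary order) first, then those of $C_{\sigma^{-1}(2)}$, and so on. For any directed edge $u \to v \in E_G$, the endpoints lie in distinct components $C_i \ni u$ and $C_j \ni v$ with $C_i \to C_j$ in $G'$, hence $\sigma(C_i) < \sigma(C_j)$, giving $\tau(u) < \tau(v)$ as required. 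The only non-routine step is the acyclicity of $G'$; everything else reduces to the standard DAG topological sort.
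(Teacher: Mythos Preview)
Your proof is correct, but it takes a noticeably different route from the paper's. The paper gives a one-line argument: the subgraph of $G$ consisting of only the directed edges is itself a DAG (any cycle in it would be a directed cycle in $G$), so one topologically sorts that subgraph and then appends the remaining vertices in arbitrary order. Your approach instead quotients by the undirected connected components (the chain components), shows the quotient is a DAG, topologically sorts the quotient, and lifts back. This is the classical structural description of chain graphs and gives more insight into their block structure, at the cost of a longer argument; the paper's version is more minimal but less informative.

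One small remark on your presentation: the step ``extracting a simple cycle from this walk (while keeping at least one directed edge, since directed edges only appear between distinct components)'' is a bit compressed. The cleanest way to make it airtight is to first pass to a \emph{simple} cycle in $G'$ (distinct components $C_{i_1},\dots,C_{i_\ell}$), after which the concatenated walk in $G$ already has all vertices distinct, so no extraction is needed at all. As written, the parenthetical is doing real work that deserves a sentence of its own.
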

\begin{proof}
    We can construct a topological ordering \(\tau\) by first creating a topological ordering \(\tau_1\) of the vertices of the directed subgraph of \(G\) and then placing the nodes in \(\tau_1\) into \(\tau\) in the same order as they appear in \(\tau_1\). Next, we add the remaining nodes of \(G\) in any arbitrary order to complete \(\tau\). 
\end{proof}

\begin{proposition}
    \label{prop:edge-between-2-nodes-of-same-ucc-is-ud}
    Let $G$ be a chain graph, $\mathcal{C}$ be an \ucc{} of $G$, and $u,v \in \mathcal{C}$. Then $u\rightarrow v \notin G$.
\end{proposition}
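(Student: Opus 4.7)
The plan is to argue by contradiction. Suppose, toward a contradiction, that $u \to v \in E_G$, i.e.\ $(u,v) \in E_G$ and $(v,u) \notin E_G$. Since $u$ and $v$ both lie in the same undirected connected component $\mathcal{C}$, there is an undirected path $P = (v = w_0, w_1, \ldots, w_\ell = u)$ in $G$, where every consecutive pair is joined by an undirected edge $w_i - w_{i+1}$.

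Next I would verify that $\ell \geq 2$, so that concatenating $P$ with the directed edge $u \to v$ produces a genuine cycle of length at least three (matching \cref{def:cycle}). Indeed, if $\ell = 1$ then $v - u \in E_G$, which would require $(v,u) \in E_G$, contradicting the assumption $u \to v \in E_G$. Hence $\ell \geq 2$, and the sequence $C = (u, v = w_0, w_1, \ldots, w_\ell = u)$ is a cycle in $G$ whose edges are $u \to v$ followed by the undirected edges $w_0 - w_1, \ldots, w_{\ell-1} - w_\ell$.

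Finally, because $C$ contains at least one directed edge (namely $u \to v$), it qualifies as a directed cycle by \cref{def:cycle}. This contradicts the hypothesis that $G$ is a chain graph, which by definition has no directed cycle. Therefore $u \to v \notin E_G$, as claimed.

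The only subtle step is the corner case $\ell = 1$; the rest is a direct appeal to the definitions. I expect no real obstacle beyond making sure the cycle produced truly fits the paper's definition of a cycle (length $\geq 3$) and of a directed cycle (at least one directed edge).
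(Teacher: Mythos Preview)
Your proof is correct and follows exactly the same approach as the paper: take an undirected path in $\mathcal{C}$ from $v$ to $u$, append the hypothesized edge $u\to v$, and obtain a directed cycle contradicting the chain-graph hypothesis. You are in fact slightly more careful than the paper, which does not explicitly rule out the $\ell=1$ case to guarantee that the resulting cycle has length at least three as required by \cref{def:cycle}.
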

\begin{proof}
    As both $u$ and $v$ belong to the same \ucc{} of $G$, there must exist an undirected path connecting them. If $u\rightarrow v \in G$, then combining this directed edge with the undirected path forms a directed cycle in $G$, which contradicts the definition of a chain graph. Thus, we conclude that $u\rightarrow v \notin G$. This completes the proof.
\end{proof}

\begin{proposition}
\label{prop:chordal-chain-graph-contains-directed-cycle-of-length-three}
Let $G$ be a graph with a chordal skeleton. If $C = (u_1, u_2, \ldots, u_l, u_1)$ is a directed cycle in $G$, then there exists a directed cycle $C' = (a, b, c, a)$ in $G$ of length three such that $a,b,c$ are in the cycle $C$.
\end{proposition}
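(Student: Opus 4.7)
The plan is to prove the statement by induction on the length $l$ of the cycle $C$. The base case $l = 3$ is immediate: $C$ itself is a directed $3$-cycle whose vertices lie on $C$. For the inductive step with $l \geq 4$, I would use the chordality of the skeleton of $G$ to obtain a chord of $C$, split $C$ at this chord into two strictly shorter cycles, argue that at least one of them is still a directed cycle, and finish by invoking the induction hypothesis on that shorter cycle.

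In more detail: the sequence $(u_1, u_2, \ldots, u_l, u_1)$ is also a cycle of length $l \geq 4$ in the chordal skeleton of $G$, so there exist indices $i < j$ with $j > i+1$ and $(i,j)\neq(1,l)$ such that $u_i$ and $u_j$ are adjacent in the skeleton, i.e., some edge between them exists in $G$. Using this chord I would form the two candidate sub-cycles
\[
C_1 = (u_i, u_{i+1}, \ldots, u_j, u_i), \qquad C_2 = (u_j, u_{j+1}, \ldots, u_l, u_1, \ldots, u_i, u_j),
\]
both of length strictly less than $l$ by the choice of $(i,j)$. The vertex sets of $C_1$ and $C_2$ are each contained in $\{u_1, \ldots, u_l\}$, so any directed triangle produced later inside one of them automatically lies on $C$.

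The step I expect to require the most care, and the real obstacle in the argument, is checking that at least one of $C_1, C_2$ is actually a \emph{directed cycle} in the paper's oriented sense, where each edge traversed must either be undirected or be directed in the traversal direction. I would split into three cases according to the orientation of the chord. If the chord is undirected, both $C_1$ and $C_2$ are valid cycles in the paper's sense, and the directed edge guaranteed in $C$ must lie on one of the two arcs, making that sub-cycle directed. If the chord is directed as $u_i \to u_j$, then only $C_2$ closes properly in the oriented sense, but $C_2$ is automatically a directed cycle because it contains $u_i \to u_j$ in the traversal direction; the case $u_j \to u_i$ is symmetric and produces $C_1$ as a directed cycle.

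In every case this yields a directed cycle of length strictly less than $l$ whose vertex set lies in $\{u_1, \ldots, u_l\}$. Applying the induction hypothesis to this shorter cycle produces a directed triangle $C' = (a,b,c,a)$ whose vertices lie on the shorter cycle and hence on $C$, completing the induction. The only nontrivial point is the case analysis about the chord's orientation; the rest is bookkeeping.
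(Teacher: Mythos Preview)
Your proposal is correct and takes essentially the same approach as the paper: the paper phrases the argument via a minimal counterexample (take a shortest directed cycle on the vertices of $C$ and derive a contradiction from a chord), while you phrase it as explicit induction on the length, but the chord-splitting and the three-way case analysis on the chord's orientation are identical. Your case analysis is in fact slightly more explicit than the paper's about which sub-cycle works in each orientation case.
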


\begin{proof}
Consider the smallest directed cycle $C'$ in $G$ that contains vertices of $C$. Such a cycle exists because $C$ exists. We aim to show that $C'$ must be of length three.

Suppose, for the sake of contradiction, that $C'$ has more than three vertices, so it can be expressed as $C' = (v_1, v_2, \ldots, v_m, v_1)$. Since the skeleton of $G$ is chordal, there must exist a chord in $C'$. Let's assume this chord is between vertices $v_i$ and $v_j$, where $i < j$, and $v_i$ and $v_j$ are not adjacent in $C$.

There are three possible cases: either $v_i - v_j$, $v_i \rightarrow v_j$, or $v_i \leftarrow v_j$. In each of these cases, either $(v_1, v_2, \ldots, v_i, v_j, v_{j+1}, v_m, v_1)$ or $(v_i, v_{i+1}, \ldots, v_j, v_i)$ forms a directed cycle in $G$ of length less than $m$, which contradicts the minimality of $C'$. Therefore, we conclude that $C'$ must be of length three.

This completes the proof of \cref{prop:chordal-chain-graph-contains-directed-cycle-of-length-three}.
\end{proof}

\begin{definition}[\textbf{Union of graphs}]
\label{def:union-of-graphs}
Let $G_1$ and $G_2$ be two graphs. $G$ is said to be the \emph{union} of $G_1$ and $G_2$ if $V_G = V_{G_1}\cup V_{G_2}$ and $E_G = E_{G_1} \cup E_{G_2}$.
\end{definition}

\begin{definition}[\textbf{Tree decomposition}]
\label{def:tree-decomposition}
A \emph{tree decomposition} of an undirected graph $G$ is a pair $(\mathcal{X}, T)$, where $\mathcal{X} = \{X_1, X_2, \ldots, X_l\}$ is a family of subsets of $V_G$, and $T$ is a tree whose nodes are subsets $X_i$, satisfying the following properties:
\begin{enumerate}
    \item $V_G = \bigcup_{i=1}^{l} X_i$.
    \item For every edge $(u,v)\in E_G$, there is a subset $X_i$ such that $u,v \in X_i$.
    \item For any node $v \in V_G$, the set of nodes $\{X_i \mid v \in X_i\}$ induces a connected subgraph in $T$. This is equivalent to if $X_i$ and $X_j$ both contain a node $v$, then all nodes in the path between $X_i$ and $X_j$ contain $v$.
\end{enumerate}
\end{definition}

\Cref{prop:tree-decomposition-property} is an intrinsic property of tree decomposition.
\begin{property}[\textbf{Intersection property of tree decomposition}, \cite{diestel2005graph}]
\label{prop:tree-decomposition-property}
Let $G$ be an undirected graph, and $(\mathcal{X} = \{X_1, X_2, \ldots, X_l\}, T)$ is a tree decomposition of $G$. For any edge $X_i - X_j \in E_T$, $X_i \cap X_j$ is a vertex separator of $G$.
\end{property}

\begin{definition}[\textbf{Perfect elimination ordering (PEO)}]
    Let $G$ be an undirected graph. Let $\tau$ be an ordering of its vertices. $\tau$ is said to be a \emph{perfect elimination ordering}  of $G$ if for all nodes $v$ of $G$, the subgraph induced in $G$ by the neighbors of $v$ that come before $v$ in $\tau$ forms a clique, i.e., for $u-v, w-v \in E_G$, if $\tau(u)<\tau(v)$ and $\tau(w)<\tau(v)$ then $u-w\in E_G$. 
\end{definition}
The following result is a classical result on the chordal graphs:
\begin{proposition}[\cite{dirac1961rigid}] 
    \label{prop:PEO-and-chordal-graph-relations}
    For a graph $G$, there exists a PEO if, and only if, $G$ is a chordal graph.
\end{proposition}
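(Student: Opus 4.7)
The plan is to prove the two directions of the equivalence separately. For the forward direction (existence of a PEO implies chordality), I would argue by contradiction: assuming $\tau$ is a PEO but $G$ contains a chordless cycle $C = (u_1, u_2, \ldots, u_k, u_1)$ with $k \geq 4$, let $u_i$ be the vertex of $C$ appearing latest in $\tau$. Its two cycle-neighbors $u_{i-1}$ and $u_{i+1}$ (indices taken modulo $k$) both come before $u_i$ in $\tau$ and are both adjacent to $u_i$ in $G$, so the PEO property forces $u_{i-1} - u_{i+1} \in E_G$. Since $k \geq 4$ makes $u_{i-1}$ and $u_{i+1}$ non-consecutive in $C$, this edge is a chord, contradicting chordlessness.

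For the backward direction I would proceed by induction on $|V_G|$, the case $|V_G| \leq 1$ being immediate. The inductive step rests on the classical lemma that every nonempty chordal graph has a \emph{simplicial vertex}, i.e., a vertex whose open neighborhood induces a clique. Given such a $v$, the induced subgraph $G - v$ remains chordal (deleting a vertex cannot create a chordless cycle), so by the inductive hypothesis it admits a PEO $\tau'$. Extending $\tau'$ by placing $v$ last yields an ordering $\tau$ of $V_G$. To check $\tau$ is a PEO I would verify two cases: the neighbors of $v$ preceding $v$ in $\tau$ are exactly all of $v$'s neighbors, which form a clique by simpliciality; and for any $u \neq v$, its neighbors preceding $u$ in $\tau$ coincide with its neighbors preceding $u$ in $\tau'$ (since $v$ is placed after $u$), which form a clique by the inductive hypothesis applied to $G - v$.

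The main obstacle is the simplicial vertex lemma itself, which I would establish in the stronger form ``every chordal graph is either complete or contains two non-adjacent simplicial vertices,'' again by induction on $|V_G|$. If $G$ is complete, every vertex is simplicial. Otherwise, fix non-adjacent vertices $a, b \in V_G$ and a minimal $a$-$b$ separator $S$, and let $A$, $B$ be the components of $G - S$ containing $a$ and $b$ respectively. A short argument using chordality shows $S$ induces a clique: any two $s, t \in S$ have neighbors in both $A$ and $B$ by minimality of $S$, and combining shortest paths through $A$ and through $B$ produces a cycle of length $\geq 4$ whose only possible chord (by the structure of the paths) is $s - t$ itself. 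Applying the inductive form of the lemma to the strictly smaller chordal induced subgraphs $G[A \cup S]$ and $G[B \cup S]$ yields a simplicial vertex inside $A$ and one inside $B$ (since $S$ is a clique, two non-adjacent simplicial vertices cannot both lie in $S$, and the complete case is handled by taking $a$ or $b$ directly when $|A| = 1$ or $|B| = 1$); because their neighborhoods in $G$ are contained in $A \cup S$ and $B \cup S$ respectively, they remain simplicial in $G$, completing the lemma and hence the proposition.
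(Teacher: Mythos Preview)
Your proposal is correct and follows the classical Dirac/Fulkerson--Gross approach. Note, however, that the paper does not supply a proof of this proposition at all: it simply states it as a known result with a citation to \cite{dirac1961rigid}, so there is nothing in the paper to compare against beyond the reference itself.
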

\begin{definition}[\textbf{LBFS Ordering}]
\label{def:LBFS}
    For a chordal graph $G$, \cite{rose1976algorithmic} construct a lexicographical breadth-first search (LBFS) algorithm that returns a PEO ordering of a chordal graph. An ordering of a graph $G$ that can be returned by the LBFS algorithm is said to be an \emph{LBFS ordering} of $G$. A modified LBFS algorithm is provided in \Cref{alg:LBFSwithO}.
\end{definition}

The following observations on chordal graphs come from \cite{rose1976algorithmic}.

\begin{observation}
\label{obs:LBGS-gives-PEO}
Let $\tau$ be an LBFS ordering of a chordal graph $G$. For $x,y,z\in V_G$, if both $x$ and $y$ are adjacent to $z$ (i.e., $x-z,y-z \in E_G$) and both are of lesser rank than $z$ in $\tau$ (i.e., $\tau(x)<\tau(z)$ and $\tau(y) < \tau(z)$) then $x$ and $y$ are also adjacent in $G$ (i.e., $x-y\in E_G$).
\end{observation}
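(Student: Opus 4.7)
The plan is to derive this observation as an immediate consequence of the classical fact that every LBFS ordering of a chordal graph is a perfect elimination ordering. This equivalence is precisely the main theorem of \cite{rose1976algorithmic} and is the reason LBFS is invoked at all in Definition~2.11. Once one grants that $\tau$ is a PEO, the statement follows by unpacking Definition~2.10: any two neighbors of $z$ that precede $z$ in $\tau$ must be adjacent.

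Concretely, I would proceed in two steps. First, since $G$ is chordal by hypothesis and $\tau$ is an LBFS ordering of $G$, the Rose--Tarjan--Lueker result gives that $\tau$ is a PEO of $G$; this is exactly the property that the LBFS algorithm is designed to produce. Second, I instantiate the PEO property at the vertex $v := z$: the subgraph of $G$ induced by $\{u \in N(\{z\}, G) : \tau(u) < \tau(z)\}$ is a clique. Since the hypothesis places both $x$ and $y$ in this set, we obtain $x - y \in E_G$, which is the desired conclusion.

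The only mild subtlety is that Definition~2.11 introduces an LBFS ordering by reference to an algorithm rather than by the PEO property directly, so one has to cite (rather than re-derive) the Rose--Tarjan--Lueker theorem in the first step. A fully self-contained alternative would be a direct exchange argument: assuming for contradiction that $x - y \notin E_G$, one tracks the lexicographic labels assigned by LBFS at the moment $z$ is about to be selected and shows that the labels of $x$ and $y$ (relative to each other and to $z$) force a tie-breaking choice inconsistent with $\tau$. However, this essentially reproves the classical theorem, so I would not pursue it here; the two-line citation-plus-definition-unpacking proof is sufficient and is the main reason this statement is labelled an observation rather than a lemma.
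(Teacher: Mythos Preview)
Your proposal is correct and matches the paper's own proof essentially line for line: the paper cites \cite{rose1976algorithmic} to assert that an LBFS ordering of a chordal graph is a PEO, and then unpacks the PEO definition at the vertex $z$ to conclude $x-y\in E_G$. Your remark that a self-contained exchange argument would be overkill is also in keeping with the paper's treatment.
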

\begin{proof}
An LBFS ordering of a chordal graph $G$ is known to be a perfect elimination ordering of $G$ \citep{rose1976algorithmic}. Given that $\tau$ represents an LBFS ordering of a chordal graph $G$, we can deduce that the subgraph induced by the neighbors of $z$ that appear before $z$ in $\tau$ forms a clique. Consequently, if $x$ and $y$ are neighbors of $z$ and satisfy $\tau(x) < \tau(z)$ and $\tau(y) < \tau(z)$ (i.e., both appear before $z$ in $\tau$), then $x$ and $y$ must belong to the same clique, implying that $(x,y) \in E_G$.
\end{proof}

\begin{observation}
\label{obs:chordless-path-and-LBFS-ordering}
Let $G$ be an undirected chordal graph, and $\tau$ be an LBFS ordering of the
vertices of $G$. For vertices $u, v \in V_G$ if
$P = (p_1 = u, p_2, \ldots, p_l =v)$ is a \cp{} from $u$ to $v$ in $G$ then
there exists $1 \leq i \leq l$ such that that
$\tau(p_1) > \tau(p_2) > \ldots > \tau(p_i)$, and
$\tau(p_i) < \tau(p_{i+1}) \dots < \tau(p_{l})$ (note that when $i = 1$ or
$i = l$ then one or the other of the above series of inequalities is empty).  In
other words, there are no ``internal local maxima'' of $\tau$ in $P$: $\tau$ is
either strictly increasing, or strictly decreasing, or has a unique global
minimum in the interior of $P$.
\end{observation}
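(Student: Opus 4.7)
The plan is to prove the contrapositive-style statement: show that the sequence $\tau(p_1), \tau(p_2), \ldots, \tau(p_l)$ has no \emph{internal local maximum}, and then observe that any sequence of distinct numbers with no internal local maximum must be strictly decreasing, strictly increasing, or strictly decreasing followed by strictly increasing (with a unique global minimum in the interior).

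For the main step, suppose for contradiction that some index $j$ with $1 < j < l$ is an internal local maximum, meaning $\tau(p_{j-1}) < \tau(p_j)$ and $\tau(p_{j+1}) < \tau(p_j)$. Since $P$ is a path, $p_{j-1}$ and $p_{j+1}$ are both neighbors of $p_j$ in $G$, and both have smaller $\tau$-rank than $p_j$. I would then invoke \cref{obs:LBGS-gives-PEO} directly: it guarantees that $p_{j-1}$ and $p_{j+1}$ must themselves be adjacent in $G$, i.e., $p_{j-1} - p_{j+1} \in E_G$. But $p_{j-1}$ and $p_{j+1}$ are non-adjacent along $P$ (they are two steps apart), so this edge is a chord of $P$, contradicting the hypothesis that $P$ is chordless. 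Hence no such $j$ can exist.

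To conclude the observation in the stated form, I would let $i$ be the index minimizing $\tau(p_i)$ along $P$. Because the values $\tau(p_1), \ldots, \tau(p_l)$ are distinct (they come from an ordering of $V_G$) and the sequence has no internal local maximum, the portions $p_1, \ldots, p_i$ and $p_i, \ldots, p_l$ must each be monotone: if either failed to be monotone, it would contain a strict ascent followed by a strict descent, producing an internal local maximum. This gives $\tau(p_1) > \cdots > \tau(p_i)$ and $\tau(p_i) < \cdots < \tau(p_l)$, and handles the boundary cases $i = 1$ and $i = l$ automatically (one of the chains becomes empty).

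I do not anticipate a serious obstacle here: the entire argument rides on \cref{obs:LBGS-gives-PEO}, which packages the PEO property of LBFS orderings exactly in the form needed. The only thing to be careful about is making explicit that $p_{j-1}$ and $p_{j+1}$ are non-adjacent in $P$ (so that their adjacency in $G$ genuinely constitutes a chord), and observing that the absence of an internal local maximum in a sequence of distinct real numbers is enough to force the ``decrease then increase'' shape described in the statement.
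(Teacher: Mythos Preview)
Your proposal is correct and follows essentially the same approach as the paper: both argue that an internal local maximum at $p_j$ would force $p_{j-1}-p_{j+1}\in E_G$ via \cref{obs:LBGS-gives-PEO}, contradicting chordlessness. The paper's proof is terser (it omits the explicit argument that ``no internal local maximum'' implies the decrease-then-increase shape), but the substance is identical.
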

\begin{proof}
  There cannot exist vertices $p_{j-1}, p_{j}, p_{j+1}$ such that
  $\tau(p_{j-1}) < \tau(p_j)$, and $\tau(p_{j+1}) < \tau(p_j)$. Otherwise, from
  \Cref{obs:LBGS-gives-PEO}, $p_{j-1}-p_{j+1} \in E_G$, which further implies
  that $P$ is not a \cp{}.
\end{proof}

\begin{definition}[\textbf{Skeleton} {\citep{verma1990equivalence}}]
\label{def:skeleton}
For a graph $G$, the \emph{skeleton} of $G$, denoted as $skel(G)$ or $G_U$, is the underlying undirected graph of $G$. More formally, the skeleton of a graph $G$ is an undirected graph obtained by replacing each directed edge $u\rightarrow v$ of $G$ with an undirected edge $u-v$.
\end{definition}

\begin{definition}[\textbf{v-structure} \citep{verma1990equivalence}]
\label{def:v-structure}
    For a graph $G$, an induced subgraph of $G$ of the form $a\rightarrow
    b \leftarrow c$ is said to be a v-structure of $G$. We denote the set of v-structures of $G$ by $\mathcal{V}(G)$. 
\end{definition}
\paragraph{\textbf{MEC:}}
 A directed acyclic graph (DAG) represents a set of conditional independence relations between random variables. Two DAGs are said to be Markov equivalent if both represent the same set of conditional independence relations between the random variables. A Markov equivalent class (MEC) is a set of DAGs that are Markov equivalent.  An MEC can be represented by a complete partial directed acyclic graph (CPDAG) which is a union of the DAGs contained by that MEC ~\citep{andersson1997characterization}. With a slight abuse of terminology, we equate the graph which represents an MEC with the MEC itself, and refer to both as an ``MEC''.  An MEC contains a unique set of v-structures. For an MEC, if we have given the skeleton of the MEC and its set of v-structures, we can construct the MEC  using Meek's rules \citep{meek1995causal}.  
A DAG and its corresponding MEC both have the same skeleton and the same set of v-structures. If we have a DAG then using Meek's rules, we can construct the MEC to which the DAG belongs.

\cite{andersson1997characterization} give the following necessary and sufficient conditions for a graph to be an MEC.
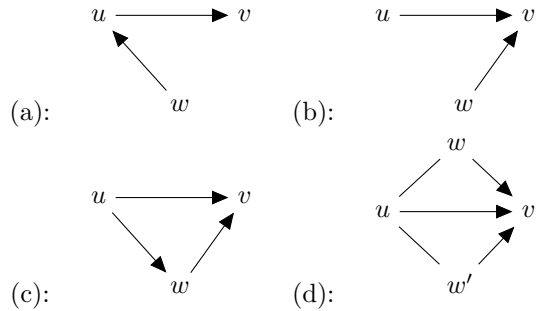
\begin{figure}[ht]
    \begin{center}
    \begin{tabular}{ c c c c }
 (a): &
 \begin{tikzpicture}
     \label{fig-a:strongly-protected-edge}
    \node[](u){$u$};
    \node[](v)[right= 1.5 of u]{$v$};
    \node[](w)[below right = 0.8 and 0.6 of u]{$w$};
    \draw[->](u)--(v);
    \draw[->](w)--(u);
    \end{tikzpicture}
    & (b): &
    \begin{tikzpicture}
    \node[](u){$u$};
    \node[](v)[right= 1.5 of u]{$v$};
    \node[](w)[below right = 0.8 and 0.6 of u]{$w$};
    \draw[->](u)--(v);
    \draw[->](w)--(v);
     \label{fig-b:strongly-protected-edge}
    \end{tikzpicture}
    \\ 
 (c): &
 \begin{tikzpicture}
    \node[](u){$u$};
    \node[](v)[right= 1.5 of u]{$v$};
    \node[](w)[below right = 0.8 and 0.6 of u]{$w$};
    \draw[->](u)--(v);
    \draw[->](u)--(w);
    \draw[->](w)--(v);
     \label{fig-c:strongly-protected-edge}
    \end{tikzpicture}
    & (d): &
    \begin{tikzpicture}
    \node[](u){$u$};
    \node[](v)[right= 1.5 of u]{$v$};
\node[](b)[below right=0.5 and 0.5 of u]{$w'$};
    \node[](a)[above right=0.5 and 0.5 of u]{$w$};
    \draw[->](u)--(v);
    \draw[-](u)--(b);
    \draw[->](b)--(v);
    \draw[-](a)--(u);
    \draw[->](a)--(v);
    \label{fig-d:strongly-protected-edge}
    \end{tikzpicture}
    \\     
\end{tabular}
\end{center}
\caption{Strongly protected $u \rightarrow v$.}
\label{fig:strongly-protected-edge}
\end{figure} \begin{theorem}[\cite{andersson1997characterization}]
\label{thm:nes-and-suf-cond-for-chordal-graph-to-be-an-MEC}
A graph $G$ is an MEC if, and only if, 
\begin{enumerate}
    \item
    \label{item-1-theorem-nec-suf-cond-for-MEC}
    $G$ is a chain graph.
    \item
    \label{item-2-theorem-nec-suf-cond-for-MEC}
    For every chain component $\tau$ of $G$, $G_{\tau}$ is chordal, i.e., every \ucc{} of $G$ is chordal.
    \item
    \label{item-3-theorem-nec-suf-cond-for-MEC}
    The configuration $a\rightarrow b - c$ does not occur as an induced subgraph of $G$.
    \item
    \label{item-4-theorem-nec-suf-cond-for-MEC}
    Every directed edge $u\rightarrow v \in G$ is strongly protected in $G$, i.e., $u\rightarrow v$ is a part of at least one of the subgraphs of $G$ as shown in \Cref{fig:strongly-protected-edge}.
\end{enumerate}
\end{theorem}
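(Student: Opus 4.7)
The plan is to prove both directions of the theorem separately, using the graphical characterization of MECs as unions of Markov equivalent DAGs together with the Verma--Pearl skeleton/v-structure characterization.

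For the necessity direction, write $G = \bigcup_{D \in \mathcal{M}} D$ where $\mathcal{M}$ is the Markov equivalence class represented by $G$, so that an edge of $G$ is directed exactly when every $D \in \mathcal{M}$ orients it the same way. Property (1) follows because any directed cycle in $G$ would lift to a directed cycle in each $D$, contradicting acyclicity. For property (3), if $a \rightarrow b - c$ were an induced subgraph, then since $b-c$ is undirected some $D \in \mathcal{M}$ would orient it as $c \rightarrow b$, producing a v-structure $a \rightarrow b \leftarrow c$ (since $a$ and $c$ are non-adjacent), yet another $D' \in \mathcal{M}$ with $b \rightarrow c$ would lack this v-structure, contradicting that all DAGs in an MEC share v-structures. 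Property (2) follows from a similar argument: a chordless cycle of length at least four inside an undirected component of $G$ would, under any acyclic orientation by some $D \in \mathcal{M}$, produce a sink along the cycle whose two in-neighbors are non-adjacent, creating a v-structure that is not in $G$. Property (4) is proved by running Meek's orientation rules backwards: every directed edge in an MEC must be justified by one of the four configurations in \cref{fig:strongly-protected-edge}, which are exactly the configurations forced by Meek's rules.

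For the sufficiency direction, given $G$ satisfying (1)--(4), I would construct a DAG $D$ whose associated MEC equals $G$. Using \cref{prop:every-chain-graph-has-a-topological-ordering} pick a topological ordering of the chain components; within each chordal chain component, invoke \cref{prop:PEO-and-chordal-graph-relations} to extract a PEO and orient every undirected edge from lower to higher PEO rank. This yields a directed graph $D$ with $\skeleton{D} = \skeleton{G}$; acyclicity follows from the chain-component ordering combined with the PEO within components. The core technical step is then to show $\mathcal{V}(D) = \mathcal{V}(G)$. One inclusion is immediate: every v-structure of $G$ consists of directed edges that are preserved in $D$. For the other inclusion, any putative new v-structure $x \rightarrow y \leftarrow z$ in $D$ is ruled out case by case: if both edges live inside the same chordal component, the PEO property forces $x-z \in E_G$; the genuinely mixed cases, where one edge was already directed in $G$ and the other was oriented by the PEO, are ruled out by properties (3) and (4), which prevent exactly the local configurations that would otherwise yield a new v-structure after orientation. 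Once $\mathcal{V}(D) = \mathcal{V}(G)$ is established, an application of the Verma--Pearl characterization concludes that the MEC of $D$ equals $G$.

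The main obstacle will be the sufficiency direction, specifically the case analysis showing that no new v-structures are created when orienting undirected edges. The delicate interactions are between the directed edges already present in $G$ and newly oriented edges in overlapping chain components, mediated by the strong-protection clause. In particular, one must verify that for every undirected edge $u-v$ adjacent to a previously directed edge, the PEO-induced orientation does not create an induced $a \rightarrow b - c$ pattern, and that the strong protection of each directed edge of $G$ persists as a structural reason in $D$ itself. Handling this cleanly likely requires simultaneous induction on the chain components in topological order, using the four cases of \cref{fig:strongly-protected-edge} as the inductive witnesses.
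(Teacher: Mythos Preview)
The paper does not prove this theorem at all: it is stated with the citation \cite{andersson1997characterization} and used as a black box throughout, so there is no ``paper's own proof'' to compare against. Your proposal is therefore not competing with anything in the present paper; it is a sketch of a proof of the Andersson--Madigan--Perlman characterization itself.

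As a sketch, your outline is broadly in the right spirit, but a few points deserve caution. In the necessity direction, your argument for property~(4) (``running Meek's rules backwards'') is circular as stated: Meek's rules are justified precisely because they orient only edges that are forced, and the completeness statement---that every directed edge in the essential graph is obtainable this way---is essentially equivalent to property~(4) itself. A clean proof instead shows directly that if $u\rightarrow v$ is not in any of the four configurations of \cref{fig:strongly-protected-edge}, then one can exhibit a DAG in the MEC with $v\rightarrow u$, contradicting that the edge is compelled. In the sufficiency direction, your plan to orient each chordal chain component by a PEO is the standard device, but the claim that ``the genuinely mixed cases \ldots\ are ruled out by properties~(3) and~(4)'' hides real work: property~(3) already forbids any edge of the form $a\rightarrow b - c$ as an induced subgraph of $G$, so a directed edge of $G$ and an undirected edge of $G$ can never meet at a non-adjacent collider in the way you describe; the only nontrivial case is two newly oriented edges inside a single component, which the PEO handles. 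Property~(4) is not actually needed to prevent new v-structures---it is needed for the separate verification that the MEC of your constructed $D$ has \emph{no fewer} directed edges than $G$, i.e., that every edge you left directed in $G$ is genuinely compelled.
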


\subsection{New Terminologies and Related Results}
\label{subsection:new-results}
In this paper, we introduce the following definitions and observations, which play a crucial role in simplifying our main result. Firstly, we define synchronous graphs, the Markov union of synchronous graphs, and partial MECs. Additionally, we introduce new notations used in this paper to represent these concepts. In the latter part of this subsection, we conduct a detailed study of three important terms: (a) \tfp{}, (b) the shadow of an MEC and a graph, and (c) the projection of an MEC. We thoroughly examine these three terms and also explore the relevant results and implications of these newly introduced terminologies in the context of our paper.

\begin{definition}[\textbf{Synchronous Graphs}]
\label{def:synchronous-graphs} 
Two graphs $G$ and $H$ are said to be \emph{synchronous graphs} if there do not exist vertices $x, y \in V_{G}\cap V_{H}$ such that $x\rightarrow y \in E_G$ and $y\rightarrow x \in E_H$.
\end{definition}

\begin{definition}[\textbf{Markov Union of Synchronous Graphs}]
\label{def:Markov-union-of-graphs} 
    Let $G_1, G_2, \ldots, G_l$ be pairwise synchronous graphs. The \emph{Markov union} of $G_1, G_2, \ldots, G_l$ is a graph $G$, denoted by $G = U_M(G_1, G_2, \dots, G_l)$, such that $\skel{G}$ is the skeleton of the union of $G_1, G_2, \ldots, G_l$ (see \cref{def:union-of-graphs} for the definition of the graph union), and for any edge $u-v \in \skel{G}$, $u\rightarrow v \in G$ if $u\rightarrow v \in G_i$ for any $G_i$. More formally, $V_G = \bigcup_{i=1}^{l} V_{G_i}$, the set of directed edges of $G$ is $D_G = \{u\rightarrow v: u\rightarrow v \in E_{G_i}$ for some $1 \leq i \leq l\}$, and the set of undirected edges of $G$ is $U_G = \{u-v: u-v \in E_{G_i}$ for some $1 \leq i \leq l$, and there does not exist a $j$ such that $u\rightarrow v \in E_{G_j}$ or $v\rightarrow u \in E_{G_j}\}$. In other words, if $u\rightarrow v$ is a directed edge in any graph $G_i$, then $u\rightarrow v$ is also a directed edge in the Markov union graph, and an undirected edge $u-v$ of $G_i$ becomes an undirected edge of $G$ only if neither $u\rightarrow v$ nor $v\rightarrow u$ is part of any graph in the union. Since the graphs are pairwise synchronous, if one graph $G_i$ contains a directed edge $u\rightarrow v$, then no other graph $G_j$ can contain the directed edge $v\rightarrow u$.
\end{definition}

\begin{definition}[\textbf{Partial MEC}\footnote{The same notion appeared under the name `Restricted Chain Graph' in the earlier work of \citet{van2016separators,bang2023we}. We want to thank an anonymous reviewer for pointing out the references.}]
\label{def:partial-MEC}
A graph $M$ is said to be a \emph{partial MEC} if it satisfies the following conditions:
\begin{enumerate}
    \item
    \label{item-1-of-def:partial-MEC}
    $M$ is a chain graph.
    \item 
    \label{item-3-of-def:partial-MEC}
    Undirected connected components of $M$ are chordal.
    \item
    \label{item-2-of-def:partial-MEC}
    There does not exist a subgraph of the form $u\rightarrow v-w \in M$.
\end{enumerate}
i.e., a partial MEC obeys  \cref{item-1-theorem-nec-suf-cond-for-MEC,item-2-theorem-nec-suf-cond-for-MEC,item-3-theorem-nec-suf-cond-for-MEC} of \cref{thm:nes-and-suf-cond-for-chordal-graph-to-be-an-MEC}. 	
In other words, a partial MEC is a chain graph with chordal undirected connected components and no subgraph of the form $u\rightarrow v-w$.
\end{definition}
\Cref{def:partial-MEC} implies that every MEC is a partial MEC, and a partial MEC is an MEC if it also obeys \cref{item-4-theorem-nec-suf-cond-for-MEC} of \cref{thm:nes-and-suf-cond-for-chordal-graph-to-be-an-MEC}.

\begin{notation}
    \label{new-notation}
We say an MEC $M$ is an MEC of an undirected graph $G$ if $\skeleton{M} = G$. Similarly, we say a partial MEC $M$ is a partial MEC of an undirected graph $G$ if $\skeleton{M} = G$. For an undirected graph $G$, we denote MEC$(G)$ as the set of MECs of $G$, and $\setofpartialMECs{G}$ as the set of partial MECs of $G$.  
\end{notation}

\subsubsection{Triangle-Free Path}
\label{subsection:tfp}
\begin{definition}[\textbf{Triangle-free path}]
\label{def:tfp-path}
For a graph $G$, a path
  $P = (u_1, u_2, \ldots , u_{l-1}, u_l)$ is said to be a \emph{\tfp{}} of $G$ if for any node $u_i$ in $P$, there is no edge between the two adjacent nodes of $u_i$ in $P$. More formally, 
  for
  any $1\leq i \leq l-2$, neither $u_i - u_{i+2} \in E_G$, nor
  $u_i \rightarrow u_{i+2} \in E_G$, nor $u_{i+2} \rightarrow u_i \in E_G$,
  i.e., $u_i - u_{i+2} \notin \skel{G}$.

  Note that if $v\rightarrow u \in E_G$ then there cannot be a \tfp{} with the first two nodes $u$ and $v$ (if $v\rightarrow u \in E_G$ then from \Cref{def:path}, there cannot be a path with the first two nodes $u$ and $v$, as neither $u-v \in E_G$ nor $u\rightarrow v \in E_G$).
\end{definition}
\begin{observation}
\label{obs:cp-is-tfp}
For a graph $G$, every \cp{} $P = (u_1, u_2, \ldots, u_l)$ of $G$ is a \tfp{} of $G$.
\end{observation}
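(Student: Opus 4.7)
The statement is essentially a direct consequence of the two definitions, so the plan is very short: unpack \Cref{def:path} for chordless paths and \Cref{def:tfp-path} and observe that the former is a uniformly stronger condition than the latter.

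Concretely, I would start from an arbitrary chordless path $P = (u_1, u_2, \ldots, u_l)$ of $G$. To verify that $P$ is a triangle-free path, I need to show that for every $1 \leq i \leq l-2$, none of $u_i - u_{i+2}$, $u_i \rightarrow u_{i+2}$, $u_{i+2} \rightarrow u_i$ is an edge of $G$. Fix such an $i$ and set $j = i+2$. Then $1 \leq i < j \leq l$ and $j \neq i+1$, so the indices $(i, j)$ fall within the scope of the chordless-path condition. That condition immediately yields $u_i - u_j \notin E_G$, $u_i \rightarrow u_j \notin E_G$, and $u_j \rightarrow u_i \notin E_G$, which is exactly what the triangle-free condition requires at index $i$.

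There is no real obstacle here; the only thing worth flagging is that the chordless-path definition quantifies over all pairs of non-adjacent-in-$P$ vertices while the triangle-free-path definition only constrains pairs at distance exactly two along $P$, so being chordless is strictly stronger and the implication is trivial. The proof is a one-liner: apply the chordless-path condition to the index pair $(i, i+2)$ for each $1 \leq i \leq l-2$.
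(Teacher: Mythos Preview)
Your proposal is correct and matches the paper's own proof essentially verbatim: both simply note that the chordless-path condition forbids edges between \emph{any} non-adjacent vertices of $P$, in particular between $u_i$ and $u_{i+2}$, which is exactly the triangle-free requirement.
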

\begin{proof}
Since $P$ is a \cp{}, there cannot be an edge between two non-adjacent nodes of $P$. This implies there cannot be an edge between $u_i$ and $u_{i+2}$ for any $1\leq i \leq l-2$. This further implies $P$ is a \tfp{} of $G$.
\end{proof}
For our purposes, the utility of the notion of a \tfp{} comes from the following
partial converse.
\begin{proposition}
\label{obv:tfps-are-chordless-in-chordal-graphs}
  Let $G$ be an undirected chordal graph, and let $P$ be a \tfp{} in $G$.  Then,
  $P$ is also a \cp{}.
\end{proposition}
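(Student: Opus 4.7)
The plan is to argue by contradiction, exploiting the interplay between the local triangle-free condition and the global chord condition that chordality forces on cycles. Suppose $P = (u_1, u_2, \ldots, u_l)$ is a triangle-free path in $G$ that is not chordless. Then there exists at least one pair $i < j$ with $j - i \geq 2$ and $u_i - u_j \in E_G$. Among all such pairs, choose one with $j - i$ minimal, and consider the closed walk $C = (u_i, u_{i+1}, \ldots, u_j, u_i)$.

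The first key observation is that $j - i \geq 3$: the triangle-free property rules out the case $j - i = 2$, since it would require $u_i - u_{i+2} \in E_G$, violating \Cref{def:tfp-path}. Consequently, $C$ is a genuine cycle of length $j - i + 1 \geq 4$, formed by the \tfp{} subpath $(u_i, u_{i+1}, \ldots, u_j)$ together with the chord $u_i - u_j$. Because $G$ is chordal, $C$ must contain at least one chord $u_a - u_b$ with $i \leq a < b \leq j$, $b - a \geq 2$, and $(a, b) \neq (i, j)$.

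Now I split into two cases based on the length $b - a$. If $b - a = 2$, then $u_a$ and $u_{a+2}$ are adjacent in $G$ while being two apart on $P$, directly contradicting the assumption that $P$ is a \tfp{}. If $b - a \geq 3$, then since $i \leq a < b \leq j$ and $(a, b) \neq (i, j)$, we must have $b - a < j - i$; but then $u_a - u_b$ is an edge between two nodes of $P$ at distance at least $2$ with a smaller gap than $j - i$, contradicting the minimality in our choice of $(i, j)$. Either way we reach a contradiction, so $P$ must be chordless.

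The proof is essentially routine once the minimal-chord trick is set up; I do not anticipate any real obstacle, as the two cases cover all possibilities and the triangle-free hypothesis is exactly strong enough to handle the short-chord case while minimality handles the long-chord case.
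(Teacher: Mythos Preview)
Your proof is correct and follows essentially the same minimal-chord argument as the paper: pick a chord $u_i-u_j$ of $P$ with $j-i$ minimal, observe $j-i\geq 3$ by the triangle-free condition, then use chordality on the resulting cycle to produce a shorter chord of $P$, contradicting minimality. Your case split on $b-a=2$ versus $b-a\geq 3$ is slightly more explicit than the paper's one-line treatment (which just notes $1<b-a<j-i$ and invokes minimality uniformly), but the logic is identical.
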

\begin{proof}
  Suppose, if possible, that $P = (u_{0}, u_1, \dots, u_{\ell})$ is a \tfp{} in
  $G$ that is not a \cp{}.  Choose a chord $u_{i} \undir{} u_{j}$ (assuming
  $i < j$ without loss of generality) of $P$ with the smallest possible value of
  $j - i$.  Since $P$ is a \tfp{}, we must have $j - i \geq 3$.  However, in
  that case $C = (u_i, u_{i+1}, \dots, u_j, u_i)$ is a simple cycle in $G$ of
  length at least four.  Since $G$ is chordal, this cycle must have a chord.  By
  the construction of $C$, this chord must be of the form $u_a \undir{} u_b$
  with $i \leq a < b \leq j$ and $1 < b - a < j - i$.  But then,
  $u_a \undir{} u_b$ is a chord of $P$ which contradicts the minimal choice of
  $j - i$.  We, therefore, conclude that $P$ must be a \cp{}.
\end{proof}

The following corollary of this partial converse is also useful.
\begin{corollary}\label{cor-adjacent-node-in-triangle-free-path}
  Let $G$ be an undirected chordal graph, and suppose that $u_1$ and $u_2$ are
  adjacent in $G$.  Suppose that $P = (v_1 = u_1, v_2, v_3, \dots, v_{l})$ is a
  \tfp{} in $G$ which contains $u_2$.  Then, it must be the case that $u_2$ is
  the second vertex in $P$ (i.e., $v_2 = u_2$).
\end{corollary}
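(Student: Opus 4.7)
The plan is to combine \cref{obv:tfps-are-chordless-in-chordal-graphs} with the definition of a chord. Since $G$ is chordal and $P$ is a \tfp{} in $G$, \cref{obv:tfps-are-chordless-in-chordal-graphs} tells us that $P$ is in fact a \cp{} of $G$. So no edge of $G$ can join two non-adjacent vertices of $P$.

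Next, I would locate $u_2$ in $P$. Since $u_1$ and $u_2$ are adjacent in $G$ they are distinct, and $v_1 = u_1$, so $u_2 = v_k$ for some $k \geq 2$. Suppose for contradiction that $k \geq 3$. Then $v_1$ and $v_k$ are non-adjacent vertices of the path $P$ (they are separated by the intermediate vertex $v_2$). But by hypothesis $u_1 - u_2 \in E_G$, i.e., $v_1 - v_k \in E_G$, so this edge is a chord of $P$. This contradicts the conclusion of the previous paragraph that $P$ has no chords. Hence $k = 2$, which is exactly the statement $v_2 = u_2$.

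I do not anticipate any real obstacles here: the corollary is essentially a one-line consequence of \cref{obv:tfps-are-chordless-in-chordal-graphs} together with the observation that an edge between the starting vertex and a later non-neighbour vertex in the path is by definition a chord. The only mild subtlety is confirming that $u_1 \neq u_2$ (so that the case $k = 1$ does not arise), which follows immediately from the fact that ``adjacent'' in a simple graph forces distinctness.
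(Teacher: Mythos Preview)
Your proposal is correct and follows essentially the same approach as the paper: invoke \cref{obv:tfps-are-chordless-in-chordal-graphs} to conclude that $P$ is a \cp{}, and then observe that an edge from $v_1 = u_1$ to any $v_k$ with $k \geq 3$ would be a chord, forcing $u_2 = v_2$. The paper's version is more terse but the logic is identical.
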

\begin{proof}
  By \Cref{obv:tfps-are-chordless-in-chordal-graphs}, $P$ is a \cp{} in G, and
  hence no $v_i$ for $i > 2$ can be adjacent to $v_{1} = u_1$ (otherwise,
  $v_1 - v_i$ would be a chord of $P$).  Since $u_2$ is adjacent to $u_1$ and is
  contained in $P$, this only leaves open the possibility $v_2 = u_2$.
\end{proof}
The following observation extends the above to MECs.
\begin{observation}
\label{obs:tfp-in-M-is-a-cp}
Let $G$ be a chain graph with chordal undirected components, and let $P$ be an
undirected \tfp{} in $G$. Then, $P$ is an undirected \cp{} in $G$.
\end{observation}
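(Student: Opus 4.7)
The plan is to reduce the statement to \Cref{obv:tfps-are-chordless-in-chordal-graphs} by localizing $P$ to a single undirected connected component of $G$. First, I would observe that because $P$ is an undirected path, every edge along $P$ is undirected, so all vertices of $P$ lie in a single undirected connected component $\mathcal{C}$ of $G$. By hypothesis, $\mathcal{C}$ (viewed as an undirected chordal graph) contains $P$ as an undirected path.

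Next, the key step is to argue that any potential chord of $P$ in $G$ actually lives inside $\mathcal{C}$ and is undirected. Suppose $u,v$ are two non-adjacent vertices of $P$ with an edge between them in $G$. Since $u,v \in \mathcal{C}$, \Cref{prop:edge-between-2-nodes-of-same-ucc-is-ud} (applied in both directions) rules out a directed edge $u \rightarrow v$ or $v \rightarrow u$, so the edge must be the undirected edge $u - v \in E_{\mathcal{C}}$. Hence the set of chords of $P$ in $G$ coincides with the set of chords of $P$ in the undirected chordal graph $\mathcal{C}$.

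Once that equivalence is established, I would note that $P$ being a \tfp{} in $G$ implies $P$ is a \tfp{} in $\mathcal{C}$ (the definition of \tfp{} only asks that $u_i - u_{i+2} \notin \skel{G}$, and $\skel{\mathcal{C}} \subseteq \skel{G}$). Applying \Cref{obv:tfps-are-chordless-in-chordal-graphs} to the undirected chordal graph $\mathcal{C}$ yields that $P$ is a \cp{} in $\mathcal{C}$, i.e., has no chords in $\mathcal{C}$. Combining with the previous paragraph, $P$ has no chords in $G$ either, so $P$ is an undirected \cp{} of $G$, as required.

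I do not expect any serious obstacle here: the only subtlety is making sure that chords across different components are impossible, which is exactly what the chain-graph property together with \Cref{prop:edge-between-2-nodes-of-same-ucc-is-ud} gives us. Once $P$ is confined to a chordal undirected component and chords are shown to be internal, the statement follows immediately from the already-proven \Cref{obv:tfps-are-chordless-in-chordal-graphs}.
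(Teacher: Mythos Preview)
Your proposal is correct and follows essentially the same approach as the paper: localize $P$ to a single undirected chordal component and invoke \Cref{obv:tfps-are-chordless-in-chordal-graphs}. The paper's proof is terser and simply asserts ``and therefore also in $G$'' after concluding $P$ is a \cp{} in the component; your additional care in using \Cref{prop:edge-between-2-nodes-of-same-ucc-is-ud} to rule out directed chords in $G$ makes explicit the step the paper leaves implicit.
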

\begin{proof}
  Since $P$ is undirected, it must be contained wholly in some undirected
  chordal component $H$ of $G$.  Thus,
  \Cref{obv:tfps-are-chordless-in-chordal-graphs} implies that $P$ must in fact
  be a \cp{} in $H$ (and therefore also in $G$).
\end{proof}

The following observation shows the transitive nature of \tfps{} in a chain graph.

\begin{observation}[\textbf{Concatenation of Triangle-Free Paths}]
\label{obs:concatenate-triangle-free-paths}
Consider a chain graph $G$ with chordal undirected components. Let $u, v, x, y, w\in V_G$ be (possibly non-distinct) vertices of $G$. Suppose $P_1 = (a_1 = u, a_2 = v, \dots, a_{l-1} = x, a_l = y)$ and $P_2 = (b_1 = x, b_2 = y, \dots, b_{m-1}, b_m = w)$ are \tfps{} in $G$ from $(u,v)$ to $(x,y)$ and from $(x, y)$ to $w$, respectively. Then, their concatenation $P \defeq (a_1 = u, a_2 = v, \ldots, a_{l-1} = x, a_l = y, b_3, b_4, \ldots, b_{m-1}, b_m = w)$ is a \tfp{} in $G$ from $(u,v)$ to $w$.
\end{observation}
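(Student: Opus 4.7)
The plan is to verify the three requirements for $P$ to be a \tfp{}: consecutive vertices in $P$ are adjacent in $G$, all vertices of $P$ are distinct, and for every three consecutive vertices of $P$ the outer two are non-adjacent in $G$. The first requirement is immediate since each consecutive pair in $P$ is either an edge of $P_1$ or of $P_2$, including the junction pair $(a_l, b_3) = (b_2, b_3)$. The triangle-free condition requires checking that each consecutive triple of $P$ lies inside $P_1$ or inside $P_2$; the two triples overlapping the junction are $(a_{l-1}, a_l, b_3) = (b_1, b_2, b_3)$ and $(a_l, b_3, b_4) = (b_2, b_3, b_4)$, both of which live wholly in $P_2$, so the condition is inherited from $P_2$ being a \tfp{}.

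The main obstacle is showing that $P$ has no repeated vertex. Since $P_1$ and $P_2$ are already paths, any repetition must take the form $a_i = b_j$ for some $i \in \{1, \dots, l-2\}$ and $j \in \{3, \dots, m\}$. The plan is to assume such a pair exists, pick one that minimizes $(l - i) + (j - 2)$, and derive a contradiction. Under this minimal choice, the closed walk $C = (a_i, a_{i+1}, \dots, a_l = b_2, b_3, \dots, b_{j-1}, b_j = a_i)$ is a \emph{simple} cycle in $G$ of length at least three.

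I would next argue that $C$ is entirely undirected. Every edge of $C$ is traversed forward along either $P_1$ or $P_2$, so any directed edge of $C$ would point consistently around $C$, making $C$ a directed cycle; but this contradicts $G$ being a chain graph. Hence all edges of $C$ are undirected and all vertices of $C$ lie in one undirected chordal component $H$ of $G$; by \cref{prop:edge-between-2-nodes-of-same-ucc-is-ud}, any chord of $C$ in $G$ must also be undirected and therefore lies in $H$.

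Finally I would split on $|C|$. If $|C| = 3$, the cycle must be $(a_{l-2}, a_{l-1}, a_l, a_{l-2})$, and its closing edge $a_{l-2} - a_l$ violates the \tfp{} property of $P_1$. If $|C| \geq 4$, chordality of $H$ together with the standard trick of selecting a chord of $C$ minimizing its gap along the cycle produces three consecutive vertices of $C$ whose outer endpoints are adjacent. A short case analysis based on where this triple lies on $C$ either directly contradicts the \tfp{} property of $P_1$ or $P_2$, or else, in the single ``cross'' case where the witnessing chord joins $b_{j-1}$ to $a_{i+1}$, yields a strictly shorter cycle of the same form to which the whole argument applies recursively, eventually bottoming out in one of the earlier subcases. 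Organizing this final case analysis cleanly is the hard part; everything else is bookkeeping.
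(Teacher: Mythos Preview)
Your proof is correct in outline and shares its skeleton with the paper's. Both arguments verify the edge and triple conditions by inheritance from $P_1$ and $P_2$, then reduce distinctness to extracting a simple cycle $C$ from a hypothetical repetition, observing that $C$ must be undirected (since $G$ is a chain graph) and hence lies in a single chordal component, and deriving a contradiction from chordality plus the triangle-free property.

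The difference is in how the final contradiction is extracted. You choose a chord of $C$ minimizing the \emph{cyclic} gap, which does yield a gap-$2$ chord, but then you must deal with the wrap-around triple $(b_{j-1}, a_i, a_{i+1})$ — this is not a consecutive triple of $P$, so you get no immediate contradiction and are forced into the recursion you describe. The paper instead observes that $C$ is a contiguous segment $(c_1, \dots, c_{k-1}, c_k = c_1)$ of $P$ and picks a chord $(c_i, c_j)$ minimizing the \emph{linear} gap $j - i$. Since every triple $(c_p, c_{p+1}, c_{p+2})$ with $i \le p \le j-2$ is already a consecutive triple of $P$, the triangle-free property forces $j - i \geq 3$, and minimality makes $(c_i, \dots, c_j, c_i)$ a chordless cycle of length at least $4$ — an immediate contradiction with no wrap-around case at all.

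Your recursion can be made rigorous, but be aware that the shorter cycle you produce in the cross case is not literally ``of the same form'': its closing edge $b_{j-1} \undir a_{i+1}$ is the newly found chord, not an edge of $P_1$ or $P_2$. The recursion therefore needs a slightly more general invariant (a cycle in $H$ all of whose edges except possibly one closing edge are edges of $P$), and the terminal triangle case must be rechecked under that invariant. The paper's linear-index trick sidesteps all of this bookkeeping in one line.
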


\begin{proof}
From the definition of $P$, we observe that (i) for any two consecutive vertices $p$ and $q$ in $P$, either the edge $p \rightarrow q$ or the edge $p \undir{} q$ exists in $G$ because $p$ and $q$ must be consecutive in either $P_1$ or $P_2$, both of which are paths in $G$, and 
(ii) for any three consecutive vertices $p$, $q$, and $r$ in $P$, the vertices $p$, $q$, and $r$ are distinct, and $p$ and $r$ are not adjacent in $G$ because $p$, $q$, and $r$ must be consecutive in one of $P_1$ or $P_2$, both of which are \tfps{}.

  Thus, to show that $P$ is a \tfp{} in $G$ it only remains to show that all
  vertices in $P$ are distinct.
Suppose, if possible, that this is not the case.
Then, there must exist a segment of $P$ of the form
  $C \defeq (c_1 = p, c_2, \dots, c_{k-1}, c_k = p)$ in which the vertices
  $c_1, c_2, \dots, c_{k-1}$ are distinct.
From the observations above about $P$, we see that $C$ must be a cycle in $G$.
  Since $G$ is a chain graph, $C$ cannot have any directed edges. 
Therefore, $C$ must be entirely contained within an undirected component $H$ of $G$.
  Furthermore, from our second observation, $C$ must have at least four nodes. If $C$ does not have a chord in $H$ then $C$ is a chordless cycle in $G$,
which contradicts the assumption that the \uccs{} of $G$ are chordal. 
  Suppose $C$ has a chord $(c_i, c_j)$. We choose such $(c_i, c_j)$ with $i<j$ and $j-i$ being minimal. In this case, we have another cycle $C' \defeq (c_i, c_{i+1}, \ldots, c_j, c_{j+1} =c_i)$. We have selected $c_i$ and $c_j$ such that there is no edge between any two non-adjacent nodes of $C'$ (due to the minimality of $j-i$).
  According to our second observation, $C'$ must have at least four nodes. This makes $C'$ a chordless cycle, contradicting the assumption that the undirected components of $G$ are chordal.

  We conclude that all vertices in $P$ are distinct.  Combined with the above
  two observations about $P$, this shows that $P$ is a \tfp{} in $G$.
\end{proof}

The following observation is an easy consequence of the fact that the reverse of an undirected path is also a path. 

\begin{observation}
\label{obs:undirected-tfp-is-bidirectional}
For a graph $G$, if $P = (u_1 = u, u_2 =v, \ldots, u_{l-1} = x, u_l =y)$ is an
undirected triangle free path from $(u,v)$ to $(x,y)$ in $G$ then
$\bar{P} = (u_l = y, u_{l-1} = x, \ldots, u_2= v, u_1=u)$, the reverse of the
path $P$, is also a \tfp{} from $(y,x)$ to $(v,u)$.
\end{observation}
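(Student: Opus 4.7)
The plan is to unpack the definitions and verify that each of the three requirements for being a triangle-free path from $(y,x)$ to $(v,u)$ is preserved under reversal, exploiting the fact that $P$ is \emph{undirected}.

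First I would check that $\bar{P}$ is a path in $G$ in the sense of \Cref{def:path}. By hypothesis, for every $1 \leq i < l$ the edge between $u_i$ and $u_{i+1}$ in $P$ is an undirected edge $u_i - u_{i+1} \in E_G$. An undirected edge is symmetric, so $u_{i+1} - u_i \in E_G$ as well. Relabeling the reversed sequence, this shows that every pair of consecutive vertices in $\bar{P}$ is joined by an (undirected) edge of $G$, so $\bar{P}$ is a path. Since the vertices of $\bar{P}$ are a permutation of the vertices of $P$, they are distinct.

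Next I would verify the triangle-free condition of \Cref{def:tfp-path} for $\bar{P}$. The three consecutive vertices of $\bar{P}$ at positions $j, j+1, j+2$ are exactly $u_{l-j+1}, u_{l-j}, u_{l-j-1}$, i.e., the same unordered triple as three consecutive vertices of $P$ (taken in reverse order). The triangle-free condition requires only that the endpoints of such a triple are non-adjacent in $\skel{G}$, and that condition is symmetric in the two endpoints. Hence non-adjacency inherited from $P$ directly gives the triangle-free property for $\bar{P}$.

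Finally, I would read off the endpoints: the first two vertices of $\bar{P}$ are $u_l = y$ and $u_{l-1} = x$, and the last two are $u_2 = v$ and $u_1 = u$, so $\bar{P}$ is a \tfp{} from $(y,x)$ to $(v,u)$ as required. There is no real obstacle here; the only subtle point is that the argument uses undirectedness crucially, since for a path containing a directed edge $u_i \rightarrow u_{i+1}$ the reverse sequence would fail to be a path at all (this is exactly the observation made in \Cref{def:tfp-path}).
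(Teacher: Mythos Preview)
Your proposal is correct and follows essentially the same approach as the paper, which simply notes (by contradiction) that if $\bar{P}$ were not triangle-free then some $u_i$ and $u_{i+2}$ would be adjacent, contradicting that $P$ is triangle-free. You have merely spelled out the path and endpoint verifications more explicitly than the paper does.
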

\begin{proof}
Suppose that $\bar{P}$ is not a \tfp{}, then there must exist nodes $u_i$ and $u_{i+2}$ such that there is an edge between $u_i$ and $u_{i+2}$. But, then, $P$ is also not a \tfp{}, which is a contradiction.
\end{proof}

The following observation comes from the \cref{item-3-theorem-nec-suf-cond-for-MEC} of \Cref{thm:nes-and-suf-cond-for-chordal-graph-to-be-an-MEC}.

\begin{observation}
\label{obs:undirected-tfp-with-ud-end-implies-source-is-ud}
\label{obs:undirected-tfp-in-M-is-a-cp}
Let $M$ be an MEC, and $P = (u_1=x, u_2=y, \ldots, u_{l-1}=u, u_l=v)$ be a \tfp{} from $(x,y)$ to $(u,v)$ in $M$. If $u-v \in M$ then  $P$ is an undirected \cp{} in $M$, more specifically $x-y \in M$.
\end{observation}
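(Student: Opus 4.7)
The plan is to prove this by induction on the length $l$ of $P$, with item~\ref{item-3-theorem-nec-suf-cond-for-MEC} of \cref{thm:nes-and-suf-cond-for-chordal-graph-to-be-an-MEC} (the ``no induced $a \to b - c$'' rule) doing essentially all the work. The base case $l=2$ is trivial, since then $P = (x,y) = (u,v)$ and the single edge of $P$ is the undirected edge $u - v$ given by hypothesis.

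For the inductive step with $l \geq 3$, I would focus on the edge between $u_{l-2}$ and $u_{l-1}$. By \cref{def:path}, this edge is either the undirected edge $u_{l-2} - u_{l-1}$ or the directed edge $u_{l-2} \to u_{l-1}$: a reverse orientation $u_{l-1} \to u_{l-2}$ is ruled out because it would mean $(u_{l-2}, u_{l-1}) \notin E_M$, contradicting the fact that $P$ is a path. Now, if the edge were $u_{l-2} \to u_{l-1}$, then together with the undirected edge $u_{l-1} - u_l$ and the non-adjacency of $u_{l-2}$ and $u_l$ (which is forced by the \tfp{} property at $i = l-2$), the induced subgraph of $M$ on $\{u_{l-2}, u_{l-1}, u_l\}$ would be exactly $u_{l-2} \to u_{l-1} - u_l$, contradicting item~\ref{item-3-theorem-nec-suf-cond-for-MEC} of \cref{thm:nes-and-suf-cond-for-chordal-graph-to-be-an-MEC}. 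Thus the edge must actually be undirected, i.e., $u_{l-2} - u_{l-1} \in M$.

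Consequently, the prefix $P' = (u_1, u_2, \ldots, u_{l-1})$ is itself a \tfp{} in $M$ (since any prefix of a \tfp{} is a \tfp{}) whose last edge $u_{l-2} - u_{l-1}$ is undirected. The inductive hypothesis applied to $P'$ yields that every edge of $P'$ is undirected; in particular, $x - y = u_1 - u_2 \in M$, which is the ``more specifically'' clause. Combining with $u_{l-1} - u_l \in M$, all edges of $P$ are undirected, so $P$ is an undirected \tfp{}. Because $M$ is an MEC, items~\ref{item-1-theorem-nec-suf-cond-for-MEC} and~\ref{item-2-theorem-nec-suf-cond-for-MEC} of \cref{thm:nes-and-suf-cond-for-chordal-graph-to-be-an-MEC} say $M$ is a chain graph with chordal undirected components, so \cref{obs:tfp-in-M-is-a-cp} promotes the undirected \tfp{} $P$ to an undirected \cp{}, completing the proof. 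The only mildly subtle point — and hence the main pitfall to guard against — is the directionality bookkeeping for the edge $u_{l-2} u_{l-1}$: once one notes that \cref{def:path} forbids a ``backward'' orientation along the path, the argument collapses to a single appeal to item~\ref{item-3-theorem-nec-suf-cond-for-MEC} plus straightforward induction, and in particular the stronger item~\ref{item-4-theorem-nec-suf-cond-for-MEC} (strong protection) is not needed.
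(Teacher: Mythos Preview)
Your proposal is correct and follows essentially the same approach as the paper: both arguments hinge on item~\ref{item-3-theorem-nec-suf-cond-for-MEC} of \cref{thm:nes-and-suf-cond-for-chordal-graph-to-be-an-MEC} to force every edge of $P$ to be undirected, and then invoke \cref{obs:tfp-in-M-is-a-cp} to upgrade the undirected \tfp{} to a \cp{}. The only cosmetic difference is that the paper picks the largest $i$ with $u_i \rightarrow u_{i+1}$ and derives the contradiction $u_i \rightarrow u_{i+1} - u_{i+2}$ directly, whereas you unroll the same contradiction into an induction on $l$; these are the same argument packaged two ways.
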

\begin{proof}
Since $P$ is a path, for each $1\leq i \leq l-1$, either $u_i-u_{i+1} \in M$ or $u_i\rightarrow u_{i+1} \in M$.
Suppose \Cref{obs:undirected-tfp-with-ud-end-implies-source-is-ud} is not true.
Then, pick the highest $i$ such that  $u_i\rightarrow u_{i+1} \in M$. $i$ must be less than $l-1$, as $u_{l-1}-u_l \in M$. But then $M$ has an induced subgraph $u_i \rightarrow u_{i+1} - u_{i+2}$, which is a contradiction from \Cref{item-3-theorem-nec-suf-cond-for-MEC} of \Cref{thm:nes-and-suf-cond-for-chordal-graph-to-be-an-MEC}. This implies all the edges of $P$ are undirected. This further implies $P$ is an undirected \tfp{} in $M$. From \Cref{obs:tfp-in-M-is-a-cp}, $P$ is a \cp{} in $M$ (from \Cref{thm:nes-and-suf-cond-for-chordal-graph-to-be-an-MEC}, $M$ is a chain graph with chordal undirected components).
\end{proof}

\Cref{obs:cp-is-tfp,obv:tfps-are-chordless-in-chordal-graphs,obs:concatenate-triangle-free-paths} imply the following:
\begin{lemma}
\label{lem:concatenation-of-chordal-paths}
    Let $G$ be an undirected chordal graph. Suppose $P_1 = (a_1 = u, a_2 = v, \ldots, a_{l-1} = x, a_l = y)$ and $P_2 = (b_1 = x, b_2 = y, \ldots, b_{m-1}, b_m = w)$ are \cp{} in $G$ from $(u,v)$ to $(x,y)$ and from $(x,y)$ to $w$, respectively. Then, their concatenation $P\defeq (a_1 = u, a_2 = v, \ldots, a_{l-1} = x, a_l = y, b_3, b_4, \ldots, b_{m-1}, b_m = w)$ is a \cp{} in $G$ from $(u,v)$ to $w$.
\end{lemma}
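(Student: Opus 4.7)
The plan is to assemble the conclusion directly from the three results the lemma explicitly cites, namely \Cref{obs:cp-is-tfp}, \Cref{obv:tfps-are-chordless-in-chordal-graphs}, and \Cref{obs:concatenate-triangle-free-paths}. The overall strategy is: lift $P_1$ and $P_2$ from chordless paths to triangle-free paths, concatenate them in the triangle-free category, and then push the result back down to a chordless path using chordality.

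First I would note that since $G$ is an undirected graph, it is vacuously a chain graph (it has no directed edges and hence no directed cycles), and by hypothesis its single undirected connected component structure is chordal. This lets us apply the hypotheses of \Cref{obs:concatenate-triangle-free-paths} to $G$. By \Cref{obs:cp-is-tfp}, both $P_1$ and $P_2$ are triangle-free paths in $G$, with $P_1$ going from $(u,v)$ to $(x,y)$ and $P_2$ from $(x,y)$ to $w$. Then \Cref{obs:concatenate-triangle-free-paths} applies directly: the concatenation
\[
P = (a_1 = u, a_2 = v, \ldots, a_{l-1} = x, a_l = y, b_3, b_4, \ldots, b_{m-1}, b_m = w)
\]
is a triangle-free path in $G$ from $(u,v)$ to $w$.

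Finally, since $G$ is undirected and chordal and $P$ is a triangle-free path in $G$, \Cref{obv:tfps-are-chordless-in-chordal-graphs} yields that $P$ is in fact a chordless path in $G$, completing the proof.

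I do not expect any real obstacle here; the lemma is essentially a repackaging. The only minor bookkeeping is checking that the endpoint data match up so that the hypotheses of \Cref{obs:concatenate-triangle-free-paths} are satisfied verbatim (the last two vertices of $P_1$ coincide with the first two vertices of $P_2$, namely $(x,y)$), which is built into the statement.
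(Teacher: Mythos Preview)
Your proposal is correct and follows exactly the same approach as the paper's proof: apply \Cref{obs:cp-is-tfp} to pass to triangle-free paths, concatenate via \Cref{obs:concatenate-triangle-free-paths}, and return to a chordless path using \Cref{obv:tfps-are-chordless-in-chordal-graphs}. Your explicit check that an undirected chordal graph satisfies the chain-graph-with-chordal-components hypothesis of \Cref{obs:concatenate-triangle-free-paths} is a detail the paper leaves implicit.
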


\begin{proof}
    From \cref{obs:cp-is-tfp}, $P_1$ and $P_2$ are \tfps{} in $G$. From \cref{obs:concatenate-triangle-free-paths}, the path $P$, concatenation of $P_1$ and $P_2$, is a \tfp{} in $G$. And, from \cref{obv:tfps-are-chordless-in-chordal-graphs}, $P$ is a \cp{} in $G$.
\end{proof}

We now introduce a \tfp{} variant of a source node. 

\begin{definition}[\textbf{Canonical Source Node}]
\label{def:canonical-source-node}
Let $G$ be a graph. A node $s\in V_G$ is said to be a \emph{canonical source node} of
$G$ if there does not exist a directed edge $u\rightarrow v \in G$ such that
there exists a \tfp{} from $(u,v)$ to $s$ in $G$.

In particular, note that the existence of an edge of the form $u \rightarrow s$
also prohibits $s$ from being a canonical source node, due to the presence of
the (degenerate) \tfp{} $(u, s)$ from $(u, s)$ to $s$.
\end{definition}

\begin{definition}[\textbf{Certificate for a non-canonical source node}]
\label{def:certificate}
    Let $G$ be a graph. For any non-canonical source node $z \in V_G$, we say $x$ is a \emph{certificate} of $z$ if there exists a $y\in V_G$ such that $x\rightarrow y \in G$ and there exists a \tfp{} from $(x,y)$ to $z$. In other words, $x$ certifies that $z$ is not a canonical source node in $G$.
\end{definition}
The following observation shows that self-certification is not possible.
\begin{observation}
    \label{obs:self-certification-not-possible}
    Let $G$ be a graph, and $u\in V_G$ such that $u$ is not a canonical source node. Then $u$ cannot be a certificate of itself.
\end{observation}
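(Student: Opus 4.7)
The plan is to derive a contradiction directly from the definitions of \tfp{} (\cref{def:tfp-path}), path (\cref{def:path}), and certificate (\cref{def:certificate}). Suppose, for contradiction, that $u$ certifies itself. Then by \cref{def:certificate} there exists $y \in V_G$ with $u \rightarrow y \in E_G$ and a \tfp{} $P = (u_1, u_2, \dots, u_l)$ from $(u,y)$ to $u$, so that $u_1 = u$, $u_2 = y$, and $u_l = u$.

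The key observation is that, by \cref{def:path}, every path consists of \emph{distinct} vertices. Since $u \rightarrow y \in E_G$ is a directed edge, we have $u \neq y$, and so $l \geq 2$. But $u_1 = u_l = u$ with $l \geq 2$ contradicts the distinctness of the vertices in $P$. This contradiction establishes the observation, and no further machinery is required.

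There is no substantive obstacle here; the statement is essentially a sanity check ensuring that the notion of ``certificate'' in \cref{def:certificate} behaves as expected, namely that the witness $x$ must be a different vertex from the node $z$ it certifies. The proof is a one-line unpacking of definitions, and can be presented in a few sentences.
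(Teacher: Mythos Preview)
Your proof is correct and follows essentially the same approach as the paper: both assume $u$ certifies itself, unpack \cref{def:certificate} to obtain a \tfp{} from $(u,y)$ to $u$, and then derive a contradiction with the distinctness requirement in \cref{def:path}. The arguments are nearly identical.
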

\begin{proof}
    Suppose that, if possible, $u$ is a certificate of itself. Then from \Cref{def:certificate}, there exists a $v\in V_G$ such that $u\rightarrow v \in G$ and there exists a \tfp{} $Q = (u_1=u, u_2 = v, \ldots, u_l =u)$ from $(u,v)$ to $u$. But,  then from the definition of a path (\Cref{def:path}), $Q$ is not a path. Hence, $Q$ is not a \tfp{}. Therefore, $u$ cannot be a certificate of itself. 
\end{proof}

We now provide an algorithm that for an input chain graph $G$ with chordal \uccs{}, outputs a pair of functions $(P_1, P_2)$ such that for two distinct edges $(u,v)$ and $(x,y)$ in $G$, $P_1((u,v),(x,y)) = 1$ if there exists a \tfp{} from $(u,v)$ to $(x,y)$ in $G$, else $P_1((u,v),(x,y)) = 0$. And, for an edge $(u,v)$ and a vertex $w$ of $G$, $P_2((u,v),w) =1$ if $v\neq w$, and there exists a \tfp{} from $(u,v)$ to $w$ in $G$, else $P_2((u,v),w) = 0$.

\begin{algorithm}[ht]
\caption{TFP($G$)}
\label{alg:tfp}
\SetAlgoLined
\SetKwInOut{KwIn}{Input}
\SetKwInOut{KwOut}{Output}
\SetKwFunction{MEC-Construction}{MEC-Construction}
\KwIn{A chain graph $G$ with chordal \uccs{}}
    \KwOut{$(P_1, P_2)$ such that \\
    $P_1:E_{G} \times E_{G} \rightarrow \{0,1\}$ and 
    $P_2: E_{G} \times V_{G} \rightarrow \{0,1\}$ such that \\
    (a) for $(u,v),(x,y) \in E_{G}$, $P_1((u,v),(x,y)) = 1$ if $(u,v)\neq (x,y)$ and\\
    there exists a \tfp{} from $(u,v)$ to $(x,y)$ in $G$ else $P_1((u,v),(x,y)) = 0$, and\\
    (b) for $((u,v), w) \in E_{G} \times V_{G}\rightarrow \{0,1\}$, $P_2((u,v),w) = 1$ if $v\neq w$ and\\ 
    there exists a \tfp{} from $(u,v)$ to $w$ in $G$ else $P_2((u,v),w) = 0$.}

    Construct a function $P_1: E_G\times E_G \rightarrow \{0,1\}$.

    Construct a function $P_2: E_G \times V_G \rightarrow \{0,1\}$.

\ForEach{$((u,v), (x,y)) \in E_{G} \times E_{G}$\label{alg:tfp:first-for-each-start}}
    {
        $P_1((u,v), (x,y)) \leftarrow 0$
    }\label{alg:tfp:first-for-each-end}

    \ForEach{$((u,v), w) \in E_{G} \times V_{G}$\label{alg:tfp:second-for-each-start}}
    {
        $P_2((u,v), w) \leftarrow 0$
    }\label{alg:tfp:second-for-each-end}

   \ForEach{triple $(u,v,w)$\label{alg:tfp:third-for-each-start}}
   {
        \If{$(u,v,w)$ is a \tfp{} in $G$}
        {
            $P_1((u,v),(v,w)) = 1$

            $P_2((u,v), w) = 1$
        }        
   }\label{alg:tfp:third-for-each-end}

\While{$\exists (u,v),(x,y),(z_1, z_2) \in E_{G}$ such that $P_1((u,v),(x,y)) =0$, $(u,v)\neq (x,y)$, and $P_1((u,v),(z_1,z_2)) = P_1((z_1,z_2),(x,y)) = 1$ \label{alg:tfp:first-while-loop-start}}
   {
        $P_1((u,v),(x,y)) = 1$
   }\label{alg:tfp:first-while-loop-end}

\While{$\exists (u,v), (z_1, z_2) \in E_{G}$ and $w\in V_{G}$ such that $P_2((u,v),w) =0$, $v\neq w$, and $P_1((u,v),(z_1,z_2)) = P_2((z_1,z_2),w) = 1$\label{alg:tfp:second-while-loop-start}}
   {
        $P_2((u,v),w) = 1$
   }\label{alg:tfp:second-while-loop-end}
   
   \KwRet $(P_1, P_2)$ \label{alg:tfp:return-statement}
\end{algorithm} \begin{lemma}
\label{lem:alg:tfp-is-valid}
Let $G$ be a chain graph with chordal \uccs{}. For input $G$, \Cref{alg:tfp} returns $(P_1, P_2)$, a tuple of functions, such that (a) $P_1:E_{G} \times E_{G} \rightarrow \{0,1\}$ and for $(u,v),(x,y) \in E_{G}$, $P_1((u,v),(x,y)) = 1$ if and only if $(u,v)\neq (x,y)$ and there exists a \tfp{} from $(u,v)$ to $(x,y)$ in $G$, and (b) $P_2:E_{G} \times V_{G} \rightarrow \{0,1\}$, and for $((u,v),w) \in E_{G} \times V_{G}$, $P_2((u,v),w) = 1$ if and only if $v\neq w$ and there exists a \tfp{} from $(u,v)$ to $w$ in $G$.
\end{lemma}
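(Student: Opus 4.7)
The plan is to establish the two required biconditionals in the lemma by proving soundness and completeness separately for each of $P_1$ and $P_2$, with \cref{obs:concatenate-triangle-free-paths} serving as the main tool.

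For soundness---if an entry of $P_1$ or $P_2$ equals $1$ after termination, then the corresponding \tfp{} exists in $G$---I would induct on the iteration at which the cell is first set to $1$. In the base case, entries are set during the third \texttt{foreach} loop only when the triple $(u,v,w)$ is explicitly verified to be a \tfp{} in $G$, yielding a \tfp{} of length three from $(u,v)$ to $(v,w)$ (for $P_1$) or from $(u,v)$ to $w$ (for $P_2$); the claim is then immediate. For the inductive step, any entry $P_1((u,v),(x,y))$ set by the first \texttt{while} loop arises from some $(z_1,z_2)$ with $P_1((u,v),(z_1,z_2)) = P_1((z_1,z_2),(x,y)) = 1$ at an earlier time. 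By the inductive hypothesis there exist \tfps{} in $G$ from $(u,v)$ to $(z_1,z_2)$ and from $(z_1,z_2)$ to $(x,y)$. Applying \cref{obs:concatenate-triangle-free-paths} to these two paths (identifying the shared overlap edge with $(z_1,z_2)$) produces a \tfp{} starting with the nodes $u,v$ and ending with the nodes $x,y$, which is exactly a \tfp{} from $(u,v)$ to $(x,y)$. The same argument, mutatis mutandis, handles $P_2$ via the second \texttt{while} loop.

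For completeness, I would induct on the length $l$ of the witnessing \tfp{}. Any \tfp{} from $(u,v)$ to $(x,y)$ with $(u,v) \neq (x,y)$, or from $(u,v)$ to $w$ with $v \neq w$, must have $l \geq 3$; when $l = 3$ the triple is processed in the base \texttt{foreach} loop and the relevant cell is set to $1$. For $l \geq 4$, given a \tfp{} $(u_1, \ldots, u_l)$, I split it at index $2$ into a length-three prefix $(u_1, u_2, u_3)$, which is itself a \tfp{}, and a strictly shorter suffix $(u_2, u_3, \ldots, u_l)$, which is also a \tfp{} since any contiguous subpath of a \tfp{} is a \tfp{}. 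The prefix triggers the base loop, setting $P_1((u_1, u_2), (u_2, u_3)) = 1$; the suffix, by the inductive hypothesis, forces the corresponding $P_1$ or $P_2$ entry to be $1$. The relevant \texttt{while}-loop condition is then satisfied with $(z_1, z_2) = (u_2, u_3)$, so that loop must eventually set the target entry to $1$.

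Termination of both \texttt{while} loops is automatic, since each iteration strictly increases the number of $1$-valued entries in a finite table and no cell is ever reset to $0$. The main subtlety to navigate carefully is the endpoint bookkeeping when invoking \cref{obs:concatenate-triangle-free-paths}: the observation as stated concatenates a \tfp{} ending at $(x,y)$ with a \tfp{} starting at $(x,y)$ and ending at a single vertex $w$, so to conclude that the concatenated path ends at a specified ordered edge rather than at a single vertex, one must observe that the last two nodes of the second input are, by definition of ``\tfp{} from $(z_1,z_2)$ to $(x,y)$,'' precisely $x$ and $y$, and hence so are the last two nodes of the concatenation. Beyond this accounting, the argument is a clean double induction---over time steps for soundness and over path length for completeness---and no further structural properties of chain graphs or chordality are needed in this particular proof.
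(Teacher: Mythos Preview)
Your proposal is correct and follows essentially the same approach as the paper: both prove completeness by induction on the length of the witnessing \tfp{}, splitting into a three-vertex prefix handled by the base \texttt{foreach} and a shorter suffix handled by the inductive hypothesis plus the \texttt{while} loop, and both prove soundness by induction on the time at which an entry is first set to $1$ (the paper phrases this as ``pick the first bad entry'' and derives a contradiction), invoking \cref{obs:concatenate-triangle-free-paths} to glue the two constituent \tfps{}. Your explicit remark about endpoint bookkeeping when applying \cref{obs:concatenate-triangle-free-paths} and your note on termination are minor elaborations the paper leaves implicit, but the overall argument is the same.
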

\begin{proof}
We will prove the correctness of $P_1$ and $P_2$ separately.

\textbf{Correctness of $P_1$:}
We first prove that for all $(u,v),(x,y) \in E_{G}$ such that $(u,v) \neq (x,y)$ and there exists a \tfp{} $Q = (u_1 = u, u_2 = v, \ldots, u_{l-1} = x, u_l = y)$ from $(u,v)$ to $(x,y)$ in $G$, we have $P_1((u,v),(x,y)) = 1$. We will use induction on the length $l$ of the \tfp{}. Since $(u,v)\neq (x,y)$, $l$ must be greater than one.

If $l = 2$, then $v=x$, and $Q = (u,v=x, y)$ is a \tfp{} from $(u,v)$ to $(x,y)$ in $G$. Therefore, while running lines \ref{alg:tfp:third-for-each-start}-\ref{alg:tfp:third-for-each-end} of \Cref{alg:tfp}, we have $P_1((u,v),(x,y)) = 1$. This implies that for all pairs of distinct edges $(u,v),(x,y) \in E_{G}$, if there exists a \tfp{} from $(u,v)$ to $(x,y)$ of length two then $P_1((u,v),(x,y)) = 1$.

Let us assume that for some $k \geq 2$, for all pairs of distinct edges $(u,v),(x,y)\in E_{G}$, if there exists a \tfp{} from $(u,v)$ to $(x,y)$ in $G$ of length at most $k$ then $P_1((u,v),(x,y)) = 1$. We now show that for $(u,v),(x,y) \in E_{G}$ such that $(u,v)\neq (x,y)$, if there exists a \tfp{} $Q =(u_1 =u, u_2 =v, \ldots, u_{l-1} =x, u_l =y)$ from $(u,v)$ to $(x,y)$ in $G$ of length $l=k+1$, then $P_1((u,v),(x,y)) = 1$.

Take two subpaths of $Q$: $Q_1 = (u_1 =u, u_2 =v, u_3)$ and $Q_2 = (u_2 =v, \ldots, u_{l-1} =x, u_l =y)$. Both paths must be \tfps{} of length at most $k$. Therefore, from our assumption, $P_1((u_1=u, u_2=v),(u_2,u_3)) = P_1((u_2,u_3),(u_{l-1}=x,u_l=y)) = 1$. But then, while running lines \ref{alg:tfp:first-while-loop-start}-\ref{alg:tfp:first-while-loop-end}, we get $P_1((u,v),(x,y)) = 1$. This completes the induction. Thus, we have proven that for all $(u,v),(x,y) \in E_{G}$ such that $(u,v)\neq (x,y)$ and there exists a \tfp{} from $(u,v)$ to $(x,y)$ in $G$, we have $P_1((u,v),(x,y)) = 1$.

We now prove that for all $(u,v),(x,y) \in E_{G}$, if $P_1((u,v),(x,y)) = 1$ then $(u,v)\neq (x,y)$ and there exists a \tfp{} from $(u,v)$ to $(x,y)$ in $G$. Suppose that this is not true. Then there must exist a pair of edges $(u,v),(x,y) \in E_{G}$ such that $P_1((u,v),(x,y)) =1$ and either $(u,v)\neq (x,y)$ or there does not exist a \tfp{} from $(u,v)$ to $(x,y)$ in $G$.

For all pairs of edges $(u,v),(x,y) \in E_{G}$, \Cref{alg:tfp} initializes $P_1((u,v),(x,y)) =0$ (lines \ref{alg:tfp:first-for-each-start}-\ref{alg:tfp:first-for-each-end}).
Pick the first such pair $(u,v),(x,y) \in E_{G}$ for which $P_1((u,v),(x,y)) =1$ and either $(u,v)\neq (x,y)$ or there does not exist a \tfp{} from $(u,v)$ to $(x,y)$, i.e., for all the pairs $(u',v'),(x',y') \in E_{G}$, for which the value of $P_1((u',v'),(x',y'))$ has been updated to 1 before the update of $P_1((u,v),(x,y))$, $(u',v')\neq (x',y')$ and there exists a \tfp{} from $(u',v')$ to $(x',y')$.
In the algorithm, for any pair of edges $(u,v)$ and $(x,y)$, \Cref{alg:tfp} updates $P_1$  either at lines \ref{alg:tfp:third-for-each-start}-\ref{alg:tfp:third-for-each-end}, or at lines \ref{alg:tfp:first-while-loop-start}-\ref{alg:tfp:first-while-loop-end}.
If the value of $P_1((u,v),(x,y))$ has been updated while running lines \ref{alg:tfp:third-for-each-start}-\ref{alg:tfp:third-for-each-end} of \Cref{alg:tfp} then we have $v=x$ and $(u,v=x, y)$ is a \tfp{}. This further implies $(u,v)\neq (x,y)$ and there exists a \tfp{} from $(u,v)$ to $(x,y)$. 

And, if $P_1((u,v),(x,y))$ has been updated while running lines \ref{alg:tfp:first-while-loop-start}-\ref{alg:tfp:first-while-loop-end} of \Cref{alg:tfp} then we have $(u,v)\neq (x,y)$ and there exists a $(z_1,z_2) \in E_{G}$ such that $P_1((u,v),(z_1, z_2)) = P_1((z_1, z_2),(x,y)) =1$. Since $P_1((u,v),(z_1, z_2))$ and $P_1((z_1, z_2),(x,y))$ have been updated before the update of $P_1((u,v),(x,y))$, therefore, from our assumption, (a) $(u,v)\neq (z_1,z_2)$ and there exists a \tfp{} $Q_1=(u_1 = u, u_2 =v, u_{l'-1} =z_1, u_{l'} =z_2)$ from $(u,v)$ to $(z_1,z_2)$, and (b) $(z_1, z_2)\neq (x,y)$ and $Q_2 = (v_1 = z_1, v_2 =z_2, \ldots, v_{m-1}=x, v_m =y)$ is a \tfp{} from $(z_1, z_2)$ to $(x,y)$, respectively. But, then from \Cref{obs:concatenate-triangle-free-paths}, concatenation of $Q_1$ and $Q_2$ gives us a \tfp{} from $(u,v)$ to $(x,y)$. Thus, $P_1((u,v),(x,y))=1$ correctly depicts that $(u,v)\neq (x,y)$ and there exists a \tfp{} from $(u,v)$ to $(x,y)$. This completes the proof of correctness of $P_1$. 

\textbf{Correctness of $P_2$:}
We first prove that for all $((u,v),w) \in E_{G}\times V_{G}$ such that $v\neq w$ and there exists a \tfp{} $Q =(u_1 =u, u_2 =v, \ldots,  u_l =w)$ from $(u,v)$ to $w$ in $G$ we have $P_2((u,v),w) = 1$. We prove this using induction on the length $l$ of the \tfp{}.  Since $v\neq w$, $l$ must be greater than one. 

If $l = 2$, then $Q = (u,v,w)$ is a \tfp{} from $(u,v)$ to $w$ in $G$. Therefore, while running lines \ref{alg:tfp:third-for-each-start}-\ref{alg:tfp:third-for-each-end} of \Cref{alg:tfp}, we get $P_2((u,v),w) = 1$. This implies that for all $((u,v),w) \in E_{G}\times V_{G}$, if there exists a \tfp{} from $(u,v)$ to $w$ in $G$ of length two then $P_2((u,v),w) = 1$.

Let us assume that for some $k \geq 2$, for all $((u,v),w) \in E_{G}\times V_{G}$, if $v\neq w$ and there exists a \tfp{} from $(u,v)$ to $w$ in $G$ of length at most $k$ then $P_2((u,v),w) = 1$. We now show that for all $((u,v),w) \in E_{G}\times V_{G}$ such that $v\neq w$, if there exists a \tfp{} $Q =(u_1 =u, u_2 =v, \ldots, u_l =w)$ from $(u,v)$ to $w$ in $G$ of length $l=k+1$, then $P_2((u,v),w) = 1$.
 
 Take two subpaths of $Q$: $Q_1 = (u_1 =u, u_2 =v, u_3)$ and $Q_2 = (u_2 =v, \ldots, u_{l-1}, u_l =w)$. Since $Q$ is a \tfp{}, therefore, the subpath $Q_1$ is a \tfp{} from $(u_1=u, u_2=v)$ to $(u_2,u_3)$.  Similarly, the subpath $Q_2$ is a \tfp{} from $(u_2, u_3)$ to $u_l=w$ of length $k$. Since the length of $Q_1$ is two, therefore $(u_1,u_2)\neq (u_2,u_3)$. Therefore, from the correctness of $P_1$, $P_1((u_1=u, u_2=v),(u_2,u_3)) =1$. Since the length of $Q_2$ is $k$, therefore, $u_3\neq u_l$ (otherwise $Q_2$ is of length $1 < k$, a contradiction).  Therefore, from our assumption, $P_2((u_2,u_3), u_l = w) =1$. But, then while running lines \ref{alg:tfp:second-while-loop-start}-\ref{alg:tfp:second-while-loop-end}, we get $P_2((u, v),w) =1$ (since $P_1((u_1=u, u_2=v), (u_2, u_3)) = P_2((u_2,u_3),u_l =w) =1$). This completes the induction. Thus, we prove that for all $((u,v),w) \in E_{G}\times V_{G}$ such that $v\neq w$ and there exists a \tfp{} from $(u,v)$ to $w$ in $G$, we have $P_2((u,v),w) = 1$.

We now prove that for all $((u,v),w) \in E_{G}\times V_{G}$, if $P_2((u,v),w) = 1$ then $v\neq w$ and there exists a \tfp{} from $(u,v)$ to $w$ in $G$. Suppose this is not true. Then there must exist $((u,v),w) \in E_{G}\times V_{G}$ such that $P_2((u,v),w) =1$ and either $v= w$ or there does not exist a \tfp{} from $(u,v)$ to $w$ in $G$.

For all the pair $((u,v),w) \in E_{G}\times V_{G}$, \Cref{alg:tfp} initializes  $P_2((u,v),w) =0$ (lines \ref{alg:tfp:first-for-each-start}-\ref{alg:tfp:first-for-each-end}). If possible, pick the first such $((u,v),w) \in E_{G}\times V_{G}$ for which $P_2((u,v),w) =1$ and either $v = w$ or there does not exist a \tfp{} from $(u,v)$ to $w$ in $G$, i.e., for all $((u',v'),w') \in E_{G}\times V_{G}$ for which the value of $P_2((u',v'),w')$ has been updated to 1 before the update of $P_2((u,v),w)$, we have $v'\neq w'$ and there exists a \tfp{} from $(u',v')$ to $w'$.
 In the algorithm, for any $((u,v),w) \in E_{G}\times V_{G}$, \Cref{alg:tfp} updates $P_2$ either at lines \ref{alg:tfp:third-for-each-start}-\ref{alg:tfp:third-for-each-end}, or at lines \ref{alg:tfp:second-while-loop-start}-\ref{alg:tfp:second-while-loop-end}. Suppose the value of $P_2((u,v),w)$ has been updated while running lines \ref{alg:tfp:third-for-each-start}-\ref{alg:tfp:third-for-each-end} of \Cref{alg:tfp} then we must have $(u,v,w)$ as a \tfp{} from $(u,v)$ to $w$. Also, $v\neq w$ (otherwise, $(u,v,w)$ is even not a path). 

 Now we move to the remaining possibility. Suppose $P_2((u,v),w)$ has been updated while running lines \ref{alg:tfp:second-while-loop-start}-\ref{alg:tfp:second-while-loop-end} of \Cref{alg:tfp} then we must have $v\neq w$ and there exists a $(z_1,z_2) \in E_{G}$ such that $P_1((u,v),(z_1, z_2)) = P_2((z_1, z_2),w) =1$. We have shown the correctness of $P_1$ earlier. Therefore, $P_1((u,v),(z_1, z_2)) = 1$ implies that $(u,v)\neq (z_1,z_2)$ and there exist \tfps{} $Q_1=(u_1 = u, u_2 =v, u_{l'-1} =z_1, u_{l'} =z_2)$ from $(u,v)$ to $(x,y)$. Also, $P_2((z_1, z_2),w)$ has been already updated. Therefore, from our assumption, there exists a \tfp{} $Q_2 = (v_1 = z_1, v_2 =z_2, \ldots, v_{m-1}, v_m =w)$ from $(z_1, z_2)$ to $w$. But, then from \Cref{obs:concatenate-triangle-free-paths}, concatenation of $Q_1$ and $Q_2$ gives us a \tfp{} from $(u,v)$ to $w$. Thus, $P_2((u,v),w)=1$ corrects depicts that $v\neq w$ and there exists a \tfp{} from $(u,v)$ to $w$ in $G$. This completes the proof of \Cref{lem:alg:tfp-is-valid}.
\end{proof}
 
\subsubsection{\textbf{Shadow of an MEC}}
\label{subsection:shadow}
\begin{definition}[\textbf{Shadow of an MEC}]
    \label{def:shadow}
    Let $M$ be an MEC of $G$, $Y\subseteq V_G$,  $O\in \setofpartialMECs{G[Y]}$ (i.e., $O$ be a partial MEC with skeleton $G[Y]$), and $P_1: E_{O}\times E_{O} \rightarrow \{0,1\}$ and $P_2: E_{O}\times V_O \rightarrow \{0,1\}$ be two functions. We define $(O, P_1, P_2)$ as a \shadow{} of the MEC $M$ on $Y$, denoted by $(O, P_1, P_2) = \shadowofMEC{M, Y}$, if the following occur:
    \begin{enumerate}
        \item
        \label{item-1-of-def:shadow}
        $M[Y] = O$, i.e., $O$ is an induced subgraph of $M$ on $Y$.
        \item
        \label{item-2-of-def:shadow}
        For $(u,v), (x,y) \in E_{O}$, $P_1((u,v),(x,y)) = 1$ if $(u, v) \neq (x, y)$ and there exists a \tfp{} from $(u,v)$ to $(x,y)$ in $M$, else $P_1((u,v),(x,y)) =0$.
        \item 
        \label{item-3-of-def:shadow}
        For $(u,v) \in E_{O}$ and $w\in V_O$, $P_2((u,v),w) =1$, if $v \neq w$
        and there exists a \tfp{} from $(u,v)$ to $w$ in $M$, else
        $P_2((u,v), w) =0$.
    \end{enumerate}
    With a slight abuse of notation, we say $(O, P_1, P_2)$ as a shadow of an undirected graph $G$, denoted by $(O, P_1, P_2) \in \shadowofudgraph{G}$,  if $O\in \setofpartialMECs{G}$, and $P_1: E_{O}\times E_{O} \rightarrow \{0,1\}$ and $P_2: E_{O}\times V_O \rightarrow \{0,1\}$ are two functions.
\end{definition}
\begin{remark}[\textbf{$P_1$ does not determine $P_2$}]\label{rem:p2-not-determined-by-p1}
  Note that $P_2$ is not necessarily determined by $P_1$.  This is because there
  can exist $(u,v) \in E_{O}$ and $w \in Y$ such that there exists a \tfp{}
  from $(u,v)$ to $w$ in $M$, i.e., $P_2((u,v),w) =1$, but for which there does
  not exist any edge $(x,w) \in E_{O}$ such that $P_1((u,v),(x,w)) =1$.
This can happen when every \tfp{}
  $P = (u_1 =u, u_2 =v, \ldots, u_{l-1}, u_l =w)$ in $M$ from $(u,v)$ to $w$
  has $u_{l-1} \notin Y$.
\end{remark}

\begin{definition}[$\setofMECs{G, O, P_1, P_2}$]
\label{def:subsets-of-MEC-based-on-O-P1-and-P2}
Let $G$ be an undirected graph, $Y\subseteq V_G$, and $(O, P_1, P_2) \in \shadowofudgraph{G[Y]}$.
We define $\setofMECs{G, O, P_1, P_2}$ to be the set of MECs $M$ of $G$ for which $(O, P_1, P_2)$ is a \shadow{} of $M$ on $Y$. More formally, $\setofMECs{G, O, P_1, P_2} \coloneqq \{M: M \in \setofMECs{G}$ \text{ and } $(O, P_1, P_2) = \shadowofMEC{M, V_O}\}$.
\end{definition}

The following observation is intuitive from \cref{def:shadow}.
\begin{observation}
    \label{obs:shadow-of-M-is-shadow-of-G}
    Let $G$ be an undirected graph, and let $M$ be an MEC of $G$. For $Y \subseteq V_G$, if $(O, P_1, P_2)$ is a shadow of the MEC $M$ on $Y$, then $(O, P_1, P_2)$ is a shadow of the undirected graph $G[Y]$.
\end{observation}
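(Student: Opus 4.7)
The plan is to unpack the two definitions of ``shadow'' and show that everything needed for the weaker (undirected-graph) version is already implied by the stronger (MEC) version. Recalling \Cref{def:shadow}, a triple $(O, P_1, P_2) \in \shadowofudgraph{G[Y]}$ requires only that (i) $O \in \setofpartialMECs{G[Y]}$ and (ii) $P_1, P_2$ have the specified domains and codomains. Requirement (ii) is immediate from the hypothesis that $(O, P_1, P_2) = \shadowofMEC{M, Y}$, because the domains of $P_1, P_2$ are specified in terms of $E_O$ and $V_O$ in both definitions. So the whole task reduces to establishing (i), i.e., that $O$ is a partial MEC of $G[Y]$.

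First I would check that $\skeleton{O} = G[Y]$. Since $O = M[Y]$ by \cref{item-1-of-def:shadow} of \Cref{def:shadow}, and since restricting to a vertex subset commutes with taking the skeleton, we have $\skeleton{O} = \skeleton{M[Y]} = \skeleton{M}[Y] = G[Y]$. Next I would verify the three defining conditions of a partial MEC (\Cref{def:partial-MEC}):
\begin{enumerate}
    \item \textbf{$O$ is a chain graph.} Any directed cycle in $O = M[Y]$ is a directed cycle in $M$, but $M$ is a chain graph by \cref{item-1-theorem-nec-suf-cond-for-MEC} of \Cref{thm:nes-and-suf-cond-for-chordal-graph-to-be-an-MEC}.
    \item \textbf{No induced $u \rightarrow v \undir w$.} Such a configuration in $O$ would, because $O = M[Y]$ preserves both edge orientations and non-edges among vertices of $Y$, also appear as an induced subgraph of $M$, contradicting \cref{item-3-theorem-nec-suf-cond-for-MEC} of \Cref{thm:nes-and-suf-cond-for-chordal-graph-to-be-an-MEC}.
    \item \textbf{UCCs of $O$ are chordal.} Let $\mathcal{C}$ be a UCC of $O$. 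Every edge among vertices of $\mathcal{C}$ in $O$ is undirected, and hence (again by $O = M[Y]$) undirected in $M$, so all vertices of $\mathcal{C}$ lie within a single UCC $\mathcal{C}'$ of $M$. By \Cref{prop:edge-between-2-nodes-of-same-ucc-is-ud}, every edge of $M$ between two vertices of $\mathcal{C} \subseteq \mathcal{C}'$ is undirected, so $O[\mathcal{C}] = M[\mathcal{C}]$ is an induced subgraph of the undirected chordal graph $M[\mathcal{C}']$ (chordality of $M[\mathcal{C}']$ comes from \cref{item-2-theorem-nec-suf-cond-for-MEC} of \Cref{thm:nes-and-suf-cond-for-chordal-graph-to-be-an-MEC}). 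Since induced subgraphs of chordal graphs are chordal, $O[\mathcal{C}]$ is chordal.
\end{enumerate}

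The only step requiring any thought is the UCC-chordality item, since a UCC of $M[Y]$ need not equal the restriction of a UCC of $M$ to $Y$ --- but as sketched above it is always contained in one, which is all that is needed to inherit chordality. The other two items are essentially direct transfers from $M$ to its induced subgraph $O = M[Y]$. Putting these together gives $O \in \setofpartialMECs{G[Y]}$, and combined with the trivial observation about $P_1, P_2$ completes the proof.
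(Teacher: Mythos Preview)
Your proposal is correct and follows essentially the same approach as the paper: reduce to showing that $O = M[Y]$ is a partial MEC, then transfer the three defining properties (chain graph, chordal UCCs, no $u\rightarrow v-w$) from $M$ to its induced subgraph. The paper states the chordality step tersely (``an induced subgraph of a chain graph with chordal \ucc{} is also a chain graph with chordal \ucc{}''), whereas you spell out the containment of a UCC of $O$ inside a UCC of $M$; this extra detail is a nice touch but not a different method.
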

\begin{proof}
    Suppose $(O, P_1, P_2)$ is the shadow of $M$ on $Y$. According to \cref{def:shadow}, $O = M[Y]$, and $P_1: E_O \times E_O \rightarrow \{0,1\}$ and $P_2:E_O \times V_O \rightarrow \{0,1\}$ are two functions. Thus, to establish that $(O, P_1, P_2)$ is a shadow of $G[Y]$, we only need to demonstrate that $O$ is a partial MEC.
    
    Since $M$ is an MEC, as stated in \cref{thm:nes-and-suf-cond-for-chordal-graph-to-be-an-MEC}, $M$ is a chain graph with chordal \ucc{} and does not possess any induced subgraph of the form $u \rightarrow v - w$. It's worth noting that an induced subgraph of a chain graph with chordal \ucc{} is also a chain graph with chordal \ucc{}. Furthermore, if $M$ has no induced subgraph of the form $u \rightarrow v - w$, then its induced subgraphs must also lack such a structure.
    
    This implies that $O$ also adheres to the conditions specified in \cref{thm:nes-and-suf-cond-for-chordal-graph-to-be-an-MEC}. Thus, by \cref{def:partial-MEC}, $O$ qualifies as a partial MEC. This concludes the proof of \cref{obs:shadow-of-M-is-shadow-of-G}.
\end{proof}

\Cref{def:subsets-of-MEC-based-on-O-P1-and-P2} gives us a way to partition the set of MECs of any undirected graph $G$. \Cref{lem:partition-of-MECs-of-H} partitions the set of MECs of $G$ based on its \shadow{}.
\begin{lemma}
\label{lem:partition-of-MECs-of-H}
Let $G$ be an undirected graph. Then, for any $Y \subseteq V_G$,
\begin{equation}
    |\setofMECs{G}| = \sum_{(O, P_1, P_2) \in \shadowofudgraph{G[Y]}}{|\setofMECs{G, O, P_1, P_2}|}
\end{equation}
\end{lemma}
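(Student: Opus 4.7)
The plan is to show that the collection of sets $\{\setofMECs{G, O, P_1, P_2} : (O, P_1, P_2) \in \shadowofudgraph{G[Y]}\}$ forms a partition of $\setofMECs{G}$ (allowing some blocks to be empty), from which the claimed equality follows by additivity of cardinality over disjoint unions.

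First, I would verify that every MEC $M \in \setofMECs{G}$ lies in at least one of these sets. Given such an $M$, define $O \defeq M[Y]$, and define $P_1 : E_O \times E_O \to \{0,1\}$ and $P_2 : E_O \times V_O \to \{0,1\}$ exactly as prescribed in items~\ref{item-2-of-def:shadow} and \ref{item-3-of-def:shadow} of \Cref{def:shadow} (i.e., according to whether suitable \tfps{} from $(u,v)$ to $(x,y)$ or to $w$ exist in $M$). By construction, $(O, P_1, P_2) = \shadowofMEC{M, Y}$, so $M \in \setofMECs{G, O, P_1, P_2}$. Moreover, by \Cref{obs:shadow-of-M-is-shadow-of-G}, this triple $(O, P_1, P_2)$ belongs to $\shadowofudgraph{G[Y]}$, so $M$ is counted by at least one summand on the right-hand side.

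Next, I would argue that the sets on the right-hand side are pairwise disjoint, i.e., that the shadow of $M$ on $Y$ is unique. This is immediate from \Cref{def:shadow}: if $(O, P_1, P_2)$ and $(O', P_1', P_2')$ are both shadows of the same MEC $M$ on $Y$, then $O = M[Y] = O'$, so $E_O = E_{O'}$ and $V_O = V_{O'}$; and then items~\ref{item-2-of-def:shadow} and~\ref{item-3-of-def:shadow} uniquely pin down the values of $P_1$ and $P_2$ on their respective domains in terms of the existence of \tfps{} in $M$, giving $P_1 = P_1'$ and $P_2 = P_2'$. Hence each $M \in \setofMECs{G}$ belongs to exactly one block of the sum.

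Conversely, every $M$ counted on the right-hand side is in $\setofMECs{G}$ by definition of $\setofMECs{G, O, P_1, P_2}$ (\Cref{def:subsets-of-MEC-based-on-O-P1-and-P2}). Combining the three observations, we get the disjoint-union decomposition
\begin{equation*}
\setofMECs{G} = \bigsqcup_{(O, P_1, P_2) \in \shadowofudgraph{G[Y]}} \setofMECs{G, O, P_1, P_2},
\end{equation*}
and taking cardinalities yields the claim. There is no real obstacle here; the statement is essentially bookkeeping once \Cref{obs:shadow-of-M-is-shadow-of-G} ensures that the shadow of any MEC lands inside the indexing set $\shadowofudgraph{G[Y]}$.
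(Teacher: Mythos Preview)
Your proposal is correct and follows essentially the same approach as the paper: both arguments rest on the observation that every $M \in \setofMECs{G}$ has a unique shadow on $Y$ (from \Cref{def:shadow}/\Cref{def:subsets-of-MEC-based-on-O-P1-and-P2}) and that this shadow lies in $\shadowofudgraph{G[Y]}$ (\Cref{obs:shadow-of-M-is-shadow-of-G}). The paper's proof is simply a terser statement of the same partition argument you spell out in detail.
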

\begin{proof}
From \cref{def:subsets-of-MEC-based-on-O-P1-and-P2}, for each $M \in \setofMECs{G}$, there exists a unique \shadow{} $(O, P_1, P_2)$ of $M$ on $Y$. From \cref{obs:shadow-of-M-is-shadow-of-G}, $(O,P_1, P_2) \in \shadowofudgraph{G[Y]}$. This proves \cref{lem:partition-of-MECs-of-H}.
\end{proof}

\subsubsection{\textbf{Projection of an MEC}}

\begin{definition}[\textbf{Projection of an MEC}]
\label{def:projection}
Let $G$ be an undirected graph, $S\subseteq V_G$, $G'=G[S]$ be an induced subgraph of $G$, $M$ be an MEC of $G$, and $M'$ be an MEC of $G'$. We say $M'$ is a \emph{projection} of $M$ on $S$, denoted as $\mathcal{P}(M,S)=M'$, if $\mathcal{V}(M')$ = $\mathcal{V}(M[S])$. 

With slight overloading of the function $\mathcal{P}$, for two subsets $S_1,S_2\subseteq V_G$, we define $(M_1, M_2)$ as  a projection of $M$ on $(S_1, S_2)$, denoted  as: $\mathcal{P}(M,S_1,S_2) = (M_1, M_2)$, if $M_1 = \mathcal{P}(M, S_1)$, and $M_2 = \mathcal{P}(M, S_2)$.
\end{definition}

\begin{lemma}
\label{lem:projection-of-an-MEC-is-unique}
Let $G$ be an undirected graph, and $M$ be an MEC of $G$. For any $S\subseteq V_G$, there exists a unique MEC $M'$ of the graph $G[S]$ such that $\mathcal{P}(M,S)=M'$.
\end{lemma}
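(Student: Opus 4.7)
The plan is to exploit the classical fact (due to Verma and Pearl, and reflected in \Cref{thm:nes-and-suf-cond-for-chordal-graph-to-be-an-MEC}) that an MEC is uniquely determined by its skeleton together with its set of v-structures. This immediately handles uniqueness: if $M_1'$ and $M_2'$ are both MECs of $G[S]$ with $\mathcal{V}(M_1') = \mathcal{V}(M[S]) = \mathcal{V}(M_2')$, then they share a skeleton ($G[S]$) and share their v-structure sets, hence $M_1' = M_2'$.

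The substantive work is establishing existence, i.e., exhibiting \emph{some} MEC $M'$ of $G[S]$ whose v-structure set equals $\mathcal{V}(M[S])$. The natural construction is via a DAG representative: pick any DAG $D \in M$, form the vertex-induced subgraph $D' = D[S]$, and let $M'$ be the unique MEC that contains $D'$. Since an induced subgraph of a DAG is again a DAG with skeleton equal to the induced subgraph of the skeleton, $D'$ is a DAG on $G[S]$, so $M'$ is indeed an MEC of $G[S]$.

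It then remains to verify $\mathcal{V}(M') = \mathcal{V}(M[S])$. Since all DAGs in an MEC share the same v-structures, $\mathcal{V}(M') = \mathcal{V}(D')$ and $\mathcal{V}(M) = \mathcal{V}(D)$. I would then show the double containment directly from the definitions: a v-structure $a \rightarrow b \leftarrow c$ of $D'$ has all three vertices in $S$ with $a,c$ non-adjacent in $D'$; because $G[S]$ is an \emph{induced} subgraph of $G$, non-adjacency in $D'$ implies non-adjacency in $D$, and the edges $a \rightarrow b$, $c \rightarrow b$ survive in $D$, so the triple is a v-structure of $D$ (equivalently of $M$) lying entirely in $S$, hence a v-structure of $M[S]$. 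Conversely, a v-structure of $M[S]$ is a v-structure of $M$ with all three vertices in $S$, hence a v-structure of $D$ whose three vertices and two arcs all lie in $D[S] = D'$, with $a,c$ still non-adjacent in $D'$; so it is a v-structure of $D'$.

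The only place that needs any care is the implicit use of the fact that taking an \emph{induced} subgraph preserves both edges and non-adjacencies between vertices of $S$; this is where the hypothesis $G' = G[S]$ (as opposed to an arbitrary subgraph of $G$) is essential. Modulo that observation, the argument is essentially a bookkeeping exercise and I do not anticipate any serious obstacle.
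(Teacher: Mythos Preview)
Your proposal is correct and follows essentially the same approach as the paper: pick a DAG $D \in M$, restrict to $S$, and take the MEC $M'$ containing $D[S]$; uniqueness then follows since two MECs of $G[S]$ sharing the same v-structure set coincide. The only difference is cosmetic: the paper dispatches the equality $\mathcal{V}(D[S]) = \mathcal{V}(M[S])$ in one line (``since both $D$ and $M$ have the same set of v-structures''), whereas you spell out the double containment explicitly.
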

\begin{proof}
As a reminder from the introduction, it is well-known that every DAG corresponds to a unique MEC, and the set of v-structures in a DAG is identical to the set of v-structures in its corresponding MEC.
We first show the existence of such an MEC $M'$.
Let $D$ be a DAG such that $D\in M$. The set of v-structures contained by $D[S]$ equals $\mathcal{V}(M[S])$ (since both $D$ and $M$ have the same set of v-structures). 
Let $M'$ be an MEC of $G[S]$ such that $D[S]\in M'$. The set of v-structures of $M'$ must be the same as the set of v-structures of $D[S]$, i.e. $\mathcal{V}(M')$ = $\mathcal{V}(M[S])$. The uniqueness of $M'$ comes from the fact that if two MECs of a graph have the same set of v-structure then both are the same.
\end{proof}

\Cref{lem:projection-of-an-MEC-is-unique} further implies that the following corollaries:
\begin{corollary}
\label{corr:projection-of-an-MEC-is-unique}
Let $G$ be an undirected graph, and $M$ be an MEC of $G$. For $S_1,S_2 \subseteq V_G$, there exists a unique tuple $(M_1,M_2) \in $ MEC$(G[S_1])\times$ MEC$(G[S_2])$ such that $\mathcal{P}(M,S_1,S_2)=(M_1,M_2)$. 
\end{corollary}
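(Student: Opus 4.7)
The plan is to derive this corollary directly from \Cref{lem:projection-of-an-MEC-is-unique} by applying that lemma separately to each of the two coordinates $S_1$ and $S_2$. Unpacking \Cref{def:projection}, the statement $\mathcal{P}(M, S_1, S_2) = (M_1, M_2)$ is by definition the conjunction of $M_1 = \mathcal{P}(M, S_1)$ and $M_2 = \mathcal{P}(M, S_2)$. So existence and uniqueness of the tuple reduce to existence and uniqueness of each component separately.

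For existence, I would invoke \Cref{lem:projection-of-an-MEC-is-unique} with $S = S_1$ to produce an MEC $M_1$ of $G[S_1]$ with $\mathcal{V}(M_1) = \mathcal{V}(M[S_1])$, and then again with $S = S_2$ to produce an analogous $M_2$. The tuple $(M_1, M_2) \in \setofMECs{G[S_1]} \times \setofMECs{G[S_2]}$ then satisfies $\mathcal{P}(M, S_1, S_2) = (M_1, M_2)$ by definition.

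For uniqueness, suppose $(M_1', M_2')$ is another such tuple. Then $M_1'$ is an MEC of $G[S_1]$ with $\mathcal{V}(M_1') = \mathcal{V}(M[S_1])$, so by the uniqueness clause of \Cref{lem:projection-of-an-MEC-is-unique} applied to $S_1$, we get $M_1' = M_1$; an identical argument with $S_2$ gives $M_2' = M_2$. Hence $(M_1', M_2') = (M_1, M_2)$, completing the proof. There is no genuine obstacle here; the corollary is essentially a formal repackaging of the preceding lemma, and the only thing to be careful about is to invoke the lemma cleanly on each coordinate rather than trying to reprove it from scratch.
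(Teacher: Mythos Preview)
Your proposal is correct and matches the paper's approach exactly: the paper also unpacks \Cref{def:projection} to reduce to the two coordinates and then invokes \Cref{lem:projection-of-an-MEC-is-unique} separately for $S_1$ and $S_2$. Your write-up is a bit more explicit about separating existence from uniqueness, but the argument is the same.
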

\begin{proof}
From \cref{def:projection}, $M_1 = \mathcal{P}(M, S_1)$ and $M_2 = \mathcal{P}(M, S_2)$. From \cref{lem:projection-of-an-MEC-is-unique}, $\mathcal{P}(M, S_1)$ and $\mathcal{P}(M, S_2)$ both are unique. This proves the uniqueness of $(M_1, M_2)$.
\end{proof}

\begin{corollary}
\label{corr:projection-of-subgraph}
Let $G$ be an undirected graph, $S\subseteq V_G$, $M\in$ MEC$(G)$, and $M'\in $ MEC$(G[S])$. If $\mathcal{P}(M,S)=M'$ then for any $S'\subseteq S$,  $\mathcal{V}(M[S']) = \mathcal{V}(M'[S'])$, and $\mathcal{P}(M,S')=\mathcal{P}(M',S')$.
\end{corollary}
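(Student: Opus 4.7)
The plan is to establish the two conclusions in sequence, using the identity $(H[S])[S']=H[S']$ whenever $S'\subseteq S$ (which holds for any graph $H$), together with the uniqueness of projections provided by \Cref{lem:projection-of-an-MEC-is-unique}.

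First I would prove the equality of v-structure sets. Fix $a,b,c\in S'$. Since $S'\subseteq S\subseteq V_G$, the induced subgraph of $M$ on $\{a,b,c\}$ coincides with the induced subgraph of $M[S]$ on $\{a,b,c\}$, and it also coincides with the induced subgraph of $M[S']$ on $\{a,b,c\}$. Therefore $a\to b\leftarrow c$ is a v-structure of $M[S']$ if and only if it is a v-structure of $M[S]$. Analogously, since $S'\subseteq S=V_{M'}$, $a\to b\leftarrow c$ is a v-structure of $M'[S']$ if and only if it is a v-structure of $M'$. By the hypothesis $\mathcal{P}(M,S)=M'$ and \Cref{def:projection}, $\mathcal{V}(M')=\mathcal{V}(M[S])$. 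Combining these equivalences yields
\[
\mathcal{V}(M[S'])=\mathcal{V}(M[S])\big|_{S'}=\mathcal{V}(M')\big|_{S'}=\mathcal{V}(M'[S']),
\]
where the restriction notation means intersecting with v-structures whose three vertices lie in $S'$.

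For the second conclusion, I would apply \Cref{lem:projection-of-an-MEC-is-unique} twice. Since $M$ is an MEC of $G$ and $S'\subseteq V_G$, the lemma guarantees a unique MEC $N_1\defeq\mathcal{P}(M,S')$ of $G[S']$ with $\mathcal{V}(N_1)=\mathcal{V}(M[S'])$. Since $M'$ is an MEC of $G[S]$ and $S'\subseteq S=V_{G[S]}$, the lemma also guarantees a unique MEC $N_2\defeq\mathcal{P}(M',S')$ of $(G[S])[S']=G[S']$ with $\mathcal{V}(N_2)=\mathcal{V}(M'[S'])$. By the first part of the proof, $\mathcal{V}(N_1)=\mathcal{V}(N_2)$, and both $N_1,N_2$ are MECs of the same undirected graph $G[S']$. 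By the uniqueness clause of \Cref{lem:projection-of-an-MEC-is-unique}, $N_1=N_2$, i.e., $\mathcal{P}(M,S')=\mathcal{P}(M',S')$.

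I expect no substantive obstacle: the argument is essentially bookkeeping built around two facts, namely that taking induced subgraphs is transitive in the subset parameter, and that an MEC is determined by its skeleton together with its v-structures (which is exactly what \Cref{lem:projection-of-an-MEC-is-unique} packages as uniqueness of projection). The only point one has to be careful about is to verify that v-structures are a purely local notion on triples of vertices, so that restricting to $S'$ and then looking for v-structures gives the same answer as looking for v-structures in the larger graph whose three vertices happen to lie in $S'$; this hinges on the non-adjacency condition in \Cref{def:v-structure} being preserved under induced subgraph when both endpoints lie in the subset.
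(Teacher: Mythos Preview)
Your proof is correct and follows essentially the same approach as the paper: both arguments reduce to showing that the v-structures of $M[S']$ and $M'[S']$ coincide (via the transitivity of induced subgraphs and the defining equality $\mathcal{V}(M')=\mathcal{V}(M[S])$), and then conclude $\mathcal{P}(M,S')=\mathcal{P}(M',S')$ from the fact that an MEC is determined by its skeleton and v-structures. Your write-up is somewhat more explicit in invoking \Cref{lem:projection-of-an-MEC-is-unique} for the uniqueness step, but the underlying idea is identical.
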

\begin{proof}
Since $S' \subseteq S$, for $u,v,w \in S'$, if a v-structure $u\rightarrow v \leftarrow w \in \mathcal{P}(M', S')$ then from \cref{def:projection}, the v-structures also belongs to $M'$ and $\mathcal{P}(M, S')$. Similarly, for $u,v,w \in S'$, if a v-structure $u\rightarrow v \leftarrow w \in \mathcal{P}(M, S')$ then from \cref{def:projection}, the v-structures also belongs to $M'$, and $\mathcal{P}(M', S)$. This implies $\mathcal{V}(M, S') = \mathcal{V}(M', S')$. From \cref{def:projection}, this further implies that $\mathcal{P}(M, S') = \mathcal{P}(M', S')$.
\end{proof}

The following lemma shows some structural similarity between an MEC $M$ and its projection $M'$.

\begin{lemma}
\label{lem:directed-edge-is-same-in-projected-MEC}
Let $G$ be a graph, $S \subseteq V_G$, $M \in \setofMECs{G}$, and $\mathcal{P}(M, S) = M'$. If $u \rightarrow v \in M'$, then $u \rightarrow v \in M$. 
\end{lemma}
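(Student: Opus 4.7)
My plan is to lift the hypothesis $u \to v \in M'$ back to the MEC $M$ by passing through the individual DAGs that these CPDAGs represent. The two key ingredients will be the Verma--Pearl equivalence criterion (two DAGs are Markov equivalent iff they have the same skeleton and the same set of v-structures) and the standard CPDAG fact that the directed edges of an MEC are exactly those oriented the same way in every DAG of that MEC.

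The first step I would carry out is to show that for every DAG $D \in M$, the induced subgraph $D[S]$ is a DAG lying in $M'$. Since $D \in M$, the v-structures of $D$ coincide with $\mathcal{V}(M)$. A triple $(a, b, c)$ with $a, b, c \in S$ forms a v-structure in $D[S]$ if and only if it forms a v-structure in $D$, because both the directed edges $a \to b$ and $c \to b$ and the non-adjacency of $a$ and $c$ are preserved under restriction to $S$ (all three endpoints lie in $S$). Consequently
\[
    \mathcal{V}(D[S]) \;=\; \{a \to b \leftarrow c \in \mathcal{V}(M) \,:\, a, b, c \in S\} \;=\; \mathcal{V}(M[S]) \;=\; \mathcal{V}(M'),
\]
where the last equality is the defining property of the projection. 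Combined with $\skel{D[S]} = G[S] = \skel{M'}$, the Verma--Pearl criterion forces $D[S] \in M'$.

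For the second step, I would apply the CPDAG characterization to the hypothesis $u \to v \in M'$: this means $u \to v$ appears in every DAG of $M'$. In particular, since $D[S] \in M'$ for every $D \in M$, we obtain $u \to v \in D[S]$, and hence $u \to v \in D$, for every $D \in M$. Applying the same characterization now in the reverse direction to $M$ yields $u \to v \in M$, which is what we wanted.

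I do not expect any serious obstacle; the only point that requires care is the bookkeeping in the first step, namely verifying that v-structures are preserved exactly (neither lost nor gained) when restricting to a subset $S$ of the vertices. Once that is pinned down, the result reduces to a one-line application of the CPDAG characterization applied twice, once to $M'$ and once to $M$.
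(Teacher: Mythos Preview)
Your argument is correct and genuinely different from the paper's main proof. The paper first sketches a one-line justification via Meek's rules ($\mathcal{V}(M')\subseteq\mathcal{V}(M)$, so any orientation forced in $M'$ is also forced in $M$), and then, because the same pattern recurs later, gives an explicit proof by contradiction: take a topological ordering of $M'$, choose a minimal counterexample $u\to v\in M'$ with $u\to v\notin M$, and eliminate each of the four strong-protection configurations of \cref{fig:strongly-protected-edge} one by one, using \cref{thm:nes-and-suf-cond-for-chordal-graph-to-be-an-MEC} and the defining property $\mathcal{V}(M')=\mathcal{V}(M[S])$.

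Your route instead reuses the observation already implicit in the proof of \cref{lem:projection-of-an-MEC-is-unique}---that $D[S]\in M'$ for a DAG $D\in M$---but you apply it to \emph{every} $D\in M$, and then invoke twice the CPDAG fact that an arrow in the essential graph is precisely one shared by all member DAGs. This is shorter and avoids the case analysis; what the paper's argument buys in exchange is that its minimal-counterexample template (pick $\tau$, pick the least violating edge, run through the strong-protection cases) is exactly the scaffold it later relies on in the proofs of \cref{obs1:O-structure-for-existence-of-MEC} and related results, so proving \cref{lem:directed-edge-is-same-in-projected-MEC} that way serves as a warm-up.
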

\begin{proof}
\citet{meek1995causal} gives a set of rules using which, if we have knowledge about the v-structure of an MEC, then we can construct the MEC. According to \cref{def:projection}, $\mathcal{V}(M') \subseteq \mathcal{V}(M)$. Therefore, if an edge can be directed using Meek's rule in $M'$, then it must be directed in $M$ (using the same Meek's rule). This proves \cref{lem:directed-edge-is-same-in-projected-MEC}. However, we have another proof. This proof is more rigorous and has better clarity. We are presenting this proof because this approach of proving will be used multiple times in this paper.

Let us assume that \cref{lem:directed-edge-is-same-in-projected-MEC} is not true, i.e., there exists an edge $u\rightarrow v \in M'$ such that $u\rightarrow v \notin M$. Since $M'$ is a chain graph (from \cref{item-1-theorem-nec-suf-cond-for-MEC} of \cref{thm:nes-and-suf-cond-for-chordal-graph-to-be-an-MEC}), from \cref{prop:every-chain-graph-has-a-topological-ordering}, there exists a topological ordering $\tau$ of the vertices of $M'$ such that if $u\rightarrow v \in M'$ then $\tau(u) < \tau(v)$. 
Pick 2 vertices $u,v \in V_{M'}$ such that $u\rightarrow v \in M'$, $u\rightarrow v \notin M$, and for any edge $u'\rightarrow v' \in M'$, if $u'\rightarrow v' \notin M$ then either $\tau(v') > \tau(v)$, or $v = v'$ and $\tau(u') \leq \tau(u)$. The chosen vertices $u$ and $v$ imply that the following claim:
\begin{claim}
\label{obs:implicationn-of-vertices-chosen}
For any edge $u'\rightarrow v' \in M'$, if $\tau(v') <\tau(v)$, or  $v = v'$ and $\tau(u) < \tau(u')$  then $u'\rightarrow v' \in M$.
\end{claim}
Since $u\rightarrow v \in M'$, it must be strongly protected in $M'$ (\cref{item-4-theorem-nec-suf-cond-for-MEC} of \cref{thm:nes-and-suf-cond-for-chordal-graph-to-be-an-MEC}). Then, $u\rightarrow v$ is part of one of the subgraphs shown in \cref{fig:strongly-protected-edge}. We go through each possibility and show that in each case, $u\rightarrow v \in M$, contradicting our assumption.

\textbf{Case-1:} $u\rightarrow v$ is a part of item-(a) of \cref{fig:strongly-protected-edge}, i.e., $w\rightarrow u\rightarrow v \in M'$. Then, $\tau(w) < \tau(u) < \tau(v)$. From \cref{obs:implicationn-of-vertices-chosen}, $w\rightarrow u \in M$. $u-v \notin M$, as it violates \cref{item-3-theorem-nec-suf-cond-for-MEC} of \cref{thm:nes-and-suf-cond-for-chordal-graph-to-be-an-MEC}. Neither $v\rightarrow u \in M$, otherwise, $w\rightarrow u \leftarrow v$ is a v-structure in $M$. Since $M'$ is a projection of $M$, and $u,v,w\in V_{M'}$, from \cref{def:projection}, if the v-structure $w\rightarrow u\leftarrow v\in M$ then $w\rightarrow u\leftarrow v \in M'$. But, this is a contradiction as $u\rightarrow v \in M'$. Thus the only option that remains is $u\rightarrow v \in M$.

\textbf{Case-2:} $u\rightarrow v$ a is part of item-(b) of \cref{fig:strongly-protected-edge}, i.e., $u\rightarrow v\leftarrow w \in M'$. Since $u\rightarrow v \leftarrow w$ is a v-structure, from \cref{def:projection}, $u\rightarrow v \leftarrow w \in M$, as $M'$ is a projection of $M$. This implies that $u\rightarrow v \in M$.

\textbf{Case-3:} $u\rightarrow v$ is a part of item-(c) of \cref{fig:strongly-protected-edge}, i.e., $u\rightarrow w\rightarrow v \leftarrow u \in M'$. Then, $\tau(u) < \tau(w) < \tau(v)$. From \cref{obs:implicationn-of-vertices-chosen}, $u\rightarrow w \rightarrow v \in M$. If $u-v \in M$, or $v \rightarrow u \in M$ then $(u,v,w,u)$ is a directed cycle in $M$, contradicting \cref{item-1-theorem-nec-suf-cond-for-MEC} of \cref{thm:nes-and-suf-cond-for-chordal-graph-to-be-an-MEC}. This implies $u\rightarrow v \in M$.

\textbf{Case-4:}  $u\rightarrow v$ is a part of item-(d) of \cref{fig:strongly-protected-edge}. Since $w\rightarrow v \leftarrow w'$ is a v-structure, from \cref{def:projection}, $w\rightarrow v \leftarrow w' \in M$, as $M'$ is a projection of $M$. If $v-u \in M$, or $v\rightarrow u \in M$ then $w\rightarrow u\leftarrow w' \in M$, otherwise $M$ has a directed cycle $(w,v,u,w)$ (if $w\rightarrow u \notin M$), or $(w', v, u, w')$ (if $w'\rightarrow u \notin M)$, which violates \cref{item-1-theorem-nec-suf-cond-for-MEC} of \cref{thm:nes-and-suf-cond-for-chordal-graph-to-be-an-MEC}. Since, $w\rightarrow u \leftarrow w'$ is a v-structure of $M$, and $w, u, w' \in V_{M'}$, from  \cref{def:projection}, $w\rightarrow u \leftarrow w' \in M'$, which is a contradiction, as $w-u-w' \in M'$. Thus the only option that remains is $u\rightarrow v \in M$.

We have shown that $u\rightarrow v \in M$ in each case. This proves \cref{lem:directed-edge-is-same-in-projected-MEC}.
\end{proof}

\cref{lem:directed-edge-is-same-in-projected-MEC} further implies the following corollaries:
\begin{corollary}
\label{corr:directed-edge-of-main-graph-is-either-dir-or-ud-in-proj-graph}
Let $G$ be a graph, $S\subseteq V_G$, $M\in \setofMECs{G}$, and $\mathcal{P}(M,S)=M'$. If $u\rightarrow v \in M$, and $u,v\in S$ then either $u\rightarrow v \in M'$, or $u-v\in M'$.
\end{corollary}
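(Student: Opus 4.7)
The plan is to reduce this corollary to a direct application of \cref{lem:directed-edge-is-same-in-projected-MEC} via a short case analysis on the orientation of the edge between $u$ and $v$ in $M'$. First, I would observe that since $u \rightarrow v \in M$, the undirected edge $u-v$ belongs to $\skel{M} = G$; because $u, v \in S$, it also belongs to $G[S]$. Since $M'$ is an MEC of $G[S]$, we have $\skel{M'} = G[S]$, so $u$ and $v$ must be adjacent in $M'$. By \cref{def:graph-terminologies}, the only possibilities for this adjacency are $u \rightarrow v \in M'$, $v \rightarrow u \in M'$, or $u - v \in M'$.

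The goal is then to rule out the middle possibility. Suppose, for contradiction, that $v \rightarrow u \in M'$. Then \cref{lem:directed-edge-is-same-in-projected-MEC}, applied to the directed edge $v \rightarrow u$ of the projection, forces $v \rightarrow u \in M$. But by hypothesis $u \rightarrow v \in M$, which by the definition of a directed edge (\cref{def:graph-terminologies}) means $(u,v) \in E_M$ and $(v,u) \notin E_M$; this directly contradicts $v \rightarrow u \in M$. Hence $v \rightarrow u \in M'$ is impossible, leaving only the two alternatives $u \rightarrow v \in M'$ or $u - v \in M'$ asserted by the corollary.

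I do not expect a real obstacle in this proof: the work is essentially bookkeeping. The only point that requires any care is the initial setup, namely confirming that $u$ and $v$ really are adjacent in $M'$ so that case analysis is exhaustive; this uses only the fact that skeletons are preserved under taking MECs and induced subgraphs. Everything else is an immediate consequence of the preceding lemma together with the definitional prohibition against a graph containing both $u \rightarrow v$ and $v \rightarrow u$.
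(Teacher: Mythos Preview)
Your proposal is correct and follows essentially the same approach as the paper: establish adjacency of $u$ and $v$ in $M'$ via the skeleton, enumerate the three possible orientations, and rule out $v \rightarrow u \in M'$ by applying \cref{lem:directed-edge-is-same-in-projected-MEC} to derive the contradiction $v \rightarrow u \in M$. Your write-up is slightly more explicit about why $u\rightarrow v$ and $v\rightarrow u$ cannot coexist in $M$, but the argument is the same.
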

\begin{proof}
Suppose $u,v\in S$ and $u\rightarrow v \in M$.
    From the construction of projection, $\skel{M'} = \skel{M[S]}$. This implies if $u\rightarrow v \in M$ then either  $u\rightarrow v \in M'$ or $v\rightarrow u \in M'$ or $u-v \in M'$. But, if $v\rightarrow u \in M'$ then from \cref{lem:directed-edge-is-same-in-projected-MEC}, $v\rightarrow u \in M$, a contradiction. This implies either $u\rightarrow v \in M'$, or $u-v\in M'$.
\end{proof}

\begin{corollary}
\label{corr:undirected-edge-in-an-MEC-implies-undirected-edge-in-projected-MEC}
Let $G$ be a graph, $S\subseteq V_G$, $M\in \setofMECs{G}$, and $\mathcal{P}(M,S)=M'$. If $u- v \in M$, and $u,v\in S$ then $u-v \in M'$.
\end{corollary}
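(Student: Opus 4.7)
The plan is to argue by contraposition using \cref{lem:directed-edge-is-same-in-projected-MEC} directly. Since $u, v \in S$ and $u - v \in M$, the edge $u-v$ appears in the skeleton $\skel{M[S]} = \skel{M'}$, so there must be \emph{some} edge between $u$ and $v$ in $M'$; the only question is its orientation. There are three possibilities in $M'$: $u \to v$, $v \to u$, or $u - v$.

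First I would rule out $u \to v \in M'$. If this were the case, then \cref{lem:directed-edge-is-same-in-projected-MEC} would force $u \to v \in M$, which contradicts our hypothesis $u - v \in M$ (an edge in a graph is either undirected or directed in exactly one orientation). By an entirely symmetric argument swapping the roles of $u$ and $v$, the possibility $v \to u \in M'$ is also ruled out by \cref{lem:directed-edge-is-same-in-projected-MEC}. The only remaining possibility is $u - v \in M'$, which is what we wanted to show.

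There is no substantive obstacle here: the statement is essentially the contrapositive content of \cref{lem:directed-edge-is-same-in-projected-MEC} applied in both orientations, combined with the fact that $\skel{M'} = \skel{M[S]}$ (which follows from $M'$ being an MEC of $G[S]$ and $\mathcal{P}(M,S) = M'$ preserving the skeleton). The proof is essentially a two-line case analysis once \cref{lem:directed-edge-is-same-in-projected-MEC} is in hand.
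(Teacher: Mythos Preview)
Your proposal is correct and essentially identical to the paper's own proof: both observe that $\skel{M'} = \skel{M[S]}$ forces some edge between $u$ and $v$ in $M'$, and then eliminate the two directed possibilities by applying \cref{lem:directed-edge-is-same-in-projected-MEC} in each orientation to derive a contradiction with $u - v \in M$.
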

\begin{proof}
    Suppose $u,v\in S$ and $u-v \in M$.
    From the construction of projection, $\skel{M'} = \skel{M[S]}$. This implies if $u-v \in M$ then either  $u\rightarrow v \in M'$ or $v\rightarrow u \in M'$ or $u-v \in M'$. From \cref{lem:directed-edge-is-same-in-projected-MEC}, if there is a directed edge between $u$ and $v$ in $M'$ then  $M$ also has a directed edge between $u$ and $v$, a contradiction. This implies $u-v \in M'$.
\end{proof}

\begin{corollary}
\label{corr:tfp-in-main-graph-implies-tfp-in-projected-graph}
    Let $M$ be an MEC and $M'$ be a projection of $M$.
    Let $P = (u_1,u_2,\ldots, u_l)$ be a \tfp{} in $M$ such that all the nodes of $P$ are in $M'$, i.e., for all $1\leq i \leq l$, $u_i \in V_{M'}$  then $P$ is a \tfp{} in $M'$.
\end{corollary}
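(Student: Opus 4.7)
My plan is to verify the two requirements of \Cref{def:tfp-path} separately for $P$ viewed as a sequence in $M'$: first, that $P$ is indeed a path in $M'$ (each consecutive pair is an edge in $M'$), and second, that no chord of length two appears in the skeleton of $M'$.

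For the first part, I would fix an arbitrary consecutive pair $(u_i, u_{i+1})$ of $P$. Since $P$ is a path in $M$, either $u_i \rightarrow u_{i+1} \in M$, or $u_{i+1} \rightarrow u_i \in M$, or $u_i \undir u_{i+1} \in M$. In the directed cases I would invoke \Cref{corr:directed-edge-of-main-graph-is-either-dir-or-ud-in-proj-graph} (applied with $S = V_{M'}$, which contains both endpoints by hypothesis) to conclude that the corresponding edge exists in $M'$, either with the same orientation or as an undirected edge; crucially, this rules out the orientation flipping, so a valid directed-edge-in-the-same-direction or undirected edge is available in $M'$. In the undirected case I would invoke \Cref{corr:undirected-edge-in-an-MEC-implies-undirected-edge-in-projected-MEC} to get $u_i \undir u_{i+1} \in M'$. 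Combined with the fact that the $u_i$ are already distinct (inherited from $P$), this shows $P$ is a path in $M'$.

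For the second part, I would use the fact that $\skel{M'} = G[V_{M'}]$ where $G = \skel{M}$, which follows directly from $M'$ being an MEC of the induced subgraph $G[V_{M'}]$ (from \Cref{def:projection}). Fix any $1 \leq i \leq l-2$. Because $P$ is a \tfp{} in $M$, no edge exists between $u_i$ and $u_{i+2}$ in $\skel{M}$. Since $u_i, u_{i+2} \in V_{M'}$ by hypothesis, the same non-adjacency transfers to the induced subgraph $\skel{M'} = \skel{M}[V_{M'}]$, so $u_i \undir u_{i+2} \notin \skel{M'}$ as required.

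No step here is a serious obstacle — everything reduces to citing the already-proved corollaries of \Cref{lem:directed-edge-is-same-in-projected-MEC} and the definition of projection. The only subtlety worth spelling out is the observation that the directed edges of $M$ between vertices of $V_{M'}$ cannot reverse orientation in $M'$ (so that the direction of traversal of $P$ continues to define a valid path), and this is precisely what \Cref{corr:directed-edge-of-main-graph-is-either-dir-or-ud-in-proj-graph} supplies.
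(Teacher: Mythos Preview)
Your proposal is correct and follows essentially the same approach as the paper, invoking the corollaries of \Cref{lem:directed-edge-is-same-in-projected-MEC} to show $P$ remains a path in $M'$ and then using $\skel{M'} = \skel{M}[V_{M'}]$ to transfer the triangle-free condition. One small simplification: by \Cref{def:path} the case $u_{i+1} \rightarrow u_i \in M$ never arises for a path $P$ in $M$, so only the cases $u_i \rightarrow u_{i+1}$ and $u_i \undir u_{i+1}$ need to be handled.
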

\begin{proof}
    Suppose $P = (u_1,u_2,\ldots, u_l)$ is a \tfp{} in $M$ such that all the nodes of $P$ are in $M'$.
    From \cref{lem:directed-edge-is-same-in-projected-MEC,corr:undirected-edge-in-an-MEC-implies-undirected-edge-in-projected-MEC}, $P$ is a path in $M'$. $P$ must be a \tfp{} in $M'$, otherwise, if there exist $u_{i-1}, u_i, u_{i+1}$ such that there is an edge between $u_{i-1}$ and $u_{i+1}$ in $M'$ (of any form undirected or directed i.e., either $u_{i-1}-u_{i+1} \in M'$, or $u_{i-1}\rightarrow u_{i+1} \in M'$, or $u_{i+1}\rightarrow u_{i-1} \in M'$)  then even in $M$ there is an edge between $u_{i-1}$ and $u_{i+1}$ (since $M'$ is a projection of $M$, if $u-v \in \skel{M'}$ then $u-v \in \skel{M[V_M]}$). This implies $P$ is not a \tfp{} in $M$, a contradiction. This implies $P$ is a \tfp{} in $M'$.
\end{proof}

The following corollary is a partial converse of \cref{corr:tfp-in-main-graph-implies-tfp-in-projected-graph}.

\begin{corollary}
\label{obs:cond-for-ud-path-in-M_a-to-be-ud-path-in-M}
Let $M$ be an MEC and $M'$ be a projection of $M$.
Let $P = (u_1,u_2,\ldots, u_l)$ be a \tfp{} in $M'$. If $u_2\rightarrow u_1 \notin M$ then $P$ is also a \tfp{} in $M$.
\end{corollary}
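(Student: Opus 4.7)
The plan is to prove the contrapositive-style statement by showing that, under the hypothesis $u_2 \rightarrow u_1 \notin M$, each consecutive pair in $P$ must be an edge in $M$ in the correct direction (i.e.\ $u_{i+1} \rightarrow u_i \notin M$ for every $i$), and that the non-adjacency of $u_{i-1}$ and $u_{i+1}$ carries over from $M'$ to $M$. Since $M'$ is a projection of $M$, we have $\skel{M'} = \skel{M[V_{M'}]}$. Therefore, for consecutive $u_i, u_{i+1}$ in $P$, the edge $u_i \undir u_{i+1}$ lies in $\skel{M}$, and for any three consecutive $u_{i-1}, u_i, u_{i+1}$, the vertices $u_{i-1}$ and $u_{i+1}$ are non-adjacent in $\skel{M}$. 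In particular, the triangle-free condition of $P$ transfers to $M$ for free; the only real issue is to rule out that some edge of $P$ appears ``reversed'' in $M$.

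The heart of the proof is an induction on $i$ showing that $u_{i+1} \rightarrow u_i \notin M$. The base case is immediate from the hypothesis $u_2 \rightarrow u_1 \notin M$. For the inductive step, assume $u_i \rightarrow u_{i-1} \notin M$, so that $(u_{i-1}, u_i) \in E_M$, and suppose for contradiction that $u_{i+1} \rightarrow u_i \in M$. There are two subcases depending on the edge between $u_{i-1}$ and $u_i$ in $M$:
\begin{itemize}
\item If $u_{i-1} \undir u_i \in M$, then since $u_{i-1}$ and $u_{i+1}$ are non-adjacent in $M$, the triple $u_{i+1} \rightarrow u_i \undir u_{i-1}$ is an induced subgraph of $M$ of the forbidden form, contradicting \cref{item-3-theorem-nec-suf-cond-for-MEC} of \cref{thm:nes-and-suf-cond-for-chordal-graph-to-be-an-MEC}.
\item If $u_{i-1} \rightarrow u_i \in M$, then $u_{i-1} \rightarrow u_i \leftarrow u_{i+1}$ is a v-structure of $M$ whose vertices lie in $V_{M'}$, so it is a v-structure of $M[V_{M'}]$ and hence of $M'$ by \cref{def:projection}. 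This forces $u_{i+1} \rightarrow u_i \in M'$, which is incompatible with $P$ being a path in $M'$ (since $(u_i, u_{i+1}) \in E_{M'}$ requires either $u_i \undir u_{i+1} \in M'$ or $u_i \rightarrow u_{i+1} \in M'$, neither of which coexists with $u_{i+1} \rightarrow u_i \in M'$).
\end{itemize}
Either subcase yields a contradiction, completing the induction.

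Once the induction is done, each $(u_i, u_{i+1})$ is an edge of $M$ (directed correctly or undirected) because $u_i \undir u_{i+1} \in \skel{M}$ and the reversed direction has been excluded, so $P$ is a path in $M$. Combined with the non-adjacency of $u_{i-1}$ and $u_{i+1}$ in $\skel{M}$ noted in the first step, this shows $P$ is a \tfp{} in $M$. The main obstacle is precisely the inductive step: specifically, realizing that one must invoke \emph{both} the ``no $a \rightarrow b \undir c$'' condition and the v-structure preservation under projection, since a single one of these tools handles only one of the two possible orientations of the edge $(u_{i-1}, u_i)$ in $M$.
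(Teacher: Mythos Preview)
Your proposal is correct and follows essentially the same approach as the paper's proof: the paper also reduces to showing $P$ is a path in $M$ (noting the triangle-free condition transfers via the shared skeleton), picks the least $i$ with $u_{i+1}\rightarrow u_i\in M$, and derives a contradiction from exactly the same two subcases (the forbidden $a\rightarrow b\undir c$ configuration and v-structure preservation under projection). Your explicit induction is just a rephrasing of the paper's ``least $i$'' argument.
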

\begin{proof}
Suppose that $u_2\rightarrow u_1 \notin M$.
Since $P$ is a \tfp{} in $M'$, if $P$ is a path in $M$ then it is a \tfp{} in $M$.
We show that $P$ is a path in $M$.
Suppose that $P$ is not a path in $M$. Then there exists an edge $u_i\leftarrow u_{i+1} \in M$. Pick the least $i$ such that $u_i\leftarrow u_{i+1} \in M$. Clearly, $i\geq 2$ (from our assumption $u_2\rightarrow u_1\notin M$). Since $P$ is a \tfp{} in $M'$, there does not exist an edge in between $u_{i-1}$ and $u_{i+1}$. This implies that either $u_{i-1}-u_i \leftarrow u_{i+1} \in M$ or $u_{i-1} \rightarrow u_i \leftarrow u_{i+1} \in M$. From \cref{item-2-theorem-nec-suf-cond-for-MEC} of \cref{thm:nes-and-suf-cond-for-chordal-graph-to-be-an-MEC}, $u_{i-1}-u_i \leftarrow u_{i+1} \notin M$. 
From \cref{def:projection}, $\mathcal{V}(M') = \mathcal{V}(M[V_{M'}])$. This implies if  the v-structure $u_{i-1} \rightarrow u_i \leftarrow u_{i+1} \in M$ then we have  $u_{i-1} \rightarrow u_i \leftarrow u_{i+1} \in M'$, as $u_{i-1}, u_i, u_{i+1} \in V_{M'}$. But, then $P$ is not a path in $M'$ (because $u_i \leftarrow u_{i+1} \in M'$), which is a contradiction. This implies that $P$ is a path in $M$. 
\end{proof}

The following corollary is an extension of \cref{obs:cond-for-ud-path-in-M_a-to-be-ud-path-in-M}.

\begin{corollary}
\label{obs:every-edge-of-triangle-free-path-is-directed}
Let $M$ be an MEC and $M'$ be a projection of $M$.
If $P = (u_1 = x, u_2 = y, u_3, \ldots, u_{l-1} = u, u_l =v)$ is a \tfp{} from $(x,y)$ to $(u,v)$ in $M'$, and $x\rightarrow y \in M$ then for all $1\leq i < l$, $u_i\rightarrow u_{i+1} \in M$.
\end{corollary}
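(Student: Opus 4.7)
The plan is to prove this by induction on $i$, walking along $P$ and showing that each successive edge must be directed forward in $M$. The base case $i = 1$ is immediate, since $u_1 \rightarrow u_2 = x \rightarrow y \in M$ is exactly the hypothesis. For the inductive step, I would assume $u_{i-1} \rightarrow u_i \in M$ and aim to deduce $u_i \rightarrow u_{i+1} \in M$, by ruling out the other two possibilities for the edge between $u_i$ and $u_{i+1}$ in $M$.

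Two easy structural facts would be set up first. (a) Because $P$ is a path in $M'$, the edge between $u_i$ and $u_{i+1}$ in $M'$ is either $u_i - u_{i+1}$ or $u_i \rightarrow u_{i+1}$ (it cannot be $u_{i+1} \rightarrow u_i$, since that would violate the path orientation in $M'$). (b) Because $P$ is a \tfp{} in $M'$, the vertices $u_{i-1}$ and $u_{i+1}$ are non-adjacent in $M'$; and since $M'$ and $M[V_{M'}]$ have the same skeleton, they are also non-adjacent in $M$.

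Now I would do the case analysis on the $M$-edge between $u_i$ and $u_{i+1}$. If $u_i - u_{i+1} \in M$, then together with $u_{i-1} \rightarrow u_i \in M$ and the non-adjacency of $u_{i-1}$ and $u_{i+1}$ in $M$, we obtain $u_{i-1} \rightarrow u_i - u_{i+1}$ as an induced subgraph of $M$, contradicting \cref{item-3-theorem-nec-suf-cond-for-MEC} of \cref{thm:nes-and-suf-cond-for-chordal-graph-to-be-an-MEC}. If $u_{i+1} \rightarrow u_i \in M$, then $u_{i-1} \rightarrow u_i \leftarrow u_{i+1}$ is a v-structure in $M$; since $u_{i-1}, u_i, u_{i+1} \in V_{M'}$, the definition of projection ($\mathcal{V}(M') = \mathcal{V}(M[V_{M'}])$, \cref{def:projection}) forces this v-structure to also lie in $M'$, which contradicts fact (a) above. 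The only remaining possibility is $u_i \rightarrow u_{i+1} \in M$, closing the induction.

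I do not anticipate a significant obstacle here: the argument is essentially a strengthening of \cref{obs:cond-for-ud-path-in-M_a-to-be-ud-path-in-M} with the same two-case elimination (the undirected-edge case is killed by the no-$u\rightarrow v-w$ rule, and the reverse-directed case is killed by the v-structure agreement between $M$ and $M'$). The only point requiring care is keeping straight which facts live in $M$ versus $M'$, in particular that non-adjacency transfers from $M'$ to $M$ (via equal skeletons on the relevant vertices) while v-structure membership transfers from $M$ to $M'$ (via the projection identity on $\mathcal{V}$).
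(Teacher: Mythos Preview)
Your proposal is correct and follows essentially the same approach as the paper: an induction (the paper phrases it as a minimal counterexample) with the same two-case elimination, using \cref{item-3-theorem-nec-suf-cond-for-MEC} of \cref{thm:nes-and-suf-cond-for-chordal-graph-to-be-an-MEC} to rule out $u_i - u_{i+1} \in M$ and the v-structure agreement from \cref{def:projection} to rule out $u_{i+1} \rightarrow u_i \in M$. Your write-up is in fact slightly more careful than the paper's in making explicit that non-adjacency of $u_{i-1}$ and $u_{i+1}$ transfers from $M'$ to $M$ via the shared skeleton.
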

\begin{proof}
For $i=1$, \cref{obs:every-edge-of-triangle-free-path-is-directed} is true as $u_1\rightarrow u_2\in M$. Pick the least $i>1$ such that $u_i \rightarrow u_{i+1}\notin M$. If $u_i-u_{i+1} \in M$ then $u_{i-1}\rightarrow u_i -u_{i+1} \in M$ which disobeys \cref{item-2-theorem-nec-suf-cond-for-MEC} of \cref{thm:nes-and-suf-cond-for-chordal-graph-to-be-an-MEC}. And, if $u_i\leftarrow u_{i+1}\in M$ then $u_{i-1}\rightarrow u_i \leftarrow u_{i+1}$ is a v-structure in $M$. This implies that $u_{i-1}\rightarrow u_i \leftarrow u_{i+1} \in M'$, as from from \cref{def:projection}, $\mathcal{V}(M[V_{M'}]) = \mathcal{V}(M')$. But, then, $P$ is not a path in $M'$, a contradiction. This implies that no such $i$ exists, i.e., for all $i$, $u_i\rightarrow u_{i+1} \in M$.
\end{proof}

The following proposition is due to the non-existence of a directed cycle in an MEC.

\begin{proposition}
\label{obs:edge-in--ucc-is-undirected}
Let $M$ be an MEC, and $\mathcal{C}$ be an \ucc{} of $M$. There cannot be a directed edge between two vertices of $\mathcal{C}$, i.e., if $u,v \in \mathcal{C}$ and there is an edge between $u$ and $v$ in $M$ then $u-v\in M$.
\end{proposition}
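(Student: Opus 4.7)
The plan is to observe that this proposition is essentially an immediate consequence of \Cref{prop:edge-between-2-nodes-of-same-ucc-is-ud}, which was already established for chain graphs in general, together with \Cref{item-1-theorem-nec-suf-cond-for-MEC} of \Cref{thm:nes-and-suf-cond-for-chordal-graph-to-be-an-MEC}, which tells us that every MEC is a chain graph. So the one-line proof is: $M$ is a chain graph by \Cref{thm:nes-and-suf-cond-for-chordal-graph-to-be-an-MEC}, so applying \Cref{prop:edge-between-2-nodes-of-same-ucc-is-ud} to $M$ and its \ucc{} $\mathcal{C}$ gives the desired conclusion.

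If instead one wishes to write the argument out directly rather than invoking \Cref{prop:edge-between-2-nodes-of-same-ucc-is-ud}, the approach is a simple contradiction. Suppose for the sake of contradiction that $u, v \in \mathcal{C}$ and $u \rightarrow v \in M$. Because $u$ and $v$ lie in the same \ucc{} $\mathcal{C}$ of $M$, there is an undirected path $P = (v = w_0, w_1, \dots, w_k = u)$ from $v$ to $u$ in $M$ consisting entirely of undirected edges. Then concatenating $P$ with the directed edge $u \rightarrow v$ produces a closed walk $(v, w_1, \dots, w_{k-1}, u, v)$ in $M$ which contains the directed edge $u \rightarrow v$, and hence contains a directed cycle. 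This contradicts \Cref{item-1-theorem-nec-suf-cond-for-MEC} of \Cref{thm:nes-and-suf-cond-for-chordal-graph-to-be-an-MEC}, which says $M$ is a chain graph and therefore has no directed cycle. The only remaining possibility is $u - v \in M$.

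There is no real obstacle here: the entire content of the proposition is that an MEC inherits the "no directed edge within a \ucc{}" property from being a chain graph, and this is essentially a restatement of the earlier proposition specialized to MECs. The only small care needed in the direct version is to be sure the concatenation of the undirected path with the directed edge actually yields a simple directed cycle (one can shorten the closed walk to a simple cycle if necessary, but since $P$ consists of distinct vertices and $u \rightarrow v$ closes it up, the walk is already a cycle of length $k + 1 \geq 2$, and the directedness of at least one edge makes it a directed cycle in the sense of \Cref{def:cycle}).
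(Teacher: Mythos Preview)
Your proposal is correct and essentially matches the paper's own proof: the paper also argues by contradiction, taking an undirected path from $v$ to $u$ in $\mathcal{C}$ and closing it with $u \rightarrow v$ to produce a directed cycle, contradicting \cref{item-1-theorem-nec-suf-cond-for-MEC} of \cref{thm:nes-and-suf-cond-for-chordal-graph-to-be-an-MEC}. Your first ``one-line'' route via \Cref{prop:edge-between-2-nodes-of-same-ucc-is-ud} is a legitimate shortcut the paper does not take explicitly, but it amounts to the same argument packaged one level up.
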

\begin{proof}
Suppose that $u,v \in \mathcal{C}$, and $u\rightarrow v \in M$. Since $u,v \in \mathcal{C}$, there exists an undirected path from $v$ to $u$. Combining the undirected path with $u\rightarrow v$ gives a directed cycle in $M$, which contradicts \cref{item-1-theorem-nec-suf-cond-for-MEC} of \cref{thm:nes-and-suf-cond-for-chordal-graph-to-be-an-MEC}. This implies if there is an edge between $u$ and $v$ then it must be undirected. This proves \cref{obs:edge-in--ucc-is-undirected}.
\end{proof}

The following lemma shows the unidirectional nature of \tfps{} in an MEC. We use this later for the verification of an MEC.

\begin{lemma}
\label{lem:nes-tfp-cond-for-an-MEC}
Let $M$ be an MEC of an undirected graph $G$. For $(u,v), (x,y) \in E_G$ such
that $(u,v)\neq (x,y)$, if there exists a \tfp{} from $(u,v)$ to $(x,y)$ in $M$
then there cannot exist a \tfp{} from $(x,y)$ to $(u,v)$.
\end{lemma}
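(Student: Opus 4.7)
The plan is to argue by contradiction via a direct application of \Cref{obs:concatenate-triangle-free-paths}. I would first suppose, contrary to the conclusion, that there exist both a \tfp{} $P_1$ from $(u,v)$ to $(x,y)$ in $M$ and a \tfp{} $P_2$ from $(x,y)$ to $(u,v)$ in $M$. I would then observe that by \Cref{thm:nes-and-suf-cond-for-chordal-graph-to-be-an-MEC}, $M$ is a chain graph whose undirected components are chordal, so the hypothesis of \Cref{obs:concatenate-triangle-free-paths} is satisfied for $M$.

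The key move is to view $P_2$ as a \tfp{} from $(x,y)$ to the single vertex $w \defeq v$, since $P_2$'s last two vertices are $u$ and $v$. Applying \Cref{obs:concatenate-triangle-free-paths} to $P_1$ and $P_2$ with this choice of $w$ yields that the concatenation $P$ of $P_1$ and $P_2$ is itself a \tfp{} in $M$ from $(u,v)$ to $v$.

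The contradiction now comes from the distinctness-of-vertices requirement for a path (\Cref{def:path}). Since $(u,v) \neq (x,y)$, both $P_1$ and $P_2$ must have length at least three, so the concatenation $P$ has length at least four. But $P$ begins with $u, v$ and ends with $v$, so $v$ appears both at position $2$ and at the final position of $P$, and these positions are genuinely distinct (as the length is $\geq 4$). This contradicts the fact that the vertices of any path are pairwise distinct, and so no such $P_2$ can exist.

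I do not anticipate a serious technical obstacle here; the only non-obvious step is spotting that one should instantiate the ``$w$'' of \Cref{obs:concatenate-triangle-free-paths} as the single endpoint $v$ rather than as an edge, thereby turning the hypothesis of a reverse \tfp{} into precisely the kind of self-returning closed configuration that the concatenation observation rules out.
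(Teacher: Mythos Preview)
Your proposal is correct and follows essentially the same approach as the paper: both argue by contradiction using \Cref{obs:concatenate-triangle-free-paths} to produce a ``\tfp{}'' in which a vertex is repeated, violating \Cref{def:path}. The only cosmetic difference is the choice of endpoint: the paper truncates the reverse \tfp{} at $u$ (taking $w = u$) to obtain a concatenated path from $(u,v)$ back to $u$, whereas you keep the full reverse \tfp{} and take $w = v$, obtaining a concatenated path from $(u,v)$ back to $v$; either choice yields the same contradiction.
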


\begin{proof}
  Suppose, if possible, that there exist \tfps{} $P_1$ from $(u, v)$ to $(x, y)$
  and $P_2$ from $(x, y)$ to $u$ ($P_2$ is a prefix of the hypothesised \tfp{}
  from $(x, y)$ to $(u, v)$).
Then by
  \cref{obs:concatenate-triangle-free-paths}, the concatenation $Q$ of $P_1$ and
  $P_2$ (in the sense of \cref{obs:concatenate-triangle-free-paths}) is a \tfp{}
  from $(u, v)$ to $u$.  But this is a contradiction, since by the definition of path, \cref{def:path}, no
  vertex can appear twice in a path. This implies there cannot exist a \tfp{} from $(x,y)$ to $u$.  Thus, there cannot exist a \tfp{} from $(x,y)$ to $(u,v)$.
\end{proof}

We now describe our approach to counting MECs of an undirected graph.  
 \section{Counting MECs of a Graph}
 \label{sec:counting-MEcs-of-a-graph}
 Let $G$ be an undirected graph. We want to count MECs that have the same skeleton as $G$, i.e., $|\setofMECs{G}|$. We provide a formal description of the problem.

\begin{problem}[\textbf{Counting MECs of an Undirected Graph}]
\label{prob:counting-MECs}
\textbf{Input:} An undirected graph $G$.

\noindent
\textbf{Output:}  $|\setofMECs{G}|$.
\end{problem}

We present a recursive algorithm to solve \cref{prob:counting-MECs}. For this recursion, we divide the input graph using its tree decomposition. We construct a tree decomposition $(X = \{X_1, X_2,\ldots, X_s\}, T)$ of $G$. We select node $X_1$ as the root node of $T$. For each node $X_i$ in $T$, we denote the subtree of $T$ rooted at $X_i$ as $T_i$. The induced subgraph of $G$ represented by $T_i$ is denoted as $G_i$. This implies that $T=T_1$, $G=G_1$, and our goal is to count the MECs of $G_1$.

In the subtree $T_i$, let the children of $X_i$ be denoted as $(X_{i1}, X_{i2}, \ldots, X_{il})$ (i.e., for $T_i$, we consider $X_{i1}$ as the first child of $X_i$, $X_{i2}$ as the second child of $X_i$, and so on). For any $0 \leq j \leq l$, we denote $T_i^j$ as the induced subtree of $T_i$ that contains node $X_i$ and the nodes of $T_{i1}, T_{i2}, \ldots, T_{ij}$. In other words, $T_i^0$ contains only one node, $X_i$, and $T_i^l = T_i$. The induced subgraph of $G$ represented by $T_i^j$ is denoted as $G_i^j$. If the number of children of $X_1$ is $p$, then $T_1^p = T_1 = T$, and $G_1^p = G_1 = G$, and \textbf{we want to count MECs of $G_1^p$}. Refer to \Cref{fig:graph-example} for further clarity.

\begin{figure*}[ht]
    \centering
    \begin{tikzpicture}
    \node[](u1){$1$};
    \node[](u2)[below left=0.5 and 1 of u1] {$2$};
    \node[](u3)[below right=0.5 and 1 of u1] {$3$};
    \node[](u4)[below left=0.5 and 0.5 of u2] {$4$};
    \node[](u5)[below =0.5 of u2] {$5$};
    \node[](u6)[below right =0.5 and 1 of u2] {$6$};
    \node[](u7)[below left=0.5 and 0.25 of u3] {$7$};
    \node[](u8)[below right=0.5 and 0.25 of u3] {$8$};
    \node[](u9)[below left=0.5 and 0.25 of u5] {$9$};
    \node[](u10)[below right=0.5 and 0.25 of u5] {$10$};
    \node[](u11)[below left=0.5 and 0.25 of u7] {$11$};
    \node[](u12)[below right=0.5 and 0.25 of u7] {$12$};
    \node[](u13)[below =0.5 of u8] {$13$};
    \node[](u14)[below right=0.5 and 0.25 of u8] {$14$};
    \node[](u15)[below left=0.5 and 0.25 of u12] {$15$};
    \node[](u16)[below right=0.5 and 0.25 of u12] {$16$};
    \node[](u17)[below = 0.5 of u14] {$17$};
    \node[ left=1.0 of u2](G){$G =$};
    \draw[-](u1)--(u2);
    \draw[-](u1)--(u3);
    \draw[-](u2)--(u4);
    \draw[-](u2)--(u5);
    \draw[-](u4)--(u5);
    \draw[-](u2)--(u6);
    \draw[-](u3)--(u6);
    \draw[-](u3)--(u7);
    \draw[-](u3)--(u8);
    \draw[-](u7)--(u8);
    \draw[-](u5)--(u9);
    \draw[-](u5)--(u10);
    \draw[-](u9)--(u10);
    \draw[-](u7)--(u11);
    \draw[-](u7)--(u12);
    \draw[-](u11)--(u12);
    \draw[-](u8)--(u13);
    \draw[-](u8)--(u14);
    \draw[-](u13)--(u14);
    \draw[-](u12)--(u15);
    \draw[-](u12)--(u16);
    \draw[-](u15)--(u16);
    \draw[-](u14)--(u17);
    
    \node[](X1)[right=4.0 of u1] {$X_1$};
    \node[](X2)[below left=0.5 and 0.5 of X1] {$X_2$};
    \node[](X3)[below=0.5 of X1] {$X_3$};
    \node[](X4)[below right=0.5 and 0.5 of X1] {$X_4$};
    \node[](X5)[below  =0.5 of X2] {$X_5$};
    \node[](X6)[below left=0.5 and 0.05 of X4] {$X_6$};
    \node[](X7)[below right  =0.5 and 0.05 of X4] {$X_7$};
    \node[](X8)[below =0.5 of X6] {$X_8$};
    \node[](X9)[below =0.5 of X7] {$X_9$};
    \draw[-](X1)--(X2);
    \draw[-](X1)--(X3);
    \draw[-](X1)--(X4);
    \draw[-](X1)--(X2);
    \draw[-](X2)--(X5);
    \draw[-](X4)--(X6);
    \draw[-](X4)--(X7);
    \draw[-](X6)--(X8);
    \draw[-](X7)--(X9);
    \node[ below left=0.2 and 1.0 of X1](T){$T =$};
    
    \node[](X4')[right=3.5 of X1] {$X_4$};
    \node[](X6')[below left=0.5 and 0.05 of X4'] {$X_6$};
    \node[](X7')[below right  =0.5 and 0.05 of X4'] {$X_7$};
    \node[](X8')[below =0.5 of X6'] {$X_8$};
    \node[](X9')[below =0.5 of X7'] {$X_9$};
    \node[below left= 0.2 and 0.5 of X4'](T4){$T_4=$};
    \draw[-](X4')--(X6');
    \draw[-](X4')--(X7');
    \draw[-](X6')--(X8');
    \draw[-](X7')--(X9');
    
    \node[](X61)[right=2.5 of X4'] {$X_6$};
    \node[below left = 0.1 and 0.1 of X61](T6){$T_6 =$};
    \node[](X81)[below =0.5 of X61] {$X_8$};
    \draw[-](X61)--(X81);
    
    \node[](X71)[below =1.5 of X61] {$X_7$};
    \node[below left = 0.1 and 0.1 of X71](T7){$T_7 =$};
    \node[](X91)[below =0.5 of X71] {$X_9$};
    \draw[-](X71)--(X91);
    
    \node[](X4'')[below left =3.5 and 6.5 of T4] {$X_4$};
    \node[](X6'')[below =0.5 of X4''] {$X_6$};
    \node[](X8'')[below =0.5 of X6''] {$X_8$};
    \node[below left= 0.2 and 0.25 of X4''](T41){$T_4^1 =$};
    \draw[-](X4'')--(X6'');
    \draw[-](X6'')--(X8'');
    
    \node[](X40)[below left = 0.0 and 1.5 of X4''] {$X_4$};
    \node[left= 0.25 of X40](T40){$T_4^0 =$};

    \node[](X4a)[right=2.0 of X4''] {$X_4$};
    \node[](X6a)[below left=0.5 and 0.05 of X4a] {$X_6$};
    \node[](X7a)[below right  =0.5 and 0.05 of X4a] {$X_7$};
    \node[](X8a)[below =0.5 of X6a] {$X_8$};
    \node[](X9a)[below =0.5 of X7a] {$X_9$};
    \node[below left= 0.2 and 0.5 of X4a](T4a){$T_4^2 =$};
    \draw[-](X4a)--(X6a);
    \draw[-](X4a)--(X7a);
    \draw[-](X6a)--(X8a);
    \draw[-](X7a)--(X9a);

    \node[](u31)[right=3.0 of X4a] {$3$};
    \node[](u71)[below left=0.5 and 0.25 of u31] {$7$};
    \node[](u81)[below right=0.5 and 0.25 of u31] {$8$};
    \node[](u111)[below left=0.5 and 0.25 of u71] {$11$};
    \node[](u121)[below right=0.5 and 0.25 of u71] {$12$};
    \node[](u131)[below =0.5 of u81] {$13$};
    \node[](u141)[below right=0.5 and 0.25 of u81] {$14$};
    \node[](u151)[below left=0.5 and 0.25 of u121] {$15$};
    \node[](u161)[below right=0.5 and 0.25 of u121] {$16$};
    \node[](u171)[below = 0.5 of u141] {$17$};
    \node[ left=0.5 of u31](G){$G_4 = G_4^2=$};
    \draw[-](u31)--(u71);
    \draw[-](u31)--(u81);
    \draw[-](u71)--(u81);
    \draw[-](u71)--(u111);
    \draw[-](u71)--(u121);
    \draw[-](u111)--(u121);
    \draw[-](u81)--(u131);
    \draw[-](u81)--(u141);
    \draw[-](u131)--(u141);
    \draw[-](u121)--(u151);
    \draw[-](u121)--(u161);
    \draw[-](u151)--(u161);
    \draw[-](u141)--(u171);
    
    \node[](u30)[right=2.5 of u31] {$3$};
    \node[](u70)[below left=0.5 and 0.25 of u30] {$7$};
    \node[](u80)[below right=0.5 and 0.25 of u30] {$8$};
    \node[ left=0.5 of u30](G){$G_4^0 = $};
    \draw[-](u30)--(u70);
    \draw[-](u30)--(u80);
    \draw[-](u70)--(u80);
    
    \node[](u32)[right=5.5 of u31] {$3$};
    \node[](u72)[below left=0.5 and 0.25 of u32] {$7$};
    \node[](u82)[below right=0.5 and 0.25 of u32] {$8$};
    \node[](u112)[below left=0.5 and 0.25 of u72] {$11$};
    \node[](u122)[below right=0.5 and 0.25 of u72] {$12$};
    \node[](u152)[below left=0.5 and 0.25 of u122] {$15$};
    \node[](u162)[below right=0.5 and 0.25 of u122] {$16$};
    \node[ left=0.5 of u32](G){$G_4^1 =$};
    \draw[-](u32)--(u72);
    \draw[-](u32)--(u82);
    \draw[-](u72)--(u82);
    \draw[-](u72)--(u112);
    \draw[-](u72)--(u122);
    \draw[-](u112)--(u122);
    \draw[-](u122)--(u152);
    \draw[-](u122)--(u162);
    \draw[-](u152)--(u162);

    \end{tikzpicture}
        \caption{Example: $G$ is an undirected graph, and $T$ is a tree decomposition of $G$ where:
$X_1 = \{1, 2, 3\}$,
$X_2 = \{2, 4, 5\}$,
$X_3 = \{2, 3, 6\}$,
$X_4 = \{3, 7, 8\}$,
$X_5 = \{5, 9, 10\}$,
$X_6 = \{7, 11, 12\}$,
$X_7 = \{8, 13, 14\}$,
$X_8 = \{12, 15, 16\}$, and
$X_9 = \{14, 17\}$.
$T_4$, $T_6$, and $T_7$ are induced subtrees of $T$ rooted at $X_4$, $X_6$, and $X_7$, respectively. We assume $X_6$ and $X_7$ are the first and second children of $X_4$. $T_4^0$ is an induced subgraph of $T$ containing only the node $X_4$. $T_4^1$ is an induced subtree of $T$ containing node $X_4$ and the nodes of $T_6$, and $T_4^2$ is an induced subtree of $T_4$ containing node $X_4$ and the nodes of $T_6$ and $T_7$. $G_4$ is the induced subgraph of $G$ represented by $T_4$. $G_4^0$ is the induced subgraph of $G$ represented by $T_4^0$. $G_4^1$ is the induced subgraph of $G$ represented by $T_4^1$. $G_4^2$ is the induced subgraph of $G$ represented by $T_4^2$. Since $T_4 = T_4^2$, it follows that $G_4 = G_4^2$.
}
        \label{fig:graph-example}
\end{figure*}
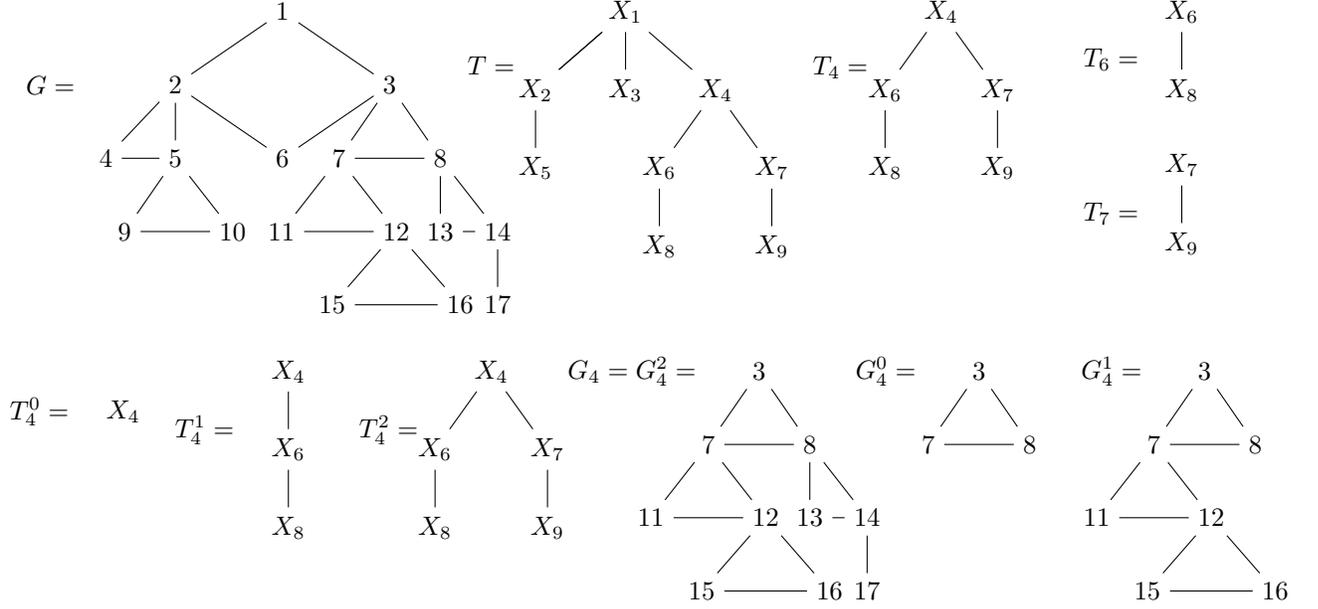

Let $X_{i_1}, X_{i_2}, \ldots, X_{i_l}$ be the children of $X_i$ in the tree $T_i$. For $1 \leq j \leq l$, \cref{obs:relation-between-MECs-of-G-i-j-G-i-j-1-G-ij} establishes a relationship between $G_i^j$, $G_i^{j-1}$, and $G_{i_j}$.
\begin{observation}
\label{obs:relation-between-MECs-of-G-i-j-G-i-j-1-G-ij}
Let $G$ be an undirected graph, and $(X = \{X_1, X_2,\ldots, X_s\}, T)$ be a tree decomposition of $G$. For $1\leq i \leq s$, let $X_{i_1}, X_{i_2}, \ldots, X_{i_l}$ be the children of $X_i$ in the tree $T_i$. Then, for all $j>0$:
\begin{enumerate}
    \item
    \label{item-1-of-obs:relation-between-MECs-of-G-i-j-G-i-j-1-G-ij}
    $G_i^{j-1}$ and $G_{i_j}$ are induced subgraphs of $G_i^j$.
    \item
    \label{item-2-of-obs:relation-between-MECs-of-G-i-j-G-i-j-1-G-ij}
    $G_i^j = G_i^{j-1} \cup G_{i_j}$.
    \item
    \label{item-3-of-obs:relation-between-MECs-of-G-i-j-G-i-j-1-G-ij}
    $X_i \subseteq V_{G_i^{j-1}}$.
    \item
    \label{item-4-of-obs:relation-between-MECs-of-G-i-j-G-i-j-1-G-ij}
    $X_{i_j} \subseteq V_{G_{i_j}}$.
    \item
    \label{item-5-of-obs:relation-between-MECs-of-G-i-j-G-i-j-1-G-ij}
    $I_j = X_i \cap X_{i_j} $ is a vertex separator of $G_i^j$ that separates $V_{G_i^{j-1}}\setminus{I_j}$ and $V_{G_{i_j}}\setminus{I_j}$.
    \item
    \label{item-6-of-obs:relation-between-MECs-of-G-i-j-G-i-j-1-G-ij}
    $V_{G_i^{j-1}} \cap V_{G_{i_j}} = X_i \cap X_{i_j} = I_j$
\end{enumerate}
\end{observation}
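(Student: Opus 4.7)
The plan is to verify the six items in turn, using only the three defining properties of a tree decomposition (Definition~\ref{def:tree-decomposition}) together with the definitions of $T_i^j$ and $G_i^j$. Items \ref{item-3-of-obs:relation-between-MECs-of-G-i-j-G-i-j-1-G-ij} and \ref{item-4-of-obs:relation-between-MECs-of-G-i-j-G-i-j-1-G-ij} are essentially by construction: $X_i$ is the root of $T_i^{j-1}$ and $X_{i_j}$ is the root of $T_{i_j}$, so all their elements appear among the vertices covered by the respective subtrees. Item \ref{item-1-of-obs:relation-between-MECs-of-G-i-j-G-i-j-1-G-ij} then follows because the node sets of $T_i^{j-1}$ and $T_{i_j}$ are contained in the node set of $T_i^j$, and since all three graphs are induced subgraphs of $G$, containment of vertex sets gives containment of induced subgraphs.

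For item \ref{item-2-of-obs:relation-between-MECs-of-G-i-j-G-i-j-1-G-ij}, the vertex equality $V_{G_i^j} = V_{G_i^{j-1}} \cup V_{G_{i_j}}$ is immediate from the fact that $T_i^j$ decomposes as the disjoint (as trees) union of $T_i^{j-1}$ and $T_{i_j}$ glued at the single edge $X_i \undir X_{i_j}$. For the edge equality, I will take any edge $(u,v) \in E_{G_i^j}$ and invoke Property~2 of Definition~\ref{def:tree-decomposition} to get some bag $X_k$ with $u,v \in X_k$; since $X_k$ must lie in either $T_i^{j-1}$ or $T_{i_j}$ (the two subtrees are node-disjoint), the edge is contained in $G_i^{j-1}$ or $G_{i_j}$ respectively. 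This is the key point where Property~2 is used.

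Item \ref{item-6-of-obs:relation-between-MECs-of-G-i-j-G-i-j-1-G-ij} is the central technical step and is where I expect the main subtlety to lie. The inclusion $X_i \cap X_{i_j} \subseteq V_{G_i^{j-1}} \cap V_{G_{i_j}}$ is immediate from items \ref{item-3-of-obs:relation-between-MECs-of-G-i-j-G-i-j-1-G-ij} and \ref{item-4-of-obs:relation-between-MECs-of-G-i-j-G-i-j-1-G-ij}. For the reverse inclusion, I will take a vertex $v \in V_{G_i^{j-1}} \cap V_{G_{i_j}}$, locate bags $X_a \in T_i^{j-1}$ and $X_b \in T_{i_j}$ with $v \in X_a \cap X_b$, and observe that the unique path in $T$ from $X_a$ to $X_b$ must cross the edge $X_i \undir X_{i_j}$ since $T_i^{j-1}$ and $T_{i_j}$ lie on opposite sides of that edge. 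Property~3 of the tree decomposition (the connectedness of $\{X_k : v \in X_k\}$) then forces both $X_i$ and $X_{i_j}$ to contain $v$, giving $v \in I_j$.

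Finally, item \ref{item-5-of-obs:relation-between-MECs-of-G-i-j-G-i-j-1-G-ij} is an easy consequence of items \ref{item-2-of-obs:relation-between-MECs-of-G-i-j-G-i-j-1-G-ij} and \ref{item-6-of-obs:relation-between-MECs-of-G-i-j-G-i-j-1-G-ij}: any edge of $G_i^j$ joining $u \in V_{G_i^{j-1}} \setminus I_j$ to $v \in V_{G_{i_j}} \setminus I_j$ must, by item \ref{item-2-of-obs:relation-between-MECs-of-G-i-j-G-i-j-1-G-ij}, lie inside $G_i^{j-1}$ or $G_{i_j}$, forcing $v$ or $u$ respectively into the intersection $V_{G_i^{j-1}} \cap V_{G_{i_j}} = I_j$ (by item \ref{item-6-of-obs:relation-between-MECs-of-G-i-j-G-i-j-1-G-ij}), which contradicts the choice of $u$ and $v$. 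Overall the whole observation is a bookkeeping exercise in the standard properties of tree decompositions; the only nontrivial ingredient is the connectedness argument in item \ref{item-6-of-obs:relation-between-MECs-of-G-i-j-G-i-j-1-G-ij}.
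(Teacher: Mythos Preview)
Your proposal is correct and follows essentially the same approach as the paper, which treats items~\ref{item-1-of-obs:relation-between-MECs-of-G-i-j-G-i-j-1-G-ij}--\ref{item-4-of-obs:relation-between-MECs-of-G-i-j-G-i-j-1-G-ij} as immediate from the construction, invokes \cref{prop:tree-decomposition-property} directly for item~\ref{item-5-of-obs:relation-between-MECs-of-G-i-j-G-i-j-1-G-ij}, and then deduces item~\ref{item-6-of-obs:relation-between-MECs-of-G-i-j-G-i-j-1-G-ij} from items~\ref{item-3-of-obs:relation-between-MECs-of-G-i-j-G-i-j-1-G-ij}, \ref{item-4-of-obs:relation-between-MECs-of-G-i-j-G-i-j-1-G-ij}, and~\ref{item-5-of-obs:relation-between-MECs-of-G-i-j-G-i-j-1-G-ij}. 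Your version reverses the order of items~\ref{item-5-of-obs:relation-between-MECs-of-G-i-j-G-i-j-1-G-ij} and~\ref{item-6-of-obs:relation-between-MECs-of-G-i-j-G-i-j-1-G-ij} (deriving the separator property from the intersection equality rather than citing it as a known fact), but this is only a minor reorganization of the same underlying tree-decomposition arguments.
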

\begin{proof} From the construction of $G_i^j, G_i^{j-1}$ and $G_{i_j}$,
\cref{item-1-of-obs:relation-between-MECs-of-G-i-j-G-i-j-1-G-ij,item-2-of-obs:relation-between-MECs-of-G-i-j-G-i-j-1-G-ij,item-3-of-obs:relation-between-MECs-of-G-i-j-G-i-j-1-G-ij,item-4-of-obs:relation-between-MECs-of-G-i-j-G-i-j-1-G-ij} are true. Since $T_i^j$ is a tree decomposition of $G_i^j$, $I_j = X_i \cap X_{i_j}$ serves as a vertex separator of $G_i^j$ that separates $V_{G_i^{j-1}} \setminus I_j$ and $V_{G_{i_j}} \setminus I_j$, validating \cref{item-5-of-obs:relation-between-MECs-of-G-i-j-G-i-j-1-G-ij}. The validation of \cref{item-6-of-obs:relation-between-MECs-of-G-i-j-G-i-j-1-G-ij} comes from the fact that $X_i \subseteq V_{G_i^{j-1}}$, $X_{i_j} \subseteq V_{G_{i_j}}$, and $X_i \cap X_{i_j}$ is a vertex separator of $G_i^j$ that separates $V_{G_i^{j-1}} \setminus I_j$ and $V_{G_{i_j}}\setminus{I_j}$.
\end{proof}

Remember from the above discussion that we want to count MECs of $G_1^p$, where $p$ is the number of children of the root node $X_1$ in the tree decomposition of $G$. \Cref{obs:relation-between-MECs-of-G-i-j-G-i-j-1-G-ij} shows that if we cut the edge $X_i-X_{i_j}$ in the tree $T_i^j$ then we get two subtrees $T_i^{j-1}$ and $T_{i_j}$ of $T_i$. This further gives us two induced subgraphs $G_i^{j-1}$ and $G_{i_j}$ of $G_i^j$ that have the properties \cref{item-1-of-obs:relation-between-MECs-of-G-i-j-G-i-j-1-G-ij,item-2-of-obs:relation-between-MECs-of-G-i-j-G-i-j-1-G-ij,item-3-of-obs:relation-between-MECs-of-G-i-j-G-i-j-1-G-ij,item-4-of-obs:relation-between-MECs-of-G-i-j-G-i-j-1-G-ij,item-5-of-obs:relation-between-MECs-of-G-i-j-G-i-j-1-G-ij,item-6-of-obs:relation-between-MECs-of-G-i-j-G-i-j-1-G-ij} of \cref{obs:relation-between-MECs-of-G-i-j-G-i-j-1-G-ij}.  
We establish a relation between the MECs of $G_i^j$, $G_i^{j-1}$, and $G_{i_j}$that if we have the knowledge about (a) $|\setofMECs{G_i^{j-1}, O_1, P_{11}, P_{12}}|$ for each shadow $(O_1, P_{11}, P_{12})$ of $G_i^{j-1}[X_i\cup N(X_i, G_i^{j-1})]$, and (b) $|\setofMECs{G_{i_j}, O_2, P_{21}, P_{22}}|$ for each shadow $(O_2, P_{21}, P_{22})$ of $G_{i_j}[X_{i_j}\cup N(X_{i_j}, G_{i_j})]$ then we can compute $|\setofMECs{G_i^j, O, P_1, P_2}|$ for each shadow $(O, P_1, P_2)$ of $G_i^j[X_i\cup N(X_i, G_i^j)]$.
This gives us a recursive method to compute $|\setofMECs{G_1^p, O, P_1, P_2}|$ for each shadow $(O, P_1, P_2)$ of $G_1^p$ on $X_1\cup N(X_1, G_1^p)$.
Then, with the assistance of \cref{lem:partition-of-MECs-of-H}, we can effectively count the MECs of $G = G_1^p$.

We first show the relation between $\setofMECs{G_i^j}$, $\setofMECs{G_i^{j-1}}$ and $\setofMECs{G_{i_j}}$, using which we construct a recursive algorithm to count the MECs of $G$. To understand this relation, we first go through a similar simple scenario. Let us consider an undirected graph $H$. Let $H_1$ and $H_2$ be two induced subgraphs of $H$ such that $H = H_1 \cup H_2$, and $I = V_{H_1} \cap V_{H_2}$ is a vertex separator of $H$ that separates $V_{H_1}\setminus I$ and $V_{H_2}\setminus I$. Let $S_1, S_2$ be the  subsets of $V_{H_1}$ and $V_{H_2}$, respectively, such that both $S_1$ and $S_2$ contains nodes of $I$, i.e., $S_1\cap S_2 = I$. Here, $H$ resembles $G_i^j$, $H_1$ resembles $G_i^{j-1}$, $H_2$ resembles $G_{i_j}$, $S_1$ resembles $X_i$, $S_2$ resembles $X_{i_j}$, and $I$ resembles $I_j$. We show some relation between MECs of $H, H_1$, and $H_2$.  

For given shadows $(O_1, P_{11}, P_{12})$ of $H_1[S_1\cup N(S_1, H_1)]$, $(O_2, P_{21}, P_{22})$ of $H_2[S_2\cup N(S_2, H_2)]$ and $(O,P_1, P_2)$ of $H[S_1\cup S_2 \cup N(S_1\cup S_2, H)]$, and MECs $M \in \setofMECs{H, O, P_1, P_2}$, $M_1\in \setofMECs{H_1, O_1, P_{11}, P_{12}}$, and $M_2\in \setofMECs{H_2, O_2, P_{21}, P_{22}}$, we first show a necessary condition for the shadow  $(O, P_1, P_2)$ such that $\mathcal{P}(M, V_{H_1}, V_{H_2}) = (M_1, M_2)$.

\subsection{Necessary condition for a shadow of an MEC}
\label{subsection:necessary-condition-of-shadow}
\begin{lemma}
\label{obs1:O-structure-for-existence-of-MEC}
Let $H$ be an undirected graph, and let $H_1$ and $H_2$ be two induced subgraphs of $H$ such that $H = H_1 \cup H_2$, and $I = V_{H_1} \cap V_{H_2}$ is a vertex separator of $H$ that separates $V_{H_1} \setminus I$ and $V_{H_2} \setminus I$.
Let $S_1$ and $S_2$ be subsets of $V_{H_1}$ and $V_{H_2}$, respectively, such that $S_1 \cap S_2 = I$.
Let $Y = S_1 \cup S_2 \cup N(S_1 \cup S_2, H)$, $Y_1 = S_1 \cup N(S_1, H_1)$, and $Y_2 = S_2 \cup N(S_2, H_2)$.
Let $M$, $M_1$, and $M_2$ be the MECs of $H$, $H_1$, and $H_2$, respectively. Let $(O, P_1, P_2) = \shadowofMEC{M, Y}$, $(O_1, P_{11}, P_{12}) = \shadowofMEC{M_1, Y_1}$, and $(O_2, P_{21}, P_{22}) = \shadowofMEC{M_2, Y_2}$.
If $\mathcal{P}(M, V_{H_1}, V_{H_2}) = (M_1, M_2)$, then:
\begin{enumerate}
    \item \label{item-1-of-obs1:O-structure-for-existence-of-MEC}
    For $a \in \{1, 2\}$, if $u \rightarrow v \in O_a$, then $u \rightarrow v \in O$.
    
    \item \label{item-2-of-obs1:O-structure-for-existence-of-MEC}
    For $a \in \{1, 2\}$, $\mathcal{V}(O_a) = \mathcal{V}(O[V_{O_a}])$ (see \cref{def:v-structure} for $\mathcal{V}(G)$).
    
    \item \label{item-3-of-obs1:O-structure-for-existence-of-MEC}
    For $a \in \{1, 2\}$, if $u-v \in O_a$, then $u \rightarrow v \in O$ if, and only if, either of the following occurs:
    \begin{enumerate}
        \item \label{subitem-1-of-item-3-of-obs1:O-structure-for-existence-of-MEC}
        $u \rightarrow v$ is strongly protected in $O$.
        
        \item \label{subitem-2-of-item-3-of-obs1:O-structure-for-existence-of-MEC}
        There exists $x-y \in O_a$ such that $x \rightarrow y \in O$, and
        $P_{a1}((x,y),(u,v)) = 1$ (i.e., there exists a \tfp{} from $(x,y)$ to
        $(u,v)$ in $M_a$ with $(x, y) \neq (u, v)$).
        
        \item \label{subitem-3-of-item-3-of-obs1:O-structure-for-existence-of-MEC}
        There exists $x-y \in O_a$ such that $x \rightarrow y \in O$, and
        $P_{a2}((x, y), v) = P_{a2}((v, u), x) = 1$ (i.e., $y \neq v$, $u \neq x$, and there exist \tfps{}
         in $M_a$ from $(x,y)$ to $v$, and from $(v,u)$ to $x$).
    \end{enumerate}
    
    \item \label{item-5-of-obs1:O-structure-for-existence-of-MEC}
    For $((u,v),(x,y)) \in E_O \times E_O$, $P_1((u,v), (x,y)) = 1$ if, and only if, $(u, v) \neq (x, y)$ and at least one of the following occurs:
    \begin{enumerate}
        \item \label{subitem-1-of-item-5-of-obs1:O-structure-for-existence-of-MEC}
        There exists a \tfp{} from $(u,v)$ to $(x,y)$ in $O$.
        
        \item \label{subitem-2-of-item-5-of-obs1:O-structure-for-existence-of-MEC}
        For $a \in \{1, 2\}$, $P_{a1}((u,v),(x,y)) = 1$.
        
        \item \label{subitem-3-of-item-5-of-obs1:O-structure-for-existence-of-MEC}
        There exists $(z_1, z_2) \in E_O$ such that $P_1((u,v), (z_1, z_2)) = P_1((z_1, z_2), (x,y)) = 1$.
    \end{enumerate}
    
    \item \label{item-6-of-obs1:O-structure-for-existence-of-MEC}
    For $((u,v),w) \in E_O \times V_O$, $P_2((u,v),w) = 1$ if, and only if, $v \neq w$ and at least one of the following occurs:
    \begin{enumerate}
        \item \label{subitem-1-of-item-6-of-obs1:O-structure-for-existence-of-MEC}
        There exists a \tfp{} from $(u,v)$ to $w$ in $O$.
        
        \item \label{subitem-2-of-item-6-of-obs1:O-structure-for-existence-of-MEC}
        For $a \in \{1, 2\}$, $P_{a2}((u, v), w) = 1$.
        
        \item \label{subitem-3-of-item-6-of-obs1:O-structure-for-existence-of-MEC}
        There exists $(z_1, z_2) \in E_O$ such that
        $P_1((u,v), (z_1, z_2)) = P_2((z_1, z_2), w) = 1$.
    \end{enumerate}
\end{enumerate}
\end{lemma}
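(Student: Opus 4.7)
The plan is to prove each of the five items in turn, using the projection machinery from Section~\ref{subsection:new-results}. Items~\ref{item-1-of-obs1:O-structure-for-existence-of-MEC} and~\ref{item-2-of-obs1:O-structure-for-existence-of-MEC} are essentially direct. For item~\ref{item-1-of-obs1:O-structure-for-existence-of-MEC}, $u \to v \in O_a = M_a[Y_a]$ lifts to $u \to v \in M_a$ and then, by Lemma~\ref{lem:directed-edge-is-same-in-projected-MEC}, to $u \to v \in M$; since $u, v \in Y_a \subseteq Y$, this gives $u \to v \in O$. For item~\ref{item-2-of-obs1:O-structure-for-existence-of-MEC}, the defining equality $\mathcal{V}(M_a) = \mathcal{V}(M[V_{H_a}])$ of projection, restricted to $V_{O_a} = Y_a \subseteq V_{H_a}$, gives $\mathcal{V}(O_a) = \mathcal{V}(M[Y_a]) = \mathcal{V}(O[V_{O_a}])$.

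For item~\ref{item-3-of-obs1:O-structure-for-existence-of-MEC} I would argue both directions separately. For the backward direction, each sub-case implies $u \to v \in M$, hence $u \to v \in O$. Sub-case~(\ref{subitem-1-of-item-3-of-obs1:O-structure-for-existence-of-MEC}) gives this by Theorem~\ref{thm:nes-and-suf-cond-for-chordal-graph-to-be-an-MEC}. For sub-case~(\ref{subitem-2-of-item-3-of-obs1:O-structure-for-existence-of-MEC}), the \tfp{} from $(x, y)$ to $(u, v)$ in $M_a$ lifts to a \tfp{} in $M$ by Corollary~\ref{obs:cond-for-ud-path-in-M_a-to-be-ud-path-in-M} (using $y \to x \notin M$), and Corollary~\ref{obs:every-edge-of-triangle-free-path-is-directed}, seeded by the directed edge $x \to y \in M$, then forces every edge along this \tfp{} to be directed, including the final edge $u \to v$. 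Sub-case~(\ref{subitem-3-of-item-3-of-obs1:O-structure-for-existence-of-MEC}) is a symmetric argument that combines both lifted \tfp{}'s into a directed configuration around $u, v, x$. For the forward direction, assume $u-v \in O_a$ yet $u \to v \in O$. Since $u \to v \in M$ is strongly protected, I would case-analyze the witness configurations in Figure~\ref{fig:strongly-protected-edge}: if the witness vertices all lie in $Y$, this gives sub-case~(\ref{subitem-1-of-item-3-of-obs1:O-structure-for-existence-of-MEC}); otherwise, the fact that $u-v$ is undirected in $M_a$ means the direction of $u \to v$ in $M$ must be inherited from some edge whose direction is already present in $M_a$, and tracing this inheritance through $M_a$ --- using the vertex separator $I$ to bound where the trace can travel --- produces a \tfp{} in $M_a$ matching sub-case~(\ref{subitem-2-of-item-3-of-obs1:O-structure-for-existence-of-MEC}) or~(\ref{subitem-3-of-item-3-of-obs1:O-structure-for-existence-of-MEC}).

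For items~\ref{item-5-of-obs1:O-structure-for-existence-of-MEC} and~\ref{item-6-of-obs1:O-structure-for-existence-of-MEC}, both biconditionals follow a common template. The backward direction is a lifting-and-concatenation argument: sub-case~(a) gives a \tfp{} in $O$ that lifts to $M$ since $O = M[Y]$ is induced; sub-case~(b) gives a \tfp{} in $M_a$ that lifts to $M$ via Corollary~\ref{obs:cond-for-ud-path-in-M_a-to-be-ud-path-in-M}; sub-case~(c) splices two \tfp{}'s via Observation~\ref{obs:concatenate-triangle-free-paths}. For the forward direction, take a \tfp{} $P = (u_1, u_2, \ldots, u_l)$ in $M$ witnessing the assumption. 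If $P$ lies entirely in $Y$, then $P$ is a \tfp{} in $O$, giving sub-case~(a). If $P$ lies entirely in $V_{H_a}$ for some $a$, then Corollary~\ref{corr:tfp-in-main-graph-implies-tfp-in-projected-graph} gives a \tfp{} in $M_a$, yielding sub-case~(b). Otherwise $P$ crosses between $V_{H_1} \setminus V_{H_2}$ and $V_{H_2} \setminus V_{H_1}$; at the first such crossing, there is an index $i$ with $u_i \in I \subseteq Y$ and $u_{i-1} \in N(S_1 \cup S_2, H) \subseteq Y$, so $(u_{i-1}, u_i) \in E_O$, and splitting $P$ at this edge yields sub-case~(c).

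The main obstacle I anticipate is the forward direction of item~\ref{item-3-of-obs1:O-structure-for-existence-of-MEC}: converting a strong-protection witness in $M$ that may use vertices outside $Y$ into a \tfp{} witness recorded by $P_{a1}$ or $P_{a2}$. The argument rests on the observation that because $u-v$ is undirected in $M_a$, any mechanism directing $u \to v$ in $M$ but not in $M_a$ must ultimately depend on an edge whose direction survives in both $M$ and $M_a$; the vertex separator then bounds how far the dependence can travel, producing exactly the \tfp{} patterns encoded by sub-cases~(\ref{subitem-2-of-item-3-of-obs1:O-structure-for-existence-of-MEC}) and~(\ref{subitem-3-of-item-3-of-obs1:O-structure-for-existence-of-MEC}).
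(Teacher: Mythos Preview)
Your plan for items~\ref{item-1-of-obs1:O-structure-for-existence-of-MEC}, \ref{item-2-of-obs1:O-structure-for-existence-of-MEC}, \ref{item-5-of-obs1:O-structure-for-existence-of-MEC}, \ref{item-6-of-obs1:O-structure-for-existence-of-MEC}, and the backward direction of item~\ref{item-3-of-obs1:O-structure-for-existence-of-MEC}, matches the paper's proof. (For the forward direction of items~\ref{item-5-of-obs1:O-structure-for-existence-of-MEC} and~\ref{item-6-of-obs1:O-structure-for-existence-of-MEC} your choice of split point needs slightly more care to ensure both pieces are non-degenerate---the paper first checks whether $u_3 \in V_O$ and otherwise takes the first exit from $V_{M_a}$---but this is a detail.)

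The genuine gap is the forward direction of item~\ref{item-3-of-obs1:O-structure-for-existence-of-MEC}. First, a correction: in sub-cases~(\ref{subitem-2-of-item-3-of-obs1:O-structure-for-existence-of-MEC}) and~(\ref{subitem-3-of-item-3-of-obs1:O-structure-for-existence-of-MEC}) the seed edge $x-y$ is \emph{undirected} in $O_a$ (hence in $M_a$) but directed in $O$, so the direction you are tracing back to is not ``already present in $M_a$''---it originates in $O$. Second, and more seriously, the tracing cannot be done by a single inspection of the strong-protection witness. The paper proceeds by induction: it fixes a topological order $\tau$ of the chain graph $M[V_{H_a}]$, chooses $u \to v$ minimal (lexicographically in $(\tau(v), -\tau(u))$) among edges with $u - v \in M_a$ but $u \to v \in M$, and then case-splits on the strong-protection witness in $M$. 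When the witness vertex $w$ lies outside $V_{H_a}$ the separator forces $u \in I$, which places the entire configuration inside $O$ and gives sub-case~(\ref{subitem-1-of-item-3-of-obs1:O-structure-for-existence-of-MEC}). When $w \in V_{H_a}$, one first argues that the relevant witness edge (e.g.\ $w - u$ in the $w \to u \to v$ case) is itself undirected in $M_a$---this uses that $M_a$ has no induced $a\to b - c$---and then applies the inductive hypothesis to it. The resulting \tfp{} in $M_a$ for the predecessor edge must then be \emph{extended or rerouted} to produce a \tfp{} witness for $u \to v$; this extension step is the bulk of the argument and requires a further case split on which of (a), (b), (c) held inductively, together with chordality arguments (working inside the \ucc{} of $M_a$ containing $u,v$) to control chords introduced when appending $v$. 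Your sketch does not account for this recursive structure or the path surgery it requires.
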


\begin{proof}[Proof of \cref{obs1:O-structure-for-existence-of-MEC}]
  Let $M_1 \in \setofMECs{H_1, O_1, P_{11}, P_{12}}$,
$M_2 \in \setofMECs{H_2, O_2, P_{21}, P_{22}}$, and
$M \in \setofMECs{H, O, P_1, P_2}$ be such that
$\mathcal{P}(M, V_{M_1}, V_{M_2}) = (M_1, M_2)$.

\begin{proof}[Proof of \cref{item-1-of-obs1:O-structure-for-existence-of-MEC}:]
Suppose for $a \in \{1,2\}$, $u \rightarrow v \in O_a$. 
From the construction, $u, v \in V_{O_a} = S_a \cup N(S_a, H_a) \subseteq S_1 \cup S_2 \cup N(S_1 \cup S_2, H)$.
From \cref{def:subsets-of-MEC-based-on-O-P1-and-P2}, $O_a$ is an induced subgraph of $M_a$ on $S_a \cup N(S_a, H_a)$. This implies that $u \rightarrow v \in M_a$. Since $M_a$ is a projection of $M$, from \cref{lem:directed-edge-is-same-in-projected-MEC}, $u \rightarrow v \in M$. 
From \cref{def:subsets-of-MEC-based-on-O-P1-and-P2}, $O = M[S_1 \cup S_2 \cup N(S_1 \cup S_2, H)]$.
This implies that $u \rightarrow v \in O$.
\end{proof}

\begin{proof}[Proof of \cref{item-2-of-obs1:O-structure-for-existence-of-MEC}]
From our assumption, for $a\in \{1,2\}$, $M_a \in \setofMECs{H_a, O_a, P_{a1}, P_{a2}}$. This implies $(O_a, P_{a1}, P_{a2})$ is the shadow of $M_a$ on $V_{O_a} = S_a \cup N(S_a, H_a)$. Then, from \cref{item-1-of-def:shadow} of \cref{def:shadow}, $O_a = M_a[S_a \cup N(S_a, H_a)]$. This implies that $\mathcal{V}(O_a) = \mathcal{V}(M_a[S_a \cup N(S_a, H_a)])$. Since $M_a$ is a projection of $M$ onto $V_{H_a}$, and $S_a \cup N(S_a, H_a) \subseteq V_{H_a}$, \cref{corr:projection-of-subgraph} implies that $\mathcal{V}(M_a[S_a \cup N(S_a, H_a)]) = \mathcal{V}(M[S_a \cup N(S_a, H_a)])$. Furthermore, considering that $O$ is an induced subgraph of $M$ over $S_1 \cup S_2 \cup N(S_1 \cup S_2, H)$, and given that $S_a \cup N(S_a, H_a) \subseteq S_1 \cup S_2 \cup N(S_1 \cup S_2, H)$, we can conclude that $\mathcal{V}(M[S_a \cup N(S_a, H_a)]) = \mathcal{V}(O[S_a \cup N(S_a, H_a)])$. Thus, we have successfully established the relationship $\mathcal{V}(O_a) = \mathcal{V}(O[S_a \cup N(S_a, H_a)])$.
This further implies that $\mathcal{V}(O_a) = \mathcal{V}(O[V_{O_a}])$, as from the construction, $V_{O_a} = S_a \cup N(S_a, H_a)$.

\end{proof}

\begin{proof}[Proof of \cref{item-3-of-obs1:O-structure-for-existence-of-MEC}]
We present a more robust result than \cref{item-3-of-obs1:O-structure-for-existence-of-MEC} from \cref{obs1:O-structure-for-existence-of-MEC}.

\begin{claim}
\label{copy-of-item-3-of-obs1:O-structure-for-existence-of-MEC}
For $a\in \{1,2\}$, for an edge $u-v \in M_a$, $u\rightarrow v \in M$ if and only if either of the following occurs:
    \begin{enumerate}
        \item 
        \label{subitem1:copy-of-item-3-of-obs1:O-structure-for-existence-of-MEC}
        $u\rightarrow v$ is strongly protected in $O$.
        \item
        \label{subitem2:copy-of-item-3-of-obs1:O-structure-for-existence-of-MEC}
        There exists an undirected edge $x-y\in O_a$ such that $x\rightarrow y \in O$, and there exists a \tfp{} $Q = (u_1=x, u_2 =y, \ldots, u_{l-1} =u, u_l =v)$ from $(x,y)$ to $(u,v)$ in $M_a$ where $l\geq 3$.
        \item 
        \label{subitem3:copy-of-item-3-of-obs1:O-structure-for-existence-of-MEC}
        There exists an undirected edge $x-y \in O_a$ such that $x\rightarrow y \in O$, and there exist \tfps{} $Q_1 = (u_1=x, u_2 = y, \ldots, u_l = v)$ from $(x,y)$ to $v$ in $M_a$, and $Q_2 = (v_1 =v, v_2 =u, \ldots, v_m =x)$ from $(v,u)$ to $x$ in $M_a$ where $l,m\geq 3$. 
    \end{enumerate}
\end{claim}

\begin{proof} 
\Cref{obs:directed-edge-is-in-O,obs:directed-edge-and-undirected-path-implies-directed-edge,obs:directed-edge-with-cycle-implies-directed-edge} prove the $\rightarrow$ direction of \cref{copy-of-item-3-of-obs1:O-structure-for-existence-of-MEC}.

\begin{observation}
\label{obs:directed-edge-is-in-O}
If $u\rightarrow v$ is strongly protected in $O$, then $u\rightarrow v \in M$.
\end{observation}

\begin{proof}
The fact that $u\rightarrow v$ is strongly protected in $O$ implies that $u\rightarrow v \in O$.
From our assumption, $M\in \setofMECs{H, O, P_1, P_2}$.
From \cref{def:subsets-of-MEC-based-on-O-P1-and-P2}, this implies $(O, P_1, P_2)$ is the shadow of $M$ on $Y$. Therefore, from \cref{item-1-of-def:shadow} of \cref{def:shadow}, $O$ is an induced subgraph of $M$ on $Y$.
Thus, $u\rightarrow v \in O$ implies that $u\rightarrow v \in M$.
\end{proof}

\begin{observation}
\label{obs:directed-edge-and-undirected-path-implies-directed-edge}
For $u-v \in M_a$, if there exists $x-y \in O_a$ such that there is a \tfp{} $Q = (u_1 = x, u_2= y, \ldots, u_{l-1} = u, u_l = v)$ from $(x,y)$ to $(u,v)$ in $M_a$, and $x\rightarrow y \in O$, then for all $1\leq i < l$, $u_i \rightarrow u_{i+1} \in M$, more specifically, $u\rightarrow v \in M$.
\end{observation}
\begin{proof}
Suppose $u-v,x-y  \in M_a$, $x\rightarrow y \in O$, and $Q = (u_1 = x, u_2= y, \ldots, u_{l-1} = u, u_l = v)$ is a \tfp{} from $(x,y)$ to $(u,v)$ in $M_a$. 
Since $O$ is an induced subgraph of $M$, and $x\rightarrow y \in O$, it follows that $x\rightarrow y \in M$. Then, according to \cref{obs:every-edge-of-triangle-free-path-is-directed}, for all $1\leq i < l$, $u_i \rightarrow u_{i+1} \in M$, and more specifically, $u\rightarrow v \in M$.
\end{proof}

\begin{observation}
\label{obs:directed-edge-with-cycle-implies-directed-edge}
For $u-v \in M_a$, if there exists an undirected edge $x-y \in O_a$ such that $x \rightarrow y \in O$, and in $M_a$, there exist \tfps{} $Q_1 = (u_1 = x, u_2 = y, \ldots, u_l = v)$ from $(x,y)$ to $v$ in $M_a$, and $Q_2 = (v_1 = v, v_2 = u, \ldots, v_m = x)$ from $(v,u)$ to $x$ in $M_a$, where $l, m \geq 3$, then $u\rightarrow v \in M$.
\end{observation}
\begin{proof}
Suppose $u-v, x-y \in O_a$, $x \rightarrow y \in O$, $Q_1 = (u_1 = x, u_2 = y, \ldots, u_l = v)$ is a \tfp{} from $(x,y)$ to $v$ in $M_a$, and $Q_2 = (v_1 = v, v_2 = u, \ldots, v_m = x)$ is a \tfp{} from $(v,u)$ to $x$ in $M_a$, with $l, m \geq 3$.

Since $O$ is an induced subgraph of $M$ and $x \rightarrow y \in O$, it follows that $x \rightarrow y \in M$. Then, according to \cref{obs:cond-for-ud-path-in-M_a-to-be-ud-path-in-M}, $Q_1$ is a \tfp{} from $(x,y)$ to $v$ in $M$. Since $x\rightarrow y \in M$, $Q_1$ is a directed path in $M$.

$Q_2$ cannot be a path in $M$, otherwise, we would obtain a directed cycle in $M$ by concatenating $Q_1$ and $Q_2$, which contradicts \cref{item-1-theorem-nec-suf-cond-for-MEC} of \cref{thm:nes-and-suf-cond-for-chordal-graph-to-be-an-MEC}.
But then, from \cref{obs:cond-for-ud-path-in-M_a-to-be-ud-path-in-M}, if $u\rightarrow v \notin M$ then $Q_2$ is a \tfp{} in $M$, a contradiction. This implies $u\rightarrow v \in M$. This completes the proof of \cref{obs:directed-edge-with-cycle-implies-directed-edge}.
\end{proof}

We now prove the $\leftarrow$ direction of \cref{copy-of-item-3-of-obs1:O-structure-for-existence-of-MEC}. W.l.o.g., let $a = 1$. Since $M$ is a chain graph,  $M[V_{M_1}]$ is also a chain graph. 
Then from \cref{prop:every-chain-graph-has-a-topological-ordering},  there must exist an ordering $\tau$ of the vertices of $M_1$ such that for $u,v\in V_{M_1}$, if $u\rightarrow v \in M$ then $\tau(u) < \tau(v)$.
Pick a least rank node $v$ in $\tau$ such that for any edge $x-y \in O_1$, if $x\rightarrow y \in M$, and $\tau(y) < \tau(v)$ then $x\rightarrow y$ obeys \cref{copy-of-item-3-of-obs1:O-structure-for-existence-of-MEC} (i.e., $x\rightarrow y$ either follows \cref{subitem1:copy-of-item-3-of-obs1:O-structure-for-existence-of-MEC} of \cref{copy-of-item-3-of-obs1:O-structure-for-existence-of-MEC}, or it follows \cref{subitem2:copy-of-item-3-of-obs1:O-structure-for-existence-of-MEC} of \cref{copy-of-item-3-of-obs1:O-structure-for-existence-of-MEC}, or it follows \cref{subitem3:copy-of-item-3-of-obs1:O-structure-for-existence-of-MEC} of \cref{copy-of-item-3-of-obs1:O-structure-for-existence-of-MEC}). After picking such a node $v$, pick highest rank node $u$ such that $u-v \in M_1$, $u\rightarrow v \in M$, and for any node $z\in V_{M_1}$ such that $\tau(u) < \tau(z) < \tau(v)$, if $z-v \in M_1$, and $z\rightarrow v \in M$ then $z\rightarrow v$ obeys \cref{copy-of-item-3-of-obs1:O-structure-for-existence-of-MEC}. We show that $u\rightarrow v$ also obeys \cref{copy-of-item-3-of-obs1:O-structure-for-existence-of-MEC}. Our selection of $u-v \in M_1$ implies the following claim:
\begin{claim}
\label{claim:predecessor-of-u-v-obeys-the-claim}
For any edge $u'-v' \in M_1$ such that $u'\rightarrow v' \in M$, if $\tau(v') < \tau(v)$, or $v'=v$ and $\tau(u)< \tau(u')$ then $u'\rightarrow v'$ obeys \cref{copy-of-item-3-of-obs1:O-structure-for-existence-of-MEC},
i.e., either
\begin{enumerate}
    \item $u'\rightarrow v'$ is strongly protected in $O$, or
    \item there exists $x-y \in O_1$ such that $x\rightarrow y \in O$ and there exists a \tfp{} from $(x,y)$ to $(u',v')$ in $M_1$ of length greater than one (i.e., the path containing more than two nodes), or
    \item  there exists $x-y \in O_1$ such that $x\rightarrow y\in O$ and there exist \tfps{} in $M_1$ from $(x,y)$ to $v'$, and from $(v',u')$ to $x$ such that each \tfp{} is of length greater than one.
\end{enumerate}
\end{claim}

Since $M$ is an MEC, and $u\rightarrow v \in M$, from \cref{item-4-theorem-nec-suf-cond-for-MEC} of \cref{thm:nes-and-suf-cond-for-chordal-graph-to-be-an-MEC}, $u\rightarrow v$ must be strongly protected in $M$. This implies that $u\rightarrow v$ is a part of one of the induced subgraphs in \cref{fig:strongly-protected-edge}. We show that $u\rightarrow v$ obeys \cref{copy-of-item-3-of-obs1:O-structure-for-existence-of-MEC} in each such possible induced subgraph of $M$.
We now go through all the possibilities.

\textbf{Case 1:} Suppose $u \rightarrow v$ is strongly protected in $M$ because it is part of an induced subgraph $w \rightarrow u \rightarrow v \in M$ (similar to \Cref{fig:strongly-protected-edge}.a). There are two possibilities: either $w \in V_{H_1}$ or $w \notin V_{H_1}$.

If $w \notin V_{H_1}$, then $w \in V_{H_2} \setminus I$. This implies that $u \in I$, as $u \in V_{H_1}$ is a neighbor of $w \in V_{H_2} \setminus I$ in $H$, and $I$ is a vertex separator of $H$ that separates $V_{H_1} \setminus I$ and $V_{H_2} \setminus I$. This, in turn, implies that $u, v, w \in I \cup N(I, H) \subseteq V_O$, and the induced subgraph is also a part of $O$. Thus, $u \rightarrow v$ is strongly protected in $O$, as shown in \cref{fig:strongly-protected-edge}.a. Let's now consider the other possibility.

Suppose $w \in V_{H_1}$. Since $M_1$ is a projection of $M$ onto $V_{H_1}$, we can deduce, from \cref{corr:directed-edge-of-main-graph-is-either-dir-or-ud-in-proj-graph} and \cref{item-3-theorem-nec-suf-cond-for-MEC} of \cref{thm:nes-and-suf-cond-for-chordal-graph-to-be-an-MEC}, that $w - u \in M_1$. Also, as $w \rightarrow u \rightarrow v \in M$ and by the construction of $\tau$, we have $\tau(w) < \tau(u) < \tau(v)$. According to \cref{claim:predecessor-of-u-v-obeys-the-claim}, $w \rightarrow u$ satisfies \cref{copy-of-item-3-of-obs1:O-structure-for-existence-of-MEC}, specifically, either (1) \cref{subitem1:copy-of-item-3-of-obs1:O-structure-for-existence-of-MEC}, (2) \cref{subitem2:copy-of-item-3-of-obs1:O-structure-for-existence-of-MEC}, or (3) \cref{subitem3:copy-of-item-3-of-obs1:O-structure-for-existence-of-MEC}. We will explore each possibility and demonstrate that in each case, $u \rightarrow v$ obeys \cref{copy-of-item-3-of-obs1:O-structure-for-existence-of-MEC}.

\begin{enumerate}
    \item Suppose $w \rightarrow u$ satisfies \cref{subitem1:copy-of-item-3-of-obs1:O-structure-for-existence-of-MEC}. This implies that $w \rightarrow u$ is strongly protected in $O$, and thus, $w \rightarrow u \in O$. In this scenario, $u \rightarrow v$ obeys \cref{subitem2:copy-of-item-3-of-obs1:O-structure-for-existence-of-MEC}, since there exists a triangle-free path $P = (w, u, v)$ from $w - u$ to $u - v$ in $M_1$ (with length greater than one), and $w \rightarrow u \in O$.

    \item Suppose $w \rightarrow u$ satisfies \cref{subitem2:copy-of-item-3-of-obs1:O-structure-for-existence-of-MEC}. In this case, there exists an undirected edge $x - y \in O_1$ such that $x \rightarrow y \in O$, and a triangle-free path $Q = (u_1 = x, u_2 = y, \ldots, u_{l-1} = w, u_l = u)$ from $(x, y)$ to $(w, u)$ in $M_1$, with $l \geq 3$. We first demonstrate that $Q$ does not contain $v$. According to \cref{obs:undirected-tfp-with-ud-end-implies-source-is-ud}, $Q$ forms a chordless path in $M_1$. If $Q$ contains $v$, then there must be some $i < l-1$ where $u_i = v$ (since $u_l = u$ and $u_{l-1} = w$). This would lead to a contradiction as $Q$ would no longer be a chordless path. Thus, $Q$ does not contain $v$.

    As $Q' = (w, u, v)$ is a triangle-free path from $(w, u)$ to $(u, v)$ in $M_1$, we can use \cref{obs:concatenate-triangle-free-paths} to conclude that $Q'' = (u_1 = x, u_2 = y, \ldots, u_{l-1} = w, u_l = u, v)$ is a triangle-free path from $(x, y)$ to $(u, v)$ in $M_1$ (by concatenating $Q$ and $Q'$). Consequently, $u \rightarrow v$ satisfies \cref{subitem2:copy-of-item-3-of-obs1:O-structure-for-existence-of-MEC} as $Q''$ is a triangle-free path in $M_1$ (of length greater than one, given that $l \geq 3$), and $x \rightarrow y \in O$.

    \item Suppose $w \rightarrow u$ satisfies \cref{subitem3:copy-of-item-3-of-obs1:O-structure-for-existence-of-MEC}. This implies that there exists an undirected edge $x - y \in M_1$, where $x \rightarrow y \in O$, and there are triangle-free paths $Q_1 = (u_1 = x, u_2 = y, \ldots, u_l = u)$ from $(x, y)$ to $u$ in $M_1$, and $Q_2 = (v_1 = u, v_2 = w, \ldots, v_m = x)$ from $(u, w)$ to $x$ in $M_1$, with both $l$ and $m$ are greater than two. No edge in $Q_1$ and $Q_2$ is directed in $M_1$. A directed edge in $Q_1$ or $Q_2$ implies a directed cycle in $M_1$ (which we get by concatenating $Q_1$ and $Q_2)$), contradicting \cref{item-1-theorem-nec-suf-cond-for-MEC} of \cref{thm:nes-and-suf-cond-for-chordal-graph-to-be-an-MEC}.

Therefore, both $v$ and the nodes in $Q_1$ and $Q_2$ belong to the same undirected connected component $\mathcal{C}$ of $M_1$, and both $Q_1$ and $Q_2$ are undirected triangle-free paths in $\mathcal{C}$. By \cref{obv:tfps-are-chordless-in-chordal-graphs}, both $Q_1$ and $Q_2$ are chordless paths in $\mathcal{C}$.

Now, let's consider $Q_3 = (u_1 = x, u_2 = y, \ldots, u_l = u, u_{l+1} = v)$. If $Q_3$ is a chordless path in $\mathcal{C}$, it is also a chordless path in $M_1$, as $\mathcal{C}$ is an induced subgraph of $M_1$. In this case, $u \rightarrow v$ satisfies \cref{subitem2:copy-of-item-3-of-obs1:O-structure-for-existence-of-MEC}, as $Q_3$ is a triangle-free path in $M_1$.

However, if $Q_3$ is not a chordless path in $\mathcal{C}$, this implies that there must exist an index $i \leq l-1$ where an edge exists between $u_i$ and $v$ in $\mathcal{C}$. According to \cref{obs:edge-in--ucc-is-undirected}, $u_i - v$ is an undirected edge in $\mathcal{C}$. Since $Q_1$ is a triangle-free path in $M_1$ and $x \rightarrow y \in M$, from \cref{obs:every-edge-of-triangle-free-path-is-directed}, $Q_1$ forms a fully directed path in $M$. Moreover, as $Q_1$ is fully directed and $u\rightarrow v \in M$, $Q_3$ is also fully directed in $M$. Since $Q_3$ passes through $u_i$, there exists a directed path from $u_i$ to $v$ in $M$. This implies that $u_i \rightarrow v \in M$. Otherwise, $M$ would have a directed cycle $(u_i, u_{i+1}, \ldots, u_l = u, v, u_i)$, which contradicts \cref{item-1-theorem-nec-suf-cond-for-MEC} of \cref{thm:nes-and-suf-cond-for-chordal-graph-to-be-an-MEC}, as $M$ is an MEC.

If $i < l-1$, then $u_i \rightarrow v \leftarrow u_l$ forms a v-structure in $M$ (since there is no edge between $u_i$ and $u_l$ in $M$, as $P_1$ is a chordless path). However, this leads to a contradiction because it would imply that $\mathcal{V}(M[V_{M_1}]) \neq \mathcal{V}(M_1)$ (as $u_i \rightarrow v \leftarrow u_l \in M$ while $u_i - v - u_l \in M_1$). But, from \cref{def:projection}, since $M_1$ is a projection of $M$, both $M_1$ and $M[V_{M_1}]$ must have the same set of v-structures. Thus, the only possible value of $i$ is $l-1$. In this case, $Q_4 = (u_1 = x, u_2 = y, \ldots, u_{l-1}, v)$ is a chordless path in $M_1$. Additionally, since $Q_5 = (v, u, w)$ is a triangle-free path in $M_1$ from $(v, u)$ to $(u, w)$, and $Q_2$ is a triangle-free path in $M_1$ from $(u, w)$ to $x$, we can use \cref{obs:concatenate-triangle-free-paths} to show that $Q_6 = (v, u, w, v_3, v_4, \ldots, v_m = x)$ is a triangle-free path from $(v, u)$ to $(w, x)$ in $M_1$. Both $Q_4$ and $Q_6$ have lengths greater than one. Hence, $u \rightarrow v$ satisfies \cref{subitem3:copy-of-item-3-of-obs1:O-structure-for-existence-of-MEC} of \cref{copy-of-item-3-of-obs1:O-structure-for-existence-of-MEC} as $Q_4$ is a triangle-free path from $(x, y)$ to $v$ in $M_1$, $Q_6$ is a triangle-free path from $(v, u)$ to $x$ in $M_1$, and $x \rightarrow y \in O$.
\end{enumerate}

This implies that in all the cases of Case 1, $u\rightarrow v$ obeys \cref{item-3-of-obs1:O-structure-for-existence-of-MEC}. 

\textbf{Case 2:} Consider the scenario where $u \rightarrow v$ is strongly protected in $M$ because it is part of an induced subgraph $u \rightarrow v \leftarrow w \in M$ (similar to \Cref{fig:strongly-protected-edge}.b). There are two possibilities: either $w \notin V_{H_1}$ or $w \in V_{H_1}$.

If $w \notin V_{H_1}$, as demonstrated in Case 1, the induced subgraph is also an induced subgraph of $O$. Therefore, $u \rightarrow v$ is \spe{} in $O$, and it obeys \cref{subitem1:copy-of-item-3-of-obs1:O-structure-for-existence-of-MEC} of \cref{copy-of-item-3-of-obs1:O-structure-for-existence-of-MEC}. Let's now consider the other possibility.

Suppose $w \in V_{H_1}$. Then, $v \rightarrow u \leftarrow w$ is a v-structure in $M$ such that $u,v,w \in V_{M_1}$.
Since  $M_1$ is a projection of $M$ onto $V_{H_1}$, from \cref{def:projection},  the set of v-structures of $M_1$ and the set of v-structures of $M[V_{M_1}]$ must be the same. This implies $u \rightarrow v \leftarrow w \in M_1$.  This leads to a contradiction since our assumption is that $u - v \in M_1$. Hence, it implies that $w \notin V_{H_1}$.

\textbf{Case 3:} Suppose $u \rightarrow v$ is strongly protected in $M$ because it is part of an induced subgraph $u \rightarrow w \rightarrow v \leftarrow u \in M$ (same as \Cref{fig:strongly-protected-edge}.c). There are two possibilities: either $w \notin V_{H_1}$ or $w \in V_{H_1}$.

If $w \notin V_{H_1}$, as demonstrated in Case 1, the induced subgraph is also an induced subgraph of $O$. Therefore, $u \rightarrow v$ is \spe{} in $O$, and it adheres to \cref{subitem1:copy-of-item-3-of-obs1:O-structure-for-existence-of-MEC} of \cref{copy-of-item-3-of-obs1:O-structure-for-existence-of-MEC}. Let's now consider the other possibility.

Suppose $w \in V_{H_1}$. Since $u \rightarrow w$ and $w \rightarrow v \in M$, and since $M_1$ is a projection of $M$ onto $V_{H_1}$, we can deduce from \cref{corr:directed-edge-of-main-graph-is-either-dir-or-ud-in-proj-graph} that either $u \rightarrow w \in M_1$ or $u - w \in M_1$, and either $w \rightarrow v \in M_1$ or $w - v \in M_1$.

Given that $u - v \in M_1$, if $u \rightarrow w \in M_1$ or $w \rightarrow v \in M_1$, it would lead to a directed cycle $(u,w,v,u)$ in $M_1$, which contradicts \cref{item-1-theorem-nec-suf-cond-for-MEC} of \cref{thm:nes-and-suf-cond-for-chordal-graph-to-be-an-MEC}. Thus, we conclude that $u - w$ and $w - v \in M_1$. As $u \rightarrow w$ and $w \rightarrow v \in M$, based on the ordering in $\tau$, it holds that $\tau(u) < \tau(w) < \tau(v)$. Referring to \cref{claim:predecessor-of-u-v-obeys-the-claim}, we see that both $u \rightarrow w$ and $w \rightarrow v$ adhere to \cref{copy-of-item-3-of-obs1:O-structure-for-existence-of-MEC}. This implies that either $u \rightarrow w$ obeys \cref{subitem1:copy-of-item-3-of-obs1:O-structure-for-existence-of-MEC}, or it follows \cref{subitem2:copy-of-item-3-of-obs1:O-structure-for-existence-of-MEC}, or it satisfies \cref{subitem3:copy-of-item-3-of-obs1:O-structure-for-existence-of-MEC}. Let's examine each of these possibilities.

\begin{enumerate}
    \item Suppose $u\rightarrow w$ obeys \cref{subitem1:copy-of-item-3-of-obs1:O-structure-for-existence-of-MEC} of \cref{copy-of-item-3-of-obs1:O-structure-for-existence-of-MEC}, i.e., $u\rightarrow w$ is strongly protected in $O$. This implies $u,w \in V_O$ and $u\rightarrow w\in O$. As shown earlier, $w\rightarrow v$ also obeys \cref{copy-of-item-3-of-obs1:O-structure-for-existence-of-MEC}.

This implies either $w\rightarrow v$ obeys \cref{subitem1:copy-of-item-3-of-obs1:O-structure-for-existence-of-MEC} of \cref{copy-of-item-3-of-obs1:O-structure-for-existence-of-MEC}, or it obeys \cref{subitem2:copy-of-item-3-of-obs1:O-structure-for-existence-of-MEC}, or it obeys \cref{subitem3:copy-of-item-3-of-obs1:O-structure-for-existence-of-MEC} of \cref{copy-of-item-3-of-obs1:O-structure-for-existence-of-MEC}. We will consider each possibility.

\begin{enumerate}
    \item Suppose $w\rightarrow v$ obeys \cref{subitem1:copy-of-item-3-of-obs1:O-structure-for-existence-of-MEC} of \cref{copy-of-item-3-of-obs1:O-structure-for-existence-of-MEC}, i.e., $w\rightarrow v$ is strongly protected in $O$. This implies $w,v \in V_O$ and $w\rightarrow v \in O$. Since $u,v,w \in V_O$, $u\rightarrow v \in M$, and $O$ is an induced subgraph of $M$, we have $u\rightarrow v \in O$. This shows that $u\rightarrow v$ is also strongly protected in $O$, as it is part of the induced subgraph $u\rightarrow w\rightarrow v \leftarrow u \in O$. This implies $u\rightarrow v$ obeys \cref{subitem1:copy-of-item-3-of-obs1:O-structure-for-existence-of-MEC} of \cref{copy-of-item-3-of-obs1:O-structure-for-existence-of-MEC}.
    
    \item Suppose $w\rightarrow v$ obeys \cref{subitem2:copy-of-item-3-of-obs1:O-structure-for-existence-of-MEC} of \cref{copy-of-item-3-of-obs1:O-structure-for-existence-of-MEC}, i.e., there exists an edge $x-y \in O_1$ such that there is a \tfp{} $Q = (u_1 = x, u_2 = y, \ldots, u_{l-1} = w, u_l = v)$ from $(x,y)$ to $(w,v)$ in $M_1$, with $l\geq 3$ and $x\rightarrow y \in O$.
    
    According to \cref{obs:undirected-tfp-in-M-is-a-cp}, $Q$ is an undirected \cp{} in $M_1$. We first demonstrate that $Q$ does not contain $u$. Suppose $Q$ contains $u$, then for some $i\leq l-2$, $u_{i} = u$ (since $u_l = v$ and $u_{l-1} = w$). However, this contradicts the fact that $Q$ is a \cp{}, as $u-v \in M_1$. Hence, $Q$ does not contain $u$. Since $u-v \in M_1$, nodes of $Q$ and $u$ are in the same \uccc{} of $M_1$.
    
    We now show that $u_{l-2}-u \in M_1$. From \cref{obs:directed-edge-and-undirected-path-implies-directed-edge}, $u_{l-2}\rightarrow w \in M$, since $Q$ is a \tfp{} in $M_1$, and $x\rightarrow y\in M$ (because $O$ is an induced subgraph of $M$ and $x\rightarrow y \in O$). This implies there must be an edge between $u_{l-2}$ and $u$, otherwise, $u_{l-2}\rightarrow w\leftarrow u$ would form a v-structure in $M$. Since $M_1$ is a projection of $M$ on $V_{H_1}$ and $u_{l-2}, w, u \in V_{H_1}$, if $u_{l-2}\rightarrow w\leftarrow u$ is a v-structure in $M$, then $u_{l-2}\rightarrow w\leftarrow u \in M_1$. But this contradicts the construction where $w-u \in M_1$. Therefore, there must be an edge between $u_{l-2}$ and $u$. Since $u_{l-2}$ and $u$ are in the same \ucc{} of $M_1$, from \cref{prop:edge-between-2-nodes-of-same-ucc-is-ud}, $u_{l-2}-u \in M_1$.
    
    We now consider the least $i$ such that $u_i - u \in M_1$. If $i>1$, then $Q' = (u_1 = x, u_2 = y, \ldots, u_i, u,v)$ forms a \cp{} from $(x,y)$ to $(u,v)$, as $v$ cannot be a neighbor of any $u_i$ due to $Q$ being a \cp{} and $i\leq l-2$. This implies that $u\rightarrow v$ obeys \cref{subitem2:copy-of-item-3-of-obs1:O-structure-for-existence-of-MEC} of \cref{copy-of-item-3-of-obs1:O-structure-for-existence-of-MEC}. If $i=1$, then $Q_1 = (v,u, x)$ forms a \tfp{} from $(v,u)$ to $x$ in $M_1$. Therefore, $(u,v)$ obeys \cref{subitem3:copy-of-item-3-of-obs1:O-structure-for-existence-of-MEC} of \cref{copy-of-item-3-of-obs1:O-structure-for-existence-of-MEC}, as $Q$ is a \tfp{} from $(x,y)$ to $v$ in $M_1$ (of length greater than one), $Q_1$ is a \tfp{} from $(v,u)$ to $x$ in $M_1$ (of length greater than one), and $x\rightarrow y \in O$. This concludes the analysis of this case, showing that in each possibility $u\rightarrow v$ obeys \cref{copy-of-item-3-of-obs1:O-structure-for-existence-of-MEC}.
    
    \item Suppose $w\rightarrow v$ obeys \cref{subitem3:copy-of-item-3-of-obs1:O-structure-for-existence-of-MEC} of \cref{copy-of-item-3-of-obs1:O-structure-for-existence-of-MEC}, i.e., there exists an edge $x-y \in O_1$ such that $x\rightarrow y \in O$, and there exist \tfps{} $Q_1 = (u_1=x, u_2 = y, \ldots, u_l= v)$ from $(x,y)$ to $v$ in $M_1$, and $Q_2= (v_1 =v, v_2=w, \ldots, v_m=x)$ from $(v,w)$ to $x$ in $M_1$ such that $l,m\geq 3$.
    
    $Q_1$ and $Q_2$ must both be undirected paths in $M_1$, otherwise combining $Q_1$ and $Q_2$ would create a directed cycle in $M_1$. This implies that all nodes in $Q_1$, all nodes in $Q_2$, and $u$  belong to the same \uccc{} $\mathcal{C}$ of $M_1$. This further implies that $Q_1$ and $Q_2$ are the \tfps{} of a chordal graph $\mathcal{C}$. Then, from \cref{obv:tfps-are-chordless-in-chordal-graphs}, $Q_1$ and $Q_2$ are \cps{} of $M_1$.

    There are two possibilities: either $u$ is in $Q_1$, or $u$ is not in $Q_1$. If $u$ is in $Q_1$, then $u_{l-1} = u$, as $Q_1$ is a \cp{} in $\mathcal{C}$. Then, $Q_1$ is a \tfp{} of $M_1$ from $(x,y)$ to $(u,v)$. That means $(u,v)$ obeys \cref{subitem2:copy-of-item-3-of-obs1:O-structure-for-existence-of-MEC} of \cref{copy-of-item-3-of-obs1:O-structure-for-existence-of-MEC}, as $Q_1$ is a \tfp{} from $(x,y)$ to $(u,v)$ in $M_1$ and $x\rightarrow y \in O$. 

    Now we consider the other possibility. Suppose $u$ is not in $Q_1$. There are two possibilities: either there is an edge between $u_{l-1}$ and $u$ in $M$, or there is no edge between them in $M$. If there is no edge between them in $M$, then $u_{l-1}\rightarrow v\leftarrow u$ forms a v-structure in $M$. Since $M_1$ is a projection of $M$ and $u_{l-1},v,u \in M_1$, the v-structure $u_{l-1}\rightarrow v\leftarrow u \in M$ implies $u_{l-1}\rightarrow v\leftarrow u \in M_1$. This implies $u\rightarrow v \in M_1$. However, this is a contradiction, as $u-v \in M_1$. This implies there is an edge between $u_{l-1}$ and $u$ in $M$. From \cref{corr:directed-edge-of-main-graph-is-either-dir-or-ud-in-proj-graph,corr:undirected-edge-in-an-MEC-implies-undirected-edge-in-projected-MEC}, there is an edge between $u_{l-1}$ and $u$ in $M_1$ as well. Since $u_{l-1}$ and $u$ are in the same \ucc{} of $M_1$, therefore, from \cref{prop:edge-between-2-nodes-of-same-ucc-is-ud}, $u_{l-1}-u \in M_1$.

    We now consider a path $Q_3 \defeq (u_1=x, u_2 = y, \ldots, u_{l-1}, u)$, obtained by replacing $u_l$ with $u$ in the \cp{} $Q_1$. There are two possibilities: either $Q_3$ is not a \cp{} in $\mathcal{C}$, 
or $Q_3$ is a \cp{} in $\mathcal{C}$. We show that in both the possibility $u\rightarrow v$ obeys \cref{copy-of-item-3-of-obs1:O-structure-for-existence-of-MEC}. 

    \begin{enumerate}
        \item Suppose $Q_3$ is not a \cp{} in $\mathcal{C}$. Then there must exist a $k$ such that $k<l-1$ and $u_k-u \in \mathcal{C}$, because $Q_1$ is a \cp{} and $Q_3$ is not a \cp{} in $\mathcal{C}$. Pick the least $k$ such that $u_k-u \in \mathcal{C}$. If $k>1$, then this implies that $Q_3' \defeq (u_1 = x, u_2 = y, \ldots, u_k, u,v)$ is a \cp{} from $(x,y)$ to $(u,v)$ in $\mathcal{C}$ (since $Q_1$ is a \cp{} and there cannot be an edge between $v$ and any $u_i$s in $Q_3$). This further implies that $u\rightarrow v$ obeys \cref{subitem2:copy-of-item-3-of-obs1:O-structure-for-existence-of-MEC} of \cref{copy-of-item-3-of-obs1:O-structure-for-existence-of-MEC}, as $x\rightarrow y \in O$ and $Q_3'$ is a \tfp{} from $(x,y)$ to $(u,v)$ in $M_1$ (since $\mathcal{C}$ is an \ucc{} of $M_1$ and every \cp{} is a \tfp{}). If $k=1$, then $Q_3'' \defeq (v,u,x)$ is a \tfp{} from $(v,u)$ to $x$ in $M_1$ (since $Q_1$ is a \cp{} and there cannot be an edge between $x$ and $v$). This implies that $u\rightarrow v$ obeys \cref{subitem3:copy-of-item-3-of-obs1:O-structure-for-existence-of-MEC} of \cref{copy-of-item-3-of-obs1:O-structure-for-existence-of-MEC}, due to the existence of \tfps{} $Q_1$, $Q_3''$, and $x\rightarrow y \in O$. This shows that in each possibility of this subcase, $u\rightarrow v$ obeys \cref{copy-of-item-3-of-obs1:O-structure-for-existence-of-MEC}.

        \item Suppose $Q_3$ is a \cp{} in $\mathcal{C}$. We claim that $Q_4 \defeq (u, v_2 = w, v_3, \ldots, v_m = x)$ (obtained by replacing the first vertex $v$ of $Q_2$ by $u$) is not a \cp{} in $\mathcal{C}$. 

        \begin{proof}[Proof of the claim]
            Suppose that $Q_4$ is a \cp{} in $\mathcal{C}$. 
            Since $Q_3$ is a \cp{} in $\mathcal{C}$ and $\mathcal{C}$ is an \ucc{} of $M_1$, from \cref{obs:cp-is-tfp}, $Q_3$ is a \tfp{} in $M_1$.
            Since $x\rightarrow y \in O$, and $Q_3$ is a \tfp{} in $M_1$,
            from \cref{obs:directed-edge-and-undirected-path-implies-directed-edge}, $Q_3$ is a directed path in $M$. Also, since $u\rightarrow w \in M$ and $Q_4$ is a \cp{} in $M_1$, from \cref{obs:every-edge-of-triangle-free-path-is-directed}, $Q_4$ is a directed path in $M$. This creates a directed cycle in $M$ by concatenating $Q_3$ and $Q_4$. However, this is a contradiction, as stated in \cref{item-2-theorem-nec-suf-cond-for-MEC} of \cref{thm:nes-and-suf-cond-for-chordal-graph-to-be-an-MEC}, $M$ cannot have a directed cycle. This proves that $Q_4$ is not a \cp{} in $M_1$.
        \end{proof}

        Since $Q_2$ is a \cp{} and $Q_4$ is not a \cp{}, there must exist a $k>2$ such that $u-v_k\in M_1$. Pick the highest $k$, i.e., for all $j>k$, $u-v_j \notin M_1$. Then $Q_5 \defeq (u,v_k, v_{k+1}, \ldots, v_m=x)$ is a \cp{} in $\mathcal{C}$. Since $Q_2$ is a \cp{} in $M_1$, there cannot be an edge between $v$ and $v_i$ for $i\geq k >2$. This further implies that $Q_5' \defeq (v,u,v_k, v_{k+1}, \ldots, v_m=x)$ (obtained by adding $v$ to $Q_5$) is a \cp{} from $(v,u)$ to $x$ in $M_1$. This implies that $u\rightarrow v$ obeys \cref{subitem3:copy-of-item-3-of-obs1:O-structure-for-existence-of-MEC} of \cref{copy-of-item-3-of-obs1:O-structure-for-existence-of-MEC}, since $x\rightarrow y \in O$, $Q_1$ is a \tfp{} from $(x,y)$ to $v$ in $M_1$, and $Q_5'$ is a \tfp{} from $(v,u)$ to $x$ in $M_1$ (every \cp{} is a \tfp{}).
    \end{enumerate}

    \end{enumerate}

This concludes the analysis of all possibilities in this case, showing that in each case $u\rightarrow v$ obeys \cref{copy-of-item-3-of-obs1:O-structure-for-existence-of-MEC}.

\item Suppose $u\rightarrow w$ obeys \cref{subitem2:copy-of-item-3-of-obs1:O-structure-for-existence-of-MEC} of \cref{copy-of-item-3-of-obs1:O-structure-for-existence-of-MEC}, i.e., there exists an edge $x-y \in O_1$ such that there exists a \tfp{} $Q \defeq (u_1 =x, u_2 = y, \ldots, u_{l-1} = u, u_l = w)$ from $(x,y)$ to $(u,w)$ in $M_1$, and $x\rightarrow y \in O$. From \cref{obs:undirected-tfp-in-M-is-a-cp}, $Q$ is a \cp{} in $M_1$. We first show that $Q$ does not contain $v$. 

Suppose $Q$ contains $v$. Then $u_{l-2} = v$, since $Q$ is a \cp{}. Then from \cref{obs:directed-edge-and-undirected-path-implies-directed-edge}, $v\rightarrow u \in M$ (since $Q$ is a \cp{} from $(x,y)$ to $(u,w)$, $x\rightarrow y \in O$, $u_{l-2}=v$, and $u_{l-1} =u$). However, this is a contradiction, as $u\rightarrow v \in M$. This implies $v$ is not in $Q$.

$Q$ not containing $v$ implies that $Q_1 \defeq (u_1 =x, u_2 = y, \ldots, u_{l-1} = u, v)$, obtained by replacing $w$ with $v$ in $Q$, is a path in $M_1$. We show that $Q_1$ is a \cp{} in $M_1$.
This implies that $u\rightarrow v$ obeys \cref{subitem2:copy-of-item-3-of-obs1:O-structure-for-existence-of-MEC} of \cref{copy-of-item-3-of-obs1:O-structure-for-existence-of-MEC}, as $x\rightarrow y \in O$, and $Q_1$ is a \tfp{} from $(x,y)$ to $(u,v)$ in $M_1$ (every \cp{} is a \tfp{}) of length greater than one (since $l\geq 3)$. 
Thus, the only thing we need to complete this subcase is to show $Q_1$ is a \cp{} in $M_1$.

Suppose $Q_1$ is not a \cp{} in $M_1$.  
$Q' \defeq (u_1 =x, u_2 = y, \ldots, u_{l-1} = u)$, a subpath of the \cp{} $Q$, is a \cp{} from $x-y$ to $u$.
Since $Q'$ is a \cp{}, and $Q_1$ is not a \cp{}, there must exist an $i<l-1$ such that $u_i-v \in M_1$. As $x\rightarrow y \in O$ and $Q$ is a \tfp{} from $(x,y)$ to $(u,w)$, from \cref{obs:directed-edge-and-undirected-path-implies-directed-edge}, each edge of $Q$ is directed in $M$. Since $u\rightarrow v \in M$, $Q_1' \defeq (u_i, u_{i+1}, \ldots, u_{l-1}=u, v)$ is a directed path in $M$. This implies that $u_i\rightarrow v \in M$, otherwise, we get a directed cycle $C = (u_i, u_{i+1},\ldots, u_{l-1}, v,u_i)$ in $M$, which contradicts \cref{item-1-theorem-nec-suf-cond-for-MEC} of \cref{thm:nes-and-suf-cond-for-chordal-graph-to-be-an-MEC}. Since $Q$ is a \cp{}, there is no edge between $w$ and $u_i$ (since $i<l-1$). However, then $w\rightarrow v \leftarrow u_i$ forms a v-structure in $M$. Since $M_1$ is a projection of $M$ on $V_{H_1}$, and $w,v,u_i \in V_{H_1}$, from \cref{def:projection},  $w\rightarrow v \leftarrow u_i \in M_1$. But, this is a contradiction, as $w-v \in M_1$. This implies that $Q_1$ is a \cp{} in $M_1$. 

As argued above this concludes the proof for the case when $u\rightarrow w$ obeys \cref{subitem2:copy-of-item-3-of-obs1:O-structure-for-existence-of-MEC} of \cref{copy-of-item-3-of-obs1:O-structure-for-existence-of-MEC}. We now move to the final possibility when $u\rightarrow w$ obeys \cref{subitem3:copy-of-item-3-of-obs1:O-structure-for-existence-of-MEC} of \cref{copy-of-item-3-of-obs1:O-structure-for-existence-of-MEC}.

\item Suppose $u\rightarrow w$ obeys \cref{subitem3:copy-of-item-3-of-obs1:O-structure-for-existence-of-MEC} of \cref{copy-of-item-3-of-obs1:O-structure-for-existence-of-MEC}, i.e., there exists an edge $x-y \in O_1$ such that in $M_1$, there exist \tfps{} $Q_1 \defeq (u_1 = x, u_2 = y, \ldots, u_l = w)$ from $(x,y)$ to $w$, and $Q_2 \defeq (v_1=w, v_2 =u, \ldots, v_m =x)$ from $(w,u)$ to $x$. $Q_1$ and $Q_2$ both must be undirected paths in $M_1$, otherwise, combining $Q_1$ and $Q_2$ gives a directed cycle in $M_1$. This implies that the nodes in $Q_1$, nodes in $Q_2$, and $v$, all belong to the same \uccc{} $\mathcal{C}$ of $M_1$. This further implies that $Q_1$ and $Q_2$ are the \tfps{} of a chordal graph $\mathcal{C}$. Then, from \cref{obv:tfps-are-chordless-in-chordal-graphs}, $Q_1$ and $Q_2$ are \cps{} of $\mathcal{C}$. We now show that $v$ is neither in $Q_1$ nor in $Q_2$.

Suppose $Q_1$ contains $v$, i.e., for some $1\leq i< l$, $u_i = v$. From \cref{obs:directed-edge-and-undirected-path-implies-directed-edge}, each edge of $Q_1$ is directed in $M$, as $Q_1$ is a \tfp{} from $(x,y)$ to $(u_{l-1},w)$ in $M_1$, and $x\rightarrow y \in O$. This implies that $(u_i =v, u_{i+1}, \ldots, u_l = w, v)$ forms a directed cycle in $M$, as from the construction, $w\rightarrow v \in M$. However, this contradicts \cref{item-1-theorem-nec-suf-cond-for-MEC} of \cref{thm:nes-and-suf-cond-for-chordal-graph-to-be-an-MEC}. Therefore, $Q_1$ does not contain $v$.

    Suppose $Q_2$ contains $v$. Since $Q_2$ is a \cp{} in $M_1$, and $w-v \in M_1$, if $Q_2$ contains $v$, then $v_2=v$. But, $v_2 = u$, which is a contradiction. Therefore, $Q_2$ does not contain $v$.

We now show that either $Q_3 \defeq (u_1 = x, u_2 =y, \ldots, u_l = w, v)$ (obtained by adding $v$ to $Q_1$) or $Q_4 \defeq (u_1 = x, u_2 =y, \ldots, u_{l-1}, v)$ (obtained by replacing $w$ with $v$ in $Q_1$) is a \cp{} in $\mathcal{C}$, and $Q_5 = (v,v_2 = u, v_3, \ldots, v_m = x)$ (obtained by replacing the node $w$ with $v$ in $Q_2$) is a \cp{} in $\mathcal{C}$. This will be our main ingredient to prove that $u\rightarrow v$ obeys \cref{copy-of-item-3-of-obs1:O-structure-for-existence-of-MEC}.

Suppose neither $Q_3$ nor $Q_4$ is a \cp{} in $\mathcal{C}$. 
Since $Q_3$ is not a \cp{} in $\mathcal{C}$ and $Q_1$ is a \cp{} in $\mathcal{C}$, for some $i\leq l-1$, $u_i-v \in \mathcal{C}$. Pick the least $i$. If $i\neq l-1$ then we get an undirected \cc{} $(u_i, u_{i+1}, \ldots, u_{l-1}, u_l, v, u_i)$ in the chordal graph $\mathcal{C}$, a contradiction. This implies $u_{l-1} - v \in \mathcal{C}$. Similarly, since $Q_4$ is not a \cp{} in $\mathcal{C}$ and $Q_1$ is a \cp{} in $\mathcal{C}$, $u_{l-2} - v \in \mathcal{C}$. 
Since $Q_1$ is a \tfp{} in $M_1$ and $x\rightarrow y \in O$, from \cref{obs:directed-edge-and-undirected-path-implies-directed-edge}, $u_{l-2}\rightarrow u_{l-1}, u_{l-1}\rightarrow w \in M$. Then, $u_{l-2}\rightarrow v \in M$, otherwise, we get a directed cycle $C=(u_{l-2},u_{l-1},w,v,u_{l-2})$ in $M$ (remember $w\rightarrow v\in M$), which is a contradiction from \cref{item-1-theorem-nec-suf-cond-for-MEC} of \cref{thm:nes-and-suf-cond-for-chordal-graph-to-be-an-MEC}. 
But, then, $u_{l-2}\rightarrow v\leftarrow w$ is a v-structure in $M$ (since $Q_1$ is a \cp{}, there is no edge between $u_{l-2}$ and $w$). Since $M_1$ is a projection of $M$ on $V_{H_1}$, and $u_{l-2},v, w \in V_{H_1}$, from \cref{def:projection}, $u_{l-2}\rightarrow v\leftarrow w \in M_1$. But $w-v$ is an undirected edge in $M_1$, a contradiction. Thus, our assumption that neither $Q_3$ nor $Q_4$ is a \cp{} in $\mathcal{C}$ is wrong. This implies either $Q_3$ or $Q_4$ is a \cp{} in $\mathcal{C}$. We now show that $Q_5$ is a \cp{} in $\mathcal{C}$.

Since $Q_2$ is a \tfp{} in $\mathcal{C}$, from \cref{obs:undirected-tfp-in-M-is-a-cp}, $Q_2$ is a \cp{} in $\mathcal{C}$. This implies that the subpath $Q_5' \defeq (v_2 = u, v_3, \ldots, v_m = x)$ of $Q_2$ is a \cp{} in $\mathcal{C}$. Suppose $Q_5$ is not a \cp{} in $\mathcal{C}$. Then, there exists a $v_i$ such that $i>2$, and $v-v_i \in \mathcal{C}$. Pick the highest $i$ such that $v_i - v \in \mathcal{C}$. Then, $Q_6 \defeq (v, v_i, v_{i+1}, \ldots, v_m = x)$ is a \cp{} from $(v, v_i)$ to $x$ in $\mathcal{C}$.
Then, $v_i\rightarrow v \in M$. Otherwise, from \cref{obs:cond-for-ud-path-in-M_a-to-be-ud-path-in-M}, $Q_6$ is a \tfp{} in $M$, and the concatenation of $Q_1$, $w\rightarrow v$, and $Q_6$ yield a directed cycle in $M$, contradicting \cref{item-1-theorem-nec-suf-cond-for-MEC} of \cref{thm:nes-and-suf-cond-for-chordal-graph-to-be-an-MEC}.
Since $Q_2$ is a \cp{} in $M_1$, there cannot be an edge between $w$ and $v_i$ (as $i>2$). This implies that $w\rightarrow v \leftarrow v_i$ is a v-structure in $M$. Since $M_1$ is a projection of $M$ on $V_{H_1}$, and $w,v,v_i \in V_{H_1}$, from \cref{def:projection}, the v-structure $w\rightarrow v \leftarrow v_i \in M_1$. This gives a contradiction, as $w-v \in M_1$. This implies that $Q_5$ is a \cp{} in $M_1$.

Thus, in this case, there exists a \tfp{} from $(x,y)$ to $v$, in the form of either $Q_3$ or $Q_4$ (from \cref{obs:cp-is-tfp}, every \cp{} is a \tfp{}) such that $x\rightarrow y \in O$, and there exists a \tfp{} from $(v,u)$ to $x$ (in the form of $Q_5$). Since $l,m\geq 3$, the length of all the paths $Q_3, Q_4,$ and $Q_5$ is greater than one. This implies that $u\rightarrow v$ obeys \cref{subitem3:copy-of-item-3-of-obs1:O-structure-for-existence-of-MEC} of \cref{copy-of-item-3-of-obs1:O-structure-for-existence-of-MEC}.
    
\end{enumerate}

\textbf{Case 4:} Suppose $u\rightarrow v$ is strongly protected in $M$ because it is part of an induced subgraph $u\rightarrow v \leftarrow w-u-w'\rightarrow v$ (same as \Cref{fig:strongly-protected-edge}.d). 
  Both $w$ and $w'$ cannot be in $V_{H_1}$, otherwise the v-structure $w\rightarrow v \leftarrow w' \in M_1$, as $M_1$ is a projection of $M$ on $V_{H_1}$, and $w,v,w' \in V_{H_1}$. But, then, $u-v \notin M_1$, otherwise, it creates a directed cycle $(w,v,u,w)$, as from \cref{corr:undirected-edge-in-an-MEC-implies-undirected-edge-in-projected-MEC}, $u-w\in M_1$. This  implies that both $w$ and $w'$ cannot be in $V_{H_1}$. And, if $w$ or $w'$ is not in $V_{H_1}$ then as we have seen in Case 1, the induced subgraph is an induced subgraph of $O$, i.e.,  $u\rightarrow v$ is \spe{} in $O$ (as it is part of an induced subgraph of $O$ as shown in \cref{fig:strongly-protected-edge}.d), and $u\rightarrow v$ obeys \cref{subitem1:copy-of-item-3-of-obs1:O-structure-for-existence-of-MEC} of \cref{copy-of-item-3-of-obs1:O-structure-for-existence-of-MEC}.

We show that in all the possibilities, $u\rightarrow v$ obeys \cref{copy-of-item-3-of-obs1:O-structure-for-existence-of-MEC}. This completes the proof of $\leftarrow$ direction of \cref{copy-of-item-3-of-obs1:O-structure-for-existence-of-MEC}. This further completes the proof of  \cref{copy-of-item-3-of-obs1:O-structure-for-existence-of-MEC}.
\end{proof}
If there exists a \tfp{} $P$ from $(u,v)$ to $(x,y)$ of length more than one then $(u,v)\neq (x,y)$. Similarly, if $Q$ is a \tfp{} from $(u,v)$ to $w$ of length more than one then $v\neq w$. This shows that \cref{copy-of-item-3-of-obs1:O-structure-for-existence-of-MEC} implies  \cref{item-3-of-obs1:O-structure-for-existence-of-MEC} of \cref{obs1:O-structure-for-existence-of-MEC}.
\end{proof}

\begin{proof}[Proof of \cref{item-5-of-obs1:O-structure-for-existence-of-MEC}]
\Cref{obs:trp-in-O-is-a-tfp-in-M,obs:cp-in-Ma-is-a-cp-is-M,obs:two-cp-gives-a-cp} prove $\rightarrow$ of \cref{item-5-of-obs1:O-structure-for-existence-of-MEC}.

\begin{observation}
\label{obs:trp-in-O-is-a-tfp-in-M}
If $(u,v) \neq (x,y)$ and there exists a \tfp{} in $O$ from $(u,v)$ to $(x,y)$, then $P_1((u,v),(x,y)) = 1$.
\end{observation}
\begin{proof}
Since $(O, P_1, P_2)$ is the shadow of $M$, as defined in \cref{item-1-of-def:shadow} of \cref{def:shadow}, $O$ is an induced subgraph of $M$. Moreover, according to \cref{item-2-of-def:shadow} of \cref{def:shadow}, $P_1((u,v),(x,y))$ answers whether there exists a \tfp{} in $M$ from $(u,v)$ to $(x,y)$. If there exists a \tfp{} $P$ in $O$ from $(u,v)$ to $(x,y)$, then $P$ is also a \tfp{} in $M$, and therefore, $P_1((u,v),(x,y)) = 1$.
\end{proof}

\begin{observation}
\label{obs:cp-in-Ma-is-a-cp-is-M}
For $(u,v),(x,y) \in E_O$, if $(u,v) \neq (x,y)$ and for any $a\in \{1,2\}$, there exists a \tfp{} $P$ in $M_a$ from $(u,v)$ to $(x,y)$, i.e., $P_{a1}((u,v),(x,y)) = 1$, then $P$ is also a \tfp{} in $M$, i.e., $P_1((u,v),(x,y)) = 1$.
\end{observation}
\begin{proof}
Since $(O, P_1, P_2)$ is the shadow of $M$, as defined in \cref{item-1-of-def:shadow} of \cref{def:shadow}, $O$ is an induced subgraph of $M$. Moreover, according to \cref{item-2-of-def:shadow} of \cref{def:shadow}, $P_1((u,v),(x,y))$ answers whether there exists a \tfp{} in $M$ from $(u,v)$ to $(x,y)$. For $a\in \{1,2\}$, suppose $(u,v)\neq (x,y)$ and $P$ is a \tfp{} from $(u,v)$ to $(x,y)$ in $M_a$. Since $(u,v) \in E_O$, it follows that $v\rightarrow u \notin O$. This implies $v\rightarrow u \notin M$. Then, from \cref{obs:cond-for-ud-path-in-M_a-to-be-ud-path-in-M}, $P$ is a \tfp{} in $M$, i.e., $P_1((u,v),(x,y)) = 1$.
\end{proof}

\begin{observation}
\label{obs:two-cp-gives-a-cp}
For $u,v,x,y, z_1, z_2 \in V_O$, if $(u,v) \neq (x,y)$, $P_1((u,v),(z_1, z_2)) = 1$, and $P_1((z_1, z_2), (x,y)) = 1$, then $P_1((u,v),(x,y)) = 1$.
\end{observation}
\begin{proof}
Since $(O, P_1, P_2)$ is the shadow of $M$, as mentioned in \cref{item-2-of-def:shadow} of \cref{def:shadow}, $P_1((u,v),(x,y))$ answers whether there exists a \tfp{} in $M$ from $(u,v)$ to $(x,y)$. Suppose $u,v,x,y, z_1, z_2 \in V_O$ such that $(u,v) \neq (x,y)$, and $P_1((u,v),(z_1, z_2)) = P_1((z_1, z_2), (x,y)) = 1$. This implies there exist \tfps{} $Q_1=(u_1 = u, u_2=v, \ldots, u_{l-1} = z_1, u_l = z_2)$ from $(u,v)$ to $(z_1, z_2)$ in $M$ and $Q_2 = (v_1= z_1, v_2 = z_2, \ldots, v_{m-1} = x, v_m = y)$ in $ $M, from $(z_1, z_2)$ to $(x,y)$ in $M$. Then, according to \cref{obs:concatenate-triangle-free-paths}, $P = (u_1 = u, u_2=v, \ldots, u_{l-1} = z_1, u_l = z_2, v_3, v_4, \ldots, v_{m-1} = x, v_m = y)$ is a \tfp{} from $(u,v)$ to $(x,y)$ in $M$. Thus, $P_1((u,v), (x,y)) = 1$.
\end{proof}

We now prove $\leftarrow$ of \cref{item-5-of-obs1:O-structure-for-existence-of-MEC}. 
\begin{observation}
\label{obs:item-5-of-obs1:O-structure-for-existence-of-MEC}
For $((u,v),(x,y)) \in E_O \times E_O$, if $P_1((u,v),(x,y)) = 1$ then $(u,v) \neq (x,y)$ and either of the following occurs:
\begin{enumerate}
        \item
        \label{item-1-of-obs:item-5-of-obs1:O-structure-for-existence-of-MEC}
        There exists a \tfp{} from $(u,v)$ to $(x,y)$ in $O$.
        \item
        \label{item-2-of-obs:item-5-of-obs1:O-structure-for-existence-of-MEC}
        For $a\in \{1,2\}$, $P_{a1}((u,v),(x,y)) = 1$ (i.e.,  there exists a \tfp{} from $(u,v)$ to $(x,y)$ in $M_a$).
        \item 
        \label{item-3-of-obs:item-5-of-obs1:O-structure-for-existence-of-MEC}
        There exists $(z_1, z_2) \in E_O$ such that $P_1((u,v), (z_1, z_2)) = P_1((z_1, z_2), (x,y)) = 1$ (i.e., there exist  \tfps{} in $M$ from $(u,v)$ to $(z_1, z_2)$, and from $(z_1, z_2)$ to $(x,y)$).
    \end{enumerate}
\end{observation}
\begin{proof}
Suppose for some $((u,v),(x,y)) \in E_O \times E_O$,  $P_1((u,v),(x,y)) = 1$, then since $(O, P_1, P_2)$ is the shadow of $M$ on $V_O$, from \cref{def:shadow}, $(u,v) \neq (x,y)$ and there exists a \tfp{} $P= (u_1 = u, u_2 =v, \ldots, u_{l-1} = x, u_l =y)$ from $(u,v)$ to $(x,y)$ in $M$. 
If $P$ is completely in $O$ (i.e., all the nodes of $P$ belong to $V_O$) then $P$ is a \tfp{} in $O = M[V_O]$ (from \cref{def:shadow}, $O$ is an induced subgraph of $M$). 
This obeys \cref{item-1-of-obs:item-5-of-obs1:O-structure-for-existence-of-MEC} of \cref{obs:item-5-of-obs1:O-structure-for-existence-of-MEC}.  If all the nodes of $P$ belongs to $V_{H_a}$ for some $a\in \{1,2\}$  then from \cref{corr:tfp-in-main-graph-implies-tfp-in-projected-graph}, $P$ is a \tfp{} in $M_a$, i.e., $P_{a1}((u,v),(x,y)) =1$. 
This shows if all the nodes of $P$ belongs to $V_{H_a}$ for some $a\in \{1,2\}$  then \cref{item-2-of-obs:item-5-of-obs1:O-structure-for-existence-of-MEC} of \cref{obs:item-5-of-obs1:O-structure-for-existence-of-MEC} is obeyed. 

Suppose $P$ is neither completely in  $O$, nor completely  in $M_1$, nor completely in   $M_2$.
There are two possibilities either $u_3 \in V_O$ or $u_3\notin V_O$. We deal with each possibility.
If $u_3\in V_O$ then we have $(v,u_3) \in E_O$ such that 
(a) $Q_1 = (u_1=u, u_2=v, u_3)$, the subpath of $P$ from $u_1$ to $u_3$,  is a \tfp{} from $(u,v)$ to $(v,u_3)$ in $M$, i.e., $P_1((u,v), (v, u_3)) =1$ (from \cref{def:shadow}, as $(O, P_1, P_2)$ is the shadow of $M$), and 
(b) $Q_2 = (u_2=v, u_3, \ldots, u_{l-1} =x, u_l = y)$, the subpath of $P$ from $u_2$ to $u_l$, is a \tfp{} from $(v,u_3)$ to $(x,y)$ in $M$, i.e., $P_1((v, u_3), (x,y))=1$ (note that $(v,u_3)\neq (x,y)$, otherwise $P$ is completely in $O$, contradicting our assumption). This obeys \cref{item-3-of-obs:item-5-of-obs1:O-structure-for-existence-of-MEC} of \cref{obs:item-5-of-obs1:O-structure-for-existence-of-MEC}.

Suppose $u_3\notin V_O$.
Since $P$ is a \tfp{} in $M$, either $u-v \in M$ or $u\rightarrow v \in M$. 
Since $(u,v) \in E_O$,  either $u-v \in O$ or $u\rightarrow v \in O$. Then, from \cref{claim:u-v-is-either-in-O1-or-in-O2}, either $u,v \in V_{O_1}$ or $u,v \in V_{O_2}$, or both.
\begin{claim}
\label{claim:u-v-is-either-in-O1-or-in-O2}
    If $u-v \in O$ or $u\rightarrow v \in O$ then either $u,v\in V_{O_1}$ or $u,v \in V_{O_2}$.
\end{claim}
\begin{proof}
    From the construction, $V_O = V_{O_1}\cup V_{O_2}$, and $I = V_{O_1}\cap V_{O_2}$ is a vertex separator of $O$. W.l.o.g., suppose $u \in V_{O_1}$. If $v\in V_{O_1}$ then we are done. Suppose, $v\notin V_{O_1}$, i.e., $v\in V_O \setminus V_{O_1} = V_{O_2}\setminus I$. Then, $u\in I$, as $I$ is a vertex separator that separates $V_{O}\setminus V_{O_1}$ and $V_O \setminus V_{O_2}$. But, then, $u,v \in V_{O_2}$, as $I \subseteq V_{O_2}$. 
\end{proof}
W.l.o.g., suppose $u,v \in V_{O_1}$. 
 Then, there must exist a node $z$ in $P$ such that $z \notin V_{M_1}$, otherwise, from \cref{corr:tfp-in-main-graph-implies-tfp-in-projected-graph}, $P$ is a \tfp{} in $M_1$, contradicting our assumption.
Pick the first such $z$ in $P$, i.e., all the nodes before $z$ in $P$ are in $V_{M_1}$. Let $z_2$ be the predecessor of $z$ in $P$, and $z_1$ be the predecessor of $z_2$ in $P$.
Since $z$ is the first node in $P$ which is not in $V_{M_1}$, $z\in V_{M_2}\setminus V_{M_1}$, and $z_2 \in V_{M_1}$. 
Since $I = V_{H_1}\cap V_{H_2}$ is a vertex separator of $H$ that separates $V_{H_1} = V_{M_1}$ and $V_{H_2} = V_{M_2}$, therefore, $z_2\in I$, and $z\in V_{M_2}\setminus I$.
This  implies that $z, z_1,z_2 \in V_O$, as $V_O \supseteq I\cup N(I, O)$.
Since $u_3$ is not in $O$, we can say that $z_1 = u_k$ for some $k> 3$.
Then, $(z_1,z_2) \in E_O$, and 
(a) $Q_1 = (u_1=u, u_2=v, \ldots, u_k=z_1,u_{k+1}=z_2)$, the subpath of $P$ from $u$ to $z_2$, is a \tfp{} from $(u,v)$ to $(z_1,z_2)$ in $M$ such that $((u_1, u_2)\neq (z_1, z_2))$ (since $k> 3$), i.e., $P_1((u_1, u_2), (z_1, z_2)) =1$, and 
(b) $Q_2 = (u_k=z_1, u_{k+1}=z_2, \ldots , u_{l-1}=x, u_l=y)$, the subpath of $P$ from $z_1$ to $y$, is a \tfp{} from $(z_1,z_2)$ to $(x,y)$ in $M$ such that $(z_1, z_2) \neq (x, y)$ (if $(z_1, z_2)=(x,y)$ then $P$ is completely in $M_1$, contradicting our assumption), i.e., $P_1((z_1, z_2), (x,y)) =1$. This obeys \cref{item-3-of-obs:item-5-of-obs1:O-structure-for-existence-of-MEC} of \cref{obs:item-5-of-obs1:O-structure-for-existence-of-MEC}. This  implies that in all the possibilities,  \cref{obs:item-5-of-obs1:O-structure-for-existence-of-MEC} is obeyed. This completes the proof.
\end{proof}

This completes the proof of \cref{item-5-of-obs1:O-structure-for-existence-of-MEC}.
\end{proof}

\begin{proof}[Proof of \cref{item-6-of-obs1:O-structure-for-existence-of-MEC}]
Proof of \cref{item-6-of-obs1:O-structure-for-existence-of-MEC} is similar to the proof of \cref{item-5-of-obs1:O-structure-for-existence-of-MEC}. For completeness, we are giving the proof.

\Cref{obs:trp-in-O-is-a-tfp-in-M-copy,obs:cp-in-Ma-is-a-cp-is-M-copy,obs:two-cp-gives-a-cp-copy} prove $\leftarrow$ of \cref{item-6-of-obs1:O-structure-for-existence-of-MEC}.

\begin{observation}
\label{obs:trp-in-O-is-a-tfp-in-M-copy}
If $v\neq w$, and there exists a \tfp{} in $O$ from $(u,v)$ to $w$, then $P_2((u,v),w) = 1$.
\end{observation}
\begin{proof}
Suppose $v\neq w$.
If there exists a \tfp{} $P$ in $O$ from $(u,v)$ to $w$, then $P$ is also a \tfp{} in $M$. This follows from \cref{item-1-of-def:shadow} of \cref{def:shadow}, which states that $O$ is an induced subgraph of $M$. Consequently, according to \cref{item-3-of-def:shadow} of \cref{def:shadow}, which specifies that $P_2((u,v), w) =1$ if $v\neq w$ and there exists a \tfp{} from $(u,v)$ to $w$, we conclude that $P_2((u,v), w) =1$. 
\end{proof}

\begin{observation}
\label{obs:cp-in-Ma-is-a-cp-is-M-copy}
For $((u,v),w) \in E_O\times V_O$, if $v\neq w$, and for some $a\in \{1,2\}$, there exists a \tfp{} $P$ in $M_a$, from $(u,v)$ to $w$, i.e., $P_{a2}((u,v),w) =1$, then $P$ is also a \tfp{} in $M$, i.e., $P_2((u,v),w) =1$.
\end{observation}
\begin{proof}
Without loss of generality, let us assume $a = 1$.
Suppose $((u,v),w) \in E_O\times V_O$, $v\neq w$, and $P$ is a \tfp{} from $(u,v)$ to $w$ in $M_1$ (i.e., $P_{12}((u,v),w) = 1$).
Since $(u,v) \in E_O$, therefore, $(u,v) \in E_M$. This follows from \cref{item-1-of-def:shadow} of \cref{def:shadow}, which states that $O$ is an induced subgraph of $M$. This further implies that $v\rightarrow u \notin M$. Consequently, from \cref{obs:cond-for-ud-path-in-M_a-to-be-ud-path-in-M}, we deduce that $P$ is also a \tfp{} in $M$. Thus, according to \cref{item-3-of-def:shadow} of \cref{def:shadow}, which states that $P_2((u,v), w) = 1$ if $v\neq w$ and there exists a \tfp{} from $(u,v)$ to $w$, we conclude that $P_2((u,v),w) = 1$.
\end{proof}

\begin{observation}
\label{obs:two-cp-gives-a-cp-copy}
For $u, v, w, z_1, z_2 \in V_O$, if $v \neq w$, $P_1((u, v), (z_1, z_2)) = 1$, and $P_2((z_1, z_2), w) = 1$, then $P_2((u, v), w) = 1$.
\end{observation}

\begin{proof}
Suppose $P_1((u, v), (z_1, z_2)) = P_2((z_1, z_2), w) = 1$. 
From \cref{item-1-of-def:shadow,item-2-of-def:shadow} of \cref{def:shadow}, there exists \tfps{} $Q_1=(u_1 = u, u_2 = v, \ldots, u_{l-1} = z_1, u_l = z_2)$ from $(u, v)$ to $(z_1, z_2)$ in $M$ (as $P_1((u,v), (z_1, z_2)) = 1$), and $Q_2 = (v_1 = z_1, v_2 = z_2, \ldots, v_{m-1}, v_m = w)$ from $(z_1, z_2)$ to $w$ in $M$ (as $P_2((z_1, z_2), w) =  1$). Then, from \cref{obs:concatenate-triangle-free-paths}, $P = (u_1 = u, u_2 = v, \ldots, u_{l-1} = z_1, u_l = z_2, v_3, v_4, \ldots, v_{m-1}, v_m = w)$ is a \tfp{} from $(u, v)$ to $w$ in $M$. Thus, from \cref{item-2-of-def:shadow} of \cref{def:shadow}, $P_2((u, v), w) = 1$.
\end{proof}

We now prove $\leftarrow$ of \cref{item-6-of-obs1:O-structure-for-existence-of-MEC}. 
\begin{observation}
\label{obs:leftarrow-of-item-6-of-obs1:O-structure-for-existence-of-MEC}
For $((u,v),w) \in E_O\times V_O$, if $P_2((u,v),w) = 1$ then $v\neq w$, and either of the following occurs:
\begin{enumerate}
        \item
        \label{item-1-of-obs:leftarrow-of-item-6-of-obs1:O-structure-for-existence-of-MEC}
        There exists a \tfp{} from $(u,v)$ to $w$ in $O$.
        \item
        \label{item-2-of-obs:leftarrow-of-item-6-of-obs1:O-structure-for-existence-of-MEC}
        For some $a\in \{1,2\}$, $P_{a2}((u,v),w) = 1$ (i.e.,  there exists a \tfp{} from $(u,v)$ to $w$).
        \item
        \label{item-3-of-obs:leftarrow-of-item-6-of-obs1:O-structure-for-existence-of-MEC}
        There exists $(z_1, z_2) \in E_O$ such that $P_1((u,v), (z_1, z_2)) = P_2((z_1, z_2), w) = 1$ (i.e., there exist  \tfps{} in $M$ from $(u,v)$ to $(z_1, z_2)$, and from $(z_1, z_2)$ to $w$).
    \end{enumerate}
\end{observation}
\begin{proof}
        From \cref{item-3-of-def:shadow} of \cref{def:shadow}, if $P_2((u, v), w) = 1$, then $v \neq w$, and there exists a \tfp{} $P = (u_1 = u, u_2 = v, \ldots, u_l = w)$ from $(u, v)$ to $w$ in $M$. 

Similar to the proof of \cref{obs:item-5-of-obs1:O-structure-for-existence-of-MEC}, either
\begin{enumerate}
    \item $P$ is a \tfp{} in $O$, or
    \item $P$ is a \tfp{} in $M_1$ or $M_2$, i.e., for some $a \in \{1, 2\}$, $P_{a2}((u, v), w) = 1$, or
    \item there exists $(z_1, z_2) \in E_O$ such that $(u, v) \neq (z_1, z_2)$, $w \neq z_2$, and there exist \tfps{} in $M$ from $(u, v)$ to $(z_1, z_2)$ and from $(z_1, z_2)$ to $w$, i.e., $P_1((u, v), (z_1, z_2)) = P_2((z_1, z_2), w) = 1$.
\end{enumerate}

This completes the proof.

\end{proof}

This completes the proof of $\leftarrow$ of \cref{item-6-of-obs1:O-structure-for-existence-of-MEC}. This further completes the proof of \cref{item-6-of-obs1:O-structure-for-existence-of-MEC}.
\end{proof}
This completes the proof of \cref{obs1:O-structure-for-existence-of-MEC}.
\end{proof}

\subsection{Derived Path Function and Extension of Shadows}
\label{subsection:dependence-of-P1-and-P2-on-O-and-others}
We note that $P_1$ and $P_2$ in \cref{obs1:O-structure-for-existence-of-MEC} are not independent. Knowing $O, P_{11}, P_{12}, P_{21}$, and $P_{22}$, we can compute $P_1$ and $P_2$ using \cref{item-5-of-obs1:O-structure-for-existence-of-MEC,item-6-of-obs1:O-structure-for-existence-of-MEC} of \cref{obs1:O-structure-for-existence-of-MEC} (this is shown in \cref{lem:P1-P2-is-DPF}). This motivates us to define $(P_1, P_2)$ as a \epfs{} of $(O, P_{11}, P_{12}, P_{21}, P_{22})$. \Cref{def:extended-path-function} formally define this. 

\begin{definition}[Derived Path Function]
    \label{def:extended-path-function}
    Let $H, H_1$ and $H_2$ be  undirected graphs such that $H_1$ and $H_2$ are induced subgraphs of $H$, and $I = V_{H_1}\cap V_{H_2}$ is a vertex separator of $H$ that separates $V_{H_1}\setminus I$ and $V_{H_2}\setminus I$. Let $S_1 \subseteq V_{H_1}$ and $S_2\subseteq V_{H_2}$ such that $S_1\cap S_2 = I$. For $a\in \{1,2\}$, let $(O_a, P_{a1}, P_{a2})$ be a shadow of $H_a$ on $S_a\cup N(S_a, H_a)$ and $O\in \setofpartialMECs{H[S_1\cup S_2\cup N(S_1\cup S_2, H)]}$ such that $O$ obeys \cref{item-1-of-obs1:O-structure-for-existence-of-MEC,item-2-of-obs1:O-structure-for-existence-of-MEC,item-3-of-obs1:O-structure-for-existence-of-MEC} of \cref{obs1:O-structure-for-existence-of-MEC}. 
Let $P_1: E_O \times E_O \rightarrow \{0,1\}$, and $P_2: E_O\times V_O \rightarrow \{0,1\}$ be two functions. We call $(P_1, P_2)$ the \epfs{} of $(O, P_{11}, P_{12}, P_{21}, P_{22})$, denoted as $(P_1, P_2) = \EPF{O, P_{11}, P_{12}, P_{21}, P_{22}}$, if it is computed by following the below steps:
    \begin{enumerate}
        \item 
        \label{item-1-of-def:extended-path-function}
        Step 1 (Initialization):
        \begin{enumerate}
            \item $\forall ((u,v), (x,y)) \in E_O\times E_O$, if $(u,v)\neq (x,y)$ and there exists a \tfp{} from $(u,v)$ to $(x,y)$ in $O$ then $P_1((u,v),(x,y)) =1$ else $P_1((u,v),(x,y)) =0$.
            \item 
            $\forall ((u,v), w) \in E_O\times V_O$, if $v\neq w$ and there exists a \tfp{} from $(u,v)$ to $w$ in $O$ then $P_2((u,v),w) =1$ else $P_2((u,v), w) = 0$.            
        \end{enumerate}
        \item
        \label{item-2-of-def:extended-path-function}
        Step 2 (Update 1): 
          \begin{enumerate}
          \item
          \label{subitem-1-of-item-2-of-def:extended-path-function}
          $\forall ((u,v), (x,y)) \in E_O\times E_O$ such that
            $(u, v) \neq (x, y)$ and $P_1((u,v),(x,y)) = 0$, if for some $a\in \{1,2\}$, $P_{a1}((u,v), (x,y)) =1$ then update $P_1$ with $P_1((u,v), (x,y)) =1$.
\item 
          \label{subitem-2-of-item-2-of-def:extended-path-function}
          $\forall ((u,v), w) \in E_O\times V_O$ such that $v \neq w$ and $P_2((u,v),w) =0$, if for any $a\in \{1,2\}$, $P_{a2}((u,v), w) =1$ then update $P_2$ with $P_2((u,v), w) = 1$.
\end{enumerate}
        \item
        \label{item-3-of-def:extended-path-function}
        Step 3 (Update 2: add transitivity): 
        
        Repeat the following until no further change in $(P_1, P_2)$:
        \begin{enumerate}
        \item 
        \label{subitem-1-of-item-3-of-def:extended-path-function}
        $\forall ((u,v), (x,y)) \in E_O\times E_O$ such that
          $(u, v) \neq (x, y), $ and $P_1((u,v), (x,y)) = 0$, if there exists
          $(z_1, z_2)\in E_O$ such that
          $P_1((u,v), (z_1, z_2)) = P_1((z_1, z_2), (x,y)) = 1$ then update
          $P_1$ with $P_1((u,v), (x,y)) = 1$.
        \item 
        \label{subitem-2-of-item-3-of-def:extended-path-function}
        $\forall ((u,v), w) \in E_O\times V_O$, such that $v \neq w$ and
          $P_2((u,v), w) = 0$, if there exists $(z_1, z_2)\in E_O$ such that
          $P_1((u,v), (z_1, z_2)) = P_2((z_1, z_2), w) = 1$ then update $P_2$
          with $P_2((u,v), w) = 1$.
        \end{enumerate}        
    \end{enumerate}
    We emphasize again that $P_2$ is not in general determined by $P_1$; see \cref{rem:p2-not-determined-by-p1} following \cref{def:shadow}.  \end{definition}

\begin{algorithm}
\caption{\text{DPF}$(O ,P_{11}, P_{12}, P_{21}, P_{22})$}
\label{alg:constructDPF}
\SetAlgoLined
\SetKwInOut{KwIn}{Input}
\SetKwInOut{KwOut}{Output}
\SetKwFunction{constructDPF}{construct_DPF}
\KwIn{A graph $O$, and four functions $P_{11}, P_{12}, P_{21}$ and $P_{22}$ such that
$H, H_1$ and $H_2$ are  undirected graphs such that $H_1$ and $H_2$ are induced subgraphs of $H$, and 
$I = V_{H_1}\cap V_{H_2}$ is a vertex separator of $H$ that separates $V_{H_1}\setminus I$ and $V_{H_2}\setminus I$, 
$S_1 \subseteq V_{H_1}$ and $S_2\subseteq V_{H_2}$ such that $S_1\cap S_2 = I$,
for $a\in \{1,2\}$, $(O_a, P_{a1}, P_{a2})$ is a shadow of $H_a$ on $S_a\cup N(S_a, H_a)$ and
$O\in \setofpartialMECs{H[S_1\cup S_2\cup N(S_1\cup S_2, H)]}$ such that $O$ obeys \cref{item-1-of-obs1:O-structure-for-existence-of-MEC,item-2-of-obs1:O-structure-for-existence-of-MEC,item-3-of-obs1:O-structure-for-existence-of-MEC} of \cref{obs1:O-structure-for-existence-of-MEC}.
}
    \KwOut{ $(P_1, P_2)=\EPF{O, P_{11}, P_{12}, P_{21}, P_{22}}$.}
    
    $(P_1, P_2) \leftarrow$ TFP($O$) \label{alg:DPF:P_1-P_2-init}

    \ForEach{$a\in \{1,2\}$\label{alg:DPF:foreach-2-start}}
    {
        \ForEach{$((u,v),(x,y)) \in E_O\times E_O$ such that $(u,v)\neq (x,y)$\label{alg:DPF:foreach-2-a-start}}
        {
                \If{$P_{a1}((u,v),(x,y)) =1$\label{alg:DPF:if-2-start}}
                {
                    $P_1((u,v),(x,y)) = 1$ \label{alg:DPF:P_1-update-1b}
                \label{alg:DPF:if-2-end}}
            \label{alg:DPF:else-1-end} 
}\label{alg:DPF:foreach-2-a-end}

        \ForEach{$((u,v),w) \in E_O\times V_O$ such that $v\neq w$\label{alg:DPF:foreach-2-b-start}}
        {
                \If{$P_{a2}((u,v),w) =1$\label{alg:DPF:if-4-start}}
                {
                    $P_2((u,v),w) = 1$ \label{alg:DPF:P_2-update-1b}
                \label{alg:DPF:if-4-end}}
            \label{alg:DPF:else-2-end}
\label{alg:DPF:foreach-2-b-end}}
    }\label{alg:DPF:foreach-2-end}

    \While{\label{alg:DPF:while-1-start} $\exists ((u,v),(x,y)) \in E_O\times E_O$ such that $(u,v)\neq (x,y)$, $P_1((u,v),(x,y)) =0$, and \\there exists $(z_1,z_2) \in E_O$ such that $P_1((u,v), (z_1, z_2)) = P_1((z_1, z_2), (x,y)) =1$ }
    {
        $P_1((u,v),(x,y)) = 1$ \label{alg:DPF:P_1-update-2}
    }\label{alg:DPF:while-1-end}

    \While{\label{alg:DPF:while-2-start} $\exists ((u,v),w) \in E_O\times V_O$ such that $v\neq w$, $P_1((u,v),w) =0$, and \\there exists $(z_1,z_2) \in E_O$ such that $P_1((u,v), (z_1, z_2)) = P_2((z_1, z_2), w) =1$ }
    {
        $P_2((u,v),w) = 1$ \label{alg:DPF:P_2-update-2}
    }\label{alg:DPF:while-2-end}

  \KwRet $(P_1, P_2)$ \label{alg:DPF:return}
\end{algorithm}
 We construct \Cref{alg:constructDPF} to compute $\EPF{O, P_{11}, P_{12}, P_{21}, P_{22}}$. 
Line~\ref{alg:DPF:P_1-P_2-init} performs the initialization part (Step 1 of \cref{def:extended-path-function}).
Lines~\ref{alg:DPF:foreach-2-start}-\ref{alg:DPF:foreach-2-end} do the first update of $P_1$ and $P_2$ (Step 2 of \cref{def:extended-path-function}). 
Lines~\ref{alg:DPF:while-1-start}-\ref{alg:DPF:while-2-end} do the final update of $P_1$ and $P_2$ (Step 3 of \cref{def:extended-path-function}). 

The following lemma shows that if $(O, P_1, P_2)$ is the shadow of an MEC, and $(O_1, P_{11}, P_{12})$ and $(O_2, P_{21}, P_{22})$ are the shadows of two projections of the MEC then knowing the value of $O, P_{11}, P_{12}, P_{21}$ and $P_{22}$, we can compute $(P_1, P_2)$ using \cref{alg:constructDPF}, as $(P_1, P_2) = \EPF{O, P_{11}, P_{12}, P_{21}, P_{22}}$.

\begin{lemma}
\label{lem:P1-P2-is-DPF}
    Let $H$ be an undirected graph, and $H_1$ and $H_2$ be two induced subgraphs of $H$ such that $H = H_1\cup H_2$, and $I = V_{H_1} \cap V_{H_2}$ is a vertex separator of $H$ that separates $V_{H_1}\setminus I$ and $V_{H_2}\setminus I$. 
Let $S_1, S_2$ be the  subsets of $V_{H_1}$ and $V_{H_2}$, respectively, such that  $S_1\cap S_2 = I$.
Let $Y= S_1\cup S_2\cup N(S_1\cup S_2, H)$, $Y_1 = S_1\cup N(S_1, H_1)$, and $Y_2 = S_2\cup N(S_2, H_2)$.
Let $M, M_1, M_2$ be the MECs of $H, H_1, H_2$, respectively, $(O, P_1,P_2)$ be the \shadow{} of $M$ on $Y$, $(O_1, P_{11}, P_{12})$ be the \shadow{} of $M_1$ on $Y_1$, and $(O_2, P_{21}, P_{22})$ be the \shadow{} of $M_2$ on $Y_2$. If $\mathcal{P}(M, V_{H_1}, V_{H_2}) = (M_1, M_2)$ then $(P_1, P_2) = \EPF{O, P_{11}, P_{12}, P_{21}, P_{22}}$, i.e., $(P_1, P_2)$ is the \epfs{} of $(O, P_{11}, P_{12}, P_{21}, P_{22})$.
\end{lemma}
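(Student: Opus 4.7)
The plan is to show that $(P_1, P_2)$, the shadow of $M$ on $Y$, coincides with the output $(P_1', P_2')$ of \cref{alg:constructDPF} applied to $(O, P_{11}, P_{12}, P_{21}, P_{22})$, proceeding by two pointwise inequalities.

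First, I would establish soundness, namely $P_1' \leq P_1$ and $P_2' \leq P_2$, by induction on the sequence of assignments in \cref{alg:constructDPF}. The invariant is that whenever the algorithm sets a coordinate to $1$, there is an honest \tfp{} in $M$ witnessing it, so by \cref{def:shadow} the corresponding shadow value is also $1$. The base case (line~\ref{alg:DPF:P_1-P_2-init}) is handled by \cref{lem:alg:tfp-is-valid} together with the fact that $O = M[V_O]$ is an induced subgraph of $M$, so any \tfp{} in $O$ is a \tfp{} in $M$. The update at lines~\ref{alg:DPF:foreach-2-start}--\ref{alg:DPF:foreach-2-end} is handled by \cref{obs:cond-for-ud-path-in-M_a-to-be-ud-path-in-M}: for $(u,v) \in E_O$ we have $v \rightarrow u \notin M$, and hence any \tfp{} from $(u,v)$ witnessed by $P_{a1}$ or $P_{a2}$ in $M_a$ lifts to a \tfp{} in $M$. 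The transitive-closure updates at lines~\ref{alg:DPF:while-1-start}--\ref{alg:DPF:while-2-end} are handled by \cref{obs:concatenate-triangle-free-paths}, which concatenates witness \tfps{}.

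Second, I would establish completeness, namely $P_1 \leq P_1'$ and $P_2 \leq P_2'$, by induction on the length of a witness \tfp{}. Suppose $P_1((u,v),(x,y)) = 1$; then by \cref{def:shadow} there is a \tfp{} $Q$ in $M$ from $(u,v)$ to $(x,y)$ with $(u,v) \neq (x,y)$. By \cref{item-5-of-obs1:O-structure-for-existence-of-MEC} of \cref{obs1:O-structure-for-existence-of-MEC}, one of three clauses holds. If $Q$ lies entirely in $O$, the initialization step sets $P_1'((u,v),(x,y)) = 1$ by \cref{lem:alg:tfp-is-valid}. If $P_{a1}((u,v),(x,y)) = 1$ for some $a \in \{1,2\}$, the foreach step sets $P_1'((u,v),(x,y)) = 1$. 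Otherwise, by \cref{item-5-of-obs1:O-structure-for-existence-of-MEC} there exists $(z_1,z_2) \in E_O$ with $P_1((u,v),(z_1,z_2)) = P_1((z_1,z_2),(x,y)) = 1$ and both factors witnessed by strictly shorter \tfps{} (chosen as prefix/suffix decompositions of $Q$ at the first vertex whose predecessor lies in $V_O$); the inductive hypothesis then gives $P_1'((u,v),(z_1,z_2)) = P_1'((z_1,z_2),(x,y)) = 1$, and the while loop fires to set $P_1'((u,v),(x,y)) = 1$. The argument for $P_2$ is identical, using \cref{item-6-of-obs1:O-structure-for-existence-of-MEC}.

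The subtlety I expect to handle carefully is the selection of the splitting edge $(z_1,z_2)$ in the completeness direction when $Q$ is of mixed type (with some internal vertices in $V_{H_1} \setminus V_{H_2}$ and others in $V_{H_2} \setminus V_{H_1}$). Here I would mirror the device already used inside the proof of \cref{obs1:O-structure-for-existence-of-MEC}: walk along $Q$ to the first vertex that exits $V_{M_a}$ (for whichever side contains the initial edge); its predecessor must lie in the separator $I \subseteq V_O$, and the pair of consecutive vertices across that boundary is the desired $(z_1,z_2) \in E_O$. Termination of the algorithm is immediate because $E_O$ and $V_O$ are finite and each loop is monotone non-decreasing in the bits of $(P_1', P_2')$.
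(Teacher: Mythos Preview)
Your soundness argument ($P_1' \le P_1$, $P_2' \le P_2$) follows the paper's almost line for line. For completeness you take a genuinely different route: you induct on the length of a fixed witness \tfp{} $Q$ in $M$, splitting $Q$ at a separator crossing into strictly shorter prefix and suffix witnesses. The paper instead builds an auxiliary digraph on $E_O$ with an arc $(u,v)\to(x,y)$ whenever $P_1((u,v),(x,y))=1$, invokes \cref{lem:nes-tfp-cond-for-an-MEC} to show this relation is acyclic, and runs a minimal-counterexample argument along a topological order of that digraph; this lets the paper apply \cref{item-5-of-obs1:O-structure-for-existence-of-MEC} purely as a black box, with well-foundedness supplied by the topological order rather than by path length. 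Your approach is more elementary---it never needs \cref{lem:nes-tfp-cond-for-an-MEC}---but less modular: the bare statement of \cref{item-5-of-obs1:O-structure-for-existence-of-MEC} does not guarantee that the splitting edge $(z_1,z_2)$ decomposes your specific $Q$ into strictly shorter pieces, so you must reach inside its proof (the construction in \cref{obs:item-5-of-obs1:O-structure-for-existence-of-MEC}) to extract that decomposition, exactly as your ``subtlety'' paragraph anticipates. One cosmetic point: your three cases are not literally the three clauses of \cref{item-5-of-obs1:O-structure-for-existence-of-MEC} (your first case ``$Q$ lies in $O$'' is only a sufficient condition for clause~(a)), so it is cleaner to drop that citation and case-split directly on whether $Q$ lies in $V_O$, in some $V_{M_a}$, or in neither.
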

\begin{proof}
Let $\EPF{O, P_{11}, P_{12}, P_{21}, P_{22}} = (P_1', P_2')$. We show that $(P_1, P_2) = (P_1', P_2')$.  We first show that $P_1 = P_1'$. Since $(O, P_1, P_2)$ is a shadow of $M$, therefore, from \cref{def:projection}, the domain of $P_1$ is $E_O \times E_O$ and the co-domain of $P_1$ is $\{0,1\}$. Also, from line \ref{alg:DPF:P_1-P_2-init} of \cref{alg:constructDPF}, since $(P_1', P_2') = \text{TFP}(O)$, therefore, from \cref{alg:tfp}, the domain of $P_1'$ is $E_O \times E_O$ and the co-domain of $P_1'$ is $\{0,1\}$. This implies both $P_1$ and $P_1'$ have the same domain and co-domain. Thus, to prove $P_1 = P_1'$, it is sufficient to show that (a) $\forall ((u,v), (x,y)) \in E_O\times E_O$, if $P_1'((u,v), (x,y)) =1$ then $P_1((u,v), (x,y)) =1$, and (b) $\forall ((u,v), (x,y)) \in E_O\times E_O$, if $P_1((u,v), (x,y)) =1$ then $P_1'((u,v), (x,y)) =1$.

\begin{claim}
    \label{claim-1-to-prove-P1'-equals-P1}
    For all $((u,v), (x,y)) \in E_O\times E_O$, if $P_1'((u,v), (x,y)) = 1$, then $P_1((u,v), (x,y)) = 1$.
\end{claim}

\begin{proof}
From \cref{def:extended-path-function}, $P_1'$ reached its final version after following steps \ref{item-1-of-def:extended-path-function}, \ref{item-2-of-def:extended-path-function}, and \ref{item-3-of-def:extended-path-function} of \cref{def:extended-path-function}. 

We first show that at the end of step \ref{item-1-of-def:extended-path-function}, for any $(u,v),(x,y)\in E_O$, if $P_1'((u,v), (x,y)) = 1$, then $P_1((u,v), (x,y)) = 1$. From line \ref{alg:DPF:P_1-P_2-init} of \cref{alg:constructDPF}, $(P_1', P_2') = \text{TFP}(O)$. Therefore, from \cref{lem:alg:tfp-is-valid}, if $P_1'((u,v),(x,y)) = 1$, then there exists a \tfp{} from $(u,v)$ to $(x,y)$ in $O$. From \cref{item-1-of-def:shadow} of \cref{def:shadow}, $O$ is an induced subgraph of $M$. This implies there exists a \tfp{} from $(u,v)$ to $(x,y)$ in $M$. Then, from \cref{item-2-of-def:shadow} of \cref{def:shadow}, $P_1((u,v),(x,y)) = 1$. This implies that at the end of the initialization step (i.e., step \ref{item-1-of-def:extended-path-function}) of \cref{def:extended-path-function}, if $P_1'((u,v), (x,y)) = 1$, then $P_1((u,v), (x,y)) = 1$.

We now show that at step \ref{item-2-of-def:extended-path-function} (update 1) of \cref{def:extended-path-function}, for any $((u,v), (x,y)) \in E_O\times E_O$, if $P_1'$ is updated with $P_1'((u,v), (x,y)) = 1$, then $P_1((u,v), (x,y)) = 1$.
Suppose for some $((u,v), (x,y)) \in E_O\times E_O$, $P_1'$ is updated with $P_1'((u,v), (x,y)) = 1$ at step \ref{item-2-of-def:extended-path-function} of \cref{def:extended-path-function}. From \cref{def:extended-path-function}, this implies that $(u,v)\neq (x,y)$ and for some $a\in \{1,2\}$, $P_{a1}((u,v), (x,y)) = 1$.
Since $P_1$ obeys \cref{item-5-of-obs1:O-structure-for-existence-of-MEC} of \cref{obs1:O-structure-for-existence-of-MEC}, from \cref{subitem-2-of-item-5-of-obs1:O-structure-for-existence-of-MEC} of \cref{item-5-of-obs1:O-structure-for-existence-of-MEC} of \cref{obs1:O-structure-for-existence-of-MEC}, $P_1((u,v), (x,y)) = 1$. 
This implies that if for some $((u,v), (x,y)) \in E_O\times E_O$, $P_1'$ is updated with $P_1'((u,v), (x,y)) = 1$ at step \ref{item-2-of-def:extended-path-function} of \cref{def:extended-path-function} , then $P_1((u,v), (x,y)) = 1$. 
This further implies that at the start of step 3, for any $((u,v), (x,y)) \in E_O\times E_O$, if $P_1'((u,v), (x,y)) = 1$, then $P_1((u,v), (x,y)) = 1$.

Suppose while running step 3, if at some iteration, for some $((u,v), (x,y)) \in E_O\times E_O$, we update $P_1'$ with $P_1'((u,v), (x,y)) = 1$. Then $P_1((u,v), (x,y))=1$ provided that before that iteration, for all $((u',v'), (x',y')) \in E_O\times E_O$, if $P_1'((u',v'),(x',y')) = 1$, then $P_1((u',v'), (x',y')) = 1$.
From \cref{def:extended-path-function}, if at step 3 of \cref{def:extended-path-function}, for any $((u,v), (x,y)) \in E_O\times E_O$, $P_1'$ is updated with $P_1'((u,v),(x,y)) = 1$, then $(u,v)\neq (x,y)$ and there exists a $(z_1, z_2) \in E_O$ such that $P_1'((u,v), (z_1, z_2)) = P_1'((z_1, z_2), (x,y)) = 1$. Since at the start of the iteration, $P_1'((u,v), (z_1, z_2)) = P_1'((z_1, z_2), (x,y)) = 1$, therefore, from our assumption, $P_1((u,v), (z_1, z_2)) = P_1((z_1, z_2), (x,y)) = 1$. Since $P_1$ follows \cref{item-5-of-obs1:O-structure-for-existence-of-MEC} of \cref{obs1:O-structure-for-existence-of-MEC}, from \cref{subitem-3-of-item-5-of-obs1:O-structure-for-existence-of-MEC} of \cref{item-5-of-obs1:O-structure-for-existence-of-MEC} of \cref{obs1:O-structure-for-existence-of-MEC}, $P_1((u,v), (x,y)) = 1$. 
This proves that even at the end of step 3, for all $((u,v), (x,y)) \in E_O\times E_O$, if $P_1'((u,v),(x,y)) = 1$, then $P_1((u,v), (x,y)) = 1$.
This proves \cref{claim-1-to-prove-P1'-equals-P1}. 
\end{proof}

\begin{claim}
    \label{claim-21-to-prove-P1'-equals-P1}
    For all $((u,v), (x,y)) \in E_O \times E_O$, if $P_1((u,v), (x,y)) = 1$, then $P_1'((u,v), (x,y)) = 1$.
\end{claim}
\begin{proof}
    Let us construct a directed acyclic graph (DAG) $D$ such that $V_D = E_O$, and for $(u,v), (x,y) \in E_O$, $(u,v) \rightarrow (x,y) \in E_D$ if, and only if, $P_1((u,v), (x,y)) = 1$. We first show that $D$ is a DAG.

    Since $P_1$ obeys \cref{item-5-of-obs1:O-structure-for-existence-of-MEC} of \cref{obs1:O-structure-for-existence-of-MEC}, if $P_1((u,v),(x,y)) = 1$, then $(u,v) \neq (x,y)$. This implies that $D$ does not contain any self-loop. Suppose $D$ is not a DAG. Then, pick the smallest length cycle $((u_1,v_1), (u_2, v_2), \ldots, (u_n, v_n), (u_1, v_1))$ in $D$ for some $n \geq 2$. This implies $P_1((u_1,v_1), (u_2, v_2)) = P_1((u_2,v_2), (u_3, v_3)) = \ldots = P_1((u_{n-1},v_{n-1}), (u_n, v_n)) = P_1((u_n,v_n), (u_1, v_1)) = 1$. Since $P_1$ obeys \cref{item-5-of-obs1:O-structure-for-existence-of-MEC} of \cref{obs1:O-structure-for-existence-of-MEC}, therefore, from \cref{subitem-3-of-item-5-of-obs1:O-structure-for-existence-of-MEC} of \cref{item-5-of-obs1:O-structure-for-existence-of-MEC} of \cref{obs1:O-structure-for-existence-of-MEC}, $P_1((u_1,v_1), (u_2, v_2)) = P_1((u_2,v_2), (u_3, v_3)) = \ldots = P_1((u_{n-1},v_{n-1}), (u_n, v_n)) = 1$ implies $P_1((u_1, v_1), (u_n, v_n)) = 1$. This further implies that in $M$, there exist \tfps{} from $(u_1,v_1)$ to $(u_n, v_n)$, and from $(u_n, v_n)$ to $(u_1, v_1)$. But, this contradicts \cref{lem:nes-tfp-cond-for-an-MEC}. This implies that $D$ does not have any cycles.

    We now show that if $(u,v) \rightarrow (x,y) \in E_D$, then $P_1'((u,v), (x,y)) = 1$, which completes the proof of \cref{claim-21-to-prove-P1'-equals-P1} (remember, from the construction of $D$, $(u,v) \rightarrow (x,y) \in E_D$ if $P_1((u,v), (x,y)) = 1$). Let $\tau$ be an ordering on the vertices of $D$ such that if $(u,v) \rightarrow (x,y) \in E_D$, then $\tau((u,v)) < \tau((x,y))$ (the existence of $\tau$ is because $D$ is a DAG).

    Suppose there exists $(u,v) \rightarrow (x,y) \in E_D$ such that $P_1'((u,v), (x,y)) = 0$. Pick such a $(u,v) \rightarrow (x,y) \in E_D$ such that for all $(u',v') \rightarrow (x',y') \in E_D$, 
    \begin{equation}
    \label{claim:higher-ordering-obeys-P1}
    \text{if $\tau(x',y') < \tau(x,y)$, or $(x',y') = (x,y)$ and $\tau(u,v) < \tau(u',v')$, then $P_1'((u',v'), (x',y')) = 1$.}
    \end{equation}

    From the construction of $D$, since $(u,v) \rightarrow (x,y) \in E_D$, therefore, $P_1((u,v), (x,y)) = 1$. Since $P_1$ obeys \cref{item-5-of-obs1:O-structure-for-existence-of-MEC} of \cref{obs1:O-structure-for-existence-of-MEC}, therefore, either (a) there exists a \tfp{} from $(u,v)$ to $(x,y)$ in $O$, or (b) for some $a\in \{1,2\}$, $P_{a1}((u,v),(x,y)) = 1$, or (c) there exists a $(z_1, z_2) \in E_O$ such that $P_1((u,v), (z_1, z_2)) = P_1((z_1, z_2), (x,y)) = 1$. If $P_1((u,v), (x,y)) = 1$ due to (a) or (b), then from steps \ref{item-1-of-def:extended-path-function} and \ref{item-2-of-def:extended-path-function} of \cref{def:extended-path-function}, $P_1'((u,v), (x,y)) = 1$. If $P_1((u,v), (x,y)) = 1$ due to (c), then there exists a $(z_1, z_2) \in E_O$ such that $P_1((u,v), (z_1, z_2)) = P_1((z_1, z_2), (x,y)) = 1$. This implies $\tau((u,v)) < \tau((z_1, z_2)) < \tau((x,y))$. Then from \cref{claim:higher-ordering-obeys-P1}, $P_1'((u,v), (z_1, z_2)) = 1$ (because $\tau((z_1, z_2))< \tau((x,y))$), and $P_1'((z_1, z_2), (x,y)) = 1$ (because $\tau((u,v))< \tau((z_1, z_2))$). But, then $P_1'((u,v), (x,y)) = 1$, updated at step 3 of \cref{def:extended-path-function}. This implies that for all $(u,v), (x,y) \in E_O$, if $P_1((u,v), (x,y)) = 1$, then $P_1'((u,v), (x,y)) = 1$. This completes the proof of \cref{claim-21-to-prove-P1'-equals-P1}.
\end{proof}

\Cref{claim-21-to-prove-P1'-equals-P1,claim-1-to-prove-P1'-equals-P1} proves $P_1 = P_1'$.

We now prove $P_2 = P_2'$. Since $(O, P_1, P_2)$ is a shadow of $M$, therefore, according to \cref{def:projection}, the domain of $P_2$ is $E_O \times V_O$, and the co-domain of $P_1$ is $\{0,1\}$. Also, from line \ref{alg:DPF:P_1-P_2-init} of \cref{alg:constructDPF}, since $(P_1', P_2') = \text{TFP}(O)$, therefore, based on \cref{alg:tfp}, the domain of $P_2'$ is $E_O \times V_O$, and the co-domain of $P_1'$ is $\{0,1\}$. This implies that both $P_2$ and $P_2'$ have the same domain and co-domain. Thus, to prove $P_1 = P_1'$, it is sufficient to show that (a) for all $((u,v),w) \in E_O \times V_O$, if $P_2((u,v), w) = 1$, then $P_2'((u,v), w) = 1$, and (b) for all $((u,v),w) \in E_O \times V_O$, if $P_2'((u,v), w) = 1$, then $P_2((u,v), w) = 1$. The proof of (a) is similar to \cref{claim-1-to-prove-P1'-equals-P1}, and the proof of (b) is similar to \cref{claim-21-to-prove-P1'-equals-P1}. For completeness, we provide the proof.

\begin{claim}
    \label{claim-1-to-prove-P2'-equals-P2}
    $\forall ((u,v), w) \in E_O\times V_O$, if $P_2'((u,v), w) = 1$ then $P_2((u,v), w) = 1$. 
\end{claim}

\begin{proof}
From \cref{def:extended-path-function}, $P_2'$ reached its final version after following steps \ref{item-1-of-def:extended-path-function}, \ref{item-2-of-def:extended-path-function}, and \ref{item-3-of-def:extended-path-function} of \cref{def:extended-path-function}. 

Firstly, we demonstrate that at the end of step \ref{item-1-of-def:extended-path-function}, for any $((u,v),w) \in E_O \times V_O$, if $P_2'((u,v), w) = 1$, then $P_2((u,v), w) = 1$. From line \ref{alg:DPF:P_1-P_2-init} of \cref{alg:constructDPF}, we have $(P_1', P_2') = \text{TFP}(O)$. Therefore, from \cref{lem:alg:tfp-is-valid}, if $P_2'((u,v),w) = 1$, then $v\neq w$, and there exists a \tfp{} from $(u,v)$ to $w$ in $O$. Since $P_2$ satisfies \cref{item-6-of-obs1:O-structure-for-existence-of-MEC} in \cref{obs1:O-structure-for-existence-of-MEC}, from \cref{subitem-1-of-item-6-of-obs1:O-structure-for-existence-of-MEC}, we have $P_2((u,v), w) = 1$. This implies that at the end of the initialization step (i.e., step \ref{item-1-of-def:extended-path-function}) of \cref{def:extended-path-function}, if $P_2'((u,v), w) = 1$, then $P_2((u,v), w) = 1$.

Next, we demonstrate that if at step \cref{item-2-of-def:extended-path-function} (update 1) of \cref{def:extended-path-function}, for any $((u,v), w) \in E_O\times V_O$, $P_2'$ is updated with $P_2'((u,v), w) = 1$, then $P_2((u,v), w) = 1$.

Suppose for some $((u,v), w) \in E_O\times V_O$, $P_2'$ is updated with $P_2'((u,v), w) = 1$ at step \ref{item-2-of-def:extended-path-function} of \cref{def:extended-path-function}. From \cref{def:extended-path-function}, this implies that $v\neq w$ and for some $a\in \{1,2\}$, $P_{a2}((u,v), w) = 1$. Since $P_2$ satisfies  \cref{obs1:O-structure-for-existence-of-MEC}, from \cref{subitem-2-of-item-6-of-obs1:O-structure-for-existence-of-MEC} of \cref{item-6-of-obs1:O-structure-for-existence-of-MEC} of \cref{obs1:O-structure-for-existence-of-MEC}, we have $P_2((u,v), w) = 1$. This implies that if for some $((u,v), w) \in E_O\times V_O$, $P_2'$ is updated with $P_2'((u,v), w) = 1$ at step \ref{item-2-of-def:extended-path-function}, then $P_2((u,v), w) = 1$. This further implies that at the start of step 3, for any $((u,v), w) \in E_O\times V_O$, if $P_2'((u,v), w) = 1$, then $P_2((u,v), w) = 1$.

Suppose while running step 3, if at some iteration, for some $((u,v), w) \in E_O\times V_O$, we update $P_2'$ with $P_2'((u,v), w) = 1$. Then $P_2((u,v), w) = 1$ provided that before that iteration, for all $((u',v'), w') \in E_O\times V_O$, if $P_2'((u',v'),w') = 1$, then $P_2((u',v'), w') = 1$. From \cref{def:extended-path-function}, if at step 3 of \cref{def:extended-path-function}, for any $((u,v), w) \in E_O\times V_O$, $P_2'$ is updated with $P_2'((u,v),w) = 1$, then $v\neq w$, and there exists a $(z_1, z_2) \in E_O$ such that $P_1'((u,v), (z_1, z_2)) = P_2'((z_1, z_2), w) = 1$. Since we have already shown that $P_1 = P_1'$, therefore, $P_1((u,v),(z_1,z_2)) = 1$. Also, since at the start of the iteration, we have $ P_2'((z_1, z_2), w) = 1$, therefore, from our assumption, $P_2((z_1, z_2), w) = 1$. Since $P_2$ follows  \cref{obs1:O-structure-for-existence-of-MEC}, from \cref{subitem-3-of-item-6-of-obs1:O-structure-for-existence-of-MEC} of \cref{item-6-of-obs1:O-structure-for-existence-of-MEC} of \cref{obs1:O-structure-for-existence-of-MEC}, we have $P_2((u,v), w) = 1$. This proves that even at the end of step 3, for all $((u,v), w) \in E_O\times V_O$, if $P_2'((u,v),w) = 1$ then $P_2((u,v), w) =1$.
This completes the proof of \cref{claim-1-to-prove-P2'-equals-P2}.
\end{proof}

\begin{claim}
    \label{claim-2-to-prove-P2'-equals-P2}
    $\forall ((u,v), w) \in E_O\times V_O$, if $P_2((u,v), w) =1$ then $P_2'((u,v), w) =1$.
\end{claim}
\begin{proof}
    We show that for all $((u,v), w) \in E_O\times V_O$, if $P_2((u,v), w) =1$ then $P_2'((u,v), w) =1$. 
    Let us construct a directed acyclic graph (DAG) $D$ such that $V_D = E_O$, and for $(u,v), (x,y) \in E_O$, $(u,v)\rightarrow (x,y) \in E_D$ if, and only if, $P_1((u,v), (x,y)) =1$. We have shown in the proof of \cref{claim-21-to-prove-P1'-equals-P1} that $D$ is a DAG (read the first two paragraphs of \cref{claim-21-to-prove-P1'-equals-P1} for the proof that $D$ is a DAG). Let $\tau$ be an ordering on the vertices of $D$ such that if $(u,v)\rightarrow (x,y) \in E_D$ then $\tau((u,v)) < \tau((x,y))$ (existence of $\tau$ is because $D$ is a DAG). If possible, suppose there exists a $((u,v), w) \in E_O\times V_O$ such that $P_2((u,v), w) =1$ and $P_2'((u,v), w) =0$. For $w$, pick such a $(u,v)$ which is highest in $\tau$, i.e.,  
    \begin{equation}
    \label{claim:higher-ordering-obeys-P2}
    \text{$\forall (u', v') \in E_O$ such that $\tau((u, v))< \tau((u',v'))$, if $P_2((u',v'), w) =1$ then $P_2'((u',v'), w) =1$.}
    \end{equation}
    Since $P_2$ obeys \cref{item-6-of-obs1:O-structure-for-existence-of-MEC} of \cref{obs1:O-structure-for-existence-of-MEC}, therefore, either (a) there exists a \tfp{} from $(u,v)$ to $w$ in $O$, or (b) for any $a\in \{1,2\}$, $P_{a2}((u,v),w) = 1$, and $v\rightarrow u \notin O$, or (c) there exists a $(z_1, z_2) \in E_O$ such that $P_1((u,v), (z_1, z_2)) = P_2((z_1, z_2), w) =1$. 
    If $P_2((u,v), w) =1$ due to (a) or (b) then from \cref{item-1-of-def:extended-path-function,item-2-of-def:extended-path-function} of \cref{def:extended-path-function}, $P_2'((u,v), w) = 1$. And, if $P_2((u,v), w) =1$ due to (c) then there exists a $(z_1, z_2) \in E_O$ such that $P_1((u,v), (z_1, z_2)) = P_2((z_1, z_2), w) =1$.
    Since, we have shown that $P_1 =P_1'$, therefore, $P_1'((u,v), (z_1, z_2)) = P_1((u,v), (z_1, z_2)) = 1$. 
    Also, since $P_1((u,v), (z_1, z_2)) =1$, therefore, $\tau((u,v))< \tau((z_1, z_2))$. 
    Then from \cref{claim:higher-ordering-obeys-P2}, $P_2'((z_1, z_2), w) =1$.  But, then $P_2'((u,v), w) =1$, updated at step 3 of  \cref{def:extended-path-function}. This implies that for all $((u,v), w) \in E_O\times V_O$, if $P_2((u,v), w) =1$ then $P_2'((u,v), w) =1$. This completes the proof of \cref{claim-2-to-prove-P2'-equals-P2}.
\end{proof}

\Cref{claim-2-to-prove-P2'-equals-P2,claim-1-to-prove-P2'-equals-P2} proves $P_2 = P_2'$. This implies $(P_1, P_2) = (P_1', P_2') = \EPF{O, P_{11}, P_{12}, P_{21}, P_{22}}$. This completes the proof of \cref{lem:P1-P2-is-DPF}.
\end{proof}

We define the following, to summarize the necessary conditions provided by \cref{obs1:O-structure-for-existence-of-MEC}.

\begin{definition}[Valid \epfs{}]
    \label{def:valid-epfs}
$(P_1, P_2)$ is said to be a ``valid \epfs{}'' of $(O, P_{11}, P_{12}, P_{21}, P_{22})$ if:
     \begin{enumerate}
         \item
         \label{item-1-of-def:valid-epfs}
$(P_1, P_2)= \EPF{O, P_{11}, P_{12}, P_{21}, P_{22}}$, and
         \item
         \label{item-2-of-def:valid-epfs}
         For all $(u,v),(x,y) \in E_O$, if $P_1((u,v),(x,y)) =1$ then $P_1((x,y), (u,v)) =0$.
     \end{enumerate}
     
\end{definition}

\begin{definition}[Extension]
 \label{def:extension-of-O1-O2-P11-P12-P21-P22}
 Let $H$ be an undirected graph, and $H_1$ and $H_2$ be two induced subgraphs of
 $H$ such that $H = H_1 \cup H_2$, and $I = V_{H_1} \cap V_{H_2}$ is a vertex
 separator of $H$ that separates $V_{H_1}\setminus{I}$ and
 $V_{H_2}\setminus{I}$. Let $S_1$ and $S_2$ be the subsets of $V_{H_1}$ and
 $V_{H_2}$, respectively, such that $S_1 \cap S_2 = I$. Let
 $A = H[S_1 \cup S_2 \cup N(S_1 \cup S_2, H)]$,
 $B_1 = H_1[S_1 \cup N(S_1, H_1)]$, and $B_2 = H_2[S_2 \cup N(S_2, H_2)]$. Let
 $O_1\in \setofpartialMECs{B_1}$ and $O_2\in \setofpartialMECs{B_2}$ be partial
 MECs.
For each $a \in \{1,2\}$, let
 $P_{a1}: E_{O_a} \times E_{O_a} \rightarrow \{0,1\}$, and
 $P_{a2}: E_{O_a} \times V_{O_a} \rightarrow \{0,1\}$ be two functions.  We
 define $O \in PMEC(A)$ to be an \emph{extension} of $(O_1, P_{11}, P_{12})$ and
 $(O_2, P_{21}, P_{22})$, denoted as
 $O \in \mathcal{E}(O_1, P_{11}, P_{12}, O_2, P_{21}, P_{22})$, if
  \begin{enumerate}
      \item
      \label{item-1-of-def:extension-of-O1-O2-P11-P12-P21-P22}
      For $a\in \{1,2\}$, for $u, v \in S_a\cup N(S_a, H_a)$, if $u\rightarrow v \in O_a$ then $u\rightarrow v \in O$.
      \item
      \label{item-2-of-def:extension-of-O1-O2-P11-P12-P21-P22}
      For $a\in \{1,2\}$, $\mathcal{V}(O_a) = \mathcal{V}(O, S_a\cup N(S_a, H_a))$.
      \item
      \label{item-3-of-def:extension-of-O1-O2-P11-P12-P21-P22}
      For $a\in \{1,2\}$, for $u-v \in O_a$, $u\rightarrow v \in O$ if, and only
      if, at least one of the following occurs:
      \begin{enumerate}
          \item
          \label{subitem-1-of-item-3-of-def:extension-of-O1-O2-P11-P12-P21-P22}
          $u\rightarrow v$ is strongly protected in $O$.
          \item
          \label{subitem-2-of-item-3-of-def:extension-of-O1-O2-P11-P12-P21-P22}
          $\exists x-y \in O_a$ such that  $x\rightarrow y \in O$, and $P_{a1}((x,y), (u, v)) = 1$.
          \item
          \label{subitem-3-of-item-3-of-def:extension-of-O1-O2-P11-P12-P21-P22}
          $\exists x-y \in O_a$ such that $x\rightarrow y \in O$,  $P_{a2}(x, y, v) =1$, and $P_{a2}(v, u, x) =1$.
      \end{enumerate}

    \item
    \label{item-4-of-def:extension-of-O1-O2-P11-P12-P21-P22}
    $\EPF{O, P_{11}, P_{12}, P_{21}, P_{22}}$ is valid (see \cref{def:valid-epfs} for valid \epfs{}). 
\end{enumerate}
With a slight abuse of notation, we  say $(O, P_1, P_2)$ as an extension of $(O_1, P_{11}, P_{12}, O_2, P_{21}, P_{22})$, if
\begin{enumerate}
    \item $O$ obeys \cref{item-1-of-def:extension-of-O1-O2-P11-P12-P21-P22,item-2-of-def:extension-of-O1-O2-P11-P12-P21-P22,item-3-of-def:extension-of-O1-O2-P11-P12-P21-P22,item-4-of-def:extension-of-O1-O2-P11-P12-P21-P22} of \cref{def:extension-of-O1-O2-P11-P12-P21-P22}, and
    \item  $(P_1, P_2) = \EPF{O, P_{11}, P_{12}, P_{21}, P_{22}}$.
\end{enumerate}
 \end{definition}

Since $P_1$ in \cref{obs1:O-structure-for-existence-of-MEC} obeys \cref{item-2-of-def:valid-epfs} of \cref{def:valid-epfs} (from \cref{lem:nes-tfp-cond-for-an-MEC}), therefore,
 \cref{lem:P1-P2-is-DPF,def:extension-of-O1-O2-P11-P12-P21-P22,def:extended-path-function,def:valid-epfs} summarises \cref{obs1:O-structure-for-existence-of-MEC} as follows:
 \begin{lemma}
\label{obs:nes-conditions-of-the-shadow}
     Let $H$ be an undirected graph, and $H_1$ and $H_2$ be two induced subgraphs of $H$ such that $H = H_1\cup H_2$, and $I = V_{H_1} \cap V_{H_2}$ is a vertex separator of $H$ that separates $V_{H_1}\setminus I$ and $V_{H_2}\setminus I$. 
Let $S_1, S_2$ be the  subsets of $V_{H_1}$ and $V_{H_2}$, respectively, such that  $S_1\cap S_2 = I$.
Let $Y= S_1\cup S_2\cup N(S_1\cup S_2, H)$, $Y_1 = S_1\cup N(S_1, H_1)$, and $Y_2 = S_2\cup N(S_2, H_2)$.
Let $M, M_1, M_2$ be MECs of $H, H_1, H_2$, respectively. Let $(O, P_1,P_2)$ be the \shadow{} of $M$ on $Y$, $(O_1, P_{11}, P_{12})$ be the \shadow{} of $M_1$ on $Y_1$, and $(O_2, P_{21}, P_{22})$ be the \shadow{} of $M_2$ on $Y_2$. 
If $\mathcal{P}(M, V_{H_1}, V_{H_2}) = (M_1, M_2)$ then
\begin{enumerate}
    \item
    \label{item-1-of-obs:nes-conditions-of-the-shadow}
    $O \in \mathcal{E}(O_1, P_{11}, P_{12}, O_2, P_{21}, P_{22})$, and
    \item 
    \label{item-2-of-obs:nes-conditions-of-the-shadow}
    $(P_1, P_2) = \EPF{O, P_{11}, P_{12}, O_2, P_{21}, P_{22}}$.
\end{enumerate}
 \end{lemma}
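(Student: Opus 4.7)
The plan is to observe that \cref{obs:nes-conditions-of-the-shadow} is essentially a repackaging of \cref{obs1:O-structure-for-existence-of-MEC} and \cref{lem:P1-P2-is-DPF} into the vocabulary of \cref{def:extension-of-O1-O2-P11-P12-P21-P22} and \cref{def:valid-epfs}. No new combinatorial work is needed; the proof is a careful matching of clauses, plus one appeal to the antisymmetry of \tfps{} in an MEC.

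First I would verify the four clauses required by \cref{def:extension-of-O1-O2-P11-P12-P21-P22} for $O$ to lie in $\mathcal{E}(O_1, P_{11}, P_{12}, O_2, P_{21}, P_{22})$. Clauses \cref{item-1-of-def:extension-of-O1-O2-P11-P12-P21-P22}, \cref{item-2-of-def:extension-of-O1-O2-P11-P12-P21-P22}, and \cref{item-3-of-def:extension-of-O1-O2-P11-P12-P21-P22} are verbatim the conditions \cref{item-1-of-obs1:O-structure-for-existence-of-MEC}, \cref{item-2-of-obs1:O-structure-for-existence-of-MEC}, and \cref{item-3-of-obs1:O-structure-for-existence-of-MEC} of \cref{obs1:O-structure-for-existence-of-MEC}, so they hold by direct invocation of that lemma under the hypothesis $\mathcal{P}(M, V_{H_1}, V_{H_2}) = (M_1, M_2)$. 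The only other input needed here is that $O \in \setofpartialMECs{H[S_1 \cup S_2 \cup N(S_1 \cup S_2, H)]}$, which follows from \cref{obs:shadow-of-M-is-shadow-of-G} applied to the shadow $(O, P_1, P_2)$ of $M$ on $Y$.

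The remaining clause \cref{item-4-of-def:extension-of-O1-O2-P11-P12-P21-P22} demands that $\EPF{O, P_{11}, P_{12}, P_{21}, P_{22}}$ be \emph{valid} in the sense of \cref{def:valid-epfs}. For the first bullet of \cref{def:valid-epfs}, I would simply cite \cref{lem:P1-P2-is-DPF}: its hypotheses coincide exactly with those of the present lemma, so its conclusion $(P_1, P_2) = \EPF{O, P_{11}, P_{12}, P_{21}, P_{22}}$ transfers immediately, which also establishes \cref{item-2-of-obs:nes-conditions-of-the-shadow} of the present lemma as a free by-product. For the second bullet of \cref{def:valid-epfs}, which demands that for all $(u,v), (x,y) \in E_O$ with $P_1((u,v),(x,y)) = 1$ we have $P_1((x,y),(u,v)) = 0$, I would unfold \cref{item-2-of-def:shadow} of \cref{def:shadow}: $P_1((u,v),(x,y)) = 1$ means that a \tfp{} from $(u,v)$ to $(x,y)$ exists in $M$, and then \cref{lem:nes-tfp-cond-for-an-MEC} forbids a reverse \tfp{} in $M$, so $P_1((x,y),(u,v)) = 0$.

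I do not anticipate a serious obstacle: the argument is essentially administrative. The only risk is bookkeeping, making sure the indexing of clauses in \cref{def:extension-of-O1-O2-P11-P12-P21-P22} aligns with the order of clauses in \cref{obs1:O-structure-for-existence-of-MEC}, that the symmetric roles of $a = 1$ and $a = 2$ are handled uniformly, and that the domains/codomains of $P_1, P_2$ given by the shadow match those produced by \EPF{\cdot}\ so that the equality in \cref{lem:P1-P2-is-DPF} is well-typed.
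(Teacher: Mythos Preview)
Your proposal is correct and matches the paper's own approach essentially verbatim: the paper also presents this lemma as a direct repackaging, citing \cref{obs1:O-structure-for-existence-of-MEC} for clauses~\ref{item-1-of-def:extension-of-O1-O2-P11-P12-P21-P22}--\ref{item-3-of-def:extension-of-O1-O2-P11-P12-P21-P22} of the extension definition, \cref{lem:P1-P2-is-DPF} for the \epfs{} identity (which simultaneously gives \cref{item-2-of-obs:nes-conditions-of-the-shadow}), and \cref{lem:nes-tfp-cond-for-an-MEC} for the antisymmetry required by \cref{item-2-of-def:valid-epfs}. Your added remark that $O$ is a partial MEC via \cref{obs:shadow-of-M-is-shadow-of-G} is a small but welcome piece of bookkeeping that the paper leaves implicit.
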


 We show in \cref{subsection:sufficient-condition} that the necessary conditions in \cref{obs:nes-conditions-of-the-shadow} are also sufficient. We then use this result along with \cref{obs:nes-conditions-of-the-shadow} to construct an algorithm that counts MECs of $H$. To prove this, we need a modified version of the  LBFS algorithm of \citet{rose1976algorithmic}. In the next subsection,  we introduce a modified version of the LBFS algorithm. We will also see some properties of the new LBFS algorithm. All these results will be used in \cref{subsection:sufficient-condition} to show that the necessary conditions provided in \cref{obs:nes-conditions-of-the-shadow} are sufficient.

  \subsection{Lexicographical breadth-first search (LBFS) algorithm}
\label{subsection:LBFS}
In this section, we present a modified version of the LBFS algorithm. Let $H$ be an undirected graph, and let $H_1$ and $H_2$ be two induced subgraphs of $H$ such that $H = H_1 \cup H_2$, and $I = V_{H_1} \cap V_{H_2}$ is a vertex separator of $H$. For each $a \in \{1,2\}$, let $S_a$ be a subset of $V_{H_a}$ such that $I \subseteq S_a$, and let $M_a$ be an MEC of $H_a$. The shadow of $M_a$ on $S_a \cup N(S_a, H_a)$ is denoted as $(O_a, P_{a1}, P_{a2})$. Consider $O \in \mathcal{E}(O_1, P_{11}, P_{12}, O_2, P_{21}, P_{22})$ (i.e., $O$ is an extension of $(O_1, P_{11}, P_{12}, O_2, P_{21}, P_{22})$; refer to \Cref{def:extension-of-O1-O2-P11-P12-P21-P22} for the definition of extension).

We introduce a modified LBFS algorithm, \Cref{alg:LBFSwithO}. This algorithm takes as input (a) $O$ and (b) $\mathcal{C}$,  an \ucc{} of $M_a$ for $a\in \{1,2\}$. \Cref{alg:LBFSwithO} returns an LBFS ordering of $\mathcal{C}$ that obeys $O$, i.e., if $\tau$ is the returned LBFS ordering of $\mathcal{C}$, then for any $u, v \in V_{\mathcal{C}}$ such that if $u \rightarrow v \in O$ then $\tau(u) < \tau(v)$.

\begin{algorithm}
  \caption{LBFS($\mathcal{C},O$): A modified version of the Lexicographic
BFS (\cite{rose1976algorithmic}). If the algorithm is executed with $O$ as a null graph the algorithm performs a normal LBFS.}
\SetAlgoLined
\SetKwInOut{KwIn}{Input}
\SetKwInOut{KwOut}{Output}
\SetKwFunction{AMO-Union}{AMO-Union}
\KwIn{(a) An undirected connected chordal graph $\mathcal{C}$, and (b) a chain graph $O$ such that\\
there exist three undirected graphs $H, H_1$ and $H_2$ such that\\
(i) $V_{H_1}\cap V_{H_2} = I$ is a vertex separator of $H$,\\
(ii) there exist $S_1 \in V_{H_1}$ and $S_2 \in V_{H_2}$ such that $I = V_{S_1}\cap V_{S_2}$,\\
(iii) $M_1$ and $M_2$ are MECs of $H_1$ and $H_2$, respectively,\\
(iv) for each $a\in \{1,2\}$, $(O_a, P_{a1}, P_{a2})$ is the shadow of $M_a$ on $S_a\cup N(S_a, H_a)$,\\
(v) $O\in \mathcal{E}(O_1,P_{11}, P_{12}, O_2, P_{21}, P_{22})$, and \\
(vi) $\mathcal{C}$ is an \ucc{} of $M_1$ or $M_2$. 
}
    \KwOut{an LBFS ordering $\tau$ of $\mathcal{C}$ that is consistent with $O$
     }

    $\mathcal{S}\leftarrow$ sequence of sets initialized with $(V_{\mathcal{C}})$\label{alg:S-init} \label{algLBFS:initialization-of-S}
    
    $\tau \leftarrow$ empty list \label{algLBFS:initialization-of-tau}
    
    \While{$\mathcal{S}$ is non-empty\label{algLBFS:while-loop-start}}{
    $X\leftarrow$ first non-empty set of $\mathcal{S}$\label{algLBFS:set-X}
    
    $v\leftarrow$ arbitrary canonical source node of  $U_M(\mathcal{C},O)[X]$  \label{algLBFS:pick-vertex-v}
    
    Add vertex $v$ to the end of $\tau$.\label{algLBFS:add-vertex-to-tau}

    Replace the set $X$ in the sequence $\mathcal{S}$ by the set $X \setminus \{v\}$. \label{algLBFS:remove-picked-node}
    
    $N(v)\leftarrow \{ x|x\notin \tau \textup{ and } v-x \in E_{\mathcal{C}}\}$\label{algLBFS:neighbour-v}
    
    Denote the current $\mathcal{S}$ by $(S_1,\ldots ,S_k)$ \label{algLBFS:expansion-of-S}
    
    Replace each $S_i$ by $S_i \cap N(v, \mathcal{C}), S_i \setminus  N(v, \mathcal{C})$.\label{algLBFS:refine} \label{algLBFS:replacement-of-S}
    
    Remove all empty sets from $\mathcal{S}$. \label{algLBFS:removal-of-empty-sets}
  }\label{algLBFS:while-end}

    \KwRet $\tau$ \label{algLBFS:return}
    \label{alg:LBFSwithO}
\end{algorithm}

\Cref{alg:LBFSwithO} is a variation of the LBFS algorithm introduced by \cite{rose1976algorithmic}. Instead of selecting a vertex arbitrarily, \Cref{alg:LBFSwithO} chooses a canonical source node (\Cref{def:canonical-source-node}) at line~\ref{algLBFS:pick-vertex-v} of \Cref{alg:LBFSwithO}. The crucial point to verify is the existence of a canonical source node in $U_M(\mathcal{C}, O)[X]$, where $U_M(\mathcal{C}, O)$ represents the Markov union of $\mathcal{C}$ and $O$ (see \Cref{def:Markov-union-of-graphs} for the definition of the Markov union), and $X \subseteq V_{\mathcal{C}}$.

\begin{claim}
    \label{claim:LBFS-algorithm-always-find-cannonical-node}
    There always exists a canonical source vertex in $U_M(\mathcal{C}, O)[X]$, where $X\subseteq V_{\mathcal{C}}$.
\end{claim}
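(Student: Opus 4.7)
The plan is to show that $G' \defeq U_M(\mathcal{C},O)[X]$ is itself a chain graph whose canonical sources live inside any source chain component of the quotient DAG. First I would verify that $\mathcal{C}$ and $O$ are synchronous, which is automatic because $\mathcal{C}$ is undirected, so $U_M(\mathcal{C},O)$ and hence $G'$ is well-defined by \cref{def:Markov-union-of-graphs}. Every directed edge of $G'$ comes from $O$ (since $\mathcal{C}$ contributes only undirected edges), and a directed edge $u\rightarrow v$ inside $V_{\mathcal{C}}$ can only arise via the extension mechanism of \cref{item-3-of-def:extension-of-O1-O2-P11-P12-P21-P22}.

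Next I would argue that $G'$ has no directed cycle. The directed portion of $G'$ is contained in the directed portion of $O$, which is a DAG because $O$ is a partial MEC (\cref{item-1-of-def:partial-MEC}); so any putative directed cycle $(v_1,v_2,\dots,v_k,v_1)$ must contain at least one undirected edge. Take such a directed cycle of minimum length and consider a step $v_{i-1}\rightarrow v_i - v_{i+1}$ (directed followed by undirected, traversed in the cycle's direction). If $v_{i-1}$ and $v_{i+1}$ are non-adjacent in $G'$, then in $O$ we have the induced subgraph $v_{i-1}\rightarrow v_i - v_{i+1}$ (after checking using \cref{item-2-of-def:extension-of-O1-O2-P11-P12-P21-P22} and the no-$u\to v-w$ clause of the partial MEC $O$) which contradicts \cref{item-3-of-def:partial-MEC}; if they are adjacent, the chord shortens the cycle, violating minimality. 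Handling the corner case where the undirected edge comes from $\mathcal{C}$ but not from $O$ uses the same induced-subgraph reasoning inside $M_a$ and the fact that $\mathcal{C}$ is a UCC of $M_a$ with all incident edges undirected in $M_a$ (\cref{obs:edge-in--ucc-is-undirected}).

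With $G'$ a chain graph, its UCCs form the nodes of a DAG (directed edges of $G'$ only run between distinct UCCs, by \cref{prop:edge-between-2-nodes-of-same-ucc-is-ud}). Let $\mathcal{D}$ be any source UCC of this DAG and pick $s\in V_{\mathcal{D}}$ arbitrarily. I claim $s$ is a canonical source of $G'$. Suppose for contradiction there is a directed edge $u\rightarrow v\in G'$ and a \tfp{} $P=(u_1=u,u_2=v,\dots,u_l=s)$ in $G'$. Inducting backwards from $u_l=s\in \mathcal{D}$: the last edge $u_{l-1}\,\text{\--}\,u_l$ cannot be a directed edge into $\mathcal{D}$ (there are none, since $\mathcal{D}$ is a source component), so it is undirected; but in a chain graph undirected edges stay within a single UCC, forcing $u_{l-1}\in \mathcal{D}$. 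Iterating, every $u_i\in \mathcal{D}$, and in particular $u\rightarrow v$ is a directed edge entirely inside the UCC $\mathcal{D}$, contradicting \cref{prop:edge-between-2-nodes-of-same-ucc-is-ud}.

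The main obstacle will be the second step — certifying that $G'$ is a chain graph. Edges of $\mathcal{C}$ that sit inside $V_O$ are undirected in $O_a$ (because $\mathcal{C}$ is a UCC of $M_a$) yet may be oriented in $O$ via the extension clauses; the delicate point is ruling out that these newly introduced orientations, together with the genuinely undirected edges of $\mathcal{C}$ (including those that lie outside $V_O$ and thus have no $O$-analogue), can conspire into a directed cycle. The argument must therefore lean on the partial-MEC property of $O$ (no induced $u\rightarrow v-w$) and the chordality of the UCCs of $O$ and $\mathcal{C}$ simultaneously, plus the synchronicity of $O_1$ and $O_2$ implicit in the extension definition. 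Once that structural fact is in hand, the remainder of the proof is the short chain-component argument outlined above.
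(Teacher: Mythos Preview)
Your Step~2 --- that $G'=U_M(\mathcal{C},O)[X]$ is a chain graph --- is where the plan breaks. Since $\skel{G'}=\mathcal{C}[X]$ is chordal, a minimal directed cycle in $G'$ is necessarily a triangle, say $a\rightarrow b - c - a$ with $a\rightarrow b\in O$. Your case split then lands in the ``adjacent'' branch (the third side $c-a$ makes $v_{i-1}$ and $v_{i+1}$ adjacent), but a triangle has no shorter cycle to pass to. And if $c\notin V_O$ --- which is not excluded: $a,b$ can lie in $N(S_1\cup S_2,H)$ while their common neighbour $c$ in $\mathcal{C}$ sits one more hop away from $S_1\cup S_2$ --- neither the no-$u\rightarrow v-w$ clause for $O$ nor any induced-subgraph argument inside $M_a$ applies, because in $M_a$ the edge between $a$ and $b$ is undirected. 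The ingredients you list (partial-MEC axioms for $O$, chordality, synchronicity) say nothing about such a $c$, so the chain-graph claim is unproven and plausibly false; with it your source-UCC argument in Step~3 collapses, since $\{a,b,c\}$ would be a single undirected component containing the directed edge $a\rightarrow b$.

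The paper does not try to show $G'$ is a chain graph. It argues directly with certificates: if no canonical source existed, finiteness would force a cyclic sequence $u_1,\ldots,u_l,u_1$ with each $u_i$ a certificate of $u_{i+1}$. The key technical step (\cref{claim:Q-is-a-tfp}) is a transitivity lemma --- if $u$ certifies $v$ and $v$ certifies $w$ then either $u\rightarrow v\in O$ or $u$ certifies $w$ --- whose proof splices the two certifying \tfps{} inside the chordal graph $\mathcal{C}$ and then invokes the path-function data $P_{a1},P_{a2}$ through \cref{subitem-2-of-item-3-of-def:extension-of-O1-O2-P11-P12-P21-P22,subitem-3-of-item-3-of-def:extension-of-O1-O2-P11-P12-P21-P22} of the extension definition. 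That non-local information about \tfps{} in all of $M_a$ (not merely in $\mathcal{C}[X]$ or $O$) is exactly what lets the certificate cycle be collapsed to either a genuine directed cycle in $O$ (impossible) or the base case $l=2$ (dispatched again via item~3 of the extension). Your sketch never invokes $P_{a1},P_{a2}$, and that is the missing ingredient.
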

\begin{proof}
\newcommand{\ncc}{certificate}
Suppose $U_M(\mathcal{C}, O)[X]$ does not contain any canonical source vertices. \Cref{def:certificate} implies that if  a node $u$ is not a \canonical{} in
$U_M(\mathcal{C}, O)[X]$ then  there exists a node $v$ that certifies that $u$ is not a \canonical{}. And the certification is an  edge $v\rightarrow v' \in U_M(\mathcal{C}, O)[X]$ and a \tfp{} from $(v,v')$ to $u$ in $U_M(\mathcal{C}, O)[X]$).
If $U_M(\mathcal{C}, O)[X]$ does not have contain any canonical source nodes, then, since $X$ has a finite number of nodes,  there exists a sequence $(u_1, u_2, \ldots, u_l, u_{l+1}=u_1)$ such that $u_1, u_2, \ldots, u_l$ are distinct, and for each $1\leq i\leq l$, $u_i$ is a certificate of $u_{i+1}$.  
Now, we will prove the following claim, which is an important component in demonstrating that we will not obtain any such sequence. This, in turn, implies that there always exists a canonical source node in $U_M(\mathcal{C}, O)[X]$.

\begin{claim}
 \label{claim:Q-is-a-tfp}
 In $U_M(\mathcal{C}, O)[X]$, suppose $u$, $v$, and $w$ are three distinct vertices such that $u$ is a certificate of $v$, and $v$ is a certificate of $w$. Then, either $u \rightarrow v \in O$, or $u$ is a certificate of $w$.
\end{claim}

\begin{proof}
 Suppose in $U_M(\mathcal{C}, O)[X]$, $u$ is a certificate of $v$, and $v$ is a certificate of $w$. Then, from \Cref{def:certificate}, since $u$ is a certificate of $v$, there must exist a node $u' \in X$ such that $u \rightarrow u' \in U_M(\mathcal{C}, O)[X]$, and there exists a \tfp{} $Q_1 = (x_1 = u, x_2 = u',\ldots, x_l = v)$ from $(u, u')$ to $v$ in $U_M(\mathcal{C}, O)[X]$. Similarly, since $v$ is a certificate of $w$, there exists a node $v' \in X$ such that $v \rightarrow v' \in U_M(\mathcal{C}, O)[X]$, and there exists a \tfp{} $Q_2 = (y_1 = v, y_2 = v',\ldots, y_m = w)$ from $(v, v')$ to $w$ in $U_M(\mathcal{C}, O)[X]$.

 We first show that for any \tfp{} $P$ in $U_M(\mathcal{C}, O)[X]$, $P$ is a \tfp{} in $\mathcal{C}$. From \Cref{def:Markov-union-of-graphs}, if $u \rightarrow v \in U_M(\mathcal{C}, O)$, then either $u \rightarrow v \in \mathcal{C}$ or $u \rightarrow v \in O$. Since $\mathcal{C}$ is an undirected graph (recall that $\mathcal{C}$ is an \ucc{} of $M_a$), therefore, $u \rightarrow v \in O$. From the construction, for $Y = V_{\mathcal{C}} \cap V_{O}$, $\skel{\mathcal{C}[Y]} = \skel{O[Y]}$. Since $X \subseteq V_{\mathcal{C}}$ ($X\in \mathcal{S}$, and in \Cref{alg:LBFSwithO}, the elements of $\mathcal{S}$ are the subsets of $V_{\mathcal{C}}$), therefore, if $u \rightarrow v \in U_M(\mathcal{C}, O)[X]$, then $u \rightarrow v \in O$ and $u-v \in \mathcal{C}$. This further implies that each \tfp{} in $U_M(\mathcal{C}, O)[X]$ is a \tfp{} in $\mathcal{C}$.

 The above discussion implies that $Q_1$ and $Q_2$ are \tfps{} in $\mathcal{C}$ and $u \rightarrow u', v \rightarrow v' \in O$.

 If $l = 2$, i.e., the length of $Q_1$ is one, then $u' = v$, and $u \rightarrow v \in O$, and we are done.

 Suppose $l > 2$. We show that there exists a \tfp{} from $(u, u')$ to $w$ in $U_M(\mathcal{C}, O)[X]$. As shown above, $Q_1$ and $Q_2$ are the \tfps{} in $\mathcal{C}$. Since $\mathcal{C}$ is an \ucc{} of an MEC $M_a$, therefore, from \cref{item-2-theorem-nec-suf-cond-for-MEC} of \Cref{thm:nes-and-suf-cond-for-chordal-graph-to-be-an-MEC}, $\mathcal{C}$ is chordal. Then from \Cref{obs:tfp-in-M-is-a-cp}, $Q_1$ and $Q_2$ are \cps{} of $\mathcal{C}$. Pick the least $i < l$ such that for some $1 \leq j \leq m$, $x_i - y_j \in \mathcal{C}$. After this, pick the highest $j$ such that $x_i - y_j \in \mathcal{C}$. The existence of such an edge is due to $x_{l-1} - y_1 \in \mathcal{C}$ (in $Q_1$, $x_{l-1} - y_1$ is an edge as $x_l = y_1$).

 If $i \neq 1$, then $Q = (x_1 = u, x_2 = u', \ldots , x_i, y_j, \ldots, y_m = w)$ is a \tfp{} from $(u,  u')$ to $w$ in $\mathcal{C}$. We now show that $Q$ is also a \tfp{} in $U_M(\mathcal{C}, O)[X]$.

 Since $Q_1$ and $Q_2$ are \tfps{} in $U_M(\mathcal{C}, O)[X]$, to show that $Q$ is a \tfp{}, the only thing we have to show is $y_j \rightarrow x_i \notin O$. If both $x_i$ and $y_j$ are not nodes of $V_O$, then there cannot be $y_j \rightarrow x_i \in O$. Suppose $y_j, x_i \in V_O$. Since $Q$ is a \tfp{} in $\mathcal{C}$, the subpath $Q' = (x_1 = u, x_2 = u', \ldots , x_i, y_j)$ of $Q$ is also a \tfp{} in $\mathcal{C}$. Since $\mathcal{C}$ is an \ucc{} of $M_a$, a \tfp{} in $\mathcal{C}$ is a \tfp{} in $M_a$.  This implies $P_{a1}((u, u'), (x_i, y_j)) = 1$. Since $O$ is an extension of $(O_1, P_{11}, P_{12}, O_2, P_{21}, P_{22})$, from \cref{subitem-2-of-item-3-of-def:extension-of-O1-O2-P11-P12-P21-P22} of \cref{item-3-of-def:extension-of-O1-O2-P11-P12-P21-P22} of \Cref{def:extension-of-O1-O2-P11-P12-P21-P22}, $x_i \rightarrow y_j \in O$ (because $P_{a1}((u, u'), (x_i, y_j)) = 1$ and $u\rightarrow u' \in O$). This shows we get the required \tfp{} in $U_M(\mathcal{C}, O)[X]$, $Q$, from $(u, u')$ to $w$.

 Thus, the only possibility that remains is $i = 1$. We show that this case cannot occur. Suppose, if possible, that $i = 1$. Recall that $l > 2$. Pick the least $j$ such that $x_1 - y_j \in \mathcal{C}$. Since $Q_1$ is a \cp{} in $\mathcal{C}$, and $l > 2$, $j$ must be greater than 1, since $y_1 = x_l$. Then $Q = (y_1 = v, y_2 = v', \ldots, y_j, x_1 = u)$ is a \cp{} in $\mathcal{C}$, from $(v, v')$ to $u$. As $\mathcal{C}$ is an \ucc{} of $M_a$, and $Q_1$ and $Q$ are \tfps{} in $\mathcal{C}$ from $(u, u')$ to $v$ and from $(v, v')$ to $u$, respectively, therefore, $P_{a2}((u, u'), v) =  P_{a2}((v, v'), u) = 1$. Since $O$ is an extension of $(O_1, P_{11}, P_{12}, O_2, P_{21}, P_{22})$, therefore, from \cref{subitem-3-of-item-3-of-def:extension-of-O1-O2-P11-P12-P21-P22} of \cref{item-3-of-def:extension-of-O1-O2-P11-P12-P21-P22}, $v' \rightarrow v \in O$ (as $u \rightarrow u' \in O$ and $P_{a2}((u, u'), v) =  P_{a2}((v, v'), u) = 1$). But, this is a contradiction, as from our assumption, $v \rightarrow v' \in O$. This shows that $i =1$ does not occur. This completes the proof of \Cref{claim:Q-is-a-tfp}.
\end{proof}

We now prove the following claim to complete the proof of \Cref{claim:LBFS-algorithm-always-find-cannonical-node}.
 
\begin{claim}
\label{claim:cyclic-sequence-does-not-exist}
    There cannot exist a sequence $(u_1, u_2, \ldots, u_l, u_{l+1}=u_1)$ such that for each $1\leq i\leq l$, $u_i$ is a certificate of $u_{i+1}$. 
\end{claim}

We will prove \Cref{claim:cyclic-sequence-does-not-exist} using induction on $l$. From \Cref{obs:self-certification-not-possible}, we know that self-certification is not possible, implying that $l\geq 2$.

\paragraph{Base Case: $l=2$.} 
Assume, for the sake of contradiction, that \Cref{claim:cyclic-sequence-does-not-exist} is false for $l=2$. In other words, there exist nodes $u_1$ and $u_2$ such that $u_1$ is a certificate of $u_2$, and $u_2$ is a certificate of $u_1$. This implies the existence of edges $u_{1}-v_{1}$ and $u_{2}-v_{2}$ in $\mathcal{C}$ such that $u_{1}\rightarrow v_{1}$ and $u_{2}\rightarrow v_{2}$ in $O$. Additionally, there exist \tfp{}s $Q_1$ from $(u_{1}, v_{1})$ to $u_{2}$ and $Q_2$ from $(u_{2}, v_{2})$ to $u_{1}$ in $\mathcal{C}$. Since $\mathcal{C}$ is an \ucc{} of the MEC $M_a$ (for some $a\in \{1,2\}$), $\mathcal{C}$ is an undirected chordal graph (cf. \cref{item-2-theorem-nec-suf-cond-for-MEC} of \Cref{thm:nes-and-suf-cond-for-chordal-graph-to-be-an-MEC}). Thus, $Q_1$ and $Q_2$ are \tfp{}s $M_a$.

This implies that $v_{1}\neq u_{2}$ and $v_2 \neq u_1$. If, for instance, $v_{1} = u_{2}$, then by applying \Cref{cor-adjacent-node-in-triangle-free-path} to the undirected chordal graph $\mathcal{C}$, the adjacency of $v_{1} = u_{2}$ and $u_{1}$ and the \tfp{} $Q_2$ from $(u_{2}, v_{2})$ to $u_{1}$ would imply $v_{2} = u_{1}$ as well. Similarly, if $v_2 = u_1$, it would imply $v_{1} = u_{2}$. Thus, we must have both $v_{1} = u_{2}$ and $v_2 = u_1$, which leads to the existence of both $u_{1}\rightarrow u_{2}$ and $u_{2}\rightarrow u_{1}$ in $O$, contradicting our initial assumption. Therefore, it follows that $v_{1}\neq u_{2}$ and $v_2 \neq u_1$.

Given that $O$ is an extension of $(O_1, P_{11}, P_{12}, O_2, P_{21}, P_{22})$, we can apply \cref{subitem-3-of-item-3-of-def:extension-of-O1-O2-P11-P12-P21-P22} of \cref{item-3-of-def:extension-of-O1-O2-P11-P12-P21-P22} of \Cref{def:extension-of-O1-O2-P11-P12-P21-P22}. This reveals that $u_{1}\rightarrow v_{1} \in O$ together with the \tfp{}s $Q_1$ from $(u_{1}, v_{1})$ to $u_{2}$ (with $v_1 \neq u_2$) and $Q_2$ from $(u_{2}, v_{2})$ to $u_{1}$ (with $v_2 \neq u_1$) imply $v_{2}\rightarrow u_{2} \in O$. This, however, contradicts the assumption that $u_{2}\rightarrow v_{2}\in O$. Thus, we have shown that \Cref{claim:cyclic-sequence-does-not-exist} holds for $l=2$.

\paragraph{Induction: $l \geq 3$.} 
Assume that \Cref{claim:cyclic-sequence-does-not-exist} is true for $l=k-1$, where $k\geq 3$. We will now use \Cref{claim:Q-is-a-tfp} to prove \Cref{claim:cyclic-sequence-does-not-exist} for $l=k$.

Assume, for the sake of contradiction, that \Cref{claim:cyclic-sequence-does-not-exist} is not true for a sequence $(u_1, u_2, \ldots, u_k, u_{k+1} = u_1)$. According to \Cref{claim:Q-is-a-tfp}, for each $1\leq i \leq l-1$, either $u_i$ is a certificate for $u_{i+2}$ or $u_i\rightarrow u_{i+1} \in O$, and for $i = l$, either $u_l$ is a certificate for $u_2$ or $u_{l}\rightarrow u_{1} \in O$. If, for any $1\leq i \leq k-1$, $u_i$ is a certificate for $u_{i+2}$, it would create a sequence $(u_1, u_2, \ldots, u_i, u_{i+2}, \ldots, u_k, u_{k+1}=u_1)$ of $k-1$ distinct nodes that violates our induction hypothesis. Similarly, if $u_k$ is a certificate for $u_2$, it would create another sequence $(u_2, u_3, \ldots, u_k, u_2)$ of $k-1$ distinct nodes that violates our induction hypothesis. Thus, the only possibility is that for each $1\leq i\leq k$, $u_{i}\rightarrow u_{i+1} \in O$. However, this leads to the presence of a directed cycle in $O$, which is a contradiction because $O$ is a chain graph (as per \cref{item-1-of-def:partial-MEC} of \Cref{def:partial-MEC}). Therefore, we have proven \Cref{claim:cyclic-sequence-does-not-exist} for all $l$ by induction.

This completes the proof of \Cref{claim:LBFS-algorithm-always-find-cannonical-node}.
\end{proof}

The following observation and its corollaries show that the output ordering is consistent with $O$.

\begin{observation}
\label{obs:direction-of-O-in-alg-LBFs}
Let $\mathcal{C}$ be an \ucc{} of $M_a$ for $a\in \{1,2\}$.
Suppose \Cref{alg:LBFSwithO} is called for  $\mathcal{C}$ and $O$. And, at the beginning of an iteration of the while loop (lines \ref{algLBFS:while-loop-start}-\ref{algLBFS:while-end} of \Cref{alg:LBFSwithO}), let $\tau = (a_1, a_2, \ldots, a_r)$ and $\mathcal{S} = (S_1, S_2, \ldots, S_l)$. Then,  for any \tfp{}  $P = (u_1, u_2, \ldots, u_m)$ of $\mathcal{C}$ such that $u_1\rightarrow u_2 \in O$, the following statements hold for all $1\leq i < j \leq m$,
\begin{enumerate}
    \item
    \label{item1-of-obs-direction-of-O-in-alg-LBFS}
    If $u_j \in \tau$, then $u_i \in \tau$ and $u_i$  arrived in $\tau$ before $u_j$.
    \item
    \label{item2-of-obs-direction-of-O-in-alg-LBFS}
    If $u_j \in S_x$ for some $1\leq x \leq l$,  then either $u_i \in \tau$, or $u_i \in S_y$ for $y\leq x$.
    \item
    \label{item3-of-obs-direction-of-O-in-alg-LBFS}
    If $u_1 \in \tau$, then at most one vertex of $P$ is in $S_1$.
\end{enumerate}
\end{observation}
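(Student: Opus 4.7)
I proceed by induction on the number of iterations of the while loop in \Cref{alg:LBFSwithO} completed so far. In the base case---before any iteration---$\tau=()$ and $\mathcal{S}=(V_\mathcal{C})$; items 1 and 3 are vacuous and item 2 is immediate because every vertex of the \tfp{} $P$ lies in $V_\mathcal{C}=S_1$. For the inductive step, the iteration picks a canonical source $v$ of $U_M(\mathcal{C},O)[S_1]$ from the current $S_1$, appends $v$ to $\tau$, and refines each $S_q$ into $S_q\cap N(v,\mathcal{C})$ followed by $S_q\setminus N(v,\mathcal{C})$; I let $\tau',\mathcal{S}'$ denote the post-iteration state.

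Items 2 and 3 do not use the canonical source property. Item 2 follows by tracing the refinement: if $u_j\in S_x'$ then $u_j$ comes from some $S_q$, and the inductive item 2 places each $u_i$ ($i<j$) in $\tau$ or in some $S_{q'}$ with $q'\le q$; every such $S_{q'}$ refines into at most two sets preceding $S_x'$, yielding $u_i\in\tau'$ or $u_i\in S_y'$ with $y\le x$. For item 3 I prove the stronger invariant that if $u_1,\dots,u_k$ is the maximal prefix of $P$ lying in $\tau$, then the next vertex $u_{k+1}$ (if not in $\tau$) occupies a strictly earlier set in $\mathcal{S}$ than each of $u_{k+2},\ldots,u_m$ that are not in $\tau$. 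By \Cref{obv:tfps-are-chordless-in-chordal-graphs}, $P$ is chordless in the chordal graph $\mathcal{C}$, so $u_{k+1}$ is the only vertex of $\{u_{k+1},\ldots,u_m\}$ adjacent to $u_k$ in $\mathcal{C}$; at the iteration when $u_k$ was picked, the refinement by $N(u_k,\mathcal{C})$ therefore placed $u_{k+1}$ into an intersection-part and $u_{k+2},\ldots,u_m$ into later difference-parts, and this strict gap is preserved by every subsequent refinement. Item 3 is then immediate: only $u_{k+1}$ among $\{u_{k+1},\ldots,u_m\}$ can inhabit the current $S_1$.

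Item 1 is where the canonical source property enters. The only new vertex in $\tau'\setminus\tau$ is $v$, so I need only consider $u_j=v$. Assume some $u_i$ ($i<j$) is not in $\tau$; a cascading application of the inductive items 1 and 2 yields a threshold $k$ with $u_1,\ldots,u_{k-1}\in\tau$ and $u_k,\ldots,u_j\in S_1$. When $k\ge 2$, the strict ordering used to prove item 3 puts $u_k$ strictly before $u_{k+1},\ldots,u_j$ in $\mathcal{S}$, so only $u_k$ can lie in $S_1$; this forces $j=k$, whence all $u_i$ with $i<j$ are in $\tau$, contradicting the assumption. When $k=1$, the whole path $P$ lies inside $S_1$ and begins with the directed edge $u_1\rightarrow u_2\in O\subseteq U_M(\mathcal{C},O)[S_1]$, so $P$ should yield a \tfp{} in $U_M(\mathcal{C},O)[S_1]$ from $(u_1,u_2)$ to $v=u_j$ and contradict the canonicity of $v$. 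The main technical obstacle is precisely this last step: $P$ is by hypothesis only a \tfp{} of $\mathcal{C}$, while $U_M(\mathcal{C},O)[S_1]$ may contain extra edges from $O$ between vertices of $V_\mathcal{C}\cap I$ (since $\mathcal{C}=H_a[V_\mathcal{C}]$ as undirected graphs) that could either introduce chords of $P$ or even reverse some of its edges. I expect to deal with this by exploiting that $O$ extends the shadows $(O_a,P_{a1},P_{a2})$: any offending v-structure or orientation conflict would then propagate into $O_a$, contradicting the undirected nature of edges inside the chain component $\mathcal{C}$ of $M_a$.
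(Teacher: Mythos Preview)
Your overall induction scheme and your identification of the key ``obstacle'' (showing that a \tfp{} of $\mathcal{C}$ starting with $u_1\rightarrow u_2\in O$ remains a \tfp{} in $U_M(\mathcal{C},O)$) are both in line with the paper; the latter is exactly \cref{claim:P-is-a-tfp-in-Markov-union-of-graph}, and your sketch of how the extension property forces consistent orientations is the right idea. One minor misconception: there are no ``extra edges from $O$''. Since $V_{M_a}\cap V_O=V_{O_a}$ and $O_a=M_a[V_{O_a}]$, the skeletons of $O$ and $\mathcal{C}$ agree on $V_\mathcal{C}\cap V_O$; the Markov union can only orient existing edges of $\mathcal{C}$, never add new ones. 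So the only issue is edge reversal, which is precisely what \cref{claim:P-is-a-tfp-in-Markov-union-of-graph} rules out.

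The genuine gap is your assertion that item~2 does not use the canonical-source property. Your refinement-tracing argument says: if $u_j\in S_x'$ came from $S_q$, and $u_i$ was (by induction) in some $S_{q'}$ with $q'\le q$, then ``every such $S_{q'}$ refines into at most two sets preceding $S_x'$''. This is false when $q'=q$, $u_j$ is adjacent to the picked vertex $v$, and $u_i$ is not: then $u_j$ lands in $S_q\cap N(v,\mathcal{C})$ while $u_i$ lands in the \emph{later} set $S_q\setminus N(v,\mathcal{C})$, violating item~2. This case genuinely arises when $v\notin P$ (so chordlessness of $P$ tells you nothing about which $u_t$ are neighbours of $v$). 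The paper dispatches it by constructing the augmented \tfp{} $Q=(u_1,\dots,u_k,v)$, where $u_k$ is the first vertex after $u_i$ adjacent to $v$, and then applying the inductive hypothesis to $Q$: item~2 on $Q$ forces $u_i\in S_1$, after which either item~3 on $Q$ (if $u_1\in\tau$) or the canonical-source property of $v$ together with \cref{claim:P-is-a-tfp-in-Markov-union-of-graph} (if $u_1\notin\tau$) yields a contradiction. So canonicity is needed not only for item~1 but also for item~2, and your proof of item~2 as written does not go through.
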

Before proving \cref{obs:direction-of-O-in-alg-LBFs}, we go through the following claim, which is a necessary ingredient for the proof of \cref{obs:direction-of-O-in-alg-LBFs}.

\begin{claim}
\label{claim:P-is-a-tfp-in-Markov-union-of-graph}
    If $P = (u_1, u_2, \ldots, u_l)$ is a \tfp{} in $\mathcal{C}$ and $u_1\rightarrow u_2 \in O$ then $P$ is a \tfp{} in $U_M(\mathcal{C}, O)$. 
\end{claim}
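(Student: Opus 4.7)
The plan is to verify directly the two defining properties of a \tfp{} in $U_M(\mathcal{C}, O)$: first that $P$ is a path in $U_M(\mathcal{C}, O)$ (each consecutive pair is connected with the correct orientation in the Markov union), and second that there is no edge between $u_i$ and $u_{i+2}$ in $U_M(\mathcal{C}, O)$ for each $1 \leq i \leq l-2$.

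The key preliminary observation I would establish is that if $u_i \in V_\mathcal{C} \cap V_O$, then in fact $u_i \in V_{O_a}$. Since $\mathcal{C}$ is an \ucc{} of $M_a$, we have $u_i \in V_{H_a}$; a short case analysis on whether $u_i$ lies in $S_1$, $S_2$, or $N(S_1 \cup S_2, H) \setminus (S_1 \cup S_2)$, combined with the vertex-separator property of $I = V_{H_1} \cap V_{H_2}$ (which rules out cross-edges $u_i$ to $V_{H_b}\setminus I$ from vertices in $V_{H_a}\setminus I$) and the identity $V_{O_a} = S_a \cup N(S_a, H_a)$, forces $u_i \in V_{O_a}$ in every case.

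For the path condition, the edge $u_1 \to u_2$ already lies in $O$ and hence in $U_M(\mathcal{C}, O)$. For each subsequent edge $u_i - u_{i+1}$ with $i \geq 2$, if either endpoint lies outside $V_O$ the edge simply carries over undirected from $\mathcal{C}$. Otherwise both endpoints lie in $V_O$, hence in $V_{O_a}$ by the preliminary observation, and since the edge is undirected in $M_a$ (by \cref{prop:edge-between-2-nodes-of-same-ucc-is-ud}, as $\mathcal{C}$ is an \ucc{}), we obtain $u_i - u_{i+1} \in O_a$. The same argument gives $u_1 - u_2 \in O_a$. The subpath $(u_1, u_2, \ldots, u_i, u_{i+1})$ is a \tfp{} in $\mathcal{C}$ and therefore in $M_a$, so $P_{a1}((u_1,u_2),(u_i,u_{i+1})) = 1$. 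Applying \cref{subitem-2-of-item-3-of-def:extension-of-O1-O2-P11-P12-P21-P22} of \cref{def:extension-of-O1-O2-P11-P12-P21-P22} with $x-y = u_1 - u_2$ yields $u_i \to u_{i+1} \in O$, and hence the correct orientation in $U_M(\mathcal{C}, O)$.

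For the triangle-free condition, suppose toward contradiction there is an edge between $u_i$ and $u_{i+2}$ in $U_M(\mathcal{C}, O)$. Since $P$ is a \tfp{} in $\mathcal{C}$, the edge cannot come from $\mathcal{C}$; it must come from $O$, so both vertices lie in $V_O$ and $u_i - u_{i+2} \in \skel{O} \subseteq H$. Since $u_i, u_{i+2} \in V_{H_a}$, the edge lies in $H_a = \skel{M_a}$, and because $u_i$ and $u_{i+2}$ both lie in $V_\mathcal{C}$ (the same \ucc{} of $M_a$), \cref{prop:edge-between-2-nodes-of-same-ucc-is-ud} forces the edge to be undirected in $M_a$, hence to appear in $\mathcal{C}$, contradicting the \tfp{} assumption. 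The main technical obstacle is the vertex-separator case analysis in the preliminary observation, which is what allows the extension property in \cref{def:extension-of-O1-O2-P11-P12-P21-P22} to be invoked with $(u_1,u_2)$ as the propagating directed edge in $O_a$.
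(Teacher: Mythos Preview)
Your proof is correct and uses the same core mechanism as the paper: invoking the extension property (\cref{subitem-2-of-item-3-of-def:extension-of-O1-O2-P11-P12-P21-P22}) via $P_{a1}((u_1,u_2),(u_i,u_{i+1}))=1$ to control the orientation in $O$ of any edge of $P$ whose endpoints land in $V_O$ (equivalently $V_{O_a}$, since the paper simply uses $V_{M_a}\cap V_O = V_{O_a}$ in place of your case analysis). The only difference is packaging: the paper argues by contradiction, noting that the Markov union with $O$ cannot add skeleton edges between vertices of $\mathcal{C}$, so the sole obstruction is a reversed edge $u_{i+1}\to u_i\in O$, which is then ruled out exactly as you do; your direct verification of the path and triangle-free conditions separately is a bit more explicit but the same argument.
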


\begin{proof}
Suppose $P$ is not a \tfp{} in $U_M(\mathcal{C}, O)$ then there must exist an $i$ such that $u_i\leftarrow u_{i+1} \in U_M(\mathcal{C}, O)$. Since $\mathcal{C}$ is an undirected graph, if there exists a directed edge in $U_M(\mathcal{C}, O)$ then the directed edge must belong to $O$. This implies $u_i\leftarrow u_{i+1} \in O$. From our assumption, $u_1\rightarrow u_2 \in O$. This implies $i> 1$. From our assumption, $P$ is a \tfp{} in $\mathcal{C}$, an \ucc{} of $M_a$ for some $a\in \{1,2\}$. Also, from the construction, $V_{M_a} \cap V_O = V_{O_a}$. This implies $u_1, u_2, u_i, u_{i+1} \in V_{O_a}$. Since $O_a$ is an induced subgraph of $M_a$, $(u_1, u_2), (u_i, u_{i+1}) \in E_{O_a}$. This implies $P_{a1}((u_1, u_2), (u_i, u_{i+1})) = 1$, as a subpath of $P$ from $(u_1, u_2)$ to $(u_i, u_{i+1})$ is also a \tfp{} in $M_a$, and $(u_1, u_2) \neq (u_i, u_{i+1})$ because $i > 1$. Since $O \in \mathcal{E}(O_1, P_{11}, P_{12}, O_2, P_{21}, P_{22})$,  from \cref{subitem-2-of-item-3-of-def:extension-of-O1-O2-P11-P12-P21-P22}, $u_i\rightarrow u_{i+1} \in O$, a contradiction. This implies $P$ is a \tfp{} in $U_M(\mathcal{C}, O)$.
\end{proof}

\begin{proof}[Proof of \cref{obs:direction-of-O-in-alg-LBFs}]
We will establish the proof through induction. Initially, at the start of the first iteration of the while loop in \Cref{alg:LBFSwithO}, we have $\mathcal{S}=(V_{\mathcal{C}})$, and $\tau$ is an empty list. Consequently, \Cref{obs:direction-of-O-in-alg-LBFs} is satisfied at this stage.

Now, let's assume that at the beginning of some iteration of the while loop, $\tau = (a_1, a_2, \ldots, a_r)$ and $\mathcal{S} = (S_1, S_2, \ldots, S_l)$, and that \Cref{obs:direction-of-O-in-alg-LBFs} is upheld. We aim to demonstrate that, even at the end of this iteration of the while loop, \Cref{obs:direction-of-O-in-alg-LBFs} continues to hold.

During the run of the iteration, we do the following: At line~\ref{algLBFS:pick-vertex-v}, we select a canonical source node $v$ from $U_M(\mathcal{C}, O)[S_1]$. We remove vertex $v$ from $S_1$ and add it to $\tau$ (lines \ref{algLBFS:add-vertex-to-tau}-\ref{algLBFS:remove-picked-node}). Subsequently, we update each set $S_i$ as follows: $S_i$ is replaced by $S_i\cap N(v, \mathcal{C})$ and $S_i \setminus N(v, \mathcal{C})$ (line \ref{algLBFS:replacement-of-S}). This results in the modified $\mathcal{S} = (S_{11}, S_{12}, \ldots, S_{l1}, S_{l2})$, where $S_{11} = (S_1\setminus v)\cap N(v, \mathcal{C})$, $S_{12} = (S_1\setminus v)\setminus N(v, \mathcal{C})$, and for $i>1$, $S_{i1} = S_i\cap N(v, \mathcal{C})$, and $S_{i2} = S_i \setminus N(v, \mathcal{C})$. And, in the end, we remove the empty sets in $\mathcal{S}$ (line \ref{algLBFS:removal-of-empty-sets}).

We now demonstrate that if a \tfp{} $P = (u_1, u_2, \ldots, u_m)$ satisfies $u_1\rightarrow u_2 \in O$ and adheres to \Cref{obs:direction-of-O-in-alg-LBFs} at the start of the iteration, it will continue to satisfy \cref{item1-of-obs-direction-of-O-in-alg-LBFS,item2-of-obs-direction-of-O-in-alg-LBFS,item3-of-obs-direction-of-O-in-alg-LBFS} of \Cref{obs:direction-of-O-in-alg-LBFs} even at the end of this iteration.
There are two possibilities: either $P$ contains the picked node $v$ (the canonical source node picked at step-\ref{algLBFS:pick-vertex-v}), or it does not.
We go through both possibilities one by one.
\begin{enumerate}
    \item Suppose that $P$ contains the picked node $v$. Let $u_a = v$, for some $1 \leq a \leq m$.  There are two cases: either $a = 1$ (i.e., $v$ is the first node of $P$), or $a > 1$.

\begin{enumerate}
    \item Suppose $a = 1$. Since $u_a = v$ has been picked from $S_1$, this implies that at the start of the iteration, $u_1$ is in $S_1$. Since, at the start of the iteration, $P$ obeys \Cref{obs:direction-of-O-in-alg-LBFs},  \cref{item1-of-obs-direction-of-O-in-alg-LBFS} of \Cref{obs:direction-of-O-in-alg-LBFs} implies that at the start of the iteration, none of the nodes of $P$ are in $\tau$.

At the end of the iteration, $u_1$ is in $\tau$, and all other nodes of $P$ remain in some set of $\mathcal{S}$. We will now demonstrate that for all possible pairs of nodes $u_i$ and $u_j$, where $1\leq i<j \leq m$, at the end of the iteration, $u_i$ and $u_j$ satisfy \cref{item1-of-obs-direction-of-O-in-alg-LBFS,item2-of-obs-direction-of-O-in-alg-LBFS} of \Cref{obs:direction-of-O-in-alg-LBFs}. For $1 = i < j \leq m$, $u_i$ and $u_j$ satisfy \cref{item1-of-obs-direction-of-O-in-alg-LBFS,item2-of-obs-direction-of-O-in-alg-LBFS} of \Cref{obs:direction-of-O-in-alg-LBFs}, as at the end of the iteration, $u_i$ is in $\tau$ and $u_j$ is in some set of $\mathcal{S}$.

Suppose $1 < i < j \leq m$. Then, at the start of the iteration, $u_i$ and $u_j$ are in some set $S_x$ and $S_y$ of $\mathcal{S}$, respectively.  Since at the start of the iteration, $u_i$ and $u_j$ satisfy \cref{item2-of-obs-direction-of-O-in-alg-LBFS} of \Cref{obs:direction-of-O-in-alg-LBFs}, we have $x \leq y$.

If $x < y$, then at the end of the iteration, the set in which $u_i$ lies (either $S_{x1}$ or $S_{x2}$) comes before the set in which $u_j$ lies (either $S_{y1}$ or $S_{y2}$). Thus, if $x<y$, then at the end of the iteration, $u_i$ and $u_j$ satisfy \cref{item1-of-obs-direction-of-O-in-alg-LBFS,item2-of-obs-direction-of-O-in-alg-LBFS} of \Cref{obs:direction-of-O-in-alg-LBFs}.

Suppose $x = y$, i.e., at the start of the iteration, both $u_i$ and $u_j$ lie in the same set $S_x$ of $\mathcal{S}$. Since $P$ is a \tfp{} of a chordal graph $\mathcal{C}$, from \cref{obv:tfps-are-chordless-in-chordal-graphs}, $P$ is a chordal graph. This implies that the only neighbor of $u_1 = v$ in $P$ is $u_2$. Therefore, if $i = 2$, then $u_i$ is a neighbor of $v$ and at the end of the iteration, it will move to set $S_{x1}$; otherwise, $u_i$ is not a neighbor of $v$ and at the end of the iteration, it will move to $S_{x2}$. Since $1< i < j$, $j$ cannot be equal to 2. This implies $u_j$ cannot be a neighbor of $v$ and at the end of the iteration, it will move to $S_{x2}$. This shows that at the end of the iteration, the set in which $u_i$ lies does not come after the set in which $u_j$ lies in $\mathcal{S}$. This shows that at the end of the iteration, $u_i$ and $u_j$ satisfy \cref{item1-of-obs-direction-of-O-in-alg-LBFS,item2-of-obs-direction-of-O-in-alg-LBFS} of \Cref{obs:direction-of-O-in-alg-LBFs}.

We now show that at the end of the iteration, \cref{item3-of-obs-direction-of-O-in-alg-LBFS} of \Cref{obs:direction-of-O-in-alg-LBFs} is also satisfied.

As shown above, at the end of the iteration, $u_1$ is in $\tau$, and all the remaining nodes of $P$ are in some set of $\mathcal{S}$. We have shown that at the end of the iteration, $P$ obeys \cref{item1-of-obs-direction-of-O-in-alg-LBFS,item2-of-obs-direction-of-O-in-alg-LBFS} of \Cref{obs:direction-of-O-in-alg-LBFs}. If at the end of the iteration, there does not exist any node of $P$ that is in the first set of $\mathcal{S}$, then \cref{item3-of-obs-direction-of-O-in-alg-LBFS} of \Cref{obs:direction-of-O-in-alg-LBFs} is satisfied. Suppose at the end of the iteration, there exists a node of $P$ that lies in the first set of $\mathcal{S}$. Pick the highest $i$ such that at the end of the iteration, $u_i$ is in the first set of $\mathcal{S}$. If $i = 2$, then at the end of the iteration, only one node of $P$ is in the first set of $\mathcal{S}$. This implies at the end of the iteration, \cref{item3-of-obs-direction-of-O-in-alg-LBFS} of \Cref{obs:direction-of-O-in-alg-LBFs} is satisfied. Suppose $i > 2$, then, from \cref{item2-of-obs-direction-of-O-in-alg-LBFS} of \Cref{obs:direction-of-O-in-alg-LBFs}, at the end of the iteration, $u_2, u_3, \ldots, u_i$ all are in the first set of $\mathcal{S}$. This further implies even at the start of the iteration, $u_2, u_3, \ldots, u_i$ are in the same set of $\mathcal{S}$, say $S_x$. But, we know that among these nodes only $u_2$ is a neighbor of $u_1
= v$. This implies at the end of the iteration, $u_2$ must be in $S_{x1}$, and all the remaining nodes $u_3, \ldots, u_i$ must be in $S_{x2}$. This contradicts our assumption that at the end of the iteration, $u_2, u_3, \ldots, u_i$ are in the same set of $\mathcal{S}$. This implies $i > 2$ does not occur. This implies even at the end of the iteration, \cref{item3-of-obs-direction-of-O-in-alg-LBFS} of \Cref{obs:direction-of-O-in-alg-LBFs} is obeyed.
   This implies that in all the possibilities of this case, \cref{item1-of-obs-direction-of-O-in-alg-LBFS,item2-of-obs-direction-of-O-in-alg-LBFS,item3-of-obs-direction-of-O-in-alg-LBFS} of \Cref{obs:direction-of-O-in-alg-LBFs} are satisfied. We will now move on to the second case.

    \item Suppose $a > 1$. Since $u_a = v$ has been picked from $S_1$,  this implies that at the start of the iteration, $u_a$ is in $S_1$. From \cref{item2-of-obs-direction-of-O-in-alg-LBFS} of \cref{obs:direction-of-O-in-alg-LBFs}, it follows that at the start of the iteration,  $u_1$ must be in either $S_1$ or in $\tau$. If $u_1$ is in $S_1$ at the start of the iteration, then since $u_a$ is also in $S_1$, \cref{item2-of-obs-direction-of-O-in-alg-LBFS} of \cref{obs:direction-of-O-in-alg-LBFs} implies that at the start of the iteration, the nodes between $u_1$ and $u_a$ in $P$ (i.e., $u_1, u_2, u_3, \ldots, u_{a-1}, u_a$) are all in $S_1$.
    This implies $Q = (u_1, u_2, \ldots, u_a)$ is a \tfp{} in $\mathcal{C}[S_1]$. But, then, from \cref{claim:P-is-a-tfp-in-Markov-union-of-graph}, $u_a$ is not a \canonical{} in $U_M(\mathcal{C}, O)[S_1]$, due to the existence of $Q$ and $u_1\rightarrow u_2 \in O$. But, this is a contradiction, as we have picked $u_a$ at line \ref{algLBFS:pick-vertex-v} of \cref{alg:LBFSwithO} because it is a \canonical{} in $U_M(\mathcal{C}, O)[S_1]$. Therefore, $u_1$ cannot be in $S_1$, but must be in $\tau$. Then, \cref{item3-of-obs-direction-of-O-in-alg-LBFS} of \Cref{obs:direction-of-O-in-alg-LBFs} implies that at the start of the iteration, the only node of $P$ in $S_1$ is $u_a = v$. 

From \cref{item2-of-obs-direction-of-O-in-alg-LBFS} of \Cref{obs:direction-of-O-in-alg-LBFs}, it follows that at the start of the iteration, $u_1, u_2, \ldots, u_{a-1}$ are in $\tau$. And, at the end of the iteration, $u_a$ also joins $\tau$. 

We will now show that, for all possible pairs of nodes $u_i$ and $u_j$, where $1\leq i<j \leq m$, at the end of the iteration, $u_i$ and $u_j$ obey \cref{item1-of-obs-direction-of-O-in-alg-LBFS,item2-of-obs-direction-of-O-in-alg-LBFS} of \Cref{obs:direction-of-O-in-alg-LBFs}.

If at the start of the iteration, both $u_i$ and $u_j$ are in $\tau$, then even at the end of the iteration, they remain in $\tau$ and obey \cref{item1-of-obs-direction-of-O-in-alg-LBFS,item2-of-obs-direction-of-O-in-alg-LBFS}. If at the start of the iteration, $u_i$ is in $\tau$ and $u_j$ is in some set of $S$, then at the end of the iteration, either (a) both $u_i$ and $u_j$ are in $\tau$ (this case occurs when $u_j = u_a = v$) and $u_i$ joins $\tau$ before $u_j$, or (b) $u_i$ is in $\tau$, and $u_j$ is in some set of $S$ (this case occurs when $j \neq a$). Both possibilities obey \cref{item1-of-obs-direction-of-O-in-alg-LBFS,item2-of-obs-direction-of-O-in-alg-LBFS}. 

Suppose at the start of the iteration, both $u_i$ and $u_j$ are in some sets $S_x$ and $S_y$ of $S$, respectively. Since at the start of the iteration, $u_i$ and $u_j$ obey \Cref{obs:direction-of-O-in-alg-LBFs}, $x\leq y$. 

If $i = a$, i.e., $u_i = v$, then at the end of the iteration, the pair $u_i$ and $u_j$ obey \cref{item1-of-obs-direction-of-O-in-alg-LBFS,item2-of-obs-direction-of-O-in-alg-LBFS}, as at the end of the iteration, $u_i$ joins $\tau$ and $u_j$ remains in some set of $\mathcal{S}$. 

Suppose $u_i\neq v$. If $x < y$, then even at the end of the iteration, the set in which $u_i$ lies (either $S_{x1}$ or $S_{x2}$) comes before the set in which $u_j$ lies (either $S_{y1}$ or $S_{y2}$). Suppose $x=y$. Then, at the start of the iteration, both $u_i$ and $u_j$ are in the same set $S_x$.

Similar to the previous case, $P$ is a \cp{}. Therefore, there can be at most two neighbors of $v = u_a$ in $P$: one is $u_{a-1}$, and another is $u_{a+1}$ (if $a< m$). As discussed earlier, at the start of the iteration,  $u_{a-1}$ is in $\tau$. This, in turn, implies that either none of $u_i$ and $u_j$ are neighbors of $u_a = v$, or only $u_i$ is a neighbor of $v$ (this scenario arises when $i = a+1$). 

If neither is a neighbor of $v$, then at the end of the iteration, both are in $S_{x2}$. If $u_i$ is a neighbor of $v$, then at the end of the iteration, $u_i \in S_{x1}$, and $u_j \in S_{x2}$. This implies the set in which $u_i$ lies does not come after the set in which $u_j$ lies in $\mathcal{S}$. This shows that, in all the possibilities of this case, \cref{item1-of-obs-direction-of-O-in-alg-LBFS,item2-of-obs-direction-of-O-in-alg-LBFS} of \Cref{obs:direction-of-O-in-alg-LBFs} are obeyed.

Similar to the previous case, we can show that at the end of the iteration, the only possible node of $P$ which can be in the first set of $P$ is $u_{a+1}$. This validates \cref{item3-of-obs-direction-of-O-in-alg-LBFS} of \Cref{obs:direction-of-O-in-alg-LBFs}. Thus, we show that in this case, \cref{item1-of-obs-direction-of-O-in-alg-LBFS,item2-of-obs-direction-of-O-in-alg-LBFS,item3-of-obs-direction-of-O-in-alg-LBFS} of \Cref{obs:direction-of-O-in-alg-LBFs} are obeyed. We now move to the next possibility.

\end{enumerate}

\item Suppose $P$ does not contain $v$. This implies that no node of $P$ moves to $\tau$ in this iteration. We will now demonstrate that, at the end of the iteration, for all pairs of nodes $u_i$ and $u_j$ in $P$ such that $1 \leq i < j \leq m$, \Cref{obs:direction-of-O-in-alg-LBFs} is satisfied. To prove \cref{item1-of-obs-direction-of-O-in-alg-LBFS} of \cref{obs:direction-of-O-in-alg-LBFs}, we assume that at the end of the iteration, $u_j$ is in $\tau$. To prove \cref{item2-of-obs-direction-of-O-in-alg-LBFS} of \cref{obs:direction-of-O-in-alg-LBFs}, we assume that at the end of the iteration, $u_j$ is in some set of $\mathcal{S}$. Finally, we prove \cref{item3-of-obs-direction-of-O-in-alg-LBFS} of \cref{obs:direction-of-O-in-alg-LBFs}.

Suppose at the end of the iteration, $u_j$ is in $\tau$. This implies that even at the start of the iteration, $u_j \in \tau$, as no node of $P$ moved to $\tau$ in this iteration. Since at the start of the iteration, $P$ adheres to \Cref{obs:direction-of-O-in-alg-LBFs}, according to \cref{item1-of-obs-direction-of-O-in-alg-LBFS} of \Cref{obs:direction-of-O-in-alg-LBFs}, $u_i$ has joined $\tau$ before $u_j$. Hence, at the end of the iteration, \cref{item1-of-obs-direction-of-O-in-alg-LBFS} of \Cref{obs:direction-of-O-in-alg-LBFs} is upheld.

Suppose at the end of the iteration, $u_j$ is in some set of $\mathcal{S}$. Then, even at the start of the iteration, $u_j$ must be in some set, say $S_y$, of $\mathcal{S}$. Since at the start of the iteration, $P$ obeys \Cref{obs:direction-of-O-in-alg-LBFs}, from \cref{item2-of-obs-direction-of-O-in-alg-LBFS} of \Cref{obs:direction-of-O-in-alg-LBFs}, at the start of the iteration, either $u_i$ is in $\tau$ or $u_i$ is in some set, say $S_x$, of $\mathcal{S}$ such that $x\leq y$.

If at the start of the iteration, $u_i$ is in $\tau$ then even at the end of the iteration, $u_i$ is in $\tau$, and $u_i$ and $u_j$ obey \cref{item2-of-obs-direction-of-O-in-alg-LBFS} of \Cref{obs:direction-of-O-in-alg-LBFs}. Now, suppose at the start of the iteration, $u_i$ is in some set $S_x$ of $\mathcal{S}$.

If $x < y$, then at the end of the iteration, $u_i$ is in a set of $\mathcal{S}$ (either $S_{x1}$ or $S_{x2}$) that comes before the set in which $u_j$ belongs (either $S_{y1}$ or $S_{y2}$), as both $S_{x1}$ and $S_{x2}$ come before $S_{y1}$ and $S_{y2}$. This implies that \Cref{obs:direction-of-O-in-alg-LBFs} is obeyed in this case.

If $x = y$, then either both $u_i$ and $u_j$ belong to the same set $S_{x1}$ (if both are neighbors of $v$) or both $u_i$ and $u_j$ belong to the same set $S_{x2}$ (if neither of them is a neighbor of $v$) or $u_i \in S_{x1}$ and $u_j \in S_{x2}$ (if $u_i$ is a neighbor of $v$, and $u_j$ is not a neighbor of $v$) or $u_j \in S_{x1}$ and $u_i \in S_{x2}$ (if $u_j$ is a neighbor of $v$, and $u_i$ is not a neighbor of $v$). The first three possibilities obey \Cref{obs:direction-of-O-in-alg-LBFs}, as in the first and second cases, $u_i$ and $u_j$ both are in the same set of $\mathcal{S}$, and in the third case, $u_i$ is in $S_{x1}$, which comes before $S_{x2}$ in which $u_j$ belongs. We now show that the last possibility cannot occur.

Suppose $u_j$ is a neighbor of $v$, and $u_i$ is not a neighbor of $v$. Pick the least $k$ such that $u_k$ comes after $u_i$ in $P$, and $u_k$ is a neighbor of $v$. Such a $u_k$ exists due to the presence of $u_j$.

Then, $Q = (u_1, u_2, \ldots, u_k, v)$ is a \tfp{} in $\mathcal{C}$ such that $u_1\rightarrow u_2 \in O$. 

Now, from our induction hypothesis, at the start of the iteration, all the \tfps{} of $\mathcal{C}$ obey \cref{obs:direction-of-O-in-alg-LBFs}. This implies $Q$ also obeys \cref{obs:direction-of-O-in-alg-LBFs} at the start of the iteration.

$v$ has been picked in this iteration implying that at the start of the iteration, $v\in S_1$. From \cref{item2-of-obs-direction-of-O-in-alg-LBFS} of \cref{obs:direction-of-O-in-alg-LBFs}, at the start of the iteration, all other nodes of $Q$ are either in $\tau$ or in $S_1$. Since $u_i$ is not in $\tau$ at the start of the iteration, this implies that at the start of the iteration, $u_i$ is in $S_1$ along with $v$. Then, \cref{item3-of-obs-direction-of-O-in-alg-LBFS} of \cref{obs:direction-of-O-in-alg-LBFs} implies that at the start of the iteration, all the nodes of $Q$ are in $S_1$.

But then, $v$ cannot be a canonical source node in $U_M(\mathcal{C}, O)[S_1]$ due to the existence of a \tfp{} $Q$ in $U_M(\mathcal{C}, O)[S_1]$ and $u_1\rightarrow u_2 \in O$. The fact that $Q$ is a \tfp{} follows from \cref{claim:P-is-a-tfp-in-Markov-union-of-graph}.

This implies this possibility cannot occur. This shows that in this case, \cref{item1-of-obs-direction-of-O-in-alg-LBFS,item2-of-obs-direction-of-O-in-alg-LBFS} of \Cref{obs:direction-of-O-in-alg-LBFs} are obeyed.

We now show that at the end of the iteration, $P$ also obeys \cref{item3-of-obs-direction-of-O-in-alg-LBFS} of \Cref{obs:direction-of-O-in-alg-LBFs}. Since in this iteration, the picked node $v$ is not a node of $P$, this implies that the set of nodes of $P$ in $\tau$ at the start of the iteration is the same as the nodes of $P$ in $\tau$ at the end of the iteration. If none of the nodes of $P$ are in $\tau$, then \cref{item2-of-obs-direction-of-O-in-alg-LBFS} is satisfied. 

Suppose a node of $P$ is in $\tau$. Since at the end of the iteration, $P$ obeys \cref{item1-of-obs-direction-of-O-in-alg-LBFS,item2-of-obs-direction-of-O-in-alg-LBFS} of \Cref{obs:direction-of-O-in-alg-LBFs}, there exists some $i \geq 1$ such that at the end of the iteration, $u_1, u_2, \ldots, u_i$ are in $\tau$ and each of the remaining nodes of $P$ is in some set of $\mathcal{S}$. If at the end of the iteration, no node of $P$ is in the first set of $\mathcal{S}$, then \cref{item3-of-obs-direction-of-O-in-alg-LBFS} is obeyed. 

Suppose at the end of the iteration, there exists a node of $P$ in the first set of $\mathcal{S}$. Pick the highest $j$ such that at the end of the iteration, $u_j$ is in the first set of $\mathcal{S}$. Since at the end of the iteration, $P$ obeys \cref{item1-of-obs-direction-of-O-in-alg-LBFS,item2-of-obs-direction-of-O-in-alg-LBFS} of \Cref{obs:direction-of-O-in-alg-LBFs}, at the end of the iteration, there must be $u_{i+1}, u_{i+2}, \ldots, u_j$ all are in the first set of $\mathcal{S}$. If $j = i+1$, then \cref{item3-of-obs-direction-of-O-in-alg-LBFS} is obeyed, as in that case there exists only a single node of $P$ in the first set of $\mathcal{S}$. We now show that $j = i+1$. 

Suppose $j > i+1$. Then, at the start of the iteration when $u_i$ has been picked and has been moved to $\tau$,  $u_{i+1}, u_{i+2}, \ldots, u_j$ must be in the same set, say $S_x$, of $\mathcal{S}$ (because from \cref{alg:LBFSwithO}, once two nodes of $\mathcal{C}$ gets separated and move to two different set of $\mathcal{S}$, they never unite).  Since $P$ is a \tfp{} of a chordal graph $\mathcal{C}$, from \cref{obv:tfps-are-chordless-in-chordal-graphs}, $P$ is a \cp{}. This implies that only $u_{i+1}$ is a neighbor of $u_i$ and rest of the nodes $u_{i+2}, u_{i+3} \ldots, u_j$ are not the neighbors of $u_i$ in $P$. This further implies that at the end of the iteration in which $u_i$ has been picked at step~\ref{algLBFS:pick-vertex-v} of \cref{alg:LBFSwithO}, $u_{i+1}$ moves to the set $S_{x1}$ of $\mathcal{S}$ and $u_j$ moves to $S_{x2}$ of $\mathcal{S}$. This further implies that even at the end of the current iteration, $u_{i+1}$ and $u_j$ are in different sets of $\mathcal{S}$, a contradiction. This implies $j > i+1$ cannot occur. This shows that at the end of the iteration,  \cref{item3-of-obs-direction-of-O-in-alg-LBFS} of \cref{obs:direction-of-O-in-alg-LBFs} is obeyed.
\end{enumerate}

This demonstrates that in all scenarios, at the end of the iteration, \cref{item1-of-obs-direction-of-O-in-alg-LBFS,item2-of-obs-direction-of-O-in-alg-LBFS,item3-of-obs-direction-of-O-in-alg-LBFS} of \Cref{obs:direction-of-O-in-alg-LBFs} are true. Thus, \Cref{obs:direction-of-O-in-alg-LBFs} is validated.
\end{proof}

\Cref{obs:direction-of-O-in-alg-LBFs} implies the following corollaries:

\begin{corollary}
\label{corr:directed-chordless-path-and-LBFS-ordering}
For input $\mathcal{C}$ and $O$, let $\tau$ be an LBFS ordering returned by \Cref{alg:LBFSwithO}. If $(u_1, u_2, \ldots, u_l)$ is a chordless path in $\mathcal{C}$ such that $u_1\rightarrow u_2 \in O$, then $\tau(u_1) < \tau(u_2) < \ldots < \tau(u_l)$.
\end{corollary}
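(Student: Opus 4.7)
The plan is to derive \cref{corr:directed-chordless-path-and-LBFS-ordering} as an essentially immediate consequence of \cref{obs:direction-of-O-in-alg-LBFs} together with the fact that a chordless path is always a \tfp{}.

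First, since $(u_1, u_2, \ldots, u_l)$ is a \cp{} in the undirected chordal graph $\mathcal{C}$, \cref{obs:cp-is-tfp} gives that it is also a \tfp{} in $\mathcal{C}$. Combined with the hypothesis $u_1 \rightarrow u_2 \in O$, this places the path $P = (u_1, u_2, \ldots, u_l)$ exactly in the setting covered by \cref{obs:direction-of-O-in-alg-LBFs}.

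Next, I would consider the state of $\tau$ and $\mathcal{S}$ at the moment \cref{alg:LBFSwithO} terminates (i.e., after the \textbf{while} loop at lines \ref{algLBFS:while-loop-start}--\ref{algLBFS:while-end} exits). At termination, $\mathcal{S}$ is empty and every vertex of $\mathcal{C}$ has been appended to $\tau$, so in particular every $u_j$ lies in $\tau$. Applying \cref{item1-of-obs-direction-of-O-in-alg-LBFS} of \cref{obs:direction-of-O-in-alg-LBFs} to the pair $(u_i, u_j)$ with $i < j$ then yields that $u_i$ arrived in $\tau$ strictly before $u_j$, which by the definition of $\tau$ as an ordered list (via lines \ref{algLBFS:initialization-of-tau} and \ref{algLBFS:add-vertex-to-tau}) is exactly $\tau(u_i) < \tau(u_j)$. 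Taking $i = 1, 2, \ldots, l-1$ and $j = i+1$ gives the full chain $\tau(u_1) < \tau(u_2) < \cdots < \tau(u_l)$.

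There is no real obstacle here, since all the work has been done in \cref{obs:direction-of-O-in-alg-LBFs}; the only subtlety is to confirm that the invariant holds at termination (not just at the start of each iteration), but this is immediate because the invariant holds at the start of every iteration and termination is reached only once every vertex has been moved into $\tau$, at which point the conclusion simply reads off of \cref{item1-of-obs-direction-of-O-in-alg-LBFS}.
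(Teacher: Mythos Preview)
Your proposal is correct and matches the paper's approach: the paper simply states that \cref{corr:directed-chordless-path-and-LBFS-ordering} is implied by \cref{obs:direction-of-O-in-alg-LBFs} without spelling out the details, and your argument via \cref{obs:cp-is-tfp} plus \cref{item1-of-obs-direction-of-O-in-alg-LBFS} at termination is exactly the intended unpacking.
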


\begin{corollary}
\label{corr:directed-edges-of-O-follow-tau}
For input $\mathcal{C}$ and $O$, let $\tau$ be an LBFS ordering returned by \Cref{alg:LBFSwithO}. For $a-b \in \mathcal{C}$, if $a\rightarrow b \in O$, then $\tau(a) < \tau(b)$. 
\end{corollary}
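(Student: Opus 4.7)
The plan is to recognize this as essentially a degenerate case of \Cref{corr:directed-chordless-path-and-LBFS-ordering}, which was stated immediately above. Given the edge $a \undir b \in \mathcal{C}$, the sequence $(a, b)$ is a path of length two in $\mathcal{C}$ and is trivially a \cp{} (since the definition of a chord requires two non-adjacent vertices on the path, and a two-vertex path has no such pair). Combined with the hypothesis $a \rightarrow b \in O$, all preconditions of \Cref{corr:directed-chordless-path-and-LBFS-ordering} are satisfied.

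Applying that corollary to the \cp{} $(u_1, u_2) = (a, b)$ immediately gives $\tau(u_1) < \tau(u_2)$, i.e., $\tau(a) < \tau(b)$, which is precisely the claim.

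Alternatively, and equivalently, one can go directly through \Cref{obs:direction-of-O-in-alg-LBFs}: since $a \undir b \in \mathcal{C}$ and $a \rightarrow b \in O$, the sequence $P = (a, b)$ is trivially a \tfp{} of $\mathcal{C}$ (the \tfp{} condition only constrains triples of consecutive vertices, and $P$ has none). Consider the iteration of the while loop in \Cref{alg:LBFSwithO} in which $b$ is added to $\tau$. By \cref{item1-of-obs-direction-of-O-in-alg-LBFS} of \Cref{obs:direction-of-O-in-alg-LBFs} applied to $P$, the vertex $a$ must already have been added to $\tau$ before $b$, i.e., $\tau(a) < \tau(b)$. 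Since both routes are straightforward, I do not foresee any technical obstacle beyond pointing out the base-case instantiation; the substantive work has already been done in proving \Cref{obs:direction-of-O-in-alg-LBFs}.
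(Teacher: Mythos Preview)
Your proposal is correct and matches the paper's approach: the paper presents \Cref{corr:directed-edges-of-O-follow-tau} (together with \Cref{corr:directed-chordless-path-and-LBFS-ordering}) as an immediate corollary of \Cref{obs:direction-of-O-in-alg-LBFs} without further proof, and your instantiation of that observation with the trivial \tfp{} $P=(a,b)$ (or equivalently, of \Cref{corr:directed-chordless-path-and-LBFS-ordering} with the trivial \cp{} $(a,b)$) is exactly the intended argument.
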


As discussed earlier, \Cref{alg:LBFSwithO} is a modified LBFS ordering of the LBFS ordering given by Rose et al. (\cite{rose1976algorithmic}). Instead of picking a node arbitrarily at line \ref{algLBFS:pick-vertex-v} of \Cref{alg:LBFSwithO}, we choose an arbitrary canonical source node. Since there always exists a canonical source node (as shown in \Cref{claim:LBFS-algorithm-always-find-cannonical-node}), \Cref{alg:LBFSwithO} outputs an LBFS ordering. \Cref{corr:directed-edges-of-O-follow-tau} demonstrates that the ordering returned by \Cref{alg:LBFSwithO} is consistent with $O$. \Cref{lemma:alg:LBFS-with-O-returns-LBFS-ordering-and-consistent-with-O} summarizes this.

\begin{lemma}
    \label{lemma:alg:LBFS-with-O-returns-LBFS-ordering-and-consistent-with-O}
    Let $H$ be an undirected graph, $H_1$ and $H_2$ be two induced subgraphs of $H$ such that $H=H_1\cup H_2$, and $I = V_{H_1}\cap V_{H_2}$ is a vertex separator of $H$. For each $i\in \{1,2\}$, let $S_i$ be a subset of $V_{H_i}$ such that $I\subseteq S_i$, $M_i$ be an MEC of $H_i$, and the shadow of $M_i$ on $S_i\cup N(S_i, H_i)$ be $(O_i, P_{i1}, P_{i1})$. Let $O \in \mathcal{E}(O_1, P_{11}, P_{12}, P_{12}, O_2, P_{21}, P_{22})$. For $i\in \{1,2\}$, let $\mathcal{C}$ be a universal critical component of $M_i$. For inputs  $\mathcal{C}$ and $O$, \Cref{alg:LBFSwithO} returns an LBFS ordering $\tau$ that satisfies $O$, i.e., for $u\rightarrow v\in O$ such that $u,v \in V_{\mathcal{C}}$, $\tau(u) < \tau(v)$.
\end{lemma}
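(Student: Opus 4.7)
The plan is to assemble the lemma from three ingredients that have already been established in the preceding discussion: (i) well-definedness of the algorithm, (ii) the LBFS property of the output ordering, and (iii) consistency with the directed edges of $O$.

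First, I would argue well-definedness, i.e., that every invocation of line~\ref{algLBFS:pick-vertex-v} succeeds. The only non-standard step in \Cref{alg:LBFSwithO} compared to the classical lexicographic BFS of \citet{rose1976algorithmic} is that we restrict the choice of pivot $v$ to canonical source nodes in $U_M(\mathcal{C},O)[X]$ rather than to arbitrary vertices of $X$. But \Cref{claim:LBFS-algorithm-always-find-cannonical-node} guarantees that such a canonical source node always exists in $U_M(\mathcal{C},O)[X]$ for every $X \subseteq V_{\mathcal{C}}$ that arises, so the selection step never fails, and the while loop terminates after $|V_{\mathcal{C}}|$ iterations because each pivot is removed from its set.

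Second, I would observe that the partition-refinement bookkeeping of $\mathcal{S}$ (lines \ref{algLBFS:set-X}, \ref{algLBFS:remove-picked-node}, and \ref{algLBFS:refine}--\ref{algLBFS:removal-of-empty-sets}) is identical to that of the standard LBFS procedure. Consequently, the sequence of pivots produced by \Cref{alg:LBFSwithO} is a valid LBFS ordering of $\mathcal{C}$: at every iteration the pivot is chosen from the first non-empty set (which is exactly the set of vertices whose signature with respect to the already-ordered prefix is lexicographically largest), and the refinement splits the remaining sets according to adjacency with the newly picked vertex. Since the LBFS property depends only on which set the pivot is picked from and not on how ties within that set are broken, restricting the tie-breaking to canonical source nodes still yields an LBFS ordering.

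Third, for consistency with $O$, I would appeal directly to \Cref{corr:directed-edges-of-O-follow-tau}, which states that for any edge $a-b \in \mathcal{C}$ with $a \rightarrow b \in O$, the returned ordering $\tau$ satisfies $\tau(a) < \tau(b)$. That corollary is exactly \cref{item1-of-obs-direction-of-O-in-alg-LBFS} of \Cref{obs:direction-of-O-in-alg-LBFs} applied to the two-vertex triangle-free path $(a,b)$. Combining these three pieces concludes the proof. The only non-routine part of this assembly is the recognition that the inductive invariant in \Cref{obs:direction-of-O-in-alg-LBFs} already does all the heavy lifting for consistency; I would not anticipate a substantive obstacle beyond writing the three-sentence combination cleanly.
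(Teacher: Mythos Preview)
Your proposal is correct and mirrors the paper's own argument almost exactly: the paper also assembles the lemma from \Cref{claim:LBFS-algorithm-always-find-cannonical-node} (well-definedness), the observation that restricting the tie-break to canonical source nodes does not affect the LBFS property, and \Cref{corr:directed-edges-of-O-follow-tau} (consistency with $O$). The paper in fact presents \Cref{lemma:alg:LBFS-with-O-returns-LBFS-ordering-and-consistent-with-O} as a summary of precisely these three points, so there is nothing to add.
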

 \subsection{\texorpdfstring{Sufficient condition for a shadow of an MEC of $H$}{Sufficient condition for a shadow of an MEC of H}}
\label{subsection:sufficient-condition}
In this subsection, we show that \cref{item-1-of-obs:nes-conditions-of-the-shadow,item-2-of-obs:nes-conditions-of-the-shadow} of \cref{obs:nes-conditions-of-the-shadow} constitutes the required sufficient condition.
\begin{lemma}
\label{obs2:O-structure-for-existence-of-MEC}  
Let $H$ be an undirected graph, and $H_1$ and $H_2$ be the induced subgraph of $H$ such that $H = H_1\cup H_2$, and $I = V_{H_1} \cap V_{H_2}$ is a vertex separator of $H$ that separates $V_{H_1}\setminus I$ and $V_{H_2}\setminus I$. Let $S_1, S_2$ be the  subsets of $V_{H_1}$ and $V_{H_2}$, respectively, such that  $S_1\cap S_2 = I$. 
Let $M_1$ be an MEC of $H_1$, and $M_2$ be an MEC of $H_2$.
Let $(O_1, P_{11}, P_{12})$ be the shadow of $M_1$ on $S_1\cup N(S_1, H_1)$, and $(O_2, P_{21}, P_{22})$ be the shadow of $M_2$ on $S_2\cup N(S_2, H_2)$. 
Let $A = H[S_1\cup S_2\cup N(S_1\cup S_2, H)]$ and $O\in \setofpartialMECs{A}$.
If $O$ is an extension of $(O_1, P_{11}, P_{12}, O_2, P_{21}, P_{22})$ then there exists a unique MEC $M$ such that $M[V_O] = O$, and $\mathcal{P}(M, V_{H_1}, V_{H_2}) = (M_1, M_2)$. If $\EPF{O, P_{11}, P_{12}, P_{21}, P_{22}} = (P_1,P_2)$ then the shadow of $M$ on $V_O$ is $(O, P_1, P_2)$.

\end{lemma}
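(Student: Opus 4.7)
The natural candidate is the Markov union $M \defeq U_M(O, M_1, M_2)$. The plan is to first check that $O, M_1, M_2$ are pairwise synchronous so that $M$ is well-defined, then verify the four conditions of \cref{thm:nes-and-suf-cond-for-chordal-graph-to-be-an-MEC} for $M$, then verify the projection and shadow identities, and finally uniqueness. Synchrony of $M_1$ with $M_2$ reduces to agreement on $I$, which in turn reduces to both being compatible with $O[I]$: this uses \cref{corr:directed-edge-of-main-graph-is-either-dir-or-ud-in-proj-graph,corr:undirected-edge-in-an-MEC-implies-undirected-edge-in-projected-MEC} applied to the shadow definition, plus \cref{item-1-of-def:extension-of-O1-O2-P11-P12-P21-P22} of \cref{def:extension-of-O1-O2-P11-P12-P21-P22} which forces any directed edge in $O_1\cup O_2$ to agree with $O$. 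Synchrony with $O$ itself is immediate from the same item.

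Next, I would verify the four MEC conditions for $M$. Chain: a directed cycle in $M$ would, since $I$ separates $V_{H_1}\setminus I$ from $V_{H_2}\setminus I$, either lie in some $M_a$ or involve a segment traveling through $O$; the first case contradicts $M_a$ being an MEC, and the second case can be ruled out using the validity of $\EPF{O,P_{11},P_{12},P_{21},P_{22}}$ (\cref{item-2-of-def:valid-epfs} of \cref{def:valid-epfs}) together with \cref{obs:concatenate-triangle-free-paths}. Chordality of each \ucc{} of $M$ follows because, by the separator property, any \ucc{} of $M$ is contained in $V_{H_a}$ for some $a$ and coincides with an \ucc{} of $M_a$ up to possibly fewer vertices, inheriting chordality. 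Absence of the configuration $u\rightarrow v-w$ is checked by cases on where the edges lie: within $M_a$ it is excluded by $M_a$ being an MEC, across $O$ and $M_a$ it is excluded by \cref{item-3-of-def:extension-of-O1-O2-P11-P12-P21-P22} of \cref{def:extension-of-O1-O2-P11-P12-P21-P22}, which forces $u-v \in O_a$ to become directed in $O$ whenever an incoming $x\rightarrow y$ creates a \tfp{} to it in $M_a$.

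The hard part is strong protection of every directed edge $u\rightarrow v \in M$. I would split according to origin: if $u\rightarrow v$ is directed in $O$, it is already strongly protected in $O$ (hence in $M$) unless it came only from $O_a$; in the latter case I would use \cref{item-3-of-def:extension-of-O1-O2-P11-P12-P21-P22} of \cref{def:extension-of-O1-O2-P11-P12-P21-P22} to produce the witnessing configuration out of a \tfp{} in $M_a$ and lift it to $M$ via \cref{obs:every-edge-of-triangle-free-path-is-directed}. For an edge $u\rightarrow v$ that is directed in some $M_a$ but undirected in $O_a$ (e.g., because $u,v\notin V_O$), strong protection inside $M_a$ carries over intact to $M$, since adjacency with third vertices not in $V_{H_a}$ is blocked by the separator $I$. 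To assemble directions consistently inside an undirected chordal component $\mathcal{C}$ of $M_a$ without creating directed cycles in $M$, I would invoke \cref{alg:LBFSwithO} and \cref{lemma:alg:LBFS-with-O-returns-LBFS-ordering-and-consistent-with-O} to obtain an LBFS ordering $\tau$ of $\mathcal{C}$ consistent with $O$, and use \cref{corr:directed-chordless-path-and-LBFS-ordering} to propagate directions along \cps{} in the direction dictated by $\tau$; this is the key mechanism that turns the local \tfp{}-data in $P_{a1}, P_{a2}$ into the global strong protection required by \cref{item-4-theorem-nec-suf-cond-for-MEC} of \cref{thm:nes-and-suf-cond-for-chordal-graph-to-be-an-MEC}.

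Once $M$ is shown to be an MEC, checking $M[V_O] = O$ reduces to showing no spurious orientations arise on edges of $O$ from $M_1$ or $M_2$; this follows from the ``only if'' direction of \cref{item-3-of-def:extension-of-O1-O2-P11-P12-P21-P22} of \cref{def:extension-of-O1-O2-P11-P12-P21-P22}. For the projections, I would verify $\mathcal{V}(M[V_{H_a}]) = \mathcal{V}(M_a)$ using \cref{item-2-of-def:extension-of-O1-O2-P11-P12-P21-P22} of \cref{def:extension-of-O1-O2-P11-P12-P21-P22} together with the fact that any v-structure of $M[V_{H_a}]$ with its middle vertex in $S_a\cup N(S_a,H_a)$ is captured by $O_a$, while any v-structure whose middle vertex lies deeper in $V_{H_a}$ already lies entirely in $V_{H_a}$ and hence corresponds to one of $M_a$. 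The shadow identity $(O,P_1,P_2)$ with $(P_1,P_2) = \EPF{O,P_{11},P_{12},P_{21},P_{22}}$ then follows directly from \cref{lem:P1-P2-is-DPF}. Finally, uniqueness is immediate: any MEC $M'$ with $M'[V_O] = O$ and $\mathcal{P}(M', V_{H_1}, V_{H_2}) = (M_1, M_2)$ has the same skeleton (that of $H$) and the same set of v-structures as $M$ (every v-structure has three vertices and, by the separator argument, lies either wholly in some $V_{H_a}$ or has its middle vertex in $I \subseteq V_O$), and an MEC is determined by its skeleton and v-structures.
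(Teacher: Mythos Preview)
Your candidate $M \defeq U_M(O, M_1, M_2)$ is not, in general, an MEC: it can violate \cref{item-3-theorem-nec-suf-cond-for-MEC} of \cref{thm:nes-and-suf-cond-for-chordal-graph-to-be-an-MEC}. Take $H_1$ to be the path $a-b-c-d$, $H_2$ the edge $e-b$, $I=\{b\}=S_1=S_2$, $M_1$ and $M_2$ the fully undirected MECs, and $O$ the partial MEC on $\{a,b,c,e\}$ with $a\rightarrow b\leftarrow e$ and $b\rightarrow c$. All four items of \cref{def:extension-of-O1-O2-P11-P12-P21-P22} hold (each directed edge of $O$ is strongly protected in $O$), yet $U_M(O,M_1,M_2)$ contains the induced subgraph $b\rightarrow c - d$. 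Your case analysis for the $u\rightarrow v-w$ configuration appeals to \cref{item-3-of-def:extension-of-O1-O2-P11-P12-P21-P22}, but that item only governs edges $v-w\in O_a$; when $w\notin V_{O_a}$ (here $w=d$) it is silent, and nothing in the Markov union orients $v-w$.

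The paper's construction is not the bare Markov union: after initializing $M=U_M(M_1,M_2,O)$, \cref{alg:constructMEC} runs an explicit orientation pass over every \ucc{} $\mathcal{C}$ of each $M_a$, using the LBFS ordering $\tau$ from \cref{alg:LBFSwithO} to propagate directions along \cps{} (so $b\rightarrow c$ forces $c\rightarrow d$ in the example). You do mention LBFS and ``propagating directions along \cps{}'', but you frame it as a device for verifying strong protection rather than as a modification of $M$ itself; without building those extra orientations into $M$, none of the four MEC conditions, the identity $M[V_O]=O$, or the projection claims can be checked on a well-defined object. The paper then establishes \cref{lem:O-is-an-induced-subgraph-of-M,lem:M-is-an-MEC,lem:M1-and-M2-are-projections-of-M,lem:uniqueness-of-M,lem:derived-path-function-is-a-part-of-shadow-of-M} for this enlarged $M$, with the bulk of the work (\cref{obs:when-u-v-becomes-directed} and its corollaries) devoted to controlling exactly which edges get oriented by the propagation step and why this neither over- nor under-shoots. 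Your outline for uniqueness and for the shadow identity via \cref{lem:P1-P2-is-DPF} is fine once the correct $M$ is in hand.
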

\begin{algorithm}
\caption{Construct MEC}
\label{alg:constructMEC}
\SetAlgoLined
\SetKwInOut{KwIn}{Input}
\SetKwInOut{KwOut}{Output}
\SetKwFunction{constructMEC}{construct_MEC}
\KwIn{Three graphs $M_1, M_2$, and $O$ such that\\ 
$H$ is an undirected graph and \\
$H_1$ and $H_2$ are two induced subgraphs of $H$ such that\\
$H = H_1\cup H_2$, and $I = V_{H_1}\cap V_{H_2}$ is a vertex separator of $H$.\\
$S_1$ and $S_2$ are subsets of $V_{H_1}$ and $V_{H_2}$, respectively, such that $S_1\cap S_2 =I$.\\
$M_1$ and $M_2$ are MECs of $H_1$ and $H_2$, respectively.\\
$\forall a\in \{1,2\}$, the shadow of $M_a$ on $S_a\cup N(S_a, H_a)$ is $(O_a, P_{a1}, P_{a2})$, and\\
$O\in \mathcal{E}(O_1, P_{11},P_{12}, O_2,P_{21}, P_{22})$.
}
    \KwOut{ $M$: an MEC of $H_1\cup H_2$ such that  \\ $M \in \setofMECs{H, O, P_1, P_2}$, $M[V_O] = O$, and $\mathcal{P}(M, V_{H_1}, V_{H_2}) = (M_1, M_2)$.}
  $M \leftarrow U_M(M_1, M_2, O)$ \label{algconstructMEC:M-initialization}
  
  $I \leftarrow V_{M_1}\cap V_{M_2}$
  
  \ForEach{$i \in \{1,2\}$  \label{algconstructMEC:foreachMi}}
    {
        \ForEach {undirected connected component $\mathcal{C}$ of $M_i$ \label{algconstructMEC:foreachUCC}}
        {

                $\tau \leftarrow$ LBFS($\mathcal{C}$,$O$). \label{algconstructMEC:contructLBFS}
                
                Denote a vertex $v$  of $\mathcal{C}$ as  $u_i$ if $\tau(v) = i$.
                
                $a \leftarrow 2$
                
                \While{$a \leq |V_{\mathcal{C}}|$ \label{algconstructMEC:while-start}}
                {
                    $b \leftarrow a-1$
                    
                    \While{$b\geq 1$ \label{algconstructMEC:inner-while}}
                    {
                        \If{ $\exists c$ such that $c<b$, and $u_c-u_b-u_a$ is an induced subgraph of $\mathcal{C}$, and $u_c\rightarrow u_b - u_a \in M$, or $\exists c$ such that $b < c < a$, and $u_b-u_c-u_a-u_b \in \mathcal{C}$, and $u_b\rightarrow u_c \rightarrow u_a -u_b \in M$ \label{algconstructMEC:constraint-for-orientation}}
                        {
                            Replace the edge $u_b-u_a$ in  $M$ with $u_b\rightarrow u_a$. \label{algconstructMEC:add-orientation}
                        }\label{algconstructMEC:if-end}
                        
                        $b = b -1$
                    }\label{algconstructMEC:inner-while-end}
                    
                    $a = a+1$
                }\label{algconstructMEC:while-end}

        }\label{algconstructMEC:foreachUCC-end}
      }\label{algconstructMEC:foreachMi-end}
\KwRet $M$
\end{algorithm}

 \begin{proof}
Suppose $M_1 \in \setofMECs{H_1, O_1, P_{11}, P_{12}}$, $M_2 \in \setofMECs{H_2, O_2, P_{21}, P_{22}}$, $O \in \mathcal{E}(O_1, P_{11}, P_{12}, O_2, P_{21}, P_{22})$, and $(P_1, P_2) = \EPF{O, P_{11}, P_{12}, P_{21}, P_{22}}$. We show that there exists a unique MEC $M$ such that $M[V_O] = O$,  $\mathcal{P}(M, V_{H_1}, V_{H_2}) = (M_1, M_2)$, and the shadow of $M$ on $V_O$ is $(O, P_1, P_2)$.
We first construct such an MEC $M$ using \cref{alg:constructMEC}. \Cref{alg:constructMEC} follows the following steps to construct the MEC:
\begin{enumerate}
    \item 
    \label{item:initialize-M}
    Initialize $M = U_M(M_1, M_2, O)$ (line-\ref{algconstructMEC:M-initialization} of \cref{alg:constructMEC}), a Markov union (\cref{def:Markov-union-of-graphs}) of $M_1, M_2$ and $O$.
\item
    \label{item:reframe-ucc-of-Mi}
    For each $i\in \{1,2\}$ (lines \ref{algconstructMEC:foreachMi}-\ref{algconstructMEC:foreachMi-end} of \cref{alg:constructMEC}), for each undirected connected component $\mathcal{C}$ of $M_i$ (lines \ref{algconstructMEC:foreachUCC}-\ref{algconstructMEC:foreachUCC-end} of \cref{alg:constructMEC}), 
we do the following:
        \begin{enumerate}
            \item
            \label{item-a-of-item:reframe-ucc-of-Mi}
            Construct an LBFS ordering $\tau$ of $\mathcal{C}$ using \cref{alg:LBFSwithO} for input $\mathcal{C}$ and $O$ (line \ref{algconstructMEC:contructLBFS} of \cref{alg:constructMEC}).
            
            \item
            \label{item-b-of-item:reframe-ucc-of-Mi}
            For $u-v \in M$ such that $\tau(u) < \tau(v)$, replace $u-v$ with $u\rightarrow v$ (line \ref{algconstructMEC:add-orientation} of \cref{alg:constructMEC}) if either
                \begin{enumerate}
                    \item there exists a vertex $x \in \mathcal{C}$ such that $\tau(x) < \tau(u)$, and  there exists $x-u -v$, an induced subgraph of $\mathcal{C}$, such that $x\rightarrow u \in M$ (line \ref{algconstructMEC:constraint-for-orientation}), or
                    \item there exists a vertex $x \in \mathcal{C}$ such that $\tau(u) < \tau(x) < \tau(v)$, and $u\rightarrow x, x\rightarrow v \in M$ (line \ref{algconstructMEC:constraint-for-orientation}).
                \end{enumerate}
        \end{enumerate}
\end{enumerate}

We begin by demonstrating that \cref{alg:constructMEC} correctly executes steps \ref{item:initialize-M} and \ref{item:reframe-ucc-of-Mi}. The implementation of step \ref{item:initialize-M} can be found at line~\ref{algconstructMEC:M-initialization} of \cref{alg:constructMEC}. Next, we verify that for each $i \in \{1,2\}$ and for every \ucc{} $\mathcal{C}$ of $M_i$, steps \ref{item-a-of-item:reframe-ucc-of-Mi} and \ref{item-b-of-item:reframe-ucc-of-Mi} are executed correctly.

Step \ref{item-a-of-item:reframe-ucc-of-Mi} is implemented at line \ref{algconstructMEC:contructLBFS} of \cref{alg:constructMEC}, where we construct an LBFS ordering $\tau$ for $\mathcal{C}$. To demonstrate that step \ref{item-b-of-item:reframe-ucc-of-Mi} is also executed correctly, we utilize \Cref{corr:an-edge-of-an-ucc-gets-directed-in-M-either-at-initiazation-or-in-the-iteration-when-ucc-is-considered} to provide assurance in the accuracy of \cref{alg:constructMEC}.

Let's suppose that step \ref{item-b-of-item:reframe-ucc-of-Mi} is not accurately executed by \cref{alg:constructMEC}. This would imply the existence of an $x$ such that either (a) $x-u-v$ forms an induced subgraph of $\mathcal{C}$, with $\tau(x) < \tau(u) < \tau(v)$, and $x\rightarrow u-v \in M$, or (b) $u-x-v-u$ forms an induced subgraph of $\mathcal{C}$, with $\tau(u) < \tau(x) < \tau(v)$, and $u\rightarrow x$, $x\rightarrow v - u \in M$.

Let's address both cases individually:

Case (a): Assuming there exists an $x$ such that $x-u-v$ forms an induced subgraph of $\mathcal{C}$, $\tau(x) < \tau(u) < \tau(v)$, and $x\rightarrow u-v \in M$. According to \cref{corr:an-edge-of-an-ucc-gets-directed-in-M-either-at-initiazation-or-in-the-iteration-when-ucc-is-considered}, $x\rightarrow u$ was added to $M$ either during initialization or when $a = \tau(u)$ and $b = \tau(x)$. This implies that when we have $a = \tau(v)$ and $b = \tau(u)$, $x\rightarrow u$ is in $M$. However, during the execution of lines \ref{algconstructMEC:constraint-for-orientation}--\ref{algconstructMEC:if-end} (when we have $a = \tau(v)$ and $b = \tau(u)$), $u\rightarrow v$ is added to $M$, leading to a contradiction. Therefore, this case cannot occur.

Case (b): Assuming there exists an $x$ such that $u-x-v-u$ forms an induced subgraph of $\mathcal{C}$, with $\tau(u) < \tau(x) < \tau(v)$, and $u\rightarrow x$, $x\rightarrow v - u \in M$. According to \cref{corr:an-edge-of-an-ucc-gets-directed-in-M-either-at-initiazation-or-in-the-iteration-when-ucc-is-considered}, $u\rightarrow x$ was added to $M$ either during initialization or when $a = \tau(x)$ and $b = \tau(u)$, and $x\rightarrow v$ was added to $M$ either during initialization or when $a = \tau(v)$ and $b = \tau(x)$. This implies that when we have $a = \tau(v)$ and $b = \tau(u)$, both $u\rightarrow x$ and $x\rightarrow u$ are in $M$. However, during the execution of lines \ref{algconstructMEC:constraint-for-orientation}--\ref{algconstructMEC:if-end} (when we have $a = \tau(v)$ and $b = \tau(u)$), $u\rightarrow v$ is added to $M$, again leading to a contradiction. Therefore, this case also cannot occur.

In conclusion, we have demonstrated that \ref{item-b-of-item:reframe-ucc-of-Mi} is executed correctly by \cref{alg:constructMEC}. This further affirms that \cref{alg:constructMEC} accurately implements steps \ref{item:initialize-M} and \ref{item:reframe-ucc-of-Mi}.

\Cref{lem:O-is-an-induced-subgraph-of-M,lem:M-is-an-MEC,lem:M1-and-M2-are-projections-of-M,lem:uniqueness-of-M,lem:derived-path-function-is-a-part-of-shadow-of-M} validate $M$. 

\begin{lemma}
    \label{lem:O-is-an-induced-subgraph-of-M}
$M[V_O] = O$.
\end{lemma}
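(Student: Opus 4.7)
}

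The plan is to verify three separate claims: (a) $\skel{M[V_O]} = \skel{O}$; (b) for every $u \rightarrow v \in O$, we have $u \rightarrow v \in M$; and (c) for every $u - v \in O$, the edge between $u$ and $v$ in $M$ is still undirected. Combined, (a)--(c) give $M[V_O] = O$.

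Claim (a) follows directly from the construction: \cref{alg:constructMEC} initializes $M = U_M(M_1, M_2, O)$, whose skeleton is $\skel{M_1} \cup \skel{M_2} \cup \skel{O} = H_1 \cup H_2 \cup H[V_O] = H$, and the reorientation loop only replaces undirected edges by directed ones, so it does not alter the skeleton. Restricting to $V_O$ then yields $\skel{M[V_O]} = H[V_O] = \skel{O}$. Claim (b) is also easy: by \cref{def:Markov-union-of-graphs}, any directed edge of $O$ lives in $U_M(M_1, M_2, O)$, and the reorientation step never reverses an existing directed edge, so $u \rightarrow v$ persists in $M$.

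The hard part is (c). Given $u - v \in O$, I first use the vertex-separator hypothesis to argue that $u, v \in V_{O_a}$ for some $a \in \{1,2\}$, because $V_O \cap V_{H_a} = V_{O_a}$ and no edge of $H$ crosses between $V_{H_1}\setminus I$ and $V_{H_2}\setminus I$. WLOG take $a = 1$. Combining \cref{item-1-of-def:extension-of-O1-O2-P11-P12-P21-P22} of \cref{def:extension-of-O1-O2-P11-P12-P21-P22} with $u - v \in O$ forces $u - v \in O_1$, and then the shadow definition gives $u - v \in M_1$, so $u, v$ lie in a common undirected connected component $\mathcal{C}$ of $M_1$. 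The rest of (c) is proved by strong induction on the iterations of the outer loop of \cref{alg:constructMEC}, with the inductive invariant: (I) no edge of $O$ that is undirected in $O$ has yet been oriented in $M$; and (II) every directed edge $x \rightarrow y$ produced in some UCC, with $x, y \in V_{O_1} \cup V_{O_2}$, is exactly $x \rightarrow y \in O$. The base case is the initialization of $M = U_M(M_1, M_2, O)$: if $u - v \in O$ but $u \rightarrow v \in M_1 \cup M_2 \cup O$, then $u \rightarrow v$ must come from $M_1$ (ruled out since $u,v$ are in the same UCC of $M_1$, so $u-v \in M_1$) or from $M_2$ (forcing $u, v \in I \subseteq V_{O_2}$, hence $u \rightarrow v \in O_2$, hence by extension $u \rightarrow v \in O$, contradiction) or from $O$ (direct contradiction). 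For the inductive step, suppose that the first violation of (I) is the orientation of $u-v \in O$ to $u \rightarrow v$ (with $\tau(u) < \tau(v)$) in UCC $\mathcal{C}$ of $M_1$; then the trigger at line~\ref{algconstructMEC:constraint-for-orientation} is either (i) a vertex $x$ with $\tau(x) < \tau(u)$, $x - u - v$ induced in $\mathcal{C}$, and $x \rightarrow u \in M$, or (ii) a vertex $x$ with $\tau(u) < \tau(x) < \tau(v)$, the triangle $u-x-v-u$ in $\mathcal{C}$, and $u \rightarrow x, x \rightarrow v \in M$. In case (i), either $x \in V_{O_1}$, and by IH $x \rightarrow u \in O$, which together with $u-v \in O$ produces $x \rightarrow u - v$ as an induced subgraph of $O$, violating \cref{item-3-of-def:partial-MEC} of \cref{def:partial-MEC}; or $x \notin V_{O_1}$, in which case I trace the chain of reorientations (all inside $\mathcal{C}$, by \cref{corr:an-edge-of-an-ucc-gets-directed-in-M-either-at-initiazation-or-in-the-iteration-when-ucc-is-considered}) back to a directed edge of $O$ whose endpoints lie in $V_{O_1}$; concatenating the corresponding LBFS-consistent chordless paths via \cref{corr:directed-chordless-path-and-LBFS-ordering} and \cref{lem:concatenation-of-chordal-paths} yields a triangle-free path in $M_1$ from some $x' \rightarrow y' \in O$ to $(u,v)$, so that $P_{11}((x',y'),(u,v)) = 1$. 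But then \cref{subitem-2-of-item-3-of-def:extension-of-O1-O2-P11-P12-P21-P22} of \cref{def:extension-of-O1-O2-P11-P12-P21-P22} forces $u \rightarrow v \in O$, contradicting $u-v \in O$. Case (ii) is analogous: the chordless triangle and the two chains of reorientations provide two triangle-free paths in $M_1$, from $(x',y')$ to $v$ and from $(v,u)$ to $x'$ for some $x' \rightarrow y' \in O$, which activates \cref{subitem-3-of-item-3-of-def:extension-of-O1-O2-P11-P12-P21-P22} and again forces $u \rightarrow v \in O$.

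The main obstacle will be making rigorous the ``trace back'' argument that turns a sequence of reorientations inside $\mathcal{C}$ into a single triangle-free path in $M_1$ witnessed by $P_{a1}$ or $P_{a2}$; this is where the LBFS orderings produced by \cref{alg:LBFSwithO}, together with the chordal structure of $\mathcal{C}$ (allowing us to apply \cref{obv:tfps-are-chordless-in-chordal-graphs} and the concatenation \cref{obs:concatenate-triangle-free-paths,lem:concatenation-of-chordal-paths}), have to be combined carefully so that the ordering $\tau$ along the reconstructed path is monotone in exactly the way the triggers in line~\ref{algconstructMEC:constraint-for-orientation} and the extension condition of \cref{def:extension-of-O1-O2-P11-P12-P21-P22} require. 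Once that combinatorial bookkeeping is in place, the contradictions above close the induction and establish (c), completing the proof.
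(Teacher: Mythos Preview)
Your outline for claims (a) and (b) is correct and matches the paper.  For (c), your overall strategy---show that an undirected edge $u-v\in O$ which becomes $u\rightarrow v$ in $M$ forces, via the extension clauses in \cref{def:extension-of-O1-O2-P11-P12-P21-P22}, the contradiction $u\rightarrow v\in O$---is also exactly the paper's strategy.  The difference is organizational: the paper first proves a \emph{stronger} invariant, \cref{obs:when-u-v-becomes-directed}, which classifies \emph{every} orientation produced in a UCC $\mathcal{C}$ (not only those with both endpoints in $V_O$) as arising from one of three patterns rooted at an $O$-edge $a\rightarrow b$: (1) $u\rightarrow v\in O$, (2) a single \cp{} from $(a,b)$ to $(u,v)$ in $\mathcal{C}$, or (3) a pair of \cp{}s from $(a,b)$ to $v$ and from $(v,u)$ to $a$.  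Once that is in hand, the proof of \cref{lem:O-is-an-induced-subgraph-of-M} is three lines: since $u-v\in O$ rules out (1), one of (2) or (3) holds, the endpoints $a,b$ land in $V_{O_a}$, the \cp{}s witness $P_{a1}$ or $P_{a2}$, and \cref{item-3-of-def:extension-of-O1-O2-P11-P12-P21-P22} gives $u\rightarrow v\in O$.

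The gap in your version is that invariants (I) and (II) only speak about edges with both endpoints in $V_{O_1}\cup V_{O_2}$, so your induction hypothesis gives you nothing about the trigger $x\rightarrow u$ when $x\notin V_{O_1}$.  The ``trace back'' you describe therefore has to carry its own induction, and that induction is not a simple chain: a trigger of type (ii) at some earlier step pulls in \emph{two} prior directed edges, possibly anchored at different $O$-edges, and you must show these can always be consolidated into the single-anchor forms (2) or (3).  That consolidation is the entire content of the (rather long) proof of \cref{obs:when-u-v-becomes-directed}.  In short: your plan is correct, but to make the trace-back rigorous you will end up re-proving \cref{obs:when-u-v-becomes-directed}; it is cleaner to state and prove that observation once, for all edges of $\mathcal{C}$, and then invoke it.
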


\begin{lemma}
    \label{lem:M-is-an-MEC}
    $M$ is an MEC of $H$.
\end{lemma}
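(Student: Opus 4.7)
My plan is to apply \Cref{thm:nes-and-suf-cond-for-chordal-graph-to-be-an-MEC}, verifying that $\skel{M} = H$ together with its four combinatorial conditions. The skeleton identity is immediate from the initialization $M \leftarrow U_M(M_1,M_2,O)$: since $\skel{M_i} = H_i$ and $\skel{O}$ is a subgraph of $H$, we have $\skel{U_M(M_1,M_2,O)} = H_1 \cup H_2 = H$, and subsequent updates only replace undirected edges by directed ones.

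For the chain-graph property (item~\ref{item-1-theorem-nec-suf-cond-for-MEC}) I would observe that within each \ucc{} $\mathcal{C}$ processed by the outer loop, every orientation added respects the LBFS ordering $\tau$, which is consistent with $O$ by \Cref{lemma:alg:LBFS-with-O-returns-LBFS-ordering-and-consistent-with-O}; hence no directed cycle can live inside $\mathcal{C}$. Any directed cycle of $M$ would therefore have to cross distinct \ucc{}s, and unfolding it along the initialization yields a pair of \tfp{}s in $M_1$, $M_2$, or $O$ forbidden either by \Cref{lem:nes-tfp-cond-for-an-MEC} or by the validity of $\EPF{O,P_{11},P_{12},P_{21},P_{22}}$ demanded by item~\ref{item-4-of-def:extension-of-O1-O2-P11-P12-P21-P22} of \Cref{def:extension-of-O1-O2-P11-P12-P21-P22}. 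Chordality of \ucc{}s (item~\ref{item-2-theorem-nec-suf-cond-for-MEC}) is inherited because each \ucc{} of $M$ is an induced subgraph of some \ucc{} of $M_1$ or $M_2$ (the loop only refines undirected components), and chordality passes to induced subgraphs. The absence of an induced $a \to b - c$ (item~\ref{item-3-theorem-nec-suf-cond-for-MEC}) is argued by contradiction: such a configuration would force $b-c$ to survive inside its \ucc{} $\mathcal{C}$, yet when the inner loop reaches the indices $\tau(c)$ and $\tau(b)$ the hypothesis $a \to b \in M$ together with the non-adjacency of $a$ and $c$ would fire the first clause of line~\ref{algconstructMEC:constraint-for-orientation} and orient $b-c$ into $b \to c$.

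The main obstacle is item~\ref{item-4-theorem-nec-suf-cond-for-MEC}: every directed edge $u \to v$ of $M$ must be strongly protected in $M$. I partition the directed edges by how they entered $M$. An edge created by the first clause of line~\ref{algconstructMEC:constraint-for-orientation} carries an immediate witness $x \to u \to v$ with $x,v$ non-adjacent, matching configuration~(a) of \Cref{fig:strongly-protected-edge}; an edge created by the second clause carries a triangle witness $u \to x \to v \leftarrow u$, matching configuration~(c). Edges present after the initialization $U_M(M_1,M_2,O)$ must inherit a witness from $M_1$, $M_2$, or $O$: strong protection in $M_i$ supplies a configuration from \Cref{fig:strongly-protected-edge} whose vertices all lie in $V_{M_i}$, and since the algorithm only adds orientations and never reverses edges, the adjacencies and non-adjacencies required by the witness are preserved in $M$; witnesses originating in $O$ live entirely within $V_O$, where $M$ coincides with $O$ by \Cref{lem:O-is-an-induced-subgraph-of-M}. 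The delicate point is that a witness of type~(d) in $M_i$ which relied on undirected edges may have those edges reoriented by the outer loop; but since reorientation is driven by the Meek-like rules at line~\ref{algconstructMEC:constraint-for-orientation} and respects the LBFS ordering, the new configuration is still one of~(a)--(d) of \Cref{fig:strongly-protected-edge}, preserving strong protection of $u \to v$. Combining the five conclusions with \Cref{thm:nes-and-suf-cond-for-chordal-graph-to-be-an-MEC} yields $M \in \setofMECs{H}$.
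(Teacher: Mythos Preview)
Your high-level decomposition matches the paper's: verify $\skel{M}=H$ and then check the four items of \Cref{thm:nes-and-suf-cond-for-chordal-graph-to-be-an-MEC}. However, two of your four verifications contain genuine gaps.

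\textbf{Chordality.} Your claim that ``each \ucc{} of $M$ is an induced subgraph of some \ucc{} of $M_1$ or $M_2$'' is false. An \ucc{} of $M$ can straddle the separator $I$: for instance, if $a\in V_{H_1}\setminus I$, $b\in I$, $c\in V_{H_2}\setminus I$ and $a-b$, $b-c$ are both undirected in $M$, then $\{a,b,c\}$ lies in a single \ucc{} of $M$ that is contained in neither $V_{M_1}$ nor $V_{M_2}$. The paper instead argues indirectly: once the chain-graph property is established, any hypothetical undirected chordless cycle in an \ucc{} of $M$ is either chordless in $M$ itself (ruled out by a separate lemma showing $M$ has no chordless cycle of length $\geq 4$, whose proof crucially uses the validity of $\EPF{O,P_{11},P_{12},P_{21},P_{22}}$), or acquires a directed chord in $M$, producing a directed cycle.

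\textbf{Strong protection for edges coming from $O$.} You write that ``witnesses originating in $O$ live entirely within $V_O$, where $M$ coincides with $O$.'' But $O$ is only a partial MEC, so a directed edge $u\to v\in O$ need not be strongly protected in $O$ at all; there may be no witness in $O$ to inherit. The paper closes this gap via \cref{item-3-of-def:extension-of-O1-O2-P11-P12-P21-P22} of \Cref{def:extension-of-O1-O2-P11-P12-P21-P22}: for $u-v\in O_a$ with $u\to v\in O$, the extension hypothesis forces either strong protection in $O$, or a \tfp{} certificate $P_{a1}((x,y),(u,v))=1$, or the pair $P_{a2}((x,y),v)=P_{a2}((v,u),x)=1$; the latter two are then converted, via \Cref{item-2-of-obs:when-u-v-is-directed-reverse} and the LBFS ordering, into a configuration of type~(a) or~(c) in $M$. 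Your sketches for the chain-graph and $a\to b-c$ items are also too thin (the first clause of line~\ref{algconstructMEC:constraint-for-orientation} requires $\tau(a)<\tau(b)$ and $a$ to lie in the same \ucc{} of $M_i$, neither of which you justify), but the two issues above are the decisive ones.
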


\begin{lemma}
    \label{lem:M1-and-M2-are-projections-of-M}
    $\mathcal{P}(M, V_{M_1}, V_{M_2}) = (M_1, M_2)$.
\end{lemma}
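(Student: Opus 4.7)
The plan is to reduce the claim $\mathcal{P}(M, V_{H_1}, V_{H_2}) = (M_1, M_2)$, via \cref{def:projection} and the uniqueness part of \cref{lem:projection-of-an-MEC-is-unique}, to verifying $\mathcal{V}(M_a) = \mathcal{V}(M[V_{H_a}])$ for each $a \in \{1,2\}$, using also that $M$ is an MEC of $H$ by \cref{lem:M-is-an-MEC}. I would argue the two inclusions separately.

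For $\mathcal{V}(M_a) \subseteq \mathcal{V}(M[V_{H_a}])$, I would first check that $M_1, M_2, O$ are pairwise synchronous so that the Markov union on line~\ref{algconstructMEC:M-initialization} is well defined. If an orientation conflict $x \rightarrow y$ versus $y \rightarrow x$ existed across two of these graphs, the extension property \cref{item-1-of-def:extension-of-O1-O2-P11-P12-P21-P22} of \cref{def:extension-of-O1-O2-P11-P12-P21-P22}, combined with the fact that $O_a = M_a[V_{O_a}]$ is the shadow of $M_a$, the containment $V_O \cap V_{M_a} \subseteq V_{O_a}$, and $I \subseteq V_{O_1} \cap V_{O_2}$, would place both orientations in the chain graph $O$, which is impossible. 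Given a v-structure $u \rightarrow v \leftarrow w \in M_a$, the Markov union then places $u \rightarrow v$ and $w \rightarrow v$ in the initial $M$, and the orientation loop only turns undirected edges into directed ones, so these directions survive; the non-adjacency of $u, w$ in $M_a$ (hence in $H_a$, $H$, and $M$) then completes the v-structure in $M[V_{H_a}]$.

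The reverse inclusion $\mathcal{V}(M[V_{H_a}]) \subseteq \mathcal{V}(M_a)$ is where the real work lies. Given $u \rightarrow v \leftarrow w$ in $M[V_{H_a}]$, I would enumerate the induced configurations of $\{u, v, w\}$ allowed in the partial MEC $M_a$: the v-structure itself, the two directed paths through $v$, the fork at $v$, the fully undirected path $u - v - w$, and the two half-directed patterns $v \rightarrow u, v - w$ and $u - v, v \rightarrow w$; all other patterns are excluded by the no-$x \rightarrow y - z$ condition of \cref{def:partial-MEC}. Every configuration other than the v-structure and the fully undirected path forces a directed edge in $M_a$ whose orientation is transferred into $M$ by the Markov union and never reversed, immediately conflicting with $u \rightarrow v \leftarrow w \in M$.

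The main obstacle is the remaining case $u - v - w \in M_a$. Here $u, v, w$ lie in a single undirected chordal component $\mathcal{C}$ of $M_a$, and I would invoke the LBFS ordering $\tau$ of $\mathcal{C}$ computed on line~\ref{algconstructMEC:contructLBFS}. Since the rule on line~\ref{algconstructMEC:add-orientation} only orients edges from lower to higher $\tau$-rank, the direction $w \rightarrow v$ cannot have been added by the loop when $\tau(w) > \tau(v)$; it must come from the Markov union, i.e., from $M_{3-a}$ (in which case $v, w \in I \subseteq V_{O_{3-a}}$ gives $w \rightarrow v \in O_{3-a}$, so $w \rightarrow v \in O$ by \cref{item-1-of-def:extension-of-O1-O2-P11-P12-P21-P22}) or directly from $O$. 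Either way \cref{corr:directed-edges-of-O-follow-tau} forces $\tau(w) < \tau(v)$, contradicting $\tau(w) > \tau(v)$; the symmetric argument yields $\tau(u) < \tau(v)$ as well. But then \cref{obs:LBGS-gives-PEO} (the PEO property of LBFS orderings on chordal graphs) forces $u$ and $w$ to be adjacent in $\mathcal{C}$, contradicting their non-adjacency and closing the last case.
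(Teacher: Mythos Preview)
Your proof is correct and reaches the same reduction to $\mathcal{V}(M_a) = \mathcal{V}(M[V_{H_a}])$ as the paper, but your treatment of the residual case $u - v - w \in M_a$ takes a different route. The paper disposes of this case in one line via \cref{item-2-of-obs:when-u-v-is-directed-reverse}: since $(u,v,w)$ is a chordless path in $\mathcal{C}$ and $u \rightarrow v \in M$, propagation forces $v \rightarrow w \in M$, contradicting $w \rightarrow v \in M$. You instead argue that both $\tau(u) < \tau(v)$ and $\tau(w) < \tau(v)$, then invoke the PEO property (\cref{obs:LBGS-gives-PEO}) to force $u$ and $w$ to be adjacent in $\mathcal{C}$, contradicting the v-structure. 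Both arguments are short and valid; the paper's is marginally more direct because it reuses the propagation observation already in play, while yours is a nice self-contained use of the PEO structure.

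One technical point to tighten: your inline derivation of $\tau(w) < \tau(v)$ assumes that if $w \rightarrow v$ was not produced by the orientation loop for $\mathcal{C}$, it must already be present in $U_M(M_1, M_2, O)$. This skips the possibility that $w - v$ is also an undirected edge of some component $\mathcal{C}'$ of $M_{3-a}$ and gets oriented while processing $\mathcal{C}'$, under a different LBFS ordering. The paper closes this loophole with \cref{corr:an-edge-of-an-ucc-gets-directed-in-M-either-at-initiazation-or-in-the-iteration-when-ucc-is-considered} (any such edge would then have to lie in $O$), and packages the conclusion you need as \cref{obs:directed-edge-respects-tau}. Citing that observation directly would replace your case analysis on the provenance of $w \rightarrow v$ and make the step airtight.
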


\begin{lemma}
    \label{lem:uniqueness-of-M}
    If there exists an MEC $M'$ of $H$ such that $M'[V_O] = O$, and $\mathcal{P}(M', V_{M_1}, V_{M_2}) = (M_1, M_2)$ then $M' = M$.
\end{lemma}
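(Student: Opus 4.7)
My plan is to reduce the uniqueness of $M$ to showing that $M$ and $M'$ have the same skeleton and the same set of v-structures; since (as recalled in the MEC preliminaries and via Meek's rules) an MEC is uniquely determined by its skeleton together with its v-structures, this will give $M=M'$. Both $M$ and $M'$ are MECs of $H$, so $\skel{M}=H=\skel{M'}$ automatically, and it remains to prove $\mathcal{V}(M)=\mathcal{V}(M')$. Note also that $V_{M_a}=V_{H_a}$ since $\skeleton{M_a}=H_a$.

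The key structural observation I would establish first is that every v-structure $a\rightarrow b\leftarrow c$ of $M$ (and likewise of $M'$) has all three vertices lying entirely inside $V_{H_1}$, entirely inside $V_{H_2}$, or entirely inside $V_O$. I prove this by a case analysis on where $b$ lies. Since $H=H_1\cup H_2$, each edge of $H$ lives in $H_1$ or in $H_2$, and $I=V_{H_1}\cap V_{H_2}$ separates $V_{H_1}\setminus I$ from $V_{H_2}\setminus I$; hence if $b\in V_{H_i}\setminus I$ then every neighbor of $b$ in $H$ lies in $V_{H_i}$, forcing $a,c\in V_{H_i}$. If instead $b\in I$, then $b\in S_1\cup S_2\subseteq V_O$ and $a,c\in N(b,H)\subseteq N(S_1\cup S_2,H)\subseteq V_O$, so all three vertices lie in $V_O$.

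From this localization I get $\mathcal{V}(M)=\mathcal{V}(M[V_{H_1}])\cup \mathcal{V}(M[V_{H_2}])\cup \mathcal{V}(M[V_O])$, and the analogous identity for $M'$. The projection hypothesis $\mathcal{P}(M,V_{H_1},V_{H_2})=(M_1,M_2)=\mathcal{P}(M',V_{H_1},V_{H_2})$ together with \cref{def:projection} yields $\mathcal{V}(M[V_{H_a}])=\mathcal{V}(M_a)=\mathcal{V}(M'[V_{H_a}])$ for $a\in\{1,2\}$, and $M[V_O]=O=M'[V_O]$ yields $\mathcal{V}(M[V_O])=\mathcal{V}(O)=\mathcal{V}(M'[V_O])$. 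Combining gives $\mathcal{V}(M)=\mathcal{V}(M')$ and hence $M=M'$. The only delicate step is the case analysis on $b$: the main thing to check carefully is that when $b\in I$ both endpoints $a,c$ are forced into $V_O$, which relies on the specific definition $V_O=S_1\cup S_2\cup N(S_1\cup S_2,H)$ and on $I\subseteq S_1\cap S_2$. Everything else is bookkeeping that unpacks the definition of projection.
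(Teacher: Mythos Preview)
Your proposal is correct and follows essentially the same approach as the paper: both reduce to showing $\mathcal{V}(M)=\mathcal{V}(M')$ via a localization argument on the center vertex of a v-structure, using the separator property to force every v-structure into $V_{H_1}$, $V_{H_2}$, or $V_O$, and then invoking the projection hypothesis and $M[V_O]=O=M'[V_O]$. Your case split on $b$ is a slightly cleaner packaging of the paper's four-case enumeration, but the content is identical.
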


\begin{lemma}
    \label{lem:derived-path-function-is-a-part-of-shadow-of-M}
    If $\EPF{O, P_{11}, P_{12}, P_{21}, P_{22}} = (P_1, P_2)$ then $M \in \setofMECs{H, O, P_1, P_2}$.
\end{lemma}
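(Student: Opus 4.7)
The plan is to reduce this lemma to the already-established \cref{lem:P1-P2-is-DPF} together with the preceding lemmas in the chain. Concretely, define $(P_1^\star, P_2^\star)$ to be the pair of functions for which $(O, P_1^\star, P_2^\star)$ is the shadow of $M$ on $V_O$ in the sense of \cref{def:shadow}; that is, $P_1^\star((u,v),(x,y)) = 1$ exactly when $(u,v) \neq (x,y)$ and a \tfp{} from $(u,v)$ to $(x,y)$ exists in $M$, and analogously for $P_2^\star$. These are well-defined precisely because \cref{lem:O-is-an-induced-subgraph-of-M} gives $M[V_O] = O$, which is condition \cref{item-1-of-def:shadow} of \cref{def:shadow}. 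The goal is then to show $(P_1, P_2) = (P_1^\star, P_2^\star)$, after which membership $M \in \setofMECs{H, O, P_1, P_2}$ is immediate from \cref{def:subsets-of-MEC-based-on-O-P1-and-P2}.

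To identify these two pairs, I would verify that all hypotheses of \cref{lem:P1-P2-is-DPF} hold with $M$ in the role of the MEC on $H = H_1 \cup H_2$ and with $V_O = Y = S_1 \cup S_2 \cup N(S_1 \cup S_2, H)$. The required facts are exactly what the preceding lemmas supply: \cref{lem:M-is-an-MEC} gives that $M$ is an MEC of $H$; \cref{lem:M1-and-M2-are-projections-of-M} gives $\mathcal{P}(M, V_{H_1}, V_{H_2}) = (M_1, M_2)$; and the hypotheses of \cref{obs2:O-structure-for-existence-of-MEC} include that $(O_1, P_{11}, P_{12})$ and $(O_2, P_{21}, P_{22})$ are shadows of $M_1$ and $M_2$ on $Y_1 = S_1 \cup N(S_1, H_1)$ and $Y_2 = S_2 \cup N(S_2, H_2)$ respectively. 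By the definition of $(P_1^\star, P_2^\star)$ just given, $(O, P_1^\star, P_2^\star)$ is the shadow of $M$ on $Y$. \Cref{lem:P1-P2-is-DPF} therefore applies and yields
\[
  (P_1^\star, P_2^\star) \;=\; \EPF{O, P_{11}, P_{12}, P_{21}, P_{22}}.
\]

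But by hypothesis $(P_1, P_2) = \EPF{O, P_{11}, P_{12}, P_{21}, P_{22}}$ as well, so $(P_1, P_2) = (P_1^\star, P_2^\star)$. Combined with $M[V_O] = O$ from \cref{lem:O-is-an-induced-subgraph-of-M}, this shows that $(O, P_1, P_2)$ is literally the shadow of $M$ on $V_O$, which by \cref{def:subsets-of-MEC-based-on-O-P1-and-P2} means $M \in \setofMECs{H, O, P_1, P_2}$.

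There is no real obstacle here beyond bookkeeping: the content is entirely packaged inside \cref{lem:P1-P2-is-DPF} and the earlier \cref{lem:O-is-an-induced-subgraph-of-M,lem:M-is-an-MEC,lem:M1-and-M2-are-projections-of-M}. The only point worth double-checking while writing the proof is that the domain sets match — i.e., that $Y = V_O$ with the notation used in \cref{lem:P1-P2-is-DPF}, and that the shadows $(O_1, P_{11}, P_{12})$ and $(O_2, P_{21}, P_{22})$ of $M_1, M_2$ passed into \cref{obs2:O-structure-for-existence-of-MEC} are the same shadows used to define the DPF — both of which are true by the setup of the lemma.
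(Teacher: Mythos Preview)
Your proposal is correct and follows essentially the same approach as the paper's proof: define the actual shadow $(O, P_1^\star, P_2^\star)$ of $M$ on $V_O$, use \cref{lem:O-is-an-induced-subgraph-of-M} to identify the first component, and then invoke the ``DPF equals shadow'' direction to identify $(P_1^\star, P_2^\star)$ with the given $(P_1, P_2)$. The only cosmetic difference is that the paper cites \cref{obs:nes-conditions-of-the-shadow} (which packages \cref{lem:P1-P2-is-DPF}) rather than \cref{lem:P1-P2-is-DPF} directly, and does not explicitly re-state that \cref{lem:M-is-an-MEC} is needed as a hypothesis; the substance is identical.
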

\Cref{lem:M-is-an-MEC} shows that the graph returned by \cref{alg:constructMEC} is an MEC. \Cref{lem:O-is-an-induced-subgraph-of-M} shows that $O$ is an induced subgraph of $M$, i.e., $M[V_O] = O$, and \cref{lem:M1-and-M2-are-projections-of-M} shows that $M_1$ and $M_2$ are projections of $M$. \Cref{lem:uniqueness-of-M} shows that there exists a unique MEC with $O$ as its induced subgraph, and $M_1$ and $M_2$ as its projections. Additionally, \cref{lem:derived-path-function-is-a-part-of-shadow-of-M} shows that if $(P_1, P_2) = \EPF{O, P_{11}, P_{12}, P_{21}, P_{22}}$, then the shadow of $M$ on $V_O$ is $(O, P_1, P_2)$. This also implies that given $O, P_{11}, P_{12}, P_{21}$, and $P_{22}$, we can compute the shadow of $M$ on $V_O$.

Proofs of \cref{lem:O-is-an-induced-subgraph-of-M,lem:M-is-an-MEC,lem:M1-and-M2-are-projections-of-M,lem:uniqueness-of-M,lem:derived-path-function-is-a-part-of-shadow-of-M} are provided below. Given these lemmas, \cref{obs2:O-structure-for-existence-of-MEC} follows immediately.

\end{proof}

We now prove \cref{lem:O-is-an-induced-subgraph-of-M}.
In preparation for proving
\cref{lem:O-is-an-induced-subgraph-of-M}, we first prove the following observation.
The proof of
\cref{lem:O-is-an-induced-subgraph-of-M} appears on
page~\pageref{proof-of-obs:edges-of-subMECs-does-not-change-its-orientation}.

\begin{observation}
\label{obs:when-u-v-becomes-directed}
For $i\in \{1,2\}$,  for any \ucc{} $\mathcal{C}$ of $M_i$, at line-\ref{algconstructMEC:contructLBFS} of \cref{alg:constructMEC}, let $\tau$ be the LBFS ordering returned by \cref{alg:LBFSwithO}. Then for $u-v \in \mathcal{C}$,   at any iteration of the while loop (lines \ref{algconstructMEC:while-start}-\ref{algconstructMEC:while-end}) of \cref{alg:constructMEC}, if $u\rightarrow v \in M$ then either of the following occurs:
\begin{enumerate}
    \item
    \label{item-obs:when-u-v-becomes-directed:u-v-is-in-O}
    $u\rightarrow v \in O$ \item
    \label{item-obs:when-u-v-becomes-directed:u-v-is-part-of-a-directed-cp}
    $\exists a\rightarrow b \in O$ such that there exists a \cp{}
    from $(a,b)$ to $(u,v)$
    in $\mathcal{C}$ such that $(a, b) \neq (u, v)$. \item 
    \label{item-obs:when-u-v-becomes-directed:u-v-is-part-of-a-directed-cycle}
    $\exists a\rightarrow b \in O$ such that there exist \cps{} $P$ and $Q$
    such that $P$ is a \cp{}
    from $(a,b)$ to $v$ in $\mathcal{C}$, and $Q$ is a \cp{} from $(v,u)$ to $a$ in $\mathcal{C}$, such that
    $b \neq v$ and $u \neq a$.  \end{enumerate}
\end{observation}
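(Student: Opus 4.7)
The plan is to prove the observation by induction on the order in which directed edges are introduced to $M$ during \cref{alg:constructMEC}, starting with the initialization at line~\ref{algconstructMEC:M-initialization} and then considering each execution of line~\ref{algconstructMEC:add-orientation} in turn. For the base case (right after line~\ref{algconstructMEC:M-initialization}), we have $M = U_M(M_1, M_2, O)$, and any directed edge $u \rightarrow v \in M$ with $u - v \in \mathcal{C}$ must get its orientation from $O$, $M_1$, or $M_2$ by \cref{def:Markov-union-of-graphs}; since $u - v$ is undirected in the \ucc{} $\mathcal{C}$ of $M_i$, the orientation cannot come from $M_i$; and if it came from $M_{3-i}$, then $u, v \in V_{M_i} \cap V_{M_{3-i}} = I \subseteq V_{O_{3-i}}$, so $u \rightarrow v \in O_{3-i}$, and then \cref{item-1-of-def:extension-of-O1-O2-P11-P12-P21-P22} of \cref{def:extension-of-O1-O2-P11-P12-P21-P22} gives $u \rightarrow v \in O$. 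Either way, case~\ref{item-obs:when-u-v-becomes-directed:u-v-is-in-O} holds at the base.

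For the inductive step, suppose $u_b - u_a$ is newly oriented as $u \rightarrow v$ (with $u = u_b, v = u_a$) because the predicate at line~\ref{algconstructMEC:constraint-for-orientation} fires. I will split on its two disjuncts. Call it Case~A if there exists $u_c$ with $\tau(u_c) < \tau(u)$ such that $u_c - u - v$ is an induced subgraph of $\mathcal{C}$ and $u_c \rightarrow u \in M$; call it Case~B if there exists $u_c$ with $\tau(u) < \tau(u_c) < \tau(v)$ such that $u - u_c - v - u$ is a triangle in $\mathcal{C}$ and $u \rightarrow u_c, u_c \rightarrow v \in M$. In Case~A, I will further sub-case on which of the three conclusions the induction hypothesis delivers for $u_c \rightarrow u$: if case~\ref{item-obs:when-u-v-becomes-directed:u-v-is-in-O}, then $(u_c, u, v)$ is itself a \cp{} of $\mathcal{C}$ (using $u_c - v \notin \mathcal{C}$), witnessing case~\ref{item-obs:when-u-v-becomes-directed:u-v-is-part-of-a-directed-cp} for $u \rightarrow v$; if case~\ref{item-obs:when-u-v-becomes-directed:u-v-is-part-of-a-directed-cp} with \cp{} $P$ from $(a, b)$ to $(u_c, u)$, then the extension $(P, v)$ is the candidate \cp{} from $(a, b)$ to $(u, v)$; if case~\ref{item-obs:when-u-v-becomes-directed:u-v-is-part-of-a-directed-cycle} with \cps{} $P$ and $Q$, then I extend $P$ by $v$ and prepend $v$ to $Q$ to get the two \cps{} needed for case~\ref{item-obs:when-u-v-becomes-directed:u-v-is-part-of-a-directed-cycle} of $u \rightarrow v$. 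In Case~B, each of $u \rightarrow u_c$ and $u_c \rightarrow v$ satisfies one of the three cases by induction, and I combine their witnesses through the triangle to derive a witness for $u \rightarrow v$.

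The main obstacle is verifying that each such extension or splice actually yields a \cp{}, and the key tool throughout is the chordality of $\mathcal{C}$ (together with \cref{obv:tfps-are-chordless-in-chordal-graphs}). The generic argument is: if an attempted extension fails to be a \cp{}, pick a maximal bad chord, build the associated cycle, and observe that it must be a chordless cycle of length at least four in $\mathcal{C}$, contradicting chordality. For instance, in Case~A sub-case~\ref{item-obs:when-u-v-becomes-directed:u-v-is-part-of-a-directed-cp}, if the extended path $(x_1, \ldots, x_k = u, v)$ had a chord $v - x_j$ for some $j < k - 1$, choosing the largest such $j$ would produce a chordless cycle $(x_j, x_{j+1}, \ldots, x_k, v, x_j)$ of length at least four, a contradiction. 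Case~B is where the hardest work will be, since combining the two sub-witnesses may force us to switch between conclusions (e.g., deriving case~\ref{item-obs:when-u-v-becomes-directed:u-v-is-part-of-a-directed-cycle} for $u \rightarrow v$ from a pair of case~\ref{item-obs:when-u-v-becomes-directed:u-v-is-part-of-a-directed-cp} witnesses) and requires carefully tracking where the triangle vertices $u, u_c, v$ sit in the inherited \cps{}; I expect the bulk of the proof to be this combinatorial case analysis.
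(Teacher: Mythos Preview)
Your overall strategy---induction on the order edges become directed, case analysis on the two disjuncts of line~\ref{algconstructMEC:constraint-for-orientation}, and sub-cases on which conclusion the induction hypothesis supplies---is exactly the paper's. But your generic ``pick a bad chord and exhibit a chordless cycle of length $\geq 4$'' tool is not strong enough, and this bites hardest in Case~B. Take the sub-case where the hypothesis gives a \cp{} $Q = (p_1 = a, \ldots, p_{l-1} = u, p_l = u_c)$ witnessing conclusion~\ref{item-obs:when-u-v-becomes-directed:u-v-is-part-of-a-directed-cp} for $u \rightarrow u_c$, and you want $(p_1, \ldots, p_{l-1}, v)$ to be a \cp{} from $(a,b)$ to $(u,v)$. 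You must rule out $v - p_{l-2} \in \mathcal{C}$; but the natural candidate cycle $(u_c, u, p_{l-2}, v, u_c)$ already carries the chord $u - v$, and indeed the induced subgraph on $\{p_{l-2}, u, u_c, v\}$ is $K_4$ minus one edge, which is chordal---so chordality of $\mathcal{C}$ alone cannot forbid $v - p_{l-2}$. The paper handles this (and several analogous spots in Case~B) by exploiting the LBFS ordering: \cref{corr:directed-chordless-path-and-LBFS-ordering} gives $\tau(p_{l-2}) < \tau(u_c) < \tau(v)$, and then the PEO property (\cref{obs:LBGS-gives-PEO}) applied at $v$ forces $p_{l-2} - u_c \in \mathcal{C}$, contradicting that $Q$ is chordless. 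The ordering $\tau$ supplies directional information you cannot extract from chordality alone, and your proposal never uses it beyond the inequalities handed to you by line~\ref{algconstructMEC:constraint-for-orientation}.

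There is also a structural gap: your induction does not address cross-component orientation. An edge $u - v$ with $u, v \in I$ lies simultaneously in a \ucc{} of $M_1$ and a \ucc{} of $M_2$, and may get oriented while the algorithm is processing the \emph{other} component $\mathcal{C}'$. Your inductive step then yields witnesses lying in $\mathcal{C}'$, not in the $\mathcal{C}$ of the statement. The paper closes this via a nested induction (outer over the sequence of components $\mathcal{C}_1, \mathcal{C}_2, \ldots$, inner over while-loop iterations): its inner base case shows that any edge of $\mathcal{C}_\alpha$ already directed before $\mathcal{C}_\alpha$ is processed must in fact satisfy $u \rightarrow v \in O$, by invoking the outer hypothesis for the earlier $\mathcal{C}_\gamma$ and then using \cref{item-3-of-def:extension-of-O1-O2-P11-P12-P21-P22} of \cref{def:extension-of-O1-O2-P11-P12-P21-P22} to transfer the witness across.
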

\begin{proof}
We start with the following observation:
\begin{observation}
    \label{obs:undirected-in-M_1-and-directed-in-M-implies-directed-in-O}
    For $i\in \{1,2\}$, if $u-v\in M_i$ and $u\rightarrow v \in U_M(M_1, M_2, O)$ then $u\rightarrow v \in O$.
\end{observation}
\begin{proof}
    From \cref{def:Markov-union-of-graphs},  if $u-v\in M_1$ and $u\rightarrow v \in U_M(M_1, M_2,O)$ then either $u\rightarrow v \in M_2$ or $u\rightarrow v \in O$.
    $u\rightarrow v \in M_2$ implies that $u,v \in V_{M_1} \cap V_{M_2} = I$.
    Since $(O_2, P_{21}, P_{22})$ is the shadow of $M_2$ on $V_{O_2}$, therefore, from \cref{item-1-of-def:shadow} of \cref{def:shadow}, $O_2 = M_2[V_{O_2}]$. This implies $u\rightarrow v \in O_2$ as from the construction,  $I \subseteq V_{O_2}$.
    Since $O\in \mathcal{E}(O_1, P_{11}, P_{12}, O_2, P_{21}, P_{22})$, therefore, from \cref{item-1-of-def:extension-of-O1-O2-P11-P12-P21-P22} of \cref{def:extension-of-O1-O2-P11-P12-P21-P22}, we have $u\rightarrow v \in O$.
    Thus, we can say that if $u-v\in M_1$ and $u\rightarrow v \in U_M(M_1, M_2,O)$ then $u\rightarrow v \in O$.
    Similarly, if $u-v\in M_2$ and $u\rightarrow v \in U_M(M_1, M_2,O)$ then $u\rightarrow v \in O$. This completes the proof of \cref{obs:undirected-in-M_1-and-directed-in-M-implies-directed-in-O}.
  \end{proof}

  Let us number the \uccs{} of $M_1$ and $M_2$ 
in the sequence in which they are considered on
  line~\ref{algconstructMEC:foreachUCC} as
  $\mathcal{C}_1, \mathcal{C}_{2}, \dots$.  Suppose that
  \cref{obs:when-u-v-becomes-directed} is true for all $C_{\beta}$ with
  $\beta < \alpha$. We now prove by induction on the iteration of the while loop
  (lines \ref{algconstructMEC:while-start}-\ref{algconstructMEC:while-end}) of
  \cref{alg:constructMEC} corresponding to the connected component
  $\mathcal{C}_{\alpha}$ that the \cref{obs:when-u-v-becomes-directed} is true
  for $\mathcal{C} = \mathcal{C}_{\alpha}$ as well.
\paragraph{Base case:} We first show that at the start of the first iteration
  of the while loop corresponding to $\mathcal{C} = \mathcal{C}_{\alpha}$, it
  must be the case that for any edge $u-v \in \mathcal{C}$, if
  $u\rightarrow v \in M$ then $u \rightarrow v \in O$.  This will establish
  that \cref{obs:when-u-v-becomes-directed} holds at the start of the first
  iteration of the while loop corresponding to the \uccc{}
  $\mathcal{C} = \mathcal{C}_{\alpha}$.

  Suppose, if possible, at the start of the first iteration of the while loop,  there exists
  $u - v \in \mathcal{C}$ such that $u\rightarrow v \in M$. There are two possibilities: either $u\rightarrow v$ is added at the initialization step,  i.e., at line \ref{algconstructMEC:M-initialization}  of \cref{alg:constructMEC}, or $u \rightarrow v$ is added to $M$ during the processing of
  some $\mathcal{C}_{\gamma}$ with $\gamma < \alpha$ at lines \ref{algconstructMEC:foreachMi}-\ref{algconstructMEC:foreachMi-end} of \cref{alg:constructMEC}. 
At the initialization step, $M = U_M(M_1, M_2, O)$. $\mathcal{C}$ is an \ucc{} of $M_i$ for some $i \in \{1,2\}$. This implies $u-v \in M_i$. Then from \cref{obs:undirected-in-M_1-and-directed-in-M-implies-directed-in-O}, if $u\rightarrow v \in U_M(M_1, M_2, O)$ then $u\rightarrow v \in O$. This validates \cref{obs:when-u-v-becomes-directed}. We now go through the other possibility that $u \rightarrow v$ has been added to $M$ during the processing of some $\mathcal{C}_{\gamma}$ with $\gamma < \alpha$.

  W.l.o.g., let us assume that $C_{\alpha}$ is an \ucc{} of $M_1$. Then $C_{\gamma}$ cannot be an \ucc{} of $M_1$ (two distinct \uccs{} of a graph do not share any vertex). This implies $C_{\gamma}$ is an \ucc{} of $M_2$. This further implies $u-v$ is in both $M_1$ and $M_2$. That means $u,v \in V_{M_1}\cap V_{M_2} = I \subseteq V_O$. 
  Since  $u-v \in C_{\gamma}$, an \ucc{} of $M_2$, therefore,  $u-v \in M_2$. Also, since $u,v \in I \subseteq V_{O_2}$, and $O_2$ is an induced subgraph of $M_2$ (from \cref{item-1-of-def:shadow} of \cref{def:shadow} as $(O_2, P_{21}, P_{22})$ is a shadow of $M_2$), therefore, $u-v \in O_2$.
  Since we assumed that
  \cref{obs:when-u-v-becomes-directed} is true for such a $C_{\gamma}$, 
  one of  the three conditions (\cref{item-obs:when-u-v-becomes-directed:u-v-is-in-O,item-obs:when-u-v-becomes-directed:u-v-is-part-of-a-directed-cp,item-obs:when-u-v-becomes-directed:u-v-is-part-of-a-directed-cycle})
  of \cref{obs:when-u-v-becomes-directed} must be true for
  $u - v$, as $u-v$ is considered as an edge in $C_{\gamma}$.  
  
  If
  \cref{item-obs:when-u-v-becomes-directed:u-v-is-in-O} of \cref{obs:when-u-v-becomes-directed} holds then we get
  $u \rightarrow v \in O$.  
  If one of
  \cref{item-obs:when-u-v-becomes-directed:u-v-is-part-of-a-directed-cp} or
  \cref{item-obs:when-u-v-becomes-directed:u-v-is-part-of-a-directed-cycle} of \cref{obs:when-u-v-becomes-directed}
  holds, then in the \uccc{} $C_{\gamma}$ of $M_2$, there exists an edge
  $x-y \in \mathcal{C}_{\gamma}$ such that $x\rightarrow y \in O$, and either (a) $(x,y) \neq (u,v)$ and
  there exists a \cp{} from $(x,y)$ to $(u,v)$ in $\mathcal{C_{\gamma}}$ (due to \cref{item-obs:when-u-v-becomes-directed:u-v-is-part-of-a-directed-cp} of \cref{obs:when-u-v-becomes-directed}),
  or (b) $y\neq v$, $u\neq x$, and there
  exist \cps{} $Q_1$ and $Q_2$ in $C_{\gamma}$ such that $Q_1$ is a \cp{} from
  $(x,y)$ to $v$, and $Q_2$ is a \cp{} from $(v,u)$ to $x$ (due to  \cref{item-obs:when-u-v-becomes-directed:u-v-is-part-of-a-directed-cycle} of \cref{obs:when-u-v-becomes-directed}). 
  Since $C_{\gamma}$ is an \ucc{} of $M_2$, therefore, a \cp{} in $C_{\gamma}$ is a \cp{} of $M_2$. And, from \cref{obs:cp-is-tfp}, every \cp{} is a \tfp{}. This implies \cps{} in  $C_{\gamma}$ are \tfps{} of $M_2$. 
  Also, since $(x,y) \in C_{\gamma}$, an \ucc{} of $M_2$, and $x\rightarrow y \in O$, this implies $x,y \in V_{M_2} \cap V_O = V_{O_2}$. Since $(O_2, P_{21}, P_{22})$ is the shadow of $M_2$ on $V_{O_2}$, from \cref{def:shadow}, $O_2$ is an induced subgraph of $M_2$. This implies $(x,y)\in E_{O_2}$, and if there exists a \cp{} in $C_{\gamma}$, from $(x,y)$ to $(u,v)$, then $P_{21}((x,y),(u,v)) =1$, and if there exist \cps{} in $C_{\gamma}$, from $(x,y)$ to $v$ and from $(v,u)$ to $x$ then $P_{22}((x,y),v) = P_{22}((v,u),x) = 1$. 
  Since $O \in \mathcal{E}(O_1, P_{11}, P_{12}, O_2, P_{21}, P_{22})$, therefore, in both the
  possibilities, from \cref{item-3-of-def:extension-of-O1-O2-P11-P12-P21-P22} of
  \cref{def:extension-of-O1-O2-P11-P12-P21-P22}, we again have
  $u\rightarrow v \in O$. This shows that for any $u-v \in C_{\alpha}$,  when $C_{\alpha}$ is processed at lines \ref{algconstructMEC:foreachUCC}-\ref{algconstructMEC:foreachUCC-end} of \cref{alg:constructMEC}, at the start of the first iteration of the while loop at lines \ref{algconstructMEC:while-start}-\ref{algconstructMEC:while-end}, if $u\rightarrow v \in M$  then $u\rightarrow v \in O$.  This establishes the base case of the induction.

  \paragraph{Induction step:} Suppose now that at the start of an iteration of
  the while loop (at lines \ref{algconstructMEC:while-start}-\ref{algconstructMEC:while-end} of \cref{alg:constructMEC}),
  \cref{obs:when-u-v-becomes-directed} is valid, that means, at the start of the iteration, for any edge $u'-v' \in \mathcal{C}$, if $u'\rightarrow v' \in M$ then $u'\rightarrow v'$ follows \cref{obs:when-u-v-becomes-directed}. We show that even at the end of the iteration, \cref{obs:when-u-v-becomes-directed} is valid, i.e., for any $u-v \in \mathcal{C}$, during the iteration, if $u\rightarrow v$ is added to $M$ then the new
  directed edge $u\rightarrow v$ obeys
  \cref{obs:when-u-v-becomes-directed}. During the iteration, let there exists an edge
  $u-v \in \mathcal{C}$ such that $u\rightarrow v$ is added to $M$ in the while loop while processing $\mathcal{C} = \mathcal{C_{\alpha}}$.  We show that $u\rightarrow v$ obeys \cref{obs:when-u-v-becomes-directed}. 
  Let $\tau$ be the LBFS ordering of $\mathcal{C}$ returned by \cref{alg:LBFSwithO} at line \ref{algconstructMEC:contructLBFS} of \cref{alg:constructMEC} for the input $\mathcal{C}$ and $O$.
  From  lines \ref{algconstructMEC:constraint-for-orientation}-\ref{algconstructMEC:if-end} of  \cref{alg:constructMEC}, if $u\rightarrow v$ is added to $M$ then at the start of the iteration, either of the
  following has occurred: (a) there exists an induced subgraph
  $x-u-v \in \mathcal{C}$ such that $\tau(x)<\tau(u)<\tau(v)$, and
  $x\rightarrow u -v \in M$, or (b) there exists an induced subgraph
  $u-x-v-u \in \mathcal{C}$ such that $\tau(u) < \tau(x) < \tau(v)$,
  $u\rightarrow x, x\rightarrow v \in M$ and $u-v \in M$. We go through each
  case and show that in each case $u\rightarrow v$ obeys
  \cref{obs:when-u-v-becomes-directed}.

  \begin{enumerate}
  \label{proof-of-obs:when-u-v-becomes-directed-1.a}
      \item Suppose that during the iteration, $u-v \in M$ is replaced
  with $u\rightarrow v$ because there exists an induced subgraph
  $x-u-v \in \mathcal{C}$ such that $\tau(x)<\tau(u)<\tau(v)$, and at the start
  of the iteration, $x\rightarrow u-v \in M$.  By the induction hypothesis, we
  have that $x\rightarrow u \in M$ because one of the following holds:
(i) $x\rightarrow u \in O$, or 
  (ii) $\exists a\rightarrow b \in O$ such that there exists a \cp{}
  $(p_1 = a, p_2 = b, \ldots, p_{l-1} = x, p_{l} = u)$ from $a-b$ to $x-u$ in
  $\mathcal{C}$ (with $(a, b) \neq (x, u)$), or
(iii) $\exists a\rightarrow b \in O$ such that there exists a \cp{} from $a-b$
  to $u$, and a \cp{} from $u-x$ to $a$ in $\mathcal{C}$ (with $b \neq u$ and
  $x \neq a$). We start with the first possibility.

  \begin{enumerate}
      \item Suppose that $x\rightarrow u \in M$ because 
$x\rightarrow u \in O$. Then the new
directed edge $u\rightarrow v$ obeys
\cref{item-obs:when-u-v-becomes-directed:u-v-is-part-of-a-directed-cp} of
\cref{obs:when-u-v-becomes-directed}, due to the existence of the \cp{}
$(x, u, v)$ in $\mathcal{C}$ and $x\rightarrow u \in O$. We now go through the second possibility.

\item 
Suppose that $x\rightarrow u \in M$ because $\exists a\rightarrow b \in O$ such that there exists a \cp{} $Q = (p_1 =a, p_2 =b, \ldots, p_{l-1} =x, p_l =u)$ from $a-b$ to $x-u$ in $\mathcal{C}$ (with $(a, b) \neq (x, u)$).
By the hypothesis of this case, $Q' = (x,u,v)$ is a \cp{} in $\mathcal{C}$ (remember in this case, $x-u-v$ is an induced subgraph of $\mathcal{C}$). Then, from \cref{lem:concatenation-of-chordal-paths}, $P =  (p_1 =a, p_2 =b, \ldots, p_{l-1} =x, p_l =u, v)$, concatenation of $Q$ and $Q'$, is a \cp{} in $\mathcal{C}$.

Thus, we have a \cp{} $P$ in $\mathcal{C}$ from $(a,b)$ to $(u,v)$ with $(a, b) \neq (u, v)$ ($Q$ being a path precludes $a = u$) and $a\rightarrow b \in O$. This further implies that the new directed edge $u\rightarrow v$ obeys \cref{item-obs:when-u-v-becomes-directed:u-v-is-part-of-a-directed-cp} of \cref{obs:when-u-v-becomes-directed}.  We now move to the third possibility.

\item 
Suppose that $x\rightarrow u \in M$ because $\exists a\rightarrow b \in O$ such that there exists a \cp{} $Q_1 = (p_1 =a, p_2 =b, \ldots, p_{l} =u)$ from $(a,b)$ to $u$ in $\mathcal{C}$, and a \cp{} $Q_2 = (q_1=u, q_2=x, \ldots, q_m=a)$ from $(u,x)$ to $a$ in $\mathcal{C}$ (with $b \neq u$ and $x \neq a$).

Since $Q_1$ is a \cp{} and $a\rightarrow b \in O$, according to \cref{corr:directed-chordless-path-and-LBFS-ordering}, we have $\tau(a) = \tau(p_1) < \tau(p_{2}) < \tau(p_{3}) < \dots < \tau(p_l) = \tau(u)$.

Now, we claim that
\begin{equation}
\label{claim:v-pi-not-in-C}
\text{For $1 \leq i\leq l-2$, $v-p_i \notin \mathcal{C}$.}
\end{equation}

To see this, suppose, if possible, that for some $i\leq l-2$, $v-p_i \in \mathcal{C}$. As argued above, $\tau(p_i) < \tau(u)$, and by the hypothesis for this case, we have $\tau(u) < \tau(v)$. Then, from \cref{obs:LBGS-gives-PEO}, we must have $p_i- u \in \mathcal{C}$. Since $i \leq l - 2$, this constitutes a chord of $Q_1$ and contradicts the fact that $Q_1$ is a \cp{}. This establishes \cref{claim:v-pi-not-in-C}.

Due to \cref{claim:v-pi-not-in-C}, only the following two options are now allowed: either (i) for each $1 \leq i\leq l-1$, $v-p_i \notin \mathcal{C}$, or (ii) $v-p_{l-1} \in \mathcal{C}$ and for each $1 \leq i\leq l-2$, $v-p_i \notin \mathcal{C}$.

If the first option holds, then $(p_1=a, p_2 = b, \ldots, p_{l-1}, p_l=u, v)$ is a \cp{} (because $Q_1$ is a \cp{}) from $(a, b)$ to $(u, v)$ with $(a, b) \neq (u, v)$ ($a = u$ is precluded by $Q_1$ being a path), and this implies that $u\rightarrow v$ obeys \cref{item-obs:when-u-v-becomes-directed:u-v-is-part-of-a-directed-cp} of \cref{obs:when-u-v-becomes-directed}.

In case the second option holds, $Q_3 = (p_1=a, p_2 = b, \ldots, p_{l-1}, v)$ is a \cp{} from $(a,b)$ to $v$. Since $\tau(p_2) = \tau(b) < \tau(u)$ (as argued above) and $\tau(u) < \tau(v)$ (by the hypothesis for Case 1), we also have $b \neq v$. 

Now, we claim that
\begin{equation}
  \label{claim:Q3-is-a-cp}
  Q_4 = (v, q_1 = u, q_2 =x, \ldots, q_{m} = a) \text{ is a \cp{} in } \mathcal{C} \text{ from $(v, u)$ to $a$}.
\end{equation}

By the hypothesis of this case, $Q_2' = (v, u, x)$ is a \cp{} in $\mathcal{C}$ (remember $x-u-v$ is an induced subgraph of $\mathcal{C}$ in this case). From \cref{lem:concatenation-of-chordal-paths}, $Q_4 = (v, q_1 = u, q_2 = x, q_3, \ldots, q_m = a)$, concatenation of $Q_2'$ and $Q_2$, is a \cp{} in $\mathcal{C}$. Also, $Q_2$ being a path precludes $u = a$. 

We thus obtain a \cp{} $Q_4$ from $(v,u)$ to $a$ (with $u \neq a$). Along with the \cp{} $Q_3$ from $(a, b)$ to $v$ (with $b \neq v$ and $a\rightarrow b \in O$), this implies that $u\rightarrow v$ obeys \cref{item-obs:when-u-v-becomes-directed:u-v-is-part-of-a-directed-cycle} of \cref{obs:when-u-v-becomes-directed}.

  \end{enumerate}
  
We now move to the second case.

\item 
Suppose that during the iteration, $u-v \in M$ is replaced by $u\rightarrow v$ because there exists an induced subgraph $u-x-v-u \in \mathcal{C}$ such that $\tau(u) < \tau(x) < \tau(v)$, $u\rightarrow x, x\rightarrow v \in M$ and $u-v \in M$.  Again, by the induction hypothesis, we have that $u\rightarrow x \in M$ because one of the following holds: (i) $u\rightarrow x \in O$, or (ii) $\exists a\rightarrow b \in O$ such that there exists a \cp{} $(p_1=a, p_2 =b, \ldots, p_{l-1}=u, p_l=x)$ from $(a,b)$ to $(u,x)$ in $\mathcal{C}$ (with $(a, b) \neq (u, x)$), or (iii) $\exists a\rightarrow b \in O$ such that there exists a \cp{} from $(a,b)$ to $v$, and a \cp{} from $(v,x)$ to $a$ in $\mathcal{C}$ (with $b \neq v$ and $x \neq a$). We again go through each possibility.

\begin{enumerate}
    \item Suppose $u\rightarrow x \in M$ because
$u\rightarrow x \in O$. Again, by the induction hypothesis, we have that $x\rightarrow v \in M$
because one of the following holds: (i) $x\rightarrow v \in O$, or
(ii) $\exists c\rightarrow d \in O$ such
that there exists a \cp{} $(q_1 = c, q_2 = d, \ldots, q_{m-1} = x, q_{m} = v)$
from $(c,d)$ to $(x,v)$ in $\mathcal{C}$ (with $(c, d) \neq (x, v)$), or
(iii) $\exists c\rightarrow d \in O$ such that there exists a \cp{} from $(c,d)$
to $v$, and a \cp{} from $(v,x)$ to $c$ in $\mathcal{C}$ (with $d \neq v$ and
$x \neq c$).  We go through each sub-case.

\begin{enumerate}
    \item Let's assume that $x\rightarrow v \in M$ due to the fact that $x\rightarrow v \in O$. The presence of both $u\rightarrow x$ and $x\rightarrow v$ in $O$ implies that $u$, $v$, and $x$ are all elements of $V_O$. 

At the start of the iteration, we have $u-v \in M$. This implies even after the
initialization step (line-\ref{algconstructMEC:M-initialization}),  we have $u-v \in M$. From the construction of $M$, $\skel{M[V_O]} = \skel{O}$. 
That means $u-v \in \skel{O}$. Since $O$ is an extension of $(O_1, P_{11}, P_{12}, O_2, P_{21}, P_{22})$, by \cref{def:extension-of-O1-O2-P11-P12-P21-P22}, $O$ is a partial MEC. Then, from \cref{item-1-of-def:partial-MEC} of \cref{def:partial-MEC}, $O$ is a chain graph. This implies $u\rightarrow v \in O$ (or else $O$ would contain a directed cycle $(u,w,v,u)$). 

But, then, from the initialization of $M$ (remember $M$ is initialized with $U_M(M_1, M_2, O)$), $u\rightarrow v \in M$, a contradiction. This contradiction implies that this particular sub-case cannot occur. We will now proceed to examine the remaining sub-cases.

\item Let's assume that $x\rightarrow v \in M$ because there exists $c\rightarrow d \in O$ such that there is a chordless path $Q = (p_1 = c, p_2 = d, \ldots, p_{l-2}, p_{l-1} = x, p_{l} = v)$ from $(c,d)$ to $(x,v)$ in $\mathcal{C}$ (with $(c, d) \neq (x, v)$). It's worth noting that, due to the hypothesis of this case, $u$ and $v$ are adjacent, while $c$ and $v$ are not (since $Q$ is a chordless path). This implies $c \neq u$.

According to \cref{corr:directed-chordless-path-and-LBFS-ordering}, we have $\tau(c) = \tau(p_1) < \tau(p_2) < \ldots < \tau(p_{l-1}) = \tau(x) < \tau(p_l) = \tau(v)$. Since $\tau(p_{l-2}) < \tau(x)$, and $\tau(u) < \tau(x)$ (as per the hypothesis of this case), \cref{obs:LBGS-gives-PEO} implies that $u-p_{l-2} \in \mathcal{C}$ (since $x$ is adjacent to both $u$ and $p_{l-2}$ in $\mathcal{C}$).

Let $1 \leq i \leq l - 2$ be the smallest number such that $u-p_i \in \mathcal{C}$. If $i>1$, then, as $Q$ is a chordless path, $Q' = (p_1 = c, p_2 = d, \ldots, p_i, u, v)$ is a chordless path in $\mathcal{C}$ from $(c,d)$ to $(u,v)$ with $(c, d) \neq (u, v)$ (since $c \neq u$). Since $c\rightarrow d \in O$, this implies that $u\rightarrow v$ satisfies \cref{item-obs:when-u-v-becomes-directed:u-v-is-part-of-a-directed-cp} of \cref{obs:when-u-v-becomes-directed}.

If $i = 1$, then $Q_1 = (v,u,c)$ is a chordless path from $(v,u)$ to $c$ ($c$ is not adjacent to $v$ in $\mathcal{C}$ because $Q$ is a \cp{}). Since $Q= (p_1 = c, p_2 = d, \ldots, p_{l-1} = x, p_{l} = v)$ is a chordless path from $(c,d)$ to $v$ (with $d \neq v$) and $Q_1$ is a chordless path from $(v,u)$ to $c$ (with $c \neq u$), $u\rightarrow v$ satisfies \cref{item-obs:when-u-v-becomes-directed:u-v-is-part-of-a-directed-cycle} of \cref{obs:when-u-v-becomes-directed}.

Therefore, in this subcase, $u\rightarrow v$ complies with \cref{obs:when-u-v-becomes-directed}. We will now consider the remaining sub-case.

  \item 
    Suppose $x \rightarrow v \in M$ because $\exists c \rightarrow d \in O$ such that there exists a \cp{} $Q_1= (q_1 = c, q_2 =d, \ldots, q_m = v)$ from $(c,d)$ to $v$ in $\mathcal{C}$ (in which $d \neq v$), and a \cp{} $Q_2 = (p_1 = v, p_2 = x, \ldots, p_l=c)$ from $(v,x)$ to $c$ in $\mathcal{C}$ (in which $c \neq x$). Note that $c$ is not adjacent to $v$ (since $Q_1$ is a \cp{}), and $u$ is adjacent to $v$ (by the hypothesis of this case). We thus also have $u \neq c$.

We now claim that there exists a \cp{} $Q$ from $(v,u)$ to $c$ in $\mathcal{C}$. This further implies that $u \rightarrow v$ obeys \cref{item-obs:when-u-v-becomes-directed:u-v-is-part-of-a-directed-cycle} of \cref{obs:when-u-v-becomes-directed} due to the existence of \cps{} $Q_1$ from $(c,d)$ to $v$ (with $d \neq v$) and $Q$ from $(v,u)$ to $c$ (with $u \neq c$) such that $c \rightarrow d \in O$.

\begin{proof}[Proof of the claim]
    There are two possibilities: either $\tau(p_3) < \tau(p_2)$, or $\tau(p_3) > \tau(p_2)$. We go through both possibilities and show that in each possibility there exists a \cp{} from $(v,u)$ to $c$.

    Suppose $\tau(p_3) < \tau(p_2 = x)$. Since we also have $\tau(u) < \tau(x)$ (by the hypothesis of this case), and since $x$ is adjacent to both $u$ and $p_3$, \cref{obs:LBGS-gives-PEO} implies that $u-p_3 \in \mathcal{C}$. Pick the highest $i$ such that $u-p_i \in \mathcal{C}$. Since $u-p_3 \in \mathcal{C}$, $i$ must be greater than 2. Then, $Q = (v, u, p_i, \ldots, p_l = c)$ is a \cp{} from $(v,u)$ to $c$ (since $v$ is adjacent to $u$ and $Q_1$ is a \cp{}). We now move to the other possibility.

    Suppose $\tau(p_3) > \tau(p_2 =x)$. Since $Q_2$ is a \cp{} in $\mathcal{C}$, \cref{obs:chordless-path-and-LBFS-ordering} implies that $\tau(p_2 = x) < \tau(p_3) < \ldots < \tau(p_l = c)$, i.e., $\tau(x) < \tau(c)$. From the hypothesis of this case, $\tau(u) < \tau(x)$. This implies that $\tau(u) < \tau(c)$. We now claim that $u$ must be adjacent to every vertex in the \cp{} $Q_1 = (q_1 = c, q_2 = d, \ldots, q_m = v)$, and in particular to $c$. Since $Q_1$ is a \cp{} and $c \rightarrow d \in O$, therefore, from \cref{corr:directed-chordless-path-and-LBFS-ordering}, $\tau(q_1) = \tau(c) < \tau(q_2) = \tau(d) < \ldots < \tau(q_{m-1}) < \tau(q_m) = \tau(v)$. Since $\tau(u) < \tau(c)$, we have $\tau(u) < \tau(q_i) < \tau(q_{i+1})$ for every $1 \leq i \leq m-1$. Suppose if possible that there exists an $i$ such that $q_i-u \notin \mathcal{C}$, then pick the highest such $i$. Since $u$ is adjacent to $v = q_m$ in $\mathcal{C}$ (by the hypothesis of this case), we must have $i \leq m-1$. Further, by the choice of $i$, we must have that $u$ is adjacent to $q_{i+1}$. But, then, since $\tau(q_i) < \tau(q_{i+1})$ and $\tau(u) < \tau(q_{i+1})$, \cref{obs:LBGS-gives-PEO} implies that $q_i-u \in \mathcal{C}$ (since $q_i$ and $u$ both are adjacent to $q_{i+1}$), which is a contradiction to the choice of $i$. Thus, $u$ is adjacent to every vertex in $Q_1$, and in particular to $c$. This implies that $Q = (c, u, v)$ is a \cp{} in $\mathcal{C}$ (as argued above $v$ and $c$ are not adjacent and $u \neq c$).

    Thus, we show that in both possibilities there exists a \cp{} $Q$ from $(v,u)$ to $c$.
\end{proof}
\end{enumerate}
We now move to the next possibility.
\item Let's assume that $u\rightarrow x \in M$ because there exists an edge $a-b \in \mathcal{C}$ such that $a\rightarrow b \in O$, and there exists a \cp{} $Q = (p_1 =a, p_2 =b, \ldots, p_{l-1} = u, p_l = x)$ from $(a,b)$ to $(u,x)$ in $\mathcal{C}$ (with $(a, b) \neq (u, x)$). Note that we trivially also have $a \neq u$, since $Q$ is a path.

Now, by \cref{corr:directed-chordless-path-and-LBFS-ordering} applied to $Q$, we also have $\tau(p_1) = \tau(a) < \tau(p_2) = \tau(b) < \ldots < \tau(p_{l-1}) < \tau(p_l)$.

We now claim that for each $1\leq i \leq l-2$, $v-p_i \notin \mathcal{C}$. To see this, suppose, if possible, that $v$ is adjacent to $p_i$ for some $1 \leq i \leq l - 2$. Then, since $\tau(p_i) < \tau(x)$, $\tau(x) < \tau(v)$ (by the hypothesis for Case 2), and since $v$ is adjacent to both $x$ and $p_i$, \cref{obs:LBGS-gives-PEO} would imply that $x-p_i \in \mathcal{C}$. But this would be a chord in the \cp{} $Q$, thereby leading to a contradiction. We conclude that for each $1\leq i \leq l-2$, $v-p_i \notin \mathcal{C}$. This implies that $(p_1 =a, p_2 =b, \ldots, p_{l-1} = u, v)$ is a \cp{} from $(a, b)$ to $(u, v)$ with $(a, b) \neq (u, v)$ (since $a \neq u$). 

This further implies that in this case, $u\rightarrow v$ adheres to \cref{item-obs:when-u-v-becomes-directed:u-v-is-part-of-a-directed-cp} of \cref{obs:when-u-v-becomes-directed}.

\item Suppose that $u\rightarrow x \in M$ because there exists an
edge $a-b \in \mathcal{C}$ such that $a\rightarrow b \in O$, and there exists a
\cp{} $Q_1 = (p_1 =a, p_2 =b, \ldots, p_l = x)$ from $(a,b)$ to $x$ in $\mathcal{C}$
(with $b \neq x)$, and a \cp{}
$Q_2 = (q_1 =x, q_2 =u, \ldots, q_m = a)$ from $(x,u)$ to $a$ in
$\mathcal{C}$ (with $u \neq a)$.  By \cref{corr:directed-chordless-path-and-LBFS-ordering} applied to
$Q_1$, we have $\tau(p_1) = \tau(a) < \tau(p_2) < \ldots < \tau(p_{l-1}) < \tau(p_l) = \tau(x)$.
 Since
$\tau(q_m) = \tau(a) < \tau(x) = \tau(q_1)$,
\cref{obs:chordless-path-and-LBFS-ordering} applied to the LBFS ordering $\tau$
and the \cp{} $Q_2$ of the undirected chordal graph $\mathcal{C}$ then also
implies that we must have $\tau(q_i) < \tau(x)$ for every $2 \leq i \leq m$. By the hypothesis of Case 2, $\tau(v) < \tau(x)$. Therefore, for all $i$, $\tau(q_i) < \tau(v)$. We now claim that 
\begin{equation}
\label{claim:v-qi-not-in-C}
\text{For $3 \leq i\leq m$, $v-q_i \notin \mathcal{C}$.}
\end{equation}

To see this, suppose, if possible, that for some $3 \leq i\leq m$, $v-q_i \in \mathcal{C}$. As argued above, we have $\tau(q_i) < \tau(v)$ and $\tau(q_1) < \tau(v)$. Therefore, from \cref{obs:LBGS-gives-PEO}, $q_1-q_i \in \mathcal{C}$ (as both $q_1$ and $q_i$ are adjacent to $v$). Since $3 \leq i\leq m$, this constitutes a chord of $Q_2$ and thus contradicts that $Q_2$ is a \cp{}. This establishes \cref{claim:v-qi-not-in-C}. 
Due to \cref{claim:v-qi-not-in-C}, we have that $Q_3 = (v, q_2 = u, q_3, \ldots, q_m =a)$ (which we get by replacing $q_1$ with $v$ in $Q_2$) is a \cp{} in $\mathcal{C}$.

We now show the existence of a \cp{} from $(a,b)$ to $v$ in $\mathcal{C}$. Since $Q_1$ is a \cp{}, and $a\rightarrow b \in O$, from \cref{corr:directed-chordless-path-and-LBFS-ordering}, $\tau(p_1 = a) < \tau(p_2 =b) <\ldots < \tau(p_l=x)$. 
Suppose for some $i\leq l-2$, $v-p_i \in \mathcal{C}$. Then, from \cref{obs:LBGS-gives-PEO}, $p_i-p_l \in \mathcal{C}$, as $\tau(p_i) < \tau(p_l = x) < \tau(v)$ (from our hypothesis of this case, $\tau(x) < \tau(v)$), which further  implies that $Q_1$ is not a \cp{}, which is a contradiction. This  implies that for all $i\leq l-2$, $v-p_i \notin \mathcal{C}$.
Thus, we get that either
$Q_4 \coloneqq (p_1 =a, p_2 =b, \ldots, p_{l-1},v)$ is a \cp{} from $(a,b)$ to $v$
(if $p_{l-1}-v \in \mathcal{C}$), or else
$Q_4 = (p_1 =a, p_2 =b, \ldots, p_{l},v)$ is a \cp{} from $(a,b)$ to $v$ (if
$p_{l-1}-v \notin \mathcal{C}$).  Note that in either case, we also have
$b \neq v$ since $\tau(b) = \tau(p_2) < \tau(x)$ and also $\tau(x) < \tau(v)$
(by the hypothesis for Case 2).

Existence of the \cps{} $Q_3$ and $Q_4$ implies that $u\rightarrow v$ obeys
\cref{item-obs:when-u-v-becomes-directed:u-v-is-part-of-a-directed-cycle} of
\cref{obs:when-u-v-becomes-directed}.
\end{enumerate}  
This concludes the second case.
  \end{enumerate}

We have thus shown that in all cases $u\rightarrow v$ obeys
\cref{obs:when-u-v-becomes-directed}. This completes the induction, and hence
also the proof of \cref{obs:when-u-v-becomes-directed}.
\end{proof}

\Cref{obs:when-u-v-becomes-directed} implies the following corollary.
\begin{corollary}
    \label{corr:an-edge-of-an-ucc-gets-directed-in-M-either-at-initiazation-or-in-the-iteration-when-ucc-is-considered}
    For $i\in \{1,2\}$, let $\mathcal{C}$ be an \ucc{} of $M_i$. 
    Let $\tau$ be the LBFS ordering returned at line \ref{algconstructMEC:contructLBFS} of \cref{alg:constructMEC} in the iteration of the second foreach loop (lines \ref{algconstructMEC:foreachUCC}-\ref{algconstructMEC:foreachUCC-end} of \cref{alg:constructMEC}) when  $\mathcal{C}$ is considered. 
    For an undirected edge $u-v$ in  $\mathcal{C}$, if $u\rightarrow v \in M$ then either $u\rightarrow v$ is added to $M$ in the initialization step (line \ref{algconstructMEC:M-initialization} of \cref{alg:constructMEC}) or in the second foreach loop (lines \ref{algconstructMEC:foreachUCC}-\ref{algconstructMEC:foreachUCC-end} of \cref{alg:constructMEC}) when  $\mathcal{C}$ is considered, in the iteration of the first while loop (lines \ref{algconstructMEC:while-start}--\ref{algconstructMEC:while-end} of \cref{alg:constructMEC}) when $a = \tau(v)$ and in the iteration of the second while loop (lines \ref{algconstructMEC:inner-while}--\ref{algconstructMEC:inner-while-end} of \cref{alg:constructMEC}) when $b = \tau(u)$.
\end{corollary}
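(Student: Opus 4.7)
[Proof plan for \cref{corr:an-edge-of-an-ucc-gets-directed-in-M-either-at-initiazation-or-in-the-iteration-when-ucc-is-considered}]

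The plan is to derive the corollary directly from \cref{obs:when-u-v-becomes-directed} together with the LBFS-consistency property \cref{corr:directed-chordless-path-and-LBFS-ordering} and the structure of the nested while loops in \cref{alg:constructMEC}. Fix $u-v \in \mathcal{C}$ with $u\rightarrow v \in M$ at termination. By \cref{obs:when-u-v-becomes-directed}, exactly one of three scenarios holds for this edge:
(i) $u\rightarrow v\in O$;
(ii) there exist $a\rightarrow b \in O$ and a \cp{} $P$ from $(a,b)$ to $(u,v)$ in $\mathcal{C}$;
(iii) there exist $a\rightarrow b \in O$ and \cps{} $P_1$ from $(a,b)$ to $v$ and $P_2$ from $(v,u)$ to $a$ in $\mathcal{C}$.

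In scenario (i), the directed edge $u\rightarrow v$ is present in $U_M(M_1,M_2,O)$, which is exactly the value assigned to $M$ at line~\ref{algconstructMEC:M-initialization}. Hence $u\rightarrow v$ is added in the initialization step and there is nothing more to show. So assume we are in scenario (ii) or (iii). In either case, \cref{corr:directed-chordless-path-and-LBFS-ordering}, applied to the LBFS ordering $\tau$ returned by \cref{alg:LBFSwithO} for inputs $(\mathcal{C},O)$, forces the ordering of every such \cp{} originating with the directed edge $a\rightarrow b \in O$ to be strictly $\tau$-increasing. In particular, in scenario (ii) the sequence of $\tau$-values along $P$ ends with $\tau(u)<\tau(v)$; in scenario (iii), the same corollary applied to $P_1$ gives $\tau(a)<\tau(b)<\cdots<\tau(v)$, and hence combined with \cref{corr:directed-edges-of-O-follow-tau} and the fact that $u$ is a neighbor of $v$ in the chordless extension via $P_2$, an analogous argument yields $\tau(u)<\tau(v)$.

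Once $\tau(u)<\tau(v)$ is established, the structure of the nested while loops (lines~\ref{algconstructMEC:while-start}--\ref{algconstructMEC:while-end} of \cref{alg:constructMEC}) pins down the iteration. During the processing of $\mathcal{C}$, an edge between two of its vertices indexed as $u_b-u_a$ is oriented only when $b<a$ and $u_b-u_a \in \mathcal{C}$. For our edge, the only choice consistent with $\tau(u)<\tau(v)$ is $b=\tau(u)$, $a=\tau(v)$, so the only candidate iteration during $\mathcal{C}$'s processing is the one with $a=\tau(v)$ and $b=\tau(u)$. Moreover, orientation of $u-v$ cannot occur while processing any other UCC $\mathcal{C}'$ of $M_i$ (since distinct UCCs of $M_i$ share no edges), and if $\mathcal{C}'$ is a UCC of $M_{3-i}$ containing $u-v$, then both $u$ and $v$ lie in $I$, so that $u-v\in O_1\cap O_2$, and the argument of \cref{obs:when-u-v-becomes-directed} applied to $\mathcal{C}'$ puts us back in scenario (i) and hence back in the initialization case.

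It remains to argue that, under scenarios (ii) or (iii), the conditions at line~\ref{algconstructMEC:constraint-for-orientation} are in fact triggered at iteration $(a=\tau(v),\,b=\tau(u))$, so that $u\rightarrow v$ is actually added there and not merely by default earlier. This is the main obstacle of the argument and is where induction on the lexicographic order of iterations $(a,b)$ is used: one shows that the \cp{} supplied by \cref{obs:when-u-v-becomes-directed} can always be shortened (using chordality of $\mathcal{C}$ and \cref{obs:LBGS-gives-PEO}) to exhibit either a witness $c<b$ with $u_c\rightarrow u_b - u_a \in M$ or a witness $b<c<a$ with $u_b\rightarrow u_c\rightarrow u_a-u_b \in M$, where both prerequisite directed edges are themselves produced at strictly earlier iterations (or at initialization) by the induction hypothesis. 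Combining these pieces yields that $u\rightarrow v \in M$ at (or before) iteration $(a=\tau(v),\,b=\tau(u))$ of $\mathcal{C}$'s processing, exactly as the corollary asserts.
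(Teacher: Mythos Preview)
Your overall route matches the paper's: rule out that the orientation happened while processing a \emph{different} \ucc{} $\mathcal{C}'$, and then read off the iteration from the structure of the nested while loops. However, two aspects of your write-up should be corrected.

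First, your final paragraph misreads what the corollary asks. The statement is purely conditional: \emph{given} that $u\rightarrow v \in M$ at termination, locate the step at which this orientation was introduced. Once you know it was not at initialization and not while processing some $\mathcal{C}'\neq\mathcal{C}$, the algorithm's structure finishes the argument immediately: at iteration $(a,b)$ of the loops for $\mathcal{C}$, the only edge that can be oriented is $u_b-u_a$, and it is oriented as $u_b\rightarrow u_a$. Since the edge in question ended up as $u\rightarrow v$, necessarily $u_b=u$ and $u_a=v$, i.e.\ $b=\tau(u)$ and $a=\tau(v)$. There is nothing further to prove, and in particular no induction is needed to show the trigger at line~\ref{algconstructMEC:constraint-for-orientation} actually fires; the hypothesis $u\rightarrow v\in M$ already tells you it fired. (For the same reason, your separate derivation of $\tau(u)<\tau(v)$ via \cref{corr:directed-chordless-path-and-LBFS-ordering} is a detour: the inequality is forced by the loop structure once you know the orientation is $u\rightarrow v$.)

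Second, your treatment of the cross-component case is too quick. If $\mathcal{C}'$ is an \ucc{} of $M_{3-i}$ containing $u-v$, applying \cref{obs:when-u-v-becomes-directed} to $\mathcal{C}'$ does \emph{not} automatically ``put you back in scenario (i)'': it gives you one of the three items relative to $\mathcal{C}'$. In items (ii) and (iii) you obtain \cps{} in $\mathcal{C}'$ (hence \tfps{} in $M_{3-i}$) witnessing $P_{(3-i)1}((a,b),(u,v))=1$ or $P_{(3-i)2}((a,b),v)=P_{(3-i)2}((v,u),a)=1$ with $a\rightarrow b\in O$. You must then invoke \cref{item-3-of-def:extension-of-O1-O2-P11-P12-P21-P22} of \cref{def:extension-of-O1-O2-P11-P12-P21-P22} (using $u,v\in I\subseteq V_{O_{3-i}}$ so that $u-v\in O_{3-i}$) to conclude $u\rightarrow v\in O$, which then contradicts the assumption that the orientation was not introduced at initialization. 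The paper spells out exactly this step; your sentence ``puts us back in scenario (i)'' hides it.
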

\begin{proof}
Without loss of generality, let's assume $i = 1$, i.e., $\mathcal{C}$ is an \ucc{} of $M_1$. 
From the construction of $M$, it's evident that if $u\rightarrow v  \in M$, then it's either added at the initialization step (line \ref{algconstructMEC:M-initialization} of \cref{alg:constructMEC}), or $u-v$ is in some \ucc{} $\mathcal{C}'$ of $M_1$ or $M_2$, and it gets directed at line \ref{algconstructMEC:add-orientation} when $\mathcal{C}'$ is considered in the second foreach loop (lines \ref{algconstructMEC:foreachUCC}-\ref{algconstructMEC:foreachUCC-end} of \cref{alg:constructMEC}). 

If $u\rightarrow v$ is added at the initialization step, then we are done. Suppose $u\rightarrow v$ is not added at the initialization step; we'll now show that $\mathcal{C} = \mathcal{C}'$. 

Suppose $\mathcal{C} \neq \mathcal{C}'$. Then, $\mathcal{C}'$ must be an \ucc{} of $M_2$, as two different \ucc{} of a graph cannot share a node, and both $\mathcal{C}$ and $\mathcal{C}'$ contain $u-v$. This implies $u, v  \in V_{M_1} \cap V_{M_2} = I \subseteq V_O$. 

Since $u-v \in \mathcal{C}'$ and $u\rightarrow v  \in M$, $u\rightarrow v$ must obey \cref{obs:when-u-v-becomes-directed}. We now show that this implies $u\rightarrow v \in O$.

If \cref{item-obs:when-u-v-becomes-directed:u-v-is-in-O} of \cref{obs:when-u-v-becomes-directed} holds, then we have $u \rightarrow v \in O$. If one of \cref{item-obs:when-u-v-becomes-directed:u-v-is-part-of-a-directed-cp} or \cref{item-obs:when-u-v-becomes-directed:u-v-is-part-of-a-directed-cycle} of \cref{obs:when-u-v-becomes-directed} holds, then in the \uccc{} $\mathcal{C}'$ of $M_2$, there exists an edge $x-y \in \mathcal{C}'$ such that $x\rightarrow y \in O$, and either (a) $(x,y) \neq (u,v)$ and there exists a \cp{} from $(x,y)$ to $(u,v)$ in $\mathcal{C}'$ (due to \cref{item-obs:when-u-v-becomes-directed:u-v-is-part-of-a-directed-cp} of \cref{obs:when-u-v-becomes-directed}), or (b) $y\neq v$, $u\neq x$, and there exist \cps{} $Q_1$ and $Q_2$ in $\mathcal{C}'$ such that $Q_1$ is a \cp{} from $(x,y)$ to $v$, and $Q_2$ is a \cp{} from $(v,u)$ to $x$ (due to  \cref{item-obs:when-u-v-becomes-directed:u-v-is-part-of-a-directed-cycle} of \cref{obs:when-u-v-becomes-directed}). 

Since $\mathcal{C}'$ is an \uccc{} of $M_2$ (from \cref{item-2-theorem-nec-suf-cond-for-MEC} of \cref{thm:nes-and-suf-cond-for-chordal-graph-to-be-an-MEC}), a \cp{} in $\mathcal{C}'$ is a \cp{} of $M_2$. And, from \cref{obs:cp-is-tfp}, every \cp{} in $\mathcal{C}'$ is a \tfp{}. This implies \cps{} in $\mathcal{C}'$ are \tfps{} of $M_2$. Also, since $(x,y) \in \mathcal{C}'$, an \ucc{} of $M_2$, and $x\rightarrow y \in O$, this implies $x,y \in V_{M_2} \cap V_O = V_{O_2}$. 

Since $(O_2, P_{21}, P_{22})$ is the shadow of $M_2$ on $V_{O_2}$, from \cref{def:shadow}, $O_2$ is an induced subgraph of $M_2$. This implies $(x,y)\in E_{O_2}$, and if there exists a \cp{} in $\mathcal{C}'$, from $(x,y)$ to $(u,v)$, then $P_{21}((x,y),(u,v)) =1$, and if there exist \cps{} in $\mathcal{C}'$, from $(x,y)$ to $v$ and from $(v,u)$ to $x$ then $P_{22}((x,y),v) = P_{22}((v,u),x) = 1$. 

Since $O \in \mathcal{E}(O_1, P_{11}, P_{12}, O_2, P_{21}, P_{22})$, therefore, in both possibilities, from \cref{item-3-of-def:extension-of-O1-O2-P11-P12-P21-P22} of \cref{def:extension-of-O1-O2-P11-P12-P21-P22}, we again have $u\rightarrow v \in O$. 

But, if $u\rightarrow v \in O$, then $u\rightarrow v$ has been added to $M$ at the initialization step, contradicting our assumption. This implies $\mathcal{C} = \mathcal{C}'$. This further implies $u\rightarrow v$ is added to $M$ at the iteration of the second foreach loop (lines \ref{algconstructMEC:foreachMi}--\ref{algconstructMEC:foreachMi-end}) when $\mathcal{C}$ is considered. It is obvious from the algorithm that in that iteration, $u\rightarrow v$ is added to $M$ when $a = \tau(v)$ and $b = \tau(u)$. This completes the proof of \cref{corr:an-edge-of-an-ucc-gets-directed-in-M-either-at-initiazation-or-in-the-iteration-when-ucc-is-considered}.
\end{proof}

The following corollary is implied by \cref{corr:directed-chordless-path-and-LBFS-ordering,corr:directed-edges-of-O-follow-tau} when applied to \cref{obs:when-u-v-becomes-directed}.

\begin{corollary}
\label{corr-of-obs:when-u-v-becomes-directed}
For $i\in \{1,2\}$, and any \ucc{} $\mathcal{C}$ of $M_i$ at line-\ref{algconstructMEC:contructLBFS} of \cref{alg:constructMEC}, let $\tau$ be the LBFS ordering returned by \cref{alg:LBFSwithO}. Then, for $u-v \in \mathcal{C}$, at any iteration of the while loop (lines \ref{algconstructMEC:while-start}-\ref{algconstructMEC:while-end}) of \cref{alg:constructMEC}, if $u\rightarrow v \in M$, then either of the following occurs:

\begin{enumerate}
    \item \label{item-corr-of-obs:when-u-v-becomes-directed:u-v-is-in-O}
    $u \rightarrow v$ is strongly protected in $O$, and $\tau(u) < \tau(v)$, or
    
    \item \label{item-corr-of-obs:when-u-v-becomes-directed:u-v-is-part-of-a-directed-cp}
    $\exists a\rightarrow b \in O$ such that there exists a \cp{} $Q = (p_1 = a, p_2 = b, \ldots, p_{l-1} = u, p_l = v)$ from $(a,b)$ to $(u,v)$ in $\mathcal{C}$ such that $(a, b) \neq (u, v)$ and $\tau(p_1) < \tau(p_2) < \ldots < \tau(p_{l-1}) < \tau(p_l)$. In particular, $\tau(u) < \tau(v)$.
    
    \item \label{item-corr-of-obs:when-u-v-becomes-directed:u-v-is-part-of-a-directed-cycle}
    $\exists a\rightarrow b \in O$ such that there exists a \cps{} $Q_1$ and $Q_2$ such that $Q_1 = (p_1 = a, p_2 = b, \ldots, p_{l-1}, p_l = v)$ is a \cp{} from $(a,b)$ to $v$ in $\mathcal{C}$ and $Q_2$ is a \cp{} from $(v,u)$ to $a$ in $\mathcal{C}$ such that $b \neq v$ and $u \neq a$, and $\tau(p_1 = a) < \tau(p_2 = b) < \ldots < \tau(p_{l-1}) < \tau(p_l = v)$, and $\tau(z) < \tau(v)$ for every node $z \neq v$ in $Q_2$ (in particular, $\tau(u) < \tau(v)$).
\end{enumerate}
\end{corollary}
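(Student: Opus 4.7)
The plan is to apply \cref{corr:directed-chordless-path-and-LBFS-ordering} and \cref{corr:directed-edges-of-O-follow-tau} to the three cases of \cref{obs:when-u-v-becomes-directed}, combined with the extension axiom of \cref{def:extension-of-O1-O2-P11-P12-P21-P22} to refine and split the first case.

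First I would refine case~1 of \cref{obs:when-u-v-becomes-directed}. When $u\rightarrow v \in O$, the edge $u-v$ in the \ucc{} $\mathcal{C}$ of $M_a$ is in fact an undirected edge of $O_a$: a brief set-theoretic check using $S_1\cap S_2=I$ gives $V_{H_a}\cap V_O=V_{O_a}$, so $u,v\in V_{O_a}$, and since $O_a=M_a[V_{O_a}]$ we obtain $u-v\in O_a$. Then \cref{item-3-of-def:extension-of-O1-O2-P11-P12-P21-P22} of \cref{def:extension-of-O1-O2-P11-P12-P21-P22} forces one of three alternatives: either $u\rightarrow v$ is strongly protected in $O$, which is case~1 of the corollary (with $\tau(u)<\tau(v)$ supplied by \cref{corr:directed-edges-of-O-follow-tau}); or there is $x-y\in O_a$ with $x\rightarrow y\in O$ and $P_{a1}((x,y),(u,v))=1$; or the analogous condition with $P_{a2}$. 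In the second alternative, the \tfp{} from $(x,y)$ to $(u,v)$ in $M_a$ terminates at the undirected edge $u-v$, so by \cref{obs:undirected-tfp-with-ud-end-implies-source-is-ud} it is entirely undirected, hence lies inside the chordal component $\mathcal{C}$, and by \cref{obv:tfps-are-chordless-in-chordal-graphs} is in fact a \cp{} of $\mathcal{C}$; this places us in case~2 of the corollary. The third alternative yields case~3 by the same recipe.

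Next, for cases~2 and~3 of \cref{obs:when-u-v-becomes-directed} (and for the instances of cases~2 and~3 of the corollary just derived from case~1), the required \cps{} are already in hand, and it remains to extract the $\tau$-ordering. For case~2 and for the path $Q_1$ in case~3, the \cp{} begins with the directed edge $a\rightarrow b$ of $O$, so \cref{corr:directed-chordless-path-and-LBFS-ordering} immediately yields a strictly increasing chain of $\tau$-values along the whole path; this settles case~2 of the corollary and the $Q_1$-part of case~3.

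The hard part will be the ordering on $Q_2$ in case~3. Because $Q_2$ starts with the undirected edge $v-u$, \cref{corr:directed-chordless-path-and-LBFS-ordering} does not apply; I would instead invoke the weaker structural fact \cref{obs:chordless-path-and-LBFS-ordering}, which says that along any \cp{} in an undirected chordal graph, $\tau$ is either monotone or has a unique interior minimum. Writing $Q_2=(v=q_1,q_2,\dots,q_m=a)$, the ordering already established on $Q_1$ forces $\tau(a)<\tau(v)$, so the shape given by \cref{obs:chordless-path-and-LBFS-ordering} requires $\tau$ to descend from $v$ to some minimum and then rise to a value $\tau(a)<\tau(v)$. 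Thus the $\tau$-maximum on $Q_2$ is attained uniquely at $v$, giving $\tau(z)<\tau(v)$ for every $z\neq v$ on $Q_2$, which is exactly the ordering condition required for case~3 of the corollary.
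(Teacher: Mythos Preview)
Your proposal is essentially the same strategy as the paper's proof: refine case~1 of \cref{obs:when-u-v-becomes-directed} via \cref{item-3-of-def:extension-of-O1-O2-P11-P12-P21-P22} of \cref{def:extension-of-O1-O2-P11-P12-P21-P22}, then extract the $\tau$-ordering from \cref{corr:directed-edges-of-O-follow-tau}, \cref{corr:directed-chordless-path-and-LBFS-ordering}, and \cref{obs:chordless-path-and-LBFS-ordering}. Your handling of the $Q_2$-ordering in case~3 via \cref{obs:chordless-path-and-LBFS-ordering} is exactly what the paper does, and your set-theoretic check $V_{H_a}\cap V_O=V_{O_a}$ is in fact slightly cleaner than the paper's phrasing.

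There is one small gap: in the third alternative of your case-1 refinement (the $P_{a2}$ condition), the phrase ``same recipe'' is too quick. \Cref{obs:undirected-tfp-with-ud-end-implies-source-is-ud} requires the \emph{terminal} edge of the \tfp{} to be undirected, but here you have a \tfp{} $Q_1$ from $(x,y)$ to $v$ and a \tfp{} $Q_2$ from $(v,u)$ to $x$, and neither terminal edge is known a priori to be undirected in $M_a$. The paper handles this with a cycle argument instead: the concatenation of $Q_1$ and $Q_2$ is a closed walk in $M_a$, so any directed edge on either path would yield a directed cycle in $M_a$, contradicting \cref{item-1-theorem-nec-suf-cond-for-MEC} of \cref{thm:nes-and-suf-cond-for-chordal-graph-to-be-an-MEC}. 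With that one-line fix, the two proofs coincide.
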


\begin{proof}
We prove the relative ordering of nodes (mentioned in \cref{item-corr-of-obs:when-u-v-becomes-directed:u-v-is-in-O,item-corr-of-obs:when-u-v-becomes-directed:u-v-is-part-of-a-directed-cp,item-corr-of-obs:when-u-v-becomes-directed:u-v-is-part-of-a-directed-cycle} of \cref{corr-of-obs:when-u-v-becomes-directed}) at the end of the proof. We first prove the remaining parts of the corollary.
Let for $i\in \{1,2\}$, $\mathcal{C}$ be an \ucc{} of $M_i$, and $u-v\in \mathcal{C}$.
From \cref{obs:when-u-v-becomes-directed}, if $u\rightarrow v \in M$ then either (a) $u \rightarrow v \in O$ (\cref{item-obs:when-u-v-becomes-directed:u-v-is-in-O} of \cref{obs:when-u-v-becomes-directed}), or (b) there exists an edge $a-b \in \mathcal{C}$ such that there exists a \cp{} from $(a,b)$ to (u,v) in $\mathcal{C}$ (\cref{item-obs:when-u-v-becomes-directed:u-v-is-part-of-a-directed-cp} of \cref{obs:when-u-v-becomes-directed}), or (c) there exists an edge $a-b \in \mathcal{C}$ such that there exist \cps{} in $\mathcal{C}$ from $(a,b)$ to $v$ and from $(v,u)$ to $a$ (\cref{item-obs:when-u-v-becomes-directed:u-v-is-part-of-a-directed-cycle} of \cref{obs:when-u-v-becomes-directed}). 

Suppose $u-v\in \mathcal{C}$, and $u\rightarrow v \in M$ because it obeys \cref{item-obs:when-u-v-becomes-directed:u-v-is-in-O} of \cref{obs:when-u-v-becomes-directed}, i.e.,  $u\rightarrow v \in O$.
This implies $u,v \in V_{M_i} \cap V_O = I \subseteq V_{O_i}$. As $(O_i, P_{i1}, P_{i2})$ is the shadow of $M_i$ on $V_{O_i}$, therefore, from \cref{item-1-of-def:shadow} of \cref{def:shadow}, $O_i$ is an induced subgraph of $M_i$. Since $u,v \in V_{O_i}$ and $u-v \in \mathcal{C}$, an \ucc{} of $M_i$, therefore, $u-v \in O_i$.
Since $O\in \mathcal{E}(O_1, P_{11}, P_{12}, O_2, P_{21}, P_{22})$, therefore, $O$ obeys
\cref{def:extension-of-O1-O2-P11-P12-P21-P22}.
\Cref{item-3-of-def:extension-of-O1-O2-P11-P12-P21-P22}
  of \cref{def:extension-of-O1-O2-P11-P12-P21-P22} implies that if
  $u\rightarrow v \in O$ then either 
  (i) $u\rightarrow v$ is strongly protected in $O$, or 
  (ii) there exists an edge $a-b \in O_i$ such that $P_{i1}((a,b),(u,v)) =1$, or
  (iii) there exists an edge $a-b \in O_i$ such that $P_{i2}((a,b),v) = P_{i2}((v,u),a) = 1$.
  
  If $u\rightarrow v$ is strongly protected in $O$ then it obeys \cref{item-corr-of-obs:when-u-v-becomes-directed:u-v-is-in-O} of \cref{corr-of-obs:when-u-v-becomes-directed}. 
  
  Suppose $u\rightarrow v \in O$ and  there exists an edge $a-b \in O_i$ such that $P_{i1}((a,b),(u,v)) =1$. We first show that $a-b$ and $u-v$ are in the same \ucc{} $\mathcal{C}$ of $M_i$. Since $P_{i1}((a,b),(u,v)) =1$ this implies $(a,b) \neq (u,v)$ and there exists a \tfp{} $Q = (q_1 =a, q_2 =b, \ldots, q_{l-1} = u, q_l = v)$ from $(a,b)$ to $(u,v)$ in $M_i$.
  We claim that all the edges of $P$ are undirected. If this is not true then pick a directed edge $u_j \rightarrow u_{j+1}$ of $P$ such that $j$ is maximum. Since $u-v$ is an undirected edge in $M_i$, therefore, $j < l-1$. But, then $u_j\rightarrow u_{j+1} - u_{j+2}$ is an induced subgraph of $M_i$, contradicting \cref{item-3-theorem-nec-suf-cond-for-MEC} of \cref{thm:nes-and-suf-cond-for-chordal-graph-to-be-an-MEC} (remember $M_i$ is an MEC). This validates our claim that all the edges of $Q$ are undirected. This implies $Q$ is a \tfp{} in the \ucc{} $\mathcal{C}$ of $M_i$. From \cref{item-2-theorem-nec-suf-cond-for-MEC} of \cref{thm:nes-and-suf-cond-for-chordal-graph-to-be-an-MEC}, $\mathcal{C}$ is a chordal graph. Therefore, from \cref{obv:tfps-are-chordless-in-chordal-graphs}, $Q$ is a \cp{} in $\mathcal{C}$. This implies $u\rightarrow v$ obeys \cref{item-corr-of-obs:when-u-v-becomes-directed:u-v-is-part-of-a-directed-cp} of \cref{corr-of-obs:when-u-v-becomes-directed}.

  Suppose $u\rightarrow v \in O$ because  there exists an edge $a-b \in O_i$ such that $P_{i2}((a,b),v) = P_{i2}((v,u),a) = 1$. This implies there exist \tfps{} $Q_1$ and $Q_2$ in $M_i$ such that $Q_1$ is a \tfp{} from $(a,b)$ to $v$, and $Q_2$ is a \tfp{} from $(v,u)$ to $a$. 
All the edges in $Q_1$ and $Q_2$ must be undirected, otherwise, by concatenating $Q_1$ and $Q_2$ we get a directed cycle in $M_i$, contradicting \cref{item-1-theorem-nec-suf-cond-for-MEC} of \cref{thm:nes-and-suf-cond-for-chordal-graph-to-be-an-MEC}, as $M_i$ is an MEC. This implies $Q_1$ and $Q_2$ are \tfps{} in $\mathcal{C}$, an \ucc{} of $M_i$.   
From \cref{item-2-theorem-nec-suf-cond-for-MEC} of \cref{thm:nes-and-suf-cond-for-chordal-graph-to-be-an-MEC}, $\mathcal{C}$ is a chordal graph. Therefore, from \cref{obv:tfps-are-chordless-in-chordal-graphs}, $Q_1$ and $Q_2$ are \cps{} in $\mathcal{C}$. This implies $u\rightarrow v$ obeys \cref{item-corr-of-obs:when-u-v-becomes-directed:u-v-is-part-of-a-directed-cycle} of \cref{corr-of-obs:when-u-v-becomes-directed}.

Suppose $u\rightarrow v \in M$ because it obeys \cref{item-obs:when-u-v-becomes-directed:u-v-is-part-of-a-directed-cp} of \cref{obs:when-u-v-becomes-directed}. Then, $u\rightarrow v$ obeys \cref{item-corr-of-obs:when-u-v-becomes-directed:u-v-is-part-of-a-directed-cp} of \cref{corr-of-obs:when-u-v-becomes-directed}. Similarly, if $u\rightarrow v \in M$ because it obeys \cref{item-obs:when-u-v-becomes-directed:u-v-is-part-of-a-directed-cycle} of \cref{obs:when-u-v-becomes-directed}. Then, $u\rightarrow v$ obeys \cref{item-corr-of-obs:when-u-v-becomes-directed:u-v-is-part-of-a-directed-cycle} of \cref{corr-of-obs:when-u-v-becomes-directed}. 
  
  The ranking of the nodes comes from
  \cref{obs:chordless-path-and-LBFS-ordering,corr:directed-edges-of-O-follow-tau}.
  If $u\rightarrow v \in O$ then from
  \cref{corr:directed-edges-of-O-follow-tau}, $\tau(u) < \tau(v)$.
  In the \cp{} of
  \cref{item-corr-of-obs:when-u-v-becomes-directed:u-v-is-part-of-a-directed-cp}
  of \cref{corr-of-obs:when-u-v-becomes-directed}, $a\rightarrow b \in O$. From
  \cref{corr:directed-edges-of-O-follow-tau}, $\tau(a) < \tau(b)$. Then, from
  \cref{obs:chordless-path-and-LBFS-ordering}, the nodes of the \cp{} follows
  the order: $\tau(p_1=a) < \tau(p_2=b) < \ldots < \tau(p_{l-1}) <
  \tau(p_l)$. Similarly, the nodes in the \cp{} $P_1$ of
  \cref{item-corr-of-obs:when-u-v-becomes-directed:u-v-is-part-of-a-directed-cycle} of
  \cref{corr-of-obs:when-u-v-becomes-directed} follow the order:
  $\tau(p_1 = a) < \tau(p_2 = b) < \ldots < \tau(p_{l-1}) < \tau(p_l = v)$. In
  particular, this also implies $\tau(a) < \tau(v)$. Thus,
  \cref{obs:chordless-path-and-LBFS-ordering} implies that for all the nodes
  $z\neq v$ of the \cp{} $P_2$ of
  \cref{item-corr-of-obs:when-u-v-becomes-directed:u-v-is-part-of-a-directed-cycle} of
  \cref{corr-of-obs:when-u-v-becomes-directed}, we have $\tau(z) < \tau(v)$.  This
  completes the proof of \cref{corr-of-obs:when-u-v-becomes-directed}.
\end{proof}

We are now ready to prove \cref{lem:O-is-an-induced-subgraph-of-M}.

\begin{proof}[\textbf{Proof of \cref{lem:O-is-an-induced-subgraph-of-M}}]
  \label{proof-of-obs:edges-of-subMECs-does-not-change-its-orientation}
  
  We initialize $M$ as $U_M(M_1, M_2, O)$ (line~\ref{algconstructMEC:M-initialization} of \cref{alg:constructMEC}). This implies that after the initialization step, if $u\rightarrow v \in O$ then $u\rightarrow v \in M$. Throughout the algorithm, we only convert an undirected edge into a directed one (lines~\ref{algconstructMEC:constraint-for-orientation}-\ref{algconstructMEC:if-end}). This implies that if $u\rightarrow v \in O$ then $u\rightarrow v$ belongs to the graph $M$ returned by the algorithm. To complete the proof, we have to show that even for any undirected edge $u-v \in O$, $M$ returned by \cref{alg:constructMEC} has $u-v$.
  
  Suppose, if possible, that there is an edge $u-v$ of $O$ that is directed as $u \rightarrow v$ in $M$. If $u-v \in O$ then $u-v \in \skel{O}$. From the construction, $\skel{O} = \skel{O_1} \cup \skel{O_2}$. This implies either $u-v \in \skel{O_1}$ or $u-v \in \skel{O_2}$ or both. 
  
  Without loss of generality, let us assume that $u-v \in \skel{O_1}$. Since $O\in \mathcal{E}(O_1, P_{11}, P_{12}, O_2, P_{21}, P_{22})$ and $u-v \in O$, from \cref{item-1-of-def:extension-of-O1-O2-P11-P12-P21-P22} of \cref{def:extension-of-O1-O2-P11-P12-P21-P22}, $u-v \in O_1$. This implies $u-v \in M_1$ as $O_1$ is an induced subgraph of $M_1$ (from \cref{item-1-of-def:shadow} of \cref{def:shadow}, as $(O_1, P_{11}, P_{12})$ is the shadow of $M_1$ on $V_{O_1}$). 
  
  This further implies there exists an \ucc{} $\mathcal{C}$ of $M_1$ that contains $u-v$. Since from our assumption, $u\rightarrow v \in M$ and $u\rightarrow v \notin O$, therefore, from \cref{obs:when-u-v-becomes-directed}, either (a) there exists $a-b \in \mathcal{C}$ such that $a\rightarrow b \in O$ and there exists a \cp{} from $(a,b)$ to $(u,v)$ in $\mathcal{C}$, or (b) there exists $a-b \in \mathcal{C}$ such that there exists \cps{} in $\mathcal{C}$ from $(a,b)$ to $v$ and from $(v,u)$ to $a$. 
  
  In both of the possibilities, $a\rightarrow b \in O$ and $a-b \in M_1$ (remember $\mathcal{C}$ is an \ucc{} of $M_1$). This implies $a,b\in V_{M_1}\cap V_O = I \subseteq V_{O_1}$. This further implies $a-b \in O_1$ as $O_1$ is an induced subgraph of $M_1$. In the first possibility, since there is a \cp{} from $(a,b)$ to $(u,v)$ in $\mathcal{C}$, therefore, there is a \cp{} from $(a,b)$ to $(u,v)$ in $M_1$. 
  
  From \cref{obs:cp-is-tfp}, this further implies that there is a \tfp{} from $(a,b)$ to $(u,v)$ in $M_1$. That means $P_{11}((a,b),(u,v)) = 1$. But, then from \cref{item-3-of-def:extension-of-O1-O2-P11-P12-P21-P22} of \cref{def:extension-of-O1-O2-P11-P12-P21-P22}, $u\rightarrow v \in O$, a contradiction. Similarly, in the second possibility, we have $a-b \in O_1$ and there exist \tfps{} from $(a,b)$ to $v$ and from $(v,u)$ to $a$ in $M_1$. 
  
  This implies $P_{22}((a,b),v) = P_{22}((v,u),a) = 1$. But, then from \cref{item-3-of-def:extension-of-O1-O2-P11-P12-P21-P22} of \cref{def:extension-of-O1-O2-P11-P12-P21-P22}, $u\rightarrow v \in O$, a contradiction. This implies that for any undirected edge $u-v \in O$, $M$ returned by  \cref{alg:constructMEC} has $u-v$. The above discussion implies that $M[V_O] = O$.
\end{proof}

The following observations can be easily derived from \cref{corr-of-obs:when-u-v-becomes-directed}. These observations will be used later, and if preferred, the reader can skip ahead to \cref{lem:M-is-chain-graph} directly.

\begin{observation}
\label{item-1-of-obs:when-u-v-is-directed-reverse}
For $a\in \{1,2\}$, and for any \ucc{} $\mathcal{C}$ of $M_a$, let $\tau$ be the LBFS ordering obtained from \cref{alg:LBFSwithO} when it is called at line~\ref{algconstructMEC:contructLBFS} of \cref{alg:constructMEC}. Then, for $u-v \in \mathcal{C}$, if $u\rightarrow v \in O$ then $u\rightarrow v \in M$, and $\tau(u) < \tau(v)$.
\end{observation}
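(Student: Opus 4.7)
The plan is to prove this observation by combining two earlier facts that have already been established in the paper, without needing any new machinery.

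First, I would handle the claim $u\rightarrow v \in M$. The algorithm initializes $M$ at line~\ref{algconstructMEC:M-initialization} with the Markov union $U_M(M_1, M_2, O)$. Since $u\rightarrow v \in O$, the definition of the Markov union (\cref{def:Markov-union-of-graphs}) guarantees that $u\rightarrow v \in U_M(M_1, M_2, O)$, so $u\rightarrow v \in M$ immediately after initialization. The only subsequent modification to $M$ anywhere in \cref{alg:constructMEC} is at line~\ref{algconstructMEC:add-orientation}, which exclusively replaces undirected edges of the form $u_b-u_a$ with directed edges $u_b\rightarrow u_a$; no directed edge is ever reversed or removed. Hence the orientation $u\rightarrow v$ is preserved throughout the execution of the algorithm, and $u\rightarrow v \in M$ at termination (and, in particular, at every intermediate step of interest).

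Next, for the ordering claim $\tau(u) < \tau(v)$, I would appeal directly to \cref{corr:directed-edges-of-O-follow-tau}, which is the summary of the properties of the modified LBFS algorithm proved in \cref{subsection:LBFS}. That corollary states that for the LBFS ordering $\tau$ returned by \cref{alg:LBFSwithO} on input $(\mathcal{C}, O)$, if $a-b \in \mathcal{C}$ and $a\rightarrow b \in O$, then $\tau(a) < \tau(b)$. Our hypothesis places us exactly in this situation with $a = u$ and $b = v$, so we obtain $\tau(u) < \tau(v)$ immediately.

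There is no real obstacle here; this observation is essentially a bookkeeping lemma that packages together the initialization behavior of \cref{alg:constructMEC} with the consistency property of the modified LBFS ordering. The only thing to be a little careful about is that $u, v \in V_{\mathcal{C}}$ (which is immediate from $u-v \in \mathcal{C}$), so that \cref{corr:directed-edges-of-O-follow-tau} indeed applies. The combination of these two simple steps yields both conclusions, completing the proof.
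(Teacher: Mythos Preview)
Your proposal is correct and follows essentially the same approach as the paper: both argue that the initialization $M = U_M(M_1, M_2, O)$ gives $u\rightarrow v \in M$, and both appeal to the consistency of the modified LBFS ordering with $O$ for $\tau(u) < \tau(v)$. Your proof is in fact slightly more explicit than the paper's (you spell out why the orientation survives later modifications, and you invoke \cref{corr:directed-edges-of-O-follow-tau} directly rather than routing through \cref{corr-of-obs:when-u-v-becomes-directed}), but the underlying argument is the same.
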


\begin{proof}
At the initialization step, line \ref{algconstructMEC:M-initialization}, of \cref{alg:constructMEC}, if $u\rightarrow v \in O$, then $u\rightarrow v \in M$. From \cref{item-obs:when-u-v-becomes-directed:u-v-is-in-O} of \cref{corr-of-obs:when-u-v-becomes-directed}, it follows that $\tau(u) < \tau(v)$.
\end{proof}

\begin{observation}
\label{obs:directed-edge-respects-tau}
For $a \in \{1,2\}$, and for any \ucc{} $\mathcal{C}$ of $M_a$, let $\tau$ be the LBFS ordering returned by \cref{alg:LBFSwithO} when it is called at line~\ref{algconstructMEC:contructLBFS} of \cref{alg:constructMEC}. Then, for $u-v \in \mathcal{C}$, if $u\rightarrow v \in M$, then $\tau(u) < \tau(v)$.
\end{observation}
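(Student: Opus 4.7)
The plan is to observe that \cref{obs:directed-edge-respects-tau} is an immediate consequence of \cref{corr-of-obs:when-u-v-becomes-directed}, which has already been established. No new graph-theoretic argument is required; the corollary does all the work, and the only task is to read off the ordering information explicitly recorded in each of its three cases.

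More concretely, I would proceed as follows. Fix $a \in \{1,2\}$, an undirected connected component $\mathcal{C}$ of $M_a$, and the LBFS ordering $\tau$ produced by \cref{alg:LBFSwithO} at line~\ref{algconstructMEC:contructLBFS}. Suppose $u - v \in \mathcal{C}$ and $u \rightarrow v \in M$. Apply \cref{corr-of-obs:when-u-v-becomes-directed}: exactly one of three situations must hold. In \cref{item-corr-of-obs:when-u-v-becomes-directed:u-v-is-in-O}, $u \rightarrow v$ is strongly protected in $O$ and the conclusion $\tau(u) < \tau(v)$ is stated outright. In \cref{item-corr-of-obs:when-u-v-becomes-directed:u-v-is-part-of-a-directed-cp}, there is a \cp{} $(p_1, p_2, \ldots, p_{l-1} = u, p_l = v)$ in $\mathcal{C}$ starting from some directed edge of $O$, and the corollary asserts $\tau(p_1) < \tau(p_2) < \cdots < \tau(p_{l-1}) < \tau(p_l)$, so in particular $\tau(u) < \tau(v)$. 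In \cref{item-corr-of-obs:when-u-v-becomes-directed:u-v-is-part-of-a-directed-cycle}, the corollary again states explicitly that $\tau(z) < \tau(v)$ for every node $z \neq v$ in the \cp{} from $(v,u)$ to $a$, which includes $z = u$.

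Since the desired inequality $\tau(u) < \tau(v)$ is an explicit part of the conclusion of every case of \cref{corr-of-obs:when-u-v-becomes-directed}, the proof is complete. The ``hard part'' was already absorbed into the proofs of \cref{obs:when-u-v-becomes-directed} and its corollary (which in turn relied on \cref{obs:chordless-path-and-LBFS-ordering} and \cref{corr:directed-edges-of-O-follow-tau} to pin down the LBFS ordering along chordless paths starting from directed edges of $O$); here there is no remaining obstacle, only a three-line case distinction. I would therefore write the proof as essentially a one-line invocation of \cref{corr-of-obs:when-u-v-becomes-directed}, noting that $\tau(u) < \tau(v)$ appears in the conclusion of each of its three clauses.
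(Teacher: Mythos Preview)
Your proposal is correct and takes essentially the same approach as the paper: both arguments reduce to \cref{corr-of-obs:when-u-v-becomes-directed}, reading off $\tau(u) < \tau(v)$ from each of its three cases. The paper's proof differs only cosmetically, first splitting on whether the edge was directed at initialization (in which case it shows $u\rightarrow v\in O$ and invokes \cref{item-1-of-obs:when-u-v-is-directed-reverse}) or during the while loop (in which case it invokes items~2 and~3 of \cref{corr-of-obs:when-u-v-becomes-directed}); your direct appeal to all three clauses of the corollary is slightly cleaner but relies on the same machinery.
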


\begin{proof}
Without loss of generality, let us assume $a = 1$. For $u-v \in \mathcal{C}$, if $u\rightarrow v \in M$, then either it got directed at the initialization step (line \ref{algconstructMEC:M-initialization} of \cref{alg:constructMEC}), or it gets directed at lines~\ref{algconstructMEC:constraint-for-orientation}-\ref{algconstructMEC:if-end} of \cref{alg:constructMEC}.

At the initialization step, $M = U_M(M_1, M_2, O)$. Therefore, if $u\rightarrow v \in M$ and $u\rightarrow v \notin M_1$, then either $u\rightarrow v \in M_2$ or $u\rightarrow v \in O$. If $u\rightarrow v \in M_2$, then $u,v \in V_{M_1} \cap V_{M_2} = I \cup V_{O_2}$. Since $O_2$ is an induced subgraph of $M_2$, therefore $u\rightarrow v \in O_2$.

Since $O$ is an extension of $(O_1, P_{11}, P_{12}, O_2, P_{21}, P_{22})$, from \cref{def:extension-of-O1-O2-P11-P12-P21-P22}, $u\rightarrow v \in O$. This further implies that if $u\rightarrow v \in M$, then either $u\rightarrow v \in O$ or it gets directed during the run of lines~\ref{algconstructMEC:constraint-for-orientation}-\ref{algconstructMEC:if-end} of \cref{alg:constructMEC}.

If $u\rightarrow v \in O$, then from \cref{item-1-of-obs:when-u-v-is-directed-reverse}, $\tau(u) < \tau(v)$. And, if $u\rightarrow v$ is directed during the run of lines~\ref{algconstructMEC:constraint-for-orientation}-\ref{algconstructMEC:if-end} of \cref{alg:constructMEC}, then from \cref{item-corr-of-obs:when-u-v-becomes-directed:u-v-is-part-of-a-directed-cp,item-corr-of-obs:when-u-v-becomes-directed:u-v-is-part-of-a-directed-cycle} of \cref{corr-of-obs:when-u-v-becomes-directed}, $\tau(u) < \tau(v)$.
\end{proof}

\begin{observation}
\label{item-2-of-obs:when-u-v-is-directed-reverse}
For $a \in \{1,2\}$, and for any \ucc{} $\mathcal{C}$ of $M_a$, let $\tau$ be the LBFS ordering obtained from \cref{alg:LBFSwithO}. Then, for $u-v \in \mathcal{C}$, if there exists $a-b \in \mathcal{C}$ such that either $a\rightarrow b \in M$ or $a \rightarrow b \in O$, and there exists a \cp{} $Q = (u_1=a, u_2=b, \ldots, u_{l-1} = u, u_l = v)$ from $(a,b)$ to $(u,v)$ in $\mathcal{C}$ then $\tau(u_1) < \tau(u_2) < \ldots < \tau(u_l)$, and for all $1 \leq i \leq l-1$, $u_i \rightarrow u_{i+1} \in M$. In particular, $u\rightarrow v \in M$.
\end{observation}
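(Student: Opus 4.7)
The plan is to prove the two conclusions in order: first the monotonicity of $\tau$ along $Q$, and then, using that monotonicity, to show by induction on $i$ that every edge of $Q$ gets directed in $M$ in the forward direction.

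For the ordering claim, the key inputs are \cref{obs:chordless-path-and-LBFS-ordering} (every chordless path in an LBFS-ordered chordal graph is strictly decreasing, then strictly increasing, with a unique interior minimum) and the fact that we already know $\tau(a) < \tau(b)$. If $a\rightarrow b \in O$, this inequality follows from \cref{corr:directed-edges-of-O-follow-tau}; if $a\rightarrow b \in M$, it follows from \cref{obs:directed-edge-respects-tau}. With $\tau(u_1)=\tau(a)<\tau(b)=\tau(u_2)$, no global minimum of $\tau$ along $Q$ can lie strictly after $u_1$, so by \cref{obs:chordless-path-and-LBFS-ordering} the sequence $\tau(u_1),\tau(u_2),\dots,\tau(u_l)$ must be strictly increasing.

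For the orientation claim, I would induct on $i$. For the base case $i=1$, either $a\rightarrow b=u_1\rightarrow u_2\in M$ by hypothesis, or $a\rightarrow b\in O$, in which case the initialization step (line~\ref{algconstructMEC:M-initialization} of \cref{alg:constructMEC}, together with \cref{item-1-of-obs:when-u-v-is-directed-reverse}) already places $u_1\rightarrow u_2$ in $M$. For the inductive step, assume $u_i\rightarrow u_{i+1}\in M$ and consider the triple $u_i, u_{i+1}, u_{i+2}$. Because $Q$ is a chordless path, $u_i$ and $u_{i+2}$ are non-adjacent in $\mathcal{C}$, so $u_i - u_{i+1} - u_{i+2}$ is an induced subgraph of $\mathcal{C}$. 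By the monotonicity established above, $\tau(u_i)<\tau(u_{i+1})<\tau(u_{i+2})$. Hence, when the while loop of \cref{alg:constructMEC} reaches $a=\tau(u_{i+2})$, $b=\tau(u_{i+1})$, the first disjunct of the condition on line~\ref{algconstructMEC:constraint-for-orientation} is satisfied with $c=\tau(u_i)$, and the algorithm will orient $u_{i+1}\rightarrow u_{i+2}$ in $M$ at line~\ref{algconstructMEC:add-orientation}.

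The one subtlety to verify is the \emph{timing}: at the moment we inspect the pair $(a,b)=(\tau(u_{i+2}),\tau(u_{i+1}))$, the edge $u_i\rightarrow u_{i+1}$ must already be present in $M$. The inductive hypothesis only says $u_i\rightarrow u_{i+1}\in M$ in the final graph. To close this gap, I would invoke \cref{corr:an-edge-of-an-ucc-gets-directed-in-M-either-at-initiazation-or-in-the-iteration-when-ucc-is-considered}: $u_i\rightarrow u_{i+1}$ is introduced either at initialization, or during the processing of $\mathcal{C}$ in the specific iteration $a=\tau(u_{i+1})$, $b=\tau(u_i)$. Since $\tau(u_{i+1})<\tau(u_{i+2})$, this iteration strictly precedes the one in which we orient $u_{i+1}\rightarrow u_{i+2}$, so the required edge is indeed already in $M$. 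This timing check is the only nontrivial bookkeeping; once it is in place the induction runs cleanly and, specializing to $i=l-1$, yields $u\rightarrow v\in M$.
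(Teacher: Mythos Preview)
Your proposal is correct and follows essentially the same approach as the paper: both establish $\tau(u_1)<\tau(u_2)$ from \cref{obs:directed-edge-respects-tau} (reducing the $a\rightarrow b\in O$ case to $a\rightarrow b\in M$ via initialization), apply \cref{obs:chordless-path-and-LBFS-ordering} to get monotonicity along the chordless path, and then argue inductively that the algorithm's line~\ref{algconstructMEC:constraint-for-orientation} orients each successive edge. Your explicit invocation of \cref{corr:an-edge-of-an-ucc-gets-directed-in-M-either-at-initiazation-or-in-the-iteration-when-ucc-is-considered} for the timing check is slightly more careful than the paper's treatment, which handles this implicitly by tracking the iteration indices directly, but the underlying argument is the same.
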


\begin{proof}
  Note that when $a\rightarrow b \in O$, from the initialization step (line~\ref{algconstructMEC:M-initialization}) of \cref{alg:LBFSwithO}, we have $a\rightarrow b \in M$. Thus, we can assume that $a \rightarrow b \in M$.
  
  Now, since $Q = (u_1=a, u_2=b, \ldots, u_{l-1} = u, u_l = v)$ is a \cp{} from $(a,b)$ to $(u,v)$ in $\mathcal{C}$, and $a \rightarrow b \in M$, \cref{obs:directed-edge-respects-tau} implies that $\tau(u_1=a) < \tau(u_2 =b)$. Since $Q$ is a \cp{} in the undirected chordal graph $\mathcal{C}$ and $\tau$ is an LBFS ordering of $\mathcal{C}$, \cref{obs:chordless-path-and-LBFS-ordering} implies that we must then have $\tau(u_1) < \tau(u_2) < \ldots < \tau(u_{l-1}) < \tau(u_l)$.
  
  Since $u_1 = a$ and $u_2 = b$, we have $u_1\rightarrow u_2 \in M$. Lines \ref{algconstructMEC:while-start}-\ref{algconstructMEC:while-end} of \cref{alg:constructMEC} show that for $1< i < l$, in the iteration of the while loop at line \ref{algconstructMEC:while-start} when $a = i+1$, before the end of the iteration of the while loop at line \ref{algconstructMEC:inner-while} when $b = i$, we get $u_i \rightarrow u_{i+1} \in M$. Because, if at the initialization step, we do not have $u_i \rightarrow u_{i+1}$ in $M$ then due to the existence of the induced subgraph $u_{i-1} \rightarrow u_i - u_{i+1}$ in $M$, the if condition at line \ref{algconstructMEC:constraint-for-orientation} is satisfied and we get $u_i \rightarrow u_{i+1} \in M$ at line \ref{algconstructMEC:add-orientation}. Thus, we conclude that for all $1 \leq i \leq l-1$, $u_i \rightarrow u_{i+1} \in M$.
\end{proof}

We now prove \cref{lem:M-is-an-MEC}.
\begin{proof}[\textbf{Proof of \cref{lem:M-is-an-MEC}}]
We first show that $M$ is an MEC. Then we show that $\skel{M} = \skel{H}$. Together these two imply that $M$ is an MEC of $H$. 
\Cref{thm:nes-and-suf-cond-for-chordal-graph-to-be-an-MEC} provides necessary and sufficient conditions for an MEC.
\Cref{lem:M-is-chain-graph,lem:UCC-of-M-chordal,lem:a->b-c-does-not-exist,lem:every-directed-edge-of-M-is-strongly-ptotected} validate that $M$ obeys the required conditions mentioned in \cref{thm:nes-and-suf-cond-for-chordal-graph-to-be-an-MEC} to be an MEC. This further implies that $M$ is an MEC.
\begin{lemma}
\label{lem:M-is-chain-graph}
$M$ is a chain graph.
\end{lemma}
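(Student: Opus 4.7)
The plan is to argue by contradiction: suppose $M$ contains a directed cycle $C = (v_1, v_2, \dots, v_l, v_1)$, i.e., a cycle carrying at least one directed edge. I will reduce this to a directed cycle in one of $M_1$, $M_2$, or $O$. Since $M_1$ and $M_2$ are MECs and $O$ is a partial MEC (it is an extension, see \cref{def:extension-of-O1-O2-P11-P12-P21-P22}), each of them is a chain graph by \cref{item-1-theorem-nec-suf-cond-for-MEC} of \cref{thm:nes-and-suf-cond-for-chordal-graph-to-be-an-MEC} and \cref{item-1-of-def:partial-MEC} of \cref{def:partial-MEC}, so producing such a cycle yields the desired contradiction.

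The first step is the structural observation that $V_{M_i} \cap V_O \subseteq V_{O_i}$ for $i \in \{1,2\}$: it follows from $V_{O_i} = S_i \cup N(S_i, H_i)$ and from $I = V_{H_1} \cap V_{H_2}$ being a vertex separator of $H$, so any neighbor in $H$ of $S_{3-i}\setminus I$ that lies in $V_{H_i}$ is forced into $I \subseteq V_{O_i}$. Combined with $O_i = M_i[V_{O_i}]$ (from \cref{item-1-of-def:shadow} of \cref{def:shadow}) and \cref{item-1-of-def:extension-of-O1-O2-P11-P12-P21-P22} of \cref{def:extension-of-O1-O2-P11-P12-P21-P22}, this implies that every edge of $M$ between two $V_{M_i}$ vertices is an edge of $\skel{M_i}$. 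Moreover, by \cref{corr:an-edge-of-an-ucc-gets-directed-in-M-either-at-initiazation-or-in-the-iteration-when-ucc-is-considered} and \cref{corr-of-obs:when-u-v-becomes-directed}, any directed edge $u \to v$ of $M$ with $u,v \in V_{M_i}$ that is not already directed in $M_i$ lies in a UCC $\mathcal{C}$ of $M_i$ and respects the LBFS ordering $\tau_\mathcal{C}$ used by Algorithm~\ref{alg:constructMEC}. Since $I$ separates $V_{H_1}\setminus I$ from $V_{H_2}\setminus I$, the cycle $C$ then decomposes into maximal segments each contained in $V_{M_1}$ or $V_{M_2}$ and meeting at vertices of $I$.

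I would then split into two cases. If $C$ lies entirely inside some $V_{M_i}$, transitions within each UCC of $M_i$ respect $\tau_\mathcal{C}$ while transitions between UCCs are forced edges of $M_i$; tracing every directed edge of $C$ via \cref{corr-of-obs:when-u-v-becomes-directed} back to either an $O$-directed edge or to the end of a chordless directed path inside a chordal UCC starting from such an edge, and concatenating with \cref{lem:concatenation-of-chordal-paths} within UCCs and with the chain-graph order of $M_i$ across UCCs, $C$ replays as a directed cycle in $M_i$, contradicting that $M_i$ is a chain graph. In the cross-$I$ case, each $V_{M_i}$-segment of $C$ with $I$-endpoints $u_a, u_b \in V_{O_i}$ and at least one directed edge lifts by the same tracing to a directed path in $M_i$, which via the shadow functions $P_{i1}, P_{i2}$ and the sufficient clauses \cref{subitem-2-of-item-3-of-def:extension-of-O1-O2-P11-P12-P21-P22,subitem-3-of-item-3-of-def:extension-of-O1-O2-P11-P12-P21-P22} of \cref{item-3-of-def:extension-of-O1-O2-P11-P12-P21-P22} of \cref{def:extension-of-O1-O2-P11-P12-P21-P22} forces a directed edge $u_a \to u_b \in O$; chaining these $O$-edges around $C$ (undirected segments lie inside a single UCC of $M_i$ and contribute undirected $O_i$-edges connecting the same $I$-vertices) yields a directed cycle in $O$, again a contradiction.

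The main obstacle is precisely this cross-$I$ lifting/contraction step: it requires tracking how the UCC-level orientations made by Algorithm~\ref{alg:constructMEC} interact with the triangle-free-path functions $P_{i1}, P_{i2}$ of the shadows, and verifying that each lifted directed path in $M_i$ genuinely contracts to a directed $O$-edge between its $I$-endpoints. I expect this case analysis to closely parallel the strongly-protected-edge analysis already carried out in the proof of \cref{copy-of-item-3-of-obs1:O-structure-for-existence-of-MEC}, and the technology developed there should supply most of the ingredients needed to complete the argument.
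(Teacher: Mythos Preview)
Your Case 1 (the cycle lies entirely in $V_{M_i}$) is headed in the right direction and matches the paper's high-level strategy: the paper also reduces to showing $M[V_{H_a}]$ is a chain graph, then further to showing $M[V_{\mathcal C}]$ has no directed cycle for each \ucc{} $\mathcal C$ of $M_a$. However, your sentence ``$C$ replays as a directed cycle in $M_i$'' is not correct when $C$ lies in a single \ucc{} of $M_i$: there all edges are undirected in $M_i$, so the contradiction cannot come from $M_i$ being a chain graph. The paper instead uses chordality to reduce to length-$3$ directed cycles, rules out the two- and three-directed-edge triangles via the LBFS order (\cref{obs:directed-edge-respects-tau}), and then spends several pages on the one-directed-edge triangle $x\to y - z - x$ via an induction on $\tau$ (\cref{statement-for-induction} and \cref{claim1-of-statement-for-induction,claim2-of-statement-for-induction,claim3-of-statement-for-induction}). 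Your sketch does not supply an argument for this last case, and that is where most of the work is.

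Your Case 2 has a genuine gap. You propose to take each $V_{M_i}$-segment of $C$ with $I$-endpoints $u_a,u_b$ and, via \cref{item-3-of-def:extension-of-O1-O2-P11-P12-P21-P22}, ``force a directed edge $u_a\to u_b\in O$'', then chain these edges into a directed cycle in $O$. But \cref{item-3-of-def:extension-of-O1-O2-P11-P12-P21-P22} only applies to an edge $u-v\in O_a$; there is no reason the two $I$-endpoints $u_a,u_b$ of a long segment are adjacent in $\skel O$ at all, so no $O$-edge need exist between them, and the contraction fails. The paper avoids this entirely: instead of producing a cycle in $O$, it proves the stronger statement that $M$ has \emph{no chordless cycle} of length $\geq 4$ (\cref{obs:no-chordless-cycle-in-M}). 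The argument lists the $O$-edges appearing along such a cycle as $(x_1,y_1),\dots,(x_m,y_m)$, shows $P_1((x_i,y_i),(x_{i+1},y_{i+1}))=1$ for each $i$ (using the segments inside $M_1$ or $M_2$ as \tfps{} feeding into the DPF), and then chains via \cref{item-3-of-def:extended-path-function} to get both $P_1((x_1,y_1),(x_m,y_m))=1$ and $P_1((x_m,y_m),(x_1,y_1))=1$, contradicting \cref{item-4-of-def:extension-of-O1-O2-P11-P12-P21-P22} (validity of the DPF). That use of the validity condition is the key idea your proposal is missing.
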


\begin{lemma}
\label{lem:UCC-of-M-chordal}
Every undirected connected component of $M$ is chordal.
\end{lemma}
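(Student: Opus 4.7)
Suppose for contradiction that some UCC of $M$ contains a chordless cycle $C = (v_1, v_2, \dots, v_k, v_1)$ of length $k \geq 4$. All edges of $C$ are undirected in $M$, and by \cref{lem:M-is-chain-graph}, any $M$-edge between two vertices lying in a common UCC of $M$ must be undirected. Consequently, producing any edge of $H$ between two non-consecutive vertices of $C$ would immediately yield a forbidden undirected chord of $C$ in $M$; equivalently, $C$ must also be chordless in $H$.

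In the first case, $V_C \subseteq V_{M_a}$ for some $a \in \{1,2\}$. Because $M$ was initialized as $U_M(M_1, M_2, O)$ at line~\ref{algconstructMEC:M-initialization} of \cref{alg:constructMEC} and undirected edges are preserved by this initialization, every edge of $C$ whose endpoints lie in $V_{M_a}$ must already be undirected in $M_a$, so $C$ is an undirected cycle of length $\geq 4$ lying inside a single UCC of $M_a$. By \cref{item-2-theorem-nec-suf-cond-for-MEC} of \cref{thm:nes-and-suf-cond-for-chordal-graph-to-be-an-MEC}, this UCC of the MEC $M_a$ is chordal, so $C$ admits an undirected chord in $M_a$, which is an edge of $H$ and hence, by the observation above, an undirected chord of $C$ in $M$---a contradiction.

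In the remaining case, $V_C$ meets both $V_{M_1} \setminus I$ and $V_{M_2} \setminus I$. Since $I$ separates these sets in $H$, $C$ must cross $I$ at least twice; let $v_{i_1}, \dots, v_{i_m}$ (with $m \geq 2$) be its $I$-crossings in cyclic order, and let $P_j$ be the subpath of $C$ between $v_{i_j}$ and $v_{i_{j+1}}$ with internal vertices in $V_{M_{a_j}} \setminus I$ for a single $a_j \in \{1,2\}$. Each $P_j$ is an undirected path in $M_{a_j}$ and must be chordless there, for otherwise any chord in $M_{a_j}$ would, as above, lift to an undirected chord of $C$ in $M$. My plan is to show $V_C \subseteq V_O$: since $I \subseteq S_1 \cap S_2 \subseteq V_O$ and $N(S_1 \cup S_2, H) \subseteq V_O$, the $I$-crossings and their immediate neighbors on $C$ automatically lie in $V_O$, and the only vertices that could a priori fail to be in $V_O$ are interior vertices of segments $P_j$ of length at least three. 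Once $V_C \subseteq V_O$ is established, every edge of $C$ is undirected in $O$ as well (an undirected edge of $M$ with both endpoints in $V_O$ cannot be directed in $O$, lest the initialization direct it in $M$), so $C$ becomes a chordless undirected cycle of length $\geq 4$ in a UCC of the partial MEC $O$, contradicting chordality of the UCCs of partial MECs (\cref{def:partial-MEC}) and hence completing Case B.

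The main obstacle is bounding the length of each segment $P_j$. To do this, I plan to exploit the LBFS ordering $\tau$ of the UCC $\mathcal{C}$ of $M_{a_j}$ containing $P_j$ returned by \cref{alg:LBFSwithO}, which is consistent with $O$ by \cref{corr:directed-edges-of-O-follow-tau}. Combining the shape constraint on $\tau$-values along chordless paths from \cref{obs:chordless-path-and-LBFS-ordering} with the fact that none of the orientation rules in lines~\ref{algconstructMEC:constraint-for-orientation}--\ref{algconstructMEC:if-end} of \cref{alg:constructMEC} were ever triggered on any edge of $P_j$ (so no induced $x{-}u{-}v$ with $x \rightarrow u$ in $M$ and no induced $u{-}x{-}v{-}u$ with $u \rightarrow x \rightarrow v$ in $M$ occurs along $P_j$), and performing a case split on whether $v_{i_j}$ and $v_{i_{j+1}}$ are adjacent in $M_{a_j}$, I expect to force $|P_j| \leq 3$. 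This segment-length bound is the delicate combinatorial heart of the argument; the remaining bookkeeping is routine given the chain-graph property of $M$ and the structure of the extension $O$.
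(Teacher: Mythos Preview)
Your Case A argument is correct and matches the paper's treatment of the ``fully inside one $M_a$'' case. The problem is Case B: the segment-length bound $|P_j|\le 3$ that you single out as ``the delicate combinatorial heart'' is not obtainable from the tools you list, and in fact it need not hold. The LBFS shape constraint from \cref{obs:chordless-path-and-LBFS-ordering} says only that $\tau$ is valley-shaped along a chordless path; it imposes no length restriction. The fact that no orientation rule fired on $P_j$ is automatic (all edges of $P_j$ are undirected in $M$ by hypothesis) and again gives no length bound. Concretely, take $S_1=I=\{a,f\}$ and let $H_1$ be a path $a\!-\!b\!-\!c\!-\!d\!-\!e\!-\!f$: the segment from $a$ to $f$ has four interior vertices, the middle two of which need not lie in $V_O=S_1\cup S_2\cup N(S_1\cup S_2,H)$. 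So the route ``show $V_C\subseteq V_O$ and invoke chordality of the partial MEC $O$'' is a dead end.

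What your plan never touches, and what actually does the work in the paper, is \cref{item-4-of-def:extension-of-O1-O2-P11-P12-P21-P22} of \cref{def:extension-of-O1-O2-P11-P12-P21-P22}: the validity of the derived path function $(P_1,P_2)=\EPF{O,P_{11},P_{12},P_{21},P_{22}}$, i.e., $P_1((u,v),(x,y))=1\Rightarrow P_1((x,y),(u,v))=0$. The paper's proof (via \cref{obs:no-chordless-cycle-in-M}) does not attempt to place $V_C$ inside $V_O$. Instead it lists, in cyclic order, those edges of $C$ whose endpoints lie in $V_O$; call them $(x_1,y_1),\dots,(x_m,y_m)$. Each consecutive pair is linked either by a \tfp{} in $O$ (when $y_i=x_{i+1}$) or by a \tfp{} inside a single $M_a$ (when the intervening segment lives in $V_{M_a}$), giving $P_1((x_i,y_i),(x_{i+1},y_{i+1}))=1$ in either case. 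Transitivity (\cref{item-3-of-def:extended-path-function}) then yields $P_1((x_1,y_1),(x_m,y_m))=1$ together with $P_1((x_m,y_m),(x_1,y_1))=1$, contradicting validity. Note that this argument uses the long segments you were trying to eliminate as the very witnesses for $P_{a1}=1$; their length is irrelevant, only the fact that they are \tfps{} in $M_a$ matters.
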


\begin{lemma}
\label{lem:a->b-c-does-not-exist}
$M$ does not have an induced subgraph of the form $a\rightarrow b -c$.
\end{lemma}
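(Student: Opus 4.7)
The plan is to prove this by contradiction: suppose that $a\rightarrow b - c$ is an induced subgraph of $M$ (so $a$ and $c$ are non-adjacent in $M$). Since the algorithm only ever adds orientations (lines~\ref{algconstructMEC:constraint-for-orientation}--\ref{algconstructMEC:if-end}) and never reverses them, and the initialization $M \leftarrow U_M(M_1,M_2,O)$ only yields an undirected edge when every graph containing that edge keeps it undirected, the edge $b-c$ must appear as an undirected edge in some UCC $\mathcal{C}$ of $M_i$ for some $i \in \{1,2\}$. I will argue that $a,b,c$ all lie in $\mathcal{C}$ with $a-b,\,b-c$ being undirected edges of $\mathcal{C}$ and $a,c$ non-adjacent in $\mathcal{C}$, and then use the LBFS ordering $\tau$ that the algorithm uses on $\mathcal{C}$ to derive a contradiction.

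To place $a$ inside $\mathcal{C}$, I split on whether $a \in V_{H_i}$. If $a \in V_{H_i}$, then $(a,b)$ is an edge of $M_i$ in some orientation. Since $M_i$ is an MEC, item~\ref{item-3-theorem-nec-suf-cond-for-MEC} of \cref{thm:nes-and-suf-cond-for-chordal-graph-to-be-an-MEC} forbids the induced subgraph $a\rightarrow b - c$ in $M_i$; and $a\leftarrow b \in M_i$ would force $a\leftarrow b \in M$ after initialization (using that $M_1,M_2,O$ are pairwise synchronous), contradicting $a\rightarrow b \in M$. Hence $a-b \in M_i$, placing $a$ into $\mathcal{C}$ and leaving $a,c$ non-adjacent in $\mathcal{C}$ (any adjacency there would lift to $M$). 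If $a \notin V_{H_i}$, the separator property of $I$ forces $b \in I$, so $a,b \in V_{H_{3-i}}$; a symmetric argument on $M_{3-i}$ either reduces to the previous case (when $c \in I$) or forces $a\rightarrow b \in O$, lifting an orientation from $O_{3-i}$ to $O$ via item~\ref{item-1-of-def:extension-of-O1-O2-P11-P12-P21-P22} of \cref{def:extension-of-O1-O2-P11-P12-P21-P22}. In that residual situation, $b \in I \subseteq S_1 \cup S_2$ implies $c \in N(S_1 \cup S_2,H) \subseteq V_O$, so $a,b,c \in V_O$; then \cref{lem:O-is-an-induced-subgraph-of-M} gives $M[V_O]=O$, so $a\rightarrow b - c$ is induced in $O$, contradicting item~\ref{item-2-of-def:partial-MEC} of \cref{def:partial-MEC}.

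Having placed $a,b,c$ into $\mathcal{C}$, let $\tau$ be the LBFS ordering of $\mathcal{C}$ used by the algorithm at line~\ref{algconstructMEC:contructLBFS}. By \cref{obs:directed-edge-respects-tau} applied to $a\rightarrow b \in M$, we have $\tau(a) < \tau(b)$. If $\tau(c) < \tau(b)$, then $a$ and $c$ are both neighbors of $b$ with smaller $\tau$-rank, and \cref{obs:LBGS-gives-PEO} forces $a-c \in \mathcal{C}$, contradicting non-adjacency. Otherwise $\tau(b) < \tau(c)$, and when the outer index of the while loop reaches $\tau(c)$ and the inner index reaches $\tau(b)$, the witness $x = a$ satisfies $\tau(x) < \tau(b)$, produces the induced subgraph $a - b - c$ in $\mathcal{C}$, and satisfies $a\rightarrow b \in M$ already at that moment, either from initialization or from the prior iteration in which outer $= \tau(b)$ (by \cref{corr:an-edge-of-an-ucc-gets-directed-in-M-either-at-initiazation-or-in-the-iteration-when-ucc-is-considered}); the test at line~\ref{algconstructMEC:constraint-for-orientation} then fires and orients $b \rightarrow c$, contradicting that $b-c$ remains undirected in $M$.

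The main obstacle will be the case $a \notin V_{H_i}$: there I cannot directly work inside $M_i$, and must instead chase $a\rightarrow b$ through $M_{3-i}$, $O_{3-i}$, and finally into $O$ using the extension axioms, while simultaneously using $b \in I$ to pull $c$ into $V_O$ so that the partial-MEC condition on $O$ applies. Every other step is a routine consequence of the LBFS/ordering results in \cref{subsection:LBFS} and the PEO property of chordal graphs.
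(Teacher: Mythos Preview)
Your proposal is correct and follows the same approach as the paper's proof. The paper streamlines the argument in two places: once $a,b,c$ lie in a common \ucc{} $\mathcal{C}$ of some $M_i$, it invokes \cref{item-2-of-obs:when-u-v-is-directed-reverse} directly (which immediately gives $b\rightarrow c\in M$ from the \cp{} $(a,b,c)$ and $a\rightarrow b\in M$) rather than re-deriving this via your LBFS/PEO case split on $\tau(c)$ versus $\tau(b)$, and in the case $a\notin V_{H_i}$ it observes that $b\in I$ alone already forces $a,b,c\in I\cup N(I,H)\subseteq V_O$, making your sub-case on $c\in I$ and the detour through $O_{3-i}$ unnecessary.
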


\begin{lemma}
\label{lem:every-directed-edge-of-M-is-strongly-ptotected}
Every directed edge $u \rightarrow v \in M$ is strongly protected in $M$.
\end{lemma}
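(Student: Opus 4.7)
The plan is to prove that every directed edge $u \rightarrow v \in M$ fits one of the four strong-protection templates of \cref{fig:strongly-protected-edge}, by tracing \emph{how} $u \rightarrow v$ ends up directed in $M$ during Algorithm \ref{alg:constructMEC}. I will split the directed edges of $M$ into three groups according to their origin: (A) $u \rightarrow v$ was already directed inside $M_1$ or $M_2$, and so enters $M$ through the Markov union at line \ref{algconstructMEC:M-initialization}; (B) $u - v$ is undirected in the relevant $M_a$ but $u \rightarrow v \in O$, and so the orientation enters $M$ through the $O$-component of the Markov union; (C) $u - v$ is undirected in both $O$ and the $M_a$ containing it, and the orientation is introduced inside the while-loop at lines \ref{algconstructMEC:constraint-for-orientation}--\ref{algconstructMEC:if-end}.

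Case (A) will be handled by lifting the strong-protection pattern already present in $M_a$ to $M$: the skeletons of $M$ and $M_a$ agree on $V_{M_a}$ because $H_a$ is an induced subgraph of $H$, and Algorithm \ref{alg:constructMEC} never reverses a directed edge, so the directed edges of the $M_a$-pattern survive in $M$. The subtle sub-case is pattern (d), whose two undirected arms $u - w$ and $u - w'$ may be oriented in $M$; I will check that every possible orientation either preserves pattern (d) or degenerates into one of the other three patterns (e.g., an orientation $u \rightarrow w$ combines with the already-directed $w \rightarrow v$ and $u \rightarrow v$ to give pattern (c)), so strong protection of $u \rightarrow v$ is preserved. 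Case (C) is essentially immediate: the if-condition at line \ref{algconstructMEC:constraint-for-orientation} that triggered the orientation is literally the statement of pattern (a) (first disjunct) or pattern (c) (second disjunct) inside the UCC $\mathcal{C}$; since $\skel{M}=\skel{H}$ and the witnessing directed edges are not reversed later, the pattern persists in the final $M$.

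Case (B) is the delicate one and requires \cref{item-3-of-def:extension-of-O1-O2-P11-P12-P21-P22} of \cref{def:extension-of-O1-O2-P11-P12-P21-P22}, which gives three sub-cases for why $u \rightarrow v$ appears in $O$. In sub-case (B.i), $u \rightarrow v$ is strongly protected already in $O$, and this lifts to $M$ because $M[V_O]=O$ by \cref{lem:O-is-an-induced-subgraph-of-M}. In sub-case (B.ii), there is an edge $x - y \in O_a$ with $x \rightarrow y \in O$ and a \tfp{} from $(x,y)$ to $(u,v)$ in $M_a$; by \cref{obs:undirected-tfp-with-ud-end-implies-source-is-ud} and \cref{obv:tfps-are-chordless-in-chordal-graphs} this \tfp{} is a chordless path in a UCC of $M_a$, and \cref{item-2-of-obs:when-u-v-is-directed-reverse} forces the algorithm to orient the whole path as $x \rightarrow y \rightarrow \dots \rightarrow w \rightarrow u \rightarrow v$; since the path is chordless, $w$ is not adjacent to $v$ in $\skel{H}$, giving pattern (a) in $M$. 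Sub-case (B.iii) supplies two \cps{} in a UCC of $M_a$---one from $(x,y)$ to $v$ and one from $(v,u)$ to $x$---which \cref{item-2-of-obs:when-u-v-is-directed-reverse} again promotes to directed paths in $M$ ending at $v$; a short case analysis on whether the predecessor $w$ of $v$ on the first oriented path equals $u$, is adjacent to $u$ in $\skel{H}$, or is not adjacent to $u$, then yields pattern (a), pattern (c), or pattern (b), respectively.

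The main obstacle will be sub-case (B.iii): I must verify that the two \cps{}, once promoted to directed paths in $M$, really do combine with $u \rightarrow v$ to produce a strong-protection configuration in every sub-case, without any configuration slipping through the cracks. A secondary but still non-trivial obstacle is the pattern-(d) analysis in case (A), because the reorientation of the arms $u-w$ and $u-w'$ in $M$ is controlled by the LBFS ordering returned at line \ref{algconstructMEC:contructLBFS}, so the argument must be carried out uniformly in that ordering rather than for a specific choice of $\tau$.
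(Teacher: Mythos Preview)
Your decomposition into (A)/(B)/(C) is essentially the paper's approach with a minor repackaging: the paper collapses your (B) and (C) into a single ``$u - v \in M_a$'' case and invokes \cref{corr-of-obs:when-u-v-becomes-directed}, whose three items correspond to your (B.i), (B.ii), (B.iii); your separate treatment of (C) directly from the if-condition at line~\ref{algconstructMEC:constraint-for-orientation} is correct and arguably cleaner. Your handling of (A), including the pattern-(d) degeneration, and of (B.i)/(B.ii) is fine.

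The genuine gap is in (B.iii). Let $w$ be the predecessor of $v$ on the first chordless path $Q_1$ from $(x,y)$ to $v$. Your trichotomy ($w = u$ / $w$ adjacent to $u$ / $w$ not adjacent to $u$) misfires in two ways. First, by the PEO property (\cref{obs:LBGS-gives-PEO}) applied to $v$ together with its lower-$\tau$ neighbours $u$ and $w$, the vertex $w$ is \emph{always} adjacent to $u$ in $\mathcal{C}$, so your ``not adjacent $\Rightarrow$ pattern~(b)'' branch is vacuous. Second, in the ``adjacent'' branch, pattern~(c) requires $u \rightarrow w \in M$, but you have not ruled out $w \rightarrow u$ or $w - u$; neither of these yields any strong-protection template for $u \rightarrow v$ on its own. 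This is precisely where the \emph{second} path $Q_2 = (v, u, v_3, \dots, v_m = x)$ is needed, and your proposal never actually uses it. The paper's fix: if $u \rightarrow w \notin M$, then the fully-directed prefix of $Q_1$ up to $w$, followed by the edge from $w$ to $u$, followed by $Q_2$ from $u$ back to $x$, would form a directed cycle in $M$---contradicting \cref{lem:M-is-chain-graph}---unless some edge of $Q_2$ is reversed, i.e., $v_{j+1} \rightarrow v_j \in M$ for some $j \geq 2$. Feeding this reversed edge into \cref{item-2-of-obs:when-u-v-is-directed-reverse} along the chordless path $(v_{j+1}, v_j, \dots, v_2 = u, v_1 = v)$ then gives $v_3 \rightarrow u \rightarrow v$ with $v_3$ non-adjacent to $v$ (since $Q_2$ is chordless), which is pattern~(a).
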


Proofs of \cref{lem:M-is-chain-graph,lem:UCC-of-M-chordal,lem:a->b-c-does-not-exist,lem:every-directed-edge-of-M-is-strongly-ptotected} are given below. 

We now prove that $\skel{M} = H$. From the initialization step (line \ref{algconstructMEC:M-initialization} of \cref{alg:constructMEC}), $M = U_M(M_1, M_2, O)$. The remaining lines of \cref{alg:constructMEC} orient the undirected edges of $M$ and do not change the skeleton of $M$. Therefore, $\skel{M} = \skel{M_1} \cup \skel{M_2} \cup \skel{O}$. 

From the construction, $\skel{M_1} = \skel{H_1}$, $\skel{M_2} = \skel{H_2}$, and $\skel{H} = \skel{H_1} \cup \skel{H_2} \cup \skel{O}$. This implies $\skel{M} = \skel{H}$. Thus, $M$ is an MEC of $H$, as it is an MEC with $H$ as its skeleton.
\end{proof}

We will now prove \cref{lem:M-is-chain-graph,lem:UCC-of-M-chordal,lem:a->b-c-does-not-exist,lem:every-directed-edge-of-M-is-strongly-ptotected}. If preferred, readers may choose to bypass the proofs and proceed directly to \cref{lem:M1-and-M2-are-projections-of-M}, where we demonstrate that $M_1$ and $M_2$ are projections of $M$.

First, we present the proof of \cref{lem:M-is-chain-graph}. The proof of \cref{lem:UCC-of-M-chordal} can be found on page~\pageref{proof-of-lem:UCC-of-M-chordal}, the proof of \cref{lem:a->b-c-does-not-exist} is on page~\pageref{proof-of-lem:a->b-c-does-not-exist}, and the proof of \cref{lem:every-directed-edge-of-M-is-strongly-ptotected} is on page~\pageref{proof-of-lem:every-directed-edge-of-M-is-strongly-ptotected}.

\begin{proof}[\textbf{Proof of \cref{lem:M-is-chain-graph}}]
The following observations, to be proved later, imply that $M$ is a chain graph.

\begin{observation}
\label{obs:induced-graph-of-M-on-M1-is-a-chain-graph}
For $a \in \{1,2\}$, $M[V_{H_a}]$ is a chain graph.
\end{observation}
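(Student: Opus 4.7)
The plan is to exhibit an explicit topological ordering $\sigma$ of $V_{H_a}$ that respects every directed edge of $M[V_{H_a}]$. Since $M_a$ is a chain graph, \cref{prop:every-chain-graph-has-a-topological-ordering} yields a topological ordering of its UCCs viewed as a DAG. Moreover, at line \ref{algconstructMEC:contructLBFS} of \cref{alg:constructMEC}, for each UCC $\mathcal{C}$ of $M_a$ the algorithm computes an LBFS ordering $\tau_{\mathcal{C}}$ via $\text{LBFS}(\mathcal{C}, O)$, which is consistent with $O$ by \cref{lemma:alg:LBFS-with-O-returns-LBFS-ordering-and-consistent-with-O}. I would then combine these: order the UCCs of $M_a$ according to a topological ordering that also respects directed edges of $O$ between distinct UCCs of $M_a$, and within each UCC $\mathcal{C}$ place vertices according to $\tau_{\mathcal{C}}$.

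To conclude that $M[V_{H_a}]$ is a chain graph, I would verify that every directed edge of $M[V_{H_a}]$ respects $\sigma$. Edges with at least one endpoint in $V_{H_a}\setminus I$ must lie in $H_a$ (since $I$ separates $V_{H_1}\setminus I$ from $V_{H_2}\setminus I$ in $H$) and hence in $M_a$; if directed, they respect the inter-UCC ordering (when between distinct UCCs of $M_a$) or the LBFS ordering $\tau_{\mathcal{C}}$ (when within a UCC, by \cref{obs:directed-edge-respects-tau}). Edges with both endpoints in $I$ may come from $M_a$, $M_{3-a}$, or $O$; edges of $M_{3-a}$ between $I$-nodes coincide exactly with edges of $O_{3-a}$, since $O_{3-a} = M_{3-a}[V_{O_{3-a}}]$ and $I \subseteq V_{O_{3-a}}$, and their directions are preserved in $O$ by \cref{item-1-of-def:extension-of-O1-O2-P11-P12-P21-P22} of \cref{def:extension-of-O1-O2-P11-P12-P21-P22}. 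So every directed edge of $M[V_{H_a}]$ between $I$-nodes is either already in $M_a$ (and handled as above) or directly in $O$, and in either case $\sigma$ places its endpoints consistently, using \cref{lemma:alg:LBFS-with-O-returns-LBFS-ordering-and-consistent-with-O} when both endpoints belong to a common UCC of $M_a$.

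The main obstacle is to show that the refined inter-UCC ordering is actually acyclic, i.e.\ that directed edges of $M_a$ and of $O$ between distinct UCCs of $M_a$ cannot together form a cycle at the UCC level. I expect the resolution to come from the validity of the derived path function, guaranteed by \cref{item-4-of-def:extension-of-O1-O2-P11-P12-P21-P22} of \cref{def:extension-of-O1-O2-P11-P12-P21-P22}: any such UCC-level cycle would translate into a pair of \tfps{} between two edges of $O$ yielding $P_1((u,v),(x,y)) = P_1((x,y),(u,v)) = 1$, contradicting \cref{item-2-of-def:valid-epfs} of \cref{def:valid-epfs}. Carefully translating a UCC-level cycle into such a pair of \tfps{}, using the chordality of the relevant UCCs and \cref{obs:concatenate-triangle-free-paths}, is the delicate step of the argument.
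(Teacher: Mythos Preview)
Your overall strategy (use the LBFS orderings $\tau_{\mathcal{C}}$ inside each \ucc{} of $M_a$) is the right one and matches the paper's core ingredient, but you have misidentified where the difficulty lies. The ``main obstacle'' you flag---acyclicity of the inter-\ucc{} ordering when you mix in directed edges of $O$---is in fact a non-issue, and your proposed resolution via the validity of the derived path function is unnecessary (and would be awkward to carry out).

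The key observation you are missing, which the paper exploits directly, is that every directed edge of $M[V_{H_a}]$ between vertices lying in \emph{different} \uccs{} of $M_a$ is already present in $M_a$ with the same orientation. Indeed, from the construction of $M$ one has, for $p,q\in V_{H_a}$: if $p\rightarrow q\in M$ then either $p\rightarrow q\in M_a$ or $p-q\in M_a$ (the case $q\rightarrow p\in M_a$ is ruled out because the initialization $M=U_M(M_1,M_2,O)$ already puts $q\rightarrow p$ into $M$). Hence any inter-\ucc{} directed edge of $M[V_{H_a}]$ is an inter-\ucc{} directed edge of $M_a$, and the inter-\ucc{} quotient is just that of the chain graph $M_a$, which is automatically acyclic. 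No appeal to \cref{def:valid-epfs} is needed.

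The paper uses this observation to avoid building a global ordering at all: it assumes a directed cycle $C$ in $M[V_{H_a}]$, notes (by the compatibility above) that $C$ is also a cycle in $M_a$, and since $M_a$ is a chain graph all edges of $C$ must be undirected in $M_a$---so $C$ lies entirely inside a single \ucc{} $\mathcal{C}$ of $M_a$. The contradiction then comes from \cref{obs:directed-edge-respects-tau} applied inside $\mathcal{C}$ (this is packaged as \cref{obs:no-directed-cycle-exists-in-a-UCC}). Your approach would also go through once you replace the ``main obstacle'' paragraph by the one-line remark above, but the paper's reduction to a single \ucc{} is shorter and sidesteps the need to splice the local orderings together.
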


\begin{observation}
\label{obs:no-chordless-cycle-in-M}
$M$ does not have any chordless cycle of length greater than 3.
\end{observation}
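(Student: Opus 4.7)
My plan is to suppose for contradiction that $M$ contains a chordless cycle $C = (u_1, u_2, \dots, u_l, u_1)$ with $l \geq 4$, choose such a $C$ of minimum possible length, and derive a contradiction by analysing how the vertices of $C$ distribute with respect to the vertex separator $I = V_{H_1} \cap V_{H_2}$. The proof will rest on a preliminary observation that any MEC $N$ has no chordless cycle of length $\geq 4$: if every edge of such a cycle were undirected, the cycle would sit inside a single \ucc{} of $N$ which is chordal by \cref{item-2-theorem-nec-suf-cond-for-MEC} of \cref{thm:nes-and-suf-cond-for-chordal-graph-to-be-an-MEC}, yielding a chord; otherwise some edge $u_i \to u_{i+1}$ is directed, and by \cref{item-3-theorem-nec-suf-cond-for-MEC} the next edge $u_{i+1} \to u_{i+2}$ must also be directed (else $u_i \to u_{i+1} - u_{i+2}$ is a forbidden induced configuration), which propagates around the cycle to produce a directed cycle, violating the chain graph condition \cref{item-1-theorem-nec-suf-cond-for-MEC}.

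First I would handle the case where all vertices of $C$ lie in $V_{H_a}$ for some single $a \in \{1,2\}$. From the initialization of $M$ via $U_M(M_1, M_2, O)$ at line~\ref{algconstructMEC:M-initialization} of \cref{alg:constructMEC} and the fact that subsequent steps only convert undirected edges to directed ones (never reversing an existing orientation), any edge $u-v$ of $M$ with $u,v \in V_{H_a}$ is either undirected in $M_a$ or directed in the same direction as in $M$; hence $C$ traces a forward cycle in $M_a$. Moreover, because $\skel{M_a} \subseteq \skel{M}$, any chord of $C$ in $\skel{M_a}$ would produce an edge in $E_M$ between two non-consecutive vertices of $C$, contradicting chordlessness of $C$ in $M$. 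Thus $C$ descends to a chordless forward cycle of length $\geq 4$ in the MEC $M_a$, contradicting the preliminary observation.

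The remaining case is when $C$ has vertices in both $V_{H_1} \setminus I$ and $V_{H_2} \setminus I$. Since $I$ is a vertex separator of $H$ and $\skel{M} = \skel{H} = \skel{H_1} \cup \skel{H_2}$, the cycle must contain at least two $I$-vertices $w, w'$ that split $C$ into a $V_{H_1}$-arc $P_1$ and a $V_{H_2}$-arc $P_2$, both chordless paths (else one would give a chord of $C$). Both endpoints $w, w'$ lie in $V_O$ since $I \subseteq S_1 \cap S_2 \subseteq V_O$, and the $C$-neighbors of $w, w'$ lie in $N(I, H) \subseteq V_O$, so each arc enters the shadow $O$ through a controlled interface. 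I would then argue by a case analysis on the edges of $P_1$ (and symmetrically $P_2$), using chordality of the \uccs{} of $M_1$ from \cref{item-2-theorem-nec-suf-cond-for-MEC}, the equivalence between \tfps{} and \cps{} in chordal \uccs{} via \cref{obv:tfps-are-chordless-in-chordal-graphs}, the constraints imposed on $O$ in \cref{def:extension-of-O1-O2-P11-P12-P21-P22}, and the orientation rules used in \cref{alg:constructMEC} via \cref{corr-of-obs:when-u-v-becomes-directed}, to produce either a shortcut edge in $\skel{M_1} \cup \skel{M_2} = \skel{M}$ between two non-consecutive vertices of $C$ (a chord of $C$, contradiction) or a strictly shorter chordless cycle of $M$ (contradicting minimality of $l$). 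The bulk of the technical effort --- and the main obstacle --- lies in this last case, especially in handling arcs $P_a$ that are not confined to a single \ucc{} of $M_a$, where directed edges internal to the arc must be accounted for by invoking the chordless-path structure guaranteed by \cref{item-corr-of-obs:when-u-v-becomes-directed:u-v-is-part-of-a-directed-cp,item-corr-of-obs:when-u-v-becomes-directed:u-v-is-part-of-a-directed-cycle} of \cref{corr-of-obs:when-u-v-becomes-directed}.
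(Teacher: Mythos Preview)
Your handling of the first case (all vertices of $C$ in a single $V_{H_a}$) is correct and essentially matches the paper, though you use a self-contained ``MECs have no long chordless cycles'' observation whereas the paper invokes the already-proved fact that $M[V_{H_a}]$ is a chain graph.

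The genuine gap is in your second case. Two problems:

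\textbf{(1) The two-arc decomposition is not valid.} A chordless cycle meeting both $V_{H_1}\setminus I$ and $V_{H_2}\setminus I$ need not cross $I$ exactly twice; it can cross $I$ any even number of times, producing many alternating arcs. Minimality of $l$ does not prevent this (e.g.\ take $I=\{a,b,c,d\}$, with $a-x-b$ and $c-y-d$ in $H_1$, and $b-z-c$ and $d-w-a$ in $H_2$; the resulting 8-cycle is chordless and crosses $I$ four times). So your plan to analyse a single $V_{H_1}$-arc $P_1$ and a single $V_{H_2}$-arc $P_2$ does not cover the general situation.

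\textbf{(2) You are missing the key ingredient.} The paper does not try to manufacture a chord or a shorter cycle at all. Instead it enumerates, in cyclic order, \emph{all} edges of $C$ that lie in $O$ (there are at least two, since $C$ touches $I$ and $I\cup N(I,H)\subseteq V_O$), say $(x_1,y_1),\dots,(x_m,y_m)$. For each consecutive pair it shows $P_1((x_i,y_i),(x_{i+1},y_{i+1}))=1$ in the derived path function $(P_1,P_2)=\EPF{O,P_{11},P_{12},P_{21},P_{22}}$: either the two edges are adjacent in $C$ and hence in $O$, or the $C$-segment between them is a \tfp{} entirely inside one $M_a$, giving $P_{a1}=1$. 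Transitivity (\cref{subitem-1-of-item-3-of-def:extended-path-function} of \cref{def:extended-path-function}) then forces $P_1((x_1,y_1),(x_m,y_m))=1$ \emph{and} $P_1((x_m,y_m),(x_1,y_1))=1$, contradicting \cref{item-4-of-def:extension-of-O1-O2-P11-P12-P21-P22} of \cref{def:extension-of-O1-O2-P11-P12-P21-P22} (validity of the DPF). You cite \cref{def:extension-of-O1-O2-P11-P12-P21-P22} in passing, but never isolate this validity condition; it is precisely what replaces the intricate chord-finding case analysis you sketch, and it handles arbitrarily many $I$-crossings uniformly.
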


Suppose $M$ is not a chain graph. Then $M$ has a directed cycle. Choose a cycle $C = (u_1, u_2, \ldots, u_l, u_{l+1} = u_1)$ of the smallest length. $l$ must be greater than two.

Suppose $l = 3$. We claim that either all the nodes of the cycle are in $M_1$ (i.e., $u_1, u_2, u_3 \in V_{M_1}$) or all the nodes of the cycle are in $M_2$ (i.e., $u_1, u_2, u_3 \in V_{M_2}$).

Suppose this is not true. Then there exists a node in the cycle that is in $M_1$ but not in $M_2$, and there exists a node in the cycle that is in $M_2$ but not in $M_1$. Without loss of generality, let us assume that $u_1$ is in $M_1$ but not in $M_2$ (i.e., $u_1 \in V_{M_1} \setminus V_{M_2} = V_{M_1} \setminus I$, where $I = V_{H_1} \cap V_{H_2}$ is a vertex separator of $H$ separating $V_{H_1} \setminus I$ and $V_{H_2} \setminus I$). Then $u_2$ and $u_3$, the neighbors of $u_1$ in $C$, must be in $M_1$ (neither $u_2$ nor $u_3$ can be in $V_{M_2} \setminus I$ since there is no edge between $V_{M_1} \setminus I$ and $V_{M_2} \setminus I$, as they are separated by $I$). This validates our claim.

From \cref{obs:induced-graph-of-M-on-M1-is-a-chain-graph}, there cannot be a directed cycle in $M[V_{M_1}]$ or $M[V_{M_2}]$. Therefore, $C$ must have a length greater than three.

Suppose $l > 3$. Since we have chosen a cycle of the smallest length, $C$ must be a chordless cycle. If $C$ had a chord (an edge between two non-adjacent nodes of the cycle), we would obtain a directed cycle of smaller length in $M$, regardless of whether the chord is directed or undirected. But from \cref{obs:no-chordless-cycle-in-M}, $M$ cannot have any chordless cycle. This implies this case cannot occur.

Thus, we have shown that $M$ does not have any directed cycle, i.e., $M$ is a chain graph. This completes the proof of \cref{lem:M-is-chain-graph}.
\end{proof}

\begin{proof}[\textbf{Proof of \cref{obs:induced-graph-of-M-on-M1-is-a-chain-graph}}]
If $M[V_{H_a}]$ is not a chain graph, then there exists a directed cycle $C = (u_1, u_2, \ldots, u_l, u_{l+1} = u_1)$ in $M$ such that each node of the cycle is in $V_{H_a}$. 

From the construction of $M$, for $p,q \in V_{H_a}$, if there is an edge $p\rightarrow q$ in $M$, then either $p\rightarrow q \in M_a$ or $p-q \in M_a$, and if $p-q \in M$, then $p-q \in M_a$. This implies $C$ is also a cycle in $M_a$. This further implies that none of the edges in $C$ are directed in $M_a$, otherwise $M_a$ contains a directed cycle, contradicting \cref{item-1-theorem-nec-suf-cond-for-MEC} of \cref{thm:nes-and-suf-cond-for-chordal-graph-to-be-an-MEC}.

This implies all the edges of the cycle are undirected edges in $M_a$, and all the nodes of the cycle are part of the same \ucc{} $\mathcal{C}$ of $M_a$. This further implies that for any \ucc{} $\mathcal{C}$ of $M_a$, $M[V_{\mathcal{C}}]$ contains a directed cycle. From \cref{obs:no-directed-cycle-exists-in-a-UCC}, this is not possible.

\begin{observation}
\label{obs:no-directed-cycle-exists-in-a-UCC}
For $a\in \{1,2\}$, for any \ucc{} $\mathcal{C}$ of $M_a$, there does not exist a directed cycle in $M[V_{\mathcal{C}}]$.
\end{observation}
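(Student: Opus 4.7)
The plan is to establish the stronger claim that every directed edge $u \rightarrow v$ in $M[V_{\mathcal{C}}]$ satisfies $\tau(u) < \tau(v)$, where $\tau$ is the LBFS ordering of $\mathcal{C}$ produced at line~\ref{algconstructMEC:contructLBFS} during the iteration in which $\mathcal{C}$ is processed. Since $\tau$ is a total order on $V_{\mathcal{C}}$, this instantly forbids a directed cycle. I would split the argument on whether the underlying edge $u - v$ actually lies in $\mathcal{C}$ or only in $M[V_{\mathcal{C}}]$. The first case, $u - v \in \mathcal{C}$, follows immediately from \cref{obs:directed-edge-respects-tau}. The interesting case is $u - v \notin \mathcal{C}$, which is where the main obstacle lies.

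For the second case, I would first observe that because $\skel{M_a} = H_a$ and because any two vertices in the same \ucc{} of $M_a$ are joined only by undirected edges in $M_a$ (\cref{prop:edge-between-2-nodes-of-same-ucc-is-ud}), any edge between $u$ and $v$ in $H_a$ would already be present in $\mathcal{C}$; therefore $u - v$ enters $M$ via $H_{3-a}$, forcing $u, v \in V_{H_a} \cap V_{H_{3-a}} = I$. Since $I \subseteq S_{3-a} \subseteq V_{O_{3-a}}$, the edge $u - v$ is present and undirected in $O_{3-a}$. The reduction is then to show $u \rightarrow v \in O$, because \cref{lemma:alg:LBFS-with-O-returns-LBFS-ordering-and-consistent-with-O} guarantees that $\tau$ is consistent with every $O$-directed pair in $V_{\mathcal{C}}$, not merely those that are $\mathcal{C}$-adjacent.

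To show $u \rightarrow v \in O$ in this subcase, I would trace how the orientation of $u \rightarrow v$ entered $M$. If it was set at the initialization step~\ref{algconstructMEC:M-initialization}, then $u \rightarrow v$ comes from $M_{3-a}$ or from $O$; in the former possibility the edge lives in the induced subgraph $O_{3-a}$ and lifts to $O$ by \cref{item-1-of-def:extension-of-O1-O2-P11-P12-P21-P22} of \cref{def:extension-of-O1-O2-P11-P12-P21-P22}. If instead the orientation was added in the main foreach loop of \cref{alg:constructMEC}, it must have been while processing some \ucc{} $\mathcal{C}' \neq \mathcal{C}$, which, since \uccs{} of $M_a$ are vertex-disjoint from $\mathcal{C}$, must be a \ucc{} of $M_{3-a}$. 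Applying \cref{obs:when-u-v-becomes-directed} to $\mathcal{C}'$ at the moment $u - v$ was oriented produces either $u \rightarrow v \in O$ directly, or chordless paths in $\mathcal{C}'$; these chordless paths, being undirected paths in a chordal \ucc{}, are \tfps{} in $M_{3-a}$ witnessing $P_{(3-a),1}((a',b'),(u,v)) = 1$ or $P_{(3-a),2}((a',b'),v) = P_{(3-a),2}((v,u),a') = 1$ together with a seed $a' \rightarrow b' \in O$. \Cref{item-3-of-def:extension-of-O1-O2-P11-P12-P21-P22} of \cref{def:extension-of-O1-O2-P11-P12-P21-P22} then again forces $u \rightarrow v \in O$, completing the subcase.

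The main obstacle is exactly the subcase $u - v \notin \mathcal{C}$: here the LBFS invocation on $\mathcal{C}$ never even saw $u - v$ as an edge, so the orientation must be rerouted through $O$ and the stronger form of LBFS consistency (which orders all $O$-directed pairs of $V_{\mathcal{C}}$, not just those adjacent in $\mathcal{C}$) has to be invoked. The rest of the proof is a case analysis on the algorithmic source of the orientation, all of which terminate in the same conclusion $u \rightarrow v \in O$ via the extension axioms.
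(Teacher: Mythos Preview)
Your proposal has a genuine gap. The claim that ``every directed edge $u \rightarrow v$ in $M[V_{\mathcal{C}}]$ satisfies $\tau(u) < \tau(v)$'' does \emph{not} instantly forbid a directed cycle, because in this paper a directed cycle (\cref{def:cycle}) need only have \emph{at least one} directed edge; the remaining edges may be undirected. A triangle $x \rightarrow y - z - x$ in $M[V_{\mathcal{C}}]$ is a directed cycle, yet knowing only $\tau(x) < \tau(y)$ yields no contradiction. This is precisely the case the paper must work hardest to exclude: after reducing to length-$3$ cycles (using that $\mathcal{C}$ is chordal), the cases with three or two directed edges fall quickly to \cref{obs:directed-edge-respects-tau} and the orientation rule at lines~\ref{algconstructMEC:constraint-for-orientation}--\ref{algconstructMEC:if-end}, but the single-directed-edge case requires the lengthy induction of \cref{obs:triangle-directed-cycle-with-one-directed-edge-does-not-occur-in-M} (\cref{statement-for-induction}). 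Your plan never confronts this case.

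Separately, the ``interesting case'' you isolate is vacuous. Since $H_a$ is an \emph{induced} subgraph of $H$ and $\skel{M} = H$, any edge of $M$ between two vertices of $V_{\mathcal{C}} \subseteq V_{H_a}$ already lies in $H_a = \skel{M_a}$; by \cref{prop:edge-between-2-nodes-of-same-ucc-is-ud} it is undirected in $M_a$ and hence belongs to $\mathcal{C}$. The paper records this as $\skel{M[V_{\mathcal{C}}]} = \mathcal{C}$. So your case split and the entire $u - v \notin \mathcal{C}$ branch are unnecessary, and the real difficulty---mixed directed/undirected triangles---is left unaddressed.
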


Proof of \cref{obs:no-directed-cycle-exists-in-a-UCC} is given below. This completes the proof of \cref{obs:induced-graph-of-M-on-M1-is-a-chain-graph}.
\end{proof}

\begin{proof}[\textbf{Proof of \cref{obs:no-directed-cycle-exists-in-a-UCC}}]
Suppose for some $a\in \{1,2\}$, for some \ucc{} $\mathcal{C}$ of $M_a$, there exist a directed cycle in $M[V_{\mathcal{C}}]$.
Pick a directed cycle $C = (u_1, u_2, \ldots, u_l, u_{l+1} = u_1)$ in $M[V_{\mathcal{C}}]$ of the smallest length. 
We first show that $l$ cannot be greater than three.

Suppose $l >  3$. Then, $C$ must be a chordless cycle, because if $C$ has a chord then whatever the direction of the chord, we get a directed cycle of smaller length. 
From the construction of $M$, $\skel{M[V_{\mathcal{C}}]} = \skel{M_a[V_{\mathcal{C}}]} = \mathcal{C}$. 
This further implies $C$ is a chordless cycle in $\mathcal{C}$. But, since $\mathcal{C}$ is a chordal graph (from \cref{item-2-theorem-nec-suf-cond-for-MEC} of \cref{thm:nes-and-suf-cond-for-chordal-graph-to-be-an-MEC}, as $\mathcal{C}$ is an \ucc{} of MEC $M_a$), it cannot contain any chordless cycle, a contradiction.  Therefore, if there exists any directed cycle in $M[V_{\mathcal{C}}]$ then the shortest length directed cycle must contain only three vertices. 

There are three possibilities: either all the edges of the cycle are directed, or at most two edges of the cycle are directed, or only one edge of the cycle is directed. 

Suppose there exist vertices $x,y,z \in \mathcal{C}$ such that $x-y-z-x \in \mathcal{C}$ and $x\rightarrow y \rightarrow z \rightarrow x \in M$, i.e., all edges of the cycle are directed.  Let $\tau$ be the LBFS ordering of $\mathcal{C}$ we get at
  line~\ref{algconstructMEC:contructLBFS} of \cref{alg:constructMEC}. From
  \cref{obs:directed-edge-respects-tau}, for $u-v \in V_{\mathcal{C}}$, if  $u\rightarrow v \in M$ then $\tau(u) <\tau(v)$. This implies that
  $x\rightarrow y \rightarrow z \rightarrow x$ cannot occur in $M$, otherwise, we have $\tau(x) < \tau(y) < \tau(z) < \tau(x)$, a contradiction. 

Suppose there exist vertices $x,y,z \in \mathcal{C}$ such that $x-y-z-x \in \mathcal{C}$ and   $x\rightarrow y \rightarrow z - x \in M$ (two edges of the cycle are directed). Then from
  \cref{obs:directed-edge-respects-tau}, $\tau(x)< \tau(y) < \tau(z)$. And, while processing the edge $x-z$ on lines
  \ref{algconstructMEC:constraint-for-orientation}-\ref{algconstructMEC:if-end}
  of \cref{alg:constructMEC},  we
  replace $x-z$ with $x\rightarrow z$.  
  This implies that $x\rightarrow y \rightarrow z-x$
  cannot occur in $M$. We thus have the following: if vertices $x, y, z$ in $\mathcal{C}$ form a directed cycle $C$ in $M$, then $C$ has exactly one directed edge.
The only possibility that remains therefore is of a cycle of the form
  $x\rightarrow y-z-x$.  \Cref{obs:triangle-directed-cycle-with-one-directed-edge-does-not-occur-in-M} shows that even this possibility cannot occur.
\begin{observation}
\label{obs:triangle-directed-cycle-with-one-directed-edge-does-not-occur-in-M}
For $a\in \{1,2\}$, let $\mathcal{C}$ be an \ucc{} of $M_a$, and $M$ be the graph returned by \cref{alg:constructMEC} for input $M_1, M_2$ and $O$.
There does not exist vertices $x,y,z \in \mathcal{C}$ such that $x-y-z-x \in \mathcal{C}$ and $x\rightarrow y -z -x \in M$. 
\end{observation}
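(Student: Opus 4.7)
The plan is to suppose for contradiction that such $x,y,z$ exist, and then to invoke \cref{obs:when-u-v-becomes-directed} on the directed edge $x\rightarrow y\in M$. This yields three mutually exclusive cases for the origin of $x\rightarrow y$. In each case I will exhibit a chordless path in $\mathcal{C}$ that starts at an edge directed in $O$ (or at an edge directed in $M$ by inheritance from $O$) and ends at $(x,z)$ or $(y,z)$; \cref{item-2-of-obs:when-u-v-is-directed-reverse} then forces that terminal edge to be directed in $M$, contradicting the assumption that $y-z$ and $z-x$ are undirected in $M$.

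For Case~(II) of \cref{obs:when-u-v-becomes-directed} we have a chordless path $P=(p_1=a',p_2=b',\ldots,p_{l-1}=x,p_l=y)$ in $\mathcal{C}$ with $a'\rightarrow b'\in O$. Since $P$ is chordless and the unique neighbour of $p_l=y$ in $P$ is $p_{l-1}=x$, the vertex $z$ cannot lie on $P$. Letting $j^*$ be the smallest index with $z-p_{j^*}\in\mathcal{C}$ (which exists and is $\leq l-1$ because $z-x\in\mathcal{C}$), I will split on $j^*=l-1$ versus $j^*<l-1$. In the former the path $(p_1,\ldots,p_{l-1}=x,z)$ is chordless and \cref{item-2-of-obs:when-u-v-is-directed-reverse} immediately gives $x\rightarrow z\in M$. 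In the latter I will instead work with $(p_{j^*},p_{j^*+1},\ldots,p_{l-1}=x,z)$: by the maximality argument (replacing $j^*$ with the \emph{largest} interior index $<l-1$ at which $z$ meets $P$) this initial segment becomes chordless, starts with the $M$-directed edge $p_{j^*}\rightarrow p_{j^*+1}$ inherited from the directed $P$ via \cref{item-2-of-obs:when-u-v-is-directed-reverse}, and once again terminates at $(x,z)$, yielding $x\rightarrow z\in M$.

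For Case~(III) we apply \cref{item-2-of-obs:when-u-v-is-directed-reverse} to the chordless path $P_1$ to obtain that $P_1$ is fully directed in $M$ as $a'\rightarrow b'\rightarrow\cdots\rightarrow y$. The same ``substitute $z$ for the endpoint $y$'' or ``append $z$'' construction applied to $P_1$ produces, in most sub-cases, a chordless path from $(a',b')$ (or from a directed interior edge of $P_1$) to $(y,z)$, and hence $y\rightarrow z\in M$, a contradiction. The one genuinely delicate sub-case is when $z$ meets $P_1$ only at the penultimate vertex $q_{m-1}$; there the direct replacement fails because $q_{m-1}-z$ would be a chord. In that sub-case I will follow the template used in Case~3 of the proof of \cref{obs:when-u-v-becomes-directed}: combine the second chordless path $P_2$ with the freshly-derived orientation $q_{m-1}\rightarrow z\in M$ and the chordality of $\mathcal{C}$ (via \cref{obs:LBGS-gives-PEO} and \cref{obs:chordless-path-and-LBFS-ordering}) to construct either a chordless path $(a',b',\ldots,q_{m-1},z)$ extendable into one ending at $(y,z)$, or a chordless path from $(y,z)$ back to $a'$ that, together with $P_1$, completes the contradiction.

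Finally, Case~(I) is $x\rightarrow y\in O$. I will split on whether $z\in V_O$. If $z\in V_O$, then (by the argument already used elsewhere in the proof) $z\in V_{O_a}$, so the triangle $x-y-z-x$ lies entirely in $\skel{O}$; the edges $y-z$ and $z-x$ cannot be directed in $O$ (any such direction would be inherited in $M$ at initialisation, contradicting their being undirected in $M$), so these two edges are undirected in $O$, placing $x,y,z$ in a single undirected connected component of $O$ containing the directed edge $x\rightarrow y$, in direct conflict with \cref{prop:edge-between-2-nodes-of-same-ucc-is-ud}. If $z\notin V_O$, the extension hypothesis \cref{def:extension-of-O1-O2-P11-P12-P21-P22} item~\ref{item-3-of-def:extension-of-O1-O2-P11-P12-P21-P22} applied to $x-y\in O_a$ produces either a witnessing $P_{a1}$-TFP or a pair of $P_{a2}$-TFPs in $M_a$; since $x-y$ is undirected in $M_a$, these TFPs lie in the chordal UCC $\mathcal{C}$ and are in fact chordless paths, reducing this sub-case directly to Case~(II) or Case~(III) just handled.

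The main obstacle will be the $q_{m-1}$ sub-case of Case~(III): this is where a naive path-modification fails because both $P_1$ and the triangle force $q_{m-1}$ to be adjacent to both $y$ and $z$, so one must genuinely use $P_2$ and a chordality-plus-PEO argument (in the style of Case~3, sub-case 1(c) of the proof of \cref{obs:when-u-v-becomes-directed}) to locate the missing chordless path through $z$.
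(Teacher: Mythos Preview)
Your overall strategy—invoke \cref{obs:when-u-v-becomes-directed} on $x\rightarrow y$ and in each of its three cases build a chordless path that forces $x\rightarrow z$ or $z\rightarrow y$ via \cref{item-2-of-obs:when-u-v-is-directed-reverse}—is natural, but it has concrete gaps and misses the structural device that the paper relies on.

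\textbf{A local error in Case~(II).} With $j^{**}$ the largest index $<l-1$ at which $z$ meets $P$, your path $(p_{j^{**}},p_{j^{**}+1},\ldots,p_{l-1}=x,z)$ is \emph{not} chordless: by definition $z-p_{j^{**}}\in\mathcal{C}$, and $p_{j^{**}}$ and $z$ are the two endpoints of this path, so that edge is a chord. The natural repair $(p_1,\ldots,p_{j^*},z,y)$ with $j^*$ smallest works when $j^*\geq 2$ (it is chordless and gives $z\rightarrow y$), but when $j^*=1$, i.e.\ $z-a'\in\mathcal{C}$, the path no longer begins with $(a',b')$ and you have no directed starting edge to feed into \cref{item-2-of-obs:when-u-v-is-directed-reverse}. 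Nothing in your outline handles $z-a'\in\mathcal{C}$.

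\textbf{Case~(I) is incomplete.} For $z\notin V_O$, \cref{item-3-of-def:extension-of-O1-O2-P11-P12-P21-P22} of \cref{def:extension-of-O1-O2-P11-P12-P21-P22} has three alternatives, the first being that $x\rightarrow y$ is strongly protected in $O$. You reduce only the $P_{a1}$ and $P_{a2}$ alternatives to Cases~(II)/(III) and say nothing about strong protection. In the paper this alternative alone splits into the four configurations of \cref{fig:strongly-protected-edge} and requires real work (locating the auxiliary vertex relative to $\mathcal{C}$, repeated use of the PEO property \cref{obs:LBGS-gives-PEO}, etc.).

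\textbf{The missing device.} The paper does not attempt a direct argument. It recasts the statement as \cref{statement-for-induction} and proves it by \emph{induction on $\tau(w)$}, where $w$ is the highest-$\tau$ vertex of the triangle, together with an extremal choice of the remaining two vertices. This is not cosmetic: in Case~3 of each of \cref{claim1-of-statement-for-induction,claim2-of-statement-for-induction,claim3-of-statement-for-induction} the argument manufactures a directed cycle all of whose vertices have $\tau$-rank strictly below $\tau(w)$, extracts a directed triangle from it via \cref{prop:chordal-chain-graph-contains-directed-cycle-of-length-three}, and then \emph{invokes the induction hypothesis} for the contradiction. The extremal choice of $u$ is likewise used (e.g.\ in Case~1(c) of \cref{claim1-of-statement-for-induction}) to rule out an auxiliary triple of higher $u$-rank. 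Your non-inductive plan has no substitute for either step; the ``$q_{m-1}$ sub-case'' you flag as delicate is exactly where the paper abandons further path surgery and falls back on the induction.
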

This completed the proof of \cref{obs:no-directed-cycle-exists-in-a-UCC}.

\end{proof}

\begin{proof}[\textbf{Proof of \cref{obs:triangle-directed-cycle-with-one-directed-edge-does-not-occur-in-M}}]
\label{proof-of-obs:triangle-directed-cycle-with-one-directed-edge-does-not-occur-in-M}
Let $\tau$ be the LBFS ordering returned by \cref{alg:LBFSwithO} at line~\ref{algconstructMEC:contructLBFS} of \cref{alg:constructMEC}. From \cref{obs:directed-edge-respects-tau}, for $p-q\in V_{\mathcal{C}}$, if $p\rightarrow q \in M$, then $\tau(p)< \tau(q)$. We proceed by structuring the analysis based on the relative ordering of $x, y$, and $z$ in $\tau$. \Cref{statement-for-induction} lists all such potential orderings and shows that none of them are allowed, thus completing the proof of \cref{obs:triangle-directed-cycle-with-one-directed-edge-does-not-occur-in-M}.

\begin{claim}
\label{statement-for-induction}
There do not exist $u, v, w \in \mathcal{C}$ such that $\tau(u) < \tau(v) < \tau(w)$, $u-v-w-u \in \mathcal{C}$ and either $u \rightarrow v -w-u \in M$ or $v \rightarrow w -u-v \in M$ or $u \rightarrow w - v - u \in M$.
\end{claim}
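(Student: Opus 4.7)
I plan to prove \cref{statement-for-induction} by strong induction on $\tau(w)$, the largest $\tau$-rank among the three vertices of the forbidden triangle. The inductive hypothesis will assert that no triple $(u',v',w')$ with $\tau(w')<\tau(w)$ satisfies the forbidden configuration, and I will derive a contradiction in each of the three listed cases.

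For the base case, $\tau(w)$ is small enough that no vertex of $\mathcal{C}$ has $\tau$-rank strictly between $\tau(u_b)$ and $\tau(u_a)$ for the relevant pairs, and no vertex with smaller $\tau$-rank exists to serve as $u_c$ in condition~(i) of line~\ref{algconstructMEC:constraint-for-orientation}. Hence every directed edge on the triangle must originate from the initialization step (line~\ref{algconstructMEC:M-initialization}), i.e., from a directed edge in $O$. Then $u,v,w\in V_O$, so the triangle sits inside $\skel{O}$. Since $O$ is a partial MEC, \cref{item-3-of-def:partial-MEC} (no induced $a\rightarrow b - c$) forces the remaining two edges of the triangle to carry consistent orientations in $O$, which immediately propagate into $M$ via initialization and contradict the "single directed edge" hypothesis of each case.

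For the inductive step, I assume the claim for all triples with maximum $\tau$-rank strictly less than $\tau(w)$. In each of the three cases I trace the origin of the single directed edge using \cref{corr-of-obs:when-u-v-becomes-directed}: the directed edge is either (a) strongly protected in $O$, (b) produced by a \cp{} $(p_1,p_2,\dots,p_l)$ in $\mathcal{C}$ whose first step $p_1\rightarrow p_2$ lies in $O$ and whose every step is therefore directed in $M$ (by \cref{item-2-of-obs:when-u-v-is-directed-reverse}), or (c) produced by a cyclic configuration of two such \cps{}. Subcase (a) is handled by the same partial-MEC argument as in the base case. In subcase (b), the penultimate vertex $p_{l-1}$ of the witnessing \cp{} has $\tau(p_{l-1})<\tau(p_l)$ and is a common neighbor of $p_{l}$ with one of the other two triangle vertices; \cref{obs:LBGS-gives-PEO} then forces an edge from $p_{l-1}$ to that triangle vertex, producing a smaller triangle (of $\tau$-maximum below $\tau(w)$) to which the inductive hypothesis applies. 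Subcase (c) reduces analogously by picking a suitable intermediate vertex on one of the two \cps{}.

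The main obstacle will be Case~2, where $v\rightarrow w\in M$ but $u-v$ and $u-w$ are both undirected. Unlike Cases~1 and~3, the rule at line~\ref{algconstructMEC:constraint-for-orientation} does not directly force either $u-v$ or $u-w$ to be oriented from $v\rightarrow w$ alone, so the contradiction cannot be read off from a single iteration. I plan to resolve this by descending one step inside the witness for $v\rightarrow w$: the \cp{} provided by \cref{corr-of-obs:when-u-v-becomes-directed} furnishes a vertex $p_{l-2}$ with $p_{l-2}\rightarrow v\in M$, $\tau(p_{l-2})<\tau(v)$, and $p_{l-2}$ not adjacent to $w$ (chordlessness). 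Then \cref{obs:LBGS-gives-PEO} applied to the common neighbor $v$ of $u$ and $p_{l-2}$ yields $u-p_{l-2}\in\mathcal{C}$, so the triangle $u,p_{l-2},v$ has $\tau$-maximum $\tau(v)<\tau(w)$. The inductive hypothesis applied to this smaller triangle (and a short case-check on which of its three vertices plays each role) then forces an orientation on $u-v$, contradicting Case~2. Careful book-keeping of the outer loop (line~\ref{algconstructMEC:while-start}) and inner decreasing loop (line~\ref{algconstructMEC:inner-while}) of \cref{alg:constructMEC} will be needed throughout to make the "smaller triangle" argument precise.
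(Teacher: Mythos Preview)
Your overall skeleton---strong induction on $\tau(w)$ together with the trichotomy from \cref{corr-of-obs:when-u-v-becomes-directed}---matches the paper, but two genuine gaps would prevent the argument from closing.

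First, your handling of subcase~(a) (and the base case) is incorrect. Knowing that the single directed edge, say $u\rightarrow v$, lies in $O$ only gives $u,v\in V_O$; it does \emph{not} put the third triangle vertex $w$ into $V_O$, so you cannot conclude that the whole triangle sits inside $O$. (Even if it did, the property you invoke---no induced $a\rightarrow b-c$---is irrelevant on a triangle, since $a$ and $c$ are adjacent; what you would actually need is the chain-graph property.) The paper does not reduce subcase~(a) to a one-line partial-MEC argument: it walks through each of the four strong-protection configurations of \cref{fig:strongly-protected-edge} separately, using the auxiliary vertex $p$ (or $p,p'$) to manufacture a contradiction via the algorithm's orientation rule or via a v-structure argument. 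For the base case the paper simply takes $\tau(w)=2$, which is vacuous.

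Second, and more seriously, induction on $\tau(w)$ alone is not enough. In at least one branch (e.g.\ Case~1, strong protection via \cref{fig:strongly-protected-edge}(c), when $p-w\in M$) the ``smaller'' forbidden triangle one produces is $(p,v,w)$, whose $\tau$-maximum is \emph{still} $\tau(w)$, so your inductive hypothesis gives nothing. The paper handles this by, inside each fixed $\tau(w)$, choosing $u$ of \emph{maximal} rank (and then $v$ of minimal rank) among all offending triples; the triangle $(p,v,w)$ then contradicts the maximality of $\tau(u)$ rather than the outer induction. You need either this extremal choice or an explicit secondary induction on $(\tau(u),\tau(v))$ at fixed $\tau(w)$. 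Without it, the ``smaller triangle'' strategy you outline for subcase~(b) and for Case~2 cannot be made to terminate.
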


We establish the validity of \cref{statement-for-induction} using induction on $\tau(w)$.

\textbf{Base Case} ($\tau(w) = 2$): This case is vacuously true, as any node in $\tau$ must have a rank greater than or equal to 1. Therefore, there are no nodes $u$ and $v$ satisfying $1 \leq \tau(u) < \tau(v) < \tau(w) = 2$.

\textbf{Inductive Hypothesis:} Suppose \cref{statement-for-induction} is true for $\tau(w) = k$, where $k \geq 2$.

\textbf{Inductive Step:} We now demonstrate that \cref{statement-for-induction} holds for $\tau(w) = k+1$.

Suppose, for the sake of argument, that there exist $u, v, w \in \mathcal{C}$ such that $\tau(u) < \tau(v) < \tau(w) = k+1$, and either $u\rightarrow v-w-u \in M$, or $u-v \rightarrow w-u \in M$, or $u-v-w\leftarrow u \in M$.

We select $u$ to be the node with the maximum possible rank, i.e., for any $u'$ with $\tau(u) < \tau(u')$, there does not exist any $v$ satisfying $\tau(u') < \tau(v) < \tau(w) = k+1$ and either $u'\rightarrow v-w-u \in M$, or $u'-v \rightarrow w-u \in M$, or $u'-v-w\leftarrow u \in M$ hold. We then select $v$ to be the node with the minimum possible rank, i.e., for any $v'$ with $\tau(u) < \tau(v') < \tau(v) < \tau(w) = k+1$, neither $u\rightarrow v'-w-u \in M$, nor $u-v' \rightarrow w-u \in M$, nor $u-v'-w\leftarrow u \in M$ hold.

The following claims show that neither $u\rightarrow v-w-u \in M$, nor $u-v \rightarrow w-u \in M$, nor $u-v-w\leftarrow u \in M$ is satisfied:

\begin{claim}
$u\rightarrow v-w-u \notin M$.
\label{claim1-of-statement-for-induction}
\end{claim}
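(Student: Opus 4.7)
}
My plan is to argue by contradiction: suppose $u\to v - w - u \in M$ with $\tau(u)<\tau(v)<\tau(w)=k{+}1$. The strategy is to produce a witness vertex $x \in V_{\mathcal{C}}$ satisfying $\tau(x)<\tau(u)$, $x-u \in \mathcal{C}$, $x-w \notin \mathcal{C}$, and $x\to u \in M$. The existence of such an $x$ forces condition~(i) at line~\ref{algconstructMEC:constraint-for-orientation} of \cref{alg:constructMEC} to trigger when the algorithm processes the edge $u-w$ (with $a=\tau(w)$ and $b=\tau(u)$), so line~\ref{algconstructMEC:add-orientation} would replace $u-w$ with $u\to w$ in $M$. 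This contradicts the hypothesis $u-w \in M$.

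To locate $x$, I apply \cref{corr-of-obs:when-u-v-becomes-directed} to the edge $u\to v \in M$, obtaining three cases. In the second case we have $a\to b \in O$ and a \cp{} $Q=(p_1=a,p_2=b,\dots,p_{l-1}=u,p_l=v)$ in $\mathcal{C}$ with $\tau(p_1)<\tau(p_2)<\cdots<\tau(p_{l-1})<\tau(p_l)$ and every consecutive edge $p_i\to p_{i+1}$ already in $M$ (by \cref{item-2-of-obs:when-u-v-is-directed-reverse}); the natural candidate is $x=p_{l-2}$ (taking $x=a$ if $l=2$, which actually falls in the first case after tracing back to $O$). In the third case we use the vertex of $Q_2$ adjacent to $u$ as the candidate. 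In the first case ($u\to v$ strongly protected in $O$), the protecting pattern of \cref{fig:strongly-protected-edge} supplies a vertex $z$ with $z\to u \in O$ (patterns (a),(c),(d)); initialization puts $z\to u \in M$ and \cref{item-1-of-obs:when-u-v-is-directed-reverse} gives $\tau(z)<\tau(u)$. Pattern~(b) would give a v-structure $u\to v\leftarrow z \in O$ and contradicts $v-w \in \mathcal{C}$ combined with the partial-MEC property (since then $z,v,w$ forces structure that already orients $v-w$ via initialization).

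The main obstacle is the subcase where the natural candidate $x$ is adjacent to $w$ in $\mathcal{C}$. In this subcase I walk back along the \cp{} $Q$, using \cref{obs:LBGS-gives-PEO} together with the LBFS ranking $\tau(p_1)<\cdots<\tau(p_{l-1})<\tau(w)$ to propagate adjacencies: if $p_i$ and $u$ are both adjacent to $w$ and $\tau(p_i)<\tau(u)<\tau(w)$, then $p_i$ and $u$ are adjacent, which is already known; one then moves to consider whether $p_{i-1}$ is adjacent to $w$, and if so chord-free-ness of $Q$ combined with \cref{obs:LBGS-gives-PEO} forces $p_{i-1}-p_{i+1} \in \mathcal{C}$, contradicting that $Q$ is a \cp{}. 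This argument either produces a $p_i$ with $p_i - w \notin \mathcal{C}$ (and then $x=p_i$ works), or drives us down to $x=a$ with $a\to b \in O$; in the latter case $b \in V_{\mathcal{C}}$ lies on $Q$ so $\tau(b)<\tau(v)$, and the triangle $u\to b - w - u$ (if $b-w \in \mathcal{C}$) together with $u\to b$ (forced by the now-short \cp{}) violates the minimality of $v$ in our selection. If $b-w \notin \mathcal{C}$, then $x=b$ triggers condition~(i) directly. Finally, the third case of \cref{corr-of-obs:when-u-v-becomes-directed} is handled analogously by analyzing whether the predecessor of $u$ on $Q_2$ is adjacent to $w$, and pushing the argument to the earlier vertices of $Q_1$ using \cref{obs:chordless-path-and-LBFS-ordering}.
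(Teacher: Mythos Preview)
Your overall contradiction strategy (force $u\to w$ via condition~(i) at line~\ref{algconstructMEC:constraint-for-orientation}) matches the paper in Case~2, and there your candidate $x=p_{l-2}$ works exactly as the paper argues: if $p_{l-2}-w\in\mathcal{C}$, then since $v-w\in\mathcal{C}$ and $\tau(p_{l-2}),\tau(v)<\tau(w)$, \cref{obs:LBGS-gives-PEO} gives $p_{l-2}-v\in\mathcal{C}$, contradicting that $Q$ is a \cp{}. (Your ``walk-back'' elaboration is unnecessary here; one application of PEO suffices.)

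However, there is a concrete error in your handling of Case~1. You assert that the protecting patterns (a), (c), (d) of \cref{fig:strongly-protected-edge} each supply a vertex $z$ with $z\to u\in O$. This is true only for pattern~(a). In pattern~(c) the configuration is $u\to p\to v\leftarrow u$: the edges incident to $u$ are $u\to p$ and $u\to v$, so there is \emph{no} incoming edge at $u$. In pattern~(d) the edges $p-u$ and $p'-u$ are undirected. Your unified ``find $x$ with $x\to u$ and $\tau(x)<\tau(u)$'' strategy therefore cannot cover these subcases. The paper handles (c) by analyzing the vertex $p$ with $\tau(u)<\tau(p)<\tau(v)$: if $p-w\notin\mathcal{C}$ then $p\to v-w$ forces $v\to w$; if $p-w\in\mathcal{C}$ then the three possible orientations of that edge are ruled out using condition~(ii) of line~\ref{algconstructMEC:constraint-for-orientation}, the maximality of $\tau(u)$ in the choice of the triple, and \cref{obs:directed-edge-respects-tau}. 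Pattern~(d) is dispatched by pushing the v-structure $p\to v\leftarrow p'$ from $O$ down to $O_a$ (via \cref{item-2-of-def:extension-of-O1-O2-P11-P12-P21-P22}) and hence to $M_a$, contradicting $p-v\in\mathcal{C}$. Your sketch for pattern~(b) is also not the right contradiction: the paper uses the same v-structure-in-$M_a$ argument, not anything about $v-w$.

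Your Case~3 outline is also too thin. Taking $x=q_3$ (the vertex of $Q_2$ adjacent to $u$) requires first establishing $q_3\to u\in M$, which is not immediate. The paper's route is: first show $Q_1'=(p_1,\dots,p_l=v,w)$ is not a \cp{} (else $v\to w$), deduce $p_{l-1}-w\in\mathcal{C}$ and $p_{l-1}\to w\in M$, prove $u\to p_{l-1}\notin M$, and then argue that some $q_{i+1}\to q_i\in M$ with $i\ge 2$ must exist (else the concatenated cycle through $p_1,\dots,p_{l-1},u=q_2,\dots,q_m$ yields, via \cref{prop:chordal-chain-graph-contains-directed-cycle-of-length-three}, a length-three directed cycle whose top vertex has rank $<\tau(w)$, contradicting the induction hypothesis). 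Only then does \cref{item-2-of-obs:when-u-v-is-directed-reverse} on the reversed subpath give $q_3\to u\in M$, reducing to pattern~(a). Your proposal omits this chain of deductions.
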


\begin{claim}
$u-v\rightarrow w -u \notin M$.
\label{claim2-of-statement-for-induction}
\end{claim}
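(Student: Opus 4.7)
The plan is to derive a contradiction from the assumption $u-v\rightarrow w-u \in M$ by producing a smaller instance of one of the three forbidden configurations from \cref{statement-for-induction} (with $w$ replaced by $v$) and then invoking the induction hypothesis, which applies since $\tau(v) < \tau(w) = k+1$. The engine of the argument is to extract from the directed edge $v\rightarrow w$ a ``witness predecessor'' $z \in V_{\mathcal{C}}$ with $z \rightarrow v \in M$, $\tau(z) < \tau(v)$, and $z$ not adjacent to $w$ in $\mathcal{C}$. I would obtain $z$ by applying \cref{corr-of-obs:when-u-v-becomes-directed} to $v\rightarrow w \in M$. In the two chordless-path cases of that corollary, $z$ is simply the second-to-last vertex of the chordless path ending at the edge $v-w$; the entire such path is already directed in $M$ by \cref{item-2-of-obs:when-u-v-is-directed-reverse}, and chordlessness guarantees $z \not\sim w$. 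In the strongly-protected case $v \rightarrow w \in O$, I would read off $z$ from whichever of the four patterns of \cref{fig:strongly-protected-edge} witnesses protection, then pull it back into $\mathcal{C}$ using the extension conditions of \cref{def:extension-of-O1-O2-P11-P12-P21-P22} together with \cref{obs:undirected-in-M_1-and-directed-in-M-implies-directed-in-O}.

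Once $z$ is in hand, I would use that $z$ and $u$ are both adjacent to $v$ in $\mathcal{C}$ and both of lower $\tau$-rank than $v$, so \cref{obs:LBGS-gives-PEO} (applied to the chordal UCC $\mathcal{C}$ with its LBFS ordering $\tau$) forces $u-z \in \mathcal{C}$. Combined with $u-v$ and $z-v$ already in $\mathcal{C}$, this yields a triangle $\{u,v,z\}$ in $\mathcal{C}$ whose inherited orientations from $M$ are: $z\rightarrow v$ (directed), $u-v$ (undirected, by hypothesis), and $u-z$ (some orientation). Depending on whether $\tau(u) < \tau(z)$ or $\tau(z) < \tau(u)$, I would match this triangle, with $w$ now played by $v$, to either the first configuration $u'\rightarrow v' - w' - u'$, the second $v'\rightarrow w'-u'-v'$, or the third $u'\rightarrow w'-v'-u'$ of \cref{statement-for-induction}; since $\tau(w')=\tau(v) \leq k$, the induction hypothesis yields a contradiction.

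The principal obstacle will be the subcase where the inherited edge $u-z$ is already directed in $M$ (rather than undirected), because then the triangle $\{u,v,z\}$ carries two directed edges and does not literally match one of the three listed patterns, which each feature exactly one directed edge. To handle this, I would apply \cref{corr-of-obs:when-u-v-becomes-directed} again to $u-z$ (or $z \rightarrow u$) to extract a further predecessor inside $\mathcal{C}$ and iterate the triangle-contraction trick; finiteness of the LBFS ordering, together with the strictly decreasing $\tau$-rank of the witness vertices at each iteration, guarantees termination. A secondary technical difficulty is the strongly-protected case: in patterns (b)--(d) of \cref{fig:strongly-protected-edge}, the auxiliary vertex $x$ witnessing protection may lie in $V_O \setminus V_{\mathcal{C}}$, but using that $v,w \in V_O \cap V_{\mathcal{C}}$ together with \cref{item-3-of-def:extension-of-O1-O2-P11-P12-P21-P22} one can replace $x$ by an in-$\mathcal{C}$ neighbour of $v$ with the required directed edge to $v$, thereby reducing this case to the chordless-path argument already described.
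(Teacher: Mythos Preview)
Your overall strategy—apply \cref{corr-of-obs:when-u-v-becomes-directed} to $v\rightarrow w$, extract a witness vertex, and contradict either the induction hypothesis or the standing assumption $u-v,\,u-w\in M$—is the paper's strategy as well. But the specific plan of always extracting some $z$ with $z\rightarrow v\in M$ and $z\not\sim w$ has a real gap in two subcases. In \cref{item-corr-of-obs:when-u-v-becomes-directed:u-v-is-part-of-a-directed-cycle} of the corollary, the chordless path $Q_1=(p_1=a,\ldots,p_l=w)$ ends at the \emph{vertex} $w$, not at the edge $(v,w)$; its penultimate vertex $p_{l-1}$ gives $p_{l-1}\rightarrow w$ and is adjacent to $w$, so it is not your $z$, and nothing on $Q_2=(w,v,q_3,\ldots,a)$ is a priori directed towards $v$. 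The paper instead works with $p_{l-1}$: the LBFS property yields $p_{l-1}-u,\,p_{l-1}-v\in\mathcal{C}$, and one splits on whether $v\rightarrow p_{l-1}\in M$; when $v\rightarrow p_{l-1}\notin M$, the closed walk $(p_1,\ldots,p_{l-1},q_2=v,q_3,\ldots,q_m=p_1)$ would otherwise be a directed cycle in $M[V_{\mathcal{C}}]$ with all ranks below $\tau(w)$, forcing some $q_{i+1}\rightarrow q_i$ with $i\ge 2$, and only \emph{then} does a vertex $q_3$ with $q_3\rightarrow v$ appear. Likewise, in the strongly-protected pattern~(c), namely $v\rightarrow p\rightarrow w\leftarrow v$, both edges at $v$ point \emph{out} of $v$; there is no $z\rightarrow v$ to read off, and your appeal to \cref{item-3-of-def:extension-of-O1-O2-P11-P12-P21-P22} does not manufacture one (that item only explains why $v\rightarrow w\in O$, it does not supply an in-neighbour of $v$). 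The paper handles pattern~(c) by working with $p$ itself, deducing $u-p\in\mathcal{C}$ from \cref{obs:LBGS-gives-PEO} and then splitting on the $M$-orientation of $u-p$.

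A smaller point: your ``iterate on $u$--$z$'' fallback is both unnecessary and not clearly convergent. Once you do have $z\rightarrow v$ with $z\not\sim w$, each directed orientation of $u$--$z$ is killed in one step by the rule at line~\ref{algconstructMEC:constraint-for-orientation} of \cref{alg:constructMEC}: if $u\rightarrow z$ then $u\rightarrow z\rightarrow v$ together with $u-v\in M$ and $\tau(u)<\tau(z)<\tau(v)$ forces $u\rightarrow v$; if $z\rightarrow u$ then $z\rightarrow u-w$ (induced in $\mathcal{C}$ since $z\not\sim w$) with $\tau(z)<\tau(u)<\tau(w)$ forces $u\rightarrow w$. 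Both contradict the hypothesis directly—this is exactly how the paper disposes of those orientations, and re-applying the corollary to $z\rightarrow u$ would not obviously return you to a triangle through $v$ with only one directed edge.
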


\begin{claim}
$u-v- w \leftarrow u \notin M$.
\label{claim3-of-statement-for-induction}
\end{claim}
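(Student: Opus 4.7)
The plan is to assume toward contradiction that the selected $u, v, w$ satisfy $u-v-w \leftarrow u \in M$, and then apply \cref{corr-of-obs:when-u-v-becomes-directed} to the directed edge $u \rightarrow w \in M$ (justified by $u-w \in \mathcal{C}$). The corollary will yield three possible witnesses for why $u \rightarrow w$ is directed: (i) $u \rightarrow w$ is strongly protected in $O$; (ii) there is an edge $a-b \in \mathcal{C}$ with $a \rightarrow b \in O$ and a chordless path $(p_1 = a, p_2 = b, \ldots, p_{l-1} = u, p_l = w)$ in $\mathcal{C}$ along which $\tau$ is strictly increasing; or (iii) there is an edge $a-b \in \mathcal{C}$ with $a \rightarrow b \in O$ together with chordless paths $Q_1$ from $(a,b)$ to $w$ and $Q_2$ from $(w, u)$ back to $a$ in $\mathcal{C}$, satisfying the stated rank conditions.

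The central step, which handles Case (ii) and provides the template for the others, will be the following dichotomy argument. Take $c \defeq p_{l-2}$; then $c$ is adjacent to $u$ in $\mathcal{C}$, is not adjacent to $w$, satisfies $\tau(c) < \tau(u) < \tau(v)$, and enjoys $c \rightarrow u \in M$ by \cref{item-2-of-obs:when-u-v-is-directed-reverse}. I would branch on whether $v$ is adjacent to $c$ in $\mathcal{C}$. If $v$ is not adjacent to $c$, then $c - u - v$ is induced in $\mathcal{C}$; with $c \rightarrow u \in M$ already established by the time the pair $(u, v)$ is processed (cf.\ \cref{corr:an-edge-of-an-ucc-gets-directed-in-M-either-at-initiazation-or-in-the-iteration-when-ucc-is-considered}), the if-condition at line~\ref{algconstructMEC:constraint-for-orientation} of \cref{alg:constructMEC} will fire at $a = \tau(v), b = \tau(u)$, forcing $u \rightarrow v \in M$ and contradicting $u - v \in M$. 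Hence $v$ must be adjacent to $c$; moreover the edge $c - v$ must be undirected in $M$, since $c \rightarrow v$ combined with the induced path $c - v - w$ (induced because $c$ is not adjacent to $w$) would analogously orient $v - w$, contradicting $v - w \in M$. The triple $(c, u, v)$ then satisfies $\tau(c) < \tau(u) < \tau(v)$, forms a triangle in $\mathcal{C}$, and realizes the pattern $c \rightarrow u - v - c$ in $M$ --- precisely the Claim 1 configuration at $\tau(v) < k+1$, contradicting the inductive hypothesis of \cref{statement-for-induction}.

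Cases (i) and (iii) will be dispatched by locating analogous auxiliary vertices. For Case (iii), the vertex $c \defeq p_{l-1}$ of $Q_1$ is adjacent to $w$ in $\mathcal{C}$; applying \cref{obs:LBGS-gives-PEO} to the triples $(c, u, w)$ and $(c, v, w)$ will show that $c$ is also adjacent to both $u$ and $v$, and $c \rightarrow w \in M$ by \cref{item-2-of-obs:when-u-v-is-directed-reverse}. A case split on $\tau(c)$ relative to $\tau(u)$ and $\tau(v)$ will then produce either a Claim 1 triple at smaller rank or a Claim 2 triple at the current rank $\tau(w) = k+1$; the latter appeals to Claim 2 already proved within the present inductive step. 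For Case (i), the strong-protection subgraph in $O$ (one of the four configurations in \cref{fig:strongly-protected-edge}) will supply the auxiliary vertex, and the chain-graph property of $O$ together with the forbidden $a \rightarrow b - c$ rule will substitute for the algorithmic argument above. The principal obstacle will be Case (i), because the auxiliary vertex may lie outside $\mathcal{C}$; when that happens, the LBFS-based reasoning must be replaced by direct manipulation of the partial-MEC structure of $O$ and, when needed, the extension hypothesis \cref{def:extension-of-O1-O2-P11-P12-P21-P22}.
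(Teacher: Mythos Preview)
Your treatment of Case~(ii) and your outline for Case~(i) match the paper's argument, but your plan for Case~(iii) has a real gap. With $c\defeq p_{l-1}$ you correctly get $c\rightarrow w\in M$ and $c$ adjacent to both $u$ and $v$, but the split on $\tau(c)$ alone does not finish. Consider the sub-case $\tau(c)<\tau(u)<\tau(v)$ with both $c\undir u$ and $c\undir v$ \emph{undirected} in $M$ (nothing you have written rules this out). Then no triple among $\{c,u,v\}$ carries a directed edge, the triple $(c,u,w)$ has two arrows into $w$ and is not a cycle pattern, and the only cycle you produce is $(c,v,w)$ in the Claim-3 pattern $c\undir v\undir w\leftarrow c$ at rank $\tau(w)=k+1$ with $\tau(c)<\tau(u)$. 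Since $\tau(c)<\tau(u)$, this does \emph{not} contradict the maximality of $u$, and you cannot invoke ``Claim~3 already proved'' because that is precisely what you are proving. The paper closes this sub-case by bringing in the second chordless path $Q_2=(w,u,\dots,x)$: if no edge of $Q_2$ is reversed, concatenating the fully directed $Q_1$-prefix with $Q_2$ yields a directed cycle entirely below rank $\tau(w)$, contradicting the induction hypothesis (via \cref{prop:chordal-chain-graph-contains-directed-cycle-of-length-three}); hence some $q_{i+1}\rightarrow q_i$ exists, and propagating along the reversed sub-path gives an induced $q_3\rightarrow u\rightarrow w$, reducing to Case~1(a). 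Your sketch never touches $Q_2$ and therefore cannot complete Case~(iii).

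A smaller point: ``appeals to Claim~2 already proved within the present inductive step'' is not the right justification for a \emph{different} triple at rank $k+1$. Claims~1--3 in the inductive step are proved only for the specifically selected $(u,v,w)$; when you exhibit another triple $(u',v',w)$ satisfying one of the three patterns, the contradiction is with the maximality of $u$ (or, when $u'=u$, with the minimality of $v$), not with Claim~2 per se. This is exactly how the paper argues in the analogous places.
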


Proofs of \cref{claim1-of-statement-for-induction,claim2-of-statement-for-induction,claim3-of-statement-for-induction} are given below. These claims validate \cref{statement-for-induction}.
\end{proof}

\begin{proof}[\textbf{Proof of \cref{claim1-of-statement-for-induction}}]
\label{proof-of-claim1-of-statement-for-induction}
Suppose $u\rightarrow v-w-u \in M$. 
From \cref{corr-of-obs:when-u-v-becomes-directed}, either of the following has occurred:
 \begin{enumerate}
     \item $u \rightarrow v$ is strongly protected in $O$.
     
     \item There exists $a\rightarrow b \in O$ such that $(a,b) \neq (u,v)$ and there exists a \cp{} from $(a,b)$ to $(u,v)$ in $\mathcal{C}$.
     
     \item There exists $a\rightarrow b \in O$ such that $b\neq v$, $u\neq a$ and there exist \cps{} in $\mathcal{C}$ from $(a,b)$ to $v$ and from  $(v,u)$ to $a$.
 \end{enumerate} 
 
We go through each possibility.

\begin{enumerate}
    \item 
    Assume that $u\rightarrow v$ is strongly protected in $O$. Consequently, $u\rightarrow v$ is part of one of the induced subgraphs of $O$, as depicted in \cref{fig:strongly-protected-edge}.
\begin{enumerate}
    \item Consider the scenario where $u\rightarrow v$ is strongly protected due to its presence in an induced subgraph $p\rightarrow u \rightarrow v \in O$ (similar to \Cref{fig:strongly-protected-edge}.a). Keep in mind that $u$, $v$, and $w$ are members of $\mathcal{C}$, an \ucc{} of $M_a$ for some $a\in \{1,2\}$. We need to examine whether $p$ belongs to $M_a$ or not.

Suppose $p \notin M_a$. Without loss of generality, let's assume $a = 1$. From the construction, if $p\notin V_{M_1}$, then $u$, a neighbor of $p$ in $M$, belongs to $I = V_{M_1}\cap V_{M_2}$, as $u\in V_{M_1}$ (recall that $u$ belongs to $\mathcal{C}$, an \ucc{} of $M_a$), and $I$ is a vertex separator of $H$ that separates $V_{M_1}\setminus I$ and $V_{M_2}\setminus I$. However, this implies $u,v,w \in I \cup N(I, H) \subseteq V_O$. Consequently, $u\rightarrow v-w-u \in O$, as from \cref{lem:O-is-an-induced-subgraph-of-M}, $M[V_O] = O$. Nevertheless, this contradicts \cref{item-1-of-def:partial-MEC} of \cref{def:partial-MEC}, as $O$ is a partial MEC. Thus, the only remaining possibility is $p\in M_a$. 

Suppose $p\in M_a$. 
Since $u,p \in V_{M_a}$ and $p\rightarrow u \in M$, from the construction of $M$,  either $p\rightarrow u \in M_a$ or $p-u\in M_a$. If $p\rightarrow u \in M_a$, then $p\rightarrow u -v$ is an induced subgraph in $M_a$, a contradiction from \cref{item-2-theorem-nec-suf-cond-for-MEC} of \cref{thm:nes-and-suf-cond-for-chordal-graph-to-be-an-MEC}. This implies $p-u \in M$, which means $p \in \mathcal{C}$.

From \cref{item-1-of-obs:when-u-v-is-directed-reverse}, $\tau(p) < \tau(u)$.   
This further implies $\tau(p) < \tau(u) < \tau(v)<\tau(w)$ (from our hypothesis $\tau(u) < \tau(v) < \tau(w)$). 
Since $p,v \in \mathcal{C}$, and there is no edge between $p$ and $v$ in $M$, from the construction of $M$, there must be no edge between $p$ and $v$ in $\mathcal{C}$. 
This also implies $p-w \notin \mathcal{C}$; otherwise, since $\tau(v)< \tau(w)$, $\tau(p)<\tau(w)$, and $v-w\in \mathcal{C}$, from \cref{obs:LBGS-gives-PEO}, $p-v \in \mathcal{C}$, leading to a contradiction. This implies that $p\rightarrow u-w \in M$. 
However, this is not possible as while running lines \ref{algconstructMEC:constraint-for-orientation}-\ref{algconstructMEC:if-end} of \cref{alg:constructMEC}, we get $u\rightarrow w\in M$.
Hence, this case does not occur. 

Considering the above analysis, it is evident that neither $p\in M_a$ nor $p\notin M_a$ holds true. Consequently, this case cannot occur. We will now proceed to explore other possibilities.

\item Suppose $u\rightarrow v$ is strongly protected because it is part of an induced subgraph $u \rightarrow v \leftarrow p \in O$ (same as \Cref{fig:strongly-protected-edge}.b). Similar to the above case, the only possibility we need to consider is when $p \in \mathcal{C}$. Since $u,v,p \in V_O$ as well as in $\mathcal{C}$, an \ucc{} of $M_a$, therefore, $u,v,w \in O_a =  V_O \cap V_{M_a}$. Since $O \in \mathcal{E}(O_1, P_{11}, P_{12}, O_2, P_{21}, P_{22})$, therefore, from \cref{item-2-of-def:extension-of-O1-O2-P11-P12-P21-P22} of \cref{def:extension-of-O1-O2-P11-P12-P21-P22}, the v-structure $u\rightarrow v\leftarrow p$ also belongs to $O_a$. Since $(O_a, P_{a1}, P_{a2})$ is a shadow of $M_a$ on $V_{O_a}$, therefore, from \cref{item-1-of-def:shadow} of \cref{def:shadow}, $O_a$ is an induced subgraph of $M_a$. This further implies $u\rightarrow v\leftarrow p \in M_a$. But, this is a contradiction as $u-v \in M_a$ (since $\mathcal{C}$ is an \ucc{} of $M_a$, if $u-v \in \mathcal{C}$ then $u-v\in M_a$).  This contradiction implies that this case also does not occur.

\item 
Suppose $u\rightarrow v$ is strongly protected in $O$ because it is part of an induced subgraph $u\rightarrow p\rightarrow v\leftarrow u \in O$ (same as \Cref{fig:strongly-protected-edge}.c). Similar to the above Case 1.a, the only possibility we need to consider is when $p \in \mathcal{C}$. 
Then from \cref{obs:directed-edge-respects-tau}, $\tau(u)<\tau(p)<\tau(v)$. Combining this with our induction hypothesis, we get $\tau(u)<\tau(p)<\tau(v)<\tau(w)$. There are two possibilities: either $p-w\notin \mathcal{C}$, or $p-w \in \mathcal{C}$.

If $p-w \notin \mathcal{C}$ then while running
lines~\ref{algconstructMEC:constraint-for-orientation}-\ref{algconstructMEC:if-end} of
\cref{alg:constructMEC}, we get $v\rightarrow w \in M$ (because $\tau(p)< \tau(v) <\tau(w)$ and $p\rightarrow v-w $ is an induced subgraph in $M$). This means that this case does not
occur as $v-w \in M$.

Suppose then that $p-w\in \mathcal{C}$. Then there are three possibilities:
either $p-w\in M$, or $p\leftarrow w \in M$, or $p\rightarrow w \in M$.  We will show
that none of the three can occur.
\begin{enumerate}
\item If $p-w\in M$, then we would have a triplet $(p,v,w)$ that satisfies the conditions $\tau(u) < \tau(p) < \tau(v) < \tau(w)$, and $p\rightarrow v - w - p \in M$. However, this directly contradicts our choice of $u$ as the vertex with the maximal $\tau(u)$ (recall that we have selected $u$ as a node with maximum rank in $\tau$ such that there exists a $v$ satisfying $\tau(u) < \tau(v) < \tau(w) = k+1$ and $u\rightarrow v -w - u \in M$). Hence, it is concluded that $p-w\notin M$.

\item Since $\tau(p) < \tau(w)$, therefore, according to \cref{obs:directed-edge-respects-tau},
  $w\rightarrow p \notin M$.

\item If $p\rightarrow w \in M$ then from
  lines~\ref{algconstructMEC:constraint-for-orientation}-\ref{algconstructMEC:if-end} of
  \cref{alg:constructMEC}, $u\rightarrow w \in M$ (due to $\tau(u)<\tau(p)<\tau(w)$, and $u\rightarrow p\rightarrow w-u \in M$). However, this leads to a contradiction as
  $u-w\in M$.
\end{enumerate}
This analysis demonstrates that none of the three possibilities can occur, thereby ruling out the possibility of case $p-w \in \mathcal{C}$. With this, we conclude that this case also does not occur.

\item 
Suppose $u\rightarrow v$ is strongly protected in $O$ because it is part of an induced subgraph $u\rightarrow v \leftarrow p-u-p'\rightarrow v$ (same as \Cref{fig:strongly-protected-edge}.d). Similar to above case 1.a, the only possibility we need to consider is when $p,p' \in \mathcal{C}$.
Similar to Case 1.b, the v-structure $p\rightarrow v \leftarrow p'$ in $O$ implies that the v-structure also belongs to $M_a$, i.e., $p\rightarrow v\leftarrow p' \in M_a$.
But, since $\mathcal{C}$ is an \ucc{} of $M_a$, and $p,v \in \mathcal{C}$, therefore, from \cref{prop:edge-between-2-nodes-of-same-ucc-is-ud}, $p\rightarrow v$ cannot be in $M_a$. This leads to a contradiction.
Hence, it can be deduced that this particular case also does not come to pass.
\end{enumerate}

We show that $u\rightarrow v$ is a part of none of the induced subgraphs shown in \cref{fig:strongly-protected-edge}. This implies that this possibility does not occur, i.e., $u\rightarrow v$ is not strongly protected in $O$. We now move to the other possibilities.

\item 
Suppose $u\rightarrow v \in O$ because of the existence of
$x-y \in \mathcal{C}$ such that $x\rightarrow y \in O$, and there exists a \cp{}
$Q = (p_1 =x, p_2 = y, \ldots, p_{l-1} = u, p_l = v)$ from $(x,y)$ to $(u,v)$ in
$\mathcal{C}$. From \cref{item-2-of-obs:when-u-v-is-directed-reverse},
$p_{l-2}\rightarrow p_{l-1} = p_{l-2} \rightarrow u\in M$.  We thus have the
induced sub-graph $p \rightarrow u \rightarrow v$ in $M$ with $p = p_{l-2}$.
This case is now exactly the same as the first case of the first possibility, and hence as proved there,
cannot occur.

\item 
Suppose there exists $x - y \in O$ such that
$x \rightarrow y \in O$, and there exist  \cps{} in $\mathcal{C}$:
$Q_1 = (p_1 = x, p_2 = y, \ldots, p_l = v)$ from $(x, y)$ to $v$ and a \cp{}
$Q_2 = (q_1 = v, q_2 = u, \ldots, q_m = x)$ from $(v, u)$ to $x$ in
$\mathcal{C}$, such that $y \neq v$ and $u \neq x$.

Note that by \cref{item-2-of-obs:when-u-v-is-directed-reverse},
we then have
\begin{equation}
  \label{eq:all-edges-directed}
  \tau(p_i) < \tau(v)
  \text{ and }
  p_i \rightarrow p_{i+1}
  \text{ for every } 1 \leq i \leq l - 1.
\end{equation}

There are two possibilities: either $Q_1' = (p_1 = x, p_2 = y, \ldots, p_l = v, w)$
is a \cp{} in $\mathcal{C}$, or $Q_1'$ is not a \cp{} in $\mathcal{C}$. If
$Q_1'$ were a \cp{}, then from \cref{item-2-of-obs:when-u-v-is-directed-reverse},
$v \rightarrow w \in M$. This is a contradiction, since by hypothesis, we have
$v - w \in M$. Thus, $Q_1'$ cannot be a \cp{}.

We now show that $Q \coloneqq (p_1 = x, p_2 = y, \ldots, p_{l-1}, w)$ (obtained by replacing the node $p_l$ with $w$ in $Q_1$) is a \cp{}.

Since $Q_1$ is a \cp{} in $\mathcal{C}$, and $Q_1'$ is not a \cp{} in
$\mathcal{C}$, there must exist an edge $p_i - w \in \mathcal{C}$ for some
$i < l$. As observed in \cref{eq:all-edges-directed}, for every $i \leq l - 1$, $\tau(u_i) < \tau(v) < \tau(w)$ (the
second inequality comes from the hypothesis of \cref{statement-for-induction}).
Thus, for any $i \leq l - 2$, if $p_i - w \in \mathcal{C}$, then from
\cref{obs:LBGS-gives-PEO}, $p_i - v \in \mathcal{C}$.
But, then $Q_1$ is not a \cp{}, a contradiction.
This implies that the neighbor of $w$ in $Q_1$, responsible for ensuring that
$Q_1'$ is not a \cp{}, must be $p_{l-1}$.
This further implies that $Q$ is a \cp{}.

Note now that since $\tau(p_{l-1}) < \tau(v) < \tau(w)$ and $\tau(u) < \tau(w)$,
\cref{obs:LBGS-gives-PEO} implies that $p_{l-1} - u \in \mathcal{C}$ (since $w$ is
adjacent to both $p_{l-1}$ and $u$). We will now show that
\begin{equation}
  \label{eq:u-u-l-1-not-in-M}
  p_{l-1} \rightarrow w \in M \text{ and } u \rightarrow p_{l-1} \not\in M.
\end{equation}
The first follows from \cref{item-2-of-obs:when-u-v-is-directed-reverse} since
$Q$ as constructed above is a \cp{} in $\mathcal{C}$ from $(x, y)$ to
$(p_{l-1}, w)$ where $x \rightarrow y \in M$.  
Suppose, if possible, that $u \rightarrow p_{l-1} \in M$. Then, by \cref{obs:directed-edge-respects-tau},
and the observation preceding \cref{eq:u-u-l-1-not-in-M},
$\tau(u) < \tau(p_{l-1}) < \tau(w)$. Thus, while running lines \ref{algconstructMEC:constraint-for-orientation}-\ref{algconstructMEC:if-end} of \cref{alg:constructMEC}, we get $u\rightarrow w \in M$ (as $\tau(u) < \tau(p_{l-1}) < \tau(w)$ and $u\rightarrow p_{l-1}, p_{l-1} \rightarrow w \in M$).  This is a contradiction, as
$u - w \in M$. This implies that $u \rightarrow p_{l-1} \notin M$.

We now show that there exists some \(i \geq 2\) for which \(q_{i+1} \rightarrow q_i \in M\). Recall that \(Q_2 = (q_1 = v, q_2 = u, \ldots, q_m = x)\) is a \cp{} from \((v, u)\) to \(x\) in \(\mathcal{C}\), an \ucc{} of \(M_a\), as guaranteed by the hypothesis for this case.

From the construction, \(\skel{M[V_{\mathcal{C}}]} = \mathcal{C}\). Since \(u - p_{l-1} \in \mathcal{C}\), therefore, \(u - p_{l-1} \in \skel{M}\). This implies from \cref{eq:u-u-l-1-not-in-M}, either \(p_{l-1} \rightarrow u \in M\) or \(p_{l-1} - u \in M\). Thus, from \cref{eq:all-edges-directed}, if we do not have \(q_{i+1} \rightarrow q_i \in M\) for any \(i \geq 2\), then \(M\) has a directed cycle contained in the sequence \(C_1 \coloneqq (p_1 = x, p_2, \ldots, p_{l-1}, q_2 = u, q_3, \ldots, q_m = x)\). From \cref{prop:chordal-chain-graph-contains-directed-cycle-of-length-three}, there must thus exist a directed cycle of the form \(C = (a, b, c)\) such that \(a, b, c\) are nodes of the cycle \(C_1\).

From \cref{eq:all-edges-directed}, \(\tau(p_1 = q_m) < \tau(v)\), therefore, from \cref{obs:chordless-path-and-LBFS-ordering}, for path \(Q_2\), for \(i > 1\), \(\tau(q_i) < \tau(v)\). Also, from \cref{eq:all-edges-directed}, for \(1 \leq i < l\), \(\tau(p_i) < \tau(v)\) since \(\tau(v) < \tau(w)\). Therefore, for \(1 \leq i < l\), \(\tau(p_i) < \tau(w)\), and for \(i > 1\), \(\tau(q_i) < \tau(w)\). This implies each node of the cycle \(C_1\) has a rank lower than the rank of \(w\) in \(\tau\). Therefore, the rank of the highest rank node of the cycle \(C\) in \(\tau\) is less than the rank of \(w\).

However, this is a contradiction, as per our induction hypothesis, there cannot be any such directed cycle with the rank of the highest rank node being less than \(k+1\). This validates that for some \(i \geq 2\), \(q_{i+1} \rightarrow q_i \in M\).

Since \(Q_2\) is an undirected \cp{} in \(\mathcal{C}\), therefore, \(\bar{Q_2} = (q_m , q_{m-1}, \ldots, q_2, q_1)\), the reverse of the path \(Q_2\), is also a \cp{}. Consequently, \(Q_3 = (q_{i+1}, q_i, \ldots, q_2, q_1)\), a subpath of \(\bar{Q_2}\), is also a \cp{} in \(\mathcal{C}\). Since \(q_{i+1}\rightarrow q_i \in M\), according to \cref{item-2-of-obs:when-u-v-is-directed-reverse},  \(\tau(q_{i+1})< \tau(q_i)< \ldots, \tau(q_2) < \tau(q_1)\), and for each \(i+1\leq j \leq 2\), \(q_{j} \rightarrow q_{j-1} \in M\).

We thus have the induced sub-graph \(p \rightarrow u \rightarrow v\) in \(M\) with \(p = q_3\) (remember \(q_2 = u\) and \(q_1 = v\)). This case is now exactly the same as the first case of the first possibility, and hence as proved there, this case cannot occur.
\end{enumerate}

 The above discussion shows that there is no possibility for $u\rightarrow v-w-u \in M$.
 \end{proof} 

\begin{proof}[\textbf{Proof of \cref{claim2-of-statement-for-induction}}]
\label{proof-of-claim2-of-statement-for-induction}
  Suppose that $u-v\rightarrow w-u \in M$. From \cref{corr-of-obs:when-u-v-becomes-directed}, either of the following has occurred:

  \begin{enumerate}
      \item $v\rightarrow w$ is strongly protected in $O$.
      \item There exists $a\rightarrow b \in O$ such that $(a,b)\neq (v,w)$, and there exists a \cp{} from $(a,b)$ to $(v,w)$ in $\mathcal{C}$.
      \item There exists $a\rightarrow b \in O$ such that there exist \cps{} in $\mathcal{C}$ from $(a,b)$ to $w$ and from $(w,v)$ to $a$ such that $b\neq w$ and $v\neq a$.
  \end{enumerate}

Let's examine each possibility.

\begin{enumerate}
    \item 
    Suppose $v\rightarrow w$ is strongly protected in $O$.  This implies that $v\rightarrow w$ is part of one of the induced subgraphs of $O$, as illustrated in \cref{fig:strongly-protected-edge}.

\begin{enumerate}
    \item Assume that $v\rightarrow w$ is strongly protected due to its presence in an induced subgraph $p\rightarrow v\rightarrow w \in O$ (similar to \Cref{fig:strongly-protected-edge}.a).
Similar to the first case of the first possibility of the proof of  \cref{claim1-of-statement-for-induction}, the only relevant scenario is when $p$ belongs to $\mathcal{C}$.

From the construction, $\skel{M[V_{\mathcal{C}}]} = \mathcal{C}$. This implies $p-v \in \mathcal{C}$ and $p-w\notin \mathcal{C}$.
Given that $p-v \in \mathcal{C}$ and $p\rightarrow v \in M$, we can infer from \cref{obs:directed-edge-respects-tau} that $\tau(p) < \tau(v)$.
Given that $u-v\in \mathcal{C}$ and $\tau(u) < \tau(v)$, we can deduce from \cref{obs:LBGS-gives-PEO} that $p-u\in \mathcal{C}$. 
Since   $\skel{M[V_{\mathcal{C}}]} = \mathcal{C}$, therefore, either (1) $p-u \in M$ or (2) $p\rightarrow u \in M$ or (3) $u\rightarrow p \in M$.  We will show
that none of the three can occur.

\begin{enumerate}
    \item If $p-u \in M$, then $M$ would possess a directed cycle $(p,v,u,p)$ with the highest rank node $v$ having a rank less than $\tau(w)$, thereby contradicting our induction hypothesis.

    \item If $p\rightarrow u \in M$ then using \cref{obs:directed-edge-respects-tau}, we have $\tau(p) < \tau(u)$. However, since $p-w\notin \mathcal{C}$ and $\tau(p)<\tau(u)<\tau(w)$, applying lines \ref{algconstructMEC:constraint-for-orientation}-\ref{algconstructMEC:if-end} of \cref{alg:constructMEC} would lead to $u\rightarrow w \in M$, which contradicts $u-w\in M$.

    \item And, if $u\rightarrow p \in M$ then using \cref{obs:directed-edge-respects-tau}, we have $\tau(u) < \tau(p)$. Therefore, since $\tau(u) < \tau(p) < \tau(v)$ and $u\rightarrow p \rightarrow v -u \in M$, applying lines \ref{algconstructMEC:constraint-for-orientation}-\ref{algconstructMEC:if-end} of \cref{alg:constructMEC} would lead to $u\rightarrow v \in M$, which contradicts $u-v \in M$.
\end{enumerate}

Hence, based on the analysis above, it becomes evident that none of these scenarios are feasible. Therefore, we conclude that this case cannot occur.

\item 
Assume that $v\rightarrow w$ is strongly protected due to the presence in an induced subgraph $v\rightarrow w \leftarrow p \in O$ (as shown in \Cref{fig:strongly-protected-edge}.b). Similar to the second case of the first possibility of the proof of  \cref{claim1-of-statement-for-induction}, this case does not occur.

\item 
Suppose $v\rightarrow w$ is strongly protected in $O$ because it is part of an induced subgraph $v\rightarrow p \rightarrow w\leftarrow v \in O$ (same as \Cref{fig:strongly-protected-edge}.c). Similar to the third case of the first possibility of the proof of \cref{claim1-of-statement-for-induction}, $p\in \mathcal{C}$. 
From \cref{lem:O-is-an-induced-subgraph-of-M}, $v\rightarrow p \rightarrow w\leftarrow v \in M$. From the construction, $\skel{M[V_{\mathcal{C}}]} = \mathcal{C}$. This implies  $p-w \in \mathcal{C}$ (as $p\rightarrow w \in M$ and $p,w \in \mathcal{C}$).
From \cref{obs:directed-edge-respects-tau}, $\tau(v) < \tau(p) < \tau(w)$. 
From \cref{obs:LBGS-gives-PEO}, this implies $u-p\in \mathcal{C}$ (since $p-w, u-w \in \mathcal{C},\tau(p)< \tau(w)$ and $\tau(u)<\tau(w)$).
Since $u-p \in \mathcal{C}$ and $\skel{M[V_{\mathcal{C}}]} = \mathcal{C}$, therefore, either (1) $u-p \in M$, or (2) $u\rightarrow p \in M$, or (3) $p\rightarrow u \in M$. 
We show that none of these cases occur.
\begin{enumerate}
    \item If $u-p \in M$ then $M$ have a directed cycle $u-v\rightarrow p -u$ such that the rank of the highest rank node of the cycle is less than $\tau(w)$, which contradicts our induction hypothesis. This implies $u-p \notin M$.

    \item Suppose $u\rightarrow p \in M$. Then, since $u\rightarrow p, p\rightarrow w \in M$, while running lines \ref{algconstructMEC:constraint-for-orientation}-\ref{algconstructMEC:if-end} of \cref{alg:constructMEC}, we get $u\rightarrow w \in M$, a contradiction, as $u-w\in M$. This implies $u\rightarrow p \notin M$ does not occur.
    
    \item Since $\tau(u) < \tau(p)$, from \cref{obs:directed-edge-respects-tau}, $p\rightarrow u \notin M$.
\end{enumerate}
The above analysis shows that neither $u-p \in M$ nor $u\rightarrow p \in M$ nor $p\rightarrow u \in M$.  This implies that this case does not occur.

\item 
Suppose $v\rightarrow w$ is strongly protected in $O$ because it is part of an induced subgraph $v\rightarrow w \leftarrow p-v-p'\rightarrow w$ (same as \Cref{fig:strongly-protected-edge}.d). Similar to the fourth case of the first possibility of the proof of   \cref{claim1-of-statement-for-induction}, this case does not occur.
\end{enumerate}

\item 
Suppose $v\rightarrow w \in M$ because of the existence of some $x-y \in \mathcal{C}$ such that $x\rightarrow y\in O$, and there exists a \cp{} $P = (p_1=x, p_2 =y, \ldots, p_{l-1}=v, p_l = w)$ from $(x,y)$ to $(v,w)$ in $\mathcal{C}$. From \cref{item-2-of-obs:when-u-v-is-directed-reverse}, $p_{l-2}\rightarrow p_{l-1} \in M$. 
We thus have the
induced sub-graph $p \rightarrow v \rightarrow w$ in $M$ with $p = p_{l-2}$.
This case is now exactly the same as the first case of the first possibility, and hence as proved there,
cannot occur.

\item 
Suppose there exists $x-y \in O$ such that $x\rightarrow y \in O$, and there exists a \cp{} in $\mathcal{C}$: $Q_1 = (p_1 = x, p_2=y, \ldots, p_l =w)$ from $(x,y)$ to $w$ in $\mathcal{C}$, and  $Q_2 = (q_1 = w, q_2 = v, \ldots, q_m = x)$ from $(w,v)$ to $x$ in $\mathcal{C}$ such that $y\neq w$ and $v\neq x$. According to \cref{item-2-of-obs:when-u-v-is-directed-reverse}, we have $\tau(p_{l-1}) < \tau(w)$, and $p_{l-1} \rightarrow w \in M$.
Considering that $\tau(p_{l-1})  < \tau(w)$, $\tau(v) < \tau(w)$, and $\tau(u) < \tau(w)$, it follows from \cref{obs:LBGS-gives-PEO} that $p_{l-1}-u$ and $p_{l-1} -v$ are in $\mathcal{C}$.

There are two possibilities: either $v\rightarrow p_{l-1} \in M$, or $v\rightarrow p_{l-1}\notin M$. Let's examine both possibilities:

\begin{enumerate}
    \item Suppose $v\rightarrow p_{l-1} \in M$. Since $u-p_{l-1} \in \mathcal{C}$, we have either $u-p_{l-1} \in M$, or $u\rightarrow p_{l-1} \in M$, or $p_{l-1}\rightarrow u \in M$. 
    \begin{enumerate}
        \item Suppose $u-p_{l-1} \in M$. This would result in a directed cycle $(u,v,p_{l-1}, u)$ in $M$ where the node with the highest rank, $u$, has a rank lower than $\tau(w)$, contradicting our induction hypothesis.
        
        \item Suppose $u\rightarrow p_{l-1} \in M$. Since $p_{l-1}\rightarrow w \in M$, running lines \ref{algconstructMEC:constraint-for-orientation}-\ref{algconstructMEC:if-end} of \cref{alg:constructMEC} would yield $u\rightarrow w \in M$. This contradicts the assumption $u-w \in M$. Thus, $u\rightarrow p_{l-1} \notin M$.
        
        \item From the hypothesis of \cref{statement-for-induction}, $\tau(u) < \tau(v)$. Since from our assumption, $v\rightarrow p_{l-1} \in M$, therefore, from \cref{obs:directed-edge-respects-tau}, $\tau(v) < \tau(p_{l-1})$. This implies $\tau(u) < \tau(p_{l-1})$. Then, from \cref{obs:directed-edge-respects-tau}, $p_{l-1} \rightarrow u \notin M$.
    \end{enumerate} 
    The above analysis shows that neither $u-p_{l-1} \in M$ nor  $u\rightarrow p_{l-1} \in M$ nor $p_{l-1} \rightarrow u \in M$. 
    
    \item Suppose $v\rightarrow p_{l-1}\notin M$. Then for some $i\geq 2$, $q_{i+1}\rightarrow q_i$; otherwise, a directed cycle $(p_1 =x, p_2, \ldots, p_{l-1}, q_2=u, q_3, \ldots, q_m=x)$ with a maximum rank lower than $\tau(w)$ would exist in $M$. From \cref{prop:chordal-chain-graph-contains-directed-cycle-of-length-three},  this would lead to a directed cycle in $M$ of length 3 with a maximum rank lower than $\tau(w)$, contradicting our induction hypothesis. Hence, for some $i\geq 2$, $q_{i+1}\rightarrow q_i \in M$. Since $Q_2$ is an undirected \cp{} in $\mathcal{C}$, therefore, $\bar{Q_2} = (q_m , q_{m-1}, \ldots, q_2, q_1)$, the reverse of the path $Q_2$, is also a \cp{}. Consequently, $Q_3 = (q_{i+1}, q_i, \ldots, q_2, q_1)$, a subpath of $\bar{Q_2}$, is also a \cp{} in $\mathcal{C}$. Since $q_{i+1}\rightarrow q_i \in M$, according to \cref{item-2-of-obs:when-u-v-is-directed-reverse},  $\tau(q_{i+1})< \tau(q_i)< \ldots, \tau(q_2) < \tau(q_1)$, and for each $i+1\leq j \leq 2$, $q_{j} \rightarrow q_{j-1} \in M$.
We thus have the
induced sub-graph $p \rightarrow v \rightarrow w$ in $M$ with $p = q_3$ (remember $Q_2$ is a \cp{}, $q_1 = w$ and $q_2 = v$).
This case is now exactly the same as the first case of the first possibility, and hence as proved there,
cannot occur.
\end{enumerate}
\end{enumerate}
The above discussion shows a contradiction in each possibility, thereby demonstrating that $u-v\rightarrow w -u \notin M$. This establishes the truth of \cref{claim2-of-statement-for-induction}.

\end{proof}

\begin{proof}[\textbf{Proof of \cref{claim3-of-statement-for-induction}}]
\label{proof-of-claim3-of-statement-for-induction}
Suppose $u-v-w\leftarrow u \in M$.
From \cref{corr-of-obs:when-u-v-becomes-directed}, either of the following occurred:
 \begin{enumerate}
    \item $u \rightarrow w$ is strongly protected in $O$.
    
  \item There exists
  $a\rightarrow b \in O$ such that $(a,b)\neq (u,w)$ and there exists a \cp{} from $(a,b)$ to $(u,w)$ in $\mathcal{C}$.
  
  \item There exists $a\rightarrow b \in O$ such that there exist \cps{} in $\mathcal{C}$ from $(a,b)$ to $w$
  and from $(w,u)$ to $a$ such that $b\neq w$ and $u\neq a$.  
 \end{enumerate}
We will consider each possibility.
\begin{enumerate}
    \item 
    $u\rightarrow w$ is strongly protected in $O$. Then, $u\rightarrow w$ is part of one of the induced subgraphs of $O$ in \cref{fig:strongly-protected-edge}.

  \begin{enumerate}
      \item 
      Suppose $u\rightarrow w$ is strongly protected because it is part of an induced subgraph $p\rightarrow u\rightarrow w \in O$ (same as \Cref{fig:strongly-protected-edge}.a). Similar to the first case of the first possibility of the proof of  \cref{claim1-of-statement-for-induction}, the only possibility we need to consider is when $p\in \mathcal{C}$. From \cref{obs:directed-edge-respects-tau}, $\tau(p) < \tau(u)$. 
If $p-v \notin \mathcal{C}$, then while running lines \ref{algconstructMEC:constraint-for-orientation}-\ref{algconstructMEC:if-end} of \cref{alg:constructMEC}, we get $u\rightarrow v \in M$, a contradiction, as $p\rightarrow u-v$ is an induced subgraph in $M$ and $\tau(p) < \tau(u) < \tau(v)$ (second inequality comes from the hypothesis of \cref{statement-for-induction}). However, this is a contradiction as $u-v\in M$. This implies $p-v \in \mathcal{C}$. 
There are three  possibilities: either $v-p \in M$, or $v\rightarrow p \in M$, or $p\rightarrow v\in M$. Let's consider each one.
\begin{enumerate}
    \item If $v-p \in M$ then we have a directed cycle $(p,u,v,p)$ in $M$ such that the rank of the highest rank node of the cycle (which is $v$) is less than the rank of $w$, this contradicts the induction hypothesis of \cref{statement-for-induction}. Hence implies $v-p \notin M$. 
    
    \item From \cref{obs:directed-edge-respects-tau}, $v\rightarrow p \notin M$ as $\tau(p) < \tau(v)$.
    
    \item Suppose $p\rightarrow v \in M$. But, then while running lines \ref{algconstructMEC:constraint-for-orientation}-\ref{algconstructMEC:if-end} of \cref{alg:constructMEC}, we get $v\rightarrow w \in M$, a contradiction, as $p\rightarrow v-w$ is an induced subgraph in $M$ and $\tau(p) < \tau(v) < \tau(w)$. 
\end{enumerate}

We have shown that in all the possibilities of this subcase, we get a contradiction. Therefore, this subcase cannot occur.

\item 
Suppose that $u\rightarrow w$ is strongly protected because it is part of an induced subgraph $u\rightarrow w \leftarrow p \in O$ (same as \Cref{fig:strongly-protected-edge}.b). Similar to the second case of the first possibility of the proof of \cref{claim1-of-statement-for-induction}, this case does not occur.

\item 
Suppose $u\rightarrow w$ is strongly protected in $O$ because it is part of an induced subgraph $u\rightarrow p \rightarrow w\leftarrow u \in O$ (same as \Cref{fig:strongly-protected-edge}.c). Similar to the third case of the first possibility of the proof of \cref{claim1-of-statement-for-induction}, the only possibility we need to consider is when $p\in \mathcal{C}$. 
From \cref{obs:directed-edge-respects-tau}, $\tau(u) < \tau(p) < \tau(w)$. 
From \cref{obs:LBGS-gives-PEO}, $v-p\in \mathcal{C}$ (since $p-w, v-w \in \mathcal{C},\tau(p)< \tau(w)$ and $\tau(v) <\tau(w)$). From the construction, $\skel{M[V_{\mathcal{C}}]} = \mathcal{C}$. Therefore, $v-p \in \skel{M}$. This implies either $v-p \in M$ or $v\rightarrow p \in M$ or $p\rightarrow v \in M$. We go through each possibility.
\begin{enumerate}
    \item Suppose $v-p \in M$. Then we get a directed cycle $(u,p,v,u)$ in $M$ such that the highest rank node (which could be either $v$ or $p$) in the cycle is less than the rank of $w$ in $\tau$, this contradicts the induction hypothesis of \cref{statement-for-induction}. This implies $v-p \notin M$.
    
    \item Suppose $v\rightarrow p \in M$. Then from \cref{obs:directed-edge-respects-tau}, $\tau(v) < \tau(p)$. Then while running lines \ref{algconstructMEC:constraint-for-orientation}-\ref{algconstructMEC:if-end} of \cref{alg:constructMEC}, we get $v\rightarrow w \in M$ (as $\tau(v) < \tau(p) < \tau(w)$ and $v\rightarrow p, p\rightarrow w \in M$). But, this is a contradiction as $v-w \in M$. This implies $v\rightarrow p \notin M$.

    \item Suppose $p\rightarrow v \in M$. Then from \cref{obs:directed-edge-respects-tau}, $\tau(p) < \tau(v)$. Then while running lines \ref{algconstructMEC:constraint-for-orientation}-\ref{algconstructMEC:if-end} of \cref{alg:constructMEC}, we get $u\rightarrow v \in M$ (as $\tau(u) < \tau(p) < \tau(v)$ and $u\rightarrow p, p\rightarrow v \in M$). But, this is a contradiction as $u-v \in M$. This implies $p\rightarrow v \notin M$.
\end{enumerate}

The above analysis shows that none of the possibilities can occur. This  implies that this case does not occur.

\item 
Suppose $u\rightarrow w$ is strongly protected in $O$ because it is part of an induced subgraph $u\rightarrow w \leftarrow p-u-p'\rightarrow w$ (same as \Cref{fig:strongly-protected-edge}.d). Similar to the fourth case of the first possibility of the proof of \cref{claim1-of-statement-for-induction}, this case does not occur.
  \end{enumerate}

  \item 
  Suppose $u\rightarrow w \in M$ because of the existence of some $x-y \in \mathcal{C}$ such that $x\rightarrow y\in O$, and there exists a \cp{} $P = (p_1=x, p_2 =y, \ldots, p_{l-1}=u, p_l = w)$ from $(x,y)$ to $(u,w)$ in $\mathcal{C}$. From \cref{item-2-of-obs:when-u-v-is-directed-reverse}, $p_{l-2}\rightarrow p_{l-1} \in M$. 
We thus have the
induced sub-graph $p \rightarrow u \rightarrow w$ in $M$ with $p = p_{l-2}$.
This case is now exactly the same as the first case of the first possibility, and hence as proved there,
cannot occur.

\item 
Suppose there exists $x-y \in O$ such that $x\rightarrow y \in O$, and there exist \cps{} in $\mathcal{C}$: $Q_1 = (p_1 = x, p_2=y, \ldots, p_l =w)$ from $(x,y)$ to $w$ in $\mathcal{C}$, and  $Q_2 = (q_1 = w, q_2 = u, \ldots, q_m = x)$ from $(w,u)$ to $x$ such that $y\neq w$ and $u\neq x$. From \cref{item-2-of-obs:when-u-v-is-directed-reverse}, $\tau(p_{l-1}) < \tau(w)$, and $p_{l-1} \rightarrow w \in M$.
Since $\tau(p_{l-1})  < \tau(w)$, $\tau(v) < \tau(w)$, and $\tau(u) < \tau(w)$ (the last two inequality comes from the hypothesis of \cref{statement-for-induction}), therefore, from \cref{obs:LBGS-gives-PEO}, $u-p_{l-1}, v-p_{l-1} \in \mathcal{C}$.
From the construction, $\skel{M[V_{\mathcal{C}}]} = \mathcal{C}$. This implies $u-p_{l-1}, v-p_{l-1} \in \skel{M}$.
There are two possibilities: either $u\rightarrow p_{l-1} \in M$, or $u\rightarrow p_{l-1}\notin M$ (i.e., either $u-p_{l-1} \in M$ or $p_{l-1} \rightarrow u \in M$). We go through each possibility.

\begin{enumerate}
    \item Suppose $u \rightarrow p_{l-1} \in M$. Since $v - p_{l-1}\in \skel{M}$, either we have $v-p_{l-1} \in M$ or $v\rightarrow p_{l-1} \in M$ or $v\leftarrow p_{l-1} \in M$. 
    \begin{enumerate}
        \item Suppose $v-p_{l-1} \in M$. Then we have a directed cycle $(u, p_{l-1}, v,u)$ in $M$ such that the rank of the highest rank node in the cycle is bigger than $\tau(w)$ in $\tau$. This contradicts the induction hypothesis of \cref{statement-for-induction}. This implies $v-p_{l-1} \notin M$.

        \item Suppose $v\rightarrow p_{l-1} \in M$. Then, since $p_{l-1}\rightarrow w \in M$, therefore, while running lines \ref{algconstructMEC:constraint-for-orientation}-\ref{algconstructMEC:if-end} of \cref{alg:constructMEC},  we get $v\rightarrow w \in M$ (as $v\rightarrow p_{l-1}, p_{l-1}\rightarrow w \in M$). But, from our assumption, $v-w \in M$. This  implies that $v\rightarrow p_{l-1} \notin M$.

        \item Suppose $v\leftarrow p_{l-1} \in M$.  Then, while running lines \ref{algconstructMEC:constraint-for-orientation}-\ref{algconstructMEC:if-end} of \cref{alg:constructMEC},  we get $u\rightarrow v \in M$ (as $u\rightarrow p_{l-1} \in M, p_{l-1}\rightarrow v \in M$). But, from our assumption $u-v \in M$. This  implies that $v\leftarrow p_{l-1} \notin M$.
    \end{enumerate} 
    The above analysis shows that when $u\rightarrow p_{l-1} \in M$, neither $v-p_{l-1} \in M$ nor $v\rightarrow p_{l-1} \in M$ nor $v\leftarrow p_{l-1} \in M$. This implies $u \rightarrow p_{l-1} \in M$ does not occur.
    
    \item Suppose $u\rightarrow p_{l-1}\notin M$. Then for some $i\geq 2$, $q_{i+1}\rightarrow q_i$; otherwise, a directed cycle $(p_1 =x, p_2, \ldots, p_{l-1}, q_2=u, q_3, \ldots, q_m=x)$ with a maximum rank lower than $\tau(w)$ would exist in $M$. From \cref{prop:chordal-chain-graph-contains-directed-cycle-of-length-three}, this would lead to a directed cycle in $M$ of length 3 with a maximum rank lower than $\tau(w)$, contradicting the induction hypothesis of \cref{statement-for-induction}. Hence, for some $i\geq 2$, $q_{i+1}\rightarrow q_i \in M$. Since $Q_2$ is an undirected \cp{}, $\bar{Q_2} = (q_m , q_{m-1}, \ldots, q_2, q_1)$, the reverse of the path $Q_2$, is also a \cp{}. Consequently, $Q_3 = (q_{i+1}, q_i, \ldots, q_2, q_1)$, a subpath of $\bar{Q_2}$, is also a \cp{} in $\mathcal{C}$. Since $q_{i+1}\rightarrow q_i \in M$, according to \cref{item-2-of-obs:when-u-v-is-directed-reverse},  $\tau(q_{i+1})< \tau(q_i)< \ldots, \tau(q_2) < \tau(q_1)$, and for each $i+1\leq j \leq 2$, $q_{j} \rightarrow q_{j-1} \in M$.
We thus have the
induced sub-graph $p \rightarrow u \rightarrow w$ in $M$ with $p = q_3$ (remember $Q_2$ is a \cp{}, $q_1 = w$ and $q_2 = u$).
This case is now exactly the same as the first case of the first possibility, and hence as proved there,
cannot occur.
\end{enumerate}
This indicates a contradiction in all possibilities, thereby demonstrating that $u-v-w \leftarrow u \notin M$. This establishes the truth of \cref{claim3-of-statement-for-induction}.
\end{enumerate} 
\end{proof}

\begin{proof}[\textbf{Proof of \cref{obs:no-chordless-cycle-in-M}}]
    Let $C = (u_1, u_2, \ldots, u_l, u_{l+1} = u_1)$ be a chordless cycle in $M$. If $C$ lies entirely within $M_1$ (i.e., each node of $C$ belongs to $V_{M_1}$), then by \cref{obs:induced-graph-of-M-on-M1-is-a-chain-graph}, $C$ is undirected. However, from the construction of $M$, (a) for any $u, v \in V_{M_1}$, if $u - v \in M$, then $u - v \in M_1$ (due to the initialization step, line \ref{algconstructMEC:M-initialization} of \cref{alg:constructMEC}, a directed edge in $M_1$ is a directed edge in $M$), and (b) if there is no chord in $C$ in $M$ then there is no chord in $C$ in $M_1$ as well (because all the nodes of $C$ are in $M_1$ and $\skel{M[V_{M_1}]} = \skel{M_1}$). This implies that $C$ is an undirected chordless cycle in $M_1$, which contradicts \cref{item-2-theorem-nec-suf-cond-for-MEC} of \cref{thm:nes-and-suf-cond-for-chordal-graph-to-be-an-MEC}. Hence, the nodes in $C$ cannot be exclusively within $M_1$ or $M_2$. Thus, there must exist a node $u_i$ in $C$ that is in $V_{M_1} \setminus I$, a node $u_j$ in $C$ that is in $V_{M_2} \setminus I$, and a node $u_k$ in $C$ that is in $I$, as $I = V_{M_1}\cap V_{M_2}$ is a vertex separator of $H = \skel{M}$ that separates $V_{M_1} \setminus I$ and $V_{M_2} \setminus I$. 
    
    Without loss of generality, let's assume that $u_1 \in I$.
    Therefore, $u_1, u_2, u_l \in V_O$ since $u_2$ and $u_l$ are neighbors of $u_1$, and $I \cup N(I, H) \subseteq V_O$. Since $O$ is an induced subgraph of $M$ (from \cref{lem:O-is-an-induced-subgraph-of-M}), and $C$ is a cycle in $M$, for any edge $(u_i, u_{i+1})$ in $C$, if $u_i, u_{i+1} \in V_O$, then $(u_i, u_{i+1}) \in O$. More specifically, $(u_1, u_2), (u_l, u_1) \in E_O$.

    We enumerate the edges in $C$ present in $O$ in the order they appear starting from $(u_1, u_2)$, ending with $(u_l, u_1)$. This enumeration is represented as $L = ((x_1, y_1), (x_2, y_2), \ldots, (x_m, y_m))$, where $(x_1, y_1) = (u_1, u_2)$ and $(x_m, y_m) = (u_l, u_1)$. Notably, $m \geq 2$ since the list contains the edges $(u_1, u_2)$ and $(u_l, u_1)$. Additionally, let $(x_{m+1}, y_{m+1}) = (x_1, x_2)$, and define $(P_1, P_2) = \EPF{O, P_{11}, P_{12}, P_{21}, P_{22}}$. We present the following claims for the edges in this enumeration:
    
    \begin{enumerate}
        \item For $1 \leq i \leq m$, if $y_i = x_{i+1}$, then $(x_i, y_i, y_{i+1})$ is a \tfp{} in $O$, and $P_1((x_i, y_i), (x_{i+1}, y_{i+1})) = 1$.
        
        \item For $1 \leq i \leq m$, if $y_i \neq x_{i+1}$, then either there exists a \tfp{} from $(x_i, y_i)$ to $(x_{i+1}, y_{i+1})$ in $M_1$, or there exists a \tfp{} from $(x_i, y_i)$ to $(x_{i+1}, y_{i+1})$ in $M_2$, and $P_1((x_i, y_i), (x_{i+1}, y_{i+1})) = 1$.
    \end{enumerate}
    
    \begin{proof}[Proof of the claim]
        For any $1 \leq i \leq m$ where $y_i = x_{i+1}$, due to the chordless nature of $C$ and the presence of $(x_i, y_i)$ and $(x_{i+1}, y_{i+1})$ as edges in $O$, it follows that $(x_i, y_i, y_{i+1})$ is a \tfp{} in $O$. Consequently, considering that $(P_1, P_2) = \EPF{O, P_{11}, P_{12}, P_{21}, P_{22}}$, by \cref{item-1-of-def:extended-path-function} of \cref{def:extended-path-function}, $P_1((x_i, y_i), (x_{i+1}, y_{i+1}))$ is initialized to $1$ due to the presence of a \tfp{} from $(x_i, y_i)$ to $(x_{i+1}, y_{i+1})$ in $O$. This validates the first part of the claim.

        Suppose for some $1 \leq i \leq m$, $y_i \neq x_{i+1}$. Then, $y_i$ cannot be part of $I$. Otherwise, the neighbors of $y_i$ would be within $O$ ($I \cup N(I, H) \subseteq V_O$). That means, for $1\leq j \leq l$, if $y_i = u_j$ then the next node $u_{j+1}$ in $C$ is also in $V_O$. This implies $(u_j, u_{j+1})$ is in $O$, as $C$ is a cycle in $M$, $u_j, u_{j+1} \in V_O$, and $O$ is an induced subgraph of $M$. This implies the next edge after $(x_i, y_i)$ in $L$ is $(u_j, u_{j+1}) = (x_{i+1}, y_{i+1})$. But, then $y_i = x_{i+1}$, a contradiction. This implies that $y_i$ must either be in $V_{M_1} \setminus I$ or in $V_{M_2} \setminus I$. Without loss of generality, assume that $y_i \in V_{M_1} \setminus I$. Consequently, $x_{i+1}$ must not be in $V_{M_2} \setminus I$, or else the path between $y_i$ and $x_{i+1}$ in $C$ would contain an $I$ node, as $I$ serves as a vertex separator between $V_{H_1} \setminus I$ and $V_{H_2} \setminus I$. This implies that within the path from $y_i$ to $x_{i+1}$ in $C$, there exists an edge from $C$ in $O$ due to the containment of $I \cup N(I, H) \subseteq V_O$ and the fact that $C$ is a cycle in $M$, and $O$ is an induced subgraph of $M$. Similarly, $x_{i+1}$ must not be in $I$; otherwise, the edge right before $(x_{i+1}, y_{i+1})$ in $C$ would be in $O$ as well (since $I \cup N(I,H) \subseteq V_O$), but the edge prior to $(x_{i+1}, y_{i+1})$ in $O$ is $(x_i, y_i)$. This implies that $(x_i , y_i)$ is right before $(x_{i+1}, y_{i+1})$ in $C$, leading to $y_i = x_{i+1}$, contradicting the assumption. Hence, $x_{i+1} \in V_{M_1} \setminus I$, which further implies $y_{i+1} \in V_{M_1}$ since a neighbor of a node in $V_{M_1} \setminus I$ within $V_O$ must belong to $V_{M_1}$, and $y_{i+1}$ is the neighbor of $x_{i+1}$ and lies in $V_O$ (due to $(x_{i+1}, y_{i+1}) \in O$). Thus, $x_i$, $y_i$, $x_{i+1}$, and $y_{i+1}$ all belong to $M_1$. Furthermore, the path from $(x_i, y_i)$ to $(x_{i+1}, y_{i+1})$ must exclusively lie within $M_1$, otherwise there would be an edge in $O$ between $(x_i, y_i)$ and $(x_{i+1}, y_{i+1})$ in $C$, a contradiction. With our construction of $M$, for any $u,v \in V_{M_1}$, if $(u,v) \in M$, then $(u,v) \in M_1$. This implies that the \cp{} from $(x_i, y_i)$ to $(x_{i+1}, y_{i+1})$ in $C$ is also a \cp{} in $M_1$, meaning $P_{a1}((x_i, y_i), (x_{i+1}, y_{i+1})) = 1$ (since every \cp{} is a \tfp{}). Also, $y_1 \rightarrow x_i \notin M$, otherwise $C$ wouldn't possess the edge $(x_i, y_i)$. Therefore, by \cref{item-2-of-def:extended-path-function} of \cref{def:extended-path-function}, $P_1((x_i, y_i), (x_{i+1}, y_{i+1})) = 1$.
    \end{proof}

    It follows that for all $1 \leq i \leq m$, $P_1((x_i, y_i), (x_{i+1}, y_{i+1})) = 1$. Now, we claim that $P_1((x_1, y_1), (x_m, y_m)) = 1$. Suppose this isn't true and select the smallest $j \leq m$ for which $P_1((x_1, y_1), (x_j, y_j)) = 0$. Given the previous discussion, since $P_1((x_i, y_i), (x_{i+1}, y_{i+1})) = 1$ for all $1\leq i \leq m$, it implies that $j > 2$. This further leads to $P_1((x_1, y_1), (x_{j-1}, y_{j-1})) = P_1((x_{j-1}, y_{j-1}), (x_j, y_j)) = 1$. But, from \cref{item-3-of-def:extended-path-function} of \cref{def:extended-path-function}, we get $P_1((x_1, y_1), (x_i, y_i)) = 1$, a contradiction. Hence, $P_1((x_1, y_1), (x_m, y_m)) = 1$. Additionally, from our earlier claim, $P_1((x_m, y_m), (x_{m+1}, y_{m+1}) = (x_1, y_1)) =1$.

    According to the assumption of \cref{obs2:O-structure-for-existence-of-MEC}, $O$ is an extension of $(O_1, P_{11}, P_{12}, O_2, P_{21}, P_{22})$. Thus, by \cref{def:extension-of-O1-O2-P11-P12-P21-P22}, $\EPF{O, P_{11}, P_{12}, P_{21}, P_{22}}$ is valid (as per \cref{def:valid-epfs}). However, the fact that $P_1((x_1, y_1), (x_m, y_m)) = P_1((x_m, y_m), (x_1, y_1)) = 1$ demonstrates that $\EPF{O, P_{11}, P_{12}, P_{21}, P_{22}}$ is invalid, leading to a contradiction. Consequently, the existence of a chordless cycle in $M$ is not possible.
\end{proof}

\begin{proof}[\textbf{Proof of \cref{lem:UCC-of-M-chordal}}]
\label{proof-of-lem:UCC-of-M-chordal}
Assume, for the sake of contradiction, that \cref{lem:UCC-of-M-chordal} is false, implying the existence of an undirected connected component denoted as $\mathcal{C}$ in $M$ that is not chordal. Since $\mathcal{C}$ is not chordal, there must exist an undirected chordless cycle $C = (u_1, u_2, \ldots, u_l, u_{l+1} = u_1)$, where $l$ is greater than or equal to 4. This implies that no edges exist between non-adjacent nodes of $C$ within $\mathcal{C}$. We consider two possibilities: (a) there is an edge between non-adjacent nodes of $C$ in $M$, but it is directed; or (b) no edge, either directed or undirected, exists between non-adjacent nodes of $C$ in $M$.

Let's analyze these cases. Firstly, suppose there is a directed edge between two non-adjacent nodes of $C$ in $M$. This would create a directed cycle in $M$, contradicting \cref{lem:M-is-chain-graph} which states that $M$ is a chain graph.

Next, consider the second possibility, where there are no edges, neither directed nor undirected, between non-adjacent nodes in $M$. As a result, $C$ would be a chordless cycle in $M$. However, this contradicts \cref{obs:no-chordless-cycle-in-M}, which establishes that $M$ does not contain any chordless cycles. 

Consequently, neither of these cases is feasible. Hence, we conclude that \cref{lem:UCC-of-M-chordal} holds true, and there are no undirected connected components in $M$ that are not chordal.
\end{proof}

\begin{proof}[\textbf{Proof of \cref{lem:a->b-c-does-not-exist}}]
\label{proof-of-lem:a->b-c-does-not-exist}
We will first demonstrate that there are no induced subgraphs of the form $a\rightarrow b-c$ within $M[V_{M_1}]$ and $M[V_{M_2}]$.

Let's consider the scenario where there is an induced subgraph of the form $a\rightarrow b-c$ within $M[V_{M_i}]$ for some $i\in \{1,2\}$. Based on the construction of $M$, either $a\rightarrow b-c$ or $a-b-c$ is an induced subgraph of $M_i$. According to \cref{item-3-theorem-nec-suf-cond-for-MEC} of \cref{thm:nes-and-suf-cond-for-chordal-graph-to-be-an-MEC}, $a\rightarrow b-c$ cannot be an induced subgraph of $M_i$.

If $a-b-c \in M_i$, then $a$, $b$, and $c$ all belong to an undirected connected component $\mathcal{C}$ of $M_i$, and $(a,b,c)$ forms a chordless path from $(a,b)$ to $(b,c)$ within $\mathcal{C}$. Now, considering \cref{item-2-of-obs:when-u-v-is-directed-reverse}, since $(a,b,c)$ is a chordless path and $a\rightarrow b \in M$, it follows that $b\rightarrow c \in M$. This leads to a contradiction since we assumed that $b-c \in M$. Thus, we can conclude that there cannot exist an induced subgraph of the form $a\rightarrow b-c$ within $M[V_{M_i}]$.

The above analysis implies that if there exists an induced subgraph $a\rightarrow b-c$ in $M$, then all the nodes of the induced subgraph cannot belong exclusively to $V_{M_1}$ or $V_{M_2}$. This implies that at least one node must be in $V_{M_1} \setminus I$ (where $I = V_{H_1} \cap V_{H_2}$), and one node must be in $V_{M_2} \setminus I$. Since $I$ acts as a separator between $V_{H_1} \setminus I$ and $V_{H_2} \setminus I$, the induced subgraph must contain at least one node from $I$. Therefore, we have two possibilities: (i) $a \in V_{M_1} \setminus I$, $b \in I$, and $c \in V_{M_2} \setminus I$, or (ii) $a \in V_{M_2} \setminus I$, $b \in I$, and $c \in V_{M_1} \setminus I$. In both cases, $b$ belongs to $I$.

Consequently, we have $a, b, c \in I \cup N(I, H) \subseteq V_O$. This implies that the induced subgraph is in $O$ as $O$ is an induced subgraph of $M$. However, this contradicts the fact that $O$ is a partial MEC, as stated in \cref{item-2-of-def:partial-MEC} of \cref{def:partial-MEC}. Nonetheless, $O$ is an extension of $(O_1, P_{11}, P_{12}, O_2, P_{21}, P_{22})$, and \cref{def:extension-of-O1-O2-P11-P12-P21-P22} implies that $O$ is a partial MEC. This creates a contradiction. Hence, there cannot be any such induced subgraph in $M$. Thus, \cref{lem:a->b-c-does-not-exist} is validated.
\end{proof}

\begin{proof}[\textbf{Proof of \cref{lem:every-directed-edge-of-M-is-strongly-ptotected}}]
\label{proof-of-lem:every-directed-edge-of-M-is-strongly-ptotected}
Suppose $u\rightarrow v \in M$. Since, from the construction, $\skel{M} = \skel{M_1} \cup \skel{M_2} = H_1 \cup H_2$, there are two possibilities: either $u-v\in E_{H_1}$ or $u-v\in E_{H_2}$. Without loss of generality, we assume $u-v\in E_{H_1}$. From the construction of $M$, either $u\rightarrow v \in M_1$ or $u-v \in M_1$. Because if $v\rightarrow u \in M_1$ then due to the initialization step, line \ref{algconstructMEC:M-initialization} of \cref{alg:constructMEC}, we have $v\rightarrow u 
\in M$, which contradicts our assumption that $u\rightarrow v \in M$. We will show that in both cases, $u\rightarrow v$ is strongly protected in $M$. We will consider each possibility in turn:

\begin{enumerate}
    \item Suppose $u\rightarrow v \in M_1$. Then $u\rightarrow v$ is strongly protected in $M_1$ according to \cref{item-4-theorem-nec-suf-cond-for-MEC} of \cref{thm:nes-and-suf-cond-for-chordal-graph-to-be-an-MEC}. As per the construction of $M$, all the directed edges in $M_1$ are also directed in $M$ (due to line \ref{algconstructMEC:M-initialization} of \cref{alg:constructMEC}). Now, let's consider the induced subgraphs shown in (a), (b), (c), and (d) of \cref{fig:strongly-protected-edge}.

If $u\rightarrow v$ is strongly protected in $M_1$ due to its inclusion in an induced subgraph shown in (a), (b), or (c) of \cref{fig:strongly-protected-edge}, then the fact that all the edges in those induced subgraphs are directed implies that the induced subgraph is also an induced subgraph of $M$ (due to line \ref{algconstructMEC:M-initialization} of \cref{alg:constructMEC}). This further implies $u\rightarrow v$ is also strongly protected in $M$.

Now, let's focus on the case where $u\rightarrow v$ is strongly protected in $M_1$ because it is part of the induced subgraph shown in \cref{fig:strongly-protected-edge}-(d). In this induced subgraph, the edges $w\rightarrow v$, $w'\rightarrow v$, and $u\rightarrow v$ are directed, indicating that $w\rightarrow v$, $w'\rightarrow v$, and $u\rightarrow v$ are also present in $M$ due to the construction of $M$.
To establish the strong protection of $u\rightarrow v$ in $M$, we consider different cases. If $w-u$ and $w'-u$ are present in $M$, then $u\rightarrow v$ is part of the induced subgraph shown in \cref{fig:strongly-protected-edge}-(d), making it strongly protected in $M$. If $u\rightarrow w$ or $u\rightarrow w'$ exists in $M$, then again $u\rightarrow v$ is part of an induced subgraph, either as shown in \cref{fig:strongly-protected-edge}-(c) with $u\rightarrow w\rightarrow v\leftarrow u$, or as shown in \cref{fig:strongly-protected-edge}-(c) with $u\rightarrow w'\rightarrow v\leftarrow u$. In both cases, $u\rightarrow v$ is strongly protected in $M$.

The remaining possibility is that $w\rightarrow u$ or $w'\rightarrow u$ are present in $M$. 
Since $w-u-w'$ is present in $M_1$, it implies that $w$, $u$, and $w'$ belong to the same undirected connected component $\mathcal{C}$ of $M_1$. Now, considering the fact that $w-u-w'$ forms a chordless path in $\mathcal{C}$, according to \cref{item-2-of-obs:when-u-v-is-directed-reverse}, if $w\rightarrow u$ exists in $M$, then $u\rightarrow w'$ must also be present in $M$. Similarly, if $w'\rightarrow u \in M$, then $u\rightarrow w \in M$. And, as shown above, if $u\rightarrow w$ or $u\rightarrow w'$ exists in $M$, then  $u\rightarrow v$ is strongly protected. 

In conclusion, we have shown that in all possible cases, if $u\rightarrow v$ is strongly protected in $M_1$, then it is also strongly protected in $M$. Hence, the claim is verified.

\item Suppose $u-v\in M_1$ and $u\rightarrow v \in M$. 
From \cref{obs:when-u-v-becomes-directed}, $u\rightarrow v \in M$ either because (a) $u\rightarrow v$ is strongly protected in $O$, (b) there exists $x\rightarrow y \in O$ with $(u,v) \neq (x,y)$ and a \cp{} from $(x,y)$ to $(u,v)$ in $M_1$, or (c) there exists $x\rightarrow y \in O$ and \cps{} in $M_1$ from $(x,y)$ to $v$ and from $(v,u)$ to $x$, with $y\neq v$ and $u\neq x$. We will consider each possibility in detail.

\begin{enumerate}
    \item If $u\rightarrow v$ is strongly protected in $O$, then since $O$ is an induced subgraph of $M$ (from \cref{lem:O-is-an-induced-subgraph-of-M}), if $u\rightarrow v$ is strongly protected in $O$, then it is strongly protected in $M$.

    \item Suppose there exists an edge $x\rightarrow y \in O$ with a \cp{} $Q = (u_1=x, u_2=y, \ldots, u_{l-1}=u, u_l=v)$ from $(x,y)$ to $(u,v)$ in $\mathcal{C}$ where $(x,y)\neq (u,v)$. According to \Cref{item-2-of-obs:when-u-v-is-directed-reverse}, $u_{l-2}\rightarrow u_{l-1}\rightarrow u_l \in M$. This implies that $u\rightarrow v$ is strongly protected in $M$ as it is part of an induced subgraph shown in \cref{fig:strongly-protected-edge}-(a).

    \item Suppose there are \cps{} $Q_1= (u_1 =x, u_2=y, \ldots, u_l=u$ in $M_1$ from $(x,y)$ to $u$, and $Q_2 = (v_1=v, v_2=u, \ldots, v_m= x)$ from $(v,u)$ to $x$, where $y\neq u$ and $u\neq x$. All the edges of $Q_1$ and $Q_2$ must be undirected in $M_1$, otherwise, concatenating $Q_1$ and $Q_2$ would create a directed cycle in $M_1$, contradicting \cref{item-1-theorem-nec-suf-cond-for-MEC} of \cref{thm:nes-and-suf-cond-for-chordal-graph-to-be-an-MEC}. This implies that all nodes in $Q_1$ and $Q_2$ belong to an undirected connected component $\mathcal{C}$ of $M_1$.

Let $\tau$ be the LBFS ordering returned by \cref{alg:LBFSwithO} when called at line~\ref{algconstructMEC:contructLBFS} of \cref{alg:constructMEC} for the undirected connected component $\mathcal{C}$ of $M_1$. According to \cref{item-2-of-obs:when-u-v-is-directed-reverse}, $\tau(u_1) < \tau(u_2) < \ldots < \tau(u_l)$, and for all $1\leq i<l-1$, $u_i\rightarrow u_{i+1} \in M$. According to \cref{obs:directed-edge-respects-tau}, $\tau(u) < \tau(v)$. Based on \cref{obs:LBGS-gives-PEO}, $u_{l-1}-u \in \mathcal{C}$.

If $u\rightarrow u_{l-1} \in M$, then $u\rightarrow v$ is strongly protected in $M$ due to the induced subgraph $u\rightarrow u_{l-1}\rightarrow v \leftarrow u$ (as shown in \cref{fig:strongly-protected-edge}-(c)).

Suppose $u\rightarrow u_{l-1} \notin M$. Then there must exist $j \geq 2$ such that $v_{j+1}\rightarrow v_j \in M$, otherwise, $C = (u_1 = x, u_2, \ldots, u_{l-1}, v_2, \ldots, v_m=x)$ is a directed cycle in $M$, contradicting \cref{lem:M-is-chain-graph}. In this case, $u_{j+1} \rightarrow u_j \in M$ for some $j\geq 2$. Then $Q_2' = (u_{j+1}, u_j, \ldots, u_2, u_1)$ is a chordless path and $u_{j+1}\rightarrow u_j \in M$. Then, according to \cref{item-2-of-obs:when-u-v-is-directed-reverse}, $v_3\rightarrow v_2\rightarrow v_1 \in M$. This implies that $u\rightarrow v$ is strongly protected in $M$ as it is part of an induced subgraph $v_3\rightarrow v_2=u \rightarrow v_1 =v$ (as shown in \cref{fig:strongly-protected-edge}-(a)).
\end{enumerate}

Thus, we have shown that in all possible cases, $u\rightarrow v$ is strongly protected in $M$. This verifies the claim.

\end{enumerate}
This completes the proof of \cref{lem:every-directed-edge-of-M-is-strongly-ptotected}.
\end{proof}

\begin{proof}[\textbf{Proof of \cref{lem:M1-and-M2-are-projections-of-M}}]
Without loss of generality, we prove that $\mathcal{P}(M, V_{H_1}) = M_1$. Similarly, we can claim that $\mathcal{P}(M, V_{H_2}) = M_2$, which completes the proof of \cref{lem:M1-and-M2-are-projections-of-M}.

Since $M_1$ is an MEC of $H_1$, it suffices to show that $\mathcal{V}(M_1) = \mathcal{V}(M[V_{H_1}])$ (from \cref{def:projection}).
From the initialization step, all directed edges of $M_1$ are also directed edges of $M$. This implies $\mathcal{V}(M_1) \subseteq \mathcal{V}(M[V_{H_1}])$ as from the construction of $M$, $\skel{M[V_{M_1}]} = \skel{M_1}$. Thus, to establish $\mathcal{V}(M_1) = \mathcal{V}(M[V_{H_1}])$, we only need to demonstrate $\mathcal{V}(M[V_{H_1}]) \subseteq \mathcal{V}(M_1)$.

Suppose $\mathcal{V}(M[V_{H_1}]) \not\subseteq \mathcal{V}(M_1)$. This implies the existence of $u,v,w \in V_{H_1}$ such that $u\rightarrow v \leftarrow w \in M$ and $u\rightarrow v \leftarrow w \notin M_1$. 
From the construction of $M$, if $u\rightarrow v \in M$, then either $u\rightarrow v \in M_1$ or $u-v\in M_1$ (because if $v\rightarrow u \in M_1$ from line \ref{algconstructMEC:M-initialization} of \cref{alg:constructMEC}, $v\rightarrow u \in M$). 
Similarly, if $w\rightarrow v\in M$, then either $w\rightarrow v \in M_1$ or $w-v \in M_1$. Since $u\rightarrow v \leftarrow w \notin M_1$, either $u\rightarrow v-w \in M_1$ or $u-v\leftarrow w \in M_1$ or $u-v-w \in M_1$. From \cref{item-3-theorem-nec-suf-cond-for-MEC} of \cref{thm:nes-and-suf-cond-for-chordal-graph-to-be-an-MEC}, the first two possibilities are not possible. Thus, the only remaining possibility is that $u-v-w$ is an induced subgraph of $M$.
However, this contradicts \cref{item-2-of-obs:when-u-v-is-directed-reverse}, as if $u\rightarrow v \in M$, then $v\rightarrow w \in M$, which contradicts the presence of $w\rightarrow v \in M$.
This confirms that there are no such $u,v,w \in V_{H_1}$ for which $\mathcal{V}(M[V_{H_1}]) \not\subseteq \mathcal{V}(M_1)$. 

Thus, we have established $\mathcal{V}(M[V_{H_1}]) \subseteq \mathcal{V}(M_1)$. Combined with the earlier inclusion that $\mathcal{V}(M_1) \subseteq \mathcal{V}(M[V_{H_1}])$, we conclude that $\mathcal{V}(M_1) = \mathcal{V}(M[V_{H_1}])$. 

Hence, we have successfully shown that $\mathcal{P}(M, V_{H_1}) = M_1$, completing the proof of \cref{lem:M1-and-M2-are-projections-of-M}.
\end{proof}

\begin{proof}[\textbf{Proof of \cref{lem:uniqueness-of-M}}]
    Suppose there exists an MEC $M'$ such that $M'[V_O] = O$, and $\mathcal{P}(M', V_{M_1}, V_{M_2}) = (M_1, M_2)$. We will show that $M' = M$.

We know that each MEC of a graph has a unique set of v-structures. Our aim is to demonstrate that both $M$ and $M'$ contain the same set of v-structures. This implies that $M = M'$.

From \cref{lem:O-is-an-induced-subgraph-of-M}, we have $M[V_O] = O$. Also, from \cref{lem:M1-and-M2-are-projections-of-M}, we know that $\mathcal{P}(M, V_{M_1}, V_{M_2}) = (M_1, M_2)$. Since both $M$ and $M'$ are MECs of $H$, and $\skel{H} = \skel{H_1} \cup \skel{H_2}$, it follows that for any v-structure $u\rightarrow v \leftarrow w$ that belongs to either $M$ or $M'$, one of the following cases holds: 
(a) $u,v,w \in V_{M_1}$,
(b) $u,v,w \in V_{M_2}$,
(c) $v \in I = V_{M_1}\cap V_{M_2}$, $u\in V_{M_1}\setminus I$, and $w\in V_{M_2}\setminus I$,
(d) $u\in V_{M_2}\setminus I$, $v \in I$, and $w\in V_{M_1}\setminus I$.

We now address each of these cases.

For case (a), if $u\rightarrow v \leftarrow w$ is a v-structure in $M$ with $u,v,w \in V_{M_1}$, since $M_1$ is a projection of $M$ on $V_{M_1}$, the v-structure $u\rightarrow v \leftarrow w$ also belongs to $M_1$. Similarly, as $M_1$ is a projection of $M'$ on $V_{M_1}$, the same v-structure belongs to $M'$. A similar argument can be made for v-structures in $M'$.   The same argument holds for v-structures when $u,v,w \in V_{M_2}$ (i.e., for case (b)).

For case (c), if $u\rightarrow v \leftarrow w$ is a v-structure in $M$ with $u\in V_{M_1}\setminus I$, $v \in I$, and $w\in V_{M_2}\setminus I$, this v-structure is in $O$ as $I \cup N(I, H) \subseteq V_O$, and $O$ is an induced subgraph of $M$. Since $O$ is also an induced subgraph of $M'$, the v-structure also belongs to $M'$. A similar argument can be made for v-structures in $M'$ with $u\in V_{M_1}\setminus I$, $v \in I$, and $w\in V_{M_2}\setminus I$. The same argument holds for v-structures when $u\in V_{M_2}\setminus I$, $v \in I$, and $w\in V_{M_1}\setminus I$ (i.e., for case (d)).

The above analysis demonstrates that a v-structure $u\rightarrow v \leftarrow w$ is present in $M$ if, and only if, it is also present in $M'$. This implies that both $M$ and $M'$ have the same set of v-structures, and therefore, $M = M'$. This completes the proof of \cref{lem:uniqueness-of-M}.

\end{proof}

\begin{proof}[\textbf{Proof of \cref{lem:derived-path-function-is-a-part-of-shadow-of-M}}]
    Let $\EPF{O, P_{11}, P_{12}, P_{21}, P_{22}} = (P_1, P_2)$, and $(O', P_1', P_2')$ be the shadow of $M$ on $V_O$. We aim to show $(O, P_1, P_2) = (O', P_1', P_2')$. 
    By definition (\cref{def:shadow}), $O' = M[V_O]$.
    Furthermore, from \cref{lem:O-is-an-induced-subgraph-of-M}, $O = M[V_O]$. Thus, we conclude that $O = O'$.
   
    We are given that the shadow of $M_1$ on $S_1\cup N(S_1, H_1)$ is $(O_1, P_{11}, P_{12})$, and the shadow of $M_2$ on $S_2 \cup N(S_2, H_2)$ is $(O_2, P_{21}, P_{22})$. Also, we are given $O \in \mathcal{E}(O_1, P_{11}, P_{12}, O_2, P_{21}, P_{22})$. Furthermore, from \cref{lem:M1-and-M2-are-projections-of-M}, we have $\mathcal{P}(M, V_{H_1}, V_{H_2}) = (M_1, M_2)$. Therefore, based on \cref{obs:nes-conditions-of-the-shadow}, we conclude that $(P_1', P_2') = (P_1, P_2)$. This shows that $(O, P_1, P_2)$ is the shadow of $M$ on $V_O$, i.e., $M \in \setofMECs{H, O, P_1, P_2}$.
\end{proof}

\subsection{\texorpdfstring{Finding Size of $\setofMECs{H, O, P_1, P_2}$}{Finding Size of MEC(H, O, P\_1, P\_2)}}
\label{subsection:finding-size-of-MEC-H}
We use \cref{def:extension-of-O1-O2-P11-P12-P21-P22} to define a one to one function $f_{O, P_1, P_2}$. We use the function for the computation of $|\setofMECs{H, O, P_1,P_2}|$. 
\begin{definition}
\label{def:one-to-one-function-for-counting-MEC-H-O-P1-P2}
Let $H$ be an undirected graph, and let $H_1$ and $H_2$ be two induced subgraphs of $H$ such that $H = H_1 \cup H_2$, and  $I = V_{H_1} \cap V_{H_2}$ is a vertex separator of $H$ that separates $V_{H_1} \setminus I$ and $V_{H_2} \setminus I$. Define subsets $S_1$ and $S_2$ of $V_{H_1}$ and $V_{H_2}$, respectively, such that $S_1 \cap S_2 = I$. Let $A = H[S_1 \cup S_2 \cup N(S_1 \cup S_2, H)]$, $B_1 = H_1[S_1 \cup N(S_1, H_1)]$, and $B_2 = H_2[S_2 \cup N(S_2, H_2)]$. Consider $(O, P_1, P_2) \in \shadowofudgraph{A}$. Define $T_1 = \setofMECs{H, O, P_1, P_2}$ and $T_2 = \{(M_1, M_2) :$ $\exists (O_1, P_{11}, P_{12}, O_2, P_{21}, P_{22})$ such that $(O_1, P_{11}, P_{12}) \in \shadowofudgraph{B_1}$, $(O_2, P_{21}, P_{22}) \in \shadowofudgraph{B_2}$, $(O, P_1, P_2) \in \mathcal{E}(O_1, P_{11}, P_{12}, O_2, P_{21}, P_{22})$, $M_1 \in \setofMECs{H_1, O_1, P_{11}, P_{12}}$, and $M_2 \in \setofMECs{H_2, O_2, P_{21}, P_{22}}\}$. The function $f_{O, P_1, P_2} : T_1 \rightarrow T_2$ is defined as $f_{O, P_1, P_2}(M) = (M_1, M_2)$ if $\mathcal{P}(M, V_{H_1}, V_{H_2}) = (M_1, M_2)$.
\end{definition}

 \begin{lemma}
 \label{lem:f-is-one-to-one}
 $f_{O, P_1, P_1}$ is bijective.
 \end{lemma}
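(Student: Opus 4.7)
The plan is to derive bijectivity essentially as a packaging of the necessary and sufficient conditions for shadows already proved in \cref{obs:nes-conditions-of-the-shadow,obs2:O-structure-for-existence-of-MEC}. I will verify well-definedness, then injectivity, then surjectivity of $f_{O,P_1,P_2}$, each as a direct appeal to a previously established lemma, so very little new work is required. I do not anticipate a serious obstacle; the main care needed is keeping straight which lemma supplies which half of the correspondence.

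First I would check that $f_{O,P_1,P_2}$ is well-defined. Given $M \in T_1$, let $(M_1,M_2) = \mathcal{P}(M,V_{H_1},V_{H_2})$. For each $a \in \{1,2\}$, let $(O_a,P_{a1},P_{a2})$ be the shadow of $M_a$ on $S_a \cup N(S_a,H_a)$, so that $M_a \in \setofMECs{H_a,O_a,P_{a1},P_{a2}}$ by \cref{def:subsets-of-MEC-based-on-O-P1-and-P2}. \Cref{obs:nes-conditions-of-the-shadow} then yields both $O \in \mathcal{E}(O_1,P_{11},P_{12},O_2,P_{21},P_{22})$ and $(P_1,P_2) = \EPF{O,P_{11},P_{12},P_{21},P_{22}}$, which is exactly the condition witnessing $(M_1,M_2) \in T_2$.

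For injectivity, suppose $f_{O,P_1,P_2}(M) = f_{O,P_1,P_2}(M') = (M_1,M_2)$ for MECs $M,M' \in T_1$. Since $(O,P_1,P_2)$ is the shadow of both $M$ and $M'$ on $V_O$, \cref{item-1-of-def:shadow} of \cref{def:shadow} gives $M[V_O] = O = M'[V_O]$. Moreover $\mathcal{P}(M,V_{H_1},V_{H_2}) = (M_1,M_2) = \mathcal{P}(M',V_{H_1},V_{H_2})$ by assumption. \Cref{lem:uniqueness-of-M} (applied with the shadows $(O_a,P_{a1},P_{a2})$ of $M_a$ supplied by the definition of $T_2$) now forces $M = M'$.

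For surjectivity, take an arbitrary $(M_1,M_2) \in T_2$, with accompanying shadows $(O_a,P_{a1},P_{a2})$ satisfying $O \in \mathcal{E}(O_1,P_{11},P_{12},O_2,P_{21},P_{22})$ and $M_a \in \setofMECs{H_a,O_a,P_{a1},P_{a2}}$. By \cref{obs2:O-structure-for-existence-of-MEC} there is an MEC $M$ of $H$ with $M[V_O] = O$ and $\mathcal{P}(M,V_{H_1},V_{H_2}) = (M_1,M_2)$; by the final clause of that lemma (equivalently, \cref{lem:derived-path-function-is-a-part-of-shadow-of-M}), the shadow of this $M$ on $V_O$ is $(O,P_1,P_2)$, since $(P_1,P_2) = \EPF{O,P_{11},P_{12},P_{21},P_{22}}$ by \cref{def:extension-of-O1-O2-P11-P12-P21-P22}. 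Hence $M \in T_1$ and $f_{O,P_1,P_2}(M) = (M_1,M_2)$, completing the proof. The only subtlety to guard against is ensuring that the shadows $(O_a,P_{a1},P_{a2})$ witnessing membership in $T_2$ for the surjectivity argument are the same as those induced by $M_a$ as an MEC of $H_a$; this holds because the shadow of an MEC on a fixed vertex set is uniquely determined (\cref{def:shadow}), so the choice of witnessing shadows is forced.
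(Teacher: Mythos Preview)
Your proposal is correct and takes essentially the same approach as the paper: the paper packages the argument into two claims (well-definedness via \cref{obs:nes-conditions-of-the-shadow}, and unique preimage via \cref{obs2:O-structure-for-existence-of-MEC}), while you split the second into separate injectivity and surjectivity steps, invoking the sub-lemma \cref{lem:uniqueness-of-M} explicitly for injectivity rather than citing the uniqueness clause of \cref{obs2:O-structure-for-existence-of-MEC} directly. The underlying reasoning and the lemmas doing the work are identical.
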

 \begin{proof}
 We first prove $\rightarrow$ of \cref{lem:f-is-one-to-one}.
 \begin{claim}
 \label{claim:to-part-of-lem:f-is-one-to-one}
 For each $M \in \setofMECs{H, O, P_1, P_2)}$, there exists a unique $(M_1, M_2)$ such that $f_{O, P_1, P_1}(M) = (M_1, M_2)$.
\end{claim}
 \begin{proof}
 Let $M \in \setofMECs{H, O, P_1, P_2)}$. From \cref{def:subsets-of-MEC-based-on-O-P1-and-P2}, $M$ is an MEC of $H$ and its shadow on $V_O$ is $(O, P_1, P_2)$. Let $\mathcal{P}(M, V_{H_1}, V_{H_2}) = (M_1, M_2)$.   From \cref{corr:projection-of-an-MEC-is-unique}, there exists a unique $(M_1, M_2)$ such that $\mathcal{P}(M, V_{H_1}, V_{H_2}) = (M_1, M_2)$. Thus, we only have to show that $(M_1, M_2) \in T_2$. Let $(O_1, P_{11}, P_{12}) \in \shadowofudgraph{B_1}$ and $(O_2, P_{21}, P_{22}) \in \shadowofudgraph{B_2}$. Therefore, from \cref{def:subsets-of-MEC-based-on-O-P1-and-P2}, $M_1 \in \setofMECs{H_1, O_1, P_{11}, P_{12}}$ and $M_2 \in \setofMECs{H_2, O_2, P_{21}, P_{22}}$. From \cref{obs:nes-conditions-of-the-shadow}, $(O, P_1, P_2) \in \mathcal{E}(O_1, P_{11}, P_{12}, O_2, P_{21}, P_{22})$. From the definition of $T_2$ (given at \cref{def:one-to-one-function-for-counting-MEC-H-O-P1-P2}), $(M_1, M_2) \in T_2$. Thus, from \cref{def:one-to-one-function-for-counting-MEC-H-O-P1-P2}, $f_{O, P_1, P_1}(M) = (M_1, M_2)$.
 
\end{proof}
 We now prove $\leftarrow$ of \cref{lem:f-is-one-to-one}.
\begin{claim}
\label{claim:from-part-of-lem:f-is-one-to-one}
Let $(O_1, P_{11}, P_{12})$ be a shadow of $B_1$ and $(O_2, P_{21}, P_{22})$ be a shadow of $B_2$ such that $(O, P_1, P_2)$ is an extension of $(O_1, P_{11}, P_{12}, O_2, P_{21}, P_{22})$. For each $M_1 \in \setofMECs{H_1, O_1, P_{11}, P_{12}}$ and $M_2 \in \setofMECs{H_2, O_2, P_{21}, P_{22}}$, there exists a unique $M \in \setofMECs{H, O, P_1, P_2}$ such that $f_{O, P_1, P_2}(M) = (M_1, M_2)$.
\end{claim}

\begin{proof}
 \Cref{obs2:O-structure-for-existence-of-MEC} proves this.
\end{proof}
\Cref{claim:to-part-of-lem:f-is-one-to-one,claim:from-part-of-lem:f-is-one-to-one} prove \cref{lem:f-is-one-to-one}.
\end{proof}
 
\Cref{lem:f-is-one-to-one}  implies that the following:
 
 \begin{lemma}
\label{lem:equivalence-between-MECs}
 Let $H$ be an undirected graph, and $H_1$ and $H_2$ be two induced subgraphs of $H$ such that $H = H_1 \cup H_2$, and $I = V_{H_1} \cap V_{H_2}$ is a vertex separator of $H$ that separates $V_{H_1}\setminus{I}$ and $V_{H_2}\setminus{I}$. Let $S_1$ and $S_2$ be the subsets of $V_{H_1}$  and $V_{H_2}$, respectively, such that $S_1 \cap S_2 = I$. Let $A = H[S_1 \cup S_2 \cup N(S_1 \cup S_2, H)]$, $B_1 = H_1[S_1 \cup N(S_1, H_1)]$, and $B_2 = H_2[S_2 \cup N(S_2, H_2)]$.  For any $(O, P_1, P_2)\in \shadowofudgraph{A}$,
 \begin{equation}
     |\setofMECs{H, O, P_1, P_2}| = \sum_{ \substack{(O_1, P_{11}, P_{12})\in \shadowofudgraph{B_1} \\ (O_2, P_{21}, P_{22}) \in \shadowofudgraph{B_2}\\(O, P_1,P_2) \in \mathcal{E}(O_1, P_{11}, P_{12}, O_2, P_{21}, P_{22}) }}
     {|\setofMECs{H_1, O_1, P_{11}, P_{12}}| \times |\setofMECs{H_2, O_2, P_{21}, P_{22}}|}
 \end{equation}
 \end{lemma}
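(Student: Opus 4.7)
The plan is to derive the lemma as a direct counting consequence of the bijection $f_{O, P_1, P_2}$ established in \cref{lem:f-is-one-to-one}. First I would invoke that lemma to replace the left-hand side by the size of $T_2$, since a bijection preserves cardinality:
\begin{equation*}
  |\setofMECs{H, O, P_1, P_2}| \;=\; |T_2|.
\end{equation*}
Next, I would partition $T_2$ according to the pair of shadows that $M_1$ and $M_2$ induce on $V_{B_1}$ and $V_{B_2}$ respectively. Concretely, for any $(M_1, M_2) \in T_2$, \cref{def:shadow} assigns to $M_1$ a unique shadow $(O_1, P_{11}, P_{12}) \in \shadowofudgraph{B_1}$ on $S_1 \cup N(S_1, H_1)$, and similarly a unique $(O_2, P_{21}, P_{22}) \in \shadowofudgraph{B_2}$ on $S_2 \cup N(S_2, H_2)$. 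By \cref{obs:nes-conditions-of-the-shadow} (applied via the existence of $M$ with $\mathcal{P}(M, V_{H_1}, V_{H_2}) = (M_1, M_2)$, which is guaranteed by the surjectivity of $f_{O, P_1, P_2}$), this pair of shadows satisfies $(O, P_1, P_2) \in \mathcal{E}(O_1, P_{11}, P_{12}, O_2, P_{21}, P_{22})$.

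Having set up the partition, I would then observe that the fibre over a fixed tuple $(O_1, P_{11}, P_{12}, O_2, P_{21}, P_{22})$ inside $T_2$ is exactly the Cartesian product $\setofMECs{H_1, O_1, P_{11}, P_{12}} \times \setofMECs{H_2, O_2, P_{21}, P_{22}}$ by \cref{def:subsets-of-MEC-based-on-O-P1-and-P2}, so its size is the product of the two factors. Summing over all valid tuples yields
\begin{equation*}
  |T_2| \;=\; \sum_{\substack{(O_1, P_{11}, P_{12})\in \shadowofudgraph{B_1} \\ (O_2, P_{21}, P_{22}) \in \shadowofudgraph{B_2}\\(O, P_1, P_2) \in \mathcal{E}(O_1, P_{11}, P_{12}, O_2, P_{21}, P_{22})}} |\setofMECs{H_1, O_1, P_{11}, P_{12}}| \cdot |\setofMECs{H_2, O_2, P_{21}, P_{22}}|,
\end{equation*}
which combined with the first displayed equation proves the lemma.

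The only non-routine step is verifying that the partition of $T_2$ by shadow tuples is well-posed; this reduces to checking that every $(M_1, M_2) \in T_2$ contributes to precisely one tuple on the right-hand side. Uniqueness of the induced shadows is immediate from \cref{def:shadow}, and the necessary compatibility with $(O, P_1, P_2)$ via $\mathcal{E}$ follows from \cref{obs:nes-conditions-of-the-shadow}, both of which are already in hand. I therefore expect the proof to be short: essentially a re-indexing of $|T_2|$ followed by an application of \cref{lem:f-is-one-to-one}.
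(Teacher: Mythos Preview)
Your proposal is correct and follows essentially the same approach as the paper: invoke \cref{lem:f-is-one-to-one} to get $|\setofMECs{H,O,P_1,P_2}|=|T_2|$, then partition $T_2$ by the (unique, via \cref{def:shadow}) shadow tuples $(O_1,P_{11},P_{12},O_2,P_{21},P_{22})$ and observe that each fibre is the Cartesian product $\setofMECs{H_1,O_1,P_{11},P_{12}}\times\setofMECs{H_2,O_2,P_{21},P_{22}}$. The paper phrases the disjointness step as showing the product sets in the union defining $T_2$ are pairwise disjoint, whereas you phrase it as a partition into fibres; these are the same argument, and your explicit invocation of \cref{obs:nes-conditions-of-the-shadow} (via surjectivity of $f_{O,P_1,P_2}$) to confirm the extension condition for the induced shadows is a harmless redundancy, since membership in $T_2$ already forces it.
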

\begin{proof}
From \cref{lem:f-is-one-to-one}, $|T_1| = |T_2|$. Since $T_1 = \setofMECs{H, O, P_1, P_2}$, therefore, $|T_1| = | \setofMECs{H, O, P_1, P_2}|$. Since $T_2$ contains tuples $(M_1, M_2)$ such that for some tuple $(O_1, P_{11}, P_{12}, O_2, P_{21}, P_{22})$ such that $(O_1, P_{11}, P_{12}) \in \shadowofudgraph{B_1}$, $(O_2, P_{21}, P_{22}) \in \shadowofudgraph{B_2}$,  and $(O, P_1, P_2) \in \mathcal{E}(O, P_{11}, P_{12}, O_2, P_{21}, P_{22})$, $M_1 \in \setofMECs{H_1, O_1, P_{11}, P_{12}}$ and $M_2 \in \setofMECs{H_2, O_2, P_{21}, P_{22}}$. This implies
\begin{equation}
    T_2 = \bigcup_{\substack{(O_1, P_{11}, P_{12})\in \shadowofudgraph{B_1} \\ (O_2, P_{21}, P_{22}) \in \shadowofudgraph{B_2}\\(O, P_1,P_2) \in \mathcal{E}(O_1, P_{11}, P_{12}, O_2, P_{21}, P_{22}) }}{\setofMECs{H_1, O_1, P_{11}, P_{12}} \times \setofMECs{H_2, O_2, P_{21}, P_{22}}}.
\end{equation}
We show that sets $\setofMECs{H_1, O_1, P_{11}, P_{12}} \times \setofMECs{H_2, O_2, P_{21}, P_{22}}$ are disjoint. Suppose  sets $\setofMECs{H_1, O_1, P_{11}, P_{12}} \times \setofMECs{H_2, O_2, P_{21}, P_{22}}$ are not disjoint. Then there exists $(M_1, M_2)$ such that  $(M_1, M_2) \in  \setofMECs{H_1, O_1, P_{11}, P_{12}} \times \setofMECs{H_2, O_2, P_{21}, P_{22}}$ as well as $(M_1, M_2) \in  \setofMECs{H_1, O_1', P_{11}', P_{12}'} \times \setofMECs{H_2, O_2', P_{21}', P_{22}'}$. 
This implies $M_1 \in \setofMECs{H_1, O_1, P_{11}, P_{12}}$ as well as $M_1 \in \setofMECs{H_1, O_1', P_{11}', P_{12}'}$. But, from \cref{def:shadow}, shadow of $M$ on $V_{B_1}$ is unique, i.e., $(O_1, P_{11}, P_{12}) =  (O_1', P_{11}', P_{12}')$. Similarly, we can say that  $(O_2, P_{21}, P_{22}) =  (O_2', P_{21}', P_{22}')$. This shows $(O_1, P_{11}, P_{12}, O_2, P_{21}, P_{22}) = (O_1', P_{11}', P_{12}', O_2', P_{21}', P_{22}')$. This implies that the sets $\setofMECs{H_1, O_1, P_{11}, P_{12}} \times \setofMECs{H_2, O_2, P_{21}, P_{22}}$ are disjoint. This further implies
\begin{equation}
    |T_2| = \sum_{\substack{(O_1, P_{11}, P_{12})\in \shadowofudgraph{B_1} \\ (O_2, P_{21}, P_{22}) \in \shadowofudgraph{B_2}\\(O, P_1,P_2) \in \mathcal{E}(O_1, P_{11}, P_{12}, O_2, P_{21}, P_{22}) }}{|\setofMECs{H_1, O_1, P_{11}, P_{12}}| \times |\setofMECs{H_2, O_2, P_{21}, P_{22}}|}.
\end{equation}
This proves \cref{lem:equivalence-between-MECs}.
\end{proof}

 As discussed earlier, $H$ resembles $G_i^j$, $H_1$ resembles $G_i^{j-1}$, $H_2$ resembles $G_{i_j}$, and $S_1$ and $S_2$ resemble $X_i$ and $X_{i_j}$, respectively. 
 Let $A = G_i^j[X_i \cup X_{i_j} \cup N(X_i\cup X_{i_j}, G_i^j)]$, $A' = G_i^j[X_i \cup N(X_i, G_i^j)]$   $B_1 = G_i^{j-1}[X_i \cup N(X_i, G_i^{j-1})]$, and $B_2 = G_{i_j}[X_{i_j} \cup N(X_{i_j}, G_{i_j})]$. 
 \Cref{lem:equivalence-between-MECs} implies that if  for all $(O_1, P_{11}, P_{12}) \in \shadowofudgraph{B_1}$ and $(O_2, P_{21}, P_{22}) \in \shadowofudgraph{B_2}$,
 we have the knowledge about $|\setofMECs{G_i^{j-1}, O_1, P_{11}, P_{12}}|$ and $|\setofMECs{G_{i_j}, O_2, P_{21}, P_{22}}|$ then  for each $(O, P_1, P_2) \in \shadowofudgraph{A}$,
 we can compute $|\setofMECs{G_i^j, O, P_1, P_2}|$. But, for the recursion, we need $|\setofMECs{H, O', P_1', P_2'}|$ for each  $(O', P_1', P_2') \in \shadowofudgraph{A'}$. As $(O, P_1, P_2)$ contains more information than $(O', P_1', P_2')$, we define projection of $(O, P_1, P_2)$ to get  $|\setofMECs{H, O', P_1', P_2'}|$ for each required $(O', P_1', P_2')$.

 \begin{definition}[\textbf{\emph{Projection of a shadow}}]
    \label{def:proj-of-partial-MEC-and-functions}
    Let $H$ be an undirected graph. $S$ and $S'$ are two vertex subsets of $H$ such that $S' \subseteq S \subseteq V_H$. Let $(O, P_1, P_2) \in \shadowofudgraph{G[S]}$ and $(O', P_1', P_2') \in \shadowofudgraph{G[S']}$.
    We say $(O', P_1', P_2')$ is a projection of $(O, P_1, P_2)$ on $S'$ denoted as $\mathcal{P}(O, P_1, P_2, S') = (O', P_1', P_2')$ if 
    \begin{enumerate}
        \item
        \label{item-1-of-def:proj-of-partial-MEC-and-functions}
        $O[S'] = O'$,
        \item
        \label{item-2-of-def:proj-of-partial-MEC-and-functions}
        for $((u, v), (x, y)) \in E_{G[S']} \times E_{G[S']}$, $P_1'((u, v), (x, y)) = P_1 ((u, v), (x, y))$, and
        \item
        \label{item-3-of-def:proj-of-partial-MEC-and-functions}
        for $((u, v), w) \in E_{G[S']} \times V_{G[S']}$, $P_2'((u, v), w) = P_2 ((u, v), w)$.
    \end{enumerate}
\end{definition}

We now  compute $|\setofMECs{H, O', P_1', P_2'}|$.
\begin{lemma}
\label{lem:counting-MECs-corresponding-to-projected-partial-MECs}
Let $H$ be an undirected graph,  $S$ and $S'$ are two vertex subsets of $H$ such that $S' \subseteq S \subseteq V_H$, $A = H[S]$, and $A' = H[S']$.  Let $(O', P_1', P_2')\in \shadowofudgraph{A'}$.  Then, 
\begin{equation}
\label{eq-of-lem:counting-MECs-corresponding-to-projected-partial-MECs}
    |\text{MEC}(H, O', P_1', P_2')| = \sum_{\substack{(O, P_1, P_2)\in \shadowofudgraph{A}\\ (O', P_1', P_2') = \mathcal{P}(O, P_1, P_2, S')}}{|\text{MEC}(H, O, P_1, P_2)|}
\end{equation}
\end{lemma}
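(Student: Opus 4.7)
\textbf{Proof proposal for Lemma \ref{lem:counting-MECs-corresponding-to-projected-partial-MECs}.}

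The plan is to establish that the set $\setofMECs{H, O', P_1', P_2'}$ is partitioned by the shadows $(O, P_1, P_2)$ of $A$ whose projection onto $S'$ equals $(O', P_1', P_2')$, and then sum cardinalities. Concretely, I will exhibit a bijection
\[
\setofMECs{H, O', P_1', P_2'} \;\longleftrightarrow\; \bigsqcup_{\substack{(O, P_1, P_2)\in \shadowofudgraph{A}\\ \mathcal{P}(O, P_1, P_2, S') = (O', P_1', P_2')}} \setofMECs{H, O, P_1, P_2},
\]
sending each MEC $M$ to itself, viewed as a member of $\setofMECs{H, O, P_1, P_2}$ where $(O, P_1, P_2)$ is the (unique) shadow of $M$ on $S$.

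First I would verify the forward direction. Given $M \in \setofMECs{H, O', P_1', P_2'}$, invoke \cref{def:shadow} to obtain the unique shadow $(O, P_1, P_2) = \shadowofMEC{M, S}$; by \cref{obs:shadow-of-M-is-shadow-of-G} this lies in $\shadowofudgraph{A}$, and by \cref{def:subsets-of-MEC-based-on-O-P1-and-P2} we have $M \in \setofMECs{H, O, P_1, P_2}$. The key step is showing $\mathcal{P}(O, P_1, P_2, S') = (O', P_1', P_2')$. Using \cref{def:shadow}, $O = M[S]$ and $O' = M[S']$, so $O[S'] = M[S'] = O'$, verifying \cref{item-1-of-def:proj-of-partial-MEC-and-functions}. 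For \cref{item-2-of-def:proj-of-partial-MEC-and-functions,item-3-of-def:proj-of-partial-MEC-and-functions}, observe that both $P_1$ and $P_1'$ (respectively $P_2$ and $P_2'$) are defined in terms of the existence of \tfps{} in the \emph{same} MEC $M$; restricted to the common domain $E_{O'} \times E_{O'}$ (respectively $E_{O'} \times V_{O'}$), they must agree.

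For the reverse direction, suppose $(O, P_1, P_2) \in \shadowofudgraph{A}$ satisfies $\mathcal{P}(O, P_1, P_2, S') = (O', P_1', P_2')$, and let $M \in \setofMECs{H, O, P_1, P_2}$. I claim $M \in \setofMECs{H, O', P_1', P_2'}$, i.e.\ that $(O', P_1', P_2')$ is the shadow of $M$ on $S'$. By the forward direction applied to $M$ with the smaller set $S'$, the shadow of $M$ on $S'$ is $\mathcal{P}(O, P_1, P_2, S') = (O', P_1', P_2')$, which is exactly what we need.

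Finally, the sets $\setofMECs{H, O, P_1, P_2}$ appearing in the right-hand union are pairwise disjoint: if $M$ belonged to two of them, then $M$ would have two distinct shadows on the single set $S$, contradicting the uniqueness of the shadow from \cref{def:shadow}. Summing cardinalities over this disjoint decomposition yields \eqref{eq-of-lem:counting-MECs-corresponding-to-projected-partial-MECs}. There is no serious obstacle here; the proof is essentially a bookkeeping argument that the shadow partition of \cref{lem:partition-of-MECs-of-H} refines through restriction to $S'$, with the one substantive observation being the agreement of $P_1, P_2$ with $P_1', P_2'$ on the common domain because both are computed from $M$.
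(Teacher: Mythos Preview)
Your proposal is correct and follows essentially the same approach as the paper: both establish the set identity via the identity map $M \mapsto M$, verify in the forward direction that the shadow of $M$ on $S$ projects to its shadow on $S'$ (using $O[S']=M[S']=O'$ and the fact that $P_1,P_1'$ and $P_2,P_2'$ both record \tfp{} data in the same $M$), handle the reverse direction by the same computation, and conclude disjointness from uniqueness of the shadow on $S$. Your reverse direction is phrased slightly more economically by invoking the forward argument, but the content is identical.
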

\begin{proof}
We first show the following:
\begin{equation}
\label{eq:rel-bn-2-set-of-MECs}
    \text{MEC}(H, O', P_1', P_2') = \bigcup_{\substack{(O, P_1, P_2)\in \shadowofudgraph{A}\\ (O', P_1', P_2') = \mathcal{P}(O, P_1, P_2, S')}}{\text{MEC}(H, O, P_1, P_2)}
\end{equation}
Let $g$ be a function with L.H.S. of \cref{eq:rel-bn-2-set-of-MECs} (i.e., $\setofMECs{H, O', P_1',P_2'}$) as its domain, and R.H.S. of \cref{eq:rel-bn-2-set-of-MECs} as its co-domain. $g$ is defined as $g(M) = M$.
\begin{claim}
\label{claim:g-is-bijective}
$g$ is bijective.
\end{claim}
\begin{proof}
We first prove the $\rightarrow$ of \cref{claim:g-is-bijective}.
\begin{claim}
\label{claim:to-g-is-bijective}
For each $M \in \setofMECs{H, O', P_1', P_2'}$, there exists a unique $(O, P_1, P_2) \in \shadowofudgraph{A}$ such that $\mathcal{P}(O,P_1,P_2, S') = (O',P_1',P_2')$, and $M \in \setofMECs{H, O, P_1, P_2}$.
\end{claim}
\begin{proof}
$M \in \setofMECs{H, O', P_1', P_2'}$.
From \cref{def:shadow}, there exists a unique triple $(O, P_1, P_2)$ such that $(O, P_1, P_2) \in \shadowofudgraph{A}$  and $M \in \setofMECs{H, O, P_1, P_2}$. Only thing that we have to check is whether $\mathcal{P}(O,P_1,P_2) = (O',P_1',P_2')$ or not. From \cref{def:subsets-of-MEC-based-on-O-P1-and-P2}, $O = M[S]$, and $O' = M[S']$. Since $S' \subseteq S$,  $O[S'] = O'$. Similarly, since $S' \subseteq S$, $E_{A'}\times E_{A'} \subseteq E_A\times E_A$. This implies for all $((u,v), (x,y)) \in E_{A'}\times E_{A'}$, $P_1'((u,v), (x,y)) = P_1((u,v),(x,y))$ (as both answers whether there exists a \tfp{} from $(u,v)$ to $(x,y)$ in $M$). Similarly, for all $((u,v), w) \in E_{A'}\times V_{A'}$, $P_2'((u,v), w) = P_2((u,v),w)$. This further  implies that $\mathcal{P}(O,P_1,P_2) = (O',P_1',P_2')$ (from \cref{def:proj-of-partial-MEC-and-functions}). This proves the claim.
\end{proof}
We now prove the $\leftarrow$ of \cref{claim:g-is-bijective}.
\begin{claim}
\label{claim:from-g-is-bijective}
Suppose there exists a triplet $(O, P_1, P_2)$ such that $(O, P_1, P_2) \in \shadowofudgraph{A}$ and $\mathcal{P}(O,P_1,P_2, S') = (O',P_1',P_2')$. Then, for each $M \in \setofMECs{H, O, P_1, P_2)}$, $M\in \setofMECs{H, O', P_1', P_2'}$.
\end{claim}
\begin{proof}
Let $M \in \setofMECs{H, O, P_1, P_2)}$.
From \cref{def:subsets-of-MEC-based-on-O-P1-and-P2}, $M[S] = O$. Since $\mathcal{P}(O,P_1,P_2, S') = (O',P_1',P_2')$, $M[S'] = O[S'] = O'$. Since $S' \subseteq S$, $E_{A'}\times E_{A'} \subseteq E_A\times E_A$. This  implies that for all $((u,v), (x,y)) \in E_{A'}\times E_{A'}$, $P_1'((u,v), (x,y)) = P_1((u,v),(x,y))$, as $\mathcal{P}(O,P_1,P_2, S') = (O',P_1',P_2')$.
This implies for each $((u,v), (x,y)) \in E_{A'}\times E_{A'}$, $P_1'$ correctly depicts that whether there is a \tfp{} from $(u,v)$ to $(x,y)$ in $M$.
Similarly, for all $((u,v), (w) \in E_{A'}\times V_{A'}$, $P_1'((u,v), w)$ correctly depicts that whether there is a \tfp{} from $(u,v)$ to $w$ in $M$. 
The above discussion implies that $(O', P_1', P_2')$ is the shadow of $M$ on $S'$ (from \cref{def:proj-of-partial-MEC-and-functions}).
This  implies that $M \in \setofMECs{H, O, P_1, P_2}$.
\end{proof}
\Cref{claim:to-g-is-bijective,claim:from-g-is-bijective} prove \cref{claim:g-is-bijective}.
\end{proof}
\Cref{claim:g-is-bijective} proves \cref{eq:rel-bn-2-set-of-MECs}.
From \cref{def:shadow}, an MEC $M$ of $H$ belongs to a unique $\setofMECs{H, O, P_1, P_2)}$. 
This implies that sets in the union in the R.H.S. of \cref{eq:rel-bn-2-set-of-MECs} are disjoint.
This further  implies \cref{eq-of-lem:counting-MECs-corresponding-to-projected-partial-MECs}. This completes the proof of \cref{lem:counting-MECs-corresponding-to-projected-partial-MECs}.
\end{proof}

Combining \cref{lem:counting-MECs-corresponding-to-projected-partial-MECs,lem:equivalence-between-MECs}, we get the following lemma:
\begin{lemma}
\label{lem:combination-of-two-lemmas}
 Let $H$ be an undirected graph, and $H_1$ and $H_2$ be two induced subgraphs of $H$ such that $H = H_1 \cup H_2$, and $I = V_{H_1} \cap V_{H_2}$ is a vertex separator of $H$ that separates $V_{H_1}\setminus{I}$ and $V_{H_2}\setminus{I}$. Let $S_1$ and $S_2$ be the subsets of $V_{H_1}$  and $V_{H_2}$, respectively, such that $S_1 \cap S_2 = I$. Let $S = S_1 \cup S_2 \cup N(S_1 \cup S_2, H)$, and $S' = S_1\cup N(S_1, H)$,  $A = H[S]$, $A' = H[S']$, $B_1 = H_1[S_1 \cup N(S_1, H_1)]$, and $B_2 = H_2[S_2 \cup N(S_2, H_2)]$. Then, for any partial MEC $(O', P_1', P_2')\in \shadowofudgraph{A'}$,
 \begin{equation}
 |\text{MEC}(H, O', P_1', P_2')| = \sum_{ \substack{(O, P_1, P_2)\in \shadowofudgraph{A}\\(O_1, P_{11}, P_{12})\in \shadowofudgraph{B_1} \\(O_2, P_{21}, P_{22}) \in \shadowofudgraph{B_2}\\ (O, P_1,P_2) \in \mathcal{E}(O_1, P_{11}, P_{12}, O_2, P_{21}, P_{22}) \\ (O', P_1', P_2') = \mathcal{P}(O, P_1, P_2, S')}}
     {|\setofMECs{H_1, O_1, P_{11}, P_{12}}| \times |\setofMECs{H_2, O_2, P_{21}, P_{22}}|}.
 \end{equation}
\end{lemma}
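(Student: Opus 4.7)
The plan is to derive \cref{lem:combination-of-two-lemmas} by composing the two lemmas that immediately precede it, namely \cref{lem:counting-MECs-corresponding-to-projected-partial-MECs} and \cref{lem:equivalence-between-MECs}. The sum on the right-hand side of \cref{lem:combination-of-two-lemmas} ranges over five linked objects $(O,P_1,P_2),(O_1,P_{11},P_{12}),(O_2,P_{21},P_{22})$ satisfying two compatibility conditions (the extension condition on $A$ and the projection condition onto $S'$), which is exactly what one gets by iterating the two counting lemmas.

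First, I would verify that the hypotheses of \cref{lem:counting-MECs-corresponding-to-projected-partial-MECs} are met with the roles $S \defeq S_1\cup S_2\cup N(S_1\cup S_2,H)$ and $S' \defeq S_1\cup N(S_1,H)$. Observe that $S'\subseteq S\subseteq V_H$ holds, because $S_1\subseteq S_1\cup S_2$ and $N(S_1,H)\subseteq N(S_1\cup S_2,H)$; hence $A'=H[S']$ is an induced subgraph of $A=H[S]$, and the projection operator $\mathcal{P}(\cdot,S')$ is well-defined on shadows of $A$ by \cref{def:proj-of-partial-MEC-and-functions}. Applying \cref{lem:counting-MECs-corresponding-to-projected-partial-MECs} directly yields
\begin{equation}
\label{eq:step1-of-combination}
|\setofMECs{H,O',P_1',P_2'}|
=\sum_{\substack{(O,P_1,P_2)\in\shadowofudgraph{A}\\ (O',P_1',P_2')=\mathcal{P}(O,P_1,P_2,S')}}
|\setofMECs{H,O,P_1,P_2}|.
\end{equation}

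Next, for each fixed $(O,P_1,P_2)$ appearing in \eqref{eq:step1-of-combination}, I would apply \cref{lem:equivalence-between-MECs} to $|\setofMECs{H,O,P_1,P_2}|$. The hypotheses of \cref{lem:equivalence-between-MECs} are satisfied verbatim: $H=H_1\cup H_2$, $I=V_{H_1}\cap V_{H_2}$ separates $V_{H_1}\setminus I$ from $V_{H_2}\setminus I$, $S_1\cap S_2=I$, and $A$, $B_1$, $B_2$ are defined exactly as in the statement there. This gives
\begin{equation}
\label{eq:step2-of-combination}
|\setofMECs{H,O,P_1,P_2}|
=\sum_{\substack{(O_1,P_{11},P_{12})\in\shadowofudgraph{B_1}\\ (O_2,P_{21},P_{22})\in\shadowofudgraph{B_2}\\ (O,P_1,P_2)\in\mathcal{E}(O_1,P_{11},P_{12},O_2,P_{21},P_{22})}}
|\setofMECs{H_1,O_1,P_{11},P_{12}}|\cdot|\setofMECs{H_2,O_2,P_{21},P_{22}}|.
\end{equation}

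Substituting \eqref{eq:step2-of-combination} into \eqref{eq:step1-of-combination} and merging the two sums into a single sum indexed over all five tuples subject to both the extension constraint $(O,P_1,P_2)\in\mathcal{E}(O_1,P_{11},P_{12},O_2,P_{21},P_{22})$ and the projection constraint $\mathcal{P}(O,P_1,P_2,S')=(O',P_1',P_2')$ yields exactly the formula in \cref{lem:combination-of-two-lemmas}. Since both inner and outer sums are over finite disjoint collections (disjointness of the inner sum is built into the partition established inside the proof of \cref{lem:equivalence-between-MECs}, and disjointness of the outer sum is built into that of \cref{lem:counting-MECs-corresponding-to-projected-partial-MECs}), no reordering issue arises. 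There is essentially no hard step: the proof is a one-line substitution of one lemma into the other, and the only mild subtlety is confirming that the compatibility conditions combine as a conjunction without double-counting or missing tuples, which follows directly from the uniqueness clauses in \cref{def:shadow} and \cref{def:proj-of-partial-MEC-and-functions}.
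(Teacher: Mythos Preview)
Your proposal is correct and follows exactly the approach the paper takes: the paper introduces \cref{lem:combination-of-two-lemmas} with the single sentence ``Combining \cref{lem:counting-MECs-corresponding-to-projected-partial-MECs,lem:equivalence-between-MECs}, we get the following lemma'' and gives no further proof. Your write-up is in fact more explicit than the paper's, including the check that $S'\subseteq S$ so that \cref{lem:counting-MECs-corresponding-to-projected-partial-MECs} applies before substituting \cref{lem:equivalence-between-MECs} into it.
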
  \section{Algorithm for counting MECs of a graph}
\label{sec:algorithm-for-counting-MECs}
Using \cref{lem:combination-of-two-lemmas}, we can compute $|\setofMECs{H, O', P_1', P_2'}|$ for any shadow $(O', P_1', P_2')\in \shadowofudgraph{H[S_1\cup N(S_1, H)]}$. Since $H$ resembles $G_i^j$, the above discussion implies that we can count the MECs of $G_i^j$. This gives us a recursive method to compute the MECs of $G$.

\begin{algorithm}
\caption{Count-MEC($G$)}
\label{alg:counting-MEC}
\SetAlgoLined
\SetKwInOut{KwIn}{Input}
\SetKwInOut{KwOut}{Output}
\SetKwFunction{MEC-Construction}{MEC-Construction}
\KwIn{An undirected connected graph $G$}
    \KwOut{ $|MEC(G)|$}
    
   $(\mathcal{X} = \{X_1, X_2 \ldots, X_l\},T) \leftarrow$ a tree decomposition of $G$.
   \label{alg-counting-MEC:tree-decomposition}
   
   $F = $ Count($G, (\mathcal{X}, T), X_1$);
   \label{alg:counting-MEC:call-count-MEC-2}
   
   $sum \leftarrow 0$
   \label{alg:counting-MEC:initialization-of-sum}
   
   $A\leftarrow G[X_1\cup N(X_1, G)]$
   \label{alg:counting-MEC:A}
   
   \ForEach{$(O, P_1, P_2) \in \shadowofudgraph{A} $\label{alg:counting-MEC:foreach-1-start}}
   {
{
            $sum = sum + F(O, P_1, P_2)$
            \label{alg:counting-MEC:update-sum}
        }
}
   \label{alg:counting-MEC:foreach-1-end}
   \KwRet sum
   \label{alg:counting-MEC:return-sum}
\end{algorithm}

 \begin{algorithm}
\caption{\BaseCount{}($G$)}
\label{alg:counting-MEC-of-general-graph-base-case}
\SetAlgoLined
\SetKwInOut{KwIn}{Input}
\SetKwInOut{KwOut}{Output}
\SetKwFunction{Count_MEC}{Count_MEC}
\KwIn{An undirected connected graph $G$}
    \KwOut{A function $F: \shadowofudgraph{G} \rightarrow \mathbb{Z}$ such that for each $(O, P_1, P_2) \in \shadowofudgraph{G}$,  $F(O, P_1, P_2) = |MEC(G, O, P_1, P_2)|$
    }
    
    $F: \shadowofudgraph{G} \rightarrow \mathbb{Z}$ \label{alg-counting-MEC-of-general-graph-base-case:F-construction}
    
    \ForEach{$(O, P_1, P_2)\in \shadowofudgraph{G}$
    \label{alg-counting-MEC-of-general-graph-base-case:for-each-0-start}}
    {
        $F(O, P_1, P_2) \leftarrow 0$ \label{alg-counting-MEC-of-general-graph-base-case:initialization-of-S}
    }\label{alg-counting-MEC-of-general-graph-base-case:for-each-0-end}

     \ForEach{$O \in \setofMECs{G}$ \label{alg-counting-MEC-of-general-graph-base-case:for-each-1-start}}
    {
$(P_1, P_2) \leftarrow$ TFP($O$) \label{alg-counting-MEC-of-general-graph-base-case:P1-P2-init}
        
        $F(O, P_1, P_2) = 1$ \label{alg-counting-MEC-of-general-graph-base-case:update-F}

    }\label{alg-counting-MEC-of-general-graph-base-case:for-each-1-end}
    
    \KwRet $F$ \label{alg-counting-MEC-of-general-graph-base-case:return-statement}
    
\end{algorithm}

 \begin{algorithm}
\caption{Count($G, (\mathcal{X}, T), R_1$)}
\label{alg:counting-MEC-of-general-graph}
\SetAlgoLined
\SetKwInOut{KwIn}{Input}
\SetKwInOut{KwOut}{Output}
\SetKwFunction{Count_MEC}{Count_MEC}
\KwIn{An undirected graph $G$, a tree decomposition $(\mathcal{X}, T)$ of $G$, and a root node $R_1$ of $T$}
    \KwOut{A function $F: \shadowofudgraph{G} \rightarrow \mathbb{Z}$ such that for each $(O, P_1, P_2) \in \shadowofudgraph{G[R_1\cup N(R_1, G)]}$,  $F(O, P_1, P_2) = |MEC(G, O, P_1, P_2)|$
    }

    \eIf{\label{alg-counting-MEC-of-general-graph-if-1-start}
    $\mathcal{X} = \{R_1\}$}
    {\KwRet \BaseCount{}($G$)\label{alg-counting-MEC-of-general-graph-return-base-case}
    }
    {\label{alg-counting-MEC-of-general-graph-if-1-end}
        $R_2 \leftarrow$ last child of $R_1$ in $T$. \label{alg-counting-MEC-of-general-graph-R'-initialization}
        
        Cut the edge $R_1-R_2$ of  $T$.\label{alg-counting-MEC-of-general-graph-cut-edge-R-R'}
        
        $T_1 \leftarrow$ induced subtree of $T$ containing $R_1$\label{alg-counting-MEC-of-general-graph-T1-initialization}
        
        $T_2 \leftarrow$ induced subtree of $T$ containing $R_2$\label{alg-counting-MEC-of-general-graph-T2-initialization}
        
        $G_1 \leftarrow$ induced subtree of $G$ represented by $T_1$\label{alg-counting-MEC-of-general-graph-G1-initialization}
        
        $G_2 \leftarrow$ induced subtree of $G$ represented by $T_2$ \label{alg-counting-MEC-of-general-graph-G2-initialization}
        
        $\mathcal{X}_1 \leftarrow$ subset of $X$ containing nodes of $T_1$ \label{alg-counting-MEC-of-general-graph-X1-initialization}
        
        $\mathcal{X}_2 \leftarrow$ subset of $X$ containing nodes of $T_2$ \label{alg-counting-MEC-of-general-graph-X2-initialization}

        $F_1 \leftarrow Count(G_1, (\mathcal{X}_1, T_1), R_1)$ \label{alg-counting-MEC-of-general-graph-F1-initialization}
        
        $F_2 \leftarrow Count(G_2, (\mathcal{X}_2, T_2), R_2)$ \label{alg-counting-MEC-of-general-graph-F2-initialization}

        $A' \leftarrow G[R_1\cup N(R_1, G)]$
        \label{alg-counting-MEC-of-general-graph-A'-initialization}

        $A \leftarrow G[R_1\cup R_2\cup N(R_1\cup R_2, G)]$
        \label{alg-counting-MEC-of-general-graph-A-initialization}
        
        $B_1 \leftarrow G_1[R_1\cup N(R_1, G_1)]$
        \label{alg-counting-MEC-of-general-graph-B1-initialization}
        
        $B_2\leftarrow G_2[R_2\cup N(R_2, G_2)]$
        \label{alg-counting-MEC-of-general-graph-B2-initialization}

        $F: \shadowofudgraph{A'} \rightarrow \mathbb{Z}$ 
        \label{alg-counting-MEC-of-general-graph:S-construction}
        
        \ForEach{$(O', P_1', P_2') \in  \shadowofudgraph{A'}$ \label{alg-counting-MEC-of-general-graph:foreach-0-start}}
        {
            $F(O', P_1', P_2') \leftarrow 0$ \label{alg-counting-MEC-of-general-graph:S-initialization}
        }\label{alg-counting-MEC-of-general-graph:foreach-0-end}
        \ForEach{$O \in \setofpartialMECs{A}$ \label{alg-counting-MEC-of-general-graph:foreach-1-start}}
        {
            \ForEach{$(O_1, P_{11}, P_{12}) \in \shadowofudgraph{B_1}$ \label{alg-counting-MEC-of-general-graph:foreach-2-start}}
        {
            \ForEach{$(O_2, P_{21}, P_{22}) \in \shadowofudgraph{B_2}$ \label{alg-counting-MEC-of-general-graph:foreach-3-start}}
        {
            \If{$O \in \mathcal{E}(O_1, P_{11}, P_{12}, O_2, P_{21}, P_{22})$ \label{alg-counting-MEC-of-general-graph:if-start}}
            {
                $(P_1, P_2) \leftarrow \EPF{O, P_{11}, P_{12}, P_{21}, P_{22}}$ \label{alg-counting-MEC-of-general-graph:DPF-init}
                
                $(O', P_1', P_2') \leftarrow \mathcal{P}(O, P_1, P_2, R_1\cup N(R_1, G))$. \label{alg-counting-MEC-of-general-graph:O'-P1'-P2'-initialization}
                
                $F(O', P_1', P_2') = F(O', P_1',P_2') + F_1(O_1, P_{11}, P_{12})\times F_2(O_2, P_{21}, P_{22})$ \label{alg-counting-MEC-of-general-graph:S-update}
            \label{alg-counting-MEC-of-general-graph:if-end}}
        \label{alg-counting-MEC-of-general-graph:foreach-3-end}}
        \label{alg-counting-MEC-of-general-graph:foreach-2-end}}
        }\label{alg-counting-MEC-of-general-graph:foreach-1-end}
        
        \KwRet $F$ \label{alg-counting-MEC-of-general-graph:final-return-statement}
        
    }
 
\end{algorithm} \begin{algorithm}
\caption{Is\textunderscore Extension}
\label{alg:is-extension}
\SetAlgoLined
\SetKwInOut{KwIn}{Input}
\SetKwInOut{KwOut}{Output}
\SetKwFunction{Is_extension}{Is_extension}
\KwIn{An MECs $O$, and two triples $(O_1, P_{11}, P_{12})$ and $(O_2, P_{21}, P_{22})$ such that\\ 
$H$ is an undirected graph, $H_1$ and $H_2$ are induced subgraphs of $H$ such that\\ $I = V_{H_1}\cap V_{H_2}$ is a vertex separator of $H$ that separates $V_{H_1}\setminus I$ and $V_{H_2}\setminus I$,\\
for $a\in \{1,2\}$, $S_a \subseteq V_{H_a}$ such that $S_1\cap S_2 = I$,\\
for $a\in \{1,2\}$, $(O_a, P_{a1}, P_{a2}) \in \shadowofudgraph{H_a[S_a\cup N(S_a, H_a)]}$, and \\
$O \in \setofpartialMECs{H[S_1\cup S_2\cup N(S_1\cup S_2, H)]}$
}
    \KwOut{ 1 : if $O \in \mathcal{E}(O_1, P_{11}, P_{12}, O_2, P_{21}, P_{22})$,\\
     \hspace{4pt}0 : otherwise.
    }

    \ForEach{$a\in \{1,2\}$\label{alg:is-extension:first-for-each-start}}
    {
        \If{$\exists u\rightarrow v \in O_a$ such that $u\rightarrow v\notin O$}
        {
            \KwRet 0.
        }
    }\label{alg:is-extension:first-for-each-end}
    
    \ForEach{$a\in \{1,2\}$\label{alg:is-extension:second-for-each-start}}
    {
        \ForEach{$u,v,w \in V_{O_a}$\label{alg:is-extension:third-for-each-start}}
        {
\If{$u\rightarrow v\leftarrow w$ is a v-structure in $O$ but not a v-structure in $O_a$}
            {
                \KwRet 0.
            }
            
        }\label{alg:is-extension:third-for-each-end}
    }\label{alg:is-extension:second-for-each-end}

    \ForEach{$a\in \{1,2\}$\label{alg:is-extension:forth-for-each-start}}
    {
        \ForEach{$u-v \in O_a$\label{alg:is-extension:fifth-for-each-start}}
        {
            \If{$u\rightarrow v \in O$\label{alg:is-extension:first-if-start}}
            {
                \If{$u\rightarrow v$ is not strongly protected in $O$ \label{alg:is-extension:u-v-is-sp}}
                {
                    \If{$\nexists x-y \in O_a$ such that $x\rightarrow y \in O$ and $P_{a1}((x,y),(u,v)) = 1$ \label{alg:is-extension:edge-is-directed-due-to-path}}
                    {
                        \If{$\nexists x-y \in O_a$ such that $x\rightarrow y \in O$ and $P_{a2}((x,y),v) = P_{a2}((v,u),x) = 1$ \label{alg:is-extension:edge-is-directed-due-to-cycle}}
                        {
                            \KwRet 0.
                        }
                    }
                }
            }\label{alg:is-extension:first-if-end}

            \If{$u-v\in O$\label{alg:is-extension:second-if-start}}
            {
                \If{$\exists x-y \in O_a$ such that $x\rightarrow y \in O$ and $P_{a1}((x,y),(u,v)) = 1$ \label{alg:is-extension:edge-is-ud-even-after-existence-of-path}}
                {
                    \KwRet 0.
                }
                \If{$\exists x-y \in O_a$ such that $x\rightarrow y \in O$ and $P_{a2}((x,y),v) = P_{a2}((v,u),x) = 1$\label{alg:is-extension:edge-is-ud-even-after-existence-of-cycle}}
                {
                    \KwRet 0.
                }
            }\label{alg:is-extension:second-if-end}
            
        }\label{alg:is-extension:fifth-for-each-end}
    }\label{alg:is-extension:forth-for-each-end}

    $(P_1, P_2) \leftarrow \text{DPF}(O, P_{11}, P_{12}, P_{21}, P_{22})$ \label{alg:is-extension:P1-P2-init}

    \ForEach{$((u,v), (x,y)) \in E_O \times E_O$ \label{alg:is-extension:sixth-for-each-start}}
    {
        \If{$(u,v)\neq (x,y)$ and $P_1((u,v),(x,y)) = P_1((x,y),(u,v)) =1$}
        {
            \KwRet 0
        }
    }\label{alg:is-extension:sixth-for-each-end}

    \KwRet 1.
\end{algorithm}
 \begin{algorithm}[ht]
\caption{Projection($O, P_1, P_2, X$)}
\label{alg:projection}
\SetAlgoLined
\SetKwInOut{KwIn}{Input}
\SetKwInOut{KwOut}{Output}
\SetKwFunction{Projection}{Projection}
\KwIn{A partial MEC $O$, two functions $P_1$ and $P_2$, and $X\subseteq V_O$ such that \\
 $P_1: E_O \times E_O \rightarrow \{0,1\}$ and $P_2:E_O\times V_O \rightarrow \{0,1\}$
}
    \KwOut{ $\mathcal{P}(O, P_1, P_2, X)$ \\
} 
    
    $O' \leftarrow O[X]$\label{alg:projection:O'-is-induced-subgraph-of-O}

    $P_1': E_{O'}\times E_{O'} \rightarrow \{0,1\}$ \label{alg:projection:P_1'-is-projection-of-P1}

    $P_2': E_{O'}\times X \rightarrow \{0,1\}$ \label{alg:projection:P_2'-is-projection-of-P2}

    \ForEach{$((u,v), (x,y)) \in E_{O'}\times E_{O'}$ \label{alg:projection:first-for-each-start}}
    {
        $P_1'((u,v), (x,y)) \leftarrow P_1((u,v),(x,y))$
    }\label{alg:projection:first-for-each-end}

    \ForEach{$((u,v), w) \in E_{O'}\times V_{O'}$\label{alg:projection:second-for-each-start}}
    {
        $P_2'((u,v), w) \leftarrow P_2((u,v),w)$
    }\label{alg:projection:second-for-each-end}

   \KwRet $(O', P_1', P_2')$ 
\end{algorithm} Based on the above discussion, we construct \cref{alg:counting-MEC} to count MECs of any undirected connected graph $G$.\Cref{alg:counting-MEC} takes an undirected graph $G$ as its input, and returns $|\setofMECs{G}|$ as the output.
\begin{lemma}
\label{lem:alg:counting-MEC-is-valid}
Let $G$ be an undirected connected graph. Given input $G$, \cref{alg:counting-MEC} returns $|\setofMECs{G}|$.
\end{lemma}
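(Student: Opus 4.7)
The plan is to prove the stronger inductive claim that for every invocation $\text{Count}(G, (\mathcal{X}, T), R_1)$, the returned function $F$ satisfies $F(O', P_1', P_2') = |\setofMECs{G, O', P_1', P_2'}|$ for every shadow $(O', P_1', P_2') \in \shadowofudgraph{G[R_1 \cup N(R_1, G)]}$. Once this is in hand, the outer algorithm (\cref{alg:counting-MEC}) sums $F(O, P_1, P_2)$ over all shadows of $G[X_1\cup N(X_1,G)]$, which by \cref{lem:partition-of-MECs-of-H} applied with $Y = X_1 \cup N(X_1, G)$ equals $|\setofMECs{G}|$.

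I would prove the inductive claim by induction on $|\mathcal{X}|$. For the base case $|\mathcal{X}| = 1$, the single bag $R_1$ must cover $V_G$, so $R_1 \cup N(R_1, G) = V_G$ and every shadow of $G$ on $V_G$ satisfies $O = M$ for some MEC $M$ of $G$. \Cref{alg-counting-MEC:tree-decomposition} falls through to \BaseCount{}, which enumerates each $M \in \setofMECs{G}$, computes its pair of path functions via $\text{TFP}(M)$ (which is correct by \cref{lem:alg:tfp-is-valid}), and sets $F(M, P_1, P_2) = 1$. Since distinct MECs induce distinct shadows on $V_G$, and every triple that is not the shadow of some MEC retains its initialized value $0$, the base-case correctness follows.

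For the inductive case the algorithm cuts $T$ along the edge $R_1\undir R_2$, obtaining subtrees $T_1, T_2$ and induced subgraphs $G_1, G_2$ with $G = G_1 \cup G_2$. Invoking the tree-decomposition property (\cref{prop:tree-decomposition-property}) shows that $I := R_1 \cap R_2 = V_{G_1} \cap V_{G_2}$ is a vertex separator of $G$ separating $V_{G_1}\setminus I$ and $V_{G_2}\setminus I$; the setup then matches the hypothesis of \cref{lem:combination-of-two-lemmas} with $S_1 = R_1$ and $S_2 = R_2$. The induction hypothesis applied to the recursive calls yields functions $F_1, F_2$ counting MECs of $G_1, G_2$ correctly for every shadow of $G_1[R_1\cup N(R_1,G_1)]$ and of $G_2[R_2\cup N(R_2,G_2)]$ respectively. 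The three nested \textbf{foreach} loops together with the test $O \in \mathcal{E}(O_1,P_{11},P_{12},O_2,P_{21},P_{22})$ range over exactly the index set of the sum in \cref{lem:combination-of-two-lemmas}; line~\ref{alg-counting-MEC-of-general-graph:DPF-init} produces the correct $(P_1,P_2)$ by \cref{lem:P1-P2-is-DPF} and \cref{def:extension-of-O1-O2-P11-P12-P21-P22}, and line~\ref{alg-counting-MEC-of-general-graph:O'-P1'-P2'-initialization} produces the projection. Thus each update on line~\ref{alg-counting-MEC-of-general-graph:S-update} contributes one summand of the right-hand side of \cref{lem:combination-of-two-lemmas} to the correct bucket $F(O',P_1',P_2')$.

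The main obstacle is a bookkeeping check rather than a conceptual one: namely, showing that the triples $(O, (O_1,P_{11},P_{12}), (O_2,P_{21},P_{22}))$ enumerated by the nested loops in \cref{alg-counting-MEC-of-general-graph} are in bijection, via the map $(O,O_1,O_2) \mapsto ((O', P_1', P_2'), O_1, O_2)$ with $(O',P_1',P_2') = \mathcal{P}(O,P_1,P_2,R_1\cup N(R_1,G))$, with the index set of the sum in \cref{lem:combination-of-two-lemmas}. This requires verifying that (i) every $O$ produced by the algorithm that passes the extension test yields a valid shadow of $G[R_1\cup R_2\cup N(R_1\cup R_2,G)]$ (this follows from the validity clause in \cref{def:extension-of-O1-O2-P11-P12-P21-P22} together with \cref{lem:P1-P2-is-DPF}), and (ii) no valid extension is missed (which is immediate, since the loops exhaust $\setofpartialMECs{A}$ and all possible shadows of $B_1, B_2$). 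Combining the correctness of the recursive aggregation with \cref{lem:partition-of-MECs-of-H} on the outer algorithm delivers the claim.
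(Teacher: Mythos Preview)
Your proposal is correct and follows essentially the same approach as the paper: the paper factors the argument into separate lemmas (\cref{lem:alg:counting-MEC-of-general-graph-is-valid} for the recursive \texttt{Count} and \cref{lem:alg:counting-MEC-of-general-graph-base-case-is-valid} for \BaseCount{}), whereas you inline both into a single induction on $|\mathcal{X}|$, but the substance---base case via \cref{lem:alg:tfp-is-valid}, inductive step via \cref{lem:combination-of-two-lemmas}, and the final summation via \cref{lem:partition-of-MECs-of-H}---is identical. Your bookkeeping remark correctly isolates the one point that requires care, namely that iterating over $O \in \setofpartialMECs{A}$ rather than over full shadows $(O,P_1,P_2)$ loses nothing because, by \cref{def:extension-of-O1-O2-P11-P12-P21-P22}, the pair $(P_1,P_2)$ is uniquely determined as $\EPF{O,P_{11},P_{12},P_{21},P_{22}}$ once $O$ passes the extension test.
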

\begin{proof}
\Cref{alg:counting-MEC} constructs a tree decomposition $(\mathcal{X} = \{X_1, X_2, \ldots, X_l\},T)$ of $G$ (at line~\ref{alg-counting-MEC:tree-decomposition} of \cref{alg:counting-MEC}). It then invokes \cref{alg:counting-MEC-of-general-graph} with input $G$, $(\mathcal{X}, T)$, and $X_1$. Let $A = G[X_1\cup N(X_1, G)]$. \Cref{lem:alg:counting-MEC-of-general-graph-is-valid} demonstrates that \cref{alg:counting-MEC-of-general-graph} returns a function $F$ with domain $\shadowofudgraph{A}$ and co-domain $\mathbb{Z}$ such that for each triplet $(O, P_1, P_2) \in \shadowofudgraph{A}$, $F(O, P_1, P_2) = |\setofMECs{G, O, P_1, P_2}|$. Lines \ref{alg:counting-MEC:foreach-1-start}--\ref{alg:counting-MEC:foreach-1-end} of \cref{alg:counting-MEC} implement \cref{lem:partition-of-MECs-of-H} to obtain $|\setofMECs{G}|$. This confirms the validity of \cref{lem:alg:counting-MEC-is-valid}.

\end{proof}

\begin{lemma}
\label{lem:alg:counting-MEC-of-general-graph-is-valid}
Let $G$ be an undirected graph, $(\mathcal{X}, T)$ be a tree decomposition of $G$, $R_1 \in \mathcal{X}$ be the root node of $T$, and $A' = G[R_1 \cup N(R_1, G)]$. For input $G$, $(\mathcal{X}, T)$, and $R_1$, \cref{alg:counting-MEC-of-general-graph} returns a function $F$ with domain $\shadowofudgraph{A'}$ and co-domain $\mathbb{Z}$ such that for any triplet $(O, P_1, P_2) \in \shadowofudgraph{A'}$, $F(O, P_1, P_2) = |\setofMECs{G, O, P_1,P_2}|$.
\end{lemma}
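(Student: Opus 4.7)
The plan is to prove the lemma by induction on $|\mathcal{X}|$, the number of bags in the tree decomposition. The base case handles the trivial tree decomposition, and the inductive step mirrors the recursive structure of the algorithm and invokes \cref{lem:combination-of-two-lemmas}.

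For the base case $|\mathcal{X}| = 1$, property~1 of \cref{def:tree-decomposition} forces $V_G = R_1$, so $A' = G[R_1 \cup N(R_1, G)] = G$, and the algorithm returns $\BaseCount(G)$. For any $(O, P_1, P_2) \in \shadowofudgraph{G}$, \cref{def:shadow} gives $|\setofMECs{G, O, P_1, P_2}| = 1$ exactly when $O$ is an MEC of $G$ (so $O = M[V_G] = M$) and $(P_1, P_2)$ is the pair of triangle-free-path indicator functions for $O$; otherwise the set is empty. This is precisely what lines \ref{alg-counting-MEC-of-general-graph-base-case:for-each-0-start}--\ref{alg-counting-MEC-of-general-graph-base-case:for-each-1-end} of \cref{alg:counting-MEC-of-general-graph-base-case} compute, using \cref{lem:alg:tfp-is-valid} for the correctness of $\mathrm{TFP}(O)$ at line~\ref{alg-counting-MEC-of-general-graph-base-case:P1-P2-init}.

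For the inductive step, suppose the lemma holds for all tree decompositions with fewer than $|\mathcal{X}|$ bags. Let $R_2$ be the last child of $R_1$ in $T$; the edge $R_1 - R_2$ is cut, yielding subtrees $T_1, T_2$ with root bags $R_1, R_2$ and induced subgraphs $G_1, G_2$. The first task is to verify the hypotheses needed to apply \cref{lem:combination-of-two-lemmas} with $H = G$, $H_1 = G_1$, $H_2 = G_2$, $S_1 = R_1$, $S_2 = R_2$: one must check that $G = G_1 \cup G_2$, that $I := V_{G_1} \cap V_{G_2}$ equals $R_1 \cap R_2$ and is a vertex separator of $G$ between $V_{G_1} \setminus I$ and $V_{G_2} \setminus I$, and that $(\mathcal{X}_a, T_a)$ is itself a valid tree decomposition of $G_a$ with root $R_a$. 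All of these follow from the running intersection property of the tree decomposition and \cref{prop:tree-decomposition-property}; in particular, $R_1 \cap R_2 = V_{G_1} \cap V_{G_2}$ because any vertex appearing in bags in both $T_1$ and $T_2$ must, by the connectedness clause of \cref{def:tree-decomposition}, appear in both $R_1$ and $R_2$. Since each $(\mathcal{X}_a, T_a)$ has strictly fewer bags than $(\mathcal{X}, T)$, the inductive hypothesis applies: the recursive calls on line~\ref{alg-counting-MEC-of-general-graph-F1-initialization}--\ref{alg-counting-MEC-of-general-graph-F2-initialization} return functions $F_a$ with $F_a(O_a, P_{a1}, P_{a2}) = |\setofMECs{G_a, O_a, P_{a1}, P_{a2}}|$ for each shadow of $B_a = G_a[R_a \cup N(R_a, G_a)]$.

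It remains to verify that the nested loops on lines \ref{alg-counting-MEC-of-general-graph:foreach-1-start}--\ref{alg-counting-MEC-of-general-graph:foreach-1-end} correctly realize the sum in \cref{lem:combination-of-two-lemmas}. The loops enumerate all triples $(O, O_1\text{-triple}, O_2\text{-triple})$ with $O \in \setofpartialMECs{A}$ and check the extension condition $O \in \mathcal{E}(O_1, P_{11}, P_{12}, O_2, P_{21}, P_{22})$; by \cref{def:extension-of-O1-O2-P11-P12-P21-P22} together with \cref{lem:P1-P2-is-DPF}, each such extension uniquely determines $(P_1, P_2) = \EPF{O, P_{11}, P_{12}, P_{21}, P_{22}}$ so that $(O, P_1, P_2)$ is a shadow of $A$. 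Projecting this shadow onto $R_1 \cup N(R_1, G)$ via \cref{def:proj-of-partial-MEC-and-functions} gives the bucket $(O', P_1', P_2')$ into which the product $F_1 \cdot F_2$ is accumulated. This is exactly the summation in \cref{lem:combination-of-two-lemmas}, so the returned function $F$ satisfies $F(O', P_1', P_2') = |\setofMECs{G, O', P_1', P_2'}|$. I anticipate that the main obstacle will be the careful bookkeeping establishing that $\setofMECs{G_a, \cdot}$ really corresponds to the MECs counted on the $G_a$ side of \cref{lem:combination-of-two-lemmas}, i.e., verifying that $(\mathcal{X}_a, T_a)$ is a tree decomposition of $G_a$ and that $R_1 \cap R_2 = V_{G_1} \cap V_{G_2}$ is the relevant separator; everything else is routine bookkeeping given the already-proved \cref{lem:combination-of-two-lemmas}.
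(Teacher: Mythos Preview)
Your proposal is correct and follows essentially the same approach as the paper: induction on the size of the tree decomposition, with the base case handled by \BaseCount{} (the paper delegates this to \cref{lem:alg:counting-MEC-of-general-graph-base-case-is-valid}, which you have inlined), and the inductive step established by checking that the cut produces $G_1, G_2$ satisfying the hypotheses of \cref{lem:combination-of-two-lemmas} and that lines \ref{alg-counting-MEC-of-general-graph:foreach-1-start}--\ref{alg-counting-MEC-of-general-graph:foreach-1-end} implement that lemma's sum. You are slightly more explicit than the paper about verifying $R_1 \cap R_2 = V_{G_1} \cap V_{G_2}$ and that each $(\mathcal{X}_a, T_a)$ is a tree decomposition of $G_a$, but these are exactly the checks the paper relies on implicitly via \cref{prop:tree-decomposition-property} and \cref{obs:relation-between-MECs-of-G-i-j-G-i-j-1-G-ij}.
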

\begin{proof}
If the tree $T$ has only one node, the root node $R_1$, i.e., $\mathcal{X} = \{R_1\}$, then \cref{alg:counting-MEC-of-general-graph} calls \cref{alg:counting-MEC-of-general-graph-base-case}. \Cref{lem:alg:counting-MEC-of-general-graph-base-case-is-valid} shows that \cref{alg:counting-MEC-of-general-graph-base-case} returns the required function $F$. If $T$ has more than one node, we pick the last child $R_2$ of $R_1$ in $T$ (line-\ref{alg-counting-MEC-of-general-graph-R'-initialization} of \cref{alg:counting-MEC-of-general-graph}), and cut the edge $R_1-R_2$ of $T$ (line-\ref{alg-counting-MEC-of-general-graph-cut-edge-R-R'} of \cref{alg:counting-MEC-of-general-graph}). This gives us two induced subgraphs $T_1$ and $T_2$ of $T$ such that $T_1$ contains $R_1$, and $T_2$ contains $R_2$ (lines \ref{alg-counting-MEC-of-general-graph-T1-initialization}-\ref{alg-counting-MEC-of-general-graph-T2-initialization}), and two induced subgraphs $G_1$ and $G_2$ of $G$ such that $T_1$ represents $G_1$, and $T_2$ represents $G_2$. 
The two trees $T_1$ and $T_2$ have sets of nodes $\mathcal{X}_1$ and $\mathcal{X}_2$, respectively (lines \ref{alg-counting-MEC-of-general-graph-X1-initialization}-\ref{alg-counting-MEC-of-general-graph-X2-initialization} of \cref{alg:counting-MEC-of-general-graph}). We define $A = G[R_1 \cup R_2 \cup N(R_1 \cup R_2, G)]$, $A' = G[R_1 \cup N(R_1, G)]$, $B_1 = G_1[R_1 \cup N(R_1, G)]$, and $B_2 = G_2[R_2 \cup N(R_2, G)]$ for simplification of our analysis (lines \ref{alg-counting-MEC-of-general-graph-A'-initialization}-\ref{alg-counting-MEC-of-general-graph-B2-initialization} of \cref{alg:counting-MEC-of-general-graph}). 
We observe that $G = G_1 \cup G_2$, and $R_1 \cap R_2$ is a vertex separator of $G$ (from \cref{prop:tree-decomposition-property}). We find that this scenario resembles what we discussed after \cref{obs:relation-between-MECs-of-G-i-j-G-i-j-1-G-ij}. Here, $G$ resembles $H$, $G_1$ and $G_2$ resemble $H_1$ and $H_2$ respectively, $R_1$ and $R_2$ resemble $S_1$ and $S_2$. 
Thus, to implement \cref{lem:combination-of-two-lemmas}, we need the functions $F_1$ and $F_2$ such that for any triplet  $(O_1, P_{11}, P_{12}) \in \shadowofudgraph{B_1}$, $F_1(O_1, P_{11}, P_{12}) = |\setofMECs{G_1, O_1, P_{11}, P_{12}}|$, and for any triplet  $(O_2, P_{21}, P_{22}) \in \shadowofudgraph{B_2}$, $F_2(O_2, P_{21}, P_{22}) = |\setofMECs{G_2, O_2, P_{21}, P_{22}}|$. Recursively, we compute these two functions $F_1$ and $F_2$ (lines \ref{alg-counting-MEC-of-general-graph-F1-initialization}-\ref{alg-counting-MEC-of-general-graph-F2-initialization} of \cref{alg:counting-MEC-of-general-graph}). After obtaining $F_1$ and $F_2$, we implement \cref{lem:combination-of-two-lemmas} (lines \ref{alg-counting-MEC-of-general-graph:foreach-1-start}-\ref{alg-counting-MEC-of-general-graph:foreach-1-end}) to compute the required function $F$.
\end{proof}

\begin{lemma}
\label{lem:alg:counting-MEC-of-general-graph-base-case-is-valid}
Let $G$ be an undirected graph. For input $G$, \cref{alg:counting-MEC-of-general-graph-base-case} returns a function $F$ such that for any triplet $(O, P_1, P_2)$ where $(O, P_1, P_2) \in \shadowofudgraph{G}$, and $F(O, P_1, P_2) = |\setofMECs{G, O, P_1,P_2}|$.
\end{lemma}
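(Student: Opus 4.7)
The plan is to observe that when the ``shadow set'' $Y$ in \cref{def:shadow} is taken to be all of $V_G$, the shadow determines the MEC uniquely. Specifically, if $M \in \setofMECs{G, O, P_1, P_2}$ with $(O, P_1, P_2) \in \shadowofudgraph{G}$, then by \cref{item-1-of-def:shadow} of \cref{def:shadow} we have $O = M[V_G] = M$. Hence $\setofMECs{G, O, P_1, P_2}$ contains at most the single MEC $M = O$, and moreover $O$ must itself lie in $\setofMECs{G}$ for the set to be non-empty.

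Given this, I would prove the lemma in two directions. First, fix any triple $(O, P_1, P_2) \in \shadowofudgraph{G}$ and suppose $\setofMECs{G, O, P_1, P_2} \neq \emptyset$; by the above observation its unique element is $M = O$, so $O \in \setofMECs{G}$. Moreover, \cref{item-2-of-def:shadow,item-3-of-def:shadow} of \cref{def:shadow} force $P_1$ and $P_2$ to be exactly the indicator functions for the existence of triangle-free paths in $M = O$. Since $O$ is an MEC and hence a chain graph with chordal undirected components (\cref{thm:nes-and-suf-cond-for-chordal-graph-to-be-an-MEC}), \cref{lem:alg:tfp-is-valid} applied to $\text{TFP}(O)$ yields precisely these functions. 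Thus $(P_1, P_2) = \text{TFP}(O)$, and in this case $|\setofMECs{G, O, P_1, P_2}| = 1$. Otherwise, $\setofMECs{G, O, P_1, P_2} = \emptyset$, so $|\setofMECs{G, O, P_1, P_2}| = 0$.

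Second, I would match this against what \cref{alg:counting-MEC-of-general-graph-base-case} computes. Lines~\ref{alg-counting-MEC-of-general-graph-base-case:for-each-0-start}--\ref{alg-counting-MEC-of-general-graph-base-case:for-each-0-end} initialize $F(O, P_1, P_2) = 0$ on all of $\shadowofudgraph{G}$. Then the loop in lines~\ref{alg-counting-MEC-of-general-graph-base-case:for-each-1-start}--\ref{alg-counting-MEC-of-general-graph-base-case:for-each-1-end} iterates over $O \in \setofMECs{G}$, computes $(P_1, P_2) = \text{TFP}(O)$ (which by \cref{lem:alg:tfp-is-valid} is a valid pair, so $(O, P_1, P_2)$ lies in $\shadowofudgraph{G}$), and sets $F(O, P_1, P_2) = 1$. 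Since distinct MECs $O \neq O'$ produce distinct triples (the first coordinate already differs), each $F(O, P_1, P_2)$ is assigned at most once, so the final value matches the cardinality computed in the first step.

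The only subtle point is verifying that the assignment is consistent with the definition of $\shadowofudgraph{G}$ (i.e., that $O$ is indeed a partial MEC of $G$ in the sense of \cref{def:partial-MEC}); this is immediate since any MEC is in particular a partial MEC by \cref{thm:nes-and-suf-cond-for-chordal-graph-to-be-an-MEC}. There is no real obstacle here — the lemma is essentially a bookkeeping check that the brute-force enumeration respects \cref{def:shadow,def:subsets-of-MEC-based-on-O-P1-and-P2}. The proof is short and mostly definitional, relying on \cref{lem:alg:tfp-is-valid} as the one non-trivial external input.
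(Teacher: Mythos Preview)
Your proposal is correct and follows essentially the same approach as the paper: both proofs hinge on the observation that when the shadow is taken on all of $V_G$, any $M \in \setofMECs{G, O, P_1, P_2}$ must satisfy $M = O$, so the set is nonempty precisely when $O$ is itself an MEC and $(P_1, P_2) = \text{TFP}(O)$ (via \cref{lem:alg:tfp-is-valid}), which is exactly what the algorithm records. Your write-up is slightly more explicit about the two directions and about why distinct MECs yield distinct triples, but the argument is the same bookkeeping check.
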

\begin{proof}
We initialize the output function $F$ with $F(O, P_1, P_2) = 0$ for all triplets $(O, P_1, P_2) \in \shadowofudgraph{G}$ (lines \ref{alg-counting-MEC-of-general-graph-base-case:for-each-0-start}--\ref{alg-counting-MEC-of-general-graph-base-case:for-each-0-end}).  
For each MEC $O$ of $G$, we compute TFP($O$) and update $F$ with $F(O, P_1, P_2) = 1$ (lines \ref{alg-counting-MEC-of-general-graph-base-case:P1-P2-init} and \ref{alg-counting-MEC-of-general-graph-base-case:update-F}). According to \cref{lem:alg:tfp-is-valid}, TFP($O$) provides a pair of functions $(P_1, P_2)$ such that
\begin{enumerate}
    \item $P_1:E_{O} \times E_{O} \rightarrow \{0,1\}$, and 
  $P_1((u,v),(x,y)) = 1$ if and only if $(u,v)\neq (x,y)$ and there exists a \tfp{} exists from $(u,v)$ to $(x,y)$ in $O$, and 
  
  \item $P_2:E_{O} \times V_{O} \rightarrow \{0,1\}$, and $P_2((u,v),w) = 1$ if and only if $v\neq w$ and a \tfp{} exists from $(u,v)$ to $w$ in $O$.
\end{enumerate}
 This implies that for any triple $(O, P_1, P_2) \in \shadowofudgraph{G}$, $F(O, P_1, P_2) =1$ if $O$ is an MEC of $G$ and $(O, P_1, P_2)$ is the shadow of $O$ on $V_G$; otherwise, $F(O, P_1, P_2) =0$. 
We now demonstrate that for each $(O, P_1, P_2) \in \shadowofudgraph{G}$, $F(O, P_1, P_2) = |\setofMECs{G, O, P_1, P_2}|$. 
According to \cref{def:subsets-of-MEC-based-on-O-P1-and-P2}, $|\setofMECs{G, O, P_1, P_2}|$ is the count of MECs $M$ such that $(O, P_1, P_2)$ is the shadow of $M$ on $V_O$. 
Since for any partial MEC $O \in \setofpartialMECs{G}$, $V_O = V_G$, if $M \in \setofMECs{G, O, P_1, P_2}$, then $M = O$  and $(P_1, P_2) = TFP(O)$. This implies that if $O$ is not an MEC of $G$, then $|\setofMECs{G, O, P_1, P_2}| = 0$ regardless of $P_1$ and $P_2$, and if $O$ is an MEC of $G$, then if $(P_1, P_2) = TFP(O)$, then $|\setofMECs{G, O, P_1, P_2}| = 1$, else $|\setofMECs{G, O, P_1, P_2}| = 0$. This demonstrates that for each $(O, P_1, P_2) \in L$, $F(O, P_1, P_2) = |\setofMECs{G, O, P_1, P_2}|$. This completes the proof of \cref{lem:alg:counting-MEC-of-general-graph-base-case-is-valid}.
\end{proof}

\begin{lemma}
    \label{lem:alg:projection-isvalid}
    Let $O$ be a partial MEC, $G = \skel{O}$, $X\subseteq V_G$, and $P_1: E_G \times E_G \rightarrow \{0,1\}$ and $P_2: E_G \times V_G \rightarrow \{0,1\}$ be two functions. For the input $O, P_1, P_2$, and $X$, \cref{alg:projection} outputs $\mathcal{P}(O, P_1, P_2, X)$.
\end{lemma}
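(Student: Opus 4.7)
The plan is to verify directly that the triple $(O', P_1', P_2')$ returned by \cref{alg:projection} satisfies all three conditions of \cref{def:proj-of-partial-MEC-and-functions}. Since the algorithm is essentially a transcription of the definition, the proof will be a line-by-line matching rather than involving any nontrivial combinatorial argument.

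First, I would check \cref{item-1-of-def:proj-of-partial-MEC-and-functions} of \cref{def:proj-of-partial-MEC-and-functions}: line~\ref{alg:projection:O'-is-induced-subgraph-of-O} of \cref{alg:projection} sets $O' \leftarrow O[X]$, so this condition holds by construction. Next, I would verify that the domains and codomains of $P_1'$ and $P_2'$ agree with those prescribed by \cref{def:proj-of-partial-MEC-and-functions}; this follows from lines~\ref{alg:projection:P_1'-is-projection-of-P1}--\ref{alg:projection:P_2'-is-projection-of-P2}. Before these steps can even make sense, I would observe that since $O' = O[X]$ is an induced subgraph of $O$, we have $E_{O'} \subseteq E_O$ and $V_{O'} \subseteq V_O \subseteq V_G$, so the values $P_1((u,v),(x,y))$ and $P_2((u,v),w)$ referenced at lines~\ref{alg:projection:first-for-each-start}--\ref{alg:projection:second-for-each-end} are well-defined for every $((u,v),(x,y)) \in E_{O'} \times E_{O'}$ and $((u,v),w) \in E_{O'} \times V_{O'}$.

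Then, \cref{item-2-of-def:proj-of-partial-MEC-and-functions} of \cref{def:proj-of-partial-MEC-and-functions} is established by noting that the loop at lines~\ref{alg:projection:first-for-each-start}--\ref{alg:projection:first-for-each-end} iterates over all pairs $((u,v),(x,y)) \in E_{O'} \times E_{O'}$ and assigns $P_1'((u,v),(x,y)) \leftarrow P_1((u,v),(x,y))$, which is exactly what the definition demands. Identically, \cref{item-3-of-def:proj-of-partial-MEC-and-functions} of \cref{def:proj-of-partial-MEC-and-functions} follows from the loop at lines~\ref{alg:projection:second-for-each-start}--\ref{alg:projection:second-for-each-end}.

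There is no real obstacle in this proof; the only point that could cause a minor slip is overlooking that $P_1$ and $P_2$ are defined on $E_G \times E_G$ and $E_G \times V_G$ respectively, and so one must confirm that the restriction to $E_{O'} \times E_{O'}$ (respectively $E_{O'} \times V_{O'}$) is legitimate. Once this containment is noted, combining the three verifications above yields $(O', P_1', P_2') = \mathcal{P}(O, P_1, P_2, X)$, which establishes \cref{lem:alg:projection-isvalid}.
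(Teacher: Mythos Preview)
Your proposal is correct and follows essentially the same approach as the paper: a direct line-by-line verification that \cref{alg:projection} implements each of \cref{item-1-of-def:proj-of-partial-MEC-and-functions,item-2-of-def:proj-of-partial-MEC-and-functions,item-3-of-def:proj-of-partial-MEC-and-functions} of \cref{def:proj-of-partial-MEC-and-functions}. Your added remark about the containments $E_{O'} \subseteq E_O$ and $V_{O'} \subseteq V_O$ ensuring well-definedness is a nice bit of extra care, but otherwise the argument matches the paper's proof.
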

\begin{proof}
    Let \cref{alg:projection} return $(O', P_1', P_2')$. 
    As per line~\ref{alg:projection:O'-is-induced-subgraph-of-O} in \cref{def:projection}, $O' = O[X]$. This demonstrates that $O$ satisfies \cref{item-1-of-def:proj-of-partial-MEC-and-functions} in \cref{def:proj-of-partial-MEC-and-functions}. Lines~\ref{alg:projection:first-for-each-start}--\ref{alg:projection:first-for-each-end} of \cref{alg:projection} demonstrate that $P_1'$ satisfies \cref{item-2-of-def:proj-of-partial-MEC-and-functions} in \cref{def:proj-of-partial-MEC-and-functions}. Similarly, lines~\ref{alg:projection:second-for-each-start}--\ref{alg:projection:second-for-each-end} of \cref{alg:projection} show that $P_2'$ satisfies \cref{item-3-of-def:proj-of-partial-MEC-and-functions} in \cref{def:proj-of-partial-MEC-and-functions}. Thus, according to \cref{def:proj-of-partial-MEC-and-functions}, $(O', P_1', P_2') = \mathcal{P}(O, P_1, P_2, X)$. 
\end{proof}

\begin{lemma}
    \label{lem:alg:is_extension-is-valid}
    Let $H$ be an undirected graph, $H_1$ and $H_2$ be two induced subgraphs of $H$ such that $I = V_{H_1}\cap V_{H_2}$ is a vertex separator of $H$, separating $V_{H_1}\setminus I$ and $V_{H_2}\setminus I$. For $a\in \{1,2\}$, let $S_a \subseteq V_{H_a}$ such that $S_1\cap S_2 = I$. For each $a\in \{1,2\}$, let $(O_a, P_{a1}, P_{a2}) \in \shadowofudgraph{H_a[S_a\cup N(S_a, H_a)]}$. Additionally, let $O \in \setofpartialMECs{H[S_1\cup S_2\cup N(S_1\cup S_2, H)]}$.
    
    For the input $O, O_1, O_2, P_{11}, P_{12}, P_{21}$, and $P_{22}$, \cref{alg:is-extension} returns 1 if $O \in \mathcal{E}(O_1, P_{11}, P_{12}, O_2, P_{21}, P_{22})$, otherwise it returns 0.
\end{lemma}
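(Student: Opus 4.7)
The plan is to show that \cref{alg:is-extension} returns $0$ precisely when $O$ fails at least one of the four defining conditions of an extension given in \cref{def:extension-of-O1-O2-P11-P12-P21-P22}; since every block in the algorithm only returns $0$ upon detecting such a failure, and the final \KwRet $1$ is reached only when all checks pass, this will establish the lemma.

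The plan is to match the four blocks of \cref{alg:is-extension} to the four items of \cref{def:extension-of-O1-O2-P11-P12-P21-P22} in order. First, lines~\ref{alg:is-extension:first-for-each-start}--\ref{alg:is-extension:first-for-each-end} are a direct translation of \cref{item-1-of-def:extension-of-O1-O2-P11-P12-P21-P22}: the loop returns $0$ iff some directed edge of $O_a$ is missing from $O$. Next, lines~\ref{alg:is-extension:second-for-each-start}--\ref{alg:is-extension:second-for-each-end} check that every v-structure of $O[V_{O_a}]$ is already a v-structure of $O_a$; for the opposite containment, I would use that $H_a$ is an induced subgraph of $H$, so any two vertices of $V_{O_a}$ are adjacent in $\skel{O}$ iff they are adjacent in $\skel{O_a}$, and combined with the check in lines~\ref{alg:is-extension:first-for-each-start}--\ref{alg:is-extension:first-for-each-end} this gives $\mathcal{V}(O_a)\subseteq \mathcal{V}(O[V_{O_a}])$ automatically, establishing \cref{item-2-of-def:extension-of-O1-O2-P11-P12-P21-P22}.

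For the third block (lines~\ref{alg:is-extension:forth-for-each-start}--\ref{alg:is-extension:forth-for-each-end}), each iteration considers an edge $u-v\in O_a$; the inner \textbf{if} at line~\ref{alg:is-extension:first-if-start} verifies the ``only if'' direction of \cref{item-3-of-def:extension-of-O1-O2-P11-P12-P21-P22} (if $u\rightarrow v\in O$, then one of \cref{subitem-1-of-item-3-of-def:extension-of-O1-O2-P11-P12-P21-P22,subitem-2-of-item-3-of-def:extension-of-O1-O2-P11-P12-P21-P22,subitem-3-of-item-3-of-def:extension-of-O1-O2-P11-P12-P21-P22} holds), and the second \textbf{if} at line~\ref{alg:is-extension:second-if-start} verifies the ``if'' direction (if $u-v\in O$, then none of \cref{subitem-1-of-item-3-of-def:extension-of-O1-O2-P11-P12-P21-P22,subitem-2-of-item-3-of-def:extension-of-O1-O2-P11-P12-P21-P22,subitem-3-of-item-3-of-def:extension-of-O1-O2-P11-P12-P21-P22} may hold). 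Because $u-v\in O_a$ implies $u-v\in \skel{O_a}\subseteq \skel{O}$ so $(u,v)\in E_{\skel{O}}$ must be either undirected or directed in $O$, these two branches jointly characterize \cref{item-3-of-def:extension-of-O1-O2-P11-P12-P21-P22}.

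Finally, for the fourth block, line~\ref{alg:is-extension:P1-P2-init} calls \text{DPF}$(O, P_{11}, P_{12}, P_{21}, P_{22})$; I would invoke the correctness of this subroutine (\cref{lem:P1-P2-is-DPF}, using that the first three blocks have already guaranteed the hypotheses on $O$) to conclude that $(P_1,P_2)=\EPF{O,P_{11},P_{12},P_{21},P_{22}}$ as required by \cref{item-1-of-def:valid-epfs} of \cref{def:valid-epfs}. Lines~\ref{alg:is-extension:sixth-for-each-start}--\ref{alg:is-extension:sixth-for-each-end} then exactly check \cref{item-2-of-def:valid-epfs} of \cref{def:valid-epfs}, i.e., that no pair $((u,v),(x,y))$ has $P_1((u,v),(x,y))=P_1((x,y),(u,v))=1$. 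Together these verify \cref{item-4-of-def:extension-of-O1-O2-P11-P12-P21-P22}. The only subtle step is confirming that the one-sided v-structure check in the second block actually suffices, which as noted above follows from \cref{item-1-of-def:extension-of-O1-O2-P11-P12-P21-P22} combined with the induced-subgraph relationship between $\skel{O_a}$ and $\skel{O}$; the remaining steps are direct read-offs of the definition and require no additional machinery beyond \cref{lem:P1-P2-is-DPF}.
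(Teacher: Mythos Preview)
Your approach is correct and mirrors the paper's proof: both walk through the four blocks of \cref{alg:is-extension} and match them to \cref{item-1-of-def:extension-of-O1-O2-P11-P12-P21-P22,item-2-of-def:extension-of-O1-O2-P11-P12-P21-P22,item-3-of-def:extension-of-O1-O2-P11-P12-P21-P22,item-4-of-def:extension-of-O1-O2-P11-P12-P21-P22} of \cref{def:extension-of-O1-O2-P11-P12-P21-P22}, including the observation that the one-sided v-structure check in the second block suffices because the first block already forces $\mathcal{V}(O_a)\subseteq \mathcal{V}(O[V_{O_a}])$ via $\skel{O_a}=\skel{O[V_{O_a}]}$.

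One small correction: your citation of \cref{lem:P1-P2-is-DPF} for line~\ref{alg:is-extension:P1-P2-init} is misplaced. That lemma assumes actual MECs $M,M_1,M_2$ related by projection, hypotheses that are neither available nor needed at this point in \cref{alg:is-extension} (we only have shadows, not MECs). What you actually need is simply that \cref{alg:constructDPF} implements \cref{def:extended-path-function}, which the paper establishes by inspection in the paragraph immediately following that algorithm. The paper's own proof of \cref{lem:alg:is_extension-is-valid} accordingly does not invoke \cref{lem:P1-P2-is-DPF} here.
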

\begin{proof}
    By the definition of extension, \cref{def:extension-of-O1-O2-P11-P12-P21-P22}, if $O \in \mathcal{E}(O_1, P_{11}, P_{12}, O_2, P_{21}, P_{22})$, then $O$ satisfies \cref{item-1-of-def:extension-of-O1-O2-P11-P12-P21-P22,item-2-of-def:extension-of-O1-O2-P11-P12-P21-P22,item-3-of-def:extension-of-O1-O2-P11-P12-P21-P22,item-4-of-def:extension-of-O1-O2-P11-P12-P21-P22} of \cref{def:extension-of-O1-O2-P11-P12-P21-P22}.

    Lines~\ref{alg:is-extension:first-for-each-start}--\ref{alg:is-extension:first-for-each-end} handle cases where $O$ does not satisfy \cref{item-1-of-def:extension-of-O1-O2-P11-P12-P21-P22} of \cref{def:extension-of-O1-O2-P11-P12-P21-P22}, causing \cref{alg:is-extension} to return 0.

    If $O$ satisfies lines~\ref{alg:is-extension:first-for-each-start}--\ref{alg:is-extension:first-for-each-end} then each directed edge in $O_1$ and $O_2$ is also a directed edge in $O$. This implies for $a\in \{1,2\}$, $\mathcal{V}(O_a) \subseteq \mathcal{V}(O[V_{O_a}])$, since from the construction, $\skel{O_a} = \skel{O[V_{O_a}]}$. Lines~\ref{alg:is-extension:second-for-each-start}--\ref{alg:is-extension:second-for-each-end} deal with the verification of $\mathcal{V}(O[V_{O_a}]) \subseteq \mathcal{V}(O_a)$.  
      If $O$ does not satisfy $\mathcal{V}(O[V_{O_a}]) \subseteq \mathcal{V}(O_a)$, then \cref{alg:is-extension} returns 0. 
     Therefore, if $O$ passes lines~\ref{alg:is-extension:first-for-each-start}--\ref{alg:is-extension:second-for-each-end} and moves ahead then $O$ satisfies  \cref{item-2-of-def:extension-of-O1-O2-P11-P12-P21-P22} of \cref{def:extension-of-O1-O2-P11-P12-P21-P22}.

    Lines~\ref{alg:is-extension:forth-for-each-start}--\ref{alg:is-extension:forth-for-each-end} cover cases where $O$ does not satisfy \cref{item-3-of-def:extension-of-O1-O2-P11-P12-P21-P22} of \cref{def:extension-of-O1-O2-P11-P12-P21-P22}, causing \cref{alg:is-extension} to return 0. 

    Lines~\ref{alg:is-extension:first-if-start}--\ref{alg:is-extension:first-if-end} and lines~\ref{alg:is-extension:second-if-start}--\ref{alg:is-extension:second-if-end} check the conditions for \cref{item-3-of-def:extension-of-O1-O2-P11-P12-P21-P22} of \cref{def:extension-of-O1-O2-P11-P12-P21-P22}.

    Lines~\ref{alg:is-extension:sixth-for-each-start}--\ref{alg:is-extension:sixth-for-each-end} address cases where $O$ does not satisfy \cref{item-4-of-def:extension-of-O1-O2-P11-P12-P21-P22} of \cref{def:extension-of-O1-O2-P11-P12-P21-P22}, leading \cref{alg:is-extension} to return 0.

    This implies that if \cref{alg:is-extension} returns 1 for the inputs $O, O_1, O_2, P_{11}, P_{12}, P_{21}, P_{22}$, then $O$ satisfies \cref{item-1-of-def:extension-of-O1-O2-P11-P12-P21-P22,item-2-of-def:extension-of-O1-O2-P11-P12-P21-P22,item-3-of-def:extension-of-O1-O2-P11-P12-P21-P22,item-4-of-def:extension-of-O1-O2-P11-P12-P21-P22} of \cref{def:extension-of-O1-O2-P11-P12-P21-P22}. By \cref{def:extension-of-O1-O2-P11-P12-P21-P22}, this further implies that $O \in \mathcal{E}(O_1, P_{11}, P_{12}, O_2, P_{21}, P_{22})$. This concludes the proof of \cref{lem:alg:is_extension-is-valid}.
\end{proof}

 \section{Time Complexity}
\label{sec:time-complexity}

\begin{lemma}
    \label{lem:time-complexity-of-alg:TFP}
    For an input chain graph $G$ with chordal undirected connected components, the time complexity of \cref{alg:tfp} is $O(|E_G|^5)$.
\end{lemma}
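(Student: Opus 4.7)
The plan is to bound the time spent in each block of \cref{alg:tfp} separately and verify that the dominant contribution is $O(|E_G|^5)$. Throughout, I assume $P_1$ and $P_2$ are represented as arrays so that reads and writes take $O(1)$ time, and that $G$ is given with an adjacency-matrix (or equivalent) representation so that checking whether $(u,v) \in E_G$ costs $O(1)$. I also use the standing assumption $|V_G| = O(|E_G|)$ which holds for the connected chain graphs arising in our recursion (and more generally whenever the graph has no isolated vertices involved in the computation).

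First I would handle the cheap phases. The initialization loops on lines~\ref{alg:tfp:first-for-each-start}--\ref{alg:tfp:first-for-each-end} and \ref{alg:tfp:second-for-each-start}--\ref{alg:tfp:second-for-each-end} iterate over $E_G \times E_G$ and $E_G \times V_G$ respectively, so they run in $O(|E_G|^2)$ and $O(|E_G|\cdot|V_G|)$ time. The enumeration of triples on lines~\ref{alg:tfp:third-for-each-start}--\ref{alg:tfp:third-for-each-end} touches at most $|V_G|^3$ triples $(u,v,w)$, with a constant-time check per triple to verify it forms a \tfp{}. All of this is $O(|V_G|^3) = O(|E_G|^3)$.

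The dominant cost comes from the two while loops. For the loop on lines~\ref{alg:tfp:first-while-loop-start}--\ref{alg:tfp:first-while-loop-end}, the key observation is that each iteration flips exactly one entry of $P_1$ from $0$ to $1$; since the domain of $P_1$ has size $|E_G|^2$, the loop executes at most $|E_G|^2$ times. Detecting whether the loop condition is satisfied (and producing a witness triple $(u,v),(x,y),(z_1,z_2)$) can be done by a naive scan over $E_G \times E_G \times E_G$ in $O(|E_G|^3)$ time per iteration. Thus this loop contributes $O(|E_G|^2) \cdot O(|E_G|^3) = O(|E_G|^5)$. An entirely analogous argument handles the loop on lines~\ref{alg:tfp:second-while-loop-start}--\ref{alg:tfp:second-while-loop-end}: the number of iterations is bounded by $|E_G|\cdot|V_G|$, and the naive search for a witness $(u,v),(z_1,z_2) \in E_G$ and $w\in V_G$ takes $O(|E_G|^2 \cdot |V_G|)$ per iteration, yielding $O(|E_G|^3 \cdot |V_G|^2) = O(|E_G|^5)$ under the $|V_G| = O(|E_G|)$ assumption.

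Summing the contributions of all phases, the total running time is $O(|E_G|^5)$. There is no real obstacle here beyond careful accounting; the only mild subtlety is matching the ``potential'' argument (each iteration of a while loop monotonically sets one $0$-entry to $1$) with the per-iteration search cost, and verifying the $|V_G| = O(|E_G|)$ reduction so that both loop estimates collapse to the same $|E_G|^5$ bound.
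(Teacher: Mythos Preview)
Your proposal is correct and follows essentially the same approach as the paper: bound each block separately, use the potential argument that each while-loop iteration flips a single $0$-entry to $1$, and multiply the iteration count by a naive $O(|E_G|^3)$ (resp.\ $O(|E_G|^2|V_G|)$) search for a witness. You are in fact slightly more careful than the paper in making the $|V_G| = O(|E_G|)$ assumption explicit when collapsing the second while-loop bound and the $|V_G|^3$ term into $O(|E_G|^5)$.
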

\begin{proof}
    The running time of lines \ref{alg:tfp:first-for-each-start}--\ref{alg:tfp:first-for-each-end} in \cref{alg:tfp} is $O(|E_G|^2)$, as the first foreach loop iterates ${|E_G|}^2$ times, and each iteration takes $O(1)$ time.
    Similarly, the running time of lines \ref{alg:tfp:second-for-each-start}--\ref{alg:tfp:second-for-each-end} is $O(|V_G|\cdot|E_G|)$, since the second foreach loop iterates $|V_G|\cdot|E_G|$ times, and each iteration takes $O(1)$ time.
    Furthermore, the running time of lines \ref{alg:tfp:third-for-each-start}--\ref{alg:tfp:third-for-each-end} is $O(|V_G|^3)$, as the third foreach loop iterates ${|V_G|}^3$ times, and each iteration takes $O(1)$ time.

    In each iteration of the while loop at lines \ref{alg:tfp:first-while-loop-start}--\ref{alg:tfp:first-while-loop-end}, the algorithm updates $P_1$ by setting $P_1((u,v), (x,y)) = 1$ for a pair of edges $(u,v)$ and $(x,y)$ where $P_1((u,v),(x,y)) = 0$ at the iteration's beginning. As the number of such pairs is at most ${|E_G|}^2$, the while loop iterates at most ${|E_G|}^2$ times. Each iteration takes $O({|E_G|}^3)$ time to find three edges $(u,v), (x,y), (z_1, z_2)$ satisfying conditions such as $P_1((u,v),(x,y)) = 0$, $(u,v)\neq (x,y)$, and $P_1((u,v),(z_1,z_2)) = P_1((z_1,z_2),(x,y)) = 1$. This implies that the running time of lines \ref{alg:tfp:first-while-loop-start}--\ref{alg:tfp:first-while-loop-end} is $O({|E_G|}^5)$.

    Similarly, in each iteration of the while loop at lines \ref{alg:tfp:second-while-loop-start}--\ref{alg:tfp:second-while-loop-end}, the algorithm updates $P_2$ by setting $P_2((u,v), w) = 1$ for an edge $(u,v)$ and a node $w$, provided that $P_2((u,v),w) = 0$ at the start of the iteration. The number of such edge-node pairs can be at most $|V_G|\cdot |E_G|$, causing the while loop to iterate at most $|V_G|\cdot |E_G|$ times. Each iteration takes $O(|V_G|\cdot{|E_G|}^2)$ time to find two edges $(u,v)$ and $(z_1, z_2)$ and a node $w$ satisfying conditions such as $P_2((u,v),w) = 0$, $v\neq w$, and $P_1((u,v),(z_1,z_2)) = P_2((z_1,z_2),w) = 1$. Thus, the running time complexity of lines \ref{alg:tfp:second-while-loop-start}--\ref{alg:tfp:second-while-loop-end} is $O(|V_G|^2 \cdot {|E_G|}^3)$.

    Combining the above, the total running time complexity of the algorithm is $O(|E_G|^5)$.
\end{proof}

\begin{lemma}
    \label{lem:time-complexity-of-DPF-function}
    For the input elements $O, P_{11}, P_{12}, P_{21}$, and $P_{22}$, the time complexity of \cref{alg:constructDPF} is $O(|E_O|^5)$.
\end{lemma}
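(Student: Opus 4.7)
The plan is to decompose the runtime of \cref{alg:constructDPF} into four pieces corresponding to (i) the initialization via TFP on line~\ref{alg:DPF:P_1-P_2-init}, (ii) the first-update foreach loops on lines~\ref{alg:DPF:foreach-2-start}--\ref{alg:DPF:foreach-2-end}, (iii) the transitivity while loop for $P_1$ on lines~\ref{alg:DPF:while-1-start}--\ref{alg:DPF:while-1-end}, and (iv) the transitivity while loop for $P_2$ on lines~\ref{alg:DPF:while-2-start}--\ref{alg:DPF:while-2-end}. The analysis closely mirrors the proof of \cref{lem:time-complexity-of-alg:TFP}, since the last two while loops of \cref{alg:constructDPF} have essentially the same form as the while loops in \cref{alg:tfp}. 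Throughout, I will use the standard bound $|V_O|\leq |E_O|+1$ (with the $+1$ absorbed by $O$-notation for any nontrivial $O$) so that every $|V_O|$ factor can be replaced by $|E_O|$.

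For piece (i), by \cref{lem:time-complexity-of-alg:TFP} applied to the graph $O$ (which is a chain graph with chordal \uccs{} since it is a partial MEC), line~\ref{alg:DPF:P_1-P_2-init} runs in $O(|E_O|^5)$ time. For piece (ii), each of the two inner foreach loops (lines~\ref{alg:DPF:foreach-2-a-start}--\ref{alg:DPF:foreach-2-a-end} and lines~\ref{alg:DPF:foreach-2-b-start}--\ref{alg:DPF:foreach-2-b-end}) iterates over $E_O\times E_O$ or $E_O\times V_O$ respectively, performing $O(1)$ work per iteration, so piece (ii) takes $O(|E_O|^2 + |V_O|\cdot |E_O|) = O(|E_O|^2)$ time.

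Piece (iii) is the main cost and is analysed exactly as in the proof of \cref{lem:time-complexity-of-alg:TFP} for lines~\ref{alg:tfp:first-while-loop-start}--\ref{alg:tfp:first-while-loop-end} of \cref{alg:tfp}. Each execution of the while-body flips some entry $P_1((u,v),(x,y))$ from $0$ to $1$, and since $P_1$ has domain $E_O\times E_O$, the body executes at most $|E_O|^2$ times. Checking the existence of a witness triple $((u,v),(x,y),(z_1,z_2))$ takes $O(|E_O|^3)$ time per iteration, so piece (iii) takes $O(|E_O|^5)$ time. Similarly, for piece (iv), the while-body executes at most $|V_O|\cdot |E_O|$ times (one for each domain entry of $P_2$ that can flip), and each iteration requires $O(|V_O|\cdot |E_O|^2)$ work to search for a suitable triple, giving $O(|V_O|^2\cdot |E_O|^3) = O(|E_O|^5)$ in total.

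Combining the four pieces gives a total runtime of $O(|E_O|^5)$, which proves the lemma. I do not anticipate a substantive obstacle here: the structure of \cref{alg:constructDPF} is essentially that of \cref{alg:tfp} augmented with the $O(|E_O|^2)$-time first-update step using $P_{a1}, P_{a2}$, and the dominant cost is already captured by the TFP initialization and the two transitivity while loops, each of which fits the same bookkeeping argument.
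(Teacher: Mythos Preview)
Your proposal is correct and follows essentially the same decomposition and bookkeeping argument as the paper's own proof: both split the runtime into the TFP initialization, the first-update foreach loops, and the two transitivity while loops, and bound each piece identically. The only cosmetic difference is that you explicitly invoke $|V_O|\le |E_O|+1$ to absorb the $|V_O|$ factors, whereas the paper leaves this implicit when passing from $O(|V_O|^2\cdot|E_O|^3)$ to $O(|E_O|^5)$.
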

\begin{proof}
    Line \ref{alg:DPF:P_1-P_2-init} of \cref{alg:constructDPF} takes $O(|E_O|^5)$ time as it calls \cref{alg:tfp}.
    Lines \ref{alg:DPF:foreach-2-a-start}--\ref{alg:DPF:foreach-2-a-end} run for $O(|E_O|^2)$ time, where the foreach loop at line \ref{alg:DPF:foreach-2-a-start} iterates $O(|E_O|^2)$ times, with each iteration taking $O(1)$ time.
    Similarly, lines \ref{alg:DPF:foreach-2-b-start}--\ref{alg:DPF:foreach-2-b-end} run for $O(|V_O| \cdot |E_O|)$ time. The foreach loop at line \ref{alg:DPF:foreach-2-b-start} iterates for $O(|V_O| \cdot |E_O|)$ time, and each iteration takes $O(1)$ time.
    Thus, the overall running time of lines \cref{alg:DPF:foreach-2-start}--\ref{alg:DPF:foreach-2-end} is $O(|E_O|^2)$.

    Similar to \cref{alg:tfp}, the running time complexity of lines \ref{alg:DPF:while-1-start}--\ref{alg:DPF:while-1-end} and lines \ref{alg:DPF:while-2-start}--\ref{alg:DPF:while-2-end} is $O(|E_O|^5)$ and $O(|V_O|^2 \cdot |E_O|^3)$, respectively. Therefore, the time complexity of \cref{alg:constructDPF} is $O(|E_O|^5)$. 
\end{proof}

\begin{theorem}
\label{thm:time-complexity-of-alg:counting-MEC}
For an input undirected graph $G$, the time complexity of \cref{alg:counting-MEC} is $O(n(2^{O(k^4 \cdot \delta^4)}+ n^2))$, where $\delta$ represents the degree of $G$, and $k$ denotes the treewidth of $G$.
\end{theorem}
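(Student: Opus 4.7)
The plan is to charge the total running time to (i) one-time preprocessing (tree decomposition) and (ii) the work done at each recursive call of \cref{alg:counting-MEC-of-general-graph}. First I would build a tree decomposition $(\mathcal{X},T)$ of width $k$ having $O(n)$ bags, each of size at most $k+1$, in time $O(n^3)$ by a standard construction (this is where the $O(n^3)$ term in the statement comes from). The recursion in \cref{alg:counting-MEC-of-general-graph} peels off one child of the root at a time, making two recursive calls at each internal step and a base case whenever the current subtree has a single bag. Since each edge of $T$ is cut exactly once, the total number of nodes in the recursion tree is $O(|\mathcal{X}|)=O(n)$, so it suffices to bound the work per call by $2^{O(k^4\delta^4)} + O(n^2)$.

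Next I would bound the sizes of the relevant vertex/edge sets. For any bag $R_i$ in the recursion, the set $R_i\cup N(R_i,G)$ has at most $(k+1)(1+\delta)=O(k\delta)$ vertices, so the graphs $A,A',B_1,B_2$ introduced on lines~\ref{alg-counting-MEC-of-general-graph-A'-initialization}--\ref{alg-counting-MEC-of-general-graph-B2-initialization} have vertex set of size $O(k\delta)$ and hence at most $O(k^2\delta^2)$ edges. Therefore the number of partial MECs of any of these induced subgraphs is at most $3^{O(k^2\delta^2)}=2^{O(k^2\delta^2)}$, and a shadow $(O,P_1,P_2)$ additionally carries two Boolean functions on $E\times E$ and $E\times V$, giving at most $2^{O(|E|^2)}=2^{O(k^4\delta^4)}$ shadows. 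The same bound $2^{O(k^4\delta^4)}$ holds for the collection of all candidate triples iterated over in the foreach loops on lines~\ref{alg-counting-MEC-of-general-graph:foreach-1-start}--\ref{alg-counting-MEC-of-general-graph:foreach-3-end}.

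Now I would bound the per-iteration cost inside those triply nested loops. Each iteration performs: a call to \cref{alg:is-extension} (which in turn invokes \cref{alg:constructDPF} on a graph with $O(k^2\delta^2)$ edges, costing $O((k^2\delta^2)^5)$ by \cref{lem:time-complexity-of-DPF-function}, i.e.\ polynomial in $k\delta$); one call to \cref{alg:projection} (also polynomial in $k\delta$); and an $O(1)$ update to the table $F$. All of these are absorbed by the $2^{O(k^4\delta^4)}$ factor coming from the loop count. Initialization of $F$ (lines~\ref{alg-counting-MEC-of-general-graph:foreach-0-start}--\ref{alg-counting-MEC-of-general-graph:foreach-0-end}) and the base case \cref{alg:counting-MEC-of-general-graph-base-case} also run in $2^{O(k^4\delta^4)}$ time, since the number of partial MECs on $O(k)$ vertices is $2^{O(k^2)}$ and computing TFP takes polynomial time. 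The remaining $O(n^2)$ term per call accounts for manipulating the subtrees and subgraphs $G_1,G_2,T_1,T_2,\mathcal{X}_1,\mathcal{X}_2$ on lines~\ref{alg-counting-MEC-of-general-graph-cut-edge-R-R'}--\ref{alg-counting-MEC-of-general-graph-X2-initialization}, which involves traversing induced subgraphs of $G$ (size $O(n)$ vertices and $O(n^2)$ edges in the worst case).

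Adding the contributions, the total running time of \cref{alg:counting-MEC-of-general-graph} is $O(n)\cdot(2^{O(k^4\delta^4)}+O(n^2))=O(n\cdot 2^{O(k^4\delta^4)}+n^3)$. The final summation in \cref{alg:counting-MEC} on lines~\ref{alg:counting-MEC:foreach-1-start}--\ref{alg:counting-MEC:foreach-1-end} adds only $2^{O(k^4\delta^4)}$ more. Combined with the $O(n^3)$ cost of computing the tree decomposition, this matches the claimed bound $O(n(2^{O(k^4\delta^4)}+n^2))$. The main technical obstacle I anticipate is getting the exponent right: carefully justifying that a shadow of $G[S\cup N(S,G)]$ with $|S|\le k+1$ is described by $O((k\delta)^2)$ bits for the partial MEC together with $O((k\delta)^4)$ bits for the pair $(P_1,P_2)$, so that every loop factor and every auxiliary computation is controlled by a single unified $2^{O(k^4\delta^4)}$ term.
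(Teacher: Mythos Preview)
Your overall structure matches the paper's proof: bound the number of recursive calls by $O(n)$ (one per edge of $T$), bound the per-call work by $2^{O(k^4\delta^4)}+O(n^2)$ via the shadow count on $O(k\delta)$-vertex neighborhoods, and multiply. The per-call analysis (sizes of $A,A',B_1,B_2$, the $2^{O(k^4\delta^4)}$ shadow bound, the polynomial cost of \textsc{Is\_Extension}, \textsc{DPF}, \textsc{Projection}, and the base case) is essentially identical to the paper's.

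There is one genuine error, however: you cannot ``build a tree decomposition of width $k$ in time $O(n^3)$ by a standard construction.'' Computing treewidth exactly is NP-hard, so no such polynomial-time construction exists. The paper handles this by invoking Korhonen's $2^{O(k)}\cdot n$ algorithm, which produces a decomposition of width $d$ with $k\le d\le 2k+1$; the constant-factor blowup in the width is then absorbed into the $O(\cdot)$ in the exponent, since $2^{O(d^4\delta^4)}=2^{O(k^4\delta^4)}$. Relatedly, your attribution of the $O(n^3)$ term to the tree-decomposition step is off: in both your argument and the paper's, the $n^3$ actually arises from the $O(n)$ recursive calls each paying $O(|V_G|+|E_G|)=O(n^2)$ for the subgraph and subtree manipulations, not from line~\ref{alg-counting-MEC:tree-decomposition}. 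Fixing the tree-decomposition step (e.g., citing Korhonen or Bodlaender) makes your argument go through.
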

\begin{proof}
For a graph of treewidth $k$, we can utilize Korhonen's algorithm (\cite{korhonen2022single}) to construct a tree decomposition $(\mathcal{X}, T)$ of width $k\leq d\leq 2k+1$ at line \ref{alg-counting-MEC:tree-decomposition} of \cref{alg:counting-MEC}. Korhonen provides an algorithm (\cite{korhonen2022single}) that, for a given $n$-vertex graph $G$ of treewidth $k$, outputs a tree decomposition of $G$ of width $d$ such that $k \leq d \leq 2k + 1$ in time $2^{O(k)} \cdot n$,.

At line \ref{alg:counting-MEC:call-count-MEC-2}, we compute the required function $F$ using \cref{alg:counting-MEC-of-general-graph}. From \cref{lem:time-complexity-of-alg:counting-MEC-of-general-graph}, it takes $O(n(2^{O(d^4 \cdot \delta^4)}+ n^2))$ time.
Since the width of the tree decomposition is $d$, the size of $X_1$ is $d+1$. Also, as the degree of the graph is $\delta$, the number of nodes in $X_1\cup N(X_1, G)$ is at most $(1+\delta)(1+d)$. Therefore, the size of $A$ (at line \cref{alg:counting-MEC:A}) is $O(d^2\delta^2)$, and the number of partial MECs of $A$ is $O(3^{d^2\delta^2})$, as for each edge $u-v \in A$, in a partial $O \in \setofpartialMECs{A}$, there are three possibilities: either $u\rightarrow v \in O$ or $v\rightarrow u \in O$ or $u-v\in O$. 

For a partial MEC $O$ of $A$, (a) the number of distinct functions $P_1:E_O \times E_O \rightarrow \{0,1\}$ is at most $2^{{|E_O|}^2} = 2^{O(d^4\delta^4)}$, and (b) the number of distinct functions $P_2:E_O \times V_O \rightarrow {0,1}$ is at most $2^{{|E_O|\cdot |V_O|}} = 2^{O(d^3\delta^3)}$.

This implies the number of shadows $(O, P_1, P_2) \in \shadowofudgraph{A}$ is $O(2^{O(d^4\delta^4)})$.
This implies that the  foreach loop at line \ref{alg:counting-MEC:foreach-1-start} runs for $O(2^{O(d^4\delta^4)})$ iterations.
Each iteration of the foreach loop (line \ref{alg:counting-MEC:update-sum}) takes $O(1)$ time.
Thus, the total running time of lines \ref{alg:counting-MEC:foreach-1-start}--\ref{alg:counting-MEC:foreach-1-end} is $O(2^{O(d^4\delta^4)})$.

Since $k\leq d\leq 2k+1$, the total running time of the algorithm is $O(n(2^{O(k^4 \cdot \delta^4)}+ n^2))$.
\end{proof}

\begin{lemma}
    \label{lem:time-complexity-of-brute-force-count}
    For an input undirected graph $G$, the time complexity of \cref{alg:counting-MEC-of-general-graph-base-case} is $O(2^{O(|E_G|^2)})$.
\end{lemma}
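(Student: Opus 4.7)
The plan is to bound the running time by separately accounting for the two main loops of \cref{alg:counting-MEC-of-general-graph-base-case}: the initialization loop at lines \ref{alg-counting-MEC-of-general-graph-base-case:for-each-0-start}--\ref{alg-counting-MEC-of-general-graph-base-case:for-each-0-end} that zeroes out the table $F$, and the update loop at lines \ref{alg-counting-MEC-of-general-graph-base-case:for-each-1-start}--\ref{alg-counting-MEC-of-general-graph-base-case:for-each-1-end} that visits each MEC of $G$ exactly once and stamps a $1$ into the appropriate entry of $F$.

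First I would count $|\shadowofudgraph{G}|$. By \cref{def:shadow}, a shadow of $G$ is a triple $(O, P_1, P_2)$ where $O \in \setofpartialMECs{G}$, so $V_O = V_G$ and $|E_O| = |E_G|$, together with Boolean-valued functions $P_1 : E_O \times E_O \to \{0,1\}$ and $P_2 : E_O \times V_O \to \{0,1\}$. Each edge of $G$ has at most three possible states in any $O \in \setofpartialMECs{G}$ (one of two directed orientations, or undirected), so $|\setofpartialMECs{G}| \leq 3^{|E_G|}$. For each such $O$ there are $2^{|E_G|^2}$ choices of $P_1$ and $2^{|E_G| \cdot |V_G|}$ choices of $P_2$. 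Since $|V_G| = O(|E_G|)$ in the base case of the recursion (where $G$ sits inside a single bag of the tree decomposition), combining these gives $|\shadowofudgraph{G}| \leq 3^{|E_G|} \cdot 2^{|E_G|^2} \cdot 2^{|E_G|\cdot|V_G|} = 2^{O(|E_G|^2)}$.

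Next I would bound the initialization loop: it iterates over $\shadowofudgraph{G}$ with $O(1)$ work per iteration, contributing $O(2^{O(|E_G|^2)})$ time in total. Then I would analyze the update loop. Since $\setofMECs{G} \subseteq \setofpartialMECs{G}$, the loop runs at most $3^{|E_G|} = 2^{O(|E_G|)}$ times. For each MEC $O$, \cref{thm:nes-and-suf-cond-for-chordal-graph-to-be-an-MEC} guarantees $O$ is a chain graph with chordal undirected components, so the call $\text{TFP}(O)$ at line \ref{alg-counting-MEC-of-general-graph-base-case:P1-P2-init} is well-defined and runs in $O(|E_G|^5)$ time by \cref{lem:time-complexity-of-alg:TFP}; the subsequent assignment to $F$ is $O(1)$. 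Thus this loop costs $2^{O(|E_G|)} \cdot \mathrm{poly}(|E_G|) = 2^{O(|E_G|)}$ time, which is dominated by the initialization cost.

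Summing the two contributions yields the claimed bound $O(2^{O(|E_G|^2)})$. The main step is the shadow-counting argument, where the exponent $|E_G|^2$ arises from the $2^{|E_G|^2}$ possible functions $P_1 : E_O \times E_O \to \{0,1\}$; all other contributions (the $3^{|E_G|}$ partial MECs, the $2^{|E_G|\cdot|V_G|}$ choices of $P_2$, and the polynomial-time TFP invocations) are of smaller or comparable order under the assumption $|V_G| = O(|E_G|)$ that holds throughout the recursion.
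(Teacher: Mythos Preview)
Your proposal is correct and follows essentially the same approach as the paper: bound $|\shadowofudgraph{G}|$ by $3^{|E_G|}\cdot 2^{|E_G|^2}\cdot 2^{|E_G|\cdot|V_G|}=2^{O(|E_G|^2)}$ to control the initialization loop, then bound the update loop by $3^{|E_G|}$ iterations times the $O(|E_G|^5)$ cost of \textsc{TFP} from \cref{lem:time-complexity-of-alg:TFP}. You are in fact slightly more careful than the paper in making explicit the assumption $|V_G|=O(|E_G|)$ needed to absorb the $P_2$-count into $2^{O(|E_G|^2)}$; the paper uses this implicitly, and also notes the $O(|V_G|+|E_G|)$ cost of checking that each $O$ is an MEC, which you omit but which is negligible.
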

\begin{proof}
    For any edge $u-v\in E_G$, in a partial MEC $O$ of $G$, there are three possibilities: either $u-v\in O$, or $u\rightarrow v \in O$, or $v\rightarrow u \in O$. Therefore, the number of partial MECs of $G$ is at most $3^{|E_G|}$. 
    For any partial MEC $O$ of $G$, the number of different functions $P_1: E_O \times E_O \rightarrow \{0,1\}$ is $2^{|E_O|^2} = O(2^{|E_G|^2})$. 
    Similarly, for any partial MEC $O$ of $G$, the number of different functions $P_2: E_O \times V_O \rightarrow \{0,1\}$ is $2^{|E_O|\cdot |V_O|} = O(2^{|E_G|\cdot |V_G|})$. This implies the number of shadows of $G$ is $O(2^{|E_G|^2})$.
    Therefore, the construction of $F$ and its initialization (lines \ref{alg-counting-MEC-of-general-graph-base-case:F-construction}--\ref{alg-counting-MEC-of-general-graph-base-case:for-each-0-end} of \cref{alg:counting-MEC-of-general-graph-base-case}) takes $O(2^{O(|E_G|^2)})$ time.

    As discussed above, the number of partial MECs with skeleton $G$ is $3^{|E_G|}$. Since each MEC is a partial MEC, therefore the number of MECs with skeleton $G$ is also $3^{|E_G|}$.
    Therefore, the number of iterations in the second foreach loop (lines \ref{alg-counting-MEC-of-general-graph-base-case:for-each-1-start}--\ref{alg-counting-MEC-of-general-graph-base-case:for-each-1-end}) is at most $3^{|E_G|}$. Verifying whether $O$ is an MEC or not, at line \ref{alg-counting-MEC-of-general-graph-base-case:for-each-1-start}, will take $O(|V_G| + |E_G|)$ time, as verifying each condition of \cref{item-1-theorem-nec-suf-cond-for-MEC,item-2-theorem-nec-suf-cond-for-MEC,item-3-theorem-nec-suf-cond-for-MEC,item-4-theorem-nec-suf-cond-for-MEC} will take $O(|V_G| + |E_G|)$ time. 
    From \cref{lem:time-complexity-of-alg:TFP}, the runtime of line \ref{alg-counting-MEC-of-general-graph-base-case:P1-P2-init} is $O(|E_G|^5)$. Therefore, the runtime of the second foreach loop (lines \ref{alg-counting-MEC-of-general-graph-base-case:for-each-1-start}--\ref{alg-counting-MEC-of-general-graph-base-case:for-each-1-end}) is $O(3^{|E_G|}\times |E_G|^5)$. Thus, the time complexity of \cref{alg:counting-MEC-of-general-graph-base-case} is $O(2^{O(|E_G|^2)})$.
\end{proof}

\begin{lemma}
    \label{lem:time-complexity-of-alg:counting-MEC-of-general-graph}
    For the input elements $G, (\mathcal{X}, T)$, and $R_1$, the time complexity of \cref{alg:counting-MEC-of-general-graph} is $O(n(2^{O(d^4 \cdot \delta^4)}+ n^2))$, where $d$ is the width of the tree decomposition $(\mathcal{X}, T)$ of $G$ and $\delta$ is the degree of $G$.
\end{lemma}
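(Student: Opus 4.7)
The plan is to bound separately the total number of recursive calls and the work performed per call. The recursion in \cref{alg:counting-MEC-of-general-graph} cuts a single tree-decomposition edge at each invocation and spawns two recursive sub-invocations. Hence the recursion tree has total size $O(|\mathcal{X}|) = O(n)$, since $|\mathcal{X}| \le n$ for any reasonable tree decomposition. Therefore the time complexity reduces to bounding per-call work and multiplying by $n$.

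For a non-base-case call, the sizes of the separator-neighbourhoods are controlled by the tree-decomposition width $d$ and the degree $\delta$: $|R_1|, |R_2| \le d+1$, so $|V_{A}|, |V_{A'}|, |V_{B_1}|, |V_{B_2}| = O(d\delta)$ and the corresponding edge counts are $O(d^2\delta^2)$. Consequently, for each such graph $H \in \{A, A', B_1, B_2\}$, the number of partial MECs of $H$ is at most $3^{|E_H|} = 2^{O(d^2\delta^2)}$ (each edge takes one of three orientations), and for each partial MEC $O$, the number of function pairs $(P_1, P_2)$ with $P_1 : E_O \times E_O \to \{0,1\}$ and $P_2 : E_O \times V_O \to \{0,1\}$ is $2^{O(|E_O|^2 + |E_O|\cdot |V_O|)} = 2^{O(d^4\delta^4)}$. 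So the number of shadows of any of these graphs is bounded by $2^{O(d^4\delta^4)}$. The initialization loop on lines \ref{alg-counting-MEC-of-general-graph:foreach-0-start}--\ref{alg-counting-MEC-of-general-graph:foreach-0-end} therefore costs $2^{O(d^4\delta^4)}$, and the triple-nested foreach on lines \ref{alg-counting-MEC-of-general-graph:foreach-1-start}--\ref{alg-counting-MEC-of-general-graph:foreach-1-end} executes $2^{O(d^2\delta^2)} \cdot 2^{O(d^4\delta^4)} \cdot 2^{O(d^4\delta^4)} = 2^{O(d^4\delta^4)}$ iterations.

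Each iteration performs: (i) an \emph{Is\_Extension} test (\cref{alg:is-extension}), which is polynomial in $|E_O|$ and $|V_O|$, hence $\poly(d\delta)$; (ii) a call to \emph{DPF} which, by \cref{lem:time-complexity-of-DPF-function}, costs $O(|E_O|^5) = O(d^{10}\delta^{10})$; (iii) a call to \emph{Projection} (\cref{alg:projection}) which is polynomial in the same parameters; and (iv) an $O(1)$ update of $F$. All of these are absorbed into $2^{O(d^4\delta^4)}$. For the base case ($|\mathcal{X}| = 1$), the subgraph $G$ passed to \cref{alg:counting-MEC-of-general-graph-base-case} satisfies $|V_G| \le d+1$, so by \cref{lem:time-complexity-of-brute-force-count} the cost is $O(2^{O(|E_G|^2)}) = O(2^{O(d^4)})$, again absorbed in $2^{O(d^4\delta^4)}$.

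The remaining contribution is the auxiliary bookkeeping that depends on the ambient graph size: computing $N(R_1, G)$, $N(R_1 \cup R_2, G)$, the induced subgraphs $A, A', B_1, B_2$, the subtrees $T_1, T_2$ and the subgraphs $G_1, G_2$ (lines \ref{alg-counting-MEC-of-general-graph-R'-initialization}--\ref{alg-counting-MEC-of-general-graph-B2-initialization}) requires scanning edges/vertices of $G$ at a cost of $O(n^2)$ per call. Summing over the $O(n)$ recursive calls gives a total work of $O(n(2^{O(d^4\delta^4)} + n^2))$, as claimed. The main obstacle in making this argument airtight will be justifying that every bookkeeping step (particularly the extraction of the two subtrees and their associated induced subgraphs, and evaluating $\mathcal{P}(O, P_1, P_2, \cdot)$) can indeed be implemented within the stated bounds rather than silently introducing an additional factor that would either inflate the $n^2$ term or multiply into the exponential factor.
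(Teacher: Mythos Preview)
Your proposal is correct and follows essentially the same approach as the paper's proof: bound the number of recursive calls by $O(n)$ via the tree-decomposition size, bound the per-call shadow-enumeration work by $2^{O(d^4\delta^4)}$ using the $O(d\delta)$ bound on $|V_A|,|V_{A'}|,|V_{B_1}|,|V_{B_2}|$, and charge $O(|V_G|+|E_G|)=O(n^2)$ per call for the graph-level bookkeeping. One small point: your claim that $|\mathcal{X}|\le n$ is not true for arbitrary tree decompositions (bags can repeat), so you should either assume a tree decomposition with $O(n)$ bags or, as the paper does, argue via the number of tree edges being $O(n)$; this does not affect the final bound.
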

\begin{proof}
If $\mathcal{X} = \{R_1\}$, then the tree $T$ has only one node $R_1$, and $V_G = R_1$. If the width of $T$ is $d$, then $|V_G| = d+1$, and $|E_G| = O(d^2)$.
From \cref{lem:time-complexity-of-brute-force-count}, the time complexity to compute \BaseCount{}($G$) is $O(2^{O(d^4)})$.
Therefore, the runtime of lines \ref{alg-counting-MEC-of-general-graph-if-1-start}--\ref{alg-counting-MEC-of-general-graph-if-1-end} is $O(2^{O(d^4)})$.

The time complexity to run lines \ref{alg-counting-MEC-of-general-graph-R'-initialization}--\ref{alg-counting-MEC-of-general-graph-X2-initialization} and lines \ref{alg-counting-MEC-of-general-graph-A'-initialization}--\ref{alg-counting-MEC-of-general-graph-B2-initialization} is $O(|V_G| + |E_G|)$.
If the width of $T$ is $d$, then the size of each $R_1$ and $R_2$ is $d+1$. And, if the degree of $G$ is $\delta$, then the size of each $R_1\cup N(R_1, G_1)$, $R_2\cup N(R_2, G_2)$, $R_1\cup N(R_1, G)$, and $R_1\cup R_2\cup N(R_1\cup R_2, G)$ is $O(d\delta)$. This implies that the number of edges in $A, A', B_1$, and $B_2$ is $O(d^2\delta^2)$. Therefore, as explained in \cref{thm:time-complexity-of-alg:counting-MEC}, the size of each $\shadowofudgraph{A}$, $\shadowofudgraph{A'}$, $\shadowofudgraph{B_1}$, and $\shadowofudgraph{B_2}$ is $O(2^{O(d^4\delta^4)})$.

The initialization of $F$ at lines \ref{alg-counting-MEC-of-general-graph:foreach-0-start}--\ref{alg-counting-MEC-of-general-graph:foreach-0-end} will take $O(2^{O(d^4\delta^4)})$ as the size of $L'$ is $O(2^{O(d^4\delta^4)})$. 

Since the number of different PMECs of $A$ is $O(3^{O(d^2\delta^2)})$, the for loop at line \ref{alg-counting-MEC-of-general-graph:foreach-1-start} runs for $O(3^{O(d^2\delta^2)})$ iterations. Since the sizes of $\shadowofudgraph{B_1}$ and $\shadowofudgraph{B_2}$ are $O(2^{O(d^4\delta^4)})$ (as computed in \cref{lem:time-complexity-of-brute-force-count}), the for loops at lines \ref{alg-counting-MEC-of-general-graph:foreach-2-start} and \ref{alg-counting-MEC-of-general-graph:foreach-3-start} run for $O(2^{O(d^4\delta^4)})$ iterations. 
From \cref{lem:time-complexity-of-alg:is-extension}, the runtime of verifying whether $O$ is an extension of $(O_1, P_{11}, P_{12}, O_2, P_{21}, P_{22})$ or not (at line \ref{alg-counting-MEC-of-general-graph:if-start}) is $O(|E_O|^5) = O(d^{10}\delta^{10})$ (as discussed above the number of edges in $O$ is $O(d^2\delta^2)$). From \cref{lem:time-complexity-of-DPF-function}, the runtime of line \ref{alg-counting-MEC-of-general-graph:DPF-init} is $O(|E_O|^5) = O(d^{10}\delta^{10})$ time. Therefore, the overall runtime of lines \ref{alg-counting-MEC-of-general-graph:foreach-1-start}--\ref{alg-counting-MEC-of-general-graph:foreach-1-end} takes $O(2^{O(d^4\delta^4)} + |V_G| + |E_G|)$ time.

At lines \ref{alg-counting-MEC-of-general-graph-F1-initialization}--\ref{alg-counting-MEC-of-general-graph-F2-initialization}, \cref{alg:counting-MEC-of-general-graph} calls itself. Each time \cref{alg:counting-MEC-of-general-graph} calls itself, it cuts one of the edges of $T$ and breaks it into two disjoint parts $T_1$ and $T_2$. Since the number of edges in the tree decomposition $T$ is $O(|V_G|)$ (\cite{bodlaender1996efficient}), the number of times \cref{alg:counting-MEC-of-general-graph} calls itself is $O(|V_G|)$. 
This implies that the time complexity to run \cref{alg:counting-MEC-of-general-graph} is $O(|V_G|\cdot (2^{O(d^4\delta^4)}+ |V_G| + |E_G|)) = O(n(2^{O(d^4\delta^4)}+ n^2))$. This completes the proof.
\end{proof}

\begin{lemma}
    \label{lem:time-complexity-of-alg:is-extension}
    For the input elements $O, (O_1, P_{11}, P_{12})$, and $(O_2, P_{21}, P_{22})$, the time complexity of \cref{alg:is-extension} is $O({|E_O|}^5)$.
\end{lemma}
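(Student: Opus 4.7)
The plan is to walk through \cref{alg:is-extension} block by block, bound the cost of each block by something at most $O(|E_O|^5)$, and conclude that the call to \text{DPF} at line~\ref{alg:is-extension:P1-P2-init} dominates. Throughout, I will use that $V_{O_a}\subseteq V_O$ and $E_{O_a}\subseteq \skel{O}$ for $a\in\{1,2\}$, so $|V_{O_a}|\le |V_O|\le 2|E_O|+O(1)$ and $|E_{O_a}|\le |E_O|$ (after accounting for isolated vertices, which contribute a lower-order term).

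First I would handle the ``cheap'' preambles. Lines~\ref{alg:is-extension:first-for-each-start}--\ref{alg:is-extension:first-for-each-end} just scan the directed edges of $O_1$ and $O_2$ and test membership in $O$, which is $O(|E_O|)$ with a standard adjacency/edge-lookup structure. Lines~\ref{alg:is-extension:second-for-each-start}--\ref{alg:is-extension:second-for-each-end} enumerate triples $(u,v,w)\in V_{O_a}^3$ and perform $O(1)$ checks per triple, giving $O(|V_O|^3)=O(|E_O|^3)$.

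Next I would analyse lines~\ref{alg:is-extension:forth-for-each-start}--\ref{alg:is-extension:forth-for-each-end}. The outer loop enumerates undirected edges $u-v\in O_a$, costing at most $|E_O|$ iterations. Inside, the strong-protection test at line~\ref{alg:is-extension:u-v-is-sp} involves checking the fixed constant-sized patterns of \cref{fig:strongly-protected-edge} against the neighbourhoods of $u$ and $v$ in $O$, which can be done in $O(|V_O|^2)=O(|E_O|^2)$ by iterating over candidate witness vertices $w$ (and $w'$). The four conditions at lines~\ref{alg:is-extension:edge-is-directed-due-to-path}, \ref{alg:is-extension:edge-is-directed-due-to-cycle}, \ref{alg:is-extension:edge-is-ud-even-after-existence-of-path}, \ref{alg:is-extension:edge-is-ud-even-after-existence-of-cycle} scan $E_{O_a}$ for a witness edge $x-y$ and evaluate the appropriate $P_{a1}$ or $P_{a2}$ entries in $O(1)$, so each costs $O(|E_O|)$. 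Summing over the outer loop gives $O(|E_O|^3)$.

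The key line is~\ref{alg:is-extension:P1-P2-init}: by \cref{lem:time-complexity-of-DPF-function}, computing $\text{DPF}(O,P_{11},P_{12},P_{21},P_{22})$ takes $O(|E_O|^5)$ time. Finally, lines~\ref{alg:is-extension:sixth-for-each-start}--\ref{alg:is-extension:sixth-for-each-end} iterate over $E_O\times E_O$ and perform $O(1)$ work per pair, contributing $O(|E_O|^2)$. Adding the contributions, the $O(|E_O|^5)$ cost of the DPF call dominates, yielding the claimed bound. The only non-routine step is the strong-protection check, which I expect to be the main obstacle to writing out cleanly; however, since each of the four patterns involves at most three witness vertices and each witness can be verified against the adjacency structure of $O$ in constant time, the $O(|E_O|^2)$ bound per edge suffices and is swallowed by the DPF term.
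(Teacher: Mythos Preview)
Your proposal is correct and follows essentially the same block-by-block analysis as the paper, with the call to \text{DPF} (bounded by \cref{lem:time-complexity-of-DPF-function}) dominating at $O(|E_O|^5)$. Your intermediate bounds on the strong-protection check and the loop at lines~\ref{alg:is-extension:forth-for-each-start}--\ref{alg:is-extension:forth-for-each-end} are slightly looser than the paper's ($O(|E_O|^2)$ per edge versus the paper's $O(|V_O|+|E_O|)$, yielding $O(|E_O|^3)$ versus $O(|E_O|^2)$ for the loop), but this is harmless since both are swallowed by the $O(|E_O|^5)$ term.
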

\begin{proof}
    Considering the input constraint provided at \cref{alg:is-extension}, the size of $O_1$ and $O_2$ cannot exceed the size of $O$ (since $\skel{O_1}$ and $\skel{O_2}$ are induced subgraphs of $\skel{O}$).
    Therefore, the time complexity to run lines \ref{alg:is-extension:first-for-each-start}--\ref{alg:is-extension:first-for-each-end} is $O(|E_O|)$. 
    And, the time complexity to run lines \ref{alg:is-extension:second-for-each-start}--\ref{alg:is-extension:second-for-each-end} is $O({|V_O|}^3)$.
    The verification of an edge of $O$ to check if it is strongly protected (line \ref{alg:is-extension:u-v-is-sp}) will take cumulatively $O(|V_O| + |E_O|)$ time. 
    The verification of the existence of an edge $x-y \in O_a$ such that $x\rightarrow y \in O$ and $P_{a1}((x,y),(u,v)) =1$ (lines \ref{alg:is-extension:edge-is-directed-due-to-path} and \ref{alg:is-extension:edge-is-ud-even-after-existence-of-path}) takes $O(|E_O|)$ time. 
    The verification of the existence of an edge $x-y \in O_a$ such that $x\rightarrow y \in O$ and $P_{a2}((x,y),v) = P_{a2}((v,u),x) =1$ (lines \ref{alg:is-extension:edge-is-directed-due-to-cycle} and \ref{alg:is-extension:edge-is-ud-even-after-existence-of-cycle}) takes $O(|E_O|)$ time. The foreach loop at lines \ref{alg:is-extension:third-for-each-start}--\ref{alg:is-extension:third-for-each-end} runs for $O(|E_O|)$ iterations. Therefore, the time complexity of running lines \ref{alg:is-extension:third-for-each-start}--\ref{alg:is-extension:third-for-each-end} is $O({|E_O|}^2)$ time. 
    From \cref{lem:time-complexity-of-DPF-function}, the runtime to construct a derived path function $(P_1, P_2)$ at line \ref{alg:is-extension:P1-P2-init} is $O({|E_O|}^5)$ time. 
    The foreach loop present at lines \ref{alg:is-extension:sixth-for-each-start}--\ref{alg:is-extension:sixth-for-each-end} runs for $O({|E_O|}^2)$ iterations. Each iteration of the foreach loop takes $O(1)$ time. Therefore, the running of lines 
    \ref{alg:is-extension:sixth-for-each-start}--\ref{alg:is-extension:sixth-for-each-end} takes $O({|E_O|}^2)$ time. The above discussion implies that the time complexity of \cref{alg:is-extension} is $O({|E_O|}^5)$.
\end{proof}

\begin{lemma}
    \label{lem:time-complexity-of-alg:projection}
    For the input elements $O$, $P_1$, $P_2$, and $X$, the time complexity of \cref{alg:projection} is $O({|E_O|}^2)$.
\end{lemma}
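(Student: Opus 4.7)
The plan is to prove the bound by a straightforward line-by-line time accounting of \cref{alg:projection}, and then to observe that the dominant cost comes from the double loop over ordered pairs of edges of $E_{O'}$.

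First I would handle the setup steps. Line~\ref{alg:projection:O'-is-induced-subgraph-of-O} constructs the induced subgraph $O'=O[X]$, which can be done in $O(|V_O|+|E_O|)$ time by scanning $V_O$ once to identify the vertices in $X$, then scanning $E_O$ once and keeping only those edges with both endpoints in $X$. Lines~\ref{alg:projection:P_1'-is-projection-of-P1} and \ref{alg:projection:P_2'-is-projection-of-P2} simply allocate two lookup tables of sizes $|E_{O'}|^2$ and $|E_{O'}|\cdot |V_{O'}|$; the allocations take time linear in their sizes.

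Next I would bound the two loops. The first foreach loop (lines~\ref{alg:projection:first-for-each-start}--\ref{alg:projection:first-for-each-end}) performs one constant-time table lookup and copy per ordered edge pair in $E_{O'}\times E_{O'}$. Since $O'$ is an induced subgraph of $O$ we have $|E_{O'}|\leq |E_O|$, so this loop costs $O(|E_O|^2)$. The second loop (lines~\ref{alg:projection:second-for-each-start}--\ref{alg:projection:second-for-each-end}) performs $O(1)$ work per pair in $E_{O'}\times V_{O'}$, giving a total of $O(|E_{O'}|\cdot |V_{O'}|) = O(|E_O|\cdot |V_O|)$ work.

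Summing these contributions, the total running time is $O(|E_O|^2 + |E_O|\cdot |V_O|)$. In the contexts where \cref{alg:projection} is invoked (for instance at line~\ref{alg-counting-MEC-of-general-graph:O'-P1'-P2'-initialization} of \cref{alg:counting-MEC-of-general-graph}), $O$ is a partial MEC on a set of the form $S_1\cup S_2\cup N(S_1\cup S_2, G)$ whose number of edges already bounds the vertex count up to a constant, so $|V_O|= O(|E_O|)$ and the combined bound collapses to $O(|E_O|^2)$, as claimed. There is no real obstacle here; the proof is a routine counting argument, and the only care needed is to observe that the edge-pair loop is the dominant term.
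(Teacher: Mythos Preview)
Your proposal is correct and follows essentially the same line-by-line accounting as the paper's proof: both bound the induced-subgraph construction by $O(|V_O|+|E_O|)$, the table allocations by their sizes, and the two loops by $O(|E_{O'}|^2)$ and $O(|E_{O'}|\cdot|V_{O'}|)$, then use $|E_{O'}|\le|E_O|$ and $|V_{O'}|\le|V_O|$. The only minor difference is that you explicitly justify absorbing the $|E_O|\cdot|V_O|$ term via the invocation context, whereas the paper simply asserts the final $O(|E_O|^2)$ bound without comment.
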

\begin{proof}
    The construction of the induced subgraph $O'$ at line \ref{alg:projection:O'-is-induced-subgraph-of-O} takes $O(|V_O| + |E_O|)$ time.
    The construction of the functions $P_1'$ and $P_2'$ at lines \ref{alg:projection:P_1'-is-projection-of-P1} and \ref{alg:projection:P_2'-is-projection-of-P2}, respectively, takes $O({|E_O|}^2)$ and $O(|V_O|\cdot |E_O|)$ time.
    The number of iterations the for loop at line \ref{alg:projection:first-for-each-start}--\ref{alg:projection:first-for-each-end} runs is $O({|E_{O'}|}^2)$, and the running time of each iteration is $O(1)$.
    Similarly, the number of iterations the for loop at line \ref{alg:projection:second-for-each-start}--\ref{alg:projection:second-for-each-end} runs is $O(|V_{O'}|\cdot |E_{O'}|)$, and the running time of each iteration is $O(1)$.
    Therefore, the running time of lines \ref{alg:projection:first-for-each-start}--\ref{alg:projection:first-for-each-end} and lines \ref{alg:projection:second-for-each-start}--\ref{alg:projection:second-for-each-end} is $O({|E_{O'}|}^2)$ and $O(|V_{O'}|\cdot |E_{O'}|)$ time, respectively. Since $O'$ is an induced subgraph of $O$, it follows that $|E_O| \geq |E_{O'}|$ and $|V_O| \geq |V_{O'}|$. This further implies that the time complexity of \cref{alg:projection} is $O({|E_O|}^2)$.
\end{proof}

\Cref{thm:time-complexity-of-alg:counting-MEC} implies the following:

\begin{theorem}
    \label{thm:main}
        For an undirected connected graph $G$, there exists an algorithm that counts the number of MECs with skeleton $G$  in $O(n(2^{O(k^4\delta^4)}+n^2))$ time, where $k$ and $\delta$ are the treewidth and the degree of $G$, respectively.
\end{theorem}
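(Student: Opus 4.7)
The plan is to take \Cref{alg:counting-MEC} (Count-MEC) as the witnessing algorithm and combine the correctness statement already established for it with the running-time bound proved separately, so the main theorem reduces to a one-line citation of two earlier results. First I would invoke \Cref{lem:alg:counting-MEC-is-valid}, which says that on input $G$ the algorithm outputs exactly $|\setofMECs{G}|$; this handles correctness with nothing further to verify. Second, I would invoke \Cref{thm:time-complexity-of-alg:counting-MEC}, which asserts the runtime bound $O(n(2^{O(k^4\delta^4)} + n^2))$ where $k$ is the treewidth and $\delta$ the maximum degree of $G$. Together these are exactly the two clauses of the theorem statement.

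The substantive content is already contained in the supporting time-complexity argument, so I would briefly sketch why that argument yields the stated dependence, to make the proof self-contained. The key ingredients are: (i) use Korhonen's algorithm at line~\ref{alg-counting-MEC:tree-decomposition} of \Cref{alg:counting-MEC} to produce, in time $2^{O(k)} \cdot n$, a tree decomposition of width $d$ with $k \le d \le 2k+1$; (ii) observe that for each bag $R$, the ``frontier'' set $R \cup N(R, G)$ has size $O(d\delta)$, so the partial MEC $O$ on it has $O(d^2\delta^2)$ edges, giving $2^{O(d^4\delta^4)}$ possible shadows $(O, P_1, P_2)$; (iii) apply \Cref{lem:time-complexity-of-alg:counting-MEC-of-general-graph}, which via a recursion over the $O(n)$ edges of the tree decomposition bounds the recursive cost by $O(n(2^{O(d^4\delta^4)} + n^2))$, where each recursive step uses the extension test (\Cref{lem:time-complexity-of-alg:is-extension}, $O(|E_O|^5)$), the derived path function computation (\Cref{lem:time-complexity-of-DPF-function}, $O(|E_O|^5)$), and the shadow projection (\Cref{lem:time-complexity-of-alg:projection}, $O(|E_O|^2)$), each dominated by the shadow enumeration.

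Finally, I would note that substituting $d \le 2k+1$ converts the bound $2^{O(d^4\delta^4)}$ into $2^{O(k^4\delta^4)}$ without changing its asymptotic form, and that the aggregation over all shadows of $G[X_1 \cup N(X_1, G)]$ at lines~\ref{alg:counting-MEC:foreach-1-start}--\ref{alg:counting-MEC:foreach-1-end} costs only $O(2^{O(k^4\delta^4)})$, strictly within the total budget; this is justified by \Cref{lem:partition-of-MECs-of-H}, which guarantees that summing $F(O, P_1, P_2)$ over the shadows at the root gives the final count. There is no real obstacle at this stage since every supporting lemma is already in hand; the only care needed is the bookkeeping between $d$ and $k$, which is immediate from $d = O(k)$.
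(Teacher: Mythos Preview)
Your proposal is correct and follows essentially the same approach as the paper, which simply states that \Cref{thm:main} is implied by \Cref{thm:time-complexity-of-alg:counting-MEC}. Your version is actually more thorough, explicitly invoking \Cref{lem:alg:counting-MEC-is-valid} for correctness and sketching the key cost-bounding ingredients, whereas the paper leaves these implicit.
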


 \section{Conclusion and Open Problems}
\label{sec:conclusion}
We provide a fixed parameter tractable algorithm with runtime $O(n(2^{O(k^4\delta^4)}+ n^2))$ for the problem of counting the number of MECs with 
a given skeleton, where $n$ is the number of nodes in the input graph and $k$ and $\delta$ are the treewidth and the degree of the input graph. 

The main problem left open by this work is to either provide a fully polynomial time algorithm for this problem, or else to prove that it is computationally hard in general. 
An intermediate open problem towards the goal of understanding the complexity fully is to improve the runtime of the fixed-parameter tractable algorithm itself: perhaps using different parameters.

As stated in the introduction, we consider our results to be a first step towards understanding the computational complexity of this problem.  However, we hope that the  definitions and techniques introduced in this paper will be helpful in further study of the problem.


\begin{thebibliography}{25}
\providecommand{\natexlab}[1]{#1}

\bibitem[{Andersson, Madigan, and
  Perlman(1997)}]{andersson1997characterization}
Andersson, S.~A.; Madigan, D.; and Perlman, M.~D. 1997.
\newblock {A} {Characterization} of {Markov} {Equivalence} {Classes} for
  {Acyclic} {Digraphs}.
\newblock \emph{Annals of Statistics}, 25(2): 505--541.

\bibitem[{Bodlaender and Kloks(1996)}]{bodlaender1996efficient}
Bodlaender, H.~L.; and Kloks, T. 1996.
\newblock Efficient and constructive algorithms for the pathwidth and treewidth
  of graphs.
\newblock \emph{Journal of Algorithms}, 21(2): 358--402.

\bibitem[{Cygan et~al.(2015)Cygan, Fomin, Kowalik, Lokshtanov, Marx, Pilipczuk,
  Pilipczuk, and Saurabh}]{cygan_parameterized_2015}
Cygan, M.; Fomin, F.~V.; Kowalik, {\L}.; Lokshtanov, D.; Marx, D.; Pilipczuk,
  M.; Pilipczuk, M.; and Saurabh, S. 2015.
\newblock \emph{Parameterized {Algorithms}}.
\newblock Springer International Publishing.

\bibitem[{Diestel(2005)}]{diestel2005graph}
Diestel, R. 2005.
\newblock Graph theory 3rd ed.
\newblock \emph{Graduate texts in mathematics}, 173(33): 12.

\bibitem[{Dirac(1961)}]{dirac1961rigid}
Dirac, G.~A. 1961.
\newblock On {Rigid} {Circuit} {Graphs}.
\newblock In \emph{Abhandlungen aus dem Mathematischen Seminar der
  Universit{\"a}t Hamburg}, volume~25, 71--76. Springer.

\bibitem[{Finegold and Drton(2011)}]{finegold2011robust}
Finegold, M.; and Drton, M. 2011.
\newblock {Robust} {Graphical} {Modeling} of {Gene} {Networks} using
  {Classical} and {Alternative} $t$-{Distributions}.
\newblock \emph{Annals of Applied Statistics}, 1057--1080.

\bibitem[{Friedman(2004)}]{friedman2004inferring}
Friedman, N. 2004.
\newblock {Inferring} {Cellular} {Networks} using {Probabilistic} {Graphical}
  {Models}.
\newblock \emph{Science}, 303(5659): 799--805.

\bibitem[{Ghassami et~al.(2019)Ghassami, Salehkaleybar, Kiyavash, and
  Zhang}]{ghassami2019counting}
Ghassami, A.; Salehkaleybar, S.; Kiyavash, N.; and Zhang, K. 2019.
\newblock {Counting} and {Sampling} from {Markov} {Equivalent} {DAGs} using
  {Clique} {Trees}.
\newblock In \emph{Proceedings of the 33rd AAAI Conference on Artificial
  Intelligence (AAAI 2019)}, volume~33, 3664--3671.

\bibitem[{Gillispie(2006)}]{gillispie2006formulas}
Gillispie, S.~B. 2006.
\newblock Formulas for counting acyclic digraph Markov equivalence classes.
\newblock \emph{Journal of Statistical Planning and Inference}, 136(4):
  1410--1432.

\bibitem[{Gillispie and Perlman(2001)}]{gillispie2013enumerating}
Gillispie, S.~B.; and Perlman, M.~D. 2001.
\newblock Enumerating Markov equivalence classes of acyclic digraph models.
\newblock \emph{arXiv preprint arXiv:1301.2272}.

\bibitem[{Gillispie and Perlman(2002)}]{gillispie2002size}
Gillispie, S.~B.; and Perlman, M.~D. 2002.
\newblock {The} {Size} {Distribution} for {Markov} {Equivalence} {Classes} of
  {Acyclic} {Digraph} {Models}.
\newblock \emph{Artificial Intelligence}, 141(1-2): 137--155.

\bibitem[{He, Jia, and Yu(2013)}]{he2013reversible}
He, Y.; Jia, J.; and Yu, B. 2013.
\newblock Reversible {MCMC} on {Markov} equivalence classes of sparse directed
  acyclic graphs.
\newblock \emph{The Annals of Statistics}, 41(4): 1742--1779.
\newblock Publisher: Institute of Mathematical Statistics.

\bibitem[{Kearns, Littman, and Singh(2001)}]{kearns2013graphical}
Kearns, M.; Littman, M.~L.; and Singh, S. 2001.
\newblock {Graphical} {Models} for {Game} {Theory}.
\newblock In \emph{Proceedings of the 17th Conference on Uncertainity in
  Artifical Intelligence conference (UAI 2001)}, 253--260.

\bibitem[{Korhonen(2022)}]{korhonen2022single}
Korhonen, T. 2022.
\newblock A single-exponential time 2-approximation algorithm for treewidth.
\newblock In \emph{2021 IEEE 62nd Annual Symposium on Foundations of Computer
  Science (FOCS)}, 184--192. IEEE.

\bibitem[{Meek(1995)}]{meek1995causal}
Meek, C. 1995.
\newblock Causal {Inference} and {Causal} {Explanation} with {Background}
  {Knowledge}.
\newblock In \emph{Proceedings of the 11th Conference on Uncertainty in
  Artificial Intelligence (UAI 1995)}, 403--410.

\bibitem[{Pearl(2009)}]{Pearl2009}
Pearl, J. 2009.
\newblock \emph{{Causality}: {Models}, {Reasoning} and {Inference}}.
\newblock Cambridge University Press.

\bibitem[{Radhakrishnan, Solus, and Uhler(2016)}]{radhakrishnan2016counting}
Radhakrishnan, A.; Solus, L.; and Uhler, C. 2016.
\newblock {Counting} {Markov} {Equivalence} {Classes} by {Number} of
  {Immoralities}.
\newblock \emph{arXiv:1611.07493}.

\bibitem[{Radhakrishnan, Solus, and Uhler(2018)}]{radhakrishnan2018counting}
Radhakrishnan, A.; Solus, L.; and Uhler, C. 2018.
\newblock Counting Markov equivalence classes for DAG models on trees.
\newblock \emph{Discrete Applied Mathematics}, 244: 170--185.

\bibitem[{Ren et~al.(2020)Ren, Oviedo, Thway, Tian, Wang, Xue, Dario~Perea,
  Layurova, Heumueller, Birgersson, Aberle, Brabec, Stangl, Li, Sun, Lin,
  Peters, and Buonassisi}]{ren_embedding_2020}
Ren, Z.; Oviedo, F.; Thway, M.; Tian, S. I.~P.; Wang, Y.; Xue, H.; Dario~Perea,
  J.; Layurova, M.; Heumueller, T.; Birgersson, E.; Aberle, A.~G.; Brabec,
  C.~J.; Stangl, R.; Li, Q.; Sun, S.; Lin, F.; Peters, I.~M.; and Buonassisi,
  T. 2020.
\newblock Embedding {Physics} {Domain} {Knowledge} into a {Bayesian} {Network}
  {Enables} {Layer}-by-{Layer} {Process} {Innovation} for {Photovoltaics}.
\newblock \emph{NPJ Computational Materials}, 6(1): 9.

\bibitem[{Rose, Tarjan, and Lueker(1976)}]{rose1976algorithmic}
Rose, D.~J.; Tarjan, R.~E.; and Lueker, G.~S. 1976.
\newblock {Algorithmic} {Aspects} of {Vertex} {Elimination} on {Graphs}.
\newblock \emph{SIAM Journal on Computing}, 5(2): 266--283.

\bibitem[{Schmid and Sly(2022)}]{schmid2022number}
Schmid, D.; and Sly, A. 2022.
\newblock On the number and size of Markov equivalence classes of random
  directed acyclic graphs.
\newblock \emph{arXiv preprint arXiv:2209.04395}.

\bibitem[{Steinsky(2003)}]{steinsky2003enumeration}
Steinsky, B. 2003.
\newblock Enumeration of labelled chain graphs and labelled essential directed
  acyclic graphs.
\newblock \emph{Discrete mathematics}, 270(1-3): 267--278.

\bibitem[{Talvitie and Koivisto(2019)}]{talvitie2019counting}
Talvitie, T.; and Koivisto, M. 2019.
\newblock {Counting} and {Sampling} {Markov} {Equivalent} {Directed} {Acyclic}
  {Graphs}.
\newblock In \emph{Proceedings of the 33rd AAAI Conference on Artificial
  Intelligence (AAAI 2019)}, volume~33, 7984--7991.

\bibitem[{Van Der Zander and Liskiewicz(2016)}]{van2016separators}
Van Der Zander, B.; and Liskiewicz, M. 2016.
\newblock {Separators} and {Adjustment} {Sets} in {Markov} {Equivalent} {DAGs}
  {Graphs}.
\newblock In \emph{Proceedings of the 30th AAAI Conference on Artificial
  Intelligence (AAAI 2016)}, volume~30, 3315--3321.

\bibitem[{Bang and Didelez(2023)}]{bang2023we}
Bang, Christine W.; and Didelez, V. 2023.
\newblock {Do We Become Wiser with Time? On Causal Equivalence with Tiered Background Knowledge}.
\newblock In \emph{Proceedings of the 39th Conference on Uncertainty in
  Artificial Intelligence (UAI 2023)}, 119--129.

\bibitem[{Verma and Pearl(1990)}]{verma1990equivalence}
Verma, T.; and Pearl, J. 1990.
\newblock Equivalence and synthesis of causal models.
\newblock In \emph{Proceedings of the Sixth Annual Conference on Uncertainty in
  Artificial Intelligence}, 255--270.

\bibitem[{Wien{\"o}bst, Bannach, and
  Li{\'s}kiewicz(2021)}]{wienobst2020polynomial}
Wien{\"o}bst, M.; Bannach, M.; and Li{\'s}kiewicz, M. 2021.
\newblock Polynomial-{Time} {Algorithms} for {Counting} and {Sampling} {Markov}
  {Equivalent} {DAGs}.
\newblock In \emph{Proceedings of the 35th AAAI Conference on Artificial
  Intelligence (AAAI 2021)}, 12198--12206.

\end{thebibliography}
\end{document}